\newcommand{\diag}{\mathrm{diag}}
\newcommand{\pr}{\mathrm{pr}}
\newcommand{\rmsi}{\mathrm{si}}
\newcommand{\ssub}{\mathrm{ssub}}
\newcommand{\open}{\mathrm{open}}
\newcommand{\coker}{\mathrm{coker}}
\newcommand{\ran}{\mathrm{ran}}
\newcommand{\rmflat}{\mathrm{flat}}
\newcommand{\rmuni}{\mathrm{uni}}
\newcommand{\obj}{\mathrm{obj}}
\newcommand{\URep}{\mathrm{URep}}
\newcommand{\trunc}{\mathrm{trunc}}
\newcommand{\Tor}{\mathrm{Tor}}
\newcommand{\sh}{\mathrm{sh}}
\newcommand{\pt}{\mathrm{pt}}
\newcommand{\sw}{\mathrm{sw}}
\newcommand{\rmpar}{\mathrm{par}}
\newcommand{\sep}{\mathrm{sep}}
\newcommand{\FP}{\mathrm{FP}}
\newcommand{\rmH}{\mathrm{H}}
\newcommand{\Tot}{\mathrm{Tot}}
\newcommand{\rmD}{\mathrm{D}}
\newcommand{\rmd}{\mathrm{d}}
\newcommand{\rms}{\mathrm{s}}
\newcommand{\rmN}{\mathrm{N}}
\newcommand{\rmr}{\mathrm{r}}
\newcommand{\rml}{\mathrm{l}}
\newcommand{\rmL}{\mathrm{L}}
\newcommand{\rmW}{\mathrm{W}}
\newcommand{\ev}{\mathrm{ev}}
\newcommand{\DD}{\mathrm{DD}}
\newcommand{\dR}{\mathrm{dR}}
\newcommand{\dd}{\mathrm{d}}
\newcommand{\End}{\mathrm{End}}
\newcommand{\Plot}{\mathrm{Plot}}
\newcommand{\Hom}{\mathrm{Hom}}
\newcommand{\Diff}{\mathrm{Diff}}
\newcommand{\opp}{\mathrm{op}}
\newcommand{\rank}{\mathrm{rk}}
\newcommand{\cl}{\mathrm{cl}}
\newcommand{\curv}{\mathrm{curv}}
\newcommand{\disc}{\mathrm{disc}}
\newcommand{\UF}{\mathrm{UF}}
\newcommand{\scH}{\mathscr{H}}
\newcommand{\scA}{\mathscr{A}}
\newcommand{\scC}{\mathscr{C}}
\newcommand{\scR}{\mathscr{R}}
\newcommand{\scB}{\mathscr{B}}
\newcommand{\scD}{\mathscr{D}}
\newcommand{\scE}{\mathscr{E}}
\newcommand{\CG}{\mathcal{G}}
\newcommand{\CI}{\mathcal{I}}
\newcommand{\CH}{\mathcal{H}}
\newcommand{\CT}{\mathcal{T}}
\newcommand{\CL}{\mathcal{L}}
\newcommand{\CP}{\mathcal{P}}
\newcommand{\CC}{\mathcal{C}}
\newcommand{\CB}{\mathcal{B}}
\newcommand{\CU}{\mathcal{U}}
\newcommand{\CF}{\mathcal{F}}
\newcommand{\CV}{\mathcal{V}}
\newcommand{\CW}{\mathcal{W}}
\newcommand{\CO}{\mathcal{O}}
\newcommand{\CA}{\mathcal{A}}
\newcommand{\sfG}{\mathsf{G}}
\newcommand{\sfor}{\mathsf{or}}
\newcommand{\sfR}{\mathsf{R}}
\newcommand{\sfU}{\mathsf{U}}
\newcommand{\sfS}{\mathsf{S}}
\newcommand{\sfF}{\mathsf{F}}
\newcommand{\sfJ}{\mathsf{J}}
\newcommand{\sfc}{\mathsf{c}}
\newcommand{\sfH}{\mathsf{H}}
\newcommand{\sft}{\mathsf{t}}
\newcommand{\sfd}{\mathsf{d}}
\newcommand{\hol}{\mathsf{hol}}
\newcommand{\inv}{\mathsf{inv}}
\renewcommand{\hom}{\mathsf{hom}}
\newcommand{\Mat}{\mathsf{Mat}}
\newcommand{\Asc}{\mathsf{Asc}}
\newcommand{\Desc}{\mathsf{Desc}}
\newcommand{\rev}{\mathsf{rev}}
\newcommand{\FC}{\mathds{C}}
\newcommand{\NN}{\mathds{N}}
\newcommand{\FR}{\mathds{R}}
\newcommand{\RZ}{\mathds{Z}}
\newcommand{\PP}{\mathds{P}}
\newcommand{\One}{\mathds{1}}
\newcommand{\Null}{\mathbb{0}}
\newcommand{\bbL}{\mathbb{L}}
\newcommand{\frg}{\mathfrak{g}}
\newcommand{\fru}{\mathfrak{u}}
\newcommand{\frUF}{\mathfrak{UF}}
\newcommand{\HLBdl}{\mathscr{HL}\hspace{-0.02cm}\mathscr{B}\mathrm{dl}}
\newcommand{\BGrb}{\mathscr{B}\hspace{-.025cm}\mathscr{G}\mathrm{rb}}
\newcommand{\Vect}{\mathscr{V}\mathrm{ect}}
\newcommand{\Set}{\mathscr{S}\mathrm{et}}
\newcommand{\scCat}{\mathscr{C}\mathrm{at}}
\newcommand{\Hilb}{\mathscr{H}\mathrm{ilb}}
\newcommand{\HVBdl}{\mathscr{HVB}\mathrm{dl}}
\newcommand{\Ab}{\mathscr{A}\mathrm{b}}
\newcommand{\Mfd}{\mathscr{M}\mathrm{fd}}
\newcommand{\DfgSp}{\mathscr{D}\mathrm{fg}\mathscr{S}\mathrm{p}}
\newcommand{\arisom}{\overset{\cong}{\longrightarrow}}
\newcommand{\iu}{\mathrm{i}}
\newcommand{\tr}{\mathrm{tr}}
\newcommand{\<}{\langle}
\renewcommand{\>}{\rangle}
\newcommand{\textint}{{\textstyle{\int} \hspace{-0.1cm}}}
\newcommand{\qen}{\hfill$\triangleleft$}
\newenvironment{myitemize}{\begin{itemize}[itemsep=-0.1cm, leftmargin=*, topsep=0cm]}{\end{itemize}}
\newenvironment{myenumerate}{\begin{enumerate}[itemsep=-0.1cm, leftmargin=*, topsep=0cm, label=(\arabic*)]}{\end{enumerate}}
\newcommand{\theeq}{\tag{\theequation}}
\newcommand{\ddownarrow}{\rotatebox[origin=c]{90}{$\twoheadleftarrow$}}
\newtheoremstyle{thm} 								
{0.25cm}   					
{0.2cm}   	 				
{\itshape} 	
{}         				
{\bfseries}				
{}        				
{0.2cm} 					
{\thmname{#1}~\thmnumber{#2}\thmnote{ (#3)}}%
\newtheoremstyle{rmk} 								
{0.25cm}   					
{0.2cm}   	 				
{} 	
{}         				
{\bfseries}				
{}        				
{0.2cm} 					
{}         				
\theoremstyle{thm}
\newtheorem{theorem}[equation]{Theorem}
\newtheorem{corollary}[equation]{Corollary}
\newtheorem{lemma}[equation]{Lemma}
\newtheorem{proposition}[equation]{Proposition}
\newtheorem{definition}[equation]{Definition}
\theoremstyle{rmk}
\newtheorem{example}[equation]{Example}
\newtheorem{remark}[equation]{Remark}
\numberwithin{equation}{chapter}
\titleformat{\chapter}[display]
    {\normalfont\LARGE\bfseries}{\chaptertitlename\ \thechapter}{0.25cm}{\huge}
\titlespacing*{\chapter}{0cm}{2cm}{1.5cm}
\titleformat{\section}[hang]
    {\normalfont\Large\bfseries}{\thesection}{0.5cm}{\Large}
\titleformat{\subsection}[hang]
    {\normalfont\large\bfseries}{\thesubsection}{5pt}{\large}
\begin{document}

\pagestyle{empty}
\hfill
\vspace{3cm}

\begin{center}
	\textbf{\huge{Categorical Structures on Bundle Gerbes\\[0.4cm]and Higher Geometric Prequantisation}}\\
	\vspace{1cm}
	\Large Severin Bunk\\
	\vspace{2cm}
	\large Submitted for the degree of\\
	Doctor of Philosophy\\
	\vfill
	\includegraphics[scale=0.2,origin=c]{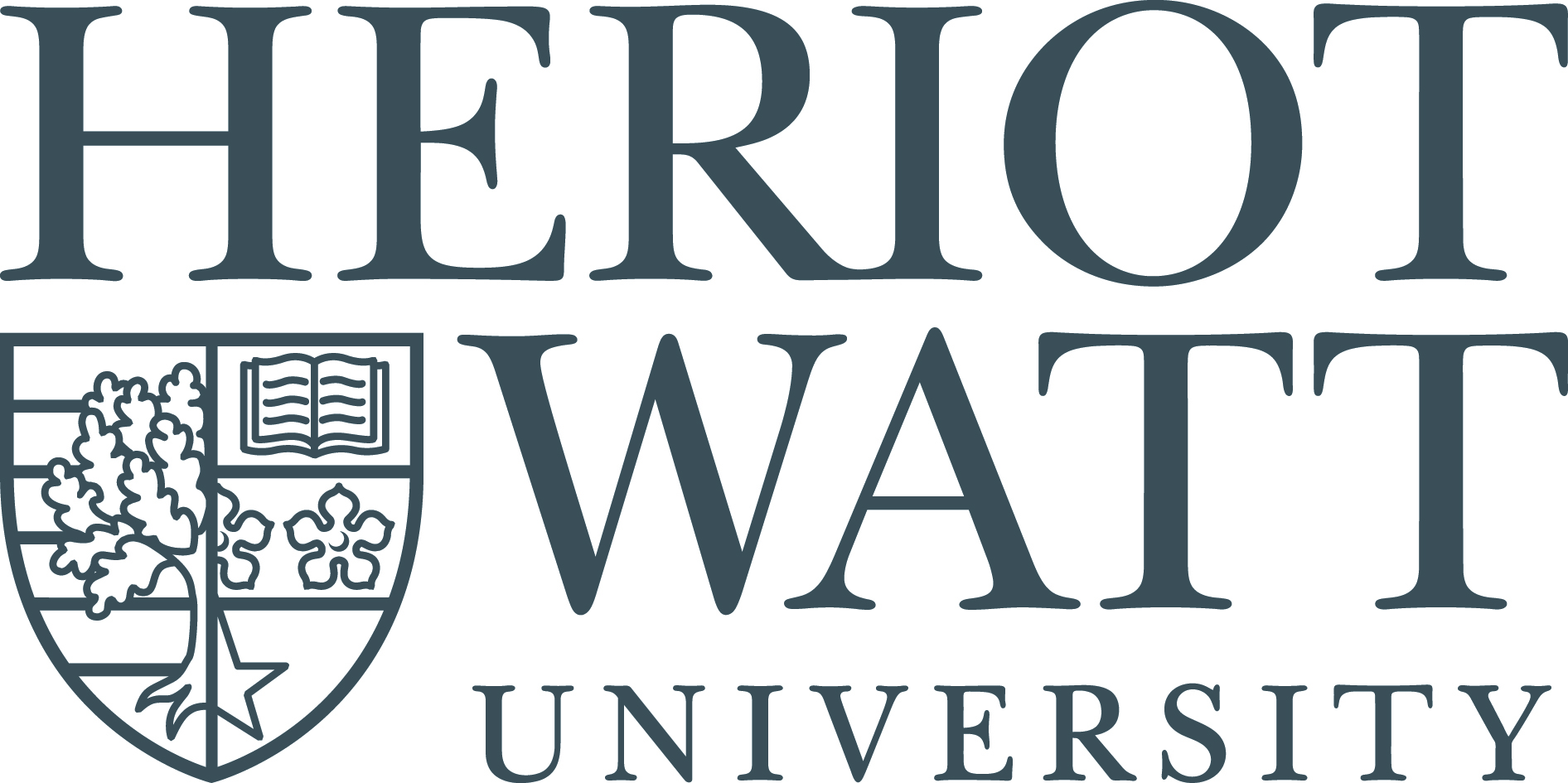}\\
	\vspace{1.cm}
	Department of Mathematics\\
	School of Mathematics and Computer Sciences\\
	\vspace{1cm}
	September 2017\\
	\vspace{1cm}
	\begin{footnotesize}
		\noindent
		The copyright in this thesis is owned by the author.
		Any quotation from the thesis or use of any of the information contained in it must acknowledge this thesis as the source of the quotation or information.
	\end{footnotesize}
\end{center}

\frontmatter

\cleardoublepage
\pagestyle{plain}
\section*{Abstract}

We present a construction of a 2-Hilbert space of sections of a bundle gerbe, a suitable candidate for a prequantum 2-Hilbert space in higher geometric quantisation.
We start by briefly recalling the construction of the 2-category of bundle gerbes, with minor alterations that allow us to endow morphisms with additive structures.
The morphisms in the resulting 2-categories are investigated in detail.
We introduce a direct sum on morphism categories of bundle gerbes and show that these categories are cartesian monoidal and abelian.
Endomorphisms of the trivial bundle gerbe, or higher functions, carry the structure of a rig-category, a categorified ring, and we show that generic morphism categories of bundle gerbes form module categories over this rig-category.

We continue by presenting a categorification of the hermitean bundle metric on a hermitean line bundle.
This is achieved by introducing a functorial dual that extends the dual of vector bundles to morphisms of bundle gerbes, and constructing a two-variable adjunction for the aforementioned rig-module category structure on morphism categories.
Its right internal hom is the module action, composed by taking the dual of the acting higher functions, while the left internal hom is interpreted as a bundle gerbe metric.

Sections of bundle gerbes are defined as morphisms from the trivial bundle gerbe to the bundle gerbe under consideration.
We show that the resulting categories of sections carry a rig-module structure over the category of finite-dimensional Hilbert spaces with its canonical direct sum and tensor product.
A suitable definition of 2-Hilbert spaces is given, modifying previous definitions by the use of two-variable adjunctions.
We prove that the category of sections of a bundle gerbe, with its additive and module structures, fits into this framework, thus obtaining a 2-Hilbert space of sections.
In particular, this can be constructed for prequantum bundle gerbes in problems of higher geometric quantisation.

We define a dimensional reduction functor and show that the categorical structures introduced on the 2-category of bundle gerbes naturally reduce to their counterparts on hermitean line bundles with connections.
In several places in this thesis, we provide examples, making 2-Hilbert spaces of sections and dimensional reduction very explicit.

\newpage
\hfill
\newpage

\thispagestyle{empty}
\hfil

\vspace{12cm}
\hspace{8cm}
\textit{To my parents.}

\newpage
\hfill
\newpage
\section*{Acknowledgements}

First and foremost, I would like to thank Richard Szabo for his supervision during the course of my PhD programme.
I am grateful for all the time, support and advice he has given me over the last three years, for always taking my ideas and concerns seriously, for our often long discussions, and for always being frank with me.

I would like to thank Christian S\"amann for being my second supervisor, but even more for his input into this project and being approachable, caring and supportive throughout my PhD.

I am grateful to Alexander Schenkel for countless discussions, his interest in my progress and career, and for giving me the opportunity to speak in Nottingham and at his MFO Mini-Workshop.

During the second half of my PhD I have had the pleasure of collaborating with Konrad Waldorf.
I would like to thank him in this place for inviting me to Greifswald, for the insightful discussions we have had, and for his support outside of our collaboration.

Moreover, I would like to thank Michael Murray and Danny Stevenson for several interesting discussions on bundle gerbes and their modules, Branislav Jur\v{c}o and Jos\'{e} Figueroa-O'Farrill for being my examiners, and Des Johnston for agreeing to watch over my viva.

I would like to thank David Jordan for running his TQFT seminar, Gwendolyn Barnes for running a category theory reading group, and Tim Weelinck, Matt Booth, Lukas M\"uller and Jenny August for their enthusiasm and their contributions to the homotopy theory and higher categories reading group.

Finally, I gratefully acknowledge support granted by Heriot-Watt University through a James-Watt Scholarship.

\newpage
\hfill

\includepdf[width=\textwidth, height=\textheight, keepaspectratio]{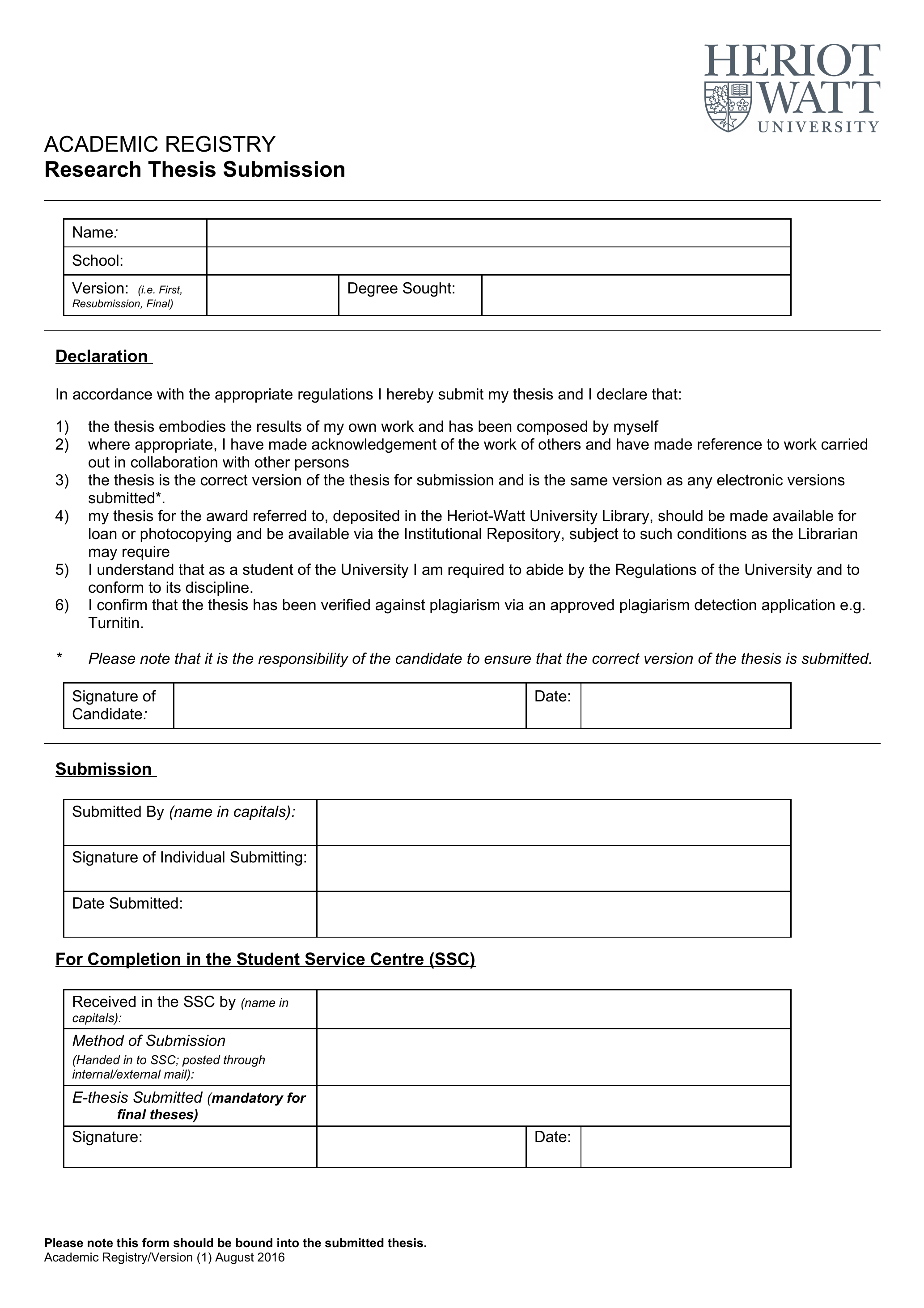}

\cleardoublepage

\pdfbookmark{\contentsname}{toc}
\tableofcontents

\bigskip


\section*{Notation}
\label{sect:notation}

\begin{myitemize}
	\item We use the term `2-category' to refer to what is sometimes called a weak 2-category, or bicategory, i.e. we allow for non-trivial associators and unitors.
	If we explicitly refer to 2-categories where these are trivial, we will use the term `strict 2-category'.
	
	\item For a category $\scC$ and objects $a,b \in \scC$, we denote the collection of morphisms from $a$ to $b$ in $\scC$ by $\scC(a,b)$.
	Similarly, if $\scC$ is a 2-category and $a,b \in \scC$ are objects, the category of morphisms from $a$ to $b$ in $\scC$ is denoted $\scC(a,b)$.
	For two 1-morphisms $\phi, \psi \colon a \to b$ in $\scC$, the collection of 2-morphisms from $\phi$ to $\psi$ is $\scC(\phi,\psi)$.
	
	\item In a 2-category $\scC$, we denote the horizontal composition of 2-morphisms by $\circ_1$, while the vertical composition is denoted $\circ_2$.
	
	\item For an $n$-category $\scC$, we denote its underlying $n$-groupoid by $\scC_\sim$, i.e. $\scC_\sim$ is the $n$-category obtained from $\scC$ by discarding all non-invertible $k$-morphisms for $k = 1, \ldots, n$, while $\pi_0 \scC$ denotes the collection of objects of $\scC$ modulo the equivalence relation $a \sim b$ if there exists a zig-zag of morphisms between $a$ and $b$.
	
	\item $\Mfd$ is the category of smooth manifolds and smooth maps.
	We we define manifolds to be second countable and Hausdorff.
	
	\item $\DfgSp$ is the category of diffeological spaces and diffeological maps.
	
	\item $\HVBdl$ is the relative sheaf of categories (see Section~\ref{sect:Coverings_and_sheaves_of_cats}) of smooth hermitean vector bundles, with smooth fibrewise linear maps as morphisms, with respect to the Grothendieck topology given by smooth surjective submersions on $\Mfd$ (cf.~\cite{NS--Equivariance_in_higher_geometry}).
	Here, we include non-invertible morphisms of vector bundles.
	$\HLBdl$ is the full sub-sheaf of categories of hermitean line bundles on $\Mfd$.
	
	\item For a morphism $\psi \in \HVBdl(M)(E,F)$, we write $\psi^* \colon F \to E$ for its adjoint, defined by $h_E(\psi^*(f),e) = h_F(f, \psi(e))$ via the hermitean metrics $h_E$ and $h_F$ on $E$ and $F$, respectively, with $e \in E$ and $f \in F$.
	The transpose of $\psi$ is denoted $\psi^\sft \colon F^* \to E^*$ and given by $\psi^\sft(\phi)(e) = \phi(\psi(e))$ for $e \in E$ and $\phi \in F^*$.
	
	\item $\HVBdl^\nabla$ is the sheaf of categories of smooth hermitean vector bundles on $\Mfd$ with hermitean connection and smooth fibrewise linear morphisms.
	Usually, by a `hermitean vector bundle with connection' on a manifold $M$ we shall mean an object of $\HVBdl^\nabla(M)$.
	The category $\HVBdl^\nabla_\rmpar(M)$ is the subcategory of $\HVBdl^\nabla(M)$ of smooth hermitean vector bundles on $M$ with hermitean connections and parallel morphisms, while $\HVBdl_\rmuni^\nabla(M)$ is the sub-groupoid of hermitean vector bundles and unitary, parallel isomorphisms.
	
	\item $\HLBdl^\nabla(M)$ is the full subcategory of $\HVBdl^\nabla(M)$ of hermitean line bundles with connection, and $\HLBdl^\nabla_\rmpar(M)$ and $\HLBdl^\nabla_\rmuni(M)$ are the corresponding full subcategories of $\HVBdl^\nabla_\rmpar(M)$ and $\HVBdl^\nabla_\rmuni(M)$, respectively.
	
	\item $\BGrb^\nabla(M)$ denotes the 2-category of hermitean line bundle gerbes with connection on $M \in \Mfd$, with morphisms constructed from arbitrary common refinements of coverings and $\HVBdl^\nabla$, i.e. without a restriction on the trace of the curvature of the vector bundles underlying 1-morphisms.
	$\BGrb^\nabla_\rmpar(M)$ shall refer to the sub-2-category of $\BGrb^\nabla(M)$ whose 2-morphisms are constructed from $\HVBdl^\nabla_\rmpar$.
	A yet smaller 2-category is $\BGrb^\nabla_\rmflat(M)$, which has 1-morphisms satisfying the trace condition~\eqref{eq:trace_condition} and 2-morphisms built from unitary parallel isomorphisms of hermitean vector bundles.
	
	\item For conventions regarding monoidal structures on vector bundles and surjective submersions see Appendix~\ref{app:monoidal_structures_and_strictness}.
\end{myitemize}

\mainmatter

\pagestyle{fancy}
 \fancyhf{}
 \fancyhead[CE]{\nouppercase{\textit{\leftmark}}}
 \fancyhead[CO]{\nouppercase{\textit{\rightmark}}}
 \fancyfoot[CE,CO]{\thepage}
 \renewcommand{\headrulewidth}{0cm}

\setcounter{page}{1}
\pagenumbering{arabic}

\chapter{Introduction}
\label{ch:Introduction}

\section{Quantisation and categorification}
\label{sect:quantisation and categorification}

The two guiding principles underlying this thesis are \emph{quantisation} and \emph{categorification}.
They have a remarkable conceptual similarity: both refer to problems where a certain simple form of a set-up is known and understood, but one desires to construct a different, richer set-up solely based on the simple framework and some expectation of what the outcome should look like.

In quantisation -- or, more specifically, first quantisation -- the starting point is a classical, mechanical system (though other situations can be considered as well~\cite{APW--Geometric_quatisation_of_CS_gauge_theory}).
This could consist, for instance, of a finite number of point particles moving in a fixed space, subject to constraints, interactions and external forces.
Observables are real-valued functions on the phase space, and time-evolution is governed by the Hamiltonian function and a Poisson structure on the algebra of observables.
(For a detailed mathematical treatment of classical mechanics, see, for instance, \cite{Abraham-Marsden--Foundations_of_mechanics,Scheck--Mechanics}.)
Quantisation refers to a hypothetical process that associated to any such classical system a quantum system, i.e. that turns classical states into elements of a Hilbert space $\CH$ of quantum states, represents the Poisson algebra of classical observables as self-adjoint operators on $\CH$ in first order in $\hbar$, and, thereby, turns the Hamiltonian function into a Hamiltonian operator.

It is, in general, very hard to find an appropriate Hilbert space and to represent classical observables on it in an appropriate manner.
The most fundamental consistency check is the so-called \emph{classical limit} $\hbar \to 0$.
In this limit, the quantum system degenerates, and should yield back the original, classical system.
Thus, there exists a systematic way of going from quantum to classical systems.
This is not surprising from the point of view that quantum mechanics is the more fundamental and complete model of the real world.
Quantisation, hence, tries to obtain a more complete description of a physical problem from the less complete classical one:
\begin{equation}
\begin{tikzcd}[column sep=2cm, row sep=1.25cm]
	\text{Quantum systems} \ar[d, shift left=0.1cm, "\text{Classical limit}"] & \text{Higher geometric structures} \ar[d, shift left=0.1cm, "\text{Transgression / Dimensional reduction}"]
	\\
	\text{Classical systems} \ar[u, shift left=0.1cm, "\text{``Quantisation''}"] & \text{Geometric structures} \ar[u, shift left=0.1cm, "\text{``Categorification''}"]
\end{tikzcd}
\end{equation}

Categorification is a term with a broad meaning that can apply to many concepts in mathematics.
In geometry, it means finding analogues of known geometric structures guided by the principle that one should replace sets by categories.
In fact, as there is a whole hierarchy of higher categories, one can iterate categorification and replace $n$-categories by $(n{+}1)$-categories.
Like in quantisation, there is no general scheme which tells us how to write down the next higher version, in this hierarchy, of a geometric object.
Similarly to the situation in quantisation, often all that is known is the object one starts with and some intuition about the structure one would like to obtain.
However, dimensional reduction along a circle in the base manifold should, at least in many cases, allow to obtain lower structures from higher ones.

While neither quantisation, nor categorification have been made precise in terms of a functor -- for quantisation this presumably cannot possibly be achieved -- in certain situations both admit canonical constructions that apply to a whole class of problems.
For quantisation, several such frameworks are known.
The most prominent ones are geometric quantisation~\cite{Woodhouse--Geometric_quanitsation,APW--Geometric_quatisation_of_CS_gauge_theory,Brylinski--Loop_spaces_and_geometric_quantisation} and deformation quantisation~\cite{Fedosov--Index_Thm_for_DefQuan,EW--DifGeo_of_DefQuan,Kontsevich--DefQuan_of_Poisson_Mfds}.
Regarding categorification, several procedures are known for different problems.
Higher principal bundles are usually tackled in a sheaf theoretic way and can be categorified using classifying space constructions~\cite{FSS--Higher_stacky_perspective,FRS--Higher_U1-gerbe_connections_in_geometric_prequant,FSS--Cech_cocycles_and_diff_char_classes}.
Higher versions of modules, or vector spaces are either treated via the framework of internalisation~\cite{Baez-Crans--HdimA-Lie_2-algebras} or (higher) module categories~\cite{KV--2-Cats_and_Zam_eqns}.

In this thesis, we will, in some sense, aim to \emph{categorify geometric quantisation}.
The motivation for doing so is that there are several physical as well as mathematical situations where a method of quantisation is desirable, but which lie outside of the scope of ordinary geometric quantisation.
In particular, we are interested in quantising classical systems which are described mathematically in terms of a 2-plectic geometry.
An $n$-plectic form~\cite{Rogers--Thesis} on a manifold $M$ is an $(n{+}1)$-form $\varpi$ on $M$ such that $\dd \varpi = 0$ and, for any tangent vector $X \in TM$, $\iota_X \varpi = \varpi(X,-) = 0$ if and only if $X = 0$.
Geometries which carry such a form are, for instance, compact, simple, simply-connected Lie groups $\sfG$ with their fundamental left-invariant 3-form, any orientable three-dimensional manifold with a choice of volume form, or products thereof.
It has been conjectured that, if found, a higher geometric quantisation of $\sfG$ should be related to the representations of the central extensions of the loop group of $\sfG$~\cite{Brylinski--Loop_spaces_and_geometric_quantisation,PS--Loop_groups}.

From the physical point of view, a higher analogue of a point particle in the presence of a symplectic form can arise from a closed string moving through a background with an NS-NS $H$-flux, which is the 3-form field strength of the locally defined Kalb-Ramond $B$-field~\cite{Kapustin--D-branes_in_top_nontriv_B-field}.
Several approaches have been made to quantising this system.
As the configuration space of a closed string in a target manifold $M$ is the loop space of $M$, one approach is to extend the formalism of geometric quantisation to loop spaces~\cite{Saemann-Szabo--Groupoid_quant_of_loop_spaces,Saemann-Szabo--Groupoids_loop_spaces_Geoquan,Saemann-Szabo--Quantisation_of_2-plectic_mfds}.
From a different point of view, one can argue that, as string theory is a theory of quantum gravity, quantisation should take place on the original target space $M$ rather than its loop space.
In order to apply geometric quantisation on $M$, however, one has to account for the higher geometric nature of the $B$-field: it is part of a connection on a higher version of a line bundle -- a \emph{bundle gerbe}.
Interestingly, the aforementioned central extensions of loop groups of Lie groups are instances of the same geometric objects (cf. Section~\ref{sect:Ex:Tautological_BGrbs}).
Higher geometric quantisation of the string in an $H$-flux background is expected to detect the non-associativity of spacetime which is induced by an $H$-flux in string theory~\cite{Lust--TD_and_closed_string_NCCDG,MSS--Membrane_sigma-models,MSS--NAG_and_twist_DefQuan,BSS--Working_with_NAG,Aschieri-Szabo--Triproducts,BDLPR--Non-geo_fluxes_and_NAG,Blumenhagen-Plauschinn--NAG_in_string_theory,CFL--Asymmetric_orbifolds}.
Bundle gerbes, moreover, can be used to describe $C$-fields and anomalies in M-theory, see, for instance~\cite{Bunk-Szabo--Fluxes_brbs_2Hspaces,Aschieri-Jurco--Gerbes_M5-brane_anomalies_and_E_8_gauge_theory} and references therein.

\section{A glance at bundle gerbes and higher quantisation}
\label{sect:A glance at BGrbs}

As pointed out in Section~\ref{sect:quantisation and categorification}, the geometric structure that models the $B$-field in string theory is a bundle gerbe.
There exist several different, but related, models for the geometry we wish to describe~\cite{Nikolaus--Thesis,Nikoalus-Waldorf--Four_equiv_versions_of_non-ab-gerbes,FSS--Cech_cocycles_and_diff_char_classes,FSS--Higher_stacky_perspective,Brylinski--Loop_spaces_and_geometric_quantisation,NSS--Principal_infty_bundles_I,NSS--Principal_infty_bundles_II,Karoubi--Twisted_bundles_and_twisted_K-theory,BCMMS,Atiyah-Segal--Twisted_K-theory}.
As we demonstrate in this thesis, bundle gerbes allow for a theory of non-invertible morphisms that conceptually parallels that of morphisms of line bundles.
The other approaches might be able to incorporate the necessary non-invertible morphisms, but this has, to our knowledge, not yet been worked out.
Those models are usually studied on the level of principal bundles only, rather than from the perspective of their associated higher vector bundles.

What is the structure that we need to geometrically quantise a 2-plectic form?
In ordinary geometric quantisation, the fundamental geometric object is a hermitean line bundle with connection $(L, \nabla^L)$ on the phase space $M$.
Its curvature 2-form $\curv(\nabla^L)$ is required to satisfy $\curv(\nabla^L) = 2\pi\, \iu\, \omega$, where $\omega$ is the symplectic form on $M$.
Given a good open covering $(U_a)_{a \in \Lambda}$ of $M$, i.e. an open covering such that all possible finite intersections of patches $U_a$ are diffeomorphic to $\FR^n$, the data $(L, \nabla^L)$ can, equivalently, be described by \v{C}ech data $(g_{ab}, A_a)_{a \in \Lambda}$, where $g_{ab} \colon U_{ab} \to \sfU(1)$ are smooth functions and $A_a \in \Omega^1(U_a, \iu\, \FR)$.
Note that we have set $U_{a_0 \ldots a_n} = \bigcap_{i = 0}^n\, U_{a_i}$ for $a_0, \ldots a_n \in \Lambda$.
For all $a,b,c \in \Lambda$, these data satisfy the relations
\begin{align}
\label{eq:cocycle for line bdls}
	g_{ac|U_{abc}} &= g_{ab|U_{abc}}\, g_{bc|U_{abc}}\,,
	\\
	\label{eq:connection and transition functions}
	g_{ab}^{-1}\, \dd g_{ab} &= A_{b|U_{ab}} - A_{a|U_{ab}}\,,
	\\
	\label{eq:curvature for line bundles}
	\dd A_a &= \curv(\nabla^L)_{|U_a} = 2\pi\, \iu\, \omega_{|U_a}\,.
\end{align}
We can, thus, describe a hermitean line bundle with connection in terms of $\sfU(1)$-valued functions and $1$-forms.
For a closed 1-form $\eta \in \Omega^1(M, \iu\, \FR)$ with integer periods, we may demand that it be represented by a smooth function $g \colon M \to \sfU(1)$ via
\begin{equation}
\label{eq:curv of a function}
	\curv(g) \coloneqq g^{-1}\, \dd g
	= 2\pi\, \iu\, \eta\,.
\end{equation}
In this sense, $\sfU(1)$-valued functions represent closed 1-forms with integer periods in the same way as hermitean line bundles with connection represent closed 2-forms with integer periods.
We, therefore, take the view that hermitean line bundles with connection provide a categorification of $\sfU(1)$-valued functions.

The hermitean line bundle $(L, \nabla^L)$ with connection on $M$ was obtained from \v{C}ech data that featured $\sfU(1)$-valued transition functions.
That is, we have encoded a higher structure using local cocycle data made of a lower structure.
If we iterate this step, we are thus led to consider ``transition functions'' $(L_{ab}, \nabla^{L_{ab}})_{a, b \in \Lambda}$, which consist of hermitean line bundles with connection on $U_{ab}$.
The cocycle relation~\eqref{eq:cocycle for line bdls} does not make sense any longer in a strict way; instead, we now have to impose the existence of an isomorphism that establishes it.
We require that there be a unitary, parallel isomorphism
\begin{equation}
	\mu_{abc} \colon L_{ab} \otimes L_{bc} \arisom L_{ac}\,,
\end{equation}
which is associative over quadruple overlaps.
In addition to the transition functions $g_{ab}$ on $U_{ab}$, we had to use 1-forms $A_a$ on $U_a$, which were related to the ``curvature'' of the transition functions via $A_b - A_a = \curv(g_{ab})$ on $U_{ab}$ (in the notation of~\eqref{eq:curv of a function}).
The curvature of the higher transition functions $(L_{ab}, \nabla^{L_{ab}})$ is a 2-form, whence we require the existence of 2-forms $B_a \in \Omega^2(U_a, \iu\, \FR)$ such that the analogue of~\eqref{eq:connection and transition functions} is satisfied:
\begin{equation}
	\curv(\nabla^{L_{ab}}) = B_{b|U_{ab}} - B_{a|U_{ab}} \quad \forall\, a,b \in \Lambda\,.
\end{equation}
Let us denote the data gathered so far by
\begin{equation}
\label{eq:BGrb -- first example}
	\big( \CG, \nabla^\CG \big) \coloneqq \big( L_{ab}, \nabla^{L_{ab}}, \mu_{abc}, B_a)_{a,b,c \in \Lambda}\,.
\end{equation}
By the Bianchi identity, it follows that we obtain a unique closed, globally defined 3-form by setting
\begin{equation}
	\curv(\nabla^\CG) \in \Omega^3(M, \iu\, \FR)\,, \quad
	\curv(\nabla^\CG)_{|U_a} \coloneqq \dd B_a\,.
\end{equation}
As it turns out, $\curv(\nabla^\CG)$ has periods in $2\pi\, \iu\, \RZ$ (see, for instance, \cite{Waldorf--Thesis,Brylinski--Loop_spaces_and_geometric_quantisation} and Section~\ref{sect:Deligne_coho_and_higher_cats}).
Consequently, the data $(\CG, \nabla^\CG)$ can be interpreted as a geometric structure that allows us to realise a 2-plectic form $\varpi$ on $M$ in the desired way: it now makes sense to ask for
\begin{equation}
\label{eq:prequantum BGrb condition}
	\curv(\nabla^\CG) = 2\pi\, \iu\, \varpi\,.
\end{equation}

Data $(\CG, \nabla^\CG)$ as in~\eqref{eq:BGrb -- first example} is a special instance of a \emph{hermitean (line) bundle gerbe with connection}.
The 3-form $\curv(\nabla^\CG)$ is called its \emph{curvature 3-form}.
If the identity~\eqref{eq:prequantum BGrb condition} is satisfied, we call $(\CG, \nabla^\CG)$ a \emph{prequantum bundle gerbe for $\varpi$}.
Bundle gerbes with connection are central to this work; we define them in full generality in Section~\ref{sect:The_2-category_of_BGrbs}.
In string theory, $(B_a)_{a \in \Lambda}$ is the Kalb-Ramond $B$-field, while $\curv(\nabla^\CG)$ is, accordingly, the 3-form $H$-flux.
The replacement
\begin{equation}
	g_{ab} \mapsto \big( L_{ab}, \nabla^{L_{ab}} \big)
\end{equation}
contains the most important heuristic principle in this thesis: the next higher version of a number is a vector space, or a Hilbert space.

Having at hand a notion of higher line bundle (i.e. bundle gerbes), we can now pursue the line of geometric quantisation.
The first step is to define the analogue of the prequantum (pre)Hilbert space, i.e. the hermitean inner product space of smooth sections of the prequantum line bundle.
For two sections $\phi, \psi \in \Gamma(M, L)$ and $\dim(M) = 2n$, their inner product is given by
\begin{equation}
\label{eq:inner_product_on_GammaL}
	\<\phi, \psi\>_{\Gamma(M, L)} = \int_M h_L(\phi, \psi)\, \frac{\omega^n}{n!}\,,
\end{equation}
where $h_L$ is the hermitean bundle metric on $L$.
Finding an analogue of this for a bundle gerbe poses several questions:
\begin{myitemize}
	\item What is the space of sections of $(\CG, \nabla^\CG)$?
	
	\item Does it come as a categorified module, or vector space?
	
	\item Does it carry an inner product, canonically determined by the geometry?
\end{myitemize}
We answer these questions in this thesis and relate the categorified structures we obtain in this way to known structures on line bundles via dimensional reduction in Sections~\ref{sect:transgression_of_additional_structures} and~\ref{sect:dimensional_reduction}.

\section{Outline and main results}
\label{sect:Outline and main results}

We begin this thesis with a brief review of sheaves of categories.
Throughout the thesis, we frequently make use of the fact that hermitean vector bundles with connection satisfy a descent property.
However, we include non-invertible morphisms into the framework and, therefore, have to give a slightly modified version of the definition of a sheaf of categories.
These \emph{relative sheaves of categories} are defined in Section~\ref{sect:Coverings_and_sheaves_of_cats}.

The main background is laid out in Section~\ref{sect:The_2-category_of_BGrbs}, where we recall the construction of the 2-category of bundle gerbes from~\cite{Waldorf--Thesis,Waldorf--More_morphisms}.
The definitions of morphisms that we give, i.e. Definition~\ref{def:1-morphisms_of_BGrbs} and Definition~\ref{def:2-morphisms_of_BGrbs}, are slightly altered versions of those in~\cite{Waldorf--Thesis,Waldorf--More_morphisms}: we drop the condition~\eqref{eq:trace_condition} on the curvature on 1-morphisms and allow for more general 2-morphisms.
This enables us to find structures on morphisms of bundle gerbes in Chapter~\ref{ch:structures_on_morphisms_of_bgrbs} that are key in relating bundle gerbes to higher geometric quantisation.
Before exploring those structures and presenting examples, we take a detour in Section~\ref{sect:Deligne_coho_and_higher_cats} and investigate the local theory of bundle gerbes.
We recall the \v{C}ech-Deligne double complex and use the Dold-Kan correspondence to explicitly work out the relation between our 2-categorical theory of bundle gerbes and the $\infty$-categorical language used, for instance, in~\cite{FSS--Cech_cocycles_and_diff_char_classes,FSS--Higher_stacky_perspective}.
Proposition~\ref{st:Deligne_2-skeleton_and_Bgrbs_over_CU} and Theorem~\ref{st:Deligne_2-skeleton_and_Bgrbs} refine the corresponding results obtained previously in~\cite{Waldorf--Thesis}.
As a by-product, we obtain a better understanding of the local theory of bundle gerbes and their classification in terms of Deligne cohomology.

In Chapter~\ref{ch:structures_on_morphisms_of_bgrbs} we develop the categorical structures that enable us to relate bundle gerbes to higher geometric quantisation.
The key idea is to exploit the identification
\begin{equation}
	\Gamma(M,L) \cong \HLBdl^\nabla(M) \big( I_0, (L, \nabla^L) \big)\,
\end{equation}
for $I_0$ the trivial hermitean line bundle with connection and $\HLBdl^\nabla(M)$ denoting the category of hermitean line bundles with connection on $M$ and smooth, fibrewise linear morphisms.
There exists a trivial bundle gerbe with connection, denoted $\CI_0$.
Hence, for a bundle gerbe with connection $(\CG, \nabla^\CG)$, we define
\begin{equation}
	\Gamma \big( M, (\CG, \nabla^\CG) \big) \coloneqq \BGrb^\nabla_\rmpar(M) \big( \CI_0, (\CG, \nabla^\CG) \big)\,.
\end{equation}
Here, $\BGrb^\nabla(M)$ is the 2-category of bundle gerbes with connection introduced in Definition~\ref{def:2-categories_of_BGrbs}, and the subscript ``$\rmpar$'' indicates the restriction to the sub-2-category which has only parallel 2-morphisms.
Consequently, sections of $(\CG, \nabla^\CG)$ define a category.
Similarly, the bijection
\begin{equation}
	C^\infty(M,\FC) \cong \HLBdl^\nabla(M) (I_0, I_0)
\end{equation}
leads us to consider $\BGrb_\rmpar^\nabla(M)(\CI_0, \CI_0)$ as the category of higher functions.

Motivated by these definitions, we investigate the morphism categories in $\BGrb^\nabla(M)$ in detail.
Section~\ref{sect:Additive_structures_on_morphisms_in_BGrb} is devoted to the construction of additive structures on 1-morphisms and 2-morphisms of bundle gerbes in Proposition~\ref{st:additive_structure_on_2-morphisms_of_BGrbs} and Theorem~\ref{st:direct_sum_structure_on_morphisms_of_BGrbs}.
More precisely, we show that for a pair $(\CG_0, \nabla^{\CG_0})$, $(\CG_1, \nabla^{\CG_1}) \in \BGrb^\nabla(M)$, the morphism categories $\BGrb^\nabla_\rmpar(M)((\CG_0, \nabla^{\CG_0}), (\CG_1, \nabla^{\CG_1}))$ carry a monoidal direct sum and are abelian categories.
In particular, we can add sections of bundle gerbes, analogously to how we can add sections of hermitean line bundles.
We prove, moreover, that the direct sum of morphisms is compatible with compositions and tensor products in $\BGrb^\nabla_\rmpar(M)$.
This allows us to view $\BGrb^\nabla_\rmpar(M)(\CI_0, \CI_0)$ as a rig-category, a categorified version of a ring, and $\BGrb^\nabla_\rmpar((\CG_0, \nabla^{\CG_0}), (\CG_1, \nabla^{\CG_1}))$ as a rig-module category, a categorified ring module, over $\BGrb^\nabla_\rmpar(M)(\CI_0, \CI_0)$.
This is summarised in Theorem~\ref{st:enrichment_in_BGrb}, which is the central result of Section~\ref{sect:Additive_structures_on_morphisms_in_BGrb}.
Consequently, $\Gamma ( M, (\CG, \nabla^\CG))$ is a rig-module over the higher functions, in analogy to how sections of a line bundle form a module over the ring of smooth functions.

Next, we address the existence of a bundle metric on a hermitean line bundle in Section~\ref{sect:Pairings_and_inner_hom_of_morphisms_in_BGrb}.
It can be seen as a $C^\infty(M, \FC)$-linear map
\begin{equation}
	h_L \colon \overline{\Gamma(M, L)} \times \Gamma(M,L) \to C^\infty(M,\FC)\,,
\end{equation}
where $\overline{\Gamma(M, L)}$ is the space of sections of $L$ with the complex conjugate $C^\infty(M,\FC)$-module structure.
The fundamental input for carrying this structure over to bundle gerbes is the adjunction in Theorem~\ref{st:internal_hom_of_morphisms_in_BGrb--existence_and_naturality}.
It provides a $\BGrb^\nabla_\rmpar(M)(\CI_0, \CI_0)$-bilinear bifunctor
\begin{equation}
	[-,-] \colon \overline{\BGrb^\nabla_\rmpar(M) (\CG_0, \CG_1)} \times \BGrb^\nabla_\rmpar(M) (\CG_0, \CG_1)
	\to \BGrb^\nabla_\rmpar(M)(\CI_0, \CI_0)\,,
\end{equation}
where we have not written out the connections, and where $\overline{\BGrb^\nabla_\rmpar(M) (\CG_0, \CG_1)}$ is the category $\BGrb^\nabla_\rmpar(M) (\CG_0, \CG_1)^\opp$, with the action of higher functions composed by the dual functor introduced in Theorem~\ref{def:Riesz_dual_functor}, which categorifies complex conjugation.
We recall the definition of a two-variable adjunction and a closed module category from~\cite{Hovey--Model_categories}, extend it to the setting of rig-categories and their modules, and use this powerful notion to summarise the structure of the morphism categories $\BGrb^\nabla_\rmpar(M) (\CG_0, \CG_1)$ in Theorem~\ref{st:2-var_adjunction_on_BGrb}, the central result of Chapter~\ref{ch:structures_on_morphisms_of_bgrbs}.

We conclude Chapter~\ref{ch:structures_on_morphisms_of_bgrbs} by providing two well-known examples of bundle gerbe constructions and commenting on the structures discovered in this chapter.

Having considered module structures over higher functions, Chapter~\ref{ch:2Hspaces_from_bundle_gerbes} deals with module structures over the rig-category $\Hilb$ of finite-dimensional Hilbert spaces with its canonical direct sum and tensor product.
We begin by very briefly reviewing geometric quantisation and commenting on its higher analogues in Section~\ref{sect:HGeo_Quan}.
This leads us to identifying carefully the relations between structures on the category $\HLBdl^\nabla(M)$ and their higher analogues on the 2-category $\BGrb^\nabla(M)$ in Section~\ref{sect:higher_geometric_structures}.
In particular, we derive evidence for what our model of a 2-Hilbert space, the higher analogue of a Hilbert space, should look like.

The formal study of 2-Hilbert spaces happens in Section~\ref{sect:2-Hspaces}.
In Definition~\ref{def:2Hspace} we introduce the notion of 2-Hilbert space that we use in this work.
Its new feature is the use of two-variable adjunctions and closed $\Hilb$-module structures.
This turns out be a strong definition, with many additional useful properties automatically implied as we show in Proposition~\ref{st:structure_on_2Hspaces}.

Section~\ref{sect:2-Hspace_of_a_BGrb} finally contains the definition of the 2-Hilbert space of a bundle gerbe in Theorem~\ref{st:2-Hspace_of_a_Bgrb}.
A closed $\Hilb$-module structure on $\Gamma(M, (\CG, \nabla^\CG))$ is obtained by constructing an inclusion functor $\Hilb \hookrightarrow \BGrb^\nabla_\rmpar(M)(\CI_0, \CI_0)$.
This is in analogy with the inclusion of $\FC$ into smooth functions as constants.
We define a $\Hilb$-valued pairing on the category $\scH_0(\CG, \nabla^\CG) = \Gamma(M, (\CG, \nabla^\CG))$ via
\begin{equation}
	\<-,-\>_{\scH_0(\CG, \nabla^\CG)} \cong \BGrb^\nabla_\rmpar(M)( -,-)\,,
\end{equation}
which we relate to the expression~\eqref{eq:inner_product_on_GammaL} conceptually via a higher integral.
The form of the inner product is fixed up to natural isomorphism by our definition of a 2-Hilbert space.

In the following two sections, we provide two examples of bundle gerbes and construct and investigate their 2-Hilbert spaces of sections very explicitly.
We finish Chapter~\ref{ch:2Hspaces_from_bundle_gerbes} by providing several comments on the no-go statement of Proposition~\ref{st:no-go_for_sections_of_BGrbs} in Section~\ref{sect:ways_around_the_torsion_constraint}.

The concluding Chapter~\ref{ch:Transgression_and_reduction} addresses dimensional reduction.
The key idea here is that every $S^1$-bundle $K \to M$ with a smoothly varying orientation on each of its fibres naturally defines a map from $M$ into the space of unparameterised smooth loops in $K$.
Therefore, we can define dimensional reduction after (what might first seem to be) a digression on transgression.
We begin by reviewing diffeological spaces in Section~\ref{sect:DfgSp_and_diffeological_bundles}, as these provide a useful framework for treating the geometry of spaces of smooth maps.
In Section~\ref{sect:transgression_functor}, we merge the approaches to transgression of~\cite{Waldorf--Transgression_II} and~\cite{CJM--Holonomy_on_D-branes} in order to obtain a transgression functor defined on the entire 2-category of bundle gerbes.
The crucial relations between the new categorical structures that we have introduced on bundle gerbes and their morphisms in Chapter~\ref{ch:structures_on_morphisms_of_bgrbs} and known structures on hermitean line bundles with connections are obtained in Section~\ref{sect:transgression_of_additional_structures}.
All structures that we have introduced transgress to their acclaimed counterparts on line bundles.

In Section~\ref{sect:unparameterised_loops_and_Reality} we divide out parameterisations of the loops, thus being left with a transgression functor to unparameterised, oriented loops in the base $M$.
Along the way we realise in Theorem~\ref{st:Real_structure_on_T_0CG} that the transgression line bundle naturally carries a Real structure with respect to the involution induced on loop space by the inversion on $S^1$.

After this detour, we can readily define a dimensional reduction, or pushforward, in Definition~\ref{def:reduction_functor} in Section~\ref{sect:dimensional_reduction}.
Its well-definedness is shown in Theorem~\ref{st:reduction_functor}.
Even though this version of dimensional reduction uses transgression to the infinite-dimensional loop space of $K$, we show that it is less cumbersome than it might appear on first sight, by unravelling its definition in a concrete example.
It follows that all categorical structures reduce correctly, in accordance with the analogies obtained earlier.

We defer some technical details regarding the construction of the 2-category of bundle gerbes to Appendix~\ref{ch:App:special_morphisms} and two longer proofs to Appendix~\ref{ch:App:proofs}.

Parts of the results of this thesis are contained in the preprint~\cite{BSS--HGeoQuan} and the paper~\cite{Bunk-Szabo--Fluxes_brbs_2Hspaces}.

\chapter{Bundle gerbes}
\label{ch:bundle_gerbes}

\section{Preliminaries on coverings and sheaves of categories}
\label{sect:Coverings_and_sheaves_of_cats}

In this section we briefly introduce some notation for surjective submersions of manifolds and recall the definitions of sheaves of categories, stacks and descent, which will be used frequently in the remainder of this thesis.
Since our choice of morphisms of hermitean vector bundles with connections is not the most commonly used one, we clarify how the resulting categories can still be viewed as a marginally weakened version of sheaves of categories.
The notions of Grothendieck topology and (pre)sheaves of categories that we use here can be found, for instance, in~\cite{MacLane-Moerdijk--Sheave_in_geometry_and_logic,Vistoli--Grothendieck_Tops_fibred_cats_and_descent,Waldorf--Thesis,Moerdijk--Intro_to_stacks_and_gerbes}.

If $\pi \colon Y \to M$ is a surjective submersion of manifolds, we write $Y^{[n]} \coloneqq Y {\times}_M \cdots {\times}_M Y$ for the $n$-fold fibre product of $Y$ with itself over $M$.
Elements of $Y^{[n]}$ are given by tuples $(y_0, \ldots, y_{n-1})$ of points in $Y$ such that $\pi(y_0) = \ldots = \pi(y_{n-1})$.
The fibre products $Y^{[n]}$ assemble into a simplicial manifold $\rmN_\bullet(Y,\pi,M)$ with $\rmN_n(Y,\pi,M) = Y^{[n+1]}$, called the \emph{\v{C}ech nerve of $\pi \colon Y \to M$}.
Its face and degeneracy morphisms are given by $d_i(y_0, \ldots, y_{n-1}) = (y_0, \ldots, \widehat{y_i}, \ldots, y_{n-1})$ and $s_i(y_0, \ldots, y_{n-1}) = (y_0, \ldots, y_i, y_i, \ldots, y_{n-1})$, where the hat over $y_i$ denotes omission of that entry of the tuple.
In the simplex category $\Delta$ of ordered finite ordinals $[n]$ and non-decreasing maps, we define
\begin{equation}
	p^{i_0, \ldots, i_n} \colon [n] \to [m]\,, \quad
	p^{i_0, \ldots, i_n}(k) = i_k\,,
\end{equation}
for $m \geq n > k \geq 0$ and $i_0 \leq i_1 \leq \ldots \leq i_n \in [m]$.
For a simplicial manifold $N_\bullet$ we write $N_\bullet p^{i_0, \ldots, i_n} = p_{i_0, \ldots, i_n}$.
Note that if $N_\bullet = \rmN_\bullet(Y,\pi,M)$ is a \v{C}ech nerve of a surjective submersion $\pi \colon Y \to M$, then $p_{i_0, \ldots, i_n} \colon Y^{[m+1]} \to Y^{[n+1]}$ acts as $(y_0, \ldots y_m) \mapsto (y_{i_0}, \ldots, y_{i_n})$.
If $f \colon Y_0 \to Y_1$ defines a morphism of surjective submersions $\pi_i \colon Y_i \to M$, for $i=0,1$, i.e. is a smooth map covering the identity on $M$, we write $f^{[n]} \colon Y_0^{[n]} \to Y_1^{[n]}$ for the $n$-th fibre product of $f$ with itself with respect to $\pi_0$ and $\pi_1$.

For a manifold $Y \in \Mfd$, we denote by $\HVBdl^\nabla(Y)$ the category with objects hermitean vector bundles with connection and morphisms smooth morphisms of vector bundles (i.e. not respecting metric or connection).
The sub-category $\HVBdl^\nabla_\rmpar(Y)$ has the same objects, but as its morphisms only those morphisms of vector bundles that are parallel with respect to the covariant derivatives induced on morphisms.
This category, in turn, has a subgroupoid, denoted $\HVBdl^\nabla_\rmuni(Y)$, with the same objects, but only unitary parallel isomorphisms as morphisms.
The categories $\HLBdl^\nabla(Y)$, $\HLBdl^\nabla_\rmpar(Y)$ and $\HLBdl^\nabla_\rmuni(Y)$ are defined to be the full subcategories of $\HVBdl^\nabla(Y)$, $\HVBdl^\nabla_\rmpar(Y)$ and $\HVBdl^\nabla_\rmuni(Y)$, respectively, whose objects are hermitean line bundles with connections.
We write $\curv(\nabla^E)$ for the curvature 2-form of a hermitean vector bundle with connection $(E, \nabla^E)$.
Forgetting connections, we obtain the categories $\HVBdl(Y)$, $\HVBdl_\rmuni(Y)$, $\HLBdl(Y)$ and $\HLBdl_\rmuni(Y)$, where $\HVBdl_\rmuni(Y)$ and $\HLBdl_\rmuni(Y)$ now have unitary isomorphisms as morphisms.

\begin{definition}[Presheaf of categories]
\label{def:presheaf_of_Cats}
Let $\scC$ be a category.
A \emph{presheaf of categories on $\scC$} is a (weak) 2-functor from the 2-category $\disc(\scC^\opp)$, obtained by adding identity 2-morphisms to $\scC^\opp$, to the 2-category of categories $\scCat$.
Explicitly, this assigns
\begin{myenumerate}
	\item a category $\CF X \in \scCat$ to every $X \in \scC$,
	
	\item a functor $\CF \phi \colon \CF Y \to \CF X$ to every $\phi \in \scC(X,Y)$,
	
	\item a natural isomorphism $\CF(\phi, \psi) \colon \CF(\psi \circ \phi) \to \CF \phi \circ \CF \psi$ to every composable pair of morphisms $\psi \in \scC(Y,Z)$ and $\phi \in \scC(X,Y)$ in $\scC$,
\end{myenumerate}
such that, for $\rho \in \scC(Z,U)$ we have
\begin{equation}
	\CF \rho \big( \CF(\phi, \psi) \big) \circ \CF(\psi \circ \phi, \rho)
	=
	\big( \CF(\phi, \psi) \big)_{\CF \rho} \circ \CF(\phi, \rho \circ \psi)\,.
\end{equation}
\end{definition}

\begin{definition}[Morphisms of presheaves of categories]
Let $\CF$, $\CH$ be presheaves of categories on $\scC$.
A \emph{morphism of presheaves of categories} $\CF \to \CH$ is an assignment of a functor $\Phi_X \colon \CF X \to \CH X$ to every object $X \in \scC$, and of a natural isomorphism $\eta^\phi \colon \Phi_X \circ \CF \phi \to \CH \phi \circ \Phi_Y$ to every morphism $\phi \in \scC(X,Y)$, such that for every $\psi \in \scC(Y,Z)$ the following diagram of functors and natural transformations commutes:
\begin{equation}
\begin{tikzcd}[column sep=3cm, row sep=1.25cm]
	\Phi_X \circ \CF (\psi \circ \phi) \ar[r, "{\Phi_X\, \CF(\phi, \psi)}"] \ar[dd, "\eta^{\psi \circ \phi}"'] & \Phi_X \circ \CF \phi \circ \CF \psi \ar[d, "\eta^\phi_{\CF \psi}"]
	\\
	&  \CH \phi \circ \Phi_Y \circ \CF \psi \ar[d, "(\CH \phi)\, \eta^\psi"]
	\\
	\CF (\psi \circ \phi) \circ \Phi_Z & \CH \phi \circ \CH \psi \circ \Phi_Z \ar[l, "{(\CH(\phi, \psi)_{\Phi_Z})^{-1}}"]
\end{tikzcd}
\end{equation}
Let $(\Psi, \epsilon) \colon \CF \to \CH$ be another morphism of presheaves of categories.
A \emph{2-morphism of presheaves of categories} $(\Phi, \eta) \to (\Psi, \epsilon)$ is an assignment of a natural transformation $\sigma^X \colon \Phi_X \to \Psi_X$ to every $X \in \scC$ such that, for every morphism $\phi \in \scC(X,Y)$, we have the following commutative diagram of functors and natural transformations:
\begin{equation}
\begin{tikzcd}[column sep=2cm, row sep=1.25cm]
	\Phi_X \circ \CF f \ar[r, "\eta^\phi"] \ar[d, "\sigma^X_{\CF f}"'] & \CH f \circ \Phi_Y \ar[d, "\CH f\, \sigma^Y"]
	\\
	\Psi_X \circ \CF f \ar[r, "\epsilon^\phi"'] & \CH f \circ \Psi_Y
\end{tikzcd}
\end{equation}
\end{definition}

\begin{definition}[Grothendieck topology, Grothendieck site]
\label{def:Grothendieck_top_and_site}
A \emph{Grothendieck topology} on a category $\scC$ is an assignment of a set $\tau(X) \in \Set$ to every object $X \in \scC$ such that
\begin{myenumerate}
	\item the set $\tau(X)$ has as elements families of morphisms $\{ \pi_a \in \scC(U_a, X)\}_{a \in \Lambda}$,
	
	\item every isomorphism $\pi \in \scC_\sim(Y,X)$ with target $X$ is in $\tau(X)$,
	
	\item if $\{\pi_a \in \scC(U_a,X)\}_{a \in \Lambda} \in \tau(X)$ and for every $a \in \Lambda$ we have $\{\zeta_{ab} \in \scC(V_{ab}, U_a) \}_{b \in \Lambda_a} \in \tau(U_a)$, then the collection of composites $\{\pi_a \circ \zeta_{ab} \in \scC(V_{ab}, X) \}_{a \in \Lambda,\, b \in \Lambda_a}$ is in $\tau(X)$,
	
	\item if $\phi \in \scC(Y,X)$ and $\{ \pi_a \in \scC(U_a, X)\}_{a \in \Lambda} \in \tau(X)$, then the pullbacks $Y {\times}_X U_a \to $ exist and $\{ Y {\times}_X U_a \to Y \}_{a \in \Lambda} \in \tau(Y)$.
\end{myenumerate}
The elements of $\tau(X)$ are called \emph{coverings of $X$}.
A category $\scC$ endowed with a fixed Grothendieck topology $\tau$ is called a \emph{Grothendieck site}.
\end{definition}

\begin{example}
The two crucial examples for us are the following well-known Grothendieck topologies on $\scC = \Mfd$ (see, for instance, \cite{NS--Equivariance_in_higher_geometry,Nikolaus--Thesis}):
\begin{myenumerate}
	\item
	For $M \in \Mfd$, let $\tau_\open(M)$ be the set of all maps $\pi \colon \CU \to M$, where $\CU = \bigsqcup_{a \in \Lambda}\, U_a$ is the total space of an open covering $(U_a)_{a \in \Lambda}$ of $M$, with the canonical map to $M$.
	
	\item
	For $M \in \Mfd$ let elements of $\tau_\ssub(M)$ be surjective submersions of manifolds with target $M$.
	\qen
\end{myenumerate}
\end{example}

In these cases, a covering $\{\pi_a \in \Mfd(Y_a, M) \}_{a \in \Lambda}$ consists of a single morphism, i.e. is of the form $\{\pi_a \in \Mfd(Y_a, M) \}_{a \in \Lambda} = \{\pi \in \Mfd(Y, M)\}$. 
Having a notion of covering at hand, we can now define gluing, or descent data and impose sheaf conditions on presheaves of categories.
Let $(\scC, \tau)$ be a Grothendieck site.
Given two coverings $\{\pi_a \in \scC(U_a, X)\}$ and $\{\nu_i \in \scC(V_i, X)\}_{i \in \Xi}$ of an object $X \in \scC$, it follows from axiom (4) of Definition~\ref{def:Grothendieck_top_and_site} that $\{ U_a {\times}_X V_i \to V_i \}_{a \in \Lambda}$ is a covering of $V_i \in \scC$ for every $i \in \Xi$.
Thus, axiom (3) implies that $\{ U_a {\times}_X V_i \to X \}_{a \in \Lambda,\, i \in \Xi}$ is a covering of $X$.
Observe that there are canonical morphisms $p_{U_a} \in \scC(U_a {\times}_X V_i, U_a)$ and $p_{V_i} \in \scC(U_a {\times}_X V_i, V_i)$.
This construction can be iterated for multiple coverings of the same object $X \in \scC$.
We write $p_{a_{i_0} \ldots a_{i_m}} \colon U_{a_0} {\times}_X \ldots {\times}_X U_{a_n} \to U_{a_{i_0}} {\times}_X \ldots {\times}_X U_{a_{i_m}}$ for the projection maps in the $n$-fold fibre product of the covering with itself.%
\footnote{We chose to neglect possible re-bracketing isomorphisms in the fibre products, as these will not be of relevance in this work.}

\begin{definition}[Descent category]
Let $(\scC, \tau)$ be a Grothendieck site, and let $\CF$ be a presheaf of categories on $\scC$.
For a fixed covering $\{ \pi_a \in \scC(U_a, X)\}_{a \in \Lambda}$, we define the \emph{descent category of $\CF$ with respect to $\{ \pi_a \in \scC(U_a, X)\}_{a \in \Lambda}$}, denoted $\Desc(\CF, \{\pi_a\})$, as follows:
Its objects are pairs $(E_a, \alpha_{ab})_{a \in \Lambda}$ consisting of an object $E_a \in \CF U_a$ for every $a \in \Lambda$, and isomorphisms
\begin{equation}
	\alpha_{ab} \in \CF_\sim(U_a {\times}_X U_b) \big( \CF p_b (E_b),\, \CF p_a (E_a) \big)
\end{equation}
such that the following diagram commutes:
\begin{equation}
\begin{tikzcd}[column sep=1.5cm, row sep=1.25cm]
	(\CF p_a)(E_a) \ar[r, "{\CF(p_{ab}, p_a)}"] \ar[d, "{\CF(p_{ac}, p_a)}"'] & (\CF p_{ab} \circ \CF p_a) (E_a) \ar[r, "{(\CF p_{ab})(\alpha_{ab})^{-1}}"] & (\CF p_{ab} \circ \CF p_b) (E_b) \ar[d, "{\CF(p_{ab}, p_b)^{-1}}"]
	\\
	(\CF p_{ac} \circ \CF p_a) (E_a) \ar[d, "{(\CF p_{ac})(\alpha_{ac})^{-1}}"'] & & (\CF p_b) (E_b) \ar[d, "{\CF(p_{bc}, p_c)}"]
	\\
	(\CF p_{ac} \circ \CF p_c) (E_c) \ar[d, "{\CF(p_{ac}, p_c)}"'] & & (\CF p_{bc} \circ \CF p_b) (E_b) \ar[d, "{(\CF p_{bc})(\alpha_{bc})^{-1}}"]
	\\
	(\CF p_c)(E_c) & & (\CF p_{bc} \circ \CF p_c) (E_c) \ar[ll, "{\CF(p_{bc}, p_a)^{-1}}"]
\end{tikzcd}
\end{equation}
A morphism $(E_a, \alpha_{ab})_{a,b \in \Lambda} \to (F_a, \beta_{ab})_{a,b \in \Lambda}$ in the category $\Desc(\CF, \{\pi_a\})$ consists of a collection $(\phi_a \in \CF(U_a)(E_a, F_a))_{a \in \Lambda}$ such that we have the following commutative diagram in $\CF (U_a {\times}_X U_b)$:
\begin{equation}
\begin{tikzcd}[column sep=2cm, row sep=1.25cm]
	(\CF p_a) (E_a) \ar[r, "\alpha_{ab}^{-1}"] \ar[d, "{(\CF p_a) (\phi_a)}"'] & (\CF p_b) (E_b) \ar[d, "{(\CF p_b) (\phi_b)}"]
	\\
	(\CF p_a) (F_a) \ar[r, "\beta_{ab}^{-1}"'] & (\CF p_b) (F_b)
\end{tikzcd}
\end{equation}
\end{definition}

For any Grothendieck site $(\scC, \tau)$ and presheaf of categories $\CF$ on $\scC$, and given some covering $\{\pi_a \in \scC(U_a, X)\}_{a \in \Lambda}$ of an object $X \in \scC$, there exists a canonical functor
\begin{equation}
\begin{aligned}
	&\Asc_{\{\pi_a\}} \colon \CF X \to \Desc(\CF, \{\pi_a\})\,,
	\\
	&E \mapsto \big( (\CF \pi_a) E,\, (\CF(p_b, \pi_b) \circ \CF(p_a, \pi_a)^{-1})_E \big)_{a \in \Lambda}\,,
	\\
	&\CF X (E,F) \ni \phi \mapsto \big( (\CF p_a) \phi \big)_{a \in \Lambda}\,.
\end{aligned}
\end{equation}

\begin{definition}[Sheaf of categories, stack]
Let $(\scC, \tau)$ be a Grothendieck site.
A presheaf of categories $\CF$ on $(\scC, \tau)$ is called a \emph{sheaf of categories} if, for every possible covering $\{\pi_a \in \scC(U_a, X)\}_{a \in \Lambda}$ of an object in $\scC$, the functor $\Asc_{\{\pi_a\}}$ is an equivalence of categories.
A \emph{stack on $(\scC, \tau)$} is a sheaf of categories $\CF$ which takes values in groupoids.
\end{definition}

For example, $\HVBdl_\rmuni$ and $\HVBdl^\nabla_\rmuni$ are stacks:
for $\pi \colon Y \to M$ a surjective submersion in $\Mfd$, an object in $\Desc( \HVBdl^\nabla_\rmuni, \pi)$ is a pair $(E,\alpha)$ of a hermitean vector bundle with connection $E \to Y$ and a unitary, parallel isomorphism $d_0^*E \to d_1^*E$ over $Y {\times}_M Y = Y^{[2]}$.
However, $\HVBdl^\nabla$ is not a sheaf of categories.
The reason is that the isomorphisms $\alpha$ in a descent object is not necessarily unitary or parallel for our definition of $\HVBdl^\nabla$.
Thus, we have to slightly weaken the definition of the descent category and a sheaf of categories.
We call a subcategory of some category \emph{wide} if it contains all objects of the ambient category.

\begin{definition}[Relative sheaf of categories]
\label{def:relative_sheaf_of_Cats}
Let $(\scC, \tau)$ be a Grothendieck site, and let $\CF$ and $\CF_\Desc$ be presheaves of categories on $\scC$ together with a morphism $(\Phi, \eta) \colon \CF_\Desc \to \CF$, such that $\Phi_X \colon \CF_\Desc X \hookrightarrow \CF_\sim X$ is an inclusion of a wide subgroupoid.
We call such a tuple $(\CF, \CF_\Desc, \Phi, \eta)$ a \emph{relative presheaf of categories}.%
\footnote{This is loosely modelled on the notion of a \emph{relative category}.}
Given a relative presheaf of categories on $\scC$ and a covering $\{ \pi_a \in \scC(U_a, X)\}_{a \in \Lambda}$, we define the \emph{relative descent category} $\Desc( \CF, \CF_\Desc, \{\pi_a\})$ to be the full subcategory of $\Desc(\CF, \{\pi_a\})$ over the objects $(E_a, \alpha_{ab})_{a,b \in \Lambda}$ where
\begin{equation}
	\alpha_{ab} \in \CF_\Desc(U_a {\times}_X U_b) \big( \CF p_b (E_b),\, \CF p_a (E_a) \big) \quad \forall\, a,b \in \Lambda\,.
\end{equation}
We say a relative presheaf of categories is a \emph{relative sheaf of categories} if for each covering $\{ \pi_a \in \scC(U_a, X)\}_{a \in \Lambda}$ the ascent  functor $\Asc_{\{\pi_a\}}$ is valued in the relative descent category $\Desc( \CF, \CF_\Desc, \{\pi_a\})$ and is an equivalence of categories.
\end{definition}

\begin{example}
All of the categories of vector bundles introduced above define relative sheaves of categories on $(\Mfd, \tau_\open)$ as well as $(\Mfd, \tau_\ssub)$ if we take the sub-presheaf to be
\begin{equation}
	(\HVBdl^\nabla)_\Desc = \HVBdl^\nabla_\rmuni
\end{equation}
for $\HVBdl^\nabla$ and $\HVBdl^\nabla_\rmpar$, with the canonical inclusions into $\HVBdl^\nabla$ and $\HVBdl^\nabla_\rmpar$, respectively, and accordingly for line bundles.
Here we treat the morphisms $\CF(\phi, \psi)$ as the identities, as they virtually have no effect in our formulae expect for cluttering the notation (compare also the remarks in Appendix~\ref{app:monoidal_structures_and_strictness}).
\qen
\end{example}

This is well-known for $\HVBdl$, $\HVBdl^\nabla$, $\HVBdl_\rmuni$ and their analogues in line bundles.
It is less well-known for $\HVBdl^\nabla_\rmpar$: here the descent of morphisms can be seen either directly, or using~\cite[Proposition 2.17]{NS--Equivariance_in_higher_geometry} to first transfer the problem to $(\Mfd, \tau_\open)$, where it is straightforward to see that parallel morphisms glue over overlaps of patches.
However, in the following we will usually write
\begin{equation}
	\Desc \big( \HVBdl^\nabla,\, \HVBdl^\nabla_\rmuni, \pi \big) \eqqcolon \Desc(\HVBdl^\nabla, \pi)
\end{equation}
in order to straighten up notation.
When we refer to $\Mfd$ as a Grothendieck site without specifying a Grothendieck topology, we shall mean the Grothendieck topology of surjective submersions, i.e. the site $(\Mfd, \tau_\ssub)$, unless stated otherwise.

\section{The 2-category of bundle gerbes}
\label{sect:The_2-category_of_BGrbs}

Bundle gerbes are the central objects in this thesis.
They have been defined in~\cite{Murray--Bundle_gerbes}, the 2-categorical theory of their morphisms has been initiated in~\cite{Murray-Stevenson:Bgrbs--stable_isomps_and_local_theory} and has been significantly extended in~\cite{Waldorf--More_morphisms}.
In this section, we will recall definitions and results from these references, generally in the language of~\cite{Waldorf--More_morphisms,Waldorf--Thesis}.
The only originality in this section is contained in a slight generalisation of the definition of 1-morphisms (Definition~\ref{def:1-morphisms_of_BGrbs}) and 2-morphisms (Definition~\ref{def:2-morphisms_of_BGrbs}) as compared to~\cite{Waldorf--More_morphisms,Waldorf--Thesis}.
Proposition~\ref{st:determinants_of_morphisms_of_BGrbs} has been known~\cite{BCMMS} before, but we make its relation to the tensor product of bundle gerbes over generic surjective submersions to $M$ explicit.

\begin{definition}[Bundle gerbe, connection]
We make the following definitions.
\begin{myenumerate}
	\item A \emph{hermitean line bundle gerbe}, or \emph{bundle gerbe} on a manifold $M \in \Mfd$ is a tuple $\CG = (L, \mu, Y, \pi)$, where $\pi \colon Y \to M$ is a surjective submersion, $L \in \HLBdl(Y^{[2]})$, and
	\begin{equation}
		\mu \in \HLBdl_\rmuni(Y^{[3]}) \big( d_2^*L \otimes d_0^*L,\, d_1^*L \big)
	\end{equation}
	is a unitary isomorphism.
	Over $Y^{[4]}$ it is required to satisfy the associativity condition expressed by the commutativity of the diagram
	\begin{equation}
	\begin{tikzcd}[column sep=2.5cm, row sep=1.5cm]
		p_{01}^*L \otimes p_{12}^*L \otimes p_{23}^*L \ar[r, "1 \otimes d_0^*\mu"] \ar[d, "d_3^*\mu \otimes 1"'] & p_{01}^*L \otimes p_{13}^*L \ar[d, "d_2^*\mu"]
		\\
		p_{02}^*L \otimes p_{23}^*L \ar[r, "d_1^*\mu"'] & p_{03}^*L
	\end{tikzcd}
	\end{equation}
	
	\item Let $\CG = (L, \mu, Y, \pi)$ be a bundle gerbe.
	A \emph{connection on $\CG$} is a pair $\nabla^\CG = (\nabla^L, B)$ of a hermitean connection $\nabla^L$ on $L$ and a 2-form $B \in \Omega^1(Y, \iu\, \FR)$ such that $\mu$ is parallel with respect to the the connections induced by $\nabla^L$ on the respective pullbacks and tensor products, and such that $\curv(\nabla^L) = d_0^*B - d_1^*B$, where $\curv(\nabla^L)$ is the field strength of $\nabla^L$.
	The 2-form $B$ is called the \emph{curving of $\nabla^\CG$}.
	It can be shown~\cite{Murray--Bundle_gerbes} that $\dd B$ descends to $M$ to define a 3-form $\curv(\nabla^\CG) \in \Omega^3(M, \iu\, \FR)$ which is called the \emph{curvature of the bundle gerbe with connection $(\CG,\nabla^\CG)$}.
\end{myenumerate}
\end{definition}

The morphism $\mu$ of a bundle gerbe is often referred to as a \emph{multiplication} or \emph{composition}.
Let us denote the trivial hermitean line bundle on $M \in \Mfd$ by $I \coloneqq M {\times} \FC$.
It can be endowed with a hermitean connection by specifying a global connection 1-form $A \in \Omega^1(M, \iu\, \FR)$, and we denote the resulting hermitean line bundle with connection by $I_A \coloneqq (I, \dd + A)$.

\begin{example}
A particularly easy, but very important bundle gerbe is the \emph{trivial bundle gerbe} $\CI \coloneqq ( I,\, m,\, M,\, 1_M)$.
Its multiplication is given by that on $\FC$ via $m((x,z),(x,z')) = (x, z z')$, for $x \in M$ and $z, z' \in \FC$.
Similarly to how $I$ can be endowed with a hermitean connection, we can put a connection on $\CI$ by specifying a 2-form $\rho \in \Omega^2(M,\iu\,\FR)$ and setting $\nabla^{\CI_\rho} = (\dd, \rho)$.
We write
\begin{equation}
\CI_\rho \coloneqq ( I_0,\, m,\, \rho,\, M,\, 1_M)
\end{equation}
for the trivial bundle gerbe with connection $(\dd, \rho)$.
\qen
\end{example}

\begin{example}
\begin{myenumerate}
	\item Given a bundle gerbe with connection $(\CG,\nabla^\CG) = ( L, \nabla^L, \mu, B, Y, \pi )$, we can define a new bundle gerbe by setting
	\begin{equation}
		(\CG,\nabla^\CG)^*
		\coloneqq  \big( (L, \nabla^L)^*, \mu^{-\sft}, -B, Y, \pi \big)\,.
	\end{equation}
	Here, $(L, \nabla^L)^*$ is the dual hermitean line bundle with connection of $(L,\nabla^L)$, and $\mu^{-\sft} = (\mu^\sft)^{-1} = (\mu^{-1})^\sft$ is the inverse transpose of $\mu$.
	This bundle gerbe with connection is called the \emph{dual bundle gerbe with connection of $(\CG,\nabla^\CG)$}.
	
	\item Let $(\CG_i, \nabla^{\CG_i}) = ( L_i, \nabla^{L_i}, \mu_i, B_i, Y_i, \pi_i )$, for $i= 0,1$, be two bundle gerbes with connection on $M$.
	Their \emph{tensor product} is the bundle gerbe with connection given by
	\begin{equation}
	\label{eq:tensor_product_of_BGrbs}
	\begin{aligned}
		&(\CG_0, \nabla^{\CG_0}) \otimes (\CG_1, \nabla^{\CG_1})
		\\
		&\coloneqq \Big( \pr_{Y_0}^{[2]*}(L_0, \nabla^{L_0}) \otimes \pr_{Y_1}^{[2]*} (L_1, \nabla^{L_1}),\, \pr_{Y_0}^{[3]*}\mu_0 \otimes \pr_{Y_1}^{[3]*}\mu_1,\, Y_0 {\times}_M Y_1,\, \pi_i \circ \pr_{Y_i} \Big)\,,
	\end{aligned}
	\end{equation}
	where $\pr_{Y_i} \colon Y_0 {\times}_M Y_1 \to Y_i$ is the projection map.
	Note that $\pi_0 \circ \pr_{Y_0} = \pi_1 \circ \pr_{Y_1}$.
	
	\item If $(\CG,\nabla^\CG) = ( L, \nabla^L, \mu, B, Y, \pi )$ is a bundle gerbe on $M$ and $f \in \Mfd(N,M)$, there is a \emph{pullback bundle gerbe} on $N$ induced by these data.
	It reads as
	\begin{equation}
		f^*(\CG,\nabla^\CG) = \big( f_\pi^{[2]*} (L,\nabla^L),\, f_\pi^{[3]*} \mu,\, f_\pi^*B,\, f^*Y,\, \pi_f \big)\,,
	\end{equation}
	where $\pi_f \colon f^*Y \to N$ is the induced surjective submersion $f^*Y \to N$, and $f_\pi \colon f^*Y \to Y$ is the canonically induced map covering $f$. \qen
\end{myenumerate}
\end{example}

\begin{definition}[Morphisms of bundle gerbes]
\label{def:1-morphisms_of_BGrbs}
Given two bundle gerbes with connection $(\CG_i, \nabla^{\CG_i}) = ( L_i, \nabla^{L_i}, \mu_i, B_i, Y_i, \pi_i )$, for $i= 0,1$, on $M$, a \emph{1-morphism of bundle gerbes with connection} from $(\CG_0, \nabla^{\CG_0})$ to $(\CG_1, \nabla^{\CG_1})$ is a tuple $(E, \nabla^E, \alpha, Z, \zeta)$ consisting of the following data.
First, $\zeta \colon Z \to Y_{01} \coloneqq Y_0 {\times}_M Y_1$ is a surjective submersion, and $(E, \nabla^E) \in \HVBdl^\nabla(Z)$ is a hermitean vector bundle with connection on $Z$.
We write $\zeta_{Y_i} \coloneqq \pr_{Y_i} \circ \zeta$ and set $\zeta_M \coloneqq \pi_0 \circ \zeta_{Y_0} = \pi_1 \circ \zeta_{Y_1}$.
Then,
\begin{equation}
	\alpha \in \HVBdl^\nabla_\rmuni(Z^{[2]}) \big( \zeta_{Y_0}^{[2]*} L_0 \otimes d_0^*E,\, d_1^*E \otimes \zeta_{Y_1}^{[2]*} L_1 \big)
\end{equation}
is a unitary parallel isomorphism of bundles over $Z^{[2]} = Z {\times}_M Z$ satisfying
\begin{equation}
\label{eq:1-morphisms_compatibility_with_BGrb_multiplications}
	\big( 1 \otimes \zeta_{Y_1}^{[3]*} \mu_1 \big) \circ (d_2^*\alpha \otimes 1) \circ (1 \otimes d_0^* \alpha)
	= d_1^*\alpha \circ \big( \zeta_{Y_0}^{[3]*} \mu_0 \otimes 1 \big)\,,
\end{equation}
or, equivalently, making the following diagram commute:
\begin{equation}
\begin{tikzcd}[column sep=3cm, row sep=1.5cm]
	p_{01}^* \zeta_{Y_0}^{[2]*} L_0 \otimes p_{12}^* \zeta_{Y_0}^{[2]*} L_0 \otimes p_2^* E \ar[dd, "\zeta_{Y_0}^{[3]*}\mu_0 \otimes 1"'] \ar[r, "1 \otimes d_0^*\alpha"] & p_{01}^* \zeta_{Y_0}^{[2]*} L_0 \otimes p_1^* E \otimes p_{12}^* \zeta_{Y_1}^{[2]*} L_1 \ar[d, "d_2^*\alpha \otimes 1"]
	\\
	& p_0^* E \otimes p_{01}^* \zeta_{Y_1}^{[2]*} L_1 \otimes p_{12}^* \zeta_{Y_1}^{[2]*} L_1 \ar[d, "1 \otimes \zeta_{Y_1}^{[3]*}\mu_1"]
	\\
	p_{02}^* \zeta_{Y_0}^{[2]*} L_0 \otimes p_2^* E \ar[r, "d_1^*\alpha"'] & p_0^* E \otimes p_{02}^* \zeta_{Y_1}^{[2]*} L_1
\end{tikzcd}
\end{equation}
\end{definition}

Often we will abbreviate 1-morphisms by writing
\begin{equation}
	(E,\alpha) \coloneqq (E, \nabla^E, \alpha, Z, \zeta)\,.
\end{equation}
Given a triple of bundle gerbes $(\CG_i, \nabla^{\CG_i}) = ( L_i, \nabla^{L_i}, \mu_i, B_i, Y_i, \pi_i )$, for $i= 0,1,2$, and two 1-morphisms $(E, \nabla^E, \alpha, Z, \zeta) \colon (\CG_0, \nabla^{\CG_0}) \to (\CG_1, \nabla^{\CG_1})$ and $(E', \nabla^{E'}, \alpha', Z', \zeta') \colon (\CG_1, \nabla^{\CG_1}) \to (\CG_2, \nabla^{\CG_2})$, their composition is given by the 1-morphism
\begin{align}
	&(E', \nabla^{E'}, \alpha', Z', \zeta') \circ (E, \nabla^E, \alpha, Z, \zeta) \notag
	\\
	&= \big( \pr_Z^*(E,\nabla^E) \otimes \pr_{Z'}^*(E', \nabla^{E'}),\, (1_{\pr_Z^*d_1^*E} \otimes \pr_{Z'}^{[2]*} \alpha') \circ (\pr_Z^{[2]*} \alpha \otimes 1_{\pr_{Z'}^*d_0^*E'}), \notag
	\\
	&\hspace{1cm} Z {\times}_{Y_1} Z',\, (\zeta_{Y_0} \circ \pr_Z) {\times}_{M} (\zeta'_{Y_2} \circ \pr_{Z'}) \big)\,.
\end{align}

Morphisms of bundle gerbes without connection are given just as in Definition~\ref{def:1-morphisms_of_BGrbs}, after discarding the connections on $\CG_i$ and the hermitean vector bundle $E$.

\begin{remark}
\label{rmk:trace_condition_on_1-morphisms}
Morphisms of bundle gerbes with connections are sometimes (see for instance~\cite{Waldorf--More_morphisms,Waldorf--Thesis}) required to satisfy the additional property that
\begin{equation}
\label{eq:trace_condition}
	\tr \big( \curv(\nabla^E) - (\zeta_{Y_1}^*B_1 - \zeta_{Y_0}^*B_0) \otimes 1_E \big) = 0\,.
\end{equation}
However, this is a rather strong condition, and so we do not impose it in the general definition of 1-morphisms.
Instead, in our case these morphisms define a subcategory of the 2-category of bundle gerbes (see Definition~\ref{def:2-categories_of_BGrbs}).
The trace condition~\eqref{eq:trace_condition} poses a condition on the first Chern class of the bundle $E$.
Moreover, since $\dd\, \tr(\curv(\nabla^E)) = 0$ for any hermitean vector bundle with connection, this condition implies $\dd\, (\zeta_{Y_1}^*B_1 - \zeta_{Y_0}^*B_0) = 0$, which in turn yields
\begin{equation}
	\curv(\nabla^{\CG_1}) = \curv(\nabla^{\CG_0})\,.
\end{equation}
This poses a very strong restriction on pairs of bundle gerbes which admit morphisms between them.

An even stronger constraint which is often encountered in higher gauge theory is the so-called fake-curvature condition~\cite{BS--Higher_gauge_theory}
\begin{equation}
\label{eq:fake_curvature_condition}
	\curv(\nabla^E) - (\zeta_{Y_1}^*B_1 - \zeta_{Y_0}^*B_0) \otimes 1_E = 0\,.
\end{equation}
From a physical point of view this seems too strong a condition to impose generally, as, for instance, it trivialises the DBI action of D-branes~\cite{Szabo--String-theory_and_D-brane_dynamics}.
We will see in Section~\ref{sect:transgression_functor} that one of the consequences of the fake-curvature condition is that sections with this property transgress to parallel sections over the loop space of $M$.
\qen
\end{remark}

The \emph{identity 1-morphism} on a bundle gerbe with connection $(\CG,\nabla^\CG)$ is given by
\begin{equation}
\label{eq:identity_1-morphism_of_BGrb}
	1_{(\CG,\nabla^\CG)} = \big( (L, \nabla^L), p_{013}^*\mu^{-1} \circ p_{023}^*\mu, Y^{[2]}, 1_{Y^{[2]}} \big) \,,
\end{equation}
where we are using the identification $Y^{[2]} {\times}_M Y^{[2]} \cong Y^{[4]}$ and where $p_{ij}$ denotes the projection on the $i$-th and $j$-th factors in $Y^{[4]}$.

For any hermitean line bundle with connection $(L,\nabla^L)$ on a manifold $M$ there exists a unitary parallel isomorphism $\delta_L \in \HLBdl^\nabla_\rmuni(M)( L \otimes L^*, I_0)$, which is sends an endomorphism $\psi = f \cdot 1_L$ of $L$ to $f$.
The morphism of line bundles $\delta_L$ has an analogue for bundle gerbes.
For a bundle gerbe $(\CG,\nabla^\CG)$ we define
\begin{equation}
\begin{aligned}
	\delta_{(\CG,\nabla^\CG)} &{}\coloneqq \big( (L, \nabla^L)^*, (1 \otimes \delta_{p_{12}^*L} \otimes \delta_{p_{23}^*L^*}) \circ (p_{012}^*\mu^\sft \otimes p_{123}^*\mu^{-1} \otimes 1), Y^{[2]}, 1_{Y^{[2]}} \big)
	\\
	&\quad \colon (\CG_, \nabla^{\CG})^* \otimes (\CG_, \nabla^{\CG}) \to \CI_0\,.
\end{aligned}
\end{equation}

\begin{example}
\begin{myenumerate}
	\item Let $\rho_0, \rho_1 \in \Omega^2(M,\iu\, \FR)$.
	Morphisms $\CI_{\rho_0} \to \CI_{\rho_1}$ consist of a surjective submersion $Z \to M$, a hermitean vector bundle $(E,\nabla^E)$ over $Z$ with connection, and a unitary parallel isomorphism $\alpha \colon d_1^*(E,\nabla^E) \to d_0^*(E,\nabla^E)$ over $Z^{[2]}$ which satisfies a cocycle relation over $Z^{[3]}$ (as~\eqref{eq:1-morphisms_compatibility_with_BGrb_multiplications} becomes trivial in this case).
	That is, morphisms $\CI_{\rho_0} \to \CI_{\rho_1}$ are equivalently descent data for hermitean vector bundles with connections on $M$.
	If $(E,\alpha)$ is such a morphism, we will denote the corresponding descended hermitean vector bundle with connection by $\sfR(E,\alpha)$ (see Appendix~\ref{app:special_morphisms_and_descent} for more on morphisms and descent).
	The observation that 1-morphisms $\CI_0 \to \CI_0$ form descent data for hermitean vector bundles with connection on $M$, and that, in particular, every $(E, \nabla^E) \in \HVBdl^\nabla(M)$ defines such a morphism will be crucial in order to understand bundle gerbes as higher line bundles in Chapter~\ref{ch:2Hspaces_from_bundle_gerbes}.
	
	\item Given two pairs of bundle gerbes $(\CG_i, \nabla^{\CG_i}) = ( L_i, \nabla^{L_i}, \mu_i, B_i, Y_i, \pi_i )$ and $(\CG'_i, \nabla^{\CG'_i}) = ( L'_i, \nabla^{L'_i}, \mu'_i, B'_i, Y'_i, \pi'_i )$, for $i= 0,1$, and 1-morphisms $(E, \nabla^E, \alpha, Z, \zeta) \colon (\CG_0, \nabla^{\CG_0}) \to (\CG_1, \nabla^{\CG_1})$ and $(E', \nabla^{E'}, \alpha', Z', \zeta') \colon (\CG'_0, \nabla^{\CG'_0}) \to (\CG'_1, \nabla^{\CG'_1})$, the morphism
	\begin{equation}
	\begin{aligned}
		&(E, \nabla^E, \alpha, Z, \zeta) \otimes (E', \nabla^{E'}, \alpha', Z', \zeta')
		\\
		&= \big(  \pr_Z^*(E,\nabla^E) \otimes \pr_{Z'}^*(E',\nabla^{E'}),\, \pr_Z^{[2]*} \alpha \otimes \pr_{Z'}^{[2]*} \alpha' ,\, Z {\times}_M Z',\, \zeta {\times}_M \zeta' \big)
	\end{aligned}
	\end{equation}
	is called the \emph{tensor product of $(E,\alpha)$ and $(E',\alpha')$}.
	It provides a 1-morphism
	\begin{equation}
		(E,\alpha) \otimes (E',\alpha') \colon (\CG_0,\nabla^{\CG_0}) \otimes (\CG'_0,\nabla^{\CG'_0}) \to (\CG_1,\nabla^{\CG_1}) \otimes (\CG'_1,\nabla^{\CG'_1})\,.
	\end{equation}
	
	\item If $(E, \nabla^E, \alpha, Z, \zeta) \colon (\CG_0, \nabla^{\CG_0}) \to (\CG_1, \nabla^{\CG_1})$, then we obtain the \emph{transposed morphism} $(E,\alpha)^\sft \colon (\CG_1,\nabla^{\CG_1})^* \to (\CG_0, \nabla^{\CG_0})^*$ by setting
	\begin{equation}
	\label{eq:tranpose_of_1-morphism_of_BGrbs}
		(E, \nabla^E, \alpha, Z, \zeta)^\sft
		= \big( (E, \nabla^E),\, (1 \otimes \delta_{L_1}) \circ (1 \otimes \alpha \otimes 1) \circ (\delta_{L_0}^{-1} \otimes 1),\, Z,\, \sw \circ \zeta \big)\,,
	\end{equation}
	where we have omitted pullbacks and where $\sw \colon Y_0 {\times_M} Y_1 \to Y_1 {\times_M} Y_0$ swaps the factors in the fibre product.%
	\footnote{In~\cite{Waldorf--More_morphisms} this is called the \emph{dual morphism} instead of the transpose.
	However, we feel that this nomenclature would be misleading in view of Section~\ref{sect:Pairings_and_inner_hom_of_morphisms_in_BGrb}.}
	It is worth to note that for a morphism $(E,\alpha) \colon \CI_{\rho_0} \to \CI_{\rho_1}$, the data of its transpose is exactly the same as that of $(E,\alpha)$, but now $(E,\alpha)^\sft \colon \CI_{-\rho_1} \to \CI_{-\rho_0}$.
	\qen
\end{myenumerate}
\end{example}

The set-up for bundle gerbes so far fails to assemble bundle gerbes and their morphisms into a category.
Strictly speaking, composition of morphisms is not associative, as the tensor product of vector spaces and vector bundles is not associative, but under the conventions spelled out in Appendix~\ref{app:monoidal_structures_and_strictness} it becomes virtually associative.
However, one can check that pre- and postcomposition by the identity morphism do not act as the identity on morphisms of bundle gerbes.
The way to resolve this issue is to go to the less strict and, therefore, richer world of 2-categories.
As before, we write $Y_{01} \coloneqq Y_0 {\times}_M Y_1$.

\begin{definition}[2-morphisms of bundle gerbes]
\label{def:2-morphisms_of_BGrbs}
Let $(\CG_i, \nabla^{\CG_i}) \in \BGrb^\nabla(M)$ be bundle gerbes with connection on $M$ for $i = 0,1$.
Let $(E, \nabla^E, \alpha, Z, \zeta)$ and $(E', \nabla^{E'}, \alpha', Z', \zeta')$ be 1-morphisms $(\CG_0, \nabla^{\CG_0}) \to (\CG_1, \nabla^{\CG_1})$ of bundle gerbes with connections on $M$.
A \emph{2-morphism $(E,\alpha) \to (E',\alpha')$} is an equivalence class of triples $(W,\omega, \psi)$, where $\omega \colon W \to Z {\times}_{Y_{01}} Z'$ is a surjective submersion and $\psi \in \HVBdl^\nabla(W)(\omega_Z^*E, \omega_{Z'}^*E')$, where we have set $\omega_Z = \pr_Z \circ \omega$ and $\omega_{Z'} = \pr_{Z'} \circ \omega$.
Note that we can define $\omega_{Y_i} \coloneqq \zeta_{Y_i} \circ \omega_Z = \zeta'_{Y_i} \circ \zeta'_{Y_i}$.
This morphism is required to make the following diagram over $W^{[2]} = W {\times}_M W$ commute:
\begin{equation}
\label{eq:2-morphism_compatibility_with_alphas}
\begin{tikzcd}[column sep=3cm, row sep=1.25cm]
	\omega_{Y_0}^{[2]*} L_0 \otimes d_0^*\omega_Z^*E \ar[r, "\omega_Z^{[2]*} \alpha"] \ar[d, "1 \otimes d_0^*\psi"'] & d_1^*\omega_Z^*E \otimes \omega_{Y_1}^{[2]*} L_1 \ar[d, "d_1^*\psi \otimes 1"]
	\\
	\omega_{Y_0}^{[2]*} L_0 \otimes d_0^*\omega_{Z'}^*E' \ar[r, "\omega_{Z'}^{[2]*} \alpha'"'] & d_1^*\omega_{Z'}^*E' \otimes \omega_{Y_1}^{[2]*} L_1
\end{tikzcd}
\end{equation}

Two triples $(W,\omega,\psi)$ and $(W',\omega', \psi')$ are equivalent if there exists a triple $(X, \chi_W, \chi_{W'})$ of a manifold $X$ and surjective submersions $\chi_W \colon X \to W$ and $\chi_{W'} \colon X \to W'$ such that we have a commuting diagram
\begin{equation}
\begin{tikzcd}
	& X \ar[dl, "\chi_W"'] \ar[dr, "\chi_{W'}"] &
	\\
	W \ar[dr, "\omega"'] & & W' \ar[dl, "\omega"]
	\\
	& Z {\times}_{Y_{01}} Z' &
\end{tikzcd}
\end{equation}
and such that $\chi_{W'}^*\psi' = \chi_W^* \psi$.
\end{definition}

In Proposition~\ref{st:2-morphisms_have_simple_representatives} we show that every 2-morphism $[W, \omega, \psi]$ as above has a representative of the form $[Z {\times}_{Y_{01}} Z',\, 1_{(Z {\times}_{Y_{01}} Z')},\, \phi]$.
For representatives of this special form we will often use the shorthand notation
\begin{equation}
	[Z {\times}_{Y_{01}} Z',\, 1,\, \phi] \coloneqq [Z {\times}_{Y_{01}} Z',\, 1_{(Z {\times}_{Y_{01}} Z')},\, \phi]\,.
\end{equation}
Consider three bundle gerbes with connection $(\CG_i, \nabla^{\CG_i}) \in \BGrb^\nabla(M)$, for $i = 0,1,2$ and 1-morphisms $(E,\nabla^E,\alpha,Z,\zeta)$, $(E',\nabla^{E'},\alpha',Z',\zeta')$, and $(E'',\nabla^{E''},\alpha'',Z'',\zeta'')$ from $(\CG_0, \nabla^{\CG_0})$ to $(\CG_1, \nabla^{\CG_1})$, as well as $(F, \nabla^F, \beta,X, \xi)$, $(F', \nabla^{F'}, \beta', X', \xi)$ from $(\CG_1,\nabla^{\CG_1})$ to $(\CG_2,\nabla^{\CG_2})$.
The \emph{vertical composition} of $[W, \omega, \phi] \colon (E,\alpha) \to (E',\alpha',)$ and $[W',\omega',\phi'] \colon (E',\alpha') \to (E'',\alpha'')$ is the equivalence class of
\begin{equation}
\label{eq:vertical_composition_of_2-morphisms}
	(W',\omega',\phi') \circ_2 (W, \omega, \phi)
	= \big( W {\times}_{Z'} W',\, (\omega_Z \circ \pr_W) {\times}_{Y_{10}} (\omega'_{Z''} \circ \pr_{W'}),\, \pr_{W'}^*\phi' \circ \pr_W^*\phi \big)\,.
\end{equation}
The \emph{horizontal composition} of a pair of 2-morphisms $[W, \omega, \phi] \colon (E,\alpha) \to (E',\alpha')$ and $[W'', \omega'', \phi''] \colon (F, \beta) \to (F', \beta')$ is represented by
\begin{equation}
\label{eq:horizontal_composition_of_2-morphisms}
\begin{aligned}
	(W'', \omega'', \phi'') \circ_1 (W, \omega, \phi) &= (U, \omega^U, \psi) \,, \quad \text{where}
	\\
	U &= (X {\times}_{Y_1} Z) {\times}_{Y_{02}} (W \times_{Y_1} W'') {\times}_{Y_{02}} (X' {\times}_{Y_1} Z')
	\\
	\omega^U &= (\pr_{(X {\times}_{Y_1} Z)}) {\times}_{M} (\pr_{(X' {\times}_{Y_1} Z')})\,,
	\\
	\psi &= \dd_{(F',\beta') \circ (E',\alpha')} \circ (\phi \otimes\phi'') \circ \dd_{(F,\beta) \circ (E,\alpha)} \,,
\end{aligned}
\end{equation}
with pullbacks omitted, and $\dd_{(-)}$ as defined in Lemma~\ref{st:dd-def_and_properties}.

\begin{example}
\label{eg:2-morphisms--new_from_old_identity_unitors}
Important 2-morphisms of bundle gerbes are given as follows:
\begin{myenumerate}
	\item Consider bundle gerbes $(\CG_i, \nabla^{\CG_i}) \in \BGrb^\nabla(M)$ for $i = 0, 1, 2, 3$ and 1-morphisms $(E, \nabla^E, \alpha, Z, \zeta)$, $(E', \nabla^{E'}, \alpha', Z', \zeta') \colon (\CG_0, \nabla^{\CG_0}) \to (\CG_1, \nabla^{\CG_1})$ as well as $(F, \nabla^F, \beta, X, \xi)$, $(F', \nabla^{F'}, \beta', X', \xi') \colon (\CG_2, \nabla^{\CG_2}) \to (\CG_3, \nabla^{\CG_3})$.
	Given, moreover, a pair of 2-morphisms $[W, \omega, \psi] \colon (E,\alpha) \to (E', \alpha')$ and $[U, \nu, \phi] \colon (F, \beta) \to (F', \beta')$, their \emph{tensor product} reads as
	\begin{equation}
	\label{eq:tensor_product_of_2-morphisms}
	\begin{aligned}
		&[U, \nu, \phi] \otimes [W, \omega, \psi]
		= \big[ U {\times}_M W,\, \nu {\times}_M \omega,\, \pr_U^*\phi \otimes \pr_W^*\psi \big]
		\\
		&\quad \colon (F, \beta) \otimes (E,\alpha) \to (F', \beta') \otimes (E',\alpha')\,.
	\end{aligned}
	\end{equation}
	
	\item The \emph{transpose of a 2-morphism} $[W, \omega, \psi] \colon (E,\alpha) \to (F, \beta)$ is given by~\cite[p.~60]{Waldorf--Thesis}
	\begin{equation}
		[W, \omega, \psi]^\sft = [W, \omega, \psi] \colon (E,\alpha)^\sft \to (F, \beta)^\sft\,.
	\end{equation}
	
	\item We define the \emph{adjoint of a 2-morphism} $[W, \omega, \psi] \colon (E,\alpha) \to (F, \beta)$ to be
	\begin{equation}
	\label{eq:adjoint_of_2-morphism_in_BGrb}
		[W, \omega, \psi]^* \coloneqq [W, \sw \circ \omega, \psi^*] \colon (F, \beta) \to (E,\alpha)\,,
	\end{equation}
	where $\sw \colon X {\times}_{Y_{01}} Z \to Z {\times}_{Y_{01}} X$, $(x,z) \mapsto (z,x)$.
	The compatibility condition and change of direction is seen by applying $(-)^*$ to~\eqref{eq:2-morphism_compatibility_with_alphas}.
	
	\item The \emph{identity 2-morphism} on $(E, \nabla^E, \alpha, Z, \zeta)$ is~\cite[p.~41]{Waldorf--Thesis}
	\begin{equation}
		1_{(E,\alpha)} = \big[ Z {\times}_{Y_{01}} Z, 1_{Z {\times}_{Y_{01}} Z}, \dd_{(E,\alpha)} \big]\,,
	\end{equation}
	where $\dd_{(E,\alpha)}$ is defined in Lemma~\ref{st:dd-def_and_properties}.
	
	\item There exist \emph{left and right unitors} for morphisms of bundle gerbes~\cite[Section 2.3]{Waldorf--Thesis}, i.e. natural 2-isomorphisms
	\begin{equation}
		\lambda_{(E,\alpha)} \colon (E,\alpha) \circ 1_{(\CG_0, \nabla^{\CG_0})} \to (E,\alpha)\,,
		\quad
		\rho_{(E,\alpha)} \colon 1_{(\CG_1, \nabla^{\CG_1})} \circ (E,\alpha) \to (E,\alpha)\,,
	\end{equation}
	which establish the identity 1-morphisms~\eqref{eq:identity_1-morphism_of_BGrb} as weak identities.
	\qen
\end{myenumerate}
\end{example}

We have now gathered structure sufficient to organise the collection of bundle gerbes and their 1-morphisms and 2-morphisms into a 2-category.
Theorem~\ref{st:2-categories_of_bundle_gerbes} and Definition~\ref{def:2-categories_of_BGrbs} slightly modify those in~\cite{Waldorf--More_morphisms,Waldorf--Thesis} to have different morphism categories.
However, the important structural statements derived there and generalised in Appendix~\ref{app:special_morphisms_and_descent} hold true in any of the below settings.
This is a consequence of the fact that $\HVBdl^\nabla$, $\HVBdl^\nabla_\rmpar$, and $\HVBdl$, and $\HVBdl^\nabla_\rho$ for any $\rho \in \Omega^2(M, \iu\, \FR)$ form sheaves of categories.
The latter sheaf of categories has objects $(E,\nabla^E) \in \HVBdl^\nabla(M)$ satisfying $\tr(\curv(\nabla^E) - \rho \cdot 1_E) = 0$, while morphisms are given by parallel morphisms of hermitean vector bundles with connections.
All structural morphisms which we used on hermitean vector bundles are in fact unitary parallel isomorphisms, and hence are inhabitants to any of the above sheaves of categories (for $\HVBdl$ after forgetting connections).

\begin{theorem}[{\cite[Chapter 2]{Waldorf--Thesis}}]
\label{st:2-categories_of_bundle_gerbes}
Bundle gerbes with connection on $M$, together with their 1-morphisms and 2-morphisms as defined above, form a symmetric monoidal 2-category.
Composition of 1-morphisms is strictly associative, and the tensor product is strictly associative as well as strictly unital.
\end{theorem}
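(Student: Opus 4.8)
The statement bundles two kinds of assertions: that the data assembled above satisfy the axioms of a weak 2-category, and that this 2-category carries a symmetric monoidal structure with the stated strictness. My plan is to verify each package of axioms in turn, exploiting throughout the single structural observation already recorded before the theorem: every associator, unitor, braiding and gluing datum entering the coherence diagrams is a unitary parallel isomorphism of hermitean vector bundles, and hence is an inhabitant of each of the sheaves of categories $\HVBdl^\nabla$, $\HVBdl^\nabla_\rmpar$, $\HVBdl$. Consequently the enlargement of the 1- and 2-morphism classes relative to \cite{Waldorf--Thesis} leaves all coherence data untouched, and each axiom reduces to the commutativity of a diagram of fibrewise-linear maps, which is a pointwise linear-algebra identity insensitive to whether the participating morphisms are invertible, parallel or unitary.

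First I would establish that for each ordered pair of bundle gerbes the 2-morphisms form a category under vertical composition $\circ_2$. Using Proposition~\ref{st:2-morphisms_have_simple_representatives} I replace each class $[W,\omega,\psi]$ by a representative over $Z\times_{Y_{01}}Z'$, so that \eqref{eq:vertical_composition_of_2-morphisms} becomes literal composition of bundle maps; associativity and the unit law, with the identity 2-morphism of Example~\ref{eg:2-morphisms--new_from_old_identity_unitors} built from $\dd_{(E,\alpha)}$ of Lemma~\ref{st:dd-def_and_properties}, are then immediate. The one genuine point is well-definedness on equivalence classes: two representatives agree after pullback to a common refinement, and because the morphism bundles satisfy descent (the sheaf property of the relevant relative sheaf of categories), the composite is independent of the choices; the same descent argument shows the composite triple again satisfies the compatibility square \eqref{eq:2-morphism_compatibility_with_alphas}.

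Next I would treat horizontal composition, the associator and the unitors. I verify that \eqref{eq:horizontal_composition_of_2-morphisms} defines functors $\scC((\CG_1,\nabla^{\CG_1}),(\CG_2,\nabla^{\CG_2}))\times\scC((\CG_0,\nabla^{\CG_0}),(\CG_1,\nabla^{\CG_1}))\to\scC((\CG_0,\nabla^{\CG_0}),(\CG_2,\nabla^{\CG_2}))$; functoriality is precisely the interchange law relating $\circ_1$ and $\circ_2$, and I expect this, together with well-definedness of $\circ_1$ on equivalence classes, to be the main obstacle, since $\circ_1$ is the construction carrying the most elaborate fibre product $U$ and the two gluing isomorphisms $\dd_{(F,\beta)\circ(E,\alpha)}$ and $\dd_{(F',\beta')\circ(E',\alpha')}$. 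Here I would again reduce to simple representatives and then check interchange as an equality of bundle maps, using the defining properties of $\dd_{(-)}$ from Lemma~\ref{st:dd-def_and_properties}. For the coherence of the underlying bicategory, the conventions of Appendix~\ref{app:monoidal_structures_and_strictness} render the tensor product of vector bundles strictly associative and strictly unital, so composition of 1-morphisms is strictly associative and its associator is the identity; the pentagon is then automatic, and only the triangle identity for the left and right unitors $\lambda,\rho$ of Example~\ref{eg:2-morphisms--new_from_old_identity_unitors} remains, which I would confirm by directly comparing their defining bundle maps.

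Finally I would supply the symmetric monoidal structure. The tensor product on objects \eqref{eq:tensor_product_of_BGrbs}, on 1-morphisms, and on 2-morphisms \eqref{eq:tensor_product_of_2-morphisms} assembles into a 2-functor $\otimes\colon\scC\times\scC\to\scC$, whose strict associativity and strict unitality, with unit $\CI_0$, again follow from the strictness conventions of Appendix~\ref{app:monoidal_structures_and_strictness}. The braiding is the invertible 1-morphism $(\CG_0,\nabla^{\CG_0})\otimes(\CG_1,\nabla^{\CG_1})\to(\CG_1,\nabla^{\CG_1})\otimes(\CG_0,\nabla^{\CG_0})$ induced by the diffeomorphism $\sw\colon Y_0\times_M Y_1\to Y_1\times_M Y_0$ together with the symmetry of $\otimes$ on line bundles over $Y_0\times_M Y_1$; being built from unitary parallel isomorphisms it is natural in both variables as a 2-isomorphism. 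It then remains to verify the two hexagon identities and the symmetry axiom for the braiding, each of which is once more a commuting diagram of fibrewise-linear swap maps. In every case the verification coincides with that of \cite{Waldorf--Thesis} for the restricted morphism classes, and the observation of the first paragraph guarantees that passing to the larger classes $\HVBdl^\nabla$ and $\HVBdl^\nabla_\rmpar$ changes nothing.
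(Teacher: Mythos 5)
Your proposal is correct and rests on exactly the two pillars the paper itself uses to justify this theorem: the verification of the 2-categorical and symmetric monoidal axioms is deferred to \cite{Waldorf--Thesis}, and the passage to the enlarged morphism classes is harmless because every coherence datum (associators, unitors, braiding, the gluing morphisms $\dd_{(-)}$) is a unitary parallel isomorphism inhabiting each of the sheaves $\HVBdl^\nabla$, $\HVBdl^\nabla_\rmpar$, $\HVBdl$, while the strictness claims follow from the conventions of Appendix~\ref{app:monoidal_structures_and_strictness}. You merely spell out the axiom-checking that the paper leaves to the citation, so this is the same argument in expanded form.
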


\begin{remark}
Strictly speaking, the second part of Theorem~\ref{st:2-categories_of_bundle_gerbes} is a consequence of the conventions we have adapted regarding the monoidal structures on $\HVBdl^\nabla(M)$ and the category of surjective submersions over $M$, as well as on the composability of pullbacks of vector bundles.
For more on this, see Appendix~\ref{app:monoidal_structures_and_strictness}.
\qen
\end{remark}

\begin{definition}[2-categories of bundle gerbes]
\label{def:2-categories_of_BGrbs}
We refer to the symmetric monoidal 2-category thus defined as the \emph{2-category of bundle gerbes with connections on $M$} and denote it by $(\BGrb^\nabla(M), \otimes\,)$.
\\
Using the same objects and 1-morphisms, but restricting 2-morphisms to be made of parallel morphisms of hermitean vector bundles with connection, we obtain the symmetric monoidal sub-2-category $(\BGrb^\nabla_\rmpar(M), \otimes\,)$ of \emph{bundle gerbes with connection on $M$ and parallel 2-morphisms}.
\\
This, in turn, has a symmetric monoidal sub-2-category $(\BGrb^\nabla_\rmflat(M), \otimes\,)$ which has 1-morphisms those 1-morphisms in $\BGrb^\nabla(M)$ that satisfy~\eqref{eq:trace_condition}, and 2-morphisms made from unitary, parallel isomorphisms of hermitean vector bundles with connection.
We call this 2-category the \emph{2-category of bundle gerbes with connections on $M$ and flat morphisms}.
\\
Forgetting about connections on all levels of $\BGrb^\nabla(M)$ yields the symmetric monoidal \emph{2-category $(\BGrb(M),\otimes\, )$ of bundle gerbes on $M$}.
\end{definition}

We can now state the first crucial results about these 2-categories.

\begin{proposition}[{\cite[Proposition 2.3.4]{Waldorf--Thesis}}]
\label{st:classification_of_isomps_of_BGrbs}
A morphism $(E, \alpha)$ in any of the above 2-categories is (weakly) invertible if and only if its underlying vector bundle $E$ is of rank one.
\end{proposition}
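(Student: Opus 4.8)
The plan is to show that the rank of the hermitean vector bundle underlying a 1-morphism is a complete obstruction to weak invertibility, and to settle both implications by tracking this rank. I would first record that the rank descends to a locally constant function on $M$: since $L_0$ and $L_1$ are line bundles, the isomorphism $\alpha$ forces $\rank(d_0^*E) = \rank(d_1^*E)$ over $Z^{[2]} = Z \times_M Z$, so for any two points $z,z'$ in a common fibre of $\zeta_M \colon Z \to M$ (which is exactly a point of $Z^{[2]}$) one has $\rank(E_z) = \rank(E_{z'})$; as $\zeta_M$ is a surjective submersion, $\rank(E)$ descends to a locally constant $r_E \colon M \to \NN$. For the necessity direction this gives two clean facts: a 2-isomorphism $[W,\omega,\psi]$ supplies an invertible $\psi \colon \omega_Z^*E \to \omega_{Z'}^*E'$, whence $r_E = r_{E'}$; and by the explicit form of composition the bundle underlying $(E',\alpha') \circ (E,\alpha)$ is $\pr_Z^*E \otimes \pr_{Z'}^*E'$, so its rank is the product $r_E \cdot r_{E'}$. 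Since the identity $1_{(\CG_0,\nabla^{\CG_0})}$ of \eqref{eq:identity_1-morphism_of_BGrb} has underlying bundle $L_0$ with rank $1$, weak invertibility of $(E,\alpha)$ forces $r_E \cdot r_{E'} \equiv 1$ and hence $r_E \equiv 1$.

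For the sufficiency direction I would construct an explicit weak inverse from the dual line bundle. Assuming $E$ has rank one, $(E^*,\nabla^{E^*})$ is again a line bundle with $E \otimes E^* \cong I_0$ via the unitary parallel evaluation $\delta_E$, and I would take the candidate inverse $(\CG_1,\nabla^{\CG_1}) \to (\CG_0,\nabla^{\CG_0})$ to be $(E^*,\nabla^{E^*},\alpha',Z,\sw \circ \zeta)$, with $\zeta'_{Y_i} = \zeta_{Y_i}$, where $\alpha'$ is the inverse-transpose of $\alpha$ on the $E$-factor, obtained by inverting $\alpha$ and reordering the tensor factors by means of the evaluations $\delta_E$, $\delta_{L_0}$, $\delta_{L_1}$ in the spirit of \eqref{eq:tranpose_of_1-morphism_of_BGrbs}. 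The compatibility condition \eqref{eq:1-morphisms_compatibility_with_BGrb_multiplications} for $\alpha'$ should follow from that for $\alpha$ together with the associativity of $\mu_0$ and $\mu_1$. This construction lands in the correct morphism class in each of the listed 2-categories, because $\delta_E$ and the inverse-transpose of a unitary parallel isomorphism are themselves unitary parallel isomorphisms; for $\BGrb^\nabla_\rmflat$ one checks in addition that the trace condition \eqref{eq:trace_condition} is preserved, since $\curv(\nabla^{E^*}) = -\curv(\nabla^E)$ precisely absorbs the sign change in the curvings $B_i$ incurred by interchanging the roles of $\CG_0$ and $\CG_1$.

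It then remains to exhibit 2-isomorphisms $(E^*,\alpha') \circ (E,\alpha) \cong 1_{(\CG_0,\nabla^{\CG_0})}$ and $(E,\alpha) \circ (E^*,\alpha') \cong 1_{(\CG_1,\nabla^{\CG_1})}$. The first composite has underlying bundle $\pr_Z^*E \otimes \pr_{Z'}^*E^*$ over $Z \times_{Y_1} Z$, and I would use $\delta_E$ together with the multiplication $\mu_0$ to build the required isomorphism onto the structure morphism $p_{013}^*\mu_0^{-1} \circ p_{023}^*\mu_0$ on $L_0$ from \eqref{eq:identity_1-morphism_of_BGrb}, verifying that the square \eqref{eq:2-morphism_compatibility_with_alphas} commutes by substituting the definition of $\alpha'$ and invoking the coherence of $\mu_0$. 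I expect this last verification to be the main obstacle: unwinding the composite of $\alpha$ with its inverse-transpose $\alpha'$ and matching it against the identity's structure morphism is a bookkeeping-heavy diagram chase through the $\delta$- and $\mu$-maps, and the chief difficulty is keeping track of the many pullbacks along the iterated fibre products that the notation suppresses.
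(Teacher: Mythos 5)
The paper itself offers no proof of this proposition --- it is quoted verbatim from Waldorf's thesis --- so your proposal can only be measured against the standard argument given there, which your outline follows. Your necessity direction is complete and correct: $\alpha$ forces $\rank(d_0^*E)=\rank(d_1^*E)$ over $Z^{[2]}$, so the rank descends to a locally constant function $r_E$ on $M$; a 2-isomorphism of 1-morphisms forces equality of these functions; the composite's bundle $\pr_Z^*E\otimes\pr_{Z'}^*E'$ has rank $r_E\,r_{E'}$; and the identity 1-morphism~\eqref{eq:identity_1-morphism_of_BGrb} has the rank-one bundle $L_0$, whence $r_E\,r_{E'}=1$ and $r_E=1$. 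Your candidate inverse $(E^*,\nabla^{E^*},\alpha',Z,\sw\circ\zeta)$, with $\alpha'$ the inverse of $\alpha$ rearranged by evaluation maps, is also the right construction, and your observation that the flat case works because $\curv(\nabla^{E^*})=-\curv(\nabla^E)$ matches the sign change in the curvings is correct.

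The gap is in your recipe for the unit and counit 2-isomorphisms. You propose to build the 2-isomorphism $(E^*,\alpha')\circ(E,\alpha)\to 1_{(\CG_0,\nabla^{\CG_0})}$ from ``$\delta_E$ together with the multiplication $\mu_0$''. This cannot work as stated: the composite's bundle $\pr_1^*E\otimes\pr_2^*E^*$ lives over $Z\times_{Y_1}Z$, where the two points $z,z'$ of $Z$ have equal $Y_1$-images but in general \emph{different} $Y_0$-images $y_0\neq y_0'$; the evaluation $\delta_E$ only pairs fibres of $E$ and $E^*$ over the \emph{same} point of $Z$, and $\mu_0$ only relates fibres of $L_0$, so neither datum can compare $E_z$ with $E_{z'}$. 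The only structure that does so is $\alpha$ itself: restricting $\alpha$ to $Z\times_{Y_1}Z\subset Z^{[2]}$ and composing with the canonical trivialisation $t_{\mu_1}\colon s_0^*L_1\to I_0$ of Lemma~\ref{st:t_mu-lemma} gives an isomorphism $L_{0|(y_0,y_0')}\otimes E_{z'}\to E_z$, and only after this step does $\delta_E$ produce the required map $E_z\otimes E^*_{z'}\to L_{0|(y_0,y_0')}$; the square~\eqref{eq:2-morphism_compatibility_with_alphas} is then where $\mu_0$ and the compatibility~\eqref{eq:1-morphisms_compatibility_with_BGrb_multiplications} of $\alpha$ enter, and symmetrically the other composite needs $\alpha$ restricted to $Z\times_{Y_0}Z$ together with $t_{\mu_0}$. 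Since you also defer exactly this verification as ``the main obstacle'', the sufficiency half of your proof is a correct plan whose crucial map is misidentified; with the ingredient corrected as above, the remaining diagram chase does go through and recovers the cited proof.
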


\begin{proposition}
\label{st:determinants_of_morphisms_of_BGrbs}
Let $(\CG_i, \nabla^{\CG_i}) = (L_i, \nabla^{L_i}, \mu_i, B_i, Y_i, \pi_i)$ be bundle gerbes with connection on $M$ for $i=0,1$.
The following statements hold true.
\begin{myenumerate}
	\item If $(Y_0, \pi_0) = (Y_1, \pi_1) = (Y, \pi)$, i.e. the bundle gerbes are defined over the same surjective submersion onto $M$, there exist isomorphisms
	\begin{equation}
	\label{eq:simplified_tensor_product_over_same_sursub}
		(\CG_0, \nabla^{\CG_0}) \otimes (\CG_1, \nabla^{\CG_1}) \cong \big( (L_0, \nabla^{L_0}) \otimes (L_1, \nabla^{L_1}), \mu_0 \otimes \mu_1, B_0 + B_1, Y, \pi \big)\,.
	\end{equation}
	
	\item For generic bundle gerbes $(\CG_i, \nabla^{\CG_i})$, i.e. without the assumption made under (1), if there exists a morphism $(E,\alpha) \colon (\CG_0, \nabla^{\CG_0}) \to (\CG_1, \nabla^{\CG_1})$, then there exists an isomorphism $(\CG_0, \nabla^{\CG_0})^{\otimes \rank(E)} \cong (\CG_1, \nabla^{\CG_1})^{\otimes \rank(E)}$.
\end{myenumerate}
\end{proposition}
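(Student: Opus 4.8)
The plan is to treat the two parts in sequence, using part~(1) as the computational backbone for part~(2). For part~(1), I would construct an explicit weakly invertible $1$-morphism between the two bundle gerbes directly from the definition, exploiting Proposition~\ref{st:classification_of_isomps_of_BGrbs}: since any $1$-morphism whose underlying vector bundle has rank one is automatically invertible, it suffices to write down a \emph{rank-one} $1$-morphism together with its coherence datum. Writing the tensor product gerbe over $W \coloneqq Y {\times}_M Y = Y^{[2]}$ with line bundle $\Lambda = \pr_{Y_0}^{[2]*}L_0 \otimes \pr_{Y_1}^{[2]*}L_1$, and the target gerbe over $Y$ with line bundle $L_0 \otimes L_1$, the starting observation is that pullback of $\Lambda$ along the diagonal $\Delta \colon Y \to W$ returns exactly $L_0 \otimes L_1$, because $\pr_{Y_0}^{[2]} \circ \Delta^{[2]} = \pr_{Y_1}^{[2]} \circ \Delta^{[2]} = \mathrm{id}$ as maps $Y^{[2]} \to Y^{[2]}$. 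I would promote this fibrewise identification to a genuine $1$-isomorphism over a common refinement, e.g. $Z = Y {\times}_M W$, by building the coherence isomorphism $\alpha$ out of the two gerbe multiplications $\mu_0$ and $\mu_1$, which supply precisely the coherent identifications of $L_i$ evaluated on the $Y_0$-components with $L_i$ evaluated on the $Y_1$-components. The associativity of $\mu_0$ and $\mu_1$ over quadruple overlaps then delivers the compatibility condition~\eqref{eq:1-morphisms_compatibility_with_BGrb_multiplications}.

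For part~(2), I would run the classical determinant argument. Given a $1$-morphism $(E, \nabla^E, \alpha, Z, \zeta) \colon (\CG_0, \nabla^{\CG_0}) \to (\CG_1, \nabla^{\CG_1})$ with $r = \rank(E)$, pass to the determinant line bundle $\det E = \Lambda^r E$ with its induced connection. The coherence $\alpha$ is a unitary parallel isomorphism of rank-$r$ bundles, so taking top exterior powers and using the identity $\det(L \otimes E) \cong L^{\otimes r} \otimes \det E$ for a line bundle $L$ converts $\alpha$ into a rank-one coherence
\begin{equation}
	\det\alpha \colon \zeta_{Y_0}^{[2]*} L_0^{\otimes r} \otimes d_0^*\det E \arisom d_1^*\det E \otimes \zeta_{Y_1}^{[2]*} L_1^{\otimes r}\,.
\end{equation}
By part~(1), applied iteratively to the $r$-fold self-tensor products, one has $(\CG_i, \nabla^{\CG_i})^{\otimes r} \cong \big( (L_i, \nabla^{L_i})^{\otimes r}, \mu_i^{\otimes r}, r\, B_i, Y_i, \pi_i \big)$, so $(\det E, \det\alpha)$ is precisely the datum of a $1$-morphism $(\CG_0, \nabla^{\CG_0})^{\otimes r} \to (\CG_1, \nabla^{\CG_1})^{\otimes r}$ once its coherence is verified. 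Since $\det E$ has rank one, Proposition~\ref{st:classification_of_isomps_of_BGrbs} immediately upgrades this $1$-morphism to a $1$-isomorphism, yielding the desired equivalence.

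I expect the main obstacle to be the verification that $\det\alpha$ genuinely satisfies~\eqref{eq:1-morphisms_compatibility_with_BGrb_multiplications} relative to the multiplications $\mu_0^{\otimes r}$ and $\mu_1^{\otimes r}$ of the tensor-power gerbes. This amounts to applying $\det$ to the entire commuting diagram defining~\eqref{eq:1-morphisms_compatibility_with_BGrb_multiplications} and checking that functoriality of $\det$, its compatibility with the pullbacks $d_i^*$ and $\zeta^*$, and the two tensor rules $\det(\phi \otimes 1_L) = \det\phi \otimes 1_{L^{\otimes r}}$ and $\det(1_E \otimes \nu) = 1_{\det E} \otimes \nu^{\otimes r}$ together transport the original condition to the one for $\mu_i^{\otimes r}$. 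A secondary technical point is confirming that the induced connection on $\det E$ keeps $\det\alpha$ unitary and parallel, so that it lives in $\HVBdl^\nabla_\rmuni$; this follows since the determinant of a unitary parallel isomorphism is again unitary and parallel. The bookkeeping in part~(1) — tracking which $Y$-component each $L_i$ is evaluated on across the fibre products $W^{[2]} \cong Y^{[4]}$ — is the other place where care is required, and it is this explicit identification over generic surjective submersions that constitutes the genuinely new content beyond the known determinant statement.
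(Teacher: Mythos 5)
Your proposal is correct and follows essentially the same route as the paper: part~(1) is established by exhibiting an explicit rank-one $1$-morphism over $Y^{[3]}$ whose coherence datum is assembled from $\mu_0$ and $\mu_1$ (the paper's $(J,\beta)$ with $J = p_{02}^*L_0 \otimes p_{12}^*L_1$), and part~(2) is the determinant trick — $\det E$ with $\det\alpha$ gives a rank-one morphism between the gerbes built on $(L_i^{\otimes r}, \mu_i^{\otimes r}, r B_i, Y_i, \pi_i)$, which is invertible by Proposition~\ref{st:classification_of_isomps_of_BGrbs} and is identified with $(\CG_i, \nabla^{\CG_i})^{\otimes r}$ via part~(1). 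The only difference is one of explicitness: the paper writes out the coherence isomorphism $\beta$ over $Y^{[6]}$ concretely, whereas you describe its construction and defer the verification, but the underlying argument is the same.
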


\begin{proof}
Writing out the tensor product (see~\eqref{eq:tensor_product_of_BGrbs}) explicitly for this case yields
\begin{equation}
\begin{aligned}
	&(\CG_0, \nabla^{\CG_0}) \otimes (\CG_1, \nabla^{\CG_1})
	\\*
	&= \big( p_{02}^* (L_0, \nabla^{L_0}) \otimes p_{13}^* (L_1, \nabla^{L_1}),\, p_{024}^*\mu_0 \otimes p_{135}^*\mu_1,\, Y^{[2]},\, \pi^{[2]} \big)\,.
\end{aligned}
\end{equation}
An isomorphism as required is given by the tuple
\begin{equation}
\begin{aligned}
	(J, \beta) \coloneqq &\Big( p_{02}^*(L_0, \nabla^{L_0}) \otimes p_{12}^*(L_1, \nabla^{L_1}),
	\\
	&\qquad \big( p_{025}^* \mu_0^{-1} \circ p_{035}^* \mu_0 \big) \otimes \big( p_{125}^*\mu_1^{-1} \circ p_{145}^* \mu_1 \big),\,	Y^{[3]},\, \pi^{[3]} \Big)\,.
\end{aligned}
\end{equation}
This proves (1).

In order to see (2), we observe that we obtain a morphism
\begin{align}
	&\big( \det(E, \nabla^E), \det(\alpha), Z, \zeta \big) \colon
	\\
	&\big( (L_0, \nabla^{L_0})^{\otimes n}, \mu_0 \otimes \ldots \otimes \mu_0, n B_0, Y_0, \pi_0 \big) \to \big( (L_1, \nabla^{L_1})^{\otimes n}, \mu_1 \otimes \ldots \otimes \mu_1, n B_1, Y_1, \pi_1 \big)\,, \notag
\end{align}
where $n = \rank(E)$.
This is an isomorphism by Proposition~\ref{st:classification_of_isomps_of_BGrbs}.
From (1) we know, moreover, that source and target of this isomorphism are canonically isomorphic to $(\CG_0, \nabla^{\CG_0})^{\otimes n}$ and $(\CG_1, \nabla^{\CG_1})^{\otimes n}$, respectively.
\end{proof}

The observation that the determinant of a bundle gerbe morphism induces an isomorphism has been made in~\cite{BCMMS}.
Here we extended that statement to bundle gerbes over different surjective submersions by finding an explicit isomorphism~\eqref{eq:simplified_tensor_product_over_same_sursub} to the simplified version of the tensor product of bundle gerbes defined over the same surjective submersions.
The existence of such an isomorphism could have been deduced implicitly from classification results in terms of Deligne cohomology (cf. Definition~\ref{def:Deligne_complex_and_cohomology} and Theorem~\ref{st:Classification_of_BGrbs_by_Deligne_coho}).

\begin{definition}[Trivialisations of bundle gerbes]
Given a bundle gerbe with connection $(\CG, \nabla^{\CG})$ on $M$, we call a flat isomorphism $(S,\beta) \in \BGrb^\nabla_{\rmflat\sim}((\CG, \nabla^{\CG}), \CI_\rho)$ for some $\rho \in \Omega^2(M, \iu\, \FR)$ a \emph{trivialisation of $(\CG, \nabla^{\CG})$}.
\end{definition}

\begin{corollary}
\label{st:determinants_of_sections_and_dual_sections}
If there exists a morphism $(E,\alpha) \colon (\CG, \nabla^{\CG}) \to \CI_\rho$, or alternatively a morphism $(E', \alpha') \colon \CI_\rho \to (\CG, \nabla^{\CG})$, then there exists a trivialisation of $(\CG, \nabla^{\CG})^{\otimes \rank(E)}$.
\end{corollary}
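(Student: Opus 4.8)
The plan is to reduce the statement to Proposition~\ref{st:determinants_of_morphisms_of_BGrbs} by taking determinants, and then to upgrade the resulting isomorphism to a \emph{flat} one by choosing the curving of the target trivial bundle gerbe appropriately. Throughout, write $(\CG, \nabla^\CG) = (L, \nabla^L, \mu, B, Y, \pi)$ and set $n = \rank(E)$ (respectively $n = \rank(E')$ in the second case).

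Suppose first that $(E, \nabla^E, \alpha, Z, \zeta) \colon (\CG, \nabla^\CG) \to \CI_\rho$ is given. First I would form the rank-one morphism $(\det E, \det\alpha, Z, \zeta)$ exactly as in the proof of Proposition~\ref{st:determinants_of_morphisms_of_BGrbs}. Composing with the canonical isomorphisms of Proposition~\ref{st:determinants_of_morphisms_of_BGrbs}(1), this exhibits a rank-one morphism $(\CG, \nabla^\CG)^{\otimes n} \to \CI_{\rho'}$. The crucial observation is that the data $(\det E, \det\alpha, Z, \zeta)$ define a valid $1$-morphism into $\CI_{\rho'}$ for \emph{any} choice of $\rho' \in \Omega^2(M, \iu\, \FR)$, since neither the compatibility condition~\eqref{eq:1-morphisms_compatibility_with_BGrb_multiplications} nor any other part of Definition~\ref{def:1-morphisms_of_BGrbs} refers to the curving of the target. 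By Proposition~\ref{st:classification_of_isomps_of_BGrbs} every such morphism is already (weakly) invertible; what remains is to select $\rho'$ so that it becomes \emph{flat}, i.e.\ satisfies the trace condition~\eqref{eq:trace_condition}.

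The key step is a descent computation for this curving. Since $\alpha$ is a parallel unitary isomorphism over $Z^{[2]}$, its source and target carry conjugate curvatures, and taking traces (using $\curv(\nabla^{I_0}) = 0$ for the target line bundle) yields
\[
	n\, \zeta_Y^{[2]*}\curv(\nabla^L) + d_0^* \tr\curv(\nabla^E) = d_1^* \tr\curv(\nabla^E)\,.
\]
Substituting $\curv(\nabla^L) = d_0^* B - d_1^* B$ and rearranging gives
\[
	d_0^*\big( \tr\curv(\nabla^E) + n\, \zeta_Y^* B \big) = d_1^*\big( \tr\curv(\nabla^E) + n\, \zeta_Y^* B \big)\,,
\]
so the $2$-form $\tr\curv(\nabla^E) + n\, \zeta_Y^* B$ descends along $\zeta_M \colon Z \to M$ to a unique $\rho' \in \Omega^2(M, \iu\, \FR)$. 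With this choice, and using $\curv(\nabla^{\det E}) = \tr\curv(\nabla^E)$, one checks directly that $(\det E, \det\alpha, Z, \zeta)$ satisfies the trace condition $\curv(\nabla^{\det E}) = \zeta_M^* \rho' - \zeta_Y^*(nB)$ as a morphism $(\CG, \nabla^\CG)^{\otimes n} \to \CI_{\rho'}$. Hence it lies in $\BGrb^\nabla_\rmflat(M)$ and, being rank one, is a flat isomorphism by Proposition~\ref{st:classification_of_isomps_of_BGrbs}, i.e.\ a trivialisation of $(\CG, \nabla^\CG)^{\otimes n}$.

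For the second case, with $(E', \alpha') \colon \CI_\rho \to (\CG, \nabla^\CG)$, I would run the identical determinant and descent argument in the reversed direction to obtain a flat isomorphism $\CI_{\rho'} \to (\CG, \nabla^\CG)^{\otimes n}$ for a suitable $\rho'$, and then invert it; the inverse of a flat isomorphism is again flat, as the trace condition is stable under inversion. Alternatively, one may pass to the transpose $(E',\alpha')^\sft \colon (\CG, \nabla^\CG)^* \to \CI_{-\rho}$, apply the first case to $(\CG, \nabla^\CG)^*$, and dualise. I expect the main obstacle to be precisely the flatness: a generic isomorphism in $\BGrb^\nabla(M)$ need not preserve curvature, so Proposition~\ref{st:determinants_of_morphisms_of_BGrbs} by itself does not deliver a trivialisation. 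The substantive content is the descent of the curving, which hinges on the parallelism of $\alpha$, together with the routine verification that the canonical isomorphisms of Proposition~\ref{st:determinants_of_morphisms_of_BGrbs}(1) are themselves flat.
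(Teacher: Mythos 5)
Your proposal is correct and takes essentially the same route as the paper's own proof: form the determinant, pass through the isomorphism of Proposition~\ref{st:determinants_of_morphisms_of_BGrbs}(1), and exploit the fact that the same underlying data defines a morphism into $\CI_{\rho'}$ for a re-chosen curving, namely the descent of $\tr \curv(\nabla^E) + n\, \zeta_Y^* B$ along $\zeta_M$, so that the trace condition~\eqref{eq:trace_condition} holds. The only (cosmetic) differences are that the paper adjusts the curving on the composite isomorphism $(F,\beta)$ rather than on $\det(E,\alpha)$ first, and leaves both the descent computation and the second case implicit, whereas you spell them out.
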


\begin{proof}
Consider the case where $(E, \nabla^E, \alpha, Z, \zeta) \colon (\CG, \nabla^\CG) \to \CI_\rho$ and let $n = \rank(E)$.
As in the proof of Proposition~\ref{st:determinants_of_morphisms_of_BGrbs}, we obtain a 1-isomorphism $\det(E,\alpha) \colon ( (L, \nabla^L)^{\otimes n}, \mu \otimes \ldots \otimes \mu, n B, Y, \pi \big) \to \CI_{\rho}$.
Precomposing with the isomorphism from part (1) of Proposition~\ref{st:determinants_of_morphisms_of_BGrbs} thus yields an isomorphism $(F, \nabla^F, \beta, X, \chi) \colon (\CG,\nabla^\CG)^{\otimes n} \to \CI_\rho$, whose explicit form we do not need here.
While this is an isomorphism in $\BGrb^\nabla(M)$, it is not necessarily flat, for there is no relation between $\curv(\nabla^F)$ and $\rho$.
Let
\begin{equation}
	B_{\CG^{\otimes n}} = \sum_{i = 0}^{n-1}\, d_i^* B \quad \in \Omega^2(Y^{[n]}, \iu\, \FR)
\end{equation}
be the curving of $(\CG, \nabla^\CG)^{\otimes n}$.
Then, defining $\eta \in \Omega^2(M,\iu\, \FR)$ to be the unique 2-form such that $\chi_M^* \eta = \tr( \curv(\nabla^F) + \chi^*B_{\CG^{\otimes n}} \cdot 1_F)$, we see that $(F, \beta) \colon (\CG, \nabla^\CG)^{\otimes n} \to \CI_\eta$ is a flat isomorphism of bundle gerbes, i.e. a trivialisation of $(\CG,\nabla^\CG)^{\otimes n}$.
\end{proof}

Proposition~\ref{st:determinants_of_morphisms_of_BGrbs} and Corollary~\ref{st:determinants_of_sections_and_dual_sections} show that the theory of bundle gerbes with connections is much stricter than for example that of line bundle with connections, where the existence of generic morphisms between line bundles does not have implications of this kind.
We will comment on possible ways to weaken or circumvent this restriction in Section~\ref{sect:ways_around_the_torsion_constraint}.

\begin{remark}
Proposition~\ref{st:determinants_of_morphisms_of_BGrbs} and Corollary~\ref{st:determinants_of_sections_and_dual_sections} have analogues in the case without connection:
If there exists a morphism $(E,\alpha)$ between two bundle gerbes without connection, then the determinant induces an isomorphism in $\BGrb(M)$ between their $\rank(E)$-th tensor powers.
\end{remark}

\section{Deligne cohomology and higher categories}
\label{sect:Deligne_coho_and_higher_cats}

The main goal of this section is to describe the relation between the geometric theory of bundle gerbes and the sheaf and homotopy theoretic description of gerbes using simplicial abelian groups of local data for gerbes.
The Deligne complex is seen to interpolate between the two points of view, and we make this interpolation very explicit.
The main result is Theorem~\ref{st:Deligne_2-skeleton_and_Bgrbs}, which is a refinement of~\cite[Proposition 2.6.1]{Waldorf--Thesis}.
In order to prove the result, we have to make use of our results from Appendix~\ref{app:BGrbs_with_mutual_surjective_submersions} in several places.

The results in this section are interesting mostly from the conceptual point of view.
Apart from Definition~\ref{def:Deligne_complex_and_cohomology} and Theorem~\ref{st:Classification_of_BGrbs_by_Deligne_coho}, findings in this section do not make appearances in other parts of this thesis.
Hence, readers unfamiliar with the language of simplicial sets may well take note of Definition~\ref{def:Deligne_complex_and_cohomology} and Theorem~\ref{st:Classification_of_BGrbs_by_Deligne_coho} and proceed to Section~\ref{sect:Additive_structures_on_morphisms_in_BGrb}.

Denote by $\HLBdl_\rmuni(M)$ the symmetric monoidal groupoid of hermitean line bundles on $M$ and unitary isomorphisms.
There exists an isomorphism of abelian groups
\begin{equation}
	\pi_0 \big( \HLBdl_\rmuni(M), \otimes\, \big) \cong \rmH^1 \big( M,\sfU(1) \big) \cong \rmH^2(M,\RZ)\,.
\end{equation}
This classification can be refined to include connections:
let $(U_a)_{a \in \Lambda}$ be a good open covering of $M$, i.e. such that all possible finite intersections of the $U_a$ are diffeomorphic to $\FR^{\dim(M)}$.
Good open coverings exist on any manifold, as they can be constructed from geodesic balls with respect to a Riemannian metric, and every manifold admits a Riemannian metric due to a partition of unity argument.

Recall the definition of the descent category and the ascent functor with respect to a covering and a (relative) sheaf of categories from Section~\ref{sect:Coverings_and_sheaves_of_cats}.
We write $\CU = \bigsqcup_{a \in \Lambda} U_a$ and note that the canonical map $\pi \colon \CU \to M$, $(x,a) \mapsto x$, defines a surjective submersion.
Any $(L, \nabla^L) \in \HLBdl^\nabla(M)$ induces descent data $(\pi^*L, \alpha) \eqqcolon \Asc_\pi(L,\nabla^L) \in \Desc(\HLBdl^\nabla, \pi)$ with bundle $\pi^*L_{|(x,a)} = L_{|x}$ and $\alpha_{|((x,a), (x,b))} = 1_{L_{|x}}$.
As $U_a \cong \FR^{\dim(M)}$ for every $a \in \Lambda$, we can find, for every $a \in \Lambda$, parallel unitary isomorphisms $\psi_a \colon L_{|U_a} \to I_{A_a}$ for some $A_a \in \Omega^1(U_a, \iu\, \FR)$.
Any such family of isomorphisms gives rise to a descent isomorphism
\begin{equation}
	\Asc_\pi(L, \nabla^L) \cong \big( \{I_{A_a}\},\, \psi_b \circ \psi_a^{-1} \big) \eqqcolon (A_a, g_{ab})\,,
\end{equation}
where $\{I_{A_a}\} \to \CU$ is the hermitean line bundle defined by $\{I_{A_a}\}_{|U_a} = I_{A_a}$.
By construction, the $g_{ab}$ are isomorphisms of line bundles with connections, so that we have $A_b = A_a + g_{ab}^*\,\mu_{\sfU(1)}$ with $\mu_{\sfG}$ denoting the Maurer-Cartan form of a Lie group $\sfG$.
The ascent functor $\Asc_\pi$ is an equivalence of categories.
Therefore, adjoining the isomorphisms $\psi_a$ to the image $\Asc_\pi\phi$ of a morphism of line bundles, we obtain a bijection
\begin{equation}
\label{eq:local_morphs_of_HLBdls}
\begin{aligned}
	&\HLBdl_{\rmuni}^\nabla(M) \big( (L, \nabla^L), (L', \nabla^{L'}) \big)
	\\
	&\quad \cong \big\{ (\phi_a \colon U_a \to \sfU(1))_{a \in \Lambda}\, \big| \, \phi_b\, g_{ab} = g'_{ab}\, \phi_a,\, A'_a = A_a + \phi_a^*\,\mu_{\sfU(1)} \ \forall\, a,b \in \Lambda \big\}\,,
\end{aligned}
\end{equation}
with $(\{I_{A'_a}\}, g'_{ab})$ a chosen representative of $\Asc_\pi(L', \nabla^{L'})$.
We thus define a groupoid $\scD^1(\CU)$ with objects being collections $(A_a, g_{ab})_{a,b \in \Lambda}$ as above and morphisms $(A_a, g_{ab}) \to (A'_a, g'_{ab})$ given by collections $(\phi_a \colon U_a \to \sfU(1))_{a \in \Lambda}$ such that $\phi_b\, g_{ab} = g'_{ab}\, \phi_a$ for all $a,b \in \Lambda$.

It follows that there exists an equivalence of symmetric monoidal groupoids
\begin{equation}
\label{eq:groupoid_of_Deligne_descent_data_for_HLBdl^nabla}
	\big( \HLBdl^\nabla_\rmuni(M), \otimes\, \big) \cong \big( \scD^1(\CU), \otimes\, \big)\,,
\end{equation}
where $(A_a, g_{ab}) \otimes (A'_a, g'_{ab}) = (A_a + A'_a, g_{ab}\, g'_{ab})$ on objects and $(\phi_a) \otimes (\phi'_a) = (\phi_a\, \phi'_a)$ on morphisms.
This induces an isomorphism of abelian groups
\begin{equation}
	\pi_0 \big( \HLBdl^\nabla_\rmuni(M), \otimes\, \big) \cong \pi_0 \big( \scD^1(\CU), \otimes\, \big) \eqqcolon \rmH^1(M,\scD^\diamond_1)\,.
\end{equation}

This is part of a much more general scheme.
To a good open covering $\pi \colon \CU \to M$ we can associate a double chain complex $\check{\scD}^n_{\diamond,\diamond}(\CU)$ which reads as
\begin{equation}
\label{eq:Cech_Deligne_double_complex}
\begin{tikzcd}[row sep=1cm]
	& 0 \ar[d] & 0 \ar[d] & & 0 \ar[d] & 0 \ar[d] &
	\\
	0 \ar[r] & \widetilde{\Omega}^0(\CU) \ar[r, "{\widetilde{\dd}}"] \ar[d, "{\check{\delta}}"] & \widetilde{\Omega}^1(\CU) \ar[r, "{\widetilde{\dd}}"] \ar[d, "{\check{\delta}}"] & \ \ldots\ \ar[r, "{\widetilde{\dd}}"] & \widetilde{\Omega}^{n-1}(\CU) \ar[r, "{\widetilde{\dd}}"] \ar[d, "{\check{\delta}}"] & \widetilde{\Omega}^n(\CU) \ar[r] \ar[d, "{\check{\delta}}"] & 0
	\\
	0 \ar[r] & \widetilde{\Omega}^0(\CU^{[2]}) \ar[r, "{\widetilde{\dd}}"] \ar[d, "{\check{\delta}}"] & \widetilde{\Omega}^1(\CU^{[2]}) \ar[r, "{\widetilde{\dd}}"] \ar[d, "{\check{\delta}}"] & \ \ldots\ \ar[r, "{\widetilde{\dd}}"] & \widetilde{\Omega}^{n-1}(\CU^{[2]}) \ar[r, "{\widetilde{\dd}}"] \ar[d, "{\check{\delta}}"] & \widetilde{\Omega}^n(\CU^{[2]}) \ar[r] \ar[d, "{\check{\delta}}"] & 0
	\\
	0 \ar[r] & \widetilde{\Omega}^0(\CU^{[3]}) \ar[r, "{\widetilde{\dd}}"] \ar[d, "{\check{\delta}}"] & \widetilde{\Omega}^1(\CU^{[3]}) \ar[r, "{\widetilde{\dd}}"] \ar[d, "{\check{\delta}}"] & \ \ldots\ \ar[r, "{\widetilde{\dd}}"] & \widetilde{\Omega}^{n-1}(\CU^{[3]}) \ar[r, "{\widetilde{\dd}}"] \ar[d, "{\check{\delta}}"] & \widetilde{\Omega}^n(\CU^{[3]}) \ar[r] \ar[d, "{\check{\delta}}"] & 0
	\\
	& \vdots & \vdots & &\vdots & \vdots &
\end{tikzcd}
\end{equation}
Here we have set $\widetilde{\Omega}^k \coloneqq \Omega^k$ for $k > 0$ and $\widetilde{\Omega}^0 \coloneqq \Mfd(-,\sfU(1))$, with $\check{\delta}$ denoting the \v{C}ech differential and $\widetilde{\dd}^k \coloneqq \dd^k$ denoting the de Rham differential for $k > 0$, while $\widetilde{\dd}^0 \coloneqq \dd \log \colon \widetilde{\Omega}^0 \to \Omega^1$.
We assign to $\widetilde{\Omega}^n(\CU)$ the degree $(0,0)$.
To this double chain complex there is an associated total chain complex $(\Tot_\diamond( \check{\scD}^n_{\diamond, \diamond}), \rmD)$.
In the case of $n = 1$, this has $\widetilde{\Omega}^0(\CU)$ in degree $i=1$, i.e. at degree one it consists of families $(\phi_a)_{a \in \Lambda}$ of smooth $\sfU(1)$-valued maps.
In degree $i = 0$ it has pairs $(A_a, g_{ab})_{a,b \in \Lambda}$ where $A_a \in \Omega^1(U_a, \iu\, \FR)$ and $g_{ab} \in \Mfd(U_a, \sfU(1))$.
However, the above discussion shows that only those pairs which satisfy $\rmD_{-1}(A_a, g_{ab}) = (0, A_b - A_a - g_{ab}^*\,\mu_{\sfU(1)}, \check{\delta}(g_{ab})) = 0$ provide descent data for a hermitean line bundle with connection on $M$.
Thus, we observe that
\begin{equation}
	\ker(\rmD_{-1}) = \obj \big( \scD^1(\CU) \big)\,.
\end{equation}
Consequently, we consider the truncation $\trunc_{\geq 0} \Tot_\diamond( \check{\scD}^1_{\diamond, \diamond})$ at degree $i = 0$, i.e. we replace $\Tot_0( \check{\scD}^1_{\diamond, \diamond})$ by $\ker(\rmD_{-1})$.
In this way, we enforce that we obtain a chain complex of abelian groups whose degree-zero group represents descent data for hermitean line bundles with connection on $M$:
\begin{equation}
	\big( \trunc_{\geq 0} \Tot_\diamond( \check{\scD}^1_{\diamond, \diamond}) \big)_0
	= \obj \big( \scD^1(\CU) \big)
	\subset \Desc (\HLBdl^\nabla, \pi)\,.
\end{equation}

For a category $\scC$ we write $s\scC \coloneqq \scCat(\Delta^\opp, \scC)$ for the category of simplicial objects in $\scC$.
We obtain a simplicial abelian group by applying the Dold-Kan correspondence to the complex $\trunc_{\geq 0} \Tot_\diamond( \check{\scD}^n_{\diamond, \diamond})$~\cite{GJ--Simplicial_homotopy_theory}.
This correspondence provides an equivalence of categories
\begin{equation}
	NC_\diamond :\, \rms\scA \longleftrightarrow \mathrm{Ch}_{\geq 0}(\scA)\, : \Gamma_\bullet\,,
\end{equation}
where $\scA$ is an abelian category, $\rms\scA = \scA^{\Delta^{\opp}}$ is the category of simplicial objects in $\scA$, and $\mathrm{Ch}_{\geq 0}(\scA)$ is the category of non-negatively graded chain complexes in $\scA$.
For $\scA = \Ab$, the category of abelian groups, even more is true:
There exists a forgetful functor $U \colon \rms\Ab \to \rms\Set$, and saying that $f \in \rms\Ab(A_\bullet,B_\bullet)$ is a weak equivalence or fibration if $Uf$ is so in $\rms\Set$ as well as defining cofibrations via the left-lifting property with respect to the trivial fibrations thus obtained induces a simplicial model structure on $\rms\Ab$.
The equivalence $(NC_\diamond, \Gamma_\bullet)$ is even a Quillen equivalence~\cite{SS--Equivalences_of_mon_MoCats} and induces isomorphisms $\pi_n(UA_\bullet) = \pi_n(A) \cong H_n(NC_\diamond (A_\bullet))$ and $H_n(C_\diamond) \cong \pi_n(\Gamma_\bullet (C_\diamond))$~\cite{GJ--Simplicial_homotopy_theory}.

We can, thus, define a simplicial abelian group
\begin{equation}
	(\check{\scB}^n_\nabla \sfU(1))_\bullet (\CU) \coloneqq
	\Gamma_\bullet \big( \trunc_{\geq 0} \Tot (\check{\scD}^n_{\diamond, \diamond}(\CU)) \big) \quad \in \rms\Ab\,.
\end{equation}
Note that because the chain complex $\Tot_\diamond (\check{\scD}^n_{\diamond, \diamond}(\CU))$ is concentrated in degrees $n \geq i \geq 0$, all $i$-simplicies in $(\check{\scB}^n_\nabla \sfU(1))_\bullet (\CU)$ with $i \geq n$ are degenerate.
In technical terms, the simplicial set $U \big( (\check{\scB}^n_\nabla \sfU(1))_\bullet (\CU) \big) \in \rms\Set$ is an $n$-skeleton.

This has an interpretation in terms of higher categories:
First, the underlying simplicial set $UA_\bullet \in \rms\Set$ of any simplicial abelian group $A_\bullet \in \rms\Ab$ is a \emph{Kan complex} (a fibrant simplicial set)~\cite{GSW--BGrbs_for_orientifolds}, meaning it has fillers for all horns $\Lambda^k_l \to UA_\bullet$.
In particular, $UA_\bullet$ has fillers for all inner horns, and thus forms a \emph{weak Kan complex}, or \emph{quasicategory}.
Quasicategories are considered models for $(\infty,1)$-categories (see e.g.~\cite{Lurie--HTT,Riehl--Categorical_homotopy_theory,Boardman-Vorgt--Homotopy_invariant_structures_on_Top}), while Kan complexes consequently model $(\infty, 0)$-categories, or $\infty$-groupoids.%
\footnote{This is motivated, for example, by the Quillen equivalence induced by the geometric realisation and singular complex functors, together with the homotopy hypothesis.}
A simplicial abelian group can, therefore, be interpreted as a \emph{strict symmetric monoidal $\infty$-groupoid}.
Motivated by the case of line bundles, $(\check{\scB}^n_\nabla \sfU(1))_\bullet (\CU)$ is often used as a model for the $\infty$-groupoid of \emph{$\sfU(1)$ $n$-bundles on $M$} (see, for instance, \cite{FSS--Higher_stacky_perspective}).
As we had observed from the Dold-Kan construction of $(\check{\scB}^n_\nabla \sfU(1))_\bullet (\CU)$, all $k$-simplices for $k > n$ in $(\check{\scB}^n_\nabla \sfU(1))_\bullet (\CU)$ are degenerate.
This means that all $k$-morphisms in the $\infty$-groupoid $(\check{\scB}^n_\nabla \sfU(1))_\bullet (\CU)$ for $k > n$ are trivial, reflecting the expected fact that these $n$-bundles should form a symmetric monoidal $n$-groupoid.

The reader might have objected to assigning the degree $(n,0)$ to $\widetilde{\Omega}^0(\CU)$ in~\eqref{eq:Cech_Deligne_double_complex} and viewing the double complex as a double chain complex.
It might seem more natural to view~\eqref{eq:Cech_Deligne_double_complex} as a double cochain complex $\check{\scD}_n^{\diamond, \diamond}$ with $\widetilde{\Omega}^0(\CU)$ in degree $(0,0)$.
This is the \emph{\v{C}ech-Deligne double cochain complex} (as used for example in~\cite{Waldorf--Thesis,Waldorf--More_morphisms}).
Write $\NN_0$ for the set of non-negative integers.

\begin{definition}[Deligne complex, Deligne cohomology]
\label{def:Deligne_complex_and_cohomology}
For $n \in \NN_0$, the cochain complex of sheaves of abelian groups
\begin{equation}
	\scD^\diamond_n(M) \coloneqq
	\begin{tikzcd}
		0 \ar[r] & \widetilde{\Omega}^0_M \ar[r, "\widetilde{\dd}^0"] & \widetilde{\Omega}^1_M \ar[r, "\widetilde{\dd}^1"] & \ldots \ar[r, "\widetilde{\dd}^{n-1}"] & \widetilde{\Omega}^n_M \ar[r] & 0
	\end{tikzcd}
\end{equation}
with sheaves given by $\widetilde{\Omega}^k_M(U) = \widetilde{\Omega}^k(U)$ for $U \subset M$ open, and $\widetilde{\Omega}^0_M$ in degree $0$, is called the \emph{Deligne complex of degree $n$} of $M$.
Its $k$-th hypercohomology group is denoted $\rmH^k(M,\scD^\diamond_n) \coloneqq \rmH^k(\scD^\diamond_n(M))$ and is called the \emph{$k$-th Deligne cohomology group of $M$ in degree $n$}.
The group
\begin{equation}
	\hat{\rmH}^n(M,\RZ) \coloneqq \rmH^n(M,\scD^\diamond_n)
\end{equation}
is also called the \emph{$n$-th differential cohomology group of $M$}.
\end{definition}

For $\CU \to M$ a good open covering, the hypercohomology groups of the \v{C}ech-Deligne double complex $\check{\rmH}^k(\CU, \check{\scD}_n^{\diamond, \diamond}) = \rmH^k( \Tot^\diamond (\check{\scD}_n(\CU)))$ are isomorphic to the Deligne cohomology groups~\cite{Waldorf--Transgression_II}.
Under the relabelling, which relates $\check{\scD}^n_{\diamond, \diamond}(\CU)$ and $\check{\scD}^{\diamond, \diamond}_n(\CU)$, together with the isomorphism from homology groups to homotopy groups from the Dold-Kan correspondence, we have a chain of isomorphisms of abelian groups for $k = 0, \ldots, n$,
\begin{equation}
\begin{aligned}
	\rmH^k(M, \scD_n^\diamond) &\cong \rmH^k \big( \Tot^\diamond (\check{\scD}_n(\CU)) \big)
	\\
	&\cong \rmH_{n-k} \big( \trunc_{\geq 0} \Tot_\diamond(\check{\scD}^n(\CU)) \big)
	\\
	&\cong \pi_{n-k} \big( (\check{\scB}^n_\nabla \sfU(1))_\bullet (\CU) \big)\,.
\end{aligned}
\end{equation}
Thus, we have derived the following statement:

\begin{proposition}
\label{st:homotopy_groups_of_Deligne_infty_groups}
The homotopy groups of $(\check{\scB}^n_\nabla \sfU(1))_\bullet (\CU)$ are given by the Deligne cohomology groups of $M$ as
\begin{equation}
	\pi_k \big( (\check{\scB}^n_\nabla \sfU(1))_\bullet (\CU) \big) \cong
	\begin{cases}
		\rmH^{n-k} \big( M,\scD^\diamond_n \big)\,, & 0 \leq k \leq n\,,
		\\
		0\,, & k > n\,.
	\end{cases}
\end{equation}
\end{proposition}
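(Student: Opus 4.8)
The plan is to read the result off the chain of isomorphisms already assembled in the preceding discussion; the genuine work is to supply the two boundary regimes ($k > n$ and the behaviour at degree zero) and to justify the relabelling between the chain and cochain conventions. First I would invoke the Dold--Kan correspondence in the form already recorded, namely $\pi_k(\Gamma_\bullet(C_\diamond)) \cong \rmH_k(C_\diamond)$ for a non-negatively graded chain complex of abelian groups. Applying this to the defining equation $(\check{\scB}^n_\nabla \sfU(1))_\bullet(\CU) = \Gamma_\bullet(\trunc_{\geq 0}\Tot_\diamond(\check{\scD}^n_{\diamond,\diamond}(\CU)))$ gives
\begin{equation}
	\pi_k \big( (\check{\scB}^n_\nabla \sfU(1))_\bullet(\CU) \big)
	\cong \rmH_k \big( \trunc_{\geq 0}\Tot_\diamond(\check{\scD}^n_{\diamond,\diamond}(\CU)) \big)
\end{equation}
for all $k \geq 0$, reducing the statement to computing the homology of the truncated total complex.

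Next I would treat the two regimes separately. For $k > n$, since $\widetilde{\Omega}^j$ is nonzero only for $0 \leq j \leq n$ and the \v{C}ech direction contributes non-positively to the total degree, the complex $\trunc_{\geq 0}\Tot_\diamond(\check{\scD}^n_{\diamond,\diamond}(\CU))$ is concentrated in degrees $0 \leq i \leq n$; hence its homology vanishes for $k > n$, which is the second case. For $0 \leq k \leq n$ I would note that $\trunc_{\geq 0}$ is a \emph{good} truncation, so it leaves $\rmH_i$ unchanged for every $i \geq 0$: in positive degrees the complex is untouched, and in degree zero one replaces $\Tot_0$ by $\ker(\rmD_{-1})$, but $\ker(\rmD_{-1})/\mathrm{im}(\rmD_0)$ is already $\rmH_0$ of the untruncated complex. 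I would then apply the degree-reversing relabelling that carries the double \emph{chain} complex $\check{\scD}^n_{\diamond,\diamond}(\CU)$ to the \v{C}ech--Deligne double \emph{cochain} complex $\check{\scD}_n^{\diamond,\diamond}(\CU)$, under which a chain differential of degree $-1$ becomes a cochain differential of degree $+1$ and total degree $i$ is sent to $n-i$; this yields $\rmH_k(\Tot_\diamond) \cong \rmH^{n-k}(\Tot^\diamond(\check{\scD}_n(\CU)))$. Finally, because $\CU \to M$ is a good open covering, the \v{C}ech--Deligne hypercohomology $\rmH^{n-k}(\Tot^\diamond(\check{\scD}_n(\CU)))$ agrees with the Deligne cohomology $\rmH^{n-k}(M,\scD_n^\diamond)$, completing the first case.

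The computational heart is therefore already present in the text, and the remaining points require only care rather than ingenuity: confirming that the good truncation genuinely preserves $\rmH_0$ and not merely the higher homology, and tracking the shift $i \mapsto n-i$ consistently so that the \v{C}ech degree and the de Rham degree are combined with the correct signs. The one input that is not purely formal is the identification of \v{C}ech--Deligne hypercohomology with Deligne cohomology, which rests on the acyclicity of the good covering $\CU$ (all finite intersections being diffeomorphic to $\FR^{\dim(M)}$); this is the step where the hypothesis on $\CU$ is actually used, and I would cite it rather than reprove it. I expect this identification, together with the bookkeeping of the two truncation conventions, to be the main potential source of error, while the passage from homology to homotopy groups is automatic once Dold--Kan is in place.
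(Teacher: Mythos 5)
Your proposal is correct and follows essentially the same route as the paper: Dold--Kan to pass from homotopy to homology of the truncated total complex, the chain/cochain relabelling $i \mapsto n-i$ to identify this with \v{C}ech--Deligne hypercohomology, the good-covering hypothesis (cited, not reproved) to identify that with Deligne cohomology, and concentration of the complex in degrees $0 \leq i \leq n$ for the vanishing when $k > n$ (which is exactly the paper's $n$-skeleton argument in homological form). The only addition is your explicit check that the truncation $\trunc_{\geq 0}$, being the good truncation, preserves $\rmH_0$ --- a point the paper leaves implicit but which is correct as you state it.
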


\begin{proof}
For $0 \leq k \leq n$ the arguments have just been given, and the statement for $k > n$ follows immediately from the fact that $(\check{\scB}^n_\nabla \sfU(1))_\bullet (\CU)$ is an $n$-skeleton.
\end{proof}

\begin{remark}
The chain complex in positive degrees $\trunc_{\geq 0} \Tot_\diamond( \check{\scD}^n_{\diamond, \diamond})$ is, in fact, the homotopy limit of the cosimplicial chain complex whose $k$-th level is given by the $k$-th row in~\eqref{eq:Cech_Deligne_double_complex}~\cite{Dugger--Primer_on_HoColims,BSS--Hocolims_and_global_observables}.
This, together with the fact that the \v{C}ech-Deligne double complex computes the same homology groups, shows that $\scD^\diamond_n$, shifted in degree by $n$, provides a homotopy sheaf of positively graded chain complexes on the Grothendieck site $(\Mfd, \tau_\open)$ (cf.~\cite{Dugger--Sheaves_and_homotopy_theory}).
\qen
\end{remark}

A vertex of $(\check{\scB}^1_\nabla \sfU(1))_\bullet (\CU)$ is a family $(A_a, g_{ab})_{a,b \in \Lambda}$ which satisfies $0 = \rmD_{-1}(A_a, g_{ab}) = A_b - A_a - g_{ab}^*\, \mu_{\sfU(1)}$.
A 1-simplex in $(\check{\scB}^1_\nabla \sfU(1))_\bullet (\CU)$ is a family $((A_a, g_{ab}), (A'_a, g'_{ab}), h_a)_{a,b \in \Lambda}$ where $(A'_a, g'_{ab}) - (A_a, g_{ab}) = \rmD_0(h_a)$, i.e. $g'_{ab}\, g_{ab}^{-1} = h_b\, h_a^{-1}$ and $A'_a = A_a + h_a^*\, \mu_{\sfU(1)}$ for all $a,b \in \Lambda$.
In short, the symmetric monoidal groupoid extracted from the simplicial abelian group $(\check{\scB}^1_\nabla \sfU(1))_\bullet (\CU)$ is precisely $\scD^1(\CU)$ (as described after~\eqref{eq:local_morphs_of_HLBdls}).

Let us investigate $(\check{\scB}^2_\nabla \sfU(1))_\bullet (\CU)$ in more detail, for this should be related to the symmetric monoidal 2-groupoid $\BGrb^\nabla_{\rmflat \sim}(M)$.
A vertex in $(\check{\scB}^2_\nabla \sfU(1))_\bullet (\CU)$ is a triple $(B_a, A_{ab}, g_{abc})_{a,b,c \in \Lambda} \in \Tot_0(\check{\scD}^2_{\diamond, \diamond}(\CU))$, which is closed, i.e. satisfies
\begin{equation}
\begin{aligned}
	g_{bcd}\, g_{abd} &= g_{acd}\, g_{abc}\,,
	\\
	A_{bc} - A_{ac} + A_{ab} &= g_{abc}^*\,\mu_{\sfU(1)}\,,
	\\
	B_b - B_a &= \dd A_{ab}
\end{aligned}
\end{equation}
for all $a,b,c,d \in \Lambda$.
Here, $g_{abc} \colon U_{abc} \to \sfU(1)$, $A_{ab} \in \Omega^1(U_{ab}, \iu\, \FR)$, and $B_a \in \Omega^2(U_a, \iu, \FR)$.
That is, every $(B_a, A_{ab}, g_{abc})_{a,b,c \in \Lambda} \in (\trunc_{\geq 0} \Tot(\check{\scD}^2_{\diamond, \diamond}(\CU)))_0 = \ker(\rmD_{-1})$ defines a bundle gerbe with connection on $M$ given by $(\CG, \nabla^\CG) = \big( \{I_{A_{ab}}\},\, m,\, \{B_a\},\, \CU,\, \pi \big)$.
The bundle gerbe multiplication is given by the multiplication $m$ on $\FC$.
A 1-simplex $S \in (\check{\scB}^2_\nabla \sfU(1))_1 (\CU)$ with $d_1 S = (B_a, A_{ab}, g_{abc})_{a,b,c \in \Lambda}$ and $d_0 S = (B'_a, A'_{ab}, g'_{abc})_{a,b,c \in \Lambda}$ is a tuple
\begin{equation}
	S = \big( (B_a, A_{ab}, g_{abc}),\, (B'_a, A'_{ab}, g'_{abc}),\, (\eta_a, h_{ab}) \big)_{a,b,c \in \Lambda}
\end{equation}
with $\eta_a \in \Omega^1(U_a, \iu\, \FR)$ and $h_{ab} \colon U_{ab} \to \sfU(1)$ such that
\begin{equation}
\begin{aligned}
	(B'_a, A'_{ab}, g'_{abc}) &= (B_a, A_{ab}, g_{abc}) + \rmD (\eta_a, h_{ab})\,, \quad \text{i.e. such that}
	\\
	h_{bc}\, h_{ac}^{-1}\, h_{ab} &= g'_{abc}\, g_{abc}^{-1}\,,
	\\
	\eta_b - \eta_a + h_{ab}^*\,\mu_{\sfU(1)} &= A'_{ab} - A_{ab}\,,
	\\
	\dd \eta_a &= B'_a - B_a
\end{aligned}
\end{equation}
for all $a,b,c \in \Lambda$.
We can assemble these data into the pair
\begin{equation}
	(\{I_{\eta_a}\}, h_{ab}) \in \HVBdl^\nabla \big( (\CG, \nabla^{\CG}), (\CG', \nabla^{\CG'}) \big)
\end{equation}
of a rank-one $(\CG{-}\CG')$-twisted hermitean vector bundle with connection.
These are defined in Appendix~\ref{app:BGrbs_with_mutual_surjective_submersions}, where it is also shown that they form a category equivalent to $\BGrb^\nabla(M)((\CG, \nabla^{\CG}), (\CG', \nabla^{\CG'}))$, provided that $(\CG, \nabla^{\CG})$ and $(\CG', \nabla^{\CG'})$ are defined over the same surjective submersion onto $M$.
This, however, is the case here, as all bundle gerbes obtained in the above manner from vertices of $(\check{\scB}^2_\nabla \sfU(1))_\bullet (\CU)$ are defined with respect to $\pi \colon \CU \to M$.

\begin{remark}
It is important to note that objects of $\BGrb^\nabla(M)((\CG, \nabla^{\CG}), (\CG', \nabla^{\CG'}))$ are, in the sense of Definition~\ref{def:1-morphisms_of_BGrbs}, defined over $\CU^{[2]}$, i.e. feature hermitean vector bundles with connection $E_{ab} \to U_{ab}$ rather than $E_a \to U_a$, thus making the step from morphisms to twisted vector bundles necessary.
This appears to have been overlooked in the literature so far.
The equivalence of twisted vector bundles and morphisms of bundle gerbes over the same surjective submersion has, to our knowledge, not been spelled out before.
\qen
\end{remark}

Finally, a 2-simplex in $(\check{\scB}^2_\nabla \sfU(1))_\bullet (\CU)$ is explicitly given by
\begin{equation}
	\psi = \big( (\CG, \nabla^{\CG}), (\CG', \nabla^{\CG'}), (\CG'', \nabla^{\CG''}), S, S', S'', \psi_a \big)\,,
\end{equation}
where $\psi_a \colon U_a \to \sfU(1)$.
Here we have abbreviated vertices as $(\CG, \nabla^{\CG}) = (B_a, A_{ab}, g_{abc})$, $(\CG', \nabla^{\CG'}) = (B'_a, A'_{ab}, g'_{abc})$ and $(\CG'', \nabla^{\CG''}) = (B''_a, A''_{ab}, g''_{abc})$, and  have written 1-simplices as $S = ((\CG, \nabla^\CG), (\CG', \nabla^{\CG'}), (\eta_a, h_{ab}))$, $S' = ((\CG', \nabla^{\CG'}), (\CG'', \nabla^{\CG''}), (\eta'_a, h'_{ab}))$, as well as $S'' = ((\CG, \nabla^\CG), (\CG'', \nabla^{\CG''}), (\eta''_a, h''_{ab}))$.
We then have
\begin{equation}
\begin{aligned}
	S' - S'' + S &= \check{\delta} (\psi_a)\,, \quad \text{or, explicitly,}
	\\
	h'_{ab}\, h_{ab}\, \psi_a &= \psi_b\, h''_{ab}\,,
	\\
	\eta'_a + \eta_a &= \eta''_a + \psi_a^*\mu_{\sfU(1)}
\end{aligned}
\end{equation}
for all indices $a,b \in \Lambda$.
In other words,
\begin{equation}
	\psi \in \HVBdl^\nabla \big( (\CG, \nabla^\CG), (\CG'', \nabla^{\CG''}) \big) \big( S' \circ S,\, S'' \big)\,,
\end{equation}
if we define (suppressing connections in the notation)
\begin{equation}
\begin{aligned}
	&(-) \circ (-)\, \colon \HVBdl^\nabla(\CG', \CG'') \times \HVBdl^\nabla(\CG, \CG') \to \HVBdl^\nabla(\CG, \CG'')\,,
	\\
	&\big( (E,\alpha), (F, \beta) \big) \mapsto \big( E \otimes F,\, (\alpha \otimes 1) \circ (1 \otimes \beta) \big)\,,
	\\
	&(\phi, \phi') \mapsto \phi \otimes \phi'\,,
\end{aligned}
\end{equation}
for $(\CG, \nabla^\CG)$, $(\CG', \nabla^{\CG'})$ and $(\CG'', \nabla^{\CG''})$ all defined over the same surjective submersion to $M$.
Note that this differs slightly from the (modified) tensor product of bundle gerbes over the same surjective submersion given by (cf.~\eqref{eq:simplified_tensor_product_over_same_sursub})
\begin{equation}
\begin{aligned}
	&\widetilde{\otimes} \colon \HVBdl^\nabla(\CG_0, \CG_0') \times \HVBdl^\nabla(\CG_1, \CG_1') \to \HVBdl^\nabla(\CG_0\, \widetilde{\otimes}\, \CG_1, \CG'_0\, \widetilde{\otimes}\, \CG_1')\,,
	\\
	&\big( (E,\alpha), (F, \beta) \big) \mapsto \big(E \otimes F,\, \alpha \otimes \beta \big)\,,
	\\
	&(\phi, \phi') \mapsto \phi \otimes \phi'\,.
\end{aligned}
\end{equation}
Here $(\CG_0, \nabla^{\CG_0})\, \widetilde{\otimes}\, (\CG_1, \nabla^{\CG_1})$ is the bundle gerbe given by the tensor product of the line bundles of $\CG_0$ and $\CG_1$ over $\CU^{[2]}$ as well as the tensor product of their bundle gerbe multiplications over $\CU^{[3]}$ (compare also Proposition~\ref{st:determinants_of_morphisms_of_BGrbs}).
Thus we obtain the following statement:

\begin{proposition}
\label{st:Deligne_2-skeleton_and_Bgrbs_over_CU}
For $\CU \to M$ a good open covering, the strict symmetric monoidal 2-category $\scD^2(\CU)$ defined by the 2-skeleton $(\check{\scB}^2_\nabla \sfU(1))_\bullet (\CU)$ is equivalent to the symmetric monoidal 2-groupoid $(\BGrb^\nabla_{\rmflat \sim}(\pi \colon \CU \to M), \widetilde{\otimes}\,)$ of bundle gerbes defined over the fixed surjective submersion $\pi \colon \CU \to M$ whose 1-morphisms are twisted hermitean line bundles with connections over $\CU$ and whose 2-morphisms are morphisms thereof (cf. Appendix~\ref{app:BGrbs_with_mutual_surjective_submersions}).
\end{proposition}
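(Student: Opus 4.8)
The plan is to construct an explicit strict symmetric monoidal $2$-functor $F \colon \scD^2(\CU) \to \BGrb^\nabla_{\rmflat \sim}(\pi \colon \CU \to M)$ and to show it is an equivalence, using the dictionary between simplices of $(\check{\scB}^2_\nabla \sfU(1))_\bullet (\CU)$ and bundle gerbe data assembled in the discussion preceding the statement. On objects, $F$ sends a vertex $(B_a, A_{ab}, g_{abc})_{a,b,c \in \Lambda}$ to the bundle gerbe with connection $(\{I_{A_{ab}}\}, m, \{B_a\}, \CU, \pi)$; on $1$-simplices it uses the identification of a $1$-simplex with the twisted rank-one hermitean line bundle with connection $(\{I_{\eta_a}\}, h_{ab})$; and on $2$-simplices it uses the identification of the function data $\psi_a$ with a morphism $S' \circ S \to S''$ of twisted line bundles. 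At the level of cells, $F$ is thus the identity under the bijections already exhibited.

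Next I would verify that $F$ is a well-defined strict $2$-functor compatible with the monoidal structures. The crucial point is that the composition in $\scD^2(\CU)$ is the one induced on the $2$-skeleton by the simplicial abelian group structure (equivalently, by filling horns), which is strict; on the bundle gerbe side, Theorem~\ref{st:2-categories_of_bundle_gerbes} together with the conventions of Appendix~\ref{app:monoidal_structures_and_strictness} makes composition of $1$-morphisms and the product $\widetilde{\otimes}$ strictly associative and unital. I would then check that composition of $1$-simplices matches the composition of twisted line bundles $((E,\alpha),(F,\beta)) \mapsto (E \otimes F, (\alpha \otimes 1) \circ (1 \otimes \beta))$ displayed above, and that vertical and horizontal composition of $2$-simplices correspond to the respective compositions of morphisms of twisted bundles, all of which reduce to additions of local $1$-forms and products of $\sfU(1)$-valued functions. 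Compatibility with the symmetric monoidal structures follows because $\widetilde{\otimes}$ is, on local data, precisely the addition $(B_a, A_{ab}, g_{abc}) + (B'_a, A'_{ab}, g'_{abc})$ that furnishes the symmetric monoidal structure on the simplicial abelian group.

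Finally I would prove that $F$ is an equivalence. Full faithfulness on $1$- and $2$-morphisms is exactly the equivalence, established in Appendix~\ref{app:BGrbs_with_mutual_surjective_submersions}, between the category of twisted hermitean line bundles with connection over $\CU$ and the morphism category $\BGrb^\nabla(M)((\CG, \nabla^{\CG}), (\CG', \nabla^{\CG'}))$ for bundle gerbes defined over the common submersion $\pi$; since we work in the flat underlying $2$-groupoid, Proposition~\ref{st:classification_of_isomps_of_BGrbs} guarantees that the relevant $1$-morphisms are of rank one, i.e. genuine twisted \emph{line} bundles. For essential surjectivity on objects I would invoke the hypothesis that $\CU \to M$ is a good open covering: every object of $\BGrb^\nabla_{\rmflat \sim}(\pi)$ has its defining line bundle living over $\CU^{[2]} = \bigsqcup_{a,b} U_{ab}$, a disjoint union of contractible patches, so it admits a unitary parallel trivialisation $L_{|U_{ab}} \cong I_{A_{ab}}$; the multiplication then becomes a collection $g_{abc} \colon U_{abc} \to \sfU(1)$ and the curving a collection $B_a$, which together form a vertex mapping to an object isomorphic to the given one.

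The main obstacle will be the bookkeeping in the second step: matching the abstractly defined simplicial compositions (horn fillers in the Kan complex, equivalently the group operation on the normalised Moore complex) with the geometric compositions of twisted line bundles and their morphisms, and verifying that the trivial associators, unitors and symmetry on the strict side agree, under the dictionary, with those forced on the bundle gerbe side by the conventions of Appendix~\ref{app:monoidal_structures_and_strictness}. Once the cell-level bijections and the compatibility of these compositions are in place, the equivalence follows formally.
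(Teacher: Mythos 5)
Your overall strategy is sound and, in substance, it reconstructs what the paper outsources: the paper's own proof consists of citing Waldorf's Proposition~2.6.1 and refining it by Proposition~\ref{st:morphism_categories_and_twisted_HVBdls_for_same_sur_sub}, while you build the comparison $2$-functor explicitly from the cell-level dictionary. However, your full-faithfulness step has a genuine gap. The hom-categories of $\scD^2(\CU)$ contain only those twisted line bundles whose underlying bundles are the trivialised ones $\{I_{\eta_a}\}$ --- this is all a Deligne $1$-simplex $(\eta_a, h_{ab})$ can produce --- whereas the hom-categories of the target $\BGrb^\nabla_{\rmflat \sim}(\pi \colon \CU \to M)$ consist of \emph{all} flat twisted hermitean line bundles $(\{(E_a, \nabla^{E_a})\}, \alpha_{ab})$ with arbitrary line bundles $E_a \to U_a$. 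Proposition~\ref{st:morphism_categories_and_twisted_HVBdls_for_same_sur_sub}, which is what you invoke, relates twisted vector bundles to the \emph{standard} morphism categories of bundle gerbes over a common submersion (bundles over $\CU^{[2]}$ versus bundles over $\CU$); it says nothing about replacing an arbitrary twisted line bundle by a trivialised one, so it cannot deliver the essential surjectivity of the hom-functors of your $F$. What is needed there is a second application of the good-covering hypothesis: each $U_a$ (not merely each $U_{ab}$) is contractible, so every $(E_a, \nabla^{E_a}) \in \HLBdl^\nabla(U_a)$ admits a unitary parallel isomorphism to some $I_{\eta_a}$, and conjugating the $\alpha_{ab}$ by these isomorphisms yields an isomorphic twisted line bundle of the trivialised form lying in the image of $F$. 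You apply exactly this argument at the level of objects (trivialising $L$ over the $U_{ab}$) but never at the $1$-morphism level, where it is equally indispensable; this is precisely the point the paper isolates in the footnote to the proof of Theorem~\ref{st:Deligne_2-skeleton_and_Bgrbs}.

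Once that step is inserted, the rest of your outline goes through: faithfulness and fullness at the $2$-morphism level reduce to the direct check that $2$-simplices are exactly the unitary parallel morphisms between trivialised twisted line bundles (the cocycle relation $S' - S'' + S = \check{\delta}(\psi_a)$ unwinds to the compatibility condition for a morphism $S' \circ S \to S''$, as in the discussion preceding the statement); essential surjectivity on objects is your contractibility argument over the $U_{ab}$ together with the promotion of the resulting naive trivialisation to an invertible twisted line bundle (cf.\ Section~\ref{sect:Ex:Local_BGrbs} and Proposition~\ref{st:From_naive_isomorphisms_to_1-isomorphisms--same_sursub}); and monoidal compatibility is the observation that $\widetilde{\otimes}$ on local data is addition of Deligne cochains. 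Note also that, with the target defined as in the statement (twisted line bundles as $1$-morphisms), Proposition~\ref{st:morphism_categories_and_twisted_HVBdls_for_same_sur_sub} is not actually what full faithfulness rests on; its role in the paper is to bridge Waldorf's formulation, in which $1$-morphisms live over $\CU^{[2]}$, with twisted bundles over $\CU$, and it re-enters when this proposition is upgraded to Theorem~\ref{st:Deligne_2-skeleton_and_Bgrbs}.
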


\begin{proof}
This is essentially~\cite[Proposition 2.6.1]{Waldorf--Thesis}, but the proof has to be refined by the observation that for bundle gerbes over a fixed surjective submersion the category of their twisted vector bundles is equivalent to the full category of morphisms.
This is Proposition~\ref{st:morphism_categories_and_twisted_HVBdls_for_same_sur_sub}.
\end{proof}

We now quote the well-known classification result for bundle gerbes with and without connection on $M$:

\begin{theorem}[{\cite[Theorem 4.1]{Murray-Stevenson:Bgrbs--stable_isomps_and_local_theory}}]
\label{st:Classification_of_BGrbs_by_Deligne_coho}
There are isomorphisms of abelian groups
\begin{equation}
\begin{aligned}
	\pi_0 \big( \BGrb^\nabla_{\rmflat \sim}(M), \otimes\, \big) &\cong \rmH^2(M,\scD^\bullet_2(M)) \cong \hat{\rmH}^3(M, \RZ)\,, \quad \text{and}
	\\
	\pi_0 \big( \BGrb_\sim(M), \otimes\, \big) &\cong \rmH^3(M, \RZ)\,.
\end{aligned}
\end{equation}
\end{theorem}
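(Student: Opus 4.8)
The plan is to reduce the classification to a local model over a good open covering, where Propositions~\ref{st:Deligne_2-skeleton_and_Bgrbs_over_CU} and~\ref{st:homotopy_groups_of_Deligne_infty_groups} do the essential work, and then to match the resulting Deligne cohomology group with differential and integral cohomology by standard arguments. Fix a good open covering $\pi \colon \CU \to M$. By Proposition~\ref{st:Deligne_2-skeleton_and_Bgrbs_over_CU}, the symmetric monoidal 2-groupoid $(\BGrb^\nabla_{\rmflat\sim}(\pi \colon \CU \to M), \widetilde{\otimes}\,)$ of flat bundle gerbes defined over this \emph{fixed} surjective submersion is equivalent to the 2-groupoid $\scD^2(\CU)$ underlying the simplicial abelian group $(\check{\scB}^2_\nabla \sfU(1))_\bullet(\CU)$. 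Equivalences of 2-groupoids induce isomorphisms on $\pi_0$, and the monoidal product $\widetilde{\otimes}$ corresponds to the abelian group structure of the simplicial abelian group, so Proposition~\ref{st:homotopy_groups_of_Deligne_infty_groups} with $n = 2$ and $k = 0$ gives an isomorphism of abelian groups
\begin{equation}
	\pi_0 \big( \BGrb^\nabla_{\rmflat\sim}(\pi \colon \CU \to M), \widetilde{\otimes}\, \big)
	\cong \pi_0 \big( (\check{\scB}^2_\nabla \sfU(1))_\bullet(\CU) \big)
	\cong \rmH^2(M, \scD^\diamond_2)\,.
\end{equation}

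The main, and only genuinely non-formal, step is to show that the inclusion of gerbes over the fixed good cover into the full 2-groupoid $\BGrb^\nabla_{\rmflat\sim}(M)$ over \emph{arbitrary} surjective submersions is a bijection on $\pi_0$. For essential surjectivity, given $(\CG, \nabla^\CG)$ over some $\pi' \colon Y \to M$ I would choose smooth local sections $s_a \colon U_a \to Y$ of $\pi'$ over the patches of $\CU$, assemble them into a morphism of surjective submersions $s \colon \CU \to Y$ covering the identity on $M$, and pull the defining line bundle and multiplication back along $s^{[2]}$ and $s^{[3]}$ to obtain a flat gerbe over $\CU$; the canonical 1-morphism attached to such a refinement of surjective submersions (\cite{Murray-Stevenson:Bgrbs--stable_isomps_and_local_theory,Waldorf--Thesis}) has rank-one underlying bundle and is therefore weakly invertible by Proposition~\ref{st:classification_of_isomps_of_BGrbs}, so the refined gerbe is stably isomorphic to $(\CG, \nabla^\CG)$. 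For injectivity, a stable isomorphism between two gerbes over $\CU$ is a priori supported on some further common refinement; the sheaf property of $\HVBdl^\nabla$ together with the good-cover hypothesis lets me descend its rank-one bundle back to $\CU^{[2]}$, so the two gerbes are already isomorphic inside the $\CU$-restricted 2-groupoid. Both arguments are the 2-categorical refinement invariance of \cite{Murray-Stevenson:Bgrbs--stable_isomps_and_local_theory,Waldorf--Thesis}, and they simultaneously show that the right-hand side is independent of the chosen good cover, consistently with Proposition~\ref{st:homotopy_groups_of_Deligne_infty_groups}.

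Combining the two steps yields $\pi_0(\BGrb^\nabla_{\rmflat\sim}(M), \otimes) \cong \rmH^2(M, \scD^\diamond_2)$, and the remaining isomorphism $\rmH^2(M, \scD^\diamond_2) \cong \hat{\rmH}^3(M, \RZ)$ is the standard identification of the degree-two hypercohomology of the smooth Deligne complex with differential cohomology (Definition~\ref{def:Deligne_complex_and_cohomology}, cf.~\cite{Brylinski--Loop_spaces_and_geometric_quantisation}). For the statement without connection I would run the identical reduction to a good cover, but now the relevant local model is built from the \v{C}ech complex of the sheaf $\widetilde{\Omega}^0 = \Mfd(-, \sfU(1))$ alone, all de Rham data having been discarded; its $\pi_0$ is the \v{C}ech cohomology $\check{\rmH}^2(M, \underline{\sfU(1)})$, and the exponential sequence $0 \to \underline{\RZ} \to \underline{\FR} \to \underline{\sfU(1)} \to 0$ with $\underline{\FR}$ a fine sheaf supplies the Bockstein isomorphism $\check{\rmH}^2(M, \underline{\sfU(1)}) \cong \rmH^3(M, \RZ)$, which is the Dixmier--Douady class.

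The step I expect to be the main obstacle is the refinement invariance: verifying that replacing an arbitrary surjective submersion by a good open cover neither loses nor identifies stable isomorphism classes. This cannot be read off from the local Deligne model, which only sees gerbes over one fixed cover, and it is exactly where the genuinely 2-categorical input is needed — weak invertibility of refinement 1-morphisms via Proposition~\ref{st:classification_of_isomps_of_BGrbs} for surjectivity, and descent for $\HVBdl^\nabla$ for injectivity.
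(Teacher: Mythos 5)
Your proposal is correct in substance, but it is worth pointing out that it does not follow the paper's route at all, because the paper does not prove this statement: Theorem~\ref{st:Classification_of_BGrbs_by_Deligne_coho} is quoted from~\cite{Murray-Stevenson:Bgrbs--stable_isomps_and_local_theory}, and the only original content the paper adds is the remark that the classification, proven there for stable isomorphisms (the 2-groupoid built from $\BGrb^\nabla_\FP$), extends to the full flat 2-groupoid $\BGrb^\nabla_{\rmflat\sim}(M)$ by Corollary~\ref{st:BGrb_FP_hookrightarrow_BGrb_is_equivalence}. What you propose is, in effect, a reconstruction of the Murray--Stevenson argument inside this paper's framework, and --- importantly --- it inverts the paper's logical order: the paper \emph{uses} Theorem~\ref{st:Classification_of_BGrbs_by_Deligne_coho} to prove Theorem~\ref{st:Deligne_2-skeleton_and_Bgrbs} (essential surjectivity of the fixed-cover inclusion), whereas you prove the $\pi_0$-bijectivity of that inclusion directly, by refinement along local sections and descent, and then \emph{derive} the classification from it together with Propositions~\ref{st:Deligne_2-skeleton_and_Bgrbs_over_CU} and~\ref{st:homotopy_groups_of_Deligne_infty_groups}. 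This is not circular, since those two propositions, Proposition~\ref{st:classification_of_isomps_of_BGrbs}, and the descent results of Appendix~\ref{app:special_morphisms_and_descent} are all established independently of the classification; your route therefore buys a self-contained proof (and would let the paper prove Theorem~\ref{st:Deligne_2-skeleton_and_Bgrbs} without citing external classification results), at the cost of redoing work the paper deliberately outsources.

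Two points need tightening. First, you fix the good cover $\CU$ at the outset and then assert local sections $s_a \colon U_a \to Y$ of an arbitrary surjective submersion exist over its patches; contractibility of $U_a$ does not guarantee this, since $Y \to M$ need not be a fibre bundle. The standard fix is to choose sections locally (possible by the submersion normal form), pass to a good cover refining the cover of section domains, and then invoke the cover-independence you already note; the refinement-invariance argument should be stated for a cover chosen \emph{after} the gerbe. Second, your identification of the group structures is slightly too quick: the full 2-groupoid carries the tensor product $\otimes$ over fibre products of submersions, while the fixed-cover model carries $\widetilde{\otimes}$, and matching them on $\pi_0$ requires the explicit isomorphism of Proposition~\ref{st:determinants_of_morphisms_of_BGrbs}~(1). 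Neither issue affects the viability of the approach.
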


\begin{definition}[Deligne class, Dixmier-Douady class]
Let $(\CG, \nabla^\CG) \in \BGrb^\nabla(M)$.
The class of $(\CG, \nabla^\CG)$ in $\hat{\rmH}^3(M, \RZ)$ is called the \emph{Deligne class} $\rmD(\CG, \nabla^\CG)$ of $(\CG,\nabla^\CG)$, whereas the class $\DD(\CG)$ of $\CG$ in $\rmH^3(M,\RZ)$ is called the \emph{Dixmier-Douady class} of $\CG$.
\end{definition}

Note that Theorem~\ref{st:Classification_of_BGrbs_by_Deligne_coho} is proven in~\cite{Murray-Stevenson:Bgrbs--stable_isomps_and_local_theory} for stable isomorphisms of bundle gerbes only, i.e. isomorphisms in $\BGrb^\nabla_\FP(M)((\CG_0, \nabla^{\CG_0}), (\CG_1, \nabla^{\CG_1}))$ (see Appendix~\ref{app:special_morphisms_and_descent}), but because of Corollary~\ref{st:BGrb_FP_hookrightarrow_BGrb_is_equivalence} the same result holds for the full 2-groupoid $\BGrb^\nabla_{\rmflat \sim}(M)$, as has been shown already in~\cite{Waldorf--Thesis}.

We can now deduce a stronger version of Proposition~\ref{st:Deligne_2-skeleton_and_Bgrbs_over_CU}.

\begin{theorem}
\label{st:Deligne_2-skeleton_and_Bgrbs}
For $\pi \colon \CU \to M$ a good open covering, the strict symmetric monoidal 2-category $\scD^2(\CU)$ defined by 2-skeletal simplicial abelian group $(\check{\scB}^2_\nabla \sfU(1))_\bullet (\CU)$ is equivalent to the symmetric monoidal 2-groupoid $(\BGrb^\nabla_{\rmflat \sim}(M), \otimes\, )$.
\end{theorem}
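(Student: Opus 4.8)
The plan is to factor the claimed equivalence through Proposition~\ref{st:Deligne_2-skeleton_and_Bgrbs_over_CU}, which already identifies $\scD^2(\CU)$ with the 2-groupoid $(\BGrb^\nabla_{\rmflat \sim}(\pi \colon \CU \to M), \widetilde{\otimes}\,)$ of bundle gerbes defined over the single fixed surjective submersion $\pi \colon \CU \to M$. It therefore suffices to show that the inclusion
\begin{equation*}
	\iota \colon \big( \BGrb^\nabla_{\rmflat \sim}(\pi \colon \CU \to M), \widetilde{\otimes}\, \big) \hookrightarrow \big( \BGrb^\nabla_{\rmflat \sim}(M), \otimes\, \big)
\end{equation*}
is an equivalence of symmetric monoidal 2-groupoids, and then to compose the two equivalences. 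To establish that $\iota$ is an equivalence of 2-groupoids I would verify the standard criterion: $\iota$ is essentially surjective on objects, and it is a local equivalence, i.e.\ for every pair of objects over $\CU$ the induced functor on Hom-categories is an equivalence of groupoids.

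The local equivalence is the content of Proposition~\ref{st:morphism_categories_and_twisted_HVBdls_for_same_sur_sub} (already used in the proof of Proposition~\ref{st:Deligne_2-skeleton_and_Bgrbs_over_CU}): for two bundle gerbes over the same surjective submersion, the category of twisted hermitean vector bundles is equivalent to the full morphism category computed in $\BGrb^\nabla(M)$. Restricting to the flat setting, every 1-morphism in $\BGrb^\nabla_{\rmflat\sim}(M)$ is invertible and hence, by Proposition~\ref{st:classification_of_isomps_of_BGrbs}, has a rank-one underlying bundle; the twisted bundles arising as 1-morphisms in $\BGrb^\nabla_{\rmflat\sim}(\pi \colon \CU \to M)$ are precisely the rank-one ones. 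Thus Proposition~\ref{st:morphism_categories_and_twisted_HVBdls_for_same_sur_sub} supplies exactly the essential surjectivity on 1-morphisms and the full faithfulness on 2-morphisms required for $\iota$ to be a local equivalence.

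For essential surjectivity on objects I would invoke the classification in Theorem~\ref{st:Classification_of_BGrbs_by_Deligne_coho}: objects of $\BGrb^\nabla_{\rmflat\sim}(M)$ are determined up to equivalence by their Deligne class in $\hat{\rmH}^3(M,\RZ) \cong \rmH^2(M,\scD^\diamond_2)$. The discussion preceding Proposition~\ref{st:Deligne_2-skeleton_and_Bgrbs_over_CU} shows that every vertex of $(\check{\scB}^2_\nabla \sfU(1))_\bullet(\CU)$ gives a bundle gerbe over $\pi \colon \CU \to M$, and that these realise all of $\pi_0(\scD^2(\CU)) \cong \rmH^2\big(\Tot^\diamond(\check{\scD}_2(\CU))\big)$. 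Since $\CU$ is a good open covering, the \v{C}ech--Deligne hypercohomology of $\CU$ computes $\rmH^2(M,\scD^\diamond_2)$, so every Deligne class admits a representative bundle gerbe over $\CU$; combined with the classification this yields essential surjectivity of $\iota$ on objects. Finally, to upgrade $\iota$ to a \emph{symmetric monoidal} equivalence I would use the isomorphism~\eqref{eq:simplified_tensor_product_over_same_sursub} of Proposition~\ref{st:determinants_of_morphisms_of_BGrbs}(1): it furnishes a natural isomorphism between $\widetilde{\otimes}$ and the restriction of the genuine tensor product $\otimes$ (which a priori lands over $\CU^{[2]}$ rather than $\CU$), and one checks that this comparison is compatible with associators, unitors and braiding.

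The main obstacle is the local-equivalence step, namely the identification of the full morphism groupoid of $\BGrb^\nabla_{\rmflat\sim}(M)$ between two gerbes over $\CU$ with the groupoid of twisted line bundles over $\CU$ (Proposition~\ref{st:morphism_categories_and_twisted_HVBdls_for_same_sur_sub}). The delicate point, flagged in the remark preceding the theorem, is that a priori 1-morphisms are defined over \emph{arbitrary} surjective submersions $Z \to \CU^{[2]}$ rather than over $\CU^{[2]}$ itself, so one must genuinely descend and refine the morphism data back to the fixed covering; this is precisely where the good-cover hypothesis and the descent properties of $\HVBdl^\nabla$ enter in an essential way.
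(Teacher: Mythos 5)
Your proposal is correct and follows essentially the same route as the paper's proof: both factor through Proposition~\ref{st:Deligne_2-skeleton_and_Bgrbs_over_CU}, establish essential surjectivity of the inclusion $\BGrb^\nabla_{\rmflat \sim}(\pi \colon \CU \to M) \hookrightarrow \BGrb^\nabla_{\rmflat \sim}(M)$ via Theorem~\ref{st:Classification_of_BGrbs_by_Deligne_coho} together with the fact that a good cover computes Deligne cohomology, and obtain full faithfulness from Proposition~\ref{st:Deligne_2-skeleton_and_Bgrbs_over_CU} combined with Proposition~\ref{st:morphism_categories_and_twisted_HVBdls_for_same_sur_sub}. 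The only real difference is one of explicitness: you spell out, via the isomorphism~\eqref{eq:simplified_tensor_product_over_same_sursub}, the comparison between $\widetilde{\otimes}$ and the restriction of the genuine tensor product $\otimes$, a point the paper's proof compresses into the remark that composition of morphisms is compatible with tensor products on both sides.
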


\begin{proof}
For $\CU \to M$ a good open covering, the total cohomology of the \v{C}ech-Deligne complex $\check{\rmH}^2(\CU, \check{\scD}^{\diamond, \diamond}_2(\CU))$ is isomorphic to Deligne cohomology.
Choosing a representative \v{C}ech cocycle for a Deligne class, there exists a bundle gerbe with connection $(\CG, \nabla^\CG)$ on $M$ defined over $\CU \to M$ which represents this Deligne class under the isomorphism in Theorem~\ref{st:Classification_of_BGrbs_by_Deligne_coho}.
Explicitly, this is the bundle gerbe constructed above from a closed triple $(B_a, A_{ab}, g_{abc})_{a,b,c \in \Lambda}$ and the isomorphism in Proposition~\ref{st:homotopy_groups_of_Deligne_infty_groups}.
By Theorem~\ref{st:Classification_of_BGrbs_by_Deligne_coho}, every bundle gerbe with connection on $M$ is isomorphic in $\BGrb^\nabla_\rmflat$ to one represented by such a vertex in $(\check{\scB}^2_\nabla \sfU(1))_\bullet (\CU)$.
This shows that the inclusion $\BGrb^\nabla_{\rmflat \sim}(\pi \colon \CU \to M) \hookrightarrow \BGrb^\nabla_{\rmflat \sim}(M)$ is essentially surjective in the 2-categorical sense (see~\cite{Leinster--Basic_bicategories}).
It is, furthermore, fully faithful as follows from Proposition~\ref{st:Deligne_2-skeleton_and_Bgrbs_over_CU} (together with Proposition~\ref{st:morphism_categories_and_twisted_HVBdls_for_same_sur_sub}), and, thus, is an equivalence of 2-categories.%
\footnote{Here we used that every twisted vector bundle $(E_a, \alpha_{ab})_{a,b \in \Lambda}$ is isomorphic to one of the form $(U_a {\times} \FC^{\rank(E)}, \beta_{ab})_{a,b \in \Lambda}$, since all $U_a$ are contractible.}
As composition of morphisms is compatible with tensor products in both source and target, this equivalence, moreover, respects the symmetric monoidal structures.
\end{proof}

Consequently, the simplicial group $(\check{\scB}^2_\nabla \sfU(1))_\bullet (\CU)$ contains essentially all information about bundle gerbes and their flat isomorphisms on $M$, analogously to how $(\check{\scB}^1_\nabla \sfU(1))_\bullet (\CU)$ describes hermitean line bundles with connection.
In particular, it captures all isomorphism classes not only of bundle gerbes, but also of the flat isomorphisms between these, and thus allows to study the homotopy properties of the 2-category of bundle gerbes in a local setting using small categories.
Such studies have been pursued, and are still being pushed further, in even greater generality in, for instance, \cite{FSS--Higher_stacky_perspective,FRS--Higher_U1-gerbe_connections_in_geometric_prequant}.
Nevertheless, while the scheme outlined in this section provides enough technology to investigate these features of bundle gerbes, local descriptions are often rather cumbersome in geometric, explicit constructions, as it is usually hard to find good open coverings explicitly.
We will see examples of bundle gerbes in Section~\ref{sect:examples_of_BGrbs} which are difficult to describe explicitly in terms of local constructions based on good open coverings, or which admit a much more elegant and accessible description in terms of more general surjective submersions $Y \to M$.

We close this section with a known~\cite[Corollary 1.3.10]{Waldorf--Thesis}, but important proposition.

\begin{proposition}
\label{st:DD_triviality_implies_trivialisability}
Let $(\CG, \nabla^\CG) \in \BGrb^\nabla(M)$.
If there exists a 1-isomorphism $\CI \to \CG$ of bundle gerbes without connection, then there exists a flat 1-isomorphism $\CI_\rho \to (\CG, \nabla^\CG)$ for some $\rho \in \Omega^2(M, \iu\, \FR)$.
\end{proposition}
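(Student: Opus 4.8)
The plan is to take the given connection-free isomorphism, equip its underlying line bundle with a connection adapted to $\nabla^\CG$, and then fix the curving $\rho$ so that the trace condition \eqref{eq:trace_condition} holds automatically. First I would invoke Proposition~\ref{st:classification_of_isomps_of_BGrbs}: the hypothesised $1$-isomorphism $\CI\to\CG$ of bundle gerbes without connection has a rank-one underlying bundle, so it is a tuple $(E,\alpha,Z,\zeta)$ with $E\in\HLBdl(Z)$ and $\alpha$ a unitary isomorphism over $Z^{[2]}$. Here $\zeta=(\zeta_M,\zeta_Y)\colon Z\to M\times_M Y\cong Y$, and since the trivial bundle of $\CI$ pulls back trivially we have $\alpha\colon d_0^*E\arisom d_1^*E\otimes\zeta^{[2]*}L$. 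The hermitean metric of $E$ is kept fixed throughout, so $\alpha$ stays unitary; only a connection has to be produced.

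The main step is to choose a hermitean connection $\nabla^E$ on $E$ making $\alpha$ parallel. Starting from an arbitrary hermitean connection $\nabla^E_0$, the failure of $\alpha$ to be parallel is measured by a purely imaginary one-form $a\in\Omega^1(Z^{[2]},\iu\,\FR)$, and replacing $\nabla^E_0$ by $\nabla^E_0+b$ with $b\in\Omega^1(Z,\iu\,\FR)$ alters this discrepancy by the \v{C}ech coboundary $d_0^*b-d_1^*b$. Thus it suffices to solve the \v{C}ech equation $a=d_0^*b-d_1^*b$ (the sign is routine). Differentiating the compatibility identity \eqref{eq:1-morphisms_compatibility_with_BGrb_multiplications} over $Z^{[3]}$ and using that the gerbe multiplications $m$ on $\CI$ and $\mu$ on $\CG$ are parallel shows $\check\delta a=0$, and the exactness of the fundamental complex $0\to\Omega^1(M)\to\Omega^1(Z)\xrightarrow{\check\delta}\Omega^1(Z^{[2]})\to\cdots$ of the surjective submersion $Z\to M$ then produces the required $b$. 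I expect this descent/acyclicity argument, together with the cocycle check on $a$, to be the main obstacle; everything else is bookkeeping. With $\nabla^E$ so chosen, $(E,\nabla^E,\alpha,Z,\zeta)$ is a $1$-morphism $\CI_{\rho}\to(\CG,\nabla^\CG)$ for any $\rho$, and it is a $1$-isomorphism again by Proposition~\ref{st:classification_of_isomps_of_BGrbs}.

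It remains to pin down $\rho$ so that the morphism is flat. For a rank-one morphism the trace condition \eqref{eq:trace_condition} reduces to $\curv(\nabla^E)=\zeta_Y^*B-\zeta_M^*\rho$, so I would set $\rho$ to be the descent of $\zeta_Y^*B-\curv(\nabla^E)$ along $\zeta_M$. That this two-form descends is the crux: since $\alpha$ is now a parallel unitary isomorphism, curvatures add across $\alpha$, giving $\check\delta\,\curv(\nabla^E)=\zeta^{[2]*}\curv(\nabla^L)$, while the connection axiom $\curv(\nabla^L)=d_0^*B-d_1^*B$ gives $\check\delta(\zeta_Y^*B)=\zeta^{[2]*}\curv(\nabla^L)$; hence $\check\delta(\zeta_Y^*B-\curv(\nabla^E))=0$, and the fundamental complex yields a unique $\rho\in\Omega^2(M,\iu\,\FR)$ with $\zeta_M^*\rho=\zeta_Y^*B-\curv(\nabla^E)$. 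With this $\rho$ the trace condition holds identically, so $(E,\nabla^E,\alpha,Z,\zeta)$ is a flat $1$-isomorphism $\CI_\rho\to(\CG,\nabla^\CG)$, whose inverse and structure $2$-isomorphisms are built from canonical unitary parallel maps and are therefore flat as well. As an alternative I would record the purely cohomological route: the hypothesis forces $\DD(\CG)=0$, so by Theorem~\ref{st:Classification_of_BGrbs_by_Deligne_coho} the Deligne class of $(\CG,\nabla^\CG)$ lies in the image of $\Omega^2(M,\iu\,\FR)$ inside $\hat{\rmH}^3(M,\RZ)$, i.e.\ equals that of some $\CI_\rho$, which again produces a flat isomorphism; I prefer the geometric construction above since it is self-contained.
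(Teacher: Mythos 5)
Your proof is correct, but it takes a genuinely different route from the paper. The paper argues entirely in local, cohomological terms: it picks a good open covering, represents the Deligne class of $(\CG,\nabla^\CG)$ by a \v{C}ech--Deligne cocycle $(g_{abc},A_{ab},B_a)$, uses the vanishing of $[g_{abc}]$ to subtract a coboundary $\rmD(h_{ab},C_a)$ in two stages (invoking exactness of the \v{C}ech resolutions of the sheaves of forms), lands on a cocycle of the form $(1,0,\rho_{|U_a})$, and then converts the resulting equality of Deligne classes back into a flat $1$-isomorphism via Theorem~\ref{st:Classification_of_BGrbs_by_Deligne_coho}. You instead work globally with the \emph{given} isomorphism $(E,\alpha,Z,\zeta)$: you make $\alpha$ parallel by correcting an arbitrary hermitean connection on the rank-one bundle $E$, which amounts to solving $a=\check\delta b$ in the fundamental complex of $\zeta_M\colon Z\to M$ (your cocycle check $\check\delta a=0$ via parallelity of $\mu$ and additivity of discrepancy $1$-forms for line bundles is exactly right), and you then descend $\zeta_Y^*B-\curv(\nabla^E)$ to define $\rho$, so that the trace condition holds on the nose. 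What each approach buys: the paper's argument is shorter once the classification theorem is available, but it produces a flat isomorphism only abstractly (possibly unrelated to the one you started with) and needs good open covers; your argument is self-contained modulo exactness of Murray's fundamental complex, avoids good covers, works over the given surjective submersion, and upgrades the actual hypothesised isomorphism rather than replacing it. Note finally that your closing ``purely cohomological'' alternative is essentially the paper's proof, except that you quote the structure of the kernel of $\hat{\rmH}^3(M,\RZ)\to\rmH^3(M,\RZ)$ as known, whereas the paper's \v{C}ech manipulation is precisely the by-hand verification of that fact; as stated, that remark leans on an exact sequence the paper never records, so your preference for the geometric construction is well placed.
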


\begin{proof}
Choose a good open covering $(U_a)_{a \in \Lambda}$ of $M$.
Let the Deligne class of $(\CG, \nabla^\CG)$ be represented by the \v{C}ech-Deligne cocycle $[g_{abc}, A_{ab}, B_a] \in \rmH^2(\trunc_{\geq 0} \Tot(\check{\scD}^2_{\diamond, \diamond}(\CU)))$.
The Dixmier-Douady class of $\CG$ has \v{C}ech representative $[g_{abc}] \in \check{\rmH}^2(\CU, \sfU(1))$.
This being trivial means that there exists a family $(h_{ab})_{a,b \in \Lambda}$, with $h_{ab} \colon U_{ab} \to \sfU(1)$, such that $g_{abc} = h_{ab}\, h_{bc}^{-1}\, h_{ac}$ for all $a,b,c \in \Lambda$.
We have
\begin{equation}
	[g_{abc}, A_{ab}, B_a] - \rmD(h_{ab},0)
	= [1,\, A_{ab} - h_{ab}^*\, \mu_{\sfU(1)},\, B_a]\,.
\end{equation}
The fact that $\rmD$ applied to this still vanishes, together with the exactness of the \v{C}ech resolutions of sheaves of differential forms, implies that there exists a family $(C_a)_{a \in \Lambda}$, with $C_a \in \Omega^1(U_a, \iu\, \FR)$, such that
\begin{equation}
	A_{ab} - h_{ab}^*\, \mu_{\sfU(1)} = C_b - C_a\,.
\end{equation}
Thus,
\begin{equation}
	[g_{abc}, A_{ab}, B_a] - \rmD(h_{ab}, C_a)
	= [1,\, 0 ,\, B_a - \dd C_a]\,.
\end{equation}
Finally, by the same reasons as in the preceding step, we observe that there exists $\rho \in \Omega^2(M, \iu\, \FR)$ such that $B_a - \dd C_a = \rho_{|U_a}$.
Combining this with Theorem~\ref{st:Classification_of_BGrbs_by_Deligne_coho} yields the assertion.
\end{proof}

\chapter[Categorical structures on morphisms of bundle gerbes]{Categorical structures\\on morphisms of bundle gerbes}
\label{ch:structures_on_morphisms_of_bgrbs}

\section{Additive structures on morphisms}
\label{sect:Additive_structures_on_morphisms_in_BGrb}

In this section we show that the 2-categories of bundle gerbes introduced in Section~\ref{sect:The_2-category_of_BGrbs} carry a much richer structure than outlined there.
What follows is mostly motivated from the theory of vector bundles with connection together with the crucial role they play in the definition of morphisms of bundle gerbes (Definitions~\ref{def:1-morphisms_of_BGrbs} and~\ref{def:2-morphisms_of_BGrbs}).
A yet more direct relation in restricted cases is made precise in Proposition~\ref{st:morphism_categories_and_twisted_HVBdls_for_same_sur_sub}.
We start by investigating the 2-morphisms more closely.

\begin{proposition}
\label{st:additive_structure_on_2-morphisms_of_BGrbs}
For any two bundle gerbes with connection $(\CG_i, \nabla^{\CG_i})$ on $M$ and any pair of 1-morphisms $(E,\nabla^E, \alpha, Z, \zeta)$ and $(E' ,\nabla^{E'}, \alpha', Z', \zeta')$ from $(\CG_0, \nabla^{\CG_0})$ to $(\CG_1, \nabla^{\CG_1})$, the set of 2-morphisms $\BGrb^\nabla(M)((E,\alpha), (E', \alpha'))$ is an abelian group.
\end{proposition}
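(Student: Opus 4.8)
The plan is to reduce the statement to the elementary fact that morphisms of hermitean vector bundles form a vector space, by first putting every 2-morphism into a normal form. First I would invoke Proposition~\ref{st:2-morphisms_have_simple_representatives}: every 2-morphism $(E,\alpha)\to(E',\alpha')$ has a representative of the special shape $[Z\times_{Y_{01}}Z',\,1,\,\phi]$, where $\phi\in\HVBdl^\nabla(Z\times_{Y_{01}}Z')(\pr_Z^*E,\,\pr_{Z'}^*E')$ satisfies the compatibility condition~\eqref{eq:2-morphism_compatibility_with_alphas} with $W=Z\times_{Y_{01}}Z'$ and $\omega=1$. This representative is moreover unique: if two such triples are equivalent, the defining diagram forces the two comparison maps to coincide in a single surjective submersion $\chi\colon X\to Z\times_{Y_{01}}Z'$ with $\chi^*\phi=\chi^*\phi'$, and since pullback along a surjective submersion is injective on bundle morphisms (surjectivity of $\chi$ identifies every fibre of the base with some fibre upstairs), we conclude $\phi=\phi'$. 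Thus the set of 2-morphisms is in natural bijection with the set $S$ of morphisms $\phi$ obeying~\eqref{eq:2-morphism_compatibility_with_alphas}.

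Next I would record that $\HVBdl^\nabla(W)(\pr_Z^*E,\,\pr_{Z'}^*E')$ is a complex vector space, and in particular an abelian group: its elements are smooth sections of the bundle $\Hom(\pr_Z^*E,\,\pr_{Z'}^*E')$ — recall that morphisms in $\HVBdl^\nabla$ need not respect metric or connection — and these are added and scaled fibrewise, with neutral element the zero morphism.

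The key point is then that~\eqref{eq:2-morphism_compatibility_with_alphas} cuts out a linear subspace of this vector space. Every operation occurring in that diagram — the pullbacks $d_0^*$ and $d_1^*$, tensoring with the relevant identity morphisms, and pre- and post-composition with the fixed unitary isomorphisms $\pr_Z^{[2]*}\alpha$ and $\pr_{Z'}^{[2]*}\alpha'$ — is $\FC$-linear in $\phi$, and the zero morphism satisfies the condition trivially. Hence whenever $\phi,\phi'\in S$ we also have $\phi+\phi',\,-\phi,\,0\in S$, so $S$ is a linear subspace, a fortiori an abelian group. Transporting this structure along the bijection of the first step makes $\BGrb^\nabla(M)\big((E,\alpha),(E',\alpha')\big)$ an abelian group, with $[\cdots,\phi]+[\cdots,\phi']=[\cdots,\phi+\phi']$, neutral element $[\cdots,0]$, and inverses $-[\cdots,\phi]=[\cdots,-\phi]$; commutativity and associativity are inherited directly from the ambient vector space.

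I expect the only genuinely delicate ingredient to be the passage to the normal form, i.e. the existence and uniqueness of the representative over $Z\times_{Y_{01}}Z'$. If one preferred to avoid Proposition~\ref{st:2-morphisms_have_simple_representatives}, one could instead add general representatives $[W_1,\omega_1,\psi_1]$ and $[W_2,\omega_2,\psi_2]$ on the fibre product $W_1\times_{Z\times_{Y_{01}}Z'}W_2$ via $\pr_{W_1}^*\psi_1+\pr_{W_2}^*\psi_2$; there the main obstacle is checking independence of the chosen representatives, which again reduces to the injectivity and additivity of pullback along surjective submersions. Using the normal form sidesteps this bookkeeping, which is why I would prefer that route.
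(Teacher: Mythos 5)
Your proof is correct, but it is not the route the paper's own proof takes; instead it coincides with the alternative that the paper sketches in a remark immediately \emph{after} its proof. The paper's main argument works with arbitrary representatives $[W,\omega,\psi]$, $[W',\omega',\psi']$: it defines the sum over the fibre product $W'\times_{\hat Z}W$ (with $\hat Z = Z'\times_{Y_{01}}Z$) as $[\,W'\times_{\hat Z}W,\ \omega'\times_{\hat Z}\omega,\ \pr_{W'}^*\psi'+\pr_W^*\psi\,]$, and then verifies by hand that this is independent of the chosen representatives, that $[\hat Z,1_{\hat Z},0]$ is a unit, that symmetry follows from a swap diffeomorphism of the fibre product, and that associativity follows from associativity of addition of vector bundle morphisms together with the (weak) associativity of fibre products. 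You instead normalise every 2-morphism to its unique minimal representative via Proposition~\ref{st:2-morphisms_have_simple_representatives}, identify the 2-morphism set with the subset of $\HVBdl^\nabla(Z\times_{Y_{01}}Z')(\pr_Z^*E,\pr_{Z'}^*E')$ cut out by the ($\FC$-linear) compatibility condition~\eqref{eq:2-morphism_compatibility_with_alphas}, observe this is a linear subspace, and transport the structure. What each buys: your route eliminates all bookkeeping with the equivalence relation (well-definedness is automatic once uniqueness of the normal form is in hand) and in fact yields the stronger statement that the 2-morphism sets are complex vector spaces, consistent with the module structures recorded later in Corollary~\ref{st:module_structures_on_morphisms_in_BGrb}; the cost is that it leans on the appendix result Proposition~\ref{st:2-morphisms_have_simple_representatives} (which is proved independently, via descent, so there is no circularity). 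The paper's route is self-contained within the chapter and produces an explicit formula for the sum on \emph{arbitrary} representatives, which is convenient in subsequent computations where one does not want to first pass to normal form. Your uniqueness argument (surjectivity of $\chi$ forcing $\chi_W=\chi_{W'}$ and injectivity of pullback along a surjective submersion on bundle morphisms) is sound, though note it merely re-derives the uniqueness already asserted in Proposition~\ref{st:2-morphisms_have_simple_representatives}.
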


\begin{proof}
Let $[W, \omega, \psi]$ and $[W', \omega', \psi']$ be 2-morphisms from $(E,\alpha)$ to $(E',\alpha')$ in $\BGrb^\nabla(M)$.
We write $\hat{Z} = Z' {\times}_{Y_{01}} Z$ and set
\begin{equation}
\label{eq:sum_of_2-morphisms}
	[W', \omega', \psi'] + [W, \omega, \psi]
	\coloneqq \big[ W' {\times}_{\hat{Z}} W,\, \omega' {\times}_{\hat{Z}} \omega,\, \pr_{W'}^* \psi' + \pr_W^* \psi \big]\,.
\end{equation}
If $(\widetilde{W}, \widetilde{\omega}, \widetilde{\psi})$ and $(\widetilde{W}', \widetilde{\omega}', \widetilde{\psi}')$ are different representatives for $(W, \omega, \psi)$ and $(W', \omega', \psi')$, respectively, with the equivalences established by surjective submersions $\chi'_{\widetilde{W}'} \colon X' \to \widetilde{W}'$ and $\chi'_{W'} \colon X' \to W'$, as well as $\chi_{\widetilde{W}} \colon X \to \widetilde{W}$ and $\chi_W \colon X \to W$, respectively, then $(X' {\times}_{\hat{Z}} X,\, \chi'_{\widetilde{W}'} {\times}_{\hat{Z}} \chi_{\widetilde{W}}, \chi_{W'} {\times}_{\hat{Z}} \chi_W)$ establishes the equality of $[W', \omega', \psi'] + [W, \omega, \psi]$ and $[\widetilde{W}', \widetilde{\omega}', \widetilde{\psi}'] + [\widetilde{W}, \widetilde{\omega}, \widetilde{\psi}]$.
\\
The unit element with respect to this operation is $[\hat{Z}, 1_{\hat{Z}}, 0]$, using that $W {\times}_{\hat{Z}} \hat{Z} \cong W$.
\\
The symmetry of the operation~\eqref{eq:sum_of_2-morphisms} follows from considering the equivalence induced by $X = (W' {\times}_{\hat{Z}} W) {\times}_{\hat{Z}} (W' {\times}_{\hat{Z}} W)$, and the surjective submersion
given by the fibre product (or pullback) of the swap diffeomorphism $\sw \colon (W' {\times}_{\hat{Z}} W) \to (W {\times}_{\hat{Z}} W')$ and the identity on $(W' {\times}_{\hat{Z}} W)$.
\\
Associativity follows from the associativity of addition of morphisms of vector bundles together with the (weak) associativity of fibre products of surjective submersions.
\end{proof}

Alternatively, we could have used the unique representatives of the 2-morphisms (see Proposition~\ref{st:2-morphisms_have_simple_representatives}) over the minimal surjective submersion $1_{\hat{Z}}$, and just set
\begin{equation}
\label{eq:sum_of_2-morphisms_simple_rep}
	[\hat{Z}, 1_{\hat{Z}}, \psi'] + [\hat{Z}, 1_{\hat{Z}}, \psi]
	\coloneqq \big[ \hat{Z}, 1_{\hat{Z}},\, \psi' + \psi \big]\,.
\end{equation}
One can check that this is compatible with the equivalence relation and, hence, sufficient to define an abelian group structure.
Unitality and associativity are straightforward in this definition of the sum of 2-morphisms.
From the expressions~\eqref{eq:sum_of_2-morphisms_simple_rep} and~\eqref{eq:vertical_composition_of_2-morphisms} it is, furthermore, apparent that the sum of 2-morphisms is compatible with vertical composition of 2-morphisms.

Consider 1-morphisms $(E,\nabla^E,\alpha,Z,\zeta)$, $(E',\nabla^{E'},\alpha',Z',\zeta')$ from $(\CG_0, \nabla^{\CG_0})$ to $(\CG_1, \nabla^{\CG_1})$, and $(F, \nabla^F, \beta,X, \xi)$, $(F', \nabla^{F'}, \beta', X', \xi')$ from $(\CG_1,\nabla^{\CG_1})$ to $(\CG_2,\nabla^{\CG_2})$.
We write $\hat{Z} = Z {\times}_{Y_{01}} Z'$ and $\hat{X} = X {\times}_{Y_{12}} X$.
Let, moreover, $[\hat{Z}, 1_{\hat{Z}}, \phi], [\hat{Z}, 1_{\hat{Z}}, \phi'] \colon (E,\alpha) \to (E',\alpha')$ and $[\hat{X}, 1_{\hat{X}}, \psi] \colon (F, \beta) \to (F', \beta')$.
We have
\addtocounter{equation}{1}
\begin{align*}
	&[\hat{X}, 1_{\hat{X}}, \psi] \circ_1 \big( [\hat{Z}, 1_{\hat{Z}}, \phi] + [\hat{Z}, 1_{\hat{Z}}, \phi'] \big)
	\\*[0.2cm]
	&= \big[ (Z {\times}_{Y_1} X) {\times}_{Y_{01}} (\hat{Z} \times_{Y_1} \hat{X}) {\times}_{Y_{12}} (Z' {\times}_{Y_1} X'),\, (\pr_{Z {\times}_{Y_1} X}) {\times}_{M} (\pr_{Z' {\times}_{Y_1} X'}),\,
	\\
	& \qquad \dd_{(F',\beta') \circ (E',\alpha')} \circ (\psi \otimes (\phi + \phi')) \circ \dd_{(F,\beta) \circ (E,\alpha)} \big]  \theeq
	\\[0.2cm]
	&= \big[ (Z {\times}_{Y_1} X) {\times}_{Y_{01}} (\hat{Z} \times_{Y_1} \hat{X}) {\times}_{Y_{12}} (Z' {\times}_{Y_1} X'),\, (\pr_{Z {\times}_{Y_1} X}) {\times}_{M} (\pr_{Z' {\times}_{Y_1} X'}),\,
	\\
	& \qquad \big( \dd_{(F',\beta') \circ (E',\alpha')} \circ (\psi \otimes \phi ) \circ \dd_{(F,\beta) \circ (E,\alpha)} \big)
	+ \big( \dd_{(F',\beta') \circ (E',\alpha')} \circ (\psi \otimes \phi') \circ \dd_{(F,\beta) \circ (E,\alpha)} \big) \big]
	\\*[0.2cm]
	&= \big( [\hat{X}, 1_{\hat{X}}, \psi] \circ_1 [\hat{Z}, 1_{\hat{Z}}, \phi] \big)
	+ \big( [\hat{X}, 1_{\hat{X}}, \psi] \circ_1 [\hat{Z}, 1_{\hat{Z}}, \phi'] \big)\,.
\end{align*}
Here we have only used the distributivity of composition and tensor product of morphisms of vector bundles over their sum.
From~\eqref{eq:tensor_product_of_2-morphisms} it follows that the tensor product of bundle gerbes is distributive with respect to this additive structure on the level of 2-morphisms as well.
Finally, part (2) and (3) of Example~\ref{eg:2-morphisms--new_from_old_identity_unitors} let us see straightforwardly that taking the transpose and adjoint of 2-morphisms is compatible with sums.
Thus, we have proven the next proposition.

\begin{proposition}
\label{st:Ab-enrichment_of_BGrb^nabla}
The 2-category $\BGrb^\nabla(M)$ is enriched in preadditive categories as a symmetric monoidal 2-category, meaning that all its morphism categories are preadditive, and composition of 1-morphisms as well as the tensor products are additive functors.
Moreover, addition of 2-morphisms is compatible with the transpose functor and the adjoint.
\end{proposition}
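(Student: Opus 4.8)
The plan is to assemble the ingredients already established and to reduce every compatibility statement to an elementary fibrewise fact about morphisms of hermitean vector bundles. The abelian group structure on each hom-set $\BGrb^\nabla(M)((E,\alpha),(E',\alpha'))$ is supplied by Proposition~\ref{st:additive_structure_on_2-morphisms_of_BGrbs}, so the first task is to upgrade these groups to a preadditive enrichment, i.e.\ to check that vertical composition $\circ_2$ is biadditive. The cleanest way to handle the surjective-submersion bookkeeping throughout is to work with the canonical simple representatives $[\hat{Z}, 1_{\hat{Z}}, \phi]$ guaranteed by Proposition~\ref{st:2-morphisms_have_simple_representatives}, on which the sum is given by~\eqref{eq:sum_of_2-morphisms_simple_rep}. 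With these representatives the formula~\eqref{eq:vertical_composition_of_2-morphisms} for $\circ_2$ becomes a plain composition of vector-bundle morphisms, and biadditivity then follows immediately from the distributivity of composition of morphisms of vector bundles over their pointwise sum.

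Next I would verify that the enrichment is compatible with the monoidal 2-categorical structure, i.e.\ that horizontal composition $\circ_1$ and the tensor product $\otimes$ are additive in each 2-morphism variable. Additivity of $\circ_1$ is exactly the displayed computation preceding the statement: using~\eqref{eq:horizontal_composition_of_2-morphisms} together with the properties of $\dd_{(-)}$ from Lemma~\ref{st:dd-def_and_properties}, the underlying bundle morphism of a horizontal composite is built from $\psi \otimes \phi$ by pre- and post-composition with fixed morphisms, so additivity reduces once more to distributivity of composition and tensor product of vector-bundle morphisms over addition. Additivity of $\otimes$ is even more direct: by~\eqref{eq:tensor_product_of_2-morphisms} the underlying morphism of a tensor product of 2-morphisms is $\pr_U^*\phi \otimes \pr_W^*\psi$, and since the tensor product of morphisms of vector bundles is bilinear, so is $\otimes$ on 2-morphisms.

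Finally, for compatibility with the transpose and the adjoint I would appeal to parts~(2) and~(3) of Example~\ref{eg:2-morphisms--new_from_old_identity_unitors}. The transpose leaves the underlying data $\psi$ unchanged, hence trivially commutes with the sum~\eqref{eq:sum_of_2-morphisms_simple_rep}; the adjoint replaces $\psi$ by its fibrewise adjoint $\psi^*$ and swaps the factors in the fibre product, and since $\psi \mapsto \psi^*$ is additive on $\HVBdl^\nabla(W)$, the adjoint is additive as well.

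I expect the only genuine subtlety to be the bookkeeping of surjective submersions and fibre products: the operations $+$, $\circ_1$, $\circ_2$ and $\otimes$ are a priori defined over different fibre products, so one must check that passing to the simple representatives of Proposition~\ref{st:2-morphisms_have_simple_representatives} is compatible with all of them simultaneously. Once this reduction is in place, every remaining claim is a pointwise linear-algebra identity and the verification is routine.
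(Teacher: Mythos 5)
Your proposal is correct and follows essentially the same route as the paper: the group structure from Proposition~\ref{st:additive_structure_on_2-morphisms_of_BGrbs}, vertical-composition additivity read off from the simple representatives of Proposition~\ref{st:2-morphisms_have_simple_representatives} via~\eqref{eq:sum_of_2-morphisms_simple_rep} and~\eqref{eq:vertical_composition_of_2-morphisms}, horizontal-composition additivity by the explicit $\dd_{(-)}$-computation reducing to distributivity of composition and tensor product of vector-bundle morphisms over sums, tensor-product additivity from~\eqref{eq:tensor_product_of_2-morphisms}, and transpose/adjoint compatibility from parts (2) and (3) of Example~\ref{eg:2-morphisms--new_from_old_identity_unitors}. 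The surjective-submersion bookkeeping you flag as the only subtlety is exactly what the paper handles the same way, by working over the minimal surjective submersions.
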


Independently of this additive structure, we can show the following property of 2-morphisms in the sub-2-category $\BGrb^\nabla_\rmpar(M) \subset \BGrb^\nabla(M)$.
Recall that this subcategory has the same 1-morphisms as $\BGrb^\nabla(M)$, but only parallel 2-morphisms.

\begin{proposition}
\label{st:kernels_of_2-morphisms_in_BGrb^nabla_par}
Consider 1-morphisms $(E, \alpha)$ and $(E', \alpha')$ as above, and consider a 2-morphism $[W, \omega, \psi] \in \BGrb^\nabla_\rmpar(M)((E,\alpha), (E',\alpha'))$.
Let $(\hat{Z}, 1_{\hat{Z}}, \psi_0)$ be the unique representative of this 2-morphism over $\hat{Z} = Z {\times}_{Y_{01}} Z'$.
We write $\hat{\zeta} \colon \hat{Z} \to Y_{01} = Y_0 {\times}_M Y_1$ for the induced surjective submersion.
The following statements hold true:
\begin{myenumerate}
	\item The representative $(W, \omega, \psi)$ defines a 1-morphism
	\begin{equation}
	\label{eq:kernels_of_2-morphism_in_BGrb^nabla_par}
	\begin{aligned}
		&\big( \ker(\psi),\, (\omega_Z^*\nabla^E)_{|\ker(\psi)},\, \omega_Z^{[2]*}\alpha_{|(\omega_{Y_0}^{[2]*}L_0 \otimes d_0^*\ker(\psi))},\, W,\, \hat{\zeta} \circ \omega \big)
		\\
		&\qquad\in \BGrb^\nabla_\rmpar(M)\big( (\CG_0, \nabla^{\CG_0}), (\CG_1, \nabla^{\CG_1}) \big)\,,
	\end{aligned}
	\end{equation}
	where $\omega_{Y_i} \coloneqq \zeta_{Y_i} \circ \omega_Z = \zeta'_{Y_i} \circ \omega_{Z'} \colon W \to Y_i$.
	
	\item The 1-morphism~\eqref{eq:kernels_of_2-morphism_in_BGrb^nabla_par} is 2-isomorphic in $\BGrb^\nabla_\rmpar(M)$ by a unitary 2-isomorphism to $\big( \ker(\psi_0),\, (\pr_Z^*\nabla^E)_{|\ker(\psi_0)},\, \pr_Z^{[2]*}\alpha_{|(L_0 \otimes d_0^*\ker(\psi_0))},\, \hat{Z},\, \hat{\zeta} \big)$.
	
	\item Any of the above 1-morphisms represents the categorical kernel of $[W, \omega, \psi]$:
	\begin{equation}
		\ker \big( [W, \omega, \psi] \big) \cong \big( \ker(\psi_0),\, (\pr_Z^*\nabla^E)_{|\ker(\psi_0)},\, \pr_Z^{[2]*}\alpha_{|(L_0 \otimes d_0^*\ker(\psi_0))},\, \hat{Z},\, \hat{\zeta} \big)\,.
	\end{equation}
\end{myenumerate}
\end{proposition}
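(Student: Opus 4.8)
The plan is to derive all three parts from one fact: a \emph{parallel} morphism $\psi$ of hermitean vector bundles with connection has locally constant rank, so that $\ker(\psi)$ is a smooth subbundle which is preserved by parallel transport and hence inherits a hermitean connection by restriction. This is exactly why the statement is confined to $\BGrb^\nabla_\rmpar(M)$. For part (1), I would first record that $\psi \in \HVBdl^\nabla(W)(\omega_Z^*E, \omega_{Z'}^*E')$ is parallel; since parallel transport intertwines $\psi$ with itself and is fibrewise invertible, $\rank(\psi)$ is locally constant and $\ker(\psi)$ is a smooth subbundle of $\omega_Z^*E$. Differentiating $\psi(s)=0$ for a section $s$ of $\ker(\psi)$ and using that $\psi$ is parallel shows that $\omega_Z^*\nabla^E$ preserves $\ker(\psi)$, so $(\omega_Z^*\nabla^E)_{|\ker(\psi)}$ is a hermitean connection. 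The decisive step is that $\omega_Z^{[2]*}\alpha$ restricts to the kernel: applying $d_1^*\psi \otimes 1$ to $\omega_Z^{[2]*}\alpha$ on a local section of $\omega_{Y_0}^{[2]*}L_0 \otimes d_0^*\ker(\psi)$ and invoking~\eqref{eq:2-morphism_compatibility_with_alphas} turns it into $\omega_{Z'}^{[2]*}\alpha' \circ (1 \otimes d_0^*\psi)$ applied to the same section, which vanishes as the section lies in $\cdots \otimes d_0^*\ker(\psi)$. Hence the image of $\omega_Z^{[2]*}\alpha$ lies in $\ker(d_1^*\psi \otimes 1) = d_1^*\ker(\psi) \otimes \omega_{Y_1}^{[2]*}L_1$; as $L_0, L_1$ are line bundles both subbundles have rank $\rank(\ker(\psi))$, so injectivity of $\alpha$ makes the restriction a unitary parallel isomorphism. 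Its cocycle identity~\eqref{eq:1-morphisms_compatibility_with_BGrb_multiplications} is the restriction to $\ker(\psi)$ of the one satisfied by $\alpha$, so~\eqref{eq:kernels_of_2-morphism_in_BGrb^nabla_par} is a genuine 1-morphism.

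For part (2), by Proposition~\ref{st:2-morphisms_have_simple_representatives} the unique representative $(\hat{Z}, 1_{\hat{Z}}, \psi_0)$ satisfies $\psi = \omega^*\psi_0$, the equivalence of Definition~\ref{def:2-morphisms_of_BGrbs} being witnessed by $(W, 1_W, \omega)$. Since kernels of constant-rank bundle maps pull back, $\ker(\psi) = \omega^*\ker(\psi_0)$, and the restricted connection and restricted $\alpha$ are likewise the $\omega$-pullbacks of their counterparts over $\hat{Z}$. Thus the 1-morphism~\eqref{eq:kernels_of_2-morphism_in_BGrb^nabla_par} is obtained from the one over $\hat{Z}$ by refining its surjective submersion $\hat{\zeta}$ along $\omega$, and refinement of the surjective submersion yields a canonically 2-isomorphic 1-morphism; the witnessing 2-isomorphism is unitary, given by the identity on the pulled-back bundle over $W {\times}_{\hat{Z}} \hat{Z} \cong W$ (compare the identity 2-morphisms of Lemma~\ref{st:dd-def_and_properties}).

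For part (3), I would verify the universal property of the categorical kernel using the $\hat{Z}$-representative $\mathcal K = \big(\ker(\psi_0), \ldots, \hat{Z}, \hat{\zeta}\big)$; part (2) then transports the conclusion to the $W$-representative. The inclusion $\iota \colon \ker(\psi_0) \hookrightarrow \pr_Z^*E$ is a parallel bundle map which, by part (1), intertwines the restricted $\alpha$ with $\alpha$, so it defines a 2-morphism $k \colon \mathcal K \to (E,\alpha)$ with $[W,\omega,\psi]\circ_2 k = 0$, since $\psi_0 \circ \iota = 0$. Given any 1-morphism $(F,\gamma,V,\eta)$ and a 2-morphism $\tau \colon (F,\gamma) \to (E,\alpha)$ with $[W,\omega,\psi]\circ_2\tau = 0$, I would represent $\tau$ over a common surjective submersion by a parallel $t$; the vanishing of the composite reads $\psi_0 \circ t = 0$, so $t$ takes values in $\ker(\psi_0)$ and factors uniquely as $t = \iota \circ \bar{t}$ with $\bar{t}$ parallel, the corestriction of a parallel map to a parallel subbundle being parallel. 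Cancelling the injective $\iota$ in the compatibility square for $t$ shows $\bar{t}$ satisfies~\eqref{eq:2-morphism_compatibility_with_alphas} for $\mathcal K$, so $\bar{t}$ defines the unique 2-morphism $\bar{\tau} \colon (F,\gamma) \to \mathcal K$ with $k \circ_2 \bar{\tau} = \tau$; uniqueness is the monomorphism property of $\iota$.

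I expect the main obstacle to be the well-definedness in part (1): one must use the parallel hypothesis twice -- once to produce a smooth kernel bundle carrying the induced hermitean connection, and once, through the 2-morphism compatibility~\eqref{eq:2-morphism_compatibility_with_alphas}, to force $\alpha$ to restrict to it -- whereas the remainder of part (3) is the routine reduction to kernels of vector-bundle morphisms after passing to a common refinement.
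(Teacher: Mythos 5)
Your proposal is correct and follows essentially the same route as the paper's own proof: parallelness makes $\ker(\psi)$ a smooth hermitean subbundle with the restricted connection, the compatibility condition~\eqref{eq:2-morphism_compatibility_with_alphas} forces $\alpha$ to restrict to it, the identification $\ker(\psi) \cong \omega^*\ker(\psi_0)$ yields the unitary 2-isomorphism in part (2), and part (3) is the universal property verified by factoring any vanishing-composite 2-morphism through the kernel inclusion, reducing everything to kernels in $\HVBdl^\nabla$. The only point you compress is the bookkeeping with the morphisms $\dd_{(E,\alpha)}$ of Lemma~\ref{st:dd-def_and_properties} that the paper uses when composing 2-morphisms defined over different surjective submersions (and, likewise, the one-line two-sided injectivity argument showing $d_0^*\ker(\psi)$ and $d_1^*\ker(\psi)$ have equal rank), but these are routine.
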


\begin{proof}
Ad (1):
First note that since $\psi \in \HVBdl^\nabla_\rmpar(W)$ is parallel it commutes with the parallel transport of $\omega_Z^*(E,\nabla^E)$.
Thus, its (fibrewise) kernel is preserved by $\omega_Z^*\nabla^E$.
Hence, we can use the parallel transport to locally trivialise the disjoint union of the fibrewise kernels of $\psi$, making $\ker(\psi)$ into a hermitean subbundle of $\omega_Z^*(E,\nabla^E)$  with connection.
The connection on $\ker(\psi)$ is given by the restriction of $\omega_Z^*\nabla^E$ to the subbundle $\ker(\psi) \subset \omega_Z^*E$.
We have to show that $\omega_Z^{[2]*}\alpha$ restricts to an isomorphism $\omega_{Y_0}^{[2]*}L_0 \otimes d_0^*\ker(\psi) \to d_1^*\ker(\psi) \otimes \omega_{Y_1}^{[2]*}L_1$.
Here the defining property~\eqref{eq:2-morphism_compatibility_with_alphas} of 2-morphisms comes to help:
It readily implies that $(d_1^*\psi \otimes 1) \circ \omega_Z^*\, \zeta^{[2]*} \alpha$ vanishes on pairs $(\ell \otimes e)$ with $\ell \in \omega_{Y_0}^{[2]*}L_0$ and  $e \in \ker(d_0^*\psi) \cong d_0^*\ker(\psi)$.
The compatibility of the restriction of $\omega_Z^{[2]*} \alpha$ with the bundle gerbe multiplications follows from the original compatibility of the bundle gerbe multiplications and $\alpha$.

Ad (2):
A representative for a 2-isomorphism as desired is given by $(W, \omega {\times}_{\hat{Z}} 1_W, \nu)$, where $\nu \colon \ker(\omega^*\psi) \to \omega_Z^*\ker(\psi)$ is the restriction of the isomorphism $\omega^*\pr_Z^*E \cong \omega_Z^*E$ to the subbundle $\ker(\omega^*\psi) \subset \omega^*\pr_Z^*E$.

Ad (3):
First, there is a parallel 2-morphism
\begin{equation}
\label{eq:kernel inclusion}
	\widehat{\iota}_{\ker(\psi_0)} \coloneqq \big[ \hat{Z} {\times}_{Y_{01}} Z, 1_{Z {\times}_{Y_{10}} \hat{Z}}, \pr_{(Z {\times}_{Y_{01}} Z)}^* \dd_{(E,\alpha)} \circ \iota_{\ker(\psi_0)} \big]
\end{equation}
from the 1-morphism in (2) to $(E,\alpha)$, where $\iota_{\ker(\psi_0)} \colon \ker(\psi_0) \hookrightarrow E$ is the inclusion of the subbundle $\ker(\psi_0) \subset E$ and $\dd_{(E,\alpha)}$ is as in Lemma~\ref{st:dd-def_and_properties}.
Let $(F, \nabla^F, \beta, X, \xi)$ be another 1-morphism $(\CG_0, \nabla^{\CG_0}) \to (\CG_1, \nabla^{\CG_1})$, and let $[X {\times}_{Y_{01}} Z, 1, \phi] \colon (F, \beta) \to (E,\alpha)$ be another parallel 2-morphism.
Using representatives of 2-morphisms over the minimal surjective submersion $1_{\hat{Z}}$, vertical composition reads as
\begin{equation}
	[\hat{Z}, 1_{\hat{Z}}, \psi_0] \circ_2 [X {\times}_{Y_{01}} Z, 1, \phi]
	= \big[ X {\times}_{Y_{01}} \hat{Z}, 1, \pr_{\hat{Z}}^*\psi_0 \circ \pr_{(X {\times}_{Y_{01}} Z)}^*\phi \big]\,.
\end{equation}
Thus, if this composition is the zero 2-morphism, the fibrewise image of $\pr_{(X {\times}_{Y_{01}} Z)}^*\phi$ must be contained in $\ker(\pr_{\hat{Z}}^*\psi_0) \cong \pr_{\hat{Z}}^*\ker(\psi_0)$.
The fact that $\phi$ is compatible with $\alpha$ and $\beta$, together with the fact that $\alpha$ is compatible with the restriction to $\ker(\psi_0)$ readily implies that $[X {\times}_{Y_{01}} Z {\times}_{Y_{01}} Z', 1, \pr_{(X {\times}_{Y_{01}} Z)}^* \phi]$ is a 2-morphism from $(F, \beta)$ to the 1-morphism under (2).
As $\iota_{\ker(\psi_0)} \circ \phi = \phi$, and using that $\ker(\psi_0)$ lives over $Z {\times}_{Y_{01}} Z'$ we see that
\begin{equation}
\begin{aligned}
	&\big[ \hat{Z} {\times}_{Y_{01}} Z,\, 1,\, \pr_{(Z {\times}_{Y_{01}} Z)}^* \dd_{(E,\alpha)}\circ \iota_{\ker(\psi_0)}\big]
	\circ_2 \big[ X {\times}_{Y_{01}} \hat{Z}, 1, \pr_{(X {\times}_{Y_{01}} Z)}^* \phi \big]
	\\
	&= \big[ X {\times}_{Y_{01}} Z {\times}_{Y_{01}} Z' {\times}_{Y_{01}} Z,\, 1,\, \pr_{(Z {\times}_{Y_{01}} Z)}^* \dd_{(E,\alpha)} \circ \pr_{(Z {\times}_{Y_{01}} X)}^*\phi \big]
	\\
	&= \big[ X {\times}_{Y_{01}} Z {\times}_{Y_{01}} Z,\, 1,\, \pr_{(Z {\times}_{Y_{01}} Z)}^* \dd_{(E,\alpha)} \circ \pr_{(X {\times}_{Y_{01}} Z)}^*\phi \big]
	\\
	&= 1_{(E,\alpha)} \circ_2 [X {\times}_{Y_{01}} Z, 1, \phi]
	\\
	&= [X {\times}_{Y_{01}} Z, 1, \phi]\,,
\end{aligned}
\end{equation}
where in the second step we changed the representative of the 2-morphism in order to get rid of the unused factor of $Z'$.
Thus, $[X {\times}_{Y_{01}} Z {\times}_{Y_{01}} Z', 1, \pr_{(X {\times}_{Y_{01}} Z)}^* \phi]$ provides the unique 2-morphism factorising $[X {\times}_{Y_{01}} Z, 1, \phi]$ through the inclusion $[\hat{Z} {\times}_{Y_{01}} Z, 1, \pr_{(Z {\times}_{Y_{01}} Z)}^* \dd_{(E,\alpha)} \circ \iota_{\ker(\psi_0)}]$, where uniqueness is a consequence of the use of the kernel in $\HVBdl^\nabla$.
\end{proof}

\begin{remark}
Analogously, the categorical cokernel of a parallel 2-morphism is represented by
\begin{equation}
	\coker \big( [W, \omega, \psi] \big) \cong \big( \coker(\psi_0),\, (\pr_{Z'}^*\nabla^E)_{|\coker(\psi_0)},\, \pr_{Z'}^{[2]*}\alpha'_{|(L_0 \otimes d_0^*\coker(\psi_0))},\, \hat{Z},\, \hat{\zeta} \big)\,.
\end{equation}
Here we can explicitly write $\coker(\psi_0) \cong (\ran(\psi_0) )^\perp$.
Combining fact that $\psi_0$ commutes with the parallel transport with the parallelity of the hermitean metric on $E$ shows that $\nabla^E$ preserves $(\ran(\psi_0) )^\perp$.
This gives rise to a vector bundle again by using the parallel transport to obtain local trivialisations.
\end{remark}

Now we descend one level and consider 1-morphisms of bundle gerbes.
For twisted vector bundles between two bundle gerbes over a mutual surjective submersion $Y \to M$ one can straightforwardly introduce a direct sum, simply by using the direct sum in $\HVBdl^\nabla(Y)$.
In the case of general morphisms of bundle gerbes we have to account for different surjective submersions appearing in the data of the bundle gerbes as well as the morphisms.
This makes abstract computations rather tedious, but, nevertheless, the desired structure exists.

\begin{theorem}
\label{st:direct_sum_structure_on_morphisms_of_BGrbs}
The following statements hold true:
\begin{myenumerate}
	\item On every morphism category $\BGrb^\nabla(M)((\CG_0, \nabla^{\CG_0}), (\CG_1, \nabla^{\CG_1}))$ there exists a symmetric monoidal structure, given on objects by
	\addtocounter{equation}{1}
	\begin{align*}
	\label{eq:direct_sum_on_1-morphisms}
		&(E', \nabla^{E'}, \alpha', Z', \zeta') \oplus (E, \nabla^E, \alpha, Z, \zeta) \theeq
		\\
		&= \Big( \pr_{Z'}^* (E', \nabla^{E'}) \oplus \pr_Z^* (E, \nabla^E),\, \sfd_r^{-1} \circ \big( \pr_{Z'}^{[2]*} \alpha' \oplus \pr_Z^{[2]*} \alpha \big) \circ \sfd_l,\, \hat{Z},\, \hat{\zeta} \Big)\,,
	\end{align*}
	where $\sfd_l$ and $\sfd_r$ denote the natural isomorphisms witnessing the distributivity of the tensor product over the direct sum in $\HVBdl^\nabla$ from the left and from the right, respectively, and where we write $\hat{Z} = Z {\times}_{Y_{01}} Z'$ as well as $\hat{\zeta} = \zeta {\times}_{Y_{01}} \zeta'$.
	On morphisms, the monoidal structure acts as
	\begin{equation}
	\label{eq:direct_sum_on_2-morphisms}
		[W', \omega', \psi'] \oplus [W, \omega, \psi]
		= \big[ W' {\times}_{Y_{01}} W,\, \omega' {\times}_{Y_{01}} \omega,\, \pr_{W'}^* \psi' \oplus \pr_W^* \psi \big]\,.
	\end{equation}
	
	\item The direct sum introduced in (1) is additive with respect to the sum on morphisms introduced in Proposition~\ref{st:additive_structure_on_2-morphisms_of_BGrbs}, and the tensor product of morphisms of bundle gerbes distributes over it:
	There are natural isomorphisms
	\begin{equation}
	\begin{aligned}
		\delta_{l, (F,E',E)} \colon (F, \beta) \otimes \big( (E',\alpha') \oplus (E, \alpha) \big)
		&\arisom \big( (F,\beta) \otimes (E', \alpha') \big) \oplus \big( (F, \beta) \otimes (E,\alpha) \big)
		\\
		\delta_{r, (E',E,F)} \colon \big( (E', \alpha') \oplus (E,\alpha) \big) \otimes (F,\beta)
		&\arisom \big( (E', \alpha') \otimes (F,\beta) \big) \oplus \big( (E,\alpha) \otimes (F, \beta) \big)\,,
	\end{aligned}
	\end{equation}
	making $(-) \otimes (-)$ monoidal with respect to $(-) \oplus (-)$ in each argument.
	
	\item Composition of 1-morphisms distributes over the direct sum introduced in (1) in the sense that there are natural isomorphisms
	\begin{equation}
	\begin{aligned}
		\sfc_{l,(F,E',E)} \colon (F, \beta) \circ \big( (E',\alpha') \oplus (E, \alpha) \big)
		&\arisom \big( (F,\beta) \circ (E', \alpha') \big) \oplus \big( (F, \beta) \circ (E,\alpha) \big)
		\\
		\sfc_{r, (E',E,F)} \colon \big( (E', \alpha') \oplus (E,\alpha) \big) \circ (F,\beta)
		&\arisom \big( (E', \alpha') \circ (F,\beta) \big) \oplus \big( (E,\alpha) \circ (F, \beta) \big)\,,
	\end{aligned}
	\end{equation}
	making $(-) \circ (-)$ monoidal with respect to $(-) \oplus (-)$ in each argument.
\end{myenumerate}
\end{theorem}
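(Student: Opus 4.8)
The plan is to reduce every assertion to the corresponding structure on the categories $\HVBdl^\nabla(Y)$ of hermitean vector bundles with connection, transported along the pullback functors to the relevant fibre products of surjective submersions. The essential point is that each $\HVBdl^\nabla(Y)$ is a distributive symmetric monoidal category: it carries a direct sum $\oplus$, a tensor product $\otimes$, and natural distributivity isomorphisms $\sfd_l$ and $\sfd_r$ satisfying the standard coherence conditions, with all structural morphisms involved unitary and parallel. Once this is granted, the three parts become the statements that these structures survive the passage to common refinements $\hat Z = Z \times_{Y_{01}} Z'$ and stay compatible with the bundle gerbe multiplications $\mu_0, \mu_1$.

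For part (1), I would first check that $\sfd_r^{-1} \circ (\pr_{Z'}^{[2]*}\alpha' \oplus \pr_Z^{[2]*}\alpha) \circ \sfd_l$ is unitary and parallel (immediate, since $\sfd_l, \sfd_r, \alpha, \alpha'$ all are) and that it has the correct source $\hat\zeta_{Y_0}^{[2]*}L_0 \otimes d_0^*(\pr_Z^*E \oplus \pr_{Z'}^*E')$ and target $d_1^*(\pr_Z^*E \oplus \pr_{Z'}^*E') \otimes \hat\zeta_{Y_1}^{[2]*}L_1$; this is exactly what the two distributivity isomorphisms are designed to arrange. The substantive step is the compatibility condition~\eqref{eq:1-morphisms_compatibility_with_BGrb_multiplications} over $\hat Z^{[3]}$: here I would expand both sides using the definition of the direct-sum isomorphism and the naturality of $\sfd_l, \sfd_r$ to move all distributivity isomorphisms to the outside, at which point the condition splits as a direct sum of the corresponding conditions for $\alpha$ and $\alpha'$, each of which holds by hypothesis. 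The direct sum on 2-morphisms~\eqref{eq:direct_sum_on_2-morphisms} is well-defined on equivalence classes because the equivalence relation is generated by pullbacks along surjective submersions, under which $\oplus$ is functorial, and that it respects vertical composition reduces to the bifunctoriality of $\oplus$ in $\HVBdl^\nabla$. Finally, the associator, the unitors (with unit the zero 1-morphism built on the rank-zero bundle over $Y_{01}$) and the symmetry are induced by their counterparts in $\HVBdl^\nabla(\hat Z)$, and the pentagon, triangle and hexagon identities follow from those in $\HVBdl^\nabla$.

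For parts (2) and (3), the natural isomorphisms $\delta_l, \delta_r$ and $\sfc_l, \sfc_r$ are built directly from $\sfd_l, \sfd_r$: the tensor product and the composition of 1-morphisms are both given by tensoring the underlying vector bundles over an appropriate fibre product, so distributivity of $\otimes$ and of $\circ$ over $\oplus$ is inherited verbatim from distributivity of $\otimes$ over $\oplus$ in $\HVBdl^\nabla$. I would exhibit each of these as a 2-morphism of the special form $[\,\cdot\,, 1, \phi]$ with $\phi$ a distributivity isomorphism, verify the compatibility square~\eqref{eq:2-morphism_compatibility_with_alphas} by naturality, and check naturality in all arguments. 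Additivity of $\oplus$ with respect to the abelian group structure of Proposition~\ref{st:additive_structure_on_2-morphisms_of_BGrbs} is the statement that $(\psi_1 + \psi_2) \oplus \chi = (\psi_1 \oplus \chi) + (\psi_2 \oplus \chi)$ and likewise in the other slot; using the simple representatives over $1_{\hat Z}$ this is precisely additivity of $\oplus$ on morphisms in $\HVBdl^\nabla$, so that $\oplus$ is in fact a biproduct on each (preadditive) morphism category.

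The main obstacle I expect is the bookkeeping rather than any genuine conceptual difficulty: keeping track of the many fibre products of surjective submersions and the pullback functors between them, and verifying that the distributivity isomorphisms commute past the bundle gerbe multiplications in the compatibility condition over triple overlaps. This is where the coherence of $\sfd_l$ and $\sfd_r$ (the interplay of left- and right-distributivity on threefold tensor products) is genuinely used, and where the argument, though routine, is most error-prone.
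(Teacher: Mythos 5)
Your treatment of part (1) follows the paper's proof essentially step for step: expand $\sfd_r^{-1}\circ(\pr_{Z'}^{[2]*}\alpha'\oplus\pr_Z^{[2]*}\alpha)\circ\sfd_l$, use naturality of the distributivity isomorphisms to split the compatibility condition~\eqref{eq:1-morphisms_compatibility_with_BGrb_multiplications} into the direct sum of the conditions for $\alpha$ and $\alpha'$, and handle 2-morphisms via the simple representatives of Proposition~\ref{st:2-morphisms_have_simple_representatives}. One caveat even here: identity 2-morphisms in $\BGrb^\nabla(M)$ are built from the canonical isomorphisms $\dd_{(E,\alpha)}$ of Lemma~\ref{st:dd-def_and_properties}, not from identity maps of bundles, so unitality of $\oplus$ on 2-morphisms requires the identity $\dd_{(E',\alpha')\oplus(E,\alpha)}=\pr^*\dd_{(E',\alpha')}\oplus\pr^*\dd_{(E,\alpha)}$; ``bifunctoriality of $\oplus$ in $\HVBdl^\nabla$'' alone does not deliver this.

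The genuine gap is in parts (2) and (3). You propose to exhibit $\delta_{l}$ and $\sfc_{l}$ as 2-morphisms $[\,\cdot\,,1,\phi]$ with $\phi$ \emph{a distributivity isomorphism}, with compatibility checked ``by naturality''. This fails as stated, because the source and target 1-morphisms of these 2-isomorphisms live over genuinely different surjective submersions: $(F,\beta)\otimes\big((E',\alpha')\oplus(E,\alpha)\big)$ is defined over $X\times_M(Z'\times_{Y_{01}}Z)$, containing one copy of $X$, whereas $\big((F,\beta)\otimes(E',\alpha')\big)\oplus\big((F,\beta)\otimes(E,\alpha)\big)$ is defined over $(X\times_M Z')\times_{Y_{0123}}(X\times_M Z)$, containing two copies; the analogous mismatch occurs for composition in part (3). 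Over a common refinement $W$, the target bundle therefore involves pullbacks of $F$ along two \emph{different} maps $W\to X$ while the source involves only one, and no distributivity isomorphism relates these pullbacks. The missing ingredient is the $\dd$-calculus: the paper takes $\widehat{\delta}_{l,(F,E',E)}=\big((\dd_{(F,\beta)}\otimes\dd_{(E',\alpha')})\oplus(\dd_{(F,\beta)}\otimes\dd_{(E,\alpha)})\big)\circ\sfd_l$ and $\widehat{\sfc}_{l,(F,E',E)}=\big(\dd_{(F,\beta)\circ(E',\alpha')}\oplus\dd_{(F,\beta)\circ(E,\alpha)}\big)\circ\sfd_r$, using the $\dd$'s precisely to identify the divergent pullbacks. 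Accordingly, the verification of the compatibility square~\eqref{eq:2-morphism_compatibility_with_alphas} is not an application of naturality of $\sfd_l,\sfd_r$: its crucial input is that $\alpha$ and $\beta$ are descent morphisms with respect to $\dd$ (part (2) of Lemma~\ref{st:dd-def_and_properties}), and naturality of $\delta_l$, $\sfc_l$ in their arguments rests on Lemma~\ref{st:dd_and_2-morphisms}. Relatedly, you locate the delicate point in the coherence of iterated distributivity; that coherence is in fact suppressed by the strictness conventions of Appendix~\ref{app:monoidal_structures_and_strictness}, and the genuinely error-prone part of the proof is the interplay of $\sfd$ with the $\dd$'s, which your outline omits entirely.
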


As indicated above, the proof of Theorem~\ref{st:direct_sum_structure_on_morphisms_of_BGrbs} is somewhat cumbersome due to the occurrence of multiple common refinements of surjective submersions, though no deep argument has to be used.
The situation is improved by Proposition~\ref{st:2-morphisms_have_simple_representatives}, but still the proof deserves to be deferred to Appendix~\ref{app:proof_of_direct_sum_theorem}.

There exist inclusion 2-morphisms
\begin{equation}
\label{eq:direct_sum_inclusion}
	\iota_{(E,\alpha)} \coloneqq \big[ Z {\times}_{Y_{01}} \hat{Z}, 1, \iota_E \circ \pr_{(Z {\times}_{Y_{01}} Z)}^* \dd_{(E,\alpha)} \big]
	\colon (E,\alpha) \to (E',\alpha') \oplus (E,\alpha)\,,
\end{equation}
where $\iota_E$ denotes the inclusion of the pullback of $E$ into the direct sum, and analogously for $\iota_{(E',\alpha')} \colon (E',\alpha') \to (E',\alpha') \oplus (E,\alpha)$.
Moreover, there also exist projection 2-morphisms
\begin{equation}
\label{eq:direct_sum_projection}
	\pr_{(E,\alpha)} \coloneqq \big[ \hat{Z} {\times}_{Y_{01}} Z, 1, \pr_{(Z {\times}_{Y_{01}} Z)}^* \dd_{(E,\alpha)} \circ \pr_E \big]
	\colon (E',\alpha') \oplus (E,\alpha) \to (E,\alpha)\,,
\end{equation}
where $\pr_E$ denotes the projection from the direct sum onto the pullback of $E$.
An analogous projection 2-morphism $\pr_{(E',\alpha')}$ exists for $(E',\alpha')$.
We can check that
\begin{equation}
	\pr_{(E,\alpha)} \circ_2 \iota_{(E,\alpha)} = 1_{(E,\alpha)}\,, \quad \text{and} \quad
	\pr_{(E',\alpha')} \circ_2 \iota_{(E',\alpha')} = 1_{(E',\alpha')}\,.
\end{equation}
Observe that, omitting pullbacks, we have
\begin{equation}
	(\iota_{E'} \circ \dd_{(E',\alpha')} \circ \pr_{E'}) \oplus (\iota_E \circ \dd_{(E,\alpha)} \circ \pr_E)
	= \dd_{(E',\alpha') \oplus (E,\alpha)}\,,
\end{equation}
which implies
\begin{equation}
	\big( \iota_{(E',\alpha')} \circ_2 \pr_{(E',\alpha')} \big) \oplus \big( \iota_{(E,\alpha)} \circ_2 \pr_{(E,\alpha)} \big)
	= 1_{(E',\alpha') \oplus (E,\alpha)}\,.
\end{equation}
Denote by $\Ab$ the category of abelian groups.

\begin{proposition}
The direct sum in the morphism categories of $\BGrb^\nabla(M)$ as defined in Theorem~\ref{st:direct_sum_structure_on_morphisms_of_BGrbs} is the categorical product and coproduct in these categories.
\end{proposition}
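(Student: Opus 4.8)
The plan is to recognise the displayed data as a \emph{biproduct} and invoke the standard characterisation of biproducts in a preadditive category. By Proposition~\ref{st:Ab-enrichment_of_BGrb^nabla} every morphism category $\BGrb^\nabla(M)\big((\CG_0, \nabla^{\CG_0}), (\CG_1, \nabla^{\CG_1})\big)$ is preadditive, so it suffices to verify that the inclusions $\iota_{(E,\alpha)}, \iota_{(E',\alpha')}$ of~\eqref{eq:direct_sum_inclusion} and the projections $\pr_{(E,\alpha)}, \pr_{(E',\alpha')}$ of~\eqref{eq:direct_sum_projection} satisfy the three biproduct relations: the two diagonal identities $\pr_{(E,\alpha)} \circ_2 \iota_{(E,\alpha)} = 1_{(E,\alpha)}$ and $\pr_{(E',\alpha')} \circ_2 \iota_{(E',\alpha')} = 1_{(E',\alpha')}$; the off-diagonal vanishing $\pr_{(E,\alpha)} \circ_2 \iota_{(E',\alpha')} = 0$ and $\pr_{(E',\alpha')} \circ_2 \iota_{(E,\alpha)} = 0$; and the completeness relation $\iota_{(E',\alpha')} \circ_2 \pr_{(E',\alpha')} + \iota_{(E,\alpha)} \circ_2 \pr_{(E,\alpha)} = 1_{(E',\alpha') \oplus (E,\alpha)}$, where $+$ is the sum of Proposition~\ref{st:additive_structure_on_2-morphisms_of_BGrbs}. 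Once these hold, the standard additive-category lemma yields that $(E',\alpha') \oplus (E,\alpha)$ is simultaneously a product and a coproduct.

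The first two identities are recorded already. For the remaining relations I would pass to the simple representatives over $1_{\hat{Z}}$ (Proposition~\ref{st:2-morphisms_have_simple_representatives}), reducing every composite to a computation of vector bundle morphisms over $\hat{Z} = Z {\times}_{Y_{01}} Z'$. There the 2-morphisms $\iota_{(-)}$ and $\pr_{(-)}$ are represented by the structural inclusions $\iota_E, \iota_{E'}$ and projections $\pr_E, \pr_{E'}$ of the direct sum $\pr_{Z'}^*E' \oplus \pr_Z^*E$ in $\HVBdl^\nabla(\hat{Z})$, decorated by the isomorphisms $\dd_{(-)}$ of Lemma~\ref{st:dd-def_and_properties}. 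The off-diagonal vanishing then follows from $\pr_{E'} \circ \iota_E = 0$ and $\pr_E \circ \iota_{E'} = 0$, while the completeness relation follows from the biproduct identity $\iota_{E'} \circ \pr_{E'} + \iota_E \circ \pr_E = 1_{\pr_{Z'}^*E' \oplus \pr_Z^*E}$ in $\HVBdl^\nabla(\hat{Z})$, together with the definition~\eqref{eq:sum_of_2-morphisms_simple_rep} of the sum of 2-morphisms over $1_{\hat{Z}}$; this is precisely the additive reformulation of the identity observed just before the statement.

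With the biproduct relations in hand the universal properties are formal. Given a 1-morphism $(F, \gamma)$ and 2-morphisms $f \colon (F,\gamma) \to (E',\alpha')$, $g \colon (F,\gamma) \to (E,\alpha)$, I set $h \coloneqq (\iota_{(E',\alpha')} \circ_2 f) + (\iota_{(E,\alpha)} \circ_2 g)$; the diagonal and off-diagonal relations give $\pr_{(E',\alpha')} \circ_2 h = f$ and $\pr_{(E,\alpha)} \circ_2 h = g$, and if $h'$ is any 2-morphism with these two properties then the completeness relation forces $h' = \iota_{(E',\alpha')} \circ_2 \pr_{(E',\alpha')} \circ_2 h' + \iota_{(E,\alpha)} \circ_2 \pr_{(E,\alpha)} \circ_2 h' = h$, so $h$ is unique. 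This exhibits $(E',\alpha') \oplus (E,\alpha)$ with $\pr_{(E',\alpha')}, \pr_{(E,\alpha)}$ as the categorical product; the coproduct statement is the dual argument, using the inclusions as structure maps and the bilinearity of composition over the sum from Proposition~\ref{st:Ab-enrichment_of_BGrb^nabla}.

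The main obstacle is purely bookkeeping: verifying the off-diagonal and completeness relations requires tracking the several common refinements of surjective submersions entering $\iota_{(-)}$ and $\pr_{(-)}$, and checking that the change-of-representative manoeuvres used to bring everything over $1_{\hat{Z}}$ are compatible with the insertions of $\dd_{(-)}$. Once reduced to $\HVBdl^\nabla(\hat{Z})$, however, each relation is the standard biproduct identity for hermitean vector bundles, so no genuinely new argument is needed beyond Proposition~\ref{st:2-morphisms_have_simple_representatives} and the additivity of composition.
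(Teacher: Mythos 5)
Your proposal is correct and follows essentially the same route as the paper: the paper likewise establishes the diagonal identities and the completeness relation for the inclusion and projection 2-morphisms and then combines the $\Ab$-enrichment of Proposition~\ref{st:Ab-enrichment_of_BGrb^nabla} with Mac Lane's biproduct theorem, where you instead prove the product/coproduct universal properties by hand (the off-diagonal vanishing you check is in fact automatic from the other relations in a preadditive category). The only substantive refinement is that you state the completeness relation with the sum $+$ of Proposition~\ref{st:additive_structure_on_2-morphisms_of_BGrbs}, which is the formulation actually needed for the biproduct argument, whereas the paper's display writes it with $\oplus$.
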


\begin{proof}
This follows from the $\Ab$-enrichment of the morphism categories in $\BGrb^\nabla(M)$ together with the properties of the morphisms~\eqref{eq:direct_sum_inclusion} and~\eqref{eq:direct_sum_projection}, as well as~\cite[Section VII.2, Theorem 2]{ML--Categories_for_the_working_mathematician}.
\end{proof}

We now introduce further categorical structures which will be essential in establishing morphisms of bundle gerbes as a higher analogue of morphisms of line bundles, and in applying this theory to higher geometric quantisation in Section~\ref{sect:2-Hspace_of_a_BGrb}.

\begin{definition}[(Commutative) rig category{~\cite{BDRR--Ring_completion_of_rig_categories,Laplaza--Coherence_for_distributivity}}]
\label{def:rig_category}
A \emph{(commutative) rig category} consists of a tuple $(\scR, \otimes, \One_\scR, \oplus, \Null_\scR, \delta^\scR_l, \delta^\scR_r, a^\scR_l, a^\scR_r)$ of a category $\scR$ together with two (symmetric) monoidal structures $(\otimes, \One_\scR)$ and $(\oplus, \Null_\scR)$ together with left and right distributivity natural isomorphisms
\begin{equation}
\begin{aligned}
	&\delta^\scR_{l,(x,y,z)} \colon x \otimes (y \oplus z) \to (x \otimes y) \oplus (x \otimes z)
	\\
	&\delta^\scR_{r,(x,y,z)} \colon (x \oplus y) \otimes z \to (x \otimes z) \oplus (y \otimes z)\,
\end{aligned}
\end{equation}
and left and right absorption natural isomorphisms
\begin{equation}
\begin{aligned}
	&a^\scR_{l,x} \colon \Null_\scR \otimes x \to \Null_\scR\,,
	\\
	&a^\scR_{r,x} \colon x \otimes \Null_\scR \to \Null_\scR\,,
\end{aligned}
\end{equation}
satisfying the axioms of a (commutative) rig (a ring without negatives) up to coherent natural isomorphisms.
\end{definition}

We will usually abbreviate a rig category $(\scR, \otimes, \One_\scR, \oplus, \Null_\scR, \delta^\scR_l, \delta^\scR_r, a^\scR_l, a^\scR_r)$ by writing $(\scR, \otimes, \One_\scR, \oplus, \Null_\scR)$, or $\scR$, if the remaining data have been clearly specified.

\begin{example}
The prime examples of commutative rig categories in this work are $\Hilb$, the category of finite-dimensional Hilbert spaces with its tensor product and direct sum~\cite{BDRR--Ring_completion_of_rig_categories}, $\HVBdl(M)$, $\HVBdl^\nabla(M)$, and $\HVBdl^\nabla_\rmpar(M)$, each with tensor product and direct sum of vector bundles providing the two symmetric monoidal structures.
It is noteworthy that in $\Hilb$ and $\HVBdl^\nabla_\rmpar(M)$ the monoidal structure $\oplus$ is cartesian, i.e. it coincides with the categorical product.
In fact, those two categories are abelian; we shall see more on this in due course.\qen
\end{example}

\begin{definition}[Rig module categories]
\label{def:rig_module_category}
Let $(\scR, \otimes, \One_\scR, \oplus, \Null_\scR)$ be a (commutative) rig category.
A \emph{right $\scR$-module category} is a tuple $(\scC, \oplus_\scC, \Null_\scC, \otimes
)$ of a symmetric monoidal category $(\scC, \oplus_\scC, \Null_\scC)$ together with a functor $\otimes \colon \scC {\times} \scR \to \scC$, which satisfies the axioms of a left module over a (commutative) ring up to coherent isomorphism.
A \emph{left $\scR$-module category} is defined accordingly.
\end{definition}

\begin{example}
Every commutative rig category is a bimodule category over itself.\qen
\end{example}

Thus, $\scR$-module categories $\scC$ are in particular module categories over the monoidal category $(\scR, \otimes, \One_\scR)$ (cf.~\cite{Hovey--Model_categories}), but here $\scR$ and $\scC$ are endowed with an additional symmetric monoidal structure, and the module action has to be compatible with both these structures.
Theorem~\ref{st:direct_sum_structure_on_morphisms_of_BGrbs} has the following corollary, which significantly extends the results in~\cite{Waldorf--More_morphisms,Waldorf--Thesis} regarding module actions of vector bundles on morphisms of bundle gerbes.

\begin{corollary}
\label{st:module_structures_on_morphisms_in_BGrb}
Let $(\CG_0, \nabla^{\CG_0})$, $(\CG_1, \nabla^{\CG_1}) \in \BGrb^\nabla(M)$.
The following statements hold true:
\begin{myenumerate}
	\item The category $\BGrb^\nabla(M)(\CI_0, \CI_0)$ is a commutative rig category with respect to the direct sum in $\BGrb^\nabla(\CI_0, \CI_0)$ and the tensor product inherited from that on $\BGrb^\nabla(M)$.
	
	\item The morphism category $\BGrb^\nabla(M)((\CG_0, \nabla^{\CG_0}), (\CG_1, \nabla^{\CG_1}))$ is a right module category over $\BGrb^\nabla(M)(\CI_0, \CI_0)$
	via the direct sum in $\BGrb^\nabla(M)((\CG_0, \nabla^{\CG_0}), (\CG_1, \nabla^{\CG_1}))$ and the module action induced by the tensor product on $\BGrb^\nabla(M)$.
	
	\item Any 2-morphism set in $\BGrb^\nabla(M)$ canonically has the structure of a bimodule over the algebra $\BGrb^\nabla(M)(1_{\CI_0}, 1_{\CI_0}) \cong C^\infty(M,\FC)$ (as follows from Proposition~\ref{st:2-morphisms_have_simple_representatives}) via
	\begin{equation}
		f \cdot [Z, 1, \psi] = [M, 1_M, f] \otimes [Z, 1, \psi] = [Z, 1, \zeta_M^*f \cdot \psi]\,.
	\end{equation}
	
	\item The corresponding statements hold true with $\BGrb^\nabla_\rmpar(M)$ in place of $\BGrb^\nabla(M)$, where now $\BGrb^\nabla_\rmpar(M)(1_{\CI_0}, 1_{\CI_0}) \cong \FC$.
\end{myenumerate}
\end{corollary}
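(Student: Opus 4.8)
The plan is to read off everything from Theorem~\ref{st:direct_sum_structure_on_morphisms_of_BGrbs} together with the fact, recorded in Theorem~\ref{st:2-categories_of_bundle_gerbes}, that $(\BGrb^\nabla(M), \otimes)$ is a symmetric monoidal 2-category which is \emph{strictly} unital with tensor unit $\CI_0$. Strict unitality is the feature I would exploit throughout: it makes the module action land on the nose in the correct morphism category and removes the need to insert coherence isomorphisms for the unit. For (1), I would exhibit the two symmetric monoidal structures on $\BGrb^\nabla(M)(\CI_0, \CI_0)$. The additive one is the direct sum $\oplus$ of Theorem~\ref{st:direct_sum_structure_on_morphisms_of_BGrbs}(1), whose unit $\Null$ is the 1-morphism whose underlying vector bundle has rank zero. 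The multiplicative one is the restriction of $\otimes$: since $\CI_0 \otimes \CI_0 = \CI_0$ strictly, $\otimes$ restricts to a functor on $\BGrb^\nabla(M)(\CI_0, \CI_0)$ with strict unit $1_{\CI_0}$, its symmetry being supplied by the braiding of the ambient symmetric monoidal 2-category. The left and right distributivity isomorphisms $\delta_l, \delta_r$ are exactly those of Theorem~\ref{st:direct_sum_structure_on_morphisms_of_BGrbs}(2) specialised to $\CG_0 = \CG_1 = \CI_0$, while the absorption isomorphisms come from the fact that tensoring with a rank-zero bundle yields a rank-zero bundle.

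For (2), the module action is the tensor product
\[
\otimes \colon \BGrb^\nabla(M)\big((\CG_0, \nabla^{\CG_0}),(\CG_1,\nabla^{\CG_1})\big) \times \BGrb^\nabla(M)(\CI_0, \CI_0) \longrightarrow \BGrb^\nabla(M)\big((\CG_0, \nabla^{\CG_0}),(\CG_1,\nabla^{\CG_1})\big),
\]
which is well defined precisely because $\CG_i \otimes \CI_0 = \CG_i$ strictly. The unit and associativity constraints of a right module category are then the unitor and associator of $\otimes$, and the compatibility of the action with the two symmetric monoidal structures $\oplus$ is exactly the content of Theorem~\ref{st:direct_sum_structure_on_morphisms_of_BGrbs}(2).

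For (3), I would first invoke Proposition~\ref{st:2-morphisms_have_simple_representatives} so that every 2-morphism is written in the form $[Z, 1, \psi]$. I then identify the algebra: by~\eqref{eq:identity_1-morphism_of_BGrb} the 1-morphism $1_{\CI_0}$ has underlying line bundle the trivial bundle over $M$, so a 2-endomorphism is an endomorphism $\psi$ of the trivial line bundle, i.e. multiplication by a smooth function, with vertical composition given by pointwise multiplication; this yields $\BGrb^\nabla(M)(1_{\CI_0}, 1_{\CI_0}) \cong C^\infty(M, \FC)$. The left and right actions on an arbitrary 2-morphism set are the two ways of tensoring by $[M, 1_M, f]$ as in the displayed formula; their associativity, unitality and mutual commutativity follow from functoriality of $\otimes$ on 2-morphisms~\eqref{eq:tensor_product_of_2-morphisms} and the symmetry of the monoidal structure. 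For (4) I would observe that the only changes are the restriction to parallel 2-morphisms and the corresponding endomorphism algebra. Since all structural isomorphisms used above ($\oplus$, $\delta_l$, $\delta_r$, the absorption maps, the unitors and the symmetry) are unitary parallel isomorphisms, each construction restricts verbatim to $\BGrb^\nabla_\rmpar(M)$; finally a parallel endomorphism of the trivial line bundle with trivial connection is covariantly constant, hence locally constant, giving $\BGrb^\nabla_\rmpar(M)(1_{\CI_0}, 1_{\CI_0}) \cong \FC$.

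The main obstacle is verifying the coherence axioms of a rig category and of a rig-module category in the sense of Laplaza, rather than any individual structural isomorphism. The key simplification I would emphasise is that, after passing to the simple representatives of Proposition~\ref{st:2-morphisms_have_simple_representatives}, all the isomorphisms $\oplus$, $\delta_l$, $\delta_r$, $a_l$, $a_r$, the unitors and the symmetry are pullbacks of the corresponding distributivity, associativity and braiding isomorphisms of $\HVBdl^\nabla(M)$, which is itself a rig category; hence every Laplaza coherence diagram commutes for exactly the same reason it commutes there, and the verification reduces to pullback-functoriality rather than a fresh diagram chase.
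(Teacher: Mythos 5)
Your proposal is correct and follows essentially the same route as the paper, whose entire proof is that the statement ``follows from the fact that $\BGrb^\nabla(M)$ is a symmetric monoidal 2-category (Theorem~\ref{st:2-categories_of_bundle_gerbes}) together with Theorem~\ref{st:direct_sum_structure_on_morphisms_of_BGrbs}''; you simply spell out the details (strict unitality of $\otimes$, the identification of the endomorphism algebras via Proposition~\ref{st:2-morphisms_have_simple_representatives}, and the restriction to $\BGrb^\nabla_\rmpar(M)$) that the paper leaves implicit. Your closing observation that the Laplaza coherence diagrams commute because all structural isomorphisms are pullbacks of those of the rig category $\HVBdl^\nabla(M)$ is a useful way of making explicit what is buried in the appendix proof of Theorem~\ref{st:direct_sum_structure_on_morphisms_of_BGrbs}.
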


\begin{proof}
This follows from the fact that $\BGrb^\nabla(M)$ is a symmetric monoidal 2-category (see Theorem~\ref{st:2-categories_of_bundle_gerbes}) together with Theorem~\ref{st:direct_sum_structure_on_morphisms_of_BGrbs}.
\end{proof}

\begin{remark}
All the above structures are strictly associative and unital as a consequence of the conventions adapted in Appendix~\ref{app:monoidal_structures_and_strictness}.
A treatment which explicitly displays the unitors and associators would also have unitors and associators for these actions.
However, they never enter non-trivially in any expression or computation.
\qen
\end{remark}

We further investigate the categorical structures on the morphism categories:

\begin{lemma}
Let $(\CG_0, \nabla^{\CG_0})$, $(\CG_1, \nabla^{\CG_1}) \in \BGrb^\nabla(M)$.
The following statements hold true:
\begin{myenumerate}
	\item The zero 1-morphism is a zero object in $\BGrb^\nabla(M)((\CG_0, \nabla^{\CG_0}), (\CG_1, \nabla^{\CG_1}))$.
	
	\item Monomorphisms in $\BGrb^\nabla(M)((\CG_0, \nabla^{\CG_0}), (\CG_1, \nabla^{\CG_1}))$ are the 2-morphisms whose underlying morphism of vector bundles is injective (i.e. monic).
	
	\item Epimorphisms in $\BGrb^\nabla(M)((\CG_0, \nabla^{\CG_0}), (\CG_1, \nabla^{\CG_1}))$ are the 2-morphisms whose underlying morphism of vector bundles is surjective (i.e. epic).
\end{myenumerate}
\end{lemma}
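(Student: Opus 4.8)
The plan is to push every statement down to the underlying morphisms of hermitean vector bundles, using the unique representatives of Proposition~\ref{st:2-morphisms_have_simple_representatives}. For a 2-morphism $\sigma\colon (E,\alpha)\to(E',\alpha')$ write $\psi_\sigma$ for the bundle morphism in its unique representative $[\hat Z,1_{\hat Z},\psi_\sigma]$ over $\hat Z=Z{\times}_{Y_{01}}Z'$; the assignment $\sigma\mapsto\psi_\sigma$ is a bijection onto the morphisms of $\HVBdl^\nabla(\hat Z)$ satisfying~\eqref{eq:2-morphism_compatibility_with_alphas}, and by the simple-representative formula for vertical composition (the one displayed in the proof of Proposition~\ref{st:kernels_of_2-morphisms_in_BGrb^nabla_par}) it intertwines $\circ_2$ with composition of the pulled-back underlying morphisms. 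Two elementary facts about a surjective submersion $f$ will be used throughout: $f^*$ is faithful on bundle morphisms (because $f$ is surjective), and $f^*$ preserves monics and epics in $\HVBdl^\nabla$, since the preimage of a dense open set under the open surjection $f$ is again dense.

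For (1), the zero 1-morphism carries the rank-zero bundle, for which~\eqref{eq:2-morphism_compatibility_with_alphas} is vacuous. Hence for any $(E,\alpha)$ the only bundle morphism from a pullback of the zero bundle to $\pr_Z^*E$, and the only one into it, is $0$; via the bijection above this is a \emph{unique} 2-morphism in each direction, so the zero 1-morphism is both initial and terminal, i.e. a zero object. For (2) and (3), one implication is then immediate: if $\psi_\tau$ is monic, then $\tau\circ_2\sigma=\tau\circ_2\sigma'$ gives $\pr^*\psi_\tau\circ\pr^*\psi_\sigma=\pr^*\psi_\tau\circ\pr^*\psi_{\sigma'}$ over a common refinement, and since $\pr^*\psi_\tau$ is still monic it cancels, whence $\sigma=\sigma'$ by faithfulness of pullback; the epic case is strictly dual, cancelling $\psi_\tau$ on the right. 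Moreover (2) and (3) are exchanged by the adjoint $(-)^*$ of~\eqref{eq:adjoint_of_2-morphism_in_BGrb}, which is a contravariant involution of the morphism category with $\psi_{\tau^*}=\psi_\tau^*$, and fibrewise $\ker\psi_\tau=(\ran\psi_\tau^*)^\perp$ shows $\psi_\tau$ monic iff $\psi_\tau^*$ epic. It therefore suffices to prove that a monic $\tau$ has monic $\psi_\tau$.

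The hard part is exactly this converse, and it is the step where the full 2-category is more delicate than $\HVBdl^\nabla$. Arguing contrapositively, assume $\psi_\tau$ is not monic, i.e. fails to be fibrewise injective on a dense open subset of $\hat Z$; I would then want a nonzero 2-morphism $\sigma$ with $\tau\circ_2\sigma=0$, equivalently a nonzero compatible bundle morphism into the fibrewise kernel of $\psi_\tau$. One half of the needed structure is free: condition~\eqref{eq:2-morphism_compatibility_with_alphas} for $\tau$ already forces $\alpha$ to carry $L_0\otimes d_0^*\ker(\psi_\tau)$ into $d_1^*\ker(\psi_\tau)\otimes L_1$, so the fibrewise kernel is $\alpha$-compatible regardless of any parallelism. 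The genuine obstacles are that this kernel need not have locally constant rank (hence need not be a subbundle), that where it is a subbundle the submersion $\hat\zeta$ restricted to that locus need no longer be surjective onto $Y_{01}$, and that for a non-parallel $\psi_\tau$ the kernel need not be preserved by the connection, so it does not automatically underlie a genuine 1-morphism whose structure isomorphism is \emph{parallel}.

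I expect to dispatch this cleanly in the parallel subcategory $\BGrb^\nabla_\rmpar(M)$ and then transport the easy directions verbatim: there $\psi_\tau$ commutes with parallel transport, so its rank is locally constant, monic coincides with fibrewise injective, and Proposition~\ref{st:kernels_of_2-morphisms_in_BGrb^nabla_par} already produces $\ker\psi_\tau$ as a hermitean subbundle with connection together with a nonzero inclusion 2-morphism into $(E,\alpha)$ that is annihilated by $\tau$ (and dually its Remark produces the cokernel for the epic case), contradicting monicity. For the full $\BGrb^\nabla(M)$ the remaining work is to restrict to the open locus of minimal $\dim\ker\psi_\tau$, where the kernel is a subbundle, and to exhibit a nonzero—not necessarily parallel—test 2-morphism supported there while repairing surjectivity by passing to a surjective refinement; this surjectivity/subbundle bookkeeping, rather than any conceptual point, is what I anticipate being the main nuisance, and it is precisely the place where the argument must lean on the unique-representative and refinement techniques of Appendix~\ref{ch:App:special_morphisms}. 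Once $\tau$ monic $\Rightarrow\psi_\tau$ monic is established, (3) follows by applying $(-)^*$.
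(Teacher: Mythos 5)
Your part~(1), the implication ``underlying morphism monic/epic $\Rightarrow$ 2-morphism monic/epic'', and the exchange of (2) and (3) via the adjoint $(-)^*$ are all correct, and are in fact more careful than the paper's own proof, which simply invokes the corresponding assertions in $\HVBdl^\nabla$ ``evaluated on the respective surjective submersions''. The gap sits exactly where you place it, in the converse direction for the full (non-parallel) category, and the repair you sketch cannot work. You propose to build a test 2-morphism out of $\ker\psi_\tau$ over the open locus where it is a subbundle, ``repairing surjectivity by passing to a surjective refinement''. But a 2-morphism needs a \emph{source 1-morphism}, and by Definition~\ref{def:1-morphisms_of_BGrbs} the structure isomorphism $\beta$ of any 1-morphism $(F,\nabla^F,\beta,Z',\zeta')$ is an isomorphism over all of $Z'{\times}_M Z'$; this forces $\rank F$ to be constant along the fibres of $\zeta'_M$ and hence to descend to a locally constant function on $M$. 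On connected $M$ no 1-morphism can therefore have underlying bundle ``the kernel over an open locus'' -- whose rank genuinely jumps precisely because $\psi_\tau$ is not parallel -- and no refinement helps: surjectivity of $\zeta'$ onto $Y_{01}$ is demanded by the definition, and refinements only pull back further, they never extend a bundle of jumping rank across the missing locus. So the source object your plan requires does not exist; this is a conceptual obstruction, not bookkeeping.

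The statement is nevertheless true, and can be proved with tools already in the paper without manufacturing any new 1-morphism: take the source to be $(E,\alpha)$ itself. By Corollary~\ref{st:sections_of_reduced_pairing_agree_with_2-homs} there are natural bijections $\BGrb^\nabla(M)((E,\alpha),(E,\alpha)) \cong \Gamma(M,\sfR[(E,\alpha),(E,\alpha)])$ and $\BGrb^\nabla(M)((E,\alpha),(E',\alpha')) \cong \Gamma(M,\sfR[(E,\alpha),(E',\alpha')])$, under which post-composition with $\tau$ is induced by an honest morphism of vector bundles $T \coloneqq \sfR[(E,\alpha),\tau]$ over $M$ (fibrewise, post-composition with $\iota_x^*\tau$, cf.~\eqref{eq:fibre_of_R[-,-]}). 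Since $\psi_\tau$ is a descent morphism (Lemma~\ref{st:dd_and_2-morphisms}), its fibrewise kernel dimension descends to a function on $M$, so non-monicity of $\psi_\tau$ produces an open $U \subseteq M$ over which $\iota_x^*\tau$ has nonzero kernel, hence over which $\ker T \neq 0$ (it contains the rank-one maps into $\ker \iota_x^*\tau$). Now run your bump-function argument inside the vector bundle $\sfR[(E,\alpha),(E,\alpha)]$ over $M$: shrink $U$ to where $\rank T$ is locally constant, take a local section of $\ker T$, cut off and extend by zero. The resulting nonzero section corresponds to a nonzero 2-morphism $\sigma$ with $\tau \circ_2 \sigma = 0$, so $\tau$ is not monic; the epic half then follows from your duality. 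This is presumably what the paper's one-line proof intends, but the need to realise test objects \emph{inside} the morphism category is exactly the point it leaves unsaid, and exactly the point at which your kernel-as-source construction breaks down.
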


\begin{proof}
These statements follow from the corresponding well-known assertions in the category $\HVBdl^\nabla$ evaluated on the respective surjective submersions.
\end{proof}

\begin{lemma}
In $\BGrb^\nabla_\rmpar(M)((\CG_0, \nabla^{\CG_0}), (\CG_1, \nabla^{\CG_1}))$, every monomorphism is a kernel, and every epimorphism is a cokernel.
\end{lemma}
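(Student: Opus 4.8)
The plan is to reduce the whole statement to the corresponding, well-known facts in the category $\HVBdl^\nabla_\rmpar$ of hermitean vector bundles with connection and parallel morphisms, in which every monomorphism is a kernel and every epimorphism is a cokernel, and then to transport the resulting factorisations along the explicit descriptions of kernels and cokernels of parallel 2-morphisms already at our disposal. The enabling observation is that by Proposition~\ref{st:2-morphisms_have_simple_representatives} every 2-morphism has a unique representative $[\hat Z, 1, \psi_0]$ over the minimal surjective submersion $\hat Z = Z \times_{Y_{01}} Z'$, so that its entire content is a single parallel bundle morphism $\psi_0$ on $\hat Z$. By the preceding Lemma, a monomorphism then corresponds to a fibrewise injective $\psi_0$ and an epimorphism to a fibrewise surjective $\psi_0$; in either case $\psi_0$ is a mono, resp.\ epi, in $\HVBdl^\nabla_\rmpar(\hat Z)$.

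For a monomorphism $m \colon (E,\alpha) \to (E',\alpha')$ I would take $q = \coker(m)$, which by the Remark following Proposition~\ref{st:kernels_of_2-morphisms_in_BGrb^nabla_par} is represented by the 1-morphism built from $\coker(\psi_0) \cong (\ran \psi_0)^\perp$, and then show that $m$ is a kernel of $q$. The relation $q \circ_2 m = 0$ holds because it is witnessed fibrewise by the fact that $\psi_0$ lands in $\ran(\psi_0)$, and it therefore induces a canonical 2-morphism $(E,\alpha) \to \ker(q)$. By Proposition~\ref{st:kernels_of_2-morphisms_in_BGrb^nabla_par}, $\ker(q)$ is the 1-morphism built from $\ker(\coker(\psi_0)) \cong \ran(\psi_0)$, and the canonical comparison 2-morphism is represented on underlying bundles by $\psi_0$ with its codomain restricted to its range; since $\psi_0$ is injective this restriction is a unitary parallel isomorphism, so the comparison 2-morphism is invertible and $m = \ker(q)$. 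The dual claim that every epimorphism is a cokernel is entirely analogous: for an epimorphism $e$ one takes $k = \ker(e)$ as in Proposition~\ref{st:kernels_of_2-morphisms_in_BGrb^nabla_par} and checks that the canonical comparison $\coker(k) \to (E',\alpha')$ is represented by $\psi_0$ corestricted from $(\ker \psi_0)^\perp$, which is an isomorphism precisely because $\psi_0$ is surjective.

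The substance of the argument is already contained in Proposition~\ref{st:kernels_of_2-morphisms_in_BGrb^nabla_par}, where the universal property of the kernel in $\BGrb^\nabla_\rmpar(M)$ was established by reducing factorisations through $\ker$ to the corresponding factorisations in $\HVBdl^\nabla$; the compatibility of every restricted bundle map with the twisting isomorphisms $\alpha$ and $\alpha'$ is automatic from that same proposition, and additivity of the ambient morphism categories is Proposition~\ref{st:Ab-enrichment_of_BGrb^nabla}. The main obstacle is therefore purely a matter of bookkeeping: one must keep track of the several fibre products of surjective submersions entering the vertical composition $\circ_2$ together with the maps $\dd_{(-)}$, and verify that the two comparison 2-morphisms descend to their minimal representatives over $\hat Z$, so that invertibility of $\psi_0$ on the corresponding subbundle genuinely yields invertibility of the comparison 2-morphisms. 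No idea beyond the abelianness of $\HVBdl^\nabla_\rmpar$ is required, and the parallelity of $\psi_0$ guarantees throughout that the relevant kernels, images and orthogonal complements are smooth subbundles preserved by the connections.
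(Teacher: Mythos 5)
Your proposal is correct, and its skeleton matches the paper's: pass to the unique minimal representative $[\hat{Z}, 1, \psi_0]$ via Proposition~\ref{st:2-morphisms_have_simple_representatives}, invoke the explicit kernel/cokernel constructions of Proposition~\ref{st:kernels_of_2-morphisms_in_BGrb^nabla_par} and the remark following it, and let the abelianness of $\HVBdl^\nabla_\rmpar$ do the fibrewise work. The two arguments diverge only in the final step. The paper treats the epic half and verifies the universal property of the cokernel directly: given a 2-morphism $\phi$ with $\phi \circ_2 \widehat{\iota}_{\ker(\psi)} = 0$, it constructs the factorising 2-morphism by hand, setting $\phi'(e) = \phi(f)$ for any $\psi$-preimage $f$ of $e$, with uniqueness supplied by the abelianness of $\HVBdl^\nabla_\rmpar$ over the relevant fibre product, and then checks compatibility with $\alpha$, $\alpha'$, $\beta$. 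You instead run the comparison-morphism argument: the canonical 2-morphism $(E,\alpha) \to \ker(\coker(m))$, produced by the universal property in part (3) of Proposition~\ref{st:kernels_of_2-morphisms_in_BGrb^nabla_par}, is represented by $\psi_0$ corestricted to $\ran(\psi_0)$, hence invertible when $\psi_0$ is injective. Both routes are legitimate and of comparable length; yours trades the explicit construction of the factorisation for the identification $\ker(\coker(\psi_0)) \cong \ran(\psi_0)$, which makes the structure of the argument (mono $\cong$ kernel of its cokernel) more visible, while the paper's version makes the factorising morphism completely explicit. One small correction: the corestriction of $\psi_0$ to its range is a parallel isomorphism but not in general a \emph{unitary} one --- an injective parallel morphism need not be isometric, e.g.\ $2 \cdot 1_E$ --- but this is harmless, since invertibility of a 2-morphism in $\BGrb^\nabla_\rmpar(M)$ only requires the underlying parallel bundle map to be invertible, and the inverse of a parallel isomorphism is again parallel.
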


\begin{proof}
We prove the epic half; the monic half is similar.
Consider $(F, \nabla^F, \beta, X, \xi)$ and $(E, \nabla^E, \alpha, Z, \zeta) \in \BGrb^\nabla_\rmpar(M)((\CG_0, \nabla^{\CG_0}), (\CG_1, \nabla^{\CG_1}))$, and let $[X {\times}_{Y_{01}} Z,\, 1,\, \psi] \colon (F, \beta) \to (E,\alpha)$ be an epic in $\BGrb^\nabla_\rmpar(M)((\CG_0, \nabla^{\CG_0}), (\CG_1, \nabla^{\CG_1}))$.
Recall from Proposition~\ref{st:kernels_of_2-morphisms_in_BGrb^nabla_par} that every 2-morphism in this sub-2-category of bundle gerbes with connection has a kernel.
We show that if $[X {\times}_{Y_{01}} Z,\, 1,\, \psi]$ is epic, then it is the cokernel of $\widehat{\iota}_{\ker(\psi)}$ (cf.~\eqref{eq:kernel inclusion}).

Consider a third 1-morphism $(E', \nabla^{E'}, \alpha', Z', \zeta') \colon (\CG_0, \nabla^{\CG_0}) \to (\CG_1, \nabla^{\CG_1})$ and a 2-morphism $[X {\times}_{Y_{01}} Z', 1, \phi] \colon (F, \beta) \to (E', \alpha')$ such that $[X {\times}_{Y_{10}} Z', 1, \phi] \circ_2 \widehat{\iota}_{\ker(\psi)} = 0$.
Then, over $X {\times}_{Y_{01}} Z {\times}_{Y_{01}} X {\times}_{Y_{01}} Z'$, the composition of the pullbacks of $\phi$ and $\dd_{(F, \beta)} \circ \iota_{\ker(\psi)}$ is zero.
Because of Lemma~\ref{st:dd_and_2-morphisms} together with $\dd_{(E',\alpha')|Z' {\times}_{Z'} Z'} = 1$ (see proof of Proposition~\ref{st:2-morphisms_have_simple_representatives}), we can infer that over $Z {\times}_{Y_{01}} X {\times}_{Y_{01}} Z'$, the morphism $\pr_{(X {\times}_{Y_{01}} Z')}^*\phi$ vanishes on $\ker(\psi)$.
This is where, again, the well-known fact that every epic in $\HVBdl^\nabla_\rmpar(Z {\times}_{Y_{01}} X {\times}_{Y_{01}} Z')$ is a cokernel can be used.
We define
\begin{equation}
	[Z {\times}_{Y_{01}} X {\times}_{Y_{01}} Z', 1, \phi'] \colon (E,\alpha) \to (E', \alpha')\,,
\end{equation}
with $\phi'$ given in the usual way, by setting $\phi'(e) = \phi(f)$ for $e$ in the pullback of $E$ and $f$ any preimage of $e$ under the pullback of $\psi$.
The morphism $\phi'$ is the unique candidate to yield such a 2-morphism as $\HVBdl^\nabla_\rmpar(Z {\times}_{Y_{01}} X {\times}_{Y_{01}} Z')$ is abelian.
Finally, compatibility of $\phi'$ with $\alpha$ and $\alpha'$ is a consequence of $\ker(\psi)$ being preserved by $\beta$ and the compatibility of $\phi$ and $\psi$ with $\alpha$, $\beta$, and $\alpha'$.
\end{proof}

We can now summarise the results of this section as the following two assertions.

\begin{theorem}
\label{st:enrichment_in_BGrb}
The 2-categories of bundle gerbes with connection on $M$ have the following structures.
\begin{myenumerate}
	\item The 2-category $\BGrb^\nabla(M)$ is canonically enriched in cartesian monoidal, preadditive, right $\BGrb^\nabla(M)(\CI_0, \CI_0)$-rig-module categories.
	
	\item The 2-category $\BGrb^\nabla_\rmpar(M)$ is canonically enriched in cartesian monoidal, semisimple abelian, right $\BGrb^\nabla_\rmpar(M)(\CI_0, \CI_0)$-rig-module categories.
\end{myenumerate}
\end{theorem}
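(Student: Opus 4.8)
The plan is to assemble the final theorem from the structural results already established in this section, treating it as a packaging statement rather than a source of new content. The target is to verify, for each of the two 2-categories, that the morphism categories carry all the adjectives claimed: \emph{cartesian monoidal}, \emph{preadditive} (respectively \emph{semisimple abelian}), and \emph{right rig-module over the endomorphism rig of $\CI_0$}. First I would fix a pair $(\CG_0, \nabla^{\CG_0})$, $(\CG_1, \nabla^{\CG_1}) \in \BGrb^\nabla(M)$ and work entirely inside $\BGrb^\nabla(M)((\CG_0, \nabla^{\CG_0}), (\CG_1, \nabla^{\CG_1}))$, then note at the end that every construction is natural in the two arguments, so the enrichment is canonical in the sense of an enriched 2-category.

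For part (1), the cartesian monoidal and preadditive claims are immediate: Proposition~\ref{st:additive_structure_on_2-morphisms_of_BGrbs} together with Proposition~\ref{st:Ab-enrichment_of_BGrb^nabla} gives the $\Ab$-enrichment (hence preadditivity), and the Proposition following Theorem~\ref{st:direct_sum_structure_on_morphisms_of_BGrbs} identifies the direct sum $\oplus$ of Theorem~\ref{st:direct_sum_structure_on_morphisms_of_BGrbs}(1) as simultaneously product and coproduct, which is exactly the cartesian monoidal condition in a preadditive setting. The rig-module structure is then Corollary~\ref{st:module_structures_on_morphisms_in_BGrb}(1)--(2): $\BGrb^\nabla(M)(\CI_0, \CI_0)$ is a commutative rig category and the generic morphism category is a right module category over it. The compatibility of the module action with the additive and monoidal structures needed to call this a \emph{rig-module} follows from parts (2) and (3) of Theorem~\ref{st:direct_sum_structure_on_morphisms_of_BGrbs}, which supply the distributivity isomorphisms for both $\otimes$ and $\circ$. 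So part (1) is a matter of citing these four results and observing that together they furnish precisely the data in Definitions~\ref{def:rig_category} and~\ref{def:rig_module_category}.

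For part (2) the only genuine upgrade is from \emph{preadditive} to \emph{semisimple abelian}, since the cartesian-monoidal and rig-module claims transfer verbatim via Corollary~\ref{st:module_structures_on_morphisms_in_BGrb}(4). To get abelianness I would check the standard axioms one by one: the existence of a zero object, binary products and coproducts (already done), kernels and cokernels, and the condition that every monomorphism is a kernel and every epimorphism is a cokernel. Proposition~\ref{st:kernels_of_2-morphisms_in_BGrb^nabla_par} and the following Remark supply kernels and cokernels for parallel 2-morphisms; the two Lemmas just before the theorem identify monos and epis with fibrewise-injective and fibrewise-surjective bundle maps and show every mono is a kernel and every epi a cokernel. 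These together are exactly the abelian category axioms. Semisimplicity I would deduce from the fact, recorded in the rig-category Example, that $\HVBdl^\nabla_\rmpar$ has $\oplus$ cartesian and is abelian and semisimple (every parallel short exact sequence of hermitean bundles splits orthogonally, since the orthogonal complement of a parallel subbundle is again parallel); this semisimplicity is inherited patchwise by the morphism categories, because every object decomposes via the orthogonal splittings that Proposition~\ref{st:kernels_of_2-morphisms_in_BGrb^nabla_par} and its Remark already use.

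The main obstacle, and the only point requiring real care rather than citation, is the restriction to $\BGrb^\nabla_\rmpar(M)$ in part (2): the kernel and cokernel constructions of Proposition~\ref{st:kernels_of_2-morphisms_in_BGrb^nabla_par} rely essentially on parallelism, since it is exactly the parallelism of $\psi$ that makes its fibrewise kernel a \emph{parallel} subbundle and hence an object in the category, and likewise parallelism of the metric that makes the orthogonal complement parallel. In the larger category $\BGrb^\nabla(M)$, where 2-morphisms need not be parallel, these subbundles need not admit compatible connections, so the morphism categories are only preadditive and not abelian --- which is precisely why part (1) claims less than part (2). I would therefore be careful to invoke Proposition~\ref{st:kernels_of_2-morphisms_in_BGrb^nabla_par}, the two Lemmas, and the semisimplicity of $\HVBdl^\nabla_\rmpar$ \emph{only} in the parallel setting, and to confirm that the rig-module and cartesian-monoidal structures of part (1) genuinely descend to the sub-2-category unchanged, which they do because the direct sum, tensor product and composition of Theorem~\ref{st:direct_sum_structure_on_morphisms_of_BGrbs} are all built from unitary parallel structural isomorphisms of hermitean bundles and hence stay inside $\BGrb^\nabla_\rmpar(M)$.
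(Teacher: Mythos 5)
Your proposal is correct, and for everything except the semisimplicity claim it coincides with the paper's proof: the paper's entire argument consists of the observation that only semisimplicity in (2) remains to be shown, all other claims being covered by Proposition~\ref{st:additive_structure_on_2-morphisms_of_BGrbs}, Proposition~\ref{st:Ab-enrichment_of_BGrb^nabla}, Theorem~\ref{st:direct_sum_structure_on_morphisms_of_BGrbs}, Corollary~\ref{st:module_structures_on_morphisms_in_BGrb}, Proposition~\ref{st:kernels_of_2-morphisms_in_BGrb^nabla_par} and the two lemmas on monomorphisms and epimorphisms, exactly as you cite them. Where you genuinely diverge is the semisimplicity step, the only step the paper actually proves. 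You argue Maschke-style: a subobject of $(E,\alpha)$ splits off because its orthogonal complement is again parallel and, by unitarity of $\alpha$, again compatible with $\alpha$, so every short exact sequence splits and induction on $\rank(E)$ gives a finite decomposition into simples. The paper instead splits objects using 2-endomorphisms: a parallel 2-endomorphism $[W,\omega,\psi]$ of $(E,\alpha)$ that is not a multiple of the identity has fibrewise-constant eigenvalues (by parallelism of $\psi$ and $1_E$), and the kernels of $\lambda\, 1_{(E,\alpha)} - [W,\omega,\psi]$ decompose $(E,\alpha)$, with $\rank(E)$ bounding the number of summands. The two mechanisms are complementary: the paper's is shorter and reuses Proposition~\ref{st:kernels_of_2-morphisms_in_BGrb^nabla_par} directly, but as stated it only yields a decomposition into 1-morphisms with scalar endomorphism algebras, i.e.\ indecomposables; concluding that these are simple requires precisely your observation that a proper subobject would yield a non-scalar idempotent endomorphism (orthogonal projection), equivalently that orthogonal complements stay in the category. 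So your route is the more self-contained one. The only soft spot in your write-up is the phrase that semisimplicity is ``inherited patchwise'' from $\HVBdl^\nabla_\rmpar(M)$: inheritance is not automatic, since a subobject in the morphism category must respect the twisting isomorphism $\alpha$; the correct statement, which you do gesture at via Proposition~\ref{st:kernels_of_2-morphisms_in_BGrb^nabla_par} and its Remark, is that unitarity of $\alpha$ forces the orthogonal complement of an $\alpha$-compatible parallel subbundle to be $\alpha$-compatible as well, and that point deserves to be made explicit.
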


\begin{proof}
We are only left to show the acclaimed semisimplicity in (2).
Any 2-endomorphism $[W, \omega, \psi]$ of a 1morphism $(E,\alpha)$ in $\BGrb^\nabla_\rmpar(M)$ that is not a multiple of the identity splits the morphism into morphisms built from the fibrewise eigenspaces of the underlying morphism of vector bundles.
These are the kernels of $\lambda\, 1_{(E,\alpha)} - [W, \omega, \psi]$ for $\lambda$ an eigenvalue of $\psi$.%
\footnote{Observe that $\lambda$ does not depend on the fibre since both $1_E$ and $\psi$ are parallel.}
There is an upper bound on the number of simple 1-morphisms that a given 1-morphism decomposes into in this way, given by $\rank(E)$.
Thus, any 1-morphism in $\BGrb^\nabla_\rmpar(M)$ is a direct sum of finitely many simple 1-morphisms.
\end{proof}

\begin{remark}
In general, there will exist infinitely many 2-isomorphism classes of simple objects in $\BGrb^\nabla_\rmpar(M)((\CG_0, \nabla^{\CG_0}), (\CG_1, \nabla^{\CG_1}))$.
\qen
\end{remark}

The above module structures can be reduced to smaller, but equivalent, rig categories by precomposing the module actions with the equivalences of categories
\begin{equation}
\begin{alignedat}{2}
	\HVBdl^\nabla(M) &\hookrightarrow \BGrb^\nabla(M)(\CI_0, \CI_0) \qquad &&\text{in case (1), and}
	\\
	\HVBdl^\nabla_\rmpar(M) &\hookrightarrow \BGrb^\nabla_\rmpar(M)(\CI_0, \CI_0) \qquad &&\text{in case (2).}
\end{alignedat}
\end{equation}

We can even extend this structure somewhat further, although this is not going to be important to us in the remainder of this thesis.
\begin{definition}[$\scR$-algebra]
Let $(\scR, \otimes, \One_\scR, \oplus, \Null_\scR)$ be a rig category (see Definition~\ref{def:rig_category}) and let $(\scC, \oplus_\scC, \Null_\scC, \otimes)$ be a left/right $\scR$-module category (Definition~\ref{def:rig_module_category}) endowed with an additional (symmetric) monoidal structure $(\otimes_\scC, 1_\scC)$ such that $(\scC, \oplus_\scC, \Null_\scC, \otimes_\scC, 1_\scC)$ is itself a rig category which satisfies the categorical versions of the left/right algebra axioms with respect to the $\scR$-action $\otimes$.
Then we call $(\scC, \otimes_\scC, \One_\scC, \oplus_\scC, \Null_\scC, \otimes)$ a \emph{left/right $\scR$-algebra}.
\end{definition}

\begin{example}
\begin{myenumerate}
	\item Every rig category is both a left and a right algebra over itself.
	
	\item $\Hilb$ is a commutative rig category, and hence an algebra over itself.
	
	\item For any bundle gerbe with connection $(\CG,\nabla^\CG)$ on $M$, its category of endomorphisms $\BGrb^\nabla(M)((\CG, \nabla^\CG), (\CG, \nabla^\CG))$ is an algebra category over $\BGrb^\nabla(M)(\CI_0,\CI_0)$ via the tensor product of bundle gerbes (compare Theorem~\ref{st:enrichment_in_BGrb}).
	The algebra product of endomorphisms of $(\CG, \nabla^\CG)$ is given by composition.
	Note that these algebra actions are naturally isomorphic when taken from the left or from the right by the symmetry of the tensor product in $\HVBdl^\nabla(M)$.
	\qen
\end{myenumerate}
\end{example}

\section{Pairings of morphisms -- closed structures}
\label{sect:Pairings_and_inner_hom_of_morphisms_in_BGrb}

In this section we will develop pairings of morphisms in the 2-categories of bundle gerbes introduced in Section~\ref{sect:The_2-category_of_BGrbs}.
Consider two bundle gerbes $(\CG_0,\nabla^{\CG_0})$ and $(\CG_1,\nabla^{\CG_1})$ with connection on $M$, and let $(E, \nabla^E, \alpha, Z, \zeta) \in \BGrb^\nabla(M)((\CG_0,\nabla^{\CG_0}),(\CG_1,\nabla^{\CG_1}))$ be a 1-morphism.
Set
\begin{equation}
\label{eq:dual_of_1-morphism_in_BGrb}
	\Theta(E, \nabla^E, \alpha, Z, \zeta) \coloneqq \big( (E,\nabla^E)^*,\, \alpha^{-\sft},\, Z,\, \zeta \big)\,.
\end{equation}
Applying $(-)^{-\sft}$ to~\eqref{eq:1-morphisms_compatibility_with_BGrb_multiplications}, noting that both the operations of inverse and transpose reverse the order of composition, we see that this defines a 1-morphism
\begin{equation}
	\Theta (E,\alpha) \colon (\CG_0,\nabla^{\CG_0})^* \to (\CG_1,\nabla^{\CG_1})^*\,.
\end{equation}
Observe that
\begin{equation}
	\Theta (1_{(\CG_0,\nabla^{\CG_0})}) = 1_{(\CG_0,\nabla^{\CG_0})^*}.
\end{equation}

For $[W,\omega,\psi]$ a 2-morphism from $(E, \nabla^E, \alpha, Z, \zeta)$ to $(F, \nabla^F, \beta, X, \xi)$, we set
\begin{equation}
\label{eq:dual_of_2-morphism_in_BGrb}
	\Theta \big( [W, \omega, \psi] \big) \coloneqq [W, \sw \circ \omega, \psi^\sft] \colon \Theta(F, \beta) \to \Theta(E,\alpha)\,,
\end{equation}
where, as before, $\sw \colon Z {\times}_{Y_{01}} X \to X {\times}_{Y_{01}} Z$, $(x,z) \mapsto (z,x)$.
It follows readily that this defines a 2-morphism (apply $(-)^\sft$ to~\eqref{eq:2-morphism_compatibility_with_alphas}).
Lemma~\ref{st:dd-def_and_properties} implies that $\Theta(1_{(E,\alpha)}) = 1_{\Theta(E,\alpha)}$.
Since dual, inverse and transpose in $\HVBdl^\nabla$ are compatible with direct sums and tensor products, it follows, moreover, that the operation thus defined is compatible with composition and direct sums of 1-morphisms and 2-morphisms, as well as the tensor product of bundle gerbes.
Thus, we obtain

\begin{theorem}
\label{st:Riesz_dual_is_functorial}
The assignments~\eqref{eq:dual_of_1-morphism_in_BGrb} and~\eqref{eq:dual_of_2-morphism_in_BGrb}, together with the dual of bundle gerbes, define a 2-functor
\begin{equation}
	\Theta \colon \BGrb^\nabla(M)^{\opp_2} \to \BGrb^\nabla(M)
\end{equation}
which is covariant on the level of 1-morphisms and contravariant on 2-morphisms.
It is compatible with tensor products, direct sums, and the enrichments from Section~\ref{sect:Additive_structures_on_morphisms_in_BGrb}.
\end{theorem}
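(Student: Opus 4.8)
The plan is to verify the axioms of a (weak) 2-functor one structure level at a time, reducing every claim to the corresponding algebraic property of the dual, transpose and inverse operations on morphisms in $\HVBdl^\nabla$. Well-definedness of $\Theta$ on 1- and 2-morphisms has already been recorded above: applying $(-)^{-\sft}$ to \eqref{eq:1-morphisms_compatibility_with_BGrb_multiplications} and $(-)^\sft$ to \eqref{eq:2-morphism_compatibility_with_alphas} turns those commuting diagrams into the defining diagrams for $\Theta(E,\alpha)$ and $\Theta([W,\omega,\psi])$, and preservation of identities, $\Theta(1_{(\CG,\nabla^\CG)}) = 1_{(\CG,\nabla^\CG)^*}$ together with $\Theta(1_{(E,\alpha)}) = 1_{\Theta(E,\alpha)}$, follows from the explicit form of the identity 1- and 2-morphisms and Lemma~\ref{st:dd-def_and_properties}. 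So the substance of the proof is compatibility with the three kinds of composition.

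For covariance on 1-morphisms the crucial algebraic input is that inverse-transpose \emph{preserves} the order of composition, $(g \circ f)^{-\sft} = g^{-\sft} \circ f^{-\sft}$, since inverse and transpose each reverse it. Hence, for composable 1-morphisms $(E,\alpha)$ and $(E',\alpha')$, applying $\Theta$ to the composite and using the canonical isomorphism $(A \otimes B)^* \cong A^* \otimes B^*$ in $\HVBdl^\nabla$ reproduces exactly the data of $\Theta(E',\alpha') \circ \Theta(E,\alpha)$; this duality isomorphism supplies the compositor required of a weak 2-functor, and its coherence is inherited from the coherence of duals in $\HVBdl^\nabla$. First I would write both sides out over the common refinement $Z {\times}_{Y_1} Z'$ and match the transition isomorphisms term by term.

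For contravariance on 2-morphisms only the single, order-reversing operation $(-)^\sft$ appears, which is precisely what the target $\opp_2$-variance demands. For vertically composable 2-morphisms $\chi = [W,\omega,\psi]$ and $\chi' = [W',\omega',\psi']$ I would pass to representatives over the minimal surjective submersions (Proposition~\ref{st:2-morphisms_have_simple_representatives}), so that \eqref{eq:vertical_composition_of_2-morphisms} reduces to a composite $\psi' \circ \psi$ of bundle maps, and invoke $(\psi' \circ \psi)^\sft = \psi^\sft \circ (\psi')^\sft$ to obtain $\Theta(\chi' \circ_2 \chi) = \Theta(\chi) \circ_2 \Theta(\chi')$. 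For horizontal composition I would feed \eqref{eq:horizontal_composition_of_2-morphisms} into $\Theta$ and use that transpose is compatible with the tensor product and with the morphisms $\dd_{(-)}$ of Lemma~\ref{st:dd-def_and_properties}, tracking throughout the swap $\sw$ introduced by \eqref{eq:dual_of_2-morphism_in_BGrb}. Compatibility with the tensor product, the direct sum \eqref{eq:direct_sum_on_1-morphisms}--\eqref{eq:direct_sum_on_2-morphisms}, and hence with the rig-module enrichment of Theorems~\ref{st:direct_sum_structure_on_morphisms_of_BGrbs} and~\ref{st:enrichment_in_BGrb}, then follows because dual, inverse and transpose in $\HVBdl^\nabla$ all intertwine $\otimes$ and $\oplus$ through the same distributivity and duality isomorphisms $\sfd_l, \sfd_r$ used to build those structures.

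The genuine difficulty I expect is not in any single algebraic identity --- each is a one-line fact about $\HVBdl^\nabla$ --- but in organising the surjective-submersion data and pullbacks so that the compositors are natural and satisfy the associativity and unit coherence axioms of a 2-functor, and in handling the direction-reversal introduced by $\sw$ uniformly across vertical and horizontal composition under the $\opp_2$ convention. Passing to the simple representatives of Proposition~\ref{st:2-morphisms_have_simple_representatives} should absorb most of this overhead, after which coherence collapses onto the already-established strict associativity and unitality of Theorem~\ref{st:2-categories_of_bundle_gerbes}.
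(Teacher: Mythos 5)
Your proposal is correct and follows essentially the same route as the paper: the paper's (very terse) argument likewise reduces everything to the facts that inverse-transpose preserves composition order, that transpose reverses it, that dual, inverse and transpose in $\HVBdl^\nabla$ intertwine $\otimes$ and $\oplus$, and uses Lemma~\ref{st:dd-def_and_properties} for preservation of identity 2-morphisms. The only difference is one of detail: the paper suppresses the compositor and coherence bookkeeping entirely via the strictness conventions of Appendix~\ref{app:monoidal_structures_and_strictness}, which you correctly observe absorbs that overhead.
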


\begin{definition}[Riesz dual functor]
\label{def:Riesz_dual_functor}
The functor $\Theta \colon \BGrb^\nabla(M) \to \BGrb^\nabla(M)$ from Theorem~\ref{st:Riesz_dual_is_functorial} is called the \emph{Riesz dual}.
\end{definition}
It will become apparent in Section~\ref{sect:2-Hspace_of_a_BGrb} why this nomenclature makes sense.

\begin{theorem}
\label{st:internal_hom_of_morphisms_in_BGrb--existence_and_naturality}
Let $(\CG_i, \nabla^{\CG_i})$, for $i \in \{0, 1, 2, 3\}$, be bundle gerbes with connection on $M$, and let $(F, \nabla^F, \beta, X, \xi) \colon (\CG_1, \nabla^{\CG_1}) \to (\CG_3, \nabla^{\CG_3})$.
There is an adjoint pair of functors
\begin{equation}
\begin{tikzcd}[column sep=2cm]
	\BGrb^\nabla(M)\big( \CG_0, \CG_2 \big) \ar[r, shift left=0.175cm, "\perp"', "{(-) \otimes (F, \beta)}"] & \BGrb^\nabla(M)\big( \CG_0 \otimes \CG_1, \CG_2 \otimes \CG_3 \big) \ar[l, shift left=0.175cm, "{[(F, \beta), -]}"] \,.
\end{tikzcd}
\end{equation}
\end{theorem}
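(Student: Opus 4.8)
The plan is to write down the right adjoint explicitly, reduce the required hom-set bijection to the ordinary tensor--hom adjunction for hermitean vector bundles, and then check that the bundle-gerbe data are respected. Every bundle gerbe is an invertible object of $\BGrb^\nabla(M)$: the evaluation $\delta_{(\CG,\nabla^\CG)}\colon (\CG,\nabla^\CG)^*\otimes(\CG,\nabla^\CG)\to\CI_0$ has rank-one underlying bundle and is therefore a weak equivalence by Proposition~\ref{st:classification_of_isomps_of_BGrbs}, so each of $(\CG_1,\nabla^{\CG_1})$ and $(\CG_3,\nabla^{\CG_3})$ carries evaluation and coevaluation $1$-morphisms exhibiting its dual as a two-sided inverse. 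Using these together with the Riesz dual $\Theta$ of Definition~\ref{def:Riesz_dual_functor}, I would define
\[
	[(F,\beta),(G,\gamma)] := \big(1_{(\CG_2,\nabla^{\CG_2})}\otimes\delta_{(\CG_3,\nabla^{\CG_3})}\big)\circ\big((G,\gamma)\otimes\Theta(F,\beta)\big)\circ\big(1_{(\CG_0,\nabla^{\CG_0})}\otimes\kappa_{(\CG_1,\nabla^{\CG_1})}\big),
\]
where $\kappa_{(\CG_1,\nabla^{\CG_1})}$ is the coevaluation and the symmetry isomorphisms aligning the tensor factors are suppressed; tracing source and target through the composite shows it is a $1$-morphism $(\CG_0,\nabla^{\CG_0})\to(\CG_2,\nabla^{\CG_2})$. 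On $2$-morphisms the functor acts by tensoring with $1_{\Theta(F,\beta)}$ and whiskering with $\delta$ and $\kappa$, so its functoriality is immediate from that of $\otimes$ and $\circ$ and the enrichment of Theorem~\ref{st:enrichment_in_BGrb}.

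Next I would exhibit the adjunction. The unit $\eta_{(E,\alpha)}\colon (E,\alpha)\to[(F,\beta),(E,\alpha)\otimes(F,\beta)]$ and counit $\epsilon_{(G,\gamma)}\colon[(F,\beta),(G,\gamma)]\otimes(F,\beta)\to(G,\gamma)$ are the whiskered coevaluation and evaluation $2$-morphisms, and the proposed bijection sends a $2$-morphism $\sigma\colon(E,\alpha)\otimes(F,\beta)\to(G,\gamma)$ to $[(F,\beta),\sigma]\circ_2\eta_{(E,\alpha)}$, with candidate inverse $\tau\mapsto\epsilon_{(G,\gamma)}\circ_2(\tau\otimes 1_{(F,\beta)})$. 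To see these are mutually inverse I would pass to the simple representatives of Proposition~\ref{st:2-morphisms_have_simple_representatives}: over a common refinement a $2$-morphism $(E,\alpha)\otimes(F,\beta)\to(G,\gamma)$ is a morphism of hermitean vector bundles intertwining $\alpha\otimes\beta$ with $\gamma$, and the fibrewise isomorphism $\Hom(A\otimes B,C)\cong\Hom(A,B^*\otimes C)$ provides the underlying bijection.

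The principal work, and the step I expect to be the main obstacle, is checking that this fibrewise adjunction respects all of the gerbe structure: that it intertwines the compatibility condition~\eqref{eq:2-morphism_compatibility_with_alphas} for $(E,\alpha)\otimes(F,\beta)\to(G,\gamma)$ with the corresponding condition for $(E,\alpha)\to[(F,\beta),(G,\gamma)]$, that it is independent of the chosen representatives and so descends to equivalence classes, and that it is natural in both $(E,\alpha)$ and $(G,\gamma)$. Equivalently, one must verify the triangle identities $\epsilon_{(E,\alpha)\otimes(F,\beta)}\circ_2(\eta_{(E,\alpha)}\otimes 1_{(F,\beta)})=1$ and $[(F,\beta),\epsilon_{(G,\gamma)}]\circ_2\eta_{[(F,\beta),(G,\gamma)]}=1$. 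This forces one to thread the bundle gerbe multiplications $\mu_i$ and the structural isomorphisms $\dd_{(-)}$ of Lemma~\ref{st:dd-def_and_properties} consistently through $\delta$, $\kappa$ and $\Theta$. I expect the difficulty to be purely bookkeeping rather than conceptual: the strictness conventions of Appendix~\ref{app:monoidal_structures_and_strictness} let one suppress associators and unitors, and since every structural map in sight is a unitary parallel isomorphism, each identity reduces to the corresponding, already established, identity in $\HVBdl^\nabla$.
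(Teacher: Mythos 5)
Your candidate right adjoint is exactly the composite that the paper itself writes down as $\sfH((F,\beta),(G,\gamma))$ at the end of Section~\ref{sect:Pairings_and_inner_hom_of_morphisms_in_BGrb} (up to which gerbe duality sits on which side), and the paper observes there that $\sfH$ is 2-isomorphic to its $[-,-]$; so the choice of functor is viable, and your whiskering definition on 2-morphisms is legitimate. The genuine gap is in how you propose to establish the adjunction. You describe the unit and counit as ``the whiskered coevaluation and evaluation $2$-morphisms'', but the only (co)evaluations you have introduced are the $1$-morphisms $\delta_{\CG_1}$, $\kappa_{\CG_1}$, $\delta_{\CG_3}$ witnessing invertibility of the \emph{objects} $\CG_1,\CG_3$ — and those are already consumed in defining the functor itself. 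What $\eta_{(E,\alpha)}$ and $\epsilon_{(G,\gamma)}$ actually require is duality data for the (generally non-invertible, higher-rank) $1$-morphism $(F,\beta)$: $2$-morphisms whose underlying bundle maps are the fibrewise trace $F^*\otimes F\to I_0$ and unit $I_0\to F\otimes F^*$, suitably twisted so as to intertwine $\beta^{-\sft}\otimes\beta$ with the structural data of the gerbe dualities. Such $2$-morphisms do exist — parallelity of trace and unit is standard, and compatibility with~\eqref{eq:2-morphism_compatibility_with_alphas} follows because the evaluation pairing is invariant under $\beta^{-\sft}\otimes\beta$ by unitarity of $\beta$ — but they are nowhere constructed in your proposal, and their construction is the mathematical heart of the theorem rather than part of the bookkeeping.

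Second, your claim that ``each identity reduces to the corresponding, already established, identity in $\HVBdl^\nabla$'' is too quick for your choice of right adjoint, for a reason the paper flags explicitly: with the $\sfH$-type composite, the underlying vector bundles live over $X\times_M U$ and carry uncancelled twist factors $L_1^*$ and $L_3$, so the fibrewise isomorphism $\Hom(A\otimes B,C)\cong\Hom(A,B^*\otimes C)$ cannot be applied as it stands; one must first use $\beta$ and $\gamma$ to pass to the subspace $X\times_{Y_{13}}U\hookrightarrow X\times_M U$ where the twists cancel, or equivalently descend all $2$-morphisms to the minimal surjective submersion. This is precisely what the paper's proof is organised around: it defines $[(F,\beta),(G,\gamma)]$ directly over $X\times_{Y_{13}}U$ (so the twists cancel from the outset), defines the action on $2$-morphisms through the reduction functor $\sfR$, and obtains the hom-set bijection $\tau^F_{E,G}$ by applying the fibrewise tensor--hom adjunction to reduced morphisms over $Y_{0123}$ and transporting back along the equivalence $(\sfS,\sfR)$ of Theorem~\ref{st:BGrb_FP_hookrightarrow_BGrb_is_equivalence}, with naturality checked on descended data. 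Your triangle-identity route can be completed, but only after this same descent apparatus (Proposition~\ref{st:2-morphisms_have_simple_representatives}, Lemma~\ref{st:dd-def_and_properties}, and the $\sfR/\sfS$ equivalence) has been set up and used to put all the relevant bundles over a common base; that step is the missing idea, not a formality.
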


The proof of Theorem~\ref{st:internal_hom_of_morphisms_in_BGrb--existence_and_naturality} is lengthy and has therefore been deferred to Appendix~\ref{app:Proof_of_adjunction_theorem}.
The main idea is to set, for a 1-morphism $(G, \nabla^G, \gamma, U, \chi) \colon (\CG_0, \nabla^{\CG_0}) \otimes (\CG_2, \nabla^{\CG_2}) \to (\CG_1, \nabla^{\CG_1}) \otimes (\CG_3, \nabla^{\CG_3})$,
\begin{align}
\label{eq:internal_hom_on_1-morphisms}
	\big[ (F, \beta), (G,\gamma) \big]
	&= \Big( \pr_U^* (G, \nabla^G) \otimes \pr_X^* (F, \nabla^F)^*,
	\\*
	&\qquad (1 \otimes \pr_{Y_3}^{[2]*}\delta_{L_3}) \circ ( \pr_U^{[2]*} \gamma \otimes \pr_X^{[2]*} \beta^{-\sft}) \circ (1 \otimes \pr_{Y_1}^{[2]*}\delta_{L_1}^{-1}),\,
	\notag\\*
	&\qquad X {\times}_{Y_{13}} U,\, \chi_{Y_{02}} \circ \pr_U \Big)\,, \notag
\end{align}
with $\chi_{Y_{02}} = \pr_{Y_{02}} \circ \chi$.
In this way, we cancel the twists introduced by the line bundles of $(\CG_1, \nabla^{\CG_1})$ and $(\CG_3, \nabla^{\CG_3})$, by combining $L_1$ with $L_1^*$ and $L_3$ with $L_3^*$.
In order to do so, we have to work over the fibre product $U {\times}_{Y_{13}} X$.
Over $(U {\times}_{Y_{13}} X)^{[2]}$ we can easily produce pairs $L_1 \otimes L_1^*$ and $L_3 \otimes L_3^*$, which we can then use to define a strucutral isomorphism as spelled out in~\eqref{eq:internal_hom_on_1-morphisms}.
In the proof we use descent for $\HVBdl^\nabla$ together with results from Appendix~\ref{app:special_morphisms_and_descent} to get around the problems caused by the occurrence of several different surjective submersions in the morphisms involved.

The most important case for us is that of $(\CG_0, \nabla^{\CG_0}) = (\CG_2, \nabla^{\CG_2}) = \CI_0$.
In this situation, we have an adjunction
\begin{equation}
\begin{tikzcd}[column sep=2cm]
	\BGrb^\nabla(M)\big( \CI_0, \CI_0 \big) \ar[r, shift left=0.175cm, "\perp"', "{(-) \otimes (F, \beta)}"] & \BGrb^\nabla(M)\big( \CG_1, \CG_3 \big)\,. \ar[l, shift left=0.175cm, "{[(F, \beta), -]}"]
\end{tikzcd}
\end{equation}
Explicitly, for $(E, \nabla^E, \alpha, Z, \zeta) \in \BGrb^\nabla(M) ((\CG_1, \nabla^{\CG_1}), (\CG_3, \nabla^{\CG_3}))$,
\begin{align}
\label{eq:[-,-]_explicit}
	\big[ (F, \beta), (E,\alpha) \big] &= \Big( \pr_Z^* (E, \nabla^E) \otimes \pr_X^* (F, \nabla^F)^*,
	\notag\\*
	&\qquad \pr_{Y_3}^{[2]*}\delta_{L_3} \circ ( \pr_Z^{[2]*} \alpha \otimes \pr_X^{[2]*} \beta^{-\sft}) \circ \pr_{Y_1}^{[2]*}\delta_{L_1}^{-1},\,
	\notag\\*
	&\qquad X {\times}_{Y_{13}} Z,\, \zeta \circ \pr_Z = \xi \circ \pr_X \Big)
	\notag\\
	&= \Big( \Hom \big( \pr_X^* (F, \nabla^F), \pr_Z^* (E, \nabla^E) \big),\, \Hom(\beta, \alpha),
	\\*
	&\qquad X {\times}_{Y_{13}} Z,\, \zeta \circ \pr_Z = \xi \circ \pr_X \Big)\,. \notag
\end{align}
Furthermore, $\Hom \big( \pr_X^* (F, \nabla^F), \pr_Z^* (E, \nabla^E) \big)$ is the hermitean vector bundle with connection whose fibre consists of linear maps between the respective fibres of $\pr_X^*F$ and $\pr_Z^*E$ (i.e. $\Hom$ is the internal hom in $(\HVBdl^\nabla, \otimes\,)$), while $\Hom(\beta, \alpha)$ is the composition
\begin{equation}
\begin{tikzcd}[column sep=2cm, row sep=1cm]
	d_0^*\, \Hom(\pr_X^*F, \pr_Z^*E) \ar[r, "\cong"] \ar[ddd, dashed, "{\Hom(\beta, \alpha)}"] & \Hom \big( d_0^*\, \pr_X^*F,\, d_1^*\, \pr_Z^*E \big) \ar[d, "\cong"]
	\\
	& \Hom \big( L_1 \otimes d_0^*\, \pr_X^*F, \,L_1 \otimes d_0^*\, \pr_Z^*E \big) \ar[d, "{\alpha \circ (-) \circ \beta^{-1}}"]
	\\
	& \Hom \big( d_0^*\, \pr_X^*F \otimes L_3,\, d_0^*\, \pr_Z^*E \otimes L_3 \big) \ar[d, "\cong"]
	\\
	d_1^*\, \Hom(\pr_X^*F, \pr_Z^*E) & \Hom \big( d_0^*\, \pr_X^*F,\, d_0^*\, \pr_Z^*E \big) \ar[l, "\cong"]
\end{tikzcd}
\end{equation}

Observing that there are canonical 2-isomorphisms
\begin{equation}
\label{eq:Riesz_dual_and_internal_hom}
	\big[ (F, \beta), (E,\alpha) \big] \cong \Theta \big[ (E,\alpha),(F, \beta) \big]
	\cong \big[ \Theta(E,\alpha), \Theta(F, \beta) \big]
\end{equation}
we thus obtain a bifunctor
\begin{equation}
	[-,-] \colon \BGrb^\nabla(M)(\CG_1, \CG_3)^{\opp} \times \BGrb^\nabla(M)(\CG_1, \CG_3) \to \BGrb^\nabla(M)(\CI_0, \CI_0)\,.
\end{equation}
By construction, for $(K, \kappa), (K', \kappa') \colon \CI_0 \to \CI_0$, there are further natural isomorphisms
\begin{equation}
\label{eq:internal_hom_and_tensors}
\begin{aligned}
	&[(F, \beta), (E,\alpha) \otimes (K, \kappa)] \cong [(F, \beta), (E,\alpha)]  \otimes (K, \kappa)\,,
	\\
	&[(F, \beta) \otimes (K, \kappa), (E,\alpha)] \cong \Theta (K, \kappa) \otimes [(F, \beta), (E,\alpha)]\,,
	\\
	&[(K, \kappa), (K', \kappa')] \cong \Theta(K, \kappa) \otimes (K', \kappa')\,.
\end{aligned}
\end{equation}

The structure of the tensor product of morphisms in $\BGrb^\nabla(M)$, the Riesz dual, and the bifunctor $[-,-]$ are related to each other in a well studied manner:

\begin{definition}[Two-variable adjunction, closed module category~\cite{Hovey--Model_categories}]
\label{def:2-Var_adjunction_and_closed_module_cat}
Let $\scC, \scD, \scE$ be categories.
A \emph{two-variable adjunction} $\scC {\times} \scD \to \scE$ is defined to be a tuple $( \otimes, \hom_l, \hom_r, \varphi_l, \varphi_r)$, consisting of functors
\begin{equation}
	\otimes \colon \scC {\times} \scD \to \scE\,, \quad
	\hom_l \colon \scC^\opp {\times} \scE \to \scD\,, \quad
	\hom_r \colon \scD^\opp {\times} \scE \to \scC\,,
\end{equation}
together with isomorphisms, natural in each variable,
\begin{equation}
\begin{tikzcd}
	\scD \big( d, \hom_l(c,e) \big) & \scE \big( c \otimes d, e \big) \ar[r, "(\varphi_r)_{c,e}^d"', "\cong"] \ar[l, "(\varphi_l)_{d,e}^c", "\cong"'] & \scD \big( c, \hom_r(d,e) \big)\,.
\end{tikzcd}
\end{equation}
A (symmetric) monoidal category $(\scD, \otimes_\scD, \One_\scC)$ such that $\otimes_\scD$ is part of a two-variable adjunction is called a \emph{closed monoidal category}.
If $(\scD, \otimes_\scD, \One_\scD)$ is closed monoidal, a right $\scD$-module category $\scC$ is called a \emph{closed $\scD$-module category} if the module action functor $\otimes$ is part of a two-variable adjunction
\begin{equation}
	\otimes \colon \scC {\times} \scD \to \scC\,, \quad
	\hom_l \colon \scC^\opp {\times} \scC \to \scD\,, \quad
	\hom_r \colon \scD^\opp {\times} \scC \to \scC\,.
\end{equation}
\end{definition}

\begin{definition}[Closed rig-category, closed rig-module category]
\label{def:closed_rig-module}
We call a rig-category $(\scR, \otimes, \One_\scR, \oplus, \Null_\scR)$ a \emph{closed rig-category} if $\otimes_\scR$ is part of a two-variable adjunction whose $\hom_l$ and $\hom_r$ are monoidal with respect to $\oplus_\scR$ in each argument.
A rig-module category over a closed rig category $(\scR, \otimes, \One_\scR)$ (cf. Definition~\ref{def:rig_module_category}) is called a \emph{closed rig-module category over $\scR$} if the action bifunctor $\otimes$ is part of a two-variable adjunction whose $\hom_l$ and $\hom_r$ are monoidal with respect to both $\oplus_\scC$ and $\oplus_\scR$ in their respective arguments.
\end{definition}

\begin{theorem}
\label{st:2-var_adjunction_on_BGrb}
The rig-category $(\BGrb^\nabla(M)(\CI_0, \CI_0), \otimes, 1_{\CI_0}, \oplus, 0_{\CI_0})$ is closed in the sense of Definition~\ref{def:closed_rig-module}, and
$\BGrb^\nabla((\CG_0, \nabla^{\CG_0}), (\CG_1, \nabla^{\CG_1}))$ is a closed symmetric monoidal rig-module category over $(\BGrb^\nabla(M)(\CI_0, \CI_0), \otimes, 1_{\CI_0}, \oplus, 0_{\CI_0})$.
Both categories have their two-variable adjunction given by the tensor product inherited from $\BGrb^\nabla(M)$, together with
\begin{equation}
	\hom_l = [-,-]\,, \quad \text{and} \quad \hom_r = (-) \otimes \Theta(-)\,.
\end{equation}
The natural isomorphisms $\varphi_l$, $\varphi_r$ are those worked out in the proof of Theorem~\ref{st:internal_hom_of_morphisms_in_BGrb--existence_and_naturality} as well as the dualities in~\eqref{eq:Riesz_dual_and_internal_hom}.
\end{theorem}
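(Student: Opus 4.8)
The plan is to prove the statement not by fresh construction but by assembling the functors and natural isomorphisms produced earlier and checking that they satisfy Definitions~\ref{def:2-Var_adjunction_and_closed_module_cat} and~\ref{def:closed_rig-module}. Write $\scR = \BGrb^\nabla(M)(\CI_0, \CI_0)$ and $\scC = \BGrb^\nabla(M)((\CG_0, \nabla^{\CG_0}), (\CG_1, \nabla^{\CG_1}))$. By Corollary~\ref{st:module_structures_on_morphisms_in_BGrb}, $\scR$ is a commutative rig-category and $\scC$ is a right $\scR$-module category, both with structure induced by the tensor product of $\BGrb^\nabla(M)$. The first assertion of the theorem is the special case $\CG_0 = \CG_1 = \CI_0$ of the second, since a rig-category is an algebra, hence a module, over itself; so I would treat both simultaneously by keeping $\scC$ general.

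First I would lay out the three functors of the two-variable adjunction $\scC {\times} \scR \to \scC$: the action $\otimes$ (from Corollary~\ref{st:module_structures_on_morphisms_in_BGrb}); the left hom $\hom_l = [-,-]\colon \scC^\opp {\times} \scC \to \scR$, which is a bifunctor by Theorem~\ref{st:internal_hom_of_morphisms_in_BGrb--existence_and_naturality} together with the identifications~\eqref{eq:Riesz_dual_and_internal_hom}; and the right hom $\hom_r = (-) \otimes \Theta(-)\colon \scR^\opp {\times} \scC \to \scC$. The latter is functorial because $\Theta$ is a contravariant functor on morphism categories (Theorem~\ref{st:Riesz_dual_is_functorial}, contravariant on $2$-morphisms) and the action is functorial; the variances then match, as $\Theta$ reverses $2$-morphisms so that $(-) \otimes \Theta(-)$ is contravariant in its $\scR$-argument and covariant in its $\scC$-argument.

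The heart of the argument is that both natural isomorphisms descend from the single adjunction of Theorem~\ref{st:internal_hom_of_morphisms_in_BGrb--existence_and_naturality}, applied with the tensor factor taken to be, respectively, the $\scC$-element and the $\scR$-element. For $\varphi_l$ I would instantiate that theorem at $(F,\beta) = c \in \scC$ with source and target gerbes $\CI_0$, obtaining the adjoint pair $(-) \otimes c \dashv [c,-]$ between $\scR$ and $\scC$, i.e.
\[
	\scC(d \otimes c,\, e) \;\cong\; \scR\big( d,\, [c, e] \big)\,,
\]
and then compose with the symmetry isomorphism $d \otimes c \cong c \otimes d$ of the symmetric monoidal tensor product to get $\varphi_l\colon \scC(c \otimes d, e) \cong \scR(d, \hom_l(c,e))$. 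For $\varphi_r$ I would instead instantiate the theorem at $(F,\beta) = d \in \scR$ (source and target $\CG_0,\CG_1$), giving $(-) \otimes d \dashv [d,-]$ on $\scC$, and rewrite the right adjoint using $[d,e] \cong e \otimes \Theta(d)$, which one reads directly off the explicit formula~\eqref{eq:[-,-]_explicit} (there the underlying bundle is $\Hom(\pr^* K, \pr^* E) \cong \pr^* E \otimes \pr^* K^*$) and~\eqref{eq:Riesz_dual_and_internal_hom}. Naturality in the two adjunction variables is immediate from Theorem~\ref{st:internal_hom_of_morphisms_in_BGrb--existence_and_naturality}; naturality in the remaining tensor-factor variable I would deduce from the manifest functoriality of~\eqref{eq:[-,-]_explicit} in $(F,\beta)$, so that $\varphi_l, \varphi_r$ are natural in all three arguments.

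Finally I would upgrade the two-variable adjunction to the closed rig-module structure of Definition~\ref{def:closed_rig-module}, verifying that $\hom_l$ and $\hom_r$ are additive (monoidal with respect to $\oplus$) in each argument, separately for $\oplus_\scC$ and $\oplus_\scR$. For $\hom_r = (-) \otimes \Theta(-)$ this follows from distributivity of the tensor product over the direct sum (part~(2) of Theorem~\ref{st:direct_sum_structure_on_morphisms_of_BGrbs}) together with the compatibility of $\Theta$ with direct sums (Theorem~\ref{st:Riesz_dual_is_functorial}); for $\hom_l = [-,-]$ the required natural isomorphisms are precisely~\eqref{eq:internal_hom_and_tensors}, refined to direct sums via Theorem~\ref{st:direct_sum_structure_on_morphisms_of_BGrbs} and~\eqref{eq:Riesz_dual_and_internal_hom}. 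I expect the main obstacle to be purely organisational: keeping the variances consistent and checking that the symmetry isomorphism and the dualities~\eqref{eq:Riesz_dual_and_internal_hom} intertwine so that $\varphi_l$ and $\varphi_r$ are genuinely natural in the shared tensor-factor variable — all of the analytic content having already been discharged in the appendix proof of Theorem~\ref{st:internal_hom_of_morphisms_in_BGrb--existence_and_naturality}.
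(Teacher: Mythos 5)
Your proposal is correct and follows essentially the same route as the paper's proof: both natural isomorphisms are obtained as instances of Theorem~\ref{st:internal_hom_of_morphisms_in_BGrb--existence_and_naturality} (once fixing the $\scC$-factor, once fixing the $\scR$-factor $-\otimes(K,\kappa)$), combined with the symmetry of the tensor product, the identity $[(K,\kappa),(E,\alpha)] \cong \Theta(K,\kappa)\otimes(E,\alpha)$ read off~\eqref{eq:[-,-]_explicit}, and~\eqref{eq:Riesz_dual_and_internal_hom} for naturality in the remaining variable. The $\oplus$-compatibility is likewise handled as in the paper, by reducing it to the distributivity statements of Theorem~\ref{st:direct_sum_structure_on_morphisms_of_BGrbs} via the observation that $[-,-]$ is a reduced form of $\Theta(-)\otimes(-)$.
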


In particular, for any $(E,\alpha), (F, \beta) \colon (\CG_0, \nabla^{\CG_0}) \to (\CG_1, \nabla^{\CG_1})$ and $(K, \kappa) \colon \CI_0 \to \CI_0$ we have natural isomorphisms
\begin{equation}
\label{eq:2-var_adjunction_of_[-,-]}
\begin{aligned}
	&\BGrb^\nabla(M) \big( (K,\kappa) \otimes (F, \beta),\, (E,\alpha) \big)
	\\*
	&\cong \BGrb^\nabla(M) \big( (F, \beta),\, (E,\alpha) \otimes \Theta(K, \kappa) \big)
	\\*
	&\cong \BGrb^\nabla(M) \big( (K, \kappa),\, [(F, \beta), (E,\alpha)] \big)\,,
\end{aligned}
\end{equation}
coherent with the respective distributivities.

\begin{proof}
A natural isomorphism from the first to the third line in~\eqref{eq:2-var_adjunction_of_[-,-]} already follows from Theorem~\ref{st:internal_hom_of_morphisms_in_BGrb--existence_and_naturality} (together with~\eqref{eq:Riesz_dual_and_internal_hom} to get naturality in $(F, \beta)$).
The isomorphism from the first to the second line is another instance of Theorem~\ref{st:internal_hom_of_morphisms_in_BGrb--existence_and_naturality}, applied to $- \otimes (K, \kappa)$, using the symmetry of the tensor product in $\BGrb^\nabla(M)$ and the observation that, for $(K, \kappa)$ an endomorphism of $\CI_0$, we have $[(K, \kappa), (E, \alpha)] = \Theta (K, \kappa) \otimes (E, \alpha)$.
The proof of the compatibility with the direct sum in both arguments of $[-,-]$ is analogous to the proof of the distributivity of the tensor product over the direct sum (see Theorem~\ref{st:direct_sum_structure_on_morphisms_of_BGrbs}), since $[(E, \alpha), (F, \beta)]$ is a reduced version of the tensor product $\Theta(E, \alpha) \otimes (F, \beta)$.
\end{proof}

\begin{remark}
\label{rmk:2-var_adjunction_restricts_to_parallel_subcat}
It is important to note that Theorem~\ref{st:internal_hom_of_morphisms_in_BGrb--existence_and_naturality} and Theorem~\ref{st:2-var_adjunction_on_BGrb} restrict to the 2-category $\BGrb^\nabla_\rmpar(M)$.
The arguments in the respective proofs restrict verbatim to parallel 2-morphisms of bundle gerbes.
\end{remark}

Theorem~\ref{st:2-var_adjunction_on_BGrb} and equation~\eqref{eq:2-var_adjunction_of_[-,-]} have a useful corollary:

\begin{corollary}
\label{st:sections_of_reduced_pairing_agree_with_2-homs}
For two 1-morphisms $(E,\alpha), (F, \beta) \colon (\CG_0, \nabla^{\CG_0}) \to (\CG_1, \nabla^{\CG_1})$ there are natural bijections
\begin{equation}
\begin{aligned}
	\BGrb^\nabla(M) \big( (E,\alpha), (F, \beta) \big) &\cong \BGrb^\nabla(M) \big( 1_{\CI_0}, [(E,\alpha), (F, \beta)] \big)
	\\*
	&\cong \Gamma \big( M, \sfR[(E,\alpha), (F, \beta)] \big)\,, \quad \text{and}
	\\[0.2cm]
	\BGrb^\nabla_\rmpar(M) \big( (E,\alpha), (F, \beta) \big) &\cong \BGrb^\nabla_\rmpar(M) \big( 1_{\CI_0}, [(E,\alpha), (F, \beta)] \big)
	\\*
	&\cong \Gamma_\rmpar \big( M, \sfR[(E,\alpha), (F, \beta)] \big)\,.
\end{aligned}
\end{equation}
\end{corollary}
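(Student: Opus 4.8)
The plan is to obtain both bijections from machinery already in place, reducing the first to the two-variable adjunction of Theorem~\ref{st:2-var_adjunction_on_BGrb} and the second to the descent equivalence for $\HVBdl^\nabla$. First I would specialise the isomorphisms~\eqref{eq:2-var_adjunction_of_[-,-]} to the case $(K,\kappa) = 1_{\CI_0}$. The isomorphism relating the first and third lines there reads
\[
	\BGrb^\nabla(M)\big( 1_{\CI_0} \otimes (F,\beta),\, (E,\alpha)\big)
	\cong \BGrb^\nabla(M)\big(1_{\CI_0},\, [(F,\beta),(E,\alpha)]\big)\,.
\]
Since $\CI_0$ is the monoidal unit and the tensor product is strictly unital (Theorem~\ref{st:2-categories_of_bundle_gerbes}), one has $1_{\CI_0}\otimes(F,\beta) = (F,\beta)$, so the left-hand side is $\BGrb^\nabla(M)((F,\beta),(E,\alpha))$. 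As $(E,\alpha)$ and $(F,\beta)$ range over arbitrary $1$-morphisms, relabelling them yields the first displayed bijection; naturality in both arguments is exactly the naturality asserted in Theorem~\ref{st:internal_hom_of_morphisms_in_BGrb--existence_and_naturality}, combined with the dualities~\eqref{eq:Riesz_dual_and_internal_hom} that make $[-,-]$ a bifunctor.

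For the second bijection I would use that $[(E,\alpha),(F,\beta)]$ is an endomorphism of $\CI_0$, and that, by the description of morphisms $\CI_0 \to \CI_0$ as descent data in Section~\ref{sect:The_2-category_of_BGrbs} (see Appendix~\ref{app:special_morphisms_and_descent}), the realisation functor $\sfR$ induces an equivalence $\BGrb^\nabla(M)(\CI_0,\CI_0)\simeq\HVBdl^\nabla(M)$. It then remains to identify $\sfR(1_{\CI_0})$ with the trivial line bundle $I_0$: writing out~\eqref{eq:identity_1-morphism_of_BGrb} for $\CI_0$, where $Y = M$, $L = I_0$ and $Y^{[2]} = M$, shows that the identity $1$-morphism carries $I_0$ with the trivial descent datum, so $\sfR(1_{\CI_0}) \cong I_0$. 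Consequently
\[
	\BGrb^\nabla(M)\big(1_{\CI_0}, [(E,\alpha),(F,\beta)]\big)
	\cong \HVBdl^\nabla(M)\big(I_0,\, \sfR[(E,\alpha),(F,\beta)]\big)
	\cong \Gamma\big(M, \sfR[(E,\alpha),(F,\beta)]\big)\,,
\]
the last step being the standard identification of a smooth bundle morphism out of $I_0$ with its value on the unit section. Functoriality of $\sfR$ and of these identifications supplies the naturality.

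The $\BGrb^\nabla_\rmpar(M)$ statement follows by running the identical argument inside the parallel sub-$2$-category: by Remark~\ref{rmk:2-var_adjunction_restricts_to_parallel_subcat} the adjunction restricts, $\sfR$ restricts to an equivalence $\BGrb^\nabla_\rmpar(M)(\CI_0,\CI_0)\simeq\HVBdl^\nabla_\rmpar(M)$, and morphisms $I_0 \to V$ in $\HVBdl^\nabla_\rmpar(M)$ are precisely the parallel sections, giving $\Gamma_\rmpar$. The only point demanding genuine care is the descent bookkeeping behind the equivalence $\sfR$ on endomorphisms of $\CI_0$ — checking that it is well defined over all common refinements of the surjective submersions appearing in such morphisms, and confirming $\sfR(1_{\CI_0})\cong I_0$ — but this is furnished by the relative sheaf property of $\HVBdl^\nabla$ together with the explicit form~\eqref{eq:identity_1-morphism_of_BGrb}, so I expect no new obstacle beyond notational overhead.
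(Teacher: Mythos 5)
Your proposal is correct and follows essentially the same route as the paper's proof: specialise the two-variable adjunction of Theorem~\ref{st:2-var_adjunction_on_BGrb} to the monoidal unit $1_{\CI_0}$ (using strict unitality of $\otimes$), then pass through the equivalence $\sfR \colon \BGrb^\nabla(M)(\CI_0,\CI_0) \simeq \HVBdl^\nabla(M)$ and the identification of morphisms out of $I_0$ with sections, restricting to the parallel sub-2-category for the second statement. Your explicit verification that $\sfR(1_{\CI_0}) \cong I_0$ is a detail the paper leaves implicit, but this is a cosmetic difference, not a different argument.
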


Here we have made use of the equivalence of categories
\begin{equation}
	\sfR : \BGrb^\nabla(M) \big( (\CG_0, \nabla^{\CG_0}), (\CG_1, \nabla^{\CG_1}) \big) \longleftrightarrow
	\BGrb^\nabla_\FP(M) \big( (\CG_0, \nabla^{\CG_0}), (\CG_1, \nabla^{\CG_1}) \big) : \sfS
\end{equation}
found in~\cite{Waldorf--Thesis,Waldorf--More_morphisms} (see also Theorem~\ref{st:BGrb_FP_hookrightarrow_BGrb_is_equivalence}).

\begin{proof}
From Theorem~\ref{st:internal_hom_of_morphisms_in_BGrb--existence_and_naturality} and Theorem~\ref{st:2-var_adjunction_on_BGrb} we have natural bijections
\begin{equation}
\begin{aligned}
	\BGrb^\nabla(M) \big( (E,\alpha), (F, \beta) \big)
	&\cong \BGrb^\nabla(M) \big( 1_{\CI_0} \otimes (E,\alpha), (F, \beta) \big)
	\\*
	&\cong \BGrb^\nabla(M) \big( 1_{\CI_0}, [(E,\alpha), (F, \beta)] \big)
	\\
	&\cong \HVBdl^\nabla(M) \big( I_0, \sfR([(E,\alpha), (F, \beta)]) \big)
	\\
	&\cong \Gamma \big( M, \sfR[(E,\alpha), (F, \beta)] \big)\,.
\end{aligned}
\end{equation}
The statement can also be checked directly by observing from~\eqref{eq:[-,-]_explicit} that descent data for a section of $\sfR[(E,\alpha), (F, \beta)]$ corresponds under the internal hom $\tau$ of $\HVBdl^\nabla$ to 2-morphisms $(E,\alpha) \to (F, \beta)$ defined over the minimal surjective submersion.
\end{proof}

As a by-product, Corollary~\ref{st:sections_of_reduced_pairing_agree_with_2-homs} specifies the fibre of $\sfR [(E,\alpha), (F, \beta)]$:
For $x \in M$ and $\iota_x \colon \pt \hookrightarrow M$ its inclusion into $M$, there is a canonical isomorphism in $\Hilb$
\begin{equation}
\label{eq:fibre_of_R[-,-]}
	\big( \sfR [(E,\alpha), (F, \beta)] \big)_{|x}
	\cong
	\BGrb(\pt) \big( \iota_x^*(E,\alpha)\, \iota_x^*(F, \beta) \big)\,.
\end{equation}
Furthermore, we infer the following structure on 2-morphisms in $\BGrb^\nabla_\rmpar(M)$:

\begin{corollary}
\label{st:Hilb-enrichment_of_MorCats_in_BGrb}
The categories $\BGrb^\nabla_\rmpar(M) \big( (E,\alpha), (F, \beta) \big)$ are enriched in $\Hilb$.
\end{corollary}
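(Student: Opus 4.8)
The plan is to deduce the enrichment from the identification of $2$-morphism sets with spaces of parallel sections provided by Corollary~\ref{st:sections_of_reduced_pairing_agree_with_2-homs}, and then to exhibit the Hilbert-space structure on such spaces directly. Concretely, for a pair of $1$-morphisms $(E,\alpha), (F, \beta) \colon (\CG_0, \nabla^{\CG_0}) \to (\CG_1, \nabla^{\CG_1})$, that corollary gives a natural bijection of the $2$-morphism hom-set with a space of parallel sections,
\[
	\BGrb^\nabla_\rmpar(M)\big( (E,\alpha), (F, \beta) \big) \cong \Gamma_\rmpar\big( M, \sfR[(E,\alpha), (F, \beta)] \big)\,,
\]
and I would first record that $V \coloneqq \sfR[(E,\alpha), (F, \beta)]$ is an honest object of $\HVBdl^\nabla(M)$, i.e. a hermitean vector bundle with a metric-compatible connection, since it is the descent along unitary parallel isomorphisms of the internal-hom bundle $\Hom(\pr_X^*F, \pr_Z^*E)$ appearing in~\eqref{eq:[-,-]_explicit}, whose metric and connection are induced from those of $E$ and $F$.

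The core of the argument is then the claim that, for any $(V, \nabla^V) \in \HVBdl^\nabla(M)$ on a connected $M$, the space $\Gamma_\rmpar(M, V)$ of parallel sections is canonically a finite-dimensional Hilbert space. Finite-dimensionality I would obtain from the observation that a parallel section is determined by its value at a single point: fixing $x_0 \in M$, the evaluation $s \mapsto s(x_0)$ is injective, since $\nabla^V s = 0$ means $s$ is reconstructed from $s(x_0)$ by parallel transport in a path-independent manner, whence $\dim_\FC \Gamma_\rmpar(M, V) \leq \rank(V)$. For the inner product I would invoke metric compatibility: for parallel $s, t$ the function $x \mapsto h_V(s(x), t(x))$ has vanishing derivative, hence is constant, and setting $\langle s, t\rangle \coloneqq h_V(s(x_0), t(x_0))$ defines a hermitean, positive-definite pairing (positivity because $s \neq 0$ forces $s(x_0) \neq 0$ by injectivity of evaluation). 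This is exactly the inner product induced under evaluation from the Hilbert--Schmidt structure on the fibre~\eqref{eq:fibre_of_R[-,-]}.

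It remains to match this structure with the enrichment data. The complex vector-space structure on $\Gamma_\rmpar(M,V)$ agrees with the $\FC$-module structure on $2$-morphisms from Corollary~\ref{st:module_structures_on_morphisms_in_BGrb}(4) --- the relevant bijection is assembled from additive natural isomorphisms and is therefore linear --- while vertical composition $\circ_2$ was already seen to be biadditive, and $\FC$-bilinear by the module structure, in the discussion following Proposition~\ref{st:additive_structure_on_2-morphisms_of_BGrbs}. Hence each hom-set is an object of $\Hilb$; since $\Hilb$ has all linear maps as morphisms and its tensor product in finite dimensions is the algebraic one, the bilinear composition factors through the tensor product as a morphism of $\Hilb$, the identities $1_{(E,\alpha)}$ furnish the unit morphisms $\FC \to \BGrb^\nabla_\rmpar(M)((E,\alpha),(E,\alpha))$, and associativity and unitality of the enriched composition are inherited from the underlying $2$-category. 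This gives the $\Hilb$-enrichment.

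The main obstacle, and the reason the statement is confined to $\BGrb^\nabla_\rmpar(M)$, is precisely the step producing the finite-dimensional Hilbert space from parallel sections: it is the parallelism of the $2$-morphisms that makes $h_V(s,t)$ locally constant (giving a well-defined inner product) and makes evaluation at a point injective (giving finite-dimensionality). Without the $\rmpar$ restriction neither holds. I would therefore flag that metric compatibility of the connection is used essentially, and that finite-dimensionality relies on $M$ having finitely many components --- e.g. $M$ connected, the standing assumption in the geometric-quantisation setting of interest --- with the inner product taken as a sum over components in general.
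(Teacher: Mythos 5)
Your proposal is correct and follows essentially the same route as the paper: both identify $\BGrb^\nabla_\rmpar(M)((E,\alpha),(F,\beta))$ with the parallel sections of the hermitean bundle $\sfR[(E,\alpha),(F,\beta)] \in \HVBdl^\nabla(M)$ via Corollary~\ref{st:sections_of_reduced_pairing_agree_with_2-homs}, and take as inner product the (constant) evaluation of the bundle metric on pairs of parallel sections. The only differences are that you spell out finite-dimensionality, the enrichment axioms, and the dependence on $\pi_0(M)$ being finite --- points the paper leaves implicit --- while the paper additionally offers an equivalent formulation via $\tr\bigl([W,\omega,\psi]^* \circ_2 [W',\omega',\psi']\bigr)$, which you do not need.
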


\begin{proof}
An inner product of 2-morphisms is obtained from the isomorphism~\eqref{eq:fibre_of_R[-,-]}, or from
\begin{equation}
	\BGrb^\nabla(M) \big( 1_{\CI_0}, [(E,\alpha), (F, \beta)] \big)
	\cong \Gamma \big( M, \sfR[(E,\alpha), (F, \beta)] \big)
\end{equation}
since by construction $\sfR[(E,\alpha), (F, \beta)] \in \HVBdl^\nabla(M)$ is hermitean, so that the evaluation of the bundle metric on any pair of parallel sections is constant and can, therefore, be used as the inner product.
Alternatively, one can form the composition $[W,\omega, \psi]^* \circ_2 [W', \omega', \psi']$ for two 2-morphisms $[W,\omega, \psi]$, $[W', \omega', \psi'] \colon (E,\alpha) \to (F,\beta)$ (with the adjoint of a 2-morphism defined in Example~\ref{eg:2-morphisms--new_from_old_identity_unitors}) and then consider the trace of the underlying morphism of vector bundles, which is a constant in this situation.%
\footnote{For non-parallel 2-morphisms it is a function which descends to $M$.}
\end{proof}

\begin{remark}
If instead of using $\BGrb^\nabla_\rmpar(M)$ we worked with the larger 2-category $\BGrb^\nabla(M)$, we would obtain $\BGrb^\nabla(M)((E,\alpha), (F, \beta)) \cong \Gamma(M, \sfR [(E,\alpha), (F, \beta)])$ which are infinite-dimensional vector spaces.
These are pre-Hilbert spaces under the $\rmL^2$-product on $M$, and one could try to to work with their Hilbert space completion.
\qen
\end{remark}

\begin{remark}
The isomorphism~\eqref{eq:fibre_of_R[-,-]} shows that there is a canonical algebra structure on the fibres of $\sfR[(E,\alpha), (E, \alpha)]$ as these are composed of 2-endomorphisms of $(E,\alpha)$ in this case.
In other words, $\sfR[(E,\alpha), (E, \alpha)]$ canonically has the additional structure of a bundle of algebras.
Similarly, the bundle $\sfR[(F, \beta), (E, \alpha)]$ carries a right module structure over $\sfR[(F, \beta), (F, \beta)]$ and a compatible left module structure over $\sfR[(E, \alpha), (E, \alpha)]$, making it into a bundle of bimodules.
This relates directly to the considerations in~\cite{STV--Serre-Swan_for_BGrbMods,Karoubi--Twisted_bundles_and_twisted_K-theory}.
\qen
\end{remark}

Note that for $(F, \beta) \colon \CG_1 \to \CG_3$ and $(G, \gamma) \colon \CG_0 \otimes \CG_1 \to \CG_2 \otimes \CG_3$ there is another way of obtaining a morphism $\CG_0 \to \CG_2$, more emphasising the 2-categorical structures at hand.
Consider the expression (compare also~\cite{BSS--HGeoQuan})
\begin{equation}
	\sfH \big( (F, \beta), (G, \gamma) \big) \coloneqq (1_{\CG_2} \otimes \delta_{\CG_1}) \circ \big( (G, \gamma) \otimes \Theta(F, \beta) \big) \circ (1_{\CG_0} \otimes \delta_{\CG_3}^{-1})\,.
\end{equation}
The most important difference from $[(F, \beta) , (G, \gamma)]$ is that the tensor product of the hermitean vector bundles involved lives over $X {\times}_M U$ in this case, rather than over $X {\times}_{Y_{13}} U$.
This means that we cannot use the internal hom $\tau$ on $\HVBdl^\nabla$ as we did in the construction of $[-,-]$ in order to obtain a 1-morphism, but we need the additional factors of $L_3$ and $L_1^*$ coming from $\delta_{\CG_3}$ and $\delta_{\CG_1}^{-1}$, respectively, making it harder to handle 2-morphisms in this framework.
The additional factors of $L_i$ in $\sfH(-,-)$ allow us to use the structural isomorphisms $\gamma$ and $\beta$ to pass to a subspace $X {\times}_{Y_{13}} U \hookrightarrow X {\times}_M U$ where we can proceed as above.
Thus, the descents of $\sfH((F, \beta), (G, \gamma))$ and $[(F, \beta), (G, \gamma)]$ to their minimal surjective submersion $Y_{02} \to M$ agree, i.e. represent the same 1-morphism, whence the original 1-morphisms must be 2-isomorphic according to Theorem~\ref{st:BGrb_FP_hookrightarrow_BGrb_is_equivalence}.

\begin{remark}
The difference between $\sfH$ and $[-,-]$ becomes most clear when we consider $(Y, \pi) = (\CU, \pi)$ to be the total space $\CU = \bigsqcup_{a \in \Lambda} U_a$ of an open covering $(U_a)_{a \in \Lambda}$ of $M$.
Let $(E_a, \alpha_a)$ and $(F_a, \beta_a)$ form two morphisms of bundle gerbes which are both defined with respect to this surjective submersion.
The subtle point is that $(E,\alpha) \otimes \Theta(F, \beta)$ consists of the hermitean vector bundles $E_{a|U_{ab}} \otimes F^*_{b|U_{ab}} \to U_{ab}$ rather than the seemingly more natural $E_a \otimes F^*_a \to U_a$ of $[(E,\alpha), (F, \beta)]$.
However, the definition of the tensor product in $\BGrb^\nabla(M)$ requires us to take the common refinement of the involved surjective submersions over $M$.
Here, this amounts to taking the refinement of $\CU$ with itself, i.e. $\CU^{[2]}$, which consists of all two-fold intersections of patches in $\CU$.
The bifunctor $[-,-]$ directly employs the reduced version of the tensor product over $\CU {\times}_\CU \CU \cong \CU$, and, hence, appears to be the more natural choice for a pairing of two 1-morphisms which is supposed to cancel the twists given by the source and target bundle gerbes.
Nevertheless, recall from Proposition~\ref{st:determinants_of_morphisms_of_BGrbs} (and see also Appendix~\ref{app:BGrbs_with_mutual_surjective_submersions}) that we can pass between $\CU$ and $\CU^{[2]}$ when describing morphisms without loosing information about the bundle gerbes over $\CU \to M$ or their tensor products.
\end{remark}

\section{Examples}
\label{sect:examples_of_BGrbs}

In this section we provide several examples for bundle gerbes and some of the structures which we have encountered in the preceding sections.
Most of them are well-known already and have appeared in the literature before.
Nevertheless, they provide important applications as well as testing ground for the abstract theory.

\subsection{Local bundle gerbes}
\label{sect:Ex:Local_BGrbs}

The most basic form of bundle gerbe are so-called \emph{local bundle gerbes}.
We have already encountered them in Section~\ref{sect:Deligne_coho_and_higher_cats}, where we have found that local bundle gerbes are sufficient to capture all 2-categorical properties of $\BGrb^\nabla(M)$ in the sense that these bundle gerbes already provide an equivalent subcategory of $\BGrb^\nabla(M)$.
We call a bundle gerbe on $M$ with connection $(\CG, \nabla^\CG)$ a local bundle gerbe if its surjective submersion $(Y, \pi) = (\CU, \pi)$ is the total space $\CU = \bigsqcup_{a \in \Lambda} U_a$ of an open covering $(U_a)_{a \in \Lambda}$ of $M$ with its canonical map to $M$.
Thus, the data of a local bundle gerbe with respect to the open covering $(\CU,\pi)$ is a tuple consisting of a family of hermitean line bundles with connection $(L_{ab} \to U_{ab})_{a,b \in \Lambda}$ over the double intersections of patches of the covering, and isomorphisms $(\mu_{abc} \colon L_{ab} \otimes L_{bc} \to L_{ac})_{a,b,c \in \Lambda}$ over the triple intersections of patches from $\CU$, which are associative over quadruple intersections.
Moreover, the curving is given by a collection $B_a \in \Omega^2(U_a, \iu, \FR)$, for $a \in \Lambda$, such that $\curv(L_{ab}) = B_{b|U_{ab}} - B_{a|U_{ab}}$ for all $a,b \in \Lambda$.

An even more specific, but categorically equally general situation is given by considering good open coverings, i.e. open coverings of $M$ such that all possible finite intersections of patches from the covering are diffeomorphic to $\FR^n$.
We have already encountered good coverings in Section~\ref{sect:Deligne_coho_and_higher_cats}.
Their point there was that any hermitean line bundle with connection on $M$ could be seen to be isomorphic to a hermitean line bundle with connection which arises as the descent of a family $(I_{A_a})_{a \in \Lambda}$, where $A_a \in \Omega^1(U_a, \iu\, \FR)$, with respect to some $\sfU(1)$-valued 1-cocycle $(g_{ab})_{a,b \in \Lambda}$.
In this way, we could relate isomorphism classes of hermitean line bundles to Deligne 1-cocycles in degree 1.

If we consider a local bundle gerbe that is defined over a good open covering of $M$, there are parallel, unitary isomorphisms $\phi_{ab} \colon L_{ab} \to I_{A_{ab}}$ for every $a,b \in \Lambda$ by the contractibility of $U_{ab}$, for some collection $(A_{ab})_{a,b \in \Lambda}$ of 1-forms $A_{ab} \in \Omega^1(U_{ab}, \iu\, \FR)$.
Setting $g_{abc} \coloneqq \phi_{ac} \circ \mu_{abc} \circ (\phi_{ab} \otimes \phi_{bc})$, we obtain a bundle gerbe $(\{ I_{A_{ab}} \},\, \{ g_{abc} \},\, \{B_a\},\, \CU,\, \pi)$ with connection on $M$.
Proposition~\ref{st:From_naive_isomorphisms_to_1-isomorphisms--same_sursub} shows that this new, particularly simple bundle gerbe is isomorphic, even in $\BGrb^\nabla_\rmflat(M)$, to the original local bundle gerbe.
Note that the objects $g_{abc}$, $A_{ab}$, and $B_a$ form a Deligne 2-cocycle in degree 2, establishing the relation between bundle gerbes and $\rmH^2(M,\scD^\bullet_2(M)) \cong \hat{\rmH}^3(M, \RZ)$.
Up to isomorphism, any bundle gerbe can be described in the above simple form with respect to a good open covering of $M$ (see Section~\ref{sect:Deligne_coho_and_higher_cats}, especially Theorem~\ref{st:Deligne_2-skeleton_and_Bgrbs}).

A 1-morphism between two local bundle gerbes
\begin{equation}
\label{eq:local_bundle_gerbes--morphism}
	\big( \{ I_{A_{ab}} \},\, \{ g_{abc} \},\, \{B_a\},\, \CU,\, \pi \big) \to \big( \{ I_{A'_{ab}} \},\, \{ g'_{abc} \},\, \{B'_a\},\, \CU,\, \pi \big)
\end{equation}
employs a hermitean vector bundle with connection over an additional surjective submersion, $E \to Z \to \CU {\times}_M \CU$ (cf. Definition~\ref{def:1-morphisms_of_BGrbs}), and, therefore, is built from a family of bundles $E_{ab} \to Z_{|U_{ab}} \to U_{ab}$.
It is thus, in general, different from the twisted vector bundles considered for instance in~\cite{Rogers--Thesis,Karoubi--Twisted_bundles_and_twisted_K-theory}.
Even if we take the additional surjective submersion $Z \to \CU {\times}_M \CU$ to be the identity, we still have bundles defined over overlaps $E_{ab} \to U_{ab}$ rather than over patches $E_a \to U_a$.
This may seem like a small technical detail at first, but recall that the additional surjective submersions are vital to the many useful properties which the 2-category $\BGrb^\nabla(M)$ and its variations enjoy.
The category of twisted vector bundles is, however, equivalent to the full category of 1-morphisms between these bundle gerbes, as they are defined over the same surjective submersions onto $M$, as we have shown in Proposition~\ref{st:morphism_categories_and_twisted_HVBdls_for_same_sur_sub}.

A twisted vector bundle between local bundle gerbes as in~\eqref{eq:local_bundle_gerbes--morphism} consists of the data $(\{(E_a,\nabla^{E_a})\}, \{ \alpha_{ab} \})_{a,b \in \Lambda}$, where $(E_a,\nabla^{E_a}) \in \HVBdl^\nabla(U_a)$ and $\alpha_{ab} \colon I_{A_b} \otimes E_b \to E_a \otimes I_{A'_a}$ is a unitary, parallel isomorphism satisfying $\alpha_{ac}\, g_{abc} = g'_{abc}\, \alpha_{ab} \circ \alpha_{bc}$ for all $a,b,c \in \Lambda$.
On the level of twisted vector bundles, the structures we have introduced in Section~\ref{sect:Additive_structures_on_morphisms_in_BGrb} and Section~\ref{sect:Pairings_and_inner_hom_of_morphisms_in_BGrb} are visible more directly.
Given two twisted vector bundles $(\{(E_a,\nabla^{E_a})\}, \{ \alpha_{ab} \})_{a,b \in \Lambda}$ and $(\{(F_a,\nabla^{F_a})\}, \{ \beta_{ab} \})_{a,b \in \Lambda}$, one can readily write down their direct sum as $(\{(E_a \oplus F_a,\nabla^{E_a} \oplus \nabla^{F_a})\}, \{ \alpha_{ab} \oplus \beta_{ab} \})_{a,b \in \Lambda}$.
Their image under $[-,-]$ can be seen schematically as the family of hermitean vector bundle with connection $[E,F]_a = F_a \otimes E_a^* \cong \Hom(E_a, F_a)$ on $U_a$, where $\Hom$ denotes the functor that assigns to a pair of hermitean vector bundles the bundle of fibrewise homomorphisms between the bundles with the induced connection and hermitean metric.

\subsection{Tautological bundle gerbes}
\label{sect:Ex:Tautological_BGrbs}

Tautological bundle gerbes have been defined already in~\cite{Murray--Bundle_gerbes}; another account can be found in~\cite{Bunk-Szabo--Fluxes_brbs_2Hspaces}.
They provide a rather general class of bundle gerbes with connection in the case where the base manifold $M$ is 2-connected, i.e. $\pi_i(M) = 0$ for $i = 0,1,2$.
Tautological bundle gerbes are constructed using the based path space covering $P_0 M \to M$, where, for a fixed $x_0 \in M$,
\begin{equation}
	P_0 M \coloneqq \big\{ \gamma \in \Mfd([0,1], M)\, | \, \gamma(0) = x_0,\, \exists\, U \subset [0,1] \text{ open}: \{0,1\} \subset U,\, \gamma_{|U} = const. \big\}\,.
\end{equation}
This is actually the space of paths in $M$ with so-called \emph{sitting instants}, i.e. there exists an open neighbourhood of the boundary of the interval on which $\gamma$ is constant.
The projection $\pi \colon P_0M \to M$ is evaluation at parameter value $1$.
The reason for considering based paths is technical: the space of generic smooth based paths in $M$ is not closed under concatenation of paths.
In particular, it is desirable that a pair $\gamma, \gamma' \in P_0 M$ with the same endpoint gives rise to a closed path $\overline{\gamma'} * \gamma \colon S^1 \to M$, where $\overline{\gamma'}$ is the path $t \mapsto \gamma'(1-t)$, and $*$ denotes the concatenation of paths.
Based paths with sitting instants are stable, in this sense, under concatenation, but the sitting instants prevent us from endowing $P_0M$ with a differentiable structure, even in the Fr\'echet sense (see e.g.~\cite{Hamilton--Inverse_function_theorem}).
Instead, one has to pass to a more general notion of differentiable space, called \emph{diffeological spaces}.
Their technical treatment shall not concern us at this point in order not to distract the reader from the main purpose of this section.
We shall rather focus on the structural aspects of the tautological bundle gerbe construction.
For details on diffeological spaces we instead refer the reader to~\cite{Iglesias-Zemmour--Diffeology,Waldorf--Transgression_I} and Section~\ref{sect:DfgSp_and_diffeological_bundles}.

Reverting back to tautological bundle gerbes, let $M$ be a 2-connected manifold and $H \in \Omega^3_\cl(M,\iu\,\FR)$ a closed 3-form on $M$ with periods in $2\pi\, \iu\, \RZ$, i.e. for any closed 3-manifold $N$ and $f \in \Mfd(N,M)$, we have $\int_N f^*H \in 2 \pi\, \iu\, \RZ$.
Consider the diagram of diffeological spaces
\begin{equation}
\label{eq:tautological_BGrb_diagramm}
	\begin{tikzcd}
		\Omega^3 M \ar[r, shift left=0.1cm, "r_0"] \ar[r, shift right=0.1cm, "r_1"'] & P_0 \Omega^2 M \ar[d, "\partial"] & 
		\\
		 & \Omega^2 M \ar[r, shift left=0.1cm, "r_0"] \ar[r, shift right=0.1cm, "r_1"'] & P_0 \Omega M \ar[d, "\partial"] & 
		\\
		 & & \Omega M \ar[r, shift left=0.1cm, "r_0"] \ar[r, shift right=0.1cm, "r_1"'] & P_0 M \ar[d, "\partial"]
		\\
		 & & & M
	\end{tikzcd}
\end{equation}
Here, $\Omega X$ denotes the space of based, smooth loops with sitting instants a diffeological space, i.e. the space of smooth maps $\gamma \colon S^1 \to X$ such that there exists an open neighbourhood of $\{-1,1\} \subset S^1 \subset \FC$ on which $f$ is constant.%
\footnote{The additional sitting instant at $-1 \in S^1$ is for the purpose of having an isomorphism $\Omega M \cong (P_0 M)^{[2]}$.}
The horizontal arrows act as $r_1\gamma(t) = \gamma(w(\frac{1}{2} t))$ and $r_0 \gamma(t) = \gamma(w(1 - \frac{1}{2} t))$, where $w \colon [0,1] \to S^1$, $t \mapsto \exp(2\pi\, \iu\, t)$.
The vertical maps are evaluation of a path at $t = 1$.
Note that there are canonical inclusions $P_0 \Omega^n M \hookrightarrow \Mfd(D^n, M)$ and $\Omega^n M \hookrightarrow \Mfd(S^n, M)$.
Under these, the vertical maps act as restriction of a smooth map $D^n \to M$ to the boundary $\partial D^n \cong S^{n-1}$.
All vertical maps are, moreover, surjective (and diffeologically submersive~\cite{Waldorf--Transgression_I,Iglesias-Zemmour--Diffeology}) because of the vanishing of the homotopy groups $\pi_i(M)$ for $i=0,1,2$, which allows us to write every smooth map $S^n \to M$ as the restriction of a smooth map $D^n \to M$ to the boundary of the sphere for $0 < n < 3$.

We obtain an element of $\sfU(1)$ from an element $f \in P_0 \Omega^2 M$ by setting $\hat{\lambda}(f) = \exp( \int_{D^3} f^*H) \in \sfU(1)$.
For a smooth map $f \in \Mfd(N,N')$, we denote its restriction to the boundary of the source manifold by $\partial f \coloneqq f_{|\partial N} \in \Mfd(\partial N, N')$.
Since $H$ is closed and has periods in $2\pi\, \iu\, \RZ$, any two maps $f$ and $f'$ with $\partial f = \partial f'$ define the same complex number, $\hat{\lambda}(f) = \hat{\lambda}(f')$.
Thus, we obtain a smooth function $\lambda \colon \Omega^2 M \to \sfU(1)$.
One can now check that, given maps $g_i \in P_0 \Omega M$ for $i = 0,1,2$ such that $\partial g_i = g_j$ for $i,j = 0,1,2$, the map $\lambda$ has a cocycle property
\begin{equation}
	\lambda(\overline{g_0} * g_2) = \lambda(\overline{g_1} * g_2)\, \lambda(\overline{g_0} * g_1)
\end{equation}
and, thus, defines descent data for a hermitean line bundle $L \to \Omega M$.
An element in the fibre over $\gamma \in \Omega M$ is a an equivalence class of pairs $[\hat{\gamma}, z]$ of $\hat{\gamma} \in P_0 \Omega M$ with $\partial \hat{\gamma} = \gamma$ and $z \in \FC$.
For $\hat{\gamma}' \in \Omega M$ a different choice of disc filling the loop $\gamma$, the equivalence relation is given by setting
\begin{equation}
	[\hat{\gamma}, z] = \big[ \hat{\gamma}',\, \exp \big( \textint_{D^3} f^*H \big)\, z \big]
\end{equation}
for any $f \in P_0 \Omega^2 M$ which fills in the 2-sphere obtained by gluing $\hat{\gamma}'$ and $\hat{\gamma}$ along their common boundary.
Note that the exponential factor is always in $\sfU(1)$ so that $L$ is canonically endowed with a hermitean metric.
Given a triple of paths $\gamma_i \in P_0 M$, $i= 0,1,2$, with common endpoint, we obtain three loops $\overline{\gamma_i} * \gamma_j$ for $j > i \in \{0,1,2\}$.
We can find 2-discs $\hat{\gamma}_{ij}$ that fill in the respective loops, i.e. such that $\partial \hat{\gamma}_{ij}  = \overline{\gamma_i} * \gamma_j$.
These discs, in turn, glue together to give a smooth map $S^2 \to M$, which can once again be filled in by some map $f \colon D^3 \to M$.
We obtain a bundle gerbe product on $L \to \Omega M$ by setting
\begin{equation}
	\mu_{\gamma_2, \gamma_1, \gamma_0} \colon [\hat{\gamma}_{12}, z_1] \otimes [\hat{\gamma}_{01}, z_0] \mapsto
	\big[ \hat{\gamma}_{02},\, \exp \big( \textint_{D^3} f^* H \big)\, z_1 z_0 \big]\,.
\end{equation}
The tuple
\begin{equation}
	\CG_H \coloneqq (L, \mu, P_0 M, \partial)
\end{equation}
is called the \emph{tautological bundle gerbe} of $(M, H)$.

From the 3-form $H$ on $M$ we can, furthermore, construct a connection on $\CG_H$ as follows.
First, we obtain a connection on the line bundle $L$ via descent of $I_A \to P_0 \Omega M$ with the transition isomorphism given by $\hat{\lambda}$, where $A \in \Omega^1(P_0 M, \iu\, \FR)$ is defined as
\begin{equation}
	A_{|\hat{\gamma}}(X) = - \int_{D^2} \hat{\gamma}^*(\iota_X H)\,,
\end{equation}
where $X \in \Gamma(D^2, \hat{\gamma}^*TM)$ is a vector field along $\hat{\gamma}$ in $M$ and $\iota_{(-)}$ denotes insertion of a vector field into the first slot of a differential form.
One can check that $(I_A, \lambda)$ is descent data for a hermitean connection $\nabla^L$ on $L$ with field strength
\begin{equation}
	F^L_{|\gamma}(X_0,X_1) = \int_{S^1} \gamma^*(\iota_{X_0 \wedge X_1} H)
\end{equation}
for tangent vectors $X_0, X_1 \in \Gamma(S^1, \gamma^*TM)$.
A rigorous proof of the relevant transgression formulae can be found in~\cite[Appendix C]{BSS--HGeoQuan}.
This 2-form splits naturally into a difference $F^L = r_1^*B - r_0^*B$ for $B \in \Omega^2(P_0 M, \iu\, \FR)$ given by
\begin{equation}
	B_\sigma(X'_0, X'_1) = \int_{[0,1]} \sigma^*(\iota_{X_0 \wedge X_1} H)\,,
\end{equation}
for $\sigma \in P_0 M$ and $X_0', X_1' \in \Gamma([0,1], \sigma^*TM)$.
This makes
\begin{equation}
	(\CG_H, \nabla^{\CG_H}) = \big( L, \nabla^L, \mu, P_0 M, \partial \big)
\end{equation}
into a hermitean bundle gerbe with connection on $M$.
Finally, observe that
\begin{equation}
	\curv(\nabla^{\CG_H}) = H\,.
\end{equation}

We have chosen to write the evaluation of forms on tangent vectors explicitly, but one could as well use the general formula for the transgression of forms on $M$ to mapping spaces.
For manifolds $M, N \in \Mfd$ and the evaluation map $\Mfd(N,M) {\times} N \to M$, there is a linear map%
\footnote{Note that it is the evaluation map which induces the diffeological structure on $\Mfd(N,M)$ so that the forms we obtain on the mapping spaces are automatically smooth in the diffeological sense.}
\begin{equation}
	\CT_P \colon \Omega^n(M) \to \Omega^{n - \dim(N)}(\Mfd(N,M))\,, \quad
	\omega \mapsto \int_N \ev^*\omega\,.
\end{equation}
One may use subsets of $\Mfd(N,M)$ here instead, such as, for instance, pointed smooth maps with certain sitting instants.

A particularly important instance of the tautological bundle gerbe is the case where $M = \sfG$ is a simply connected, compact, simple Lie group.
By a Theorem of Cartan's, $\sfG$ is 2-connected and has $\rmH^3(\sfG,\RZ) \cong \RZ$~\cite{Cartan--Topologie_des_groupes_de_Lie}.
Examples which are of particular relevance to physics are the special unitary groups $\sfS\sfU(n)$, and especially $\sfS\sfU(2) \cong S^3$.
This set-up gives rise to the so-called Wess-Zumino-Witten theory, which allows to investigate D-branes in string theory with target space $\sfS\sfU(2)$~\cite{Wess-Zumino--Consequences_of_anomalous_Ward_identities,Witten--NonAb_bosonization,Gepner-Witten--Strings_on_group_manifolds}.
The Wess-Zumino-Witten model has provided several applications of bundle gerbes to string theory and D-branes, see, for instance, \cite{Gawedzki-Reis:WZW-branes_and_gerbes,Gawedzki:branes_in_WZW-models_and_gerbes,Gawedzki--Topological_actions,CJMSW--BGrbs_for_CS_and_WZW_theories}.
From the above construction we readily obtain a bundle gerbe with connection on $\sfG$ whose field strength realises any given closed 3-form $H \in \Omega^3_\cl(\sfG, \iu\, \FR)$ on $\sfG$ with periods in $2\pi\,\iu\, \RZ$.
A bundle gerbe with connection whose curvature is the fundamental 3-form $H_0 = \frac{\iu}{6}\, \< -, [-,-]_\frg \>_\frg$, where $\<-,-\>_\frg$ is the Killing form on the Lie algebra $\frg$ of $\sfG$, and $[-,-]_\frg$ denotes the Lie bracket on $\frg$, is called the \emph{basic bundle gerbe on $\sfG$}.
Several models for the basic bundle gerbe are known; it is not necessary to go to infinite-dimensional spaces, or diffeological spaces, in order to construct them, see e.g.~\cite{Waldorf--Thesis,Gawedzki-Reis--Basic_BGrb_over_non-simply_conn_grps,Murray-Stevenson--Basic_BGrb_on_unitary_groups}.
However, note that since $\rmH^3(\sfG,\RZ)$ is torsion-free, there are no 1-morphisms from the trivial bundle gerbe on $\sfG$ to the basic bundle gerbe, and no 1-morphisms in the opposite direction either, according to Proposition~\ref{st:determinants_of_sections_and_dual_sections} and Theorem~\ref{st:Classification_of_BGrbs_by_Deligne_coho}.

Basic bundle gerbes carry more structure than generic bundle gerbes.
For example, they can be described as \emph{multiplicative bundle gerbes}~\cite{Waldorf--Multiplicative_BGrbs}.
These are bundle gerbes on Lie groups where the group structure lifts, in a 2-categorically weakened manner, to the bundle gerbe.
From the construction of the basic gerbe as a tautological gerbe one can already expect this to be true, for all spaces in the diagram~\eqref{eq:tautological_BGrb_diagramm} naturally inherit group structures from the base $M = \sfG$.
Finally, let us remark that also some of the canonical involutions on $\sfG$, such as inversion for general $\sfG$, or $g \mapsto -g^\sft$ for $\sfG = \sfS\sfU(n)$ with $n$ even, have 2-categorical lifts to the basic gerbe.
These currently find applications in physics for instance in the mathematical description of topological phases of matter~\cite{Gawedzki--FKM_and_sewing_matrix,Gawedzki--BGrbs_for_TIs}.

\chapter{2-Hilbert spaces from bundle gerbes}
\label{ch:2Hspaces_from_bundle_gerbes}

\section{Higher geometric quantisation}
\label{sect:HGeo_Quan}

In this section, we approach the main application in this thesis of the bundle gerbe technology developed in Chapter~\ref{ch:bundle_gerbes} and Chapter~\ref{ch:structures_on_morphisms_of_bgrbs}, which is to higher geometric quantisation.
In ordinary geometric quantisation, one starts from a symplectic manifold $(M,\omega)$, where $M \in \Mfd$ is a $2n$-dimensional manifold and $\omega \in \Omega^2(M, \FR)$ is a closed, non-degenerate 2-form on $M$.
Non-degeneracy of a $p$-form $\eta \in \Omega^p(M)$ means that $\iota_X \eta_{|x} = 0$ for a tangent vector $X \in T_x M$ if and only if $X = 0$, where $\iota_X$ denotes the insertion of the tangent vector into the first argument of a differential form.
Symplectic geometry is a natural framework for classical mechanics.
For textbook references, see, for instance, \cite{Scheck--Mechanics,Abraham-Marsden--Foundations_of_mechanics}.
Hamiltonian functions, Hamiltonian vector fields, conserved quantities, symmetries, and many other concepts find appropriate formalisations in this set-up.
The most common example of a symplectic manifold arising in classical mechanics is the cotangent bundle of a manifold with its canonical symplectic form.

In order to quantise Hamiltonian systems, one further requires that $\omega$ have integer cycles, so that there exists a hermitean line bundle with connection $(L, \nabla^L)$ on $M$ whose field strength satisfies $\curv(\nabla^L) = 2\pi\, \iu\, \omega$.
Such an $(L, \nabla^L) \in \HLBdl^\nabla(M)$ is called a \emph{prequantum line bundle} for $(M,\omega)$.
A candidate Hilbert space to model the quantum theory on is given by the Hilbert space completion of the space of smooth global sections of $L$.
The inner product space
\begin{equation}
	\big( \CH_0(L), \<-,-\>_{\CH_0(L)} \big) \coloneqq \Big( \Gamma(M, L),\, (\psi, \phi) \mapsto \int_M h_L(\psi, \phi)\, \frac{\omega^n}{n!} \Big)
\end{equation}
is not complete, i.e. it only forms a pre-Hilbert space, the \emph{prequantum pre-Hilbert space} of $(M,L,\omega)$.%
\footnote{In the case of non-compact base manifold $M$ one has to consider compactly supported sections in order to ensure that the inner product of any pair of sections in the pre-Hilbert space exists.}
Pre-Hilbert spaces can, however, always be completed uniquely into a Hilbert space by defining elements to be equivalence classes of Cauchy sequences.
The resulting Hilbert space is $\rmL^2(M,L; \mu_\omega)$, the space of square integrable sections of $L$ with respect to the measure $\mu_\omega$ induced by the symplectic volume form $\frac{1}{n!}\, \omega$.
It is called the \emph{prequantum Hilbert space} of $(M, \omega, L)$.
We include $L$ into the data here, as one has to chose an appropriate line bundle by hand in generic situations.
However, this choice is based solely on the requirement that $\curv(\nabla^L) = 2\pi\, \iu\, \omega$, and is, therefore, ambiguous.
One can tensor $(L, \nabla^L)$ by flat line bundles without changing this requirement on the curvature.

Observables in Hamiltonian mechanics are real-valued functions $f \in C^\infty(M, \FR)$.
The symplectic structure endows these with a Poisson bracket as follows.
First, given a real-valued function $f$ on $M$, we can define its associated \emph{Hamiltonian vector field} $X_f \in \Gamma(M, TM)$ by the equation $\dd f = - \iota_{X_f} \omega$.
The vector field $X_f$ is defined uniquely because of the non-degeneracy of $\omega$.
Given two functions $f, g \in C^\infty(M, \FR)$, we set
\begin{equation}
	\{f,g\} \coloneqq \iota_{X_f \wedge X_g} \omega\,.
\end{equation}
This defines a Poisson structure on the algebra $C^\infty(M,\FR)$ which is represented on $\CH_0(L)$ via the so-called \emph{Kostant-Souriau prequantisation map}
\begin{equation}
	f \mapsto \CO_f = - \iu\, \hbar\, \nabla^L_{X_f} + 2\pi\,\hbar\, f\, \psi\,.
\end{equation}
It provides a quantisation of the Poisson algebra $(C^\infty(M,\FR), \{-,-\})$ in the sense that $\CO_f$ is an operator on $\CH(L)$ and satisfies
\begin{equation}
	[\CO_f, \CO_g] = -\iu\, \hbar\, \CO_{\{f,g\}}
	\quad
	\forall\, f,g \in C^\infty(M, \FR)\,.
\end{equation}

It turns out, however, that the prequantum Hilbert space obtained as the Hilbert space completion of $\CH_0(L)$ is too large a Hilbert space to correctly describe the physical properties of the system at hand.
An easy examples where this is obvious is the case of a point particle in $\FR^3$.
The phase space of this system is the cotangent bundle $T^* \FR^3 \cong \FR^3 {\times} \FR^3$ with symplectic form $\omega = \dd x^0 \wedge \dd x^3 + \dd x^1 \wedge \dd x^4 + \dd x^2 \wedge \dd x^5 = \dd q^k \wedge \dd p_k$, where $k = 0,1,2$ and $q^k$ are the position coordinates while $p_k$ are the momentum coordinates.
A hermitean prequantum line bundle $L \to T^* \FR^3$ is given by $I_\theta$, where $\theta = 2\pi\, \iu\ p_k\, \dd q^k$.
Square integrable sections of $I_\theta$ depend on both position and momentum coordinates, in general, and, consequently, do not represent physically viable states.
In order to cure this problem, and, hence, to obtain the physical Hilbert space, one has to restrict to a subspace of the prequantum Hilbert space.
There are several ways of defining such a physical subspace, all of which require a certain amount of choice.
In the above example, valid choices are, for instance, the subspaces of functions which depend solely on either the $q^k$ or the $p_k$.
The resulting subspaces are different, but encode the same physics: they correspond to the position and momentum space representations of physical states, and the Fourier transform on $\FR^3$ provides a transformation between them.
That is, while the prequantum Hilbert space is obtained canonically from the set-up once a prequantum line bundle has been chosen, there is no canonical way of constructing the physical Hilbert space in geometric quantisation.
Methods of reducing the prequantum Hilbert space to a physically viable subspace are usually referred to as \emph{choices of polarisations}.
Finding appropriate polarisations for a given specific system is considered the hardest step in the geometric quantisation program, as the choice of a polarisation is generically non-canonical.
One way of obtaining polarisations is by restricting to sections of the prequantum line bundle which are parallel with respect to some Lagrangian distribution on $M$, as given by either of the two factors of $\FR^3$ in the above example.
In several important cases, such choices can be implemented and interesting quantum systems have been obtained.
See for instance~\cite{APW--Geometric_quatisation_of_CS_gauge_theory,Woodhouse--Geometric_quanitsation} and references therein.

From the mathematical as well as physical perspective, there exist important systems which do not lie in the scope of geometric quantisation.
Most notably, geometric quantisation only applies to symplectic manifolds.
It is often interesting, however, to consider higher-degree analogues of symplectic forms.
String theory naturally contains a 3-form field, the so-called $H$-flux (see e.g.~\cite{Szabo--String-theory_and_D-brane_dynamics,Kiritsis--String_theory_in_a_nutshell}), while M-theory contains a 3-form field called the $C$-flux, and both of these provide interesting candidates for a modified version of geometric quantisation from the physical point of view.

There exist different approaches to this so-called problem of higher geometric quantisation.
As in string theory the configuration space of a closed bosonic string in a spacetime manifold $M$ is the loop space of $M$, it seems natural to try to transfer the programme of geometric quantisation to the loop space.
This line has been pursued, for example, in~\cite{Saemann-Szabo--Groupoid_quant_of_loop_spaces,Saemann-Szabo--Groupoids_loop_spaces_Geoquan,Saemann-Szabo--Quantisation_of_2-plectic_mfds}.
A different point of view from a conceptual angle is that since string theory is a theory of quantum gravity, it should not quantise the configuration space of strings, but it should quantise spacetime itself.
The quantisation of spacetime should then, in turn, induce a quantisation of the loop space of $M$.
In this thesis, we take the latter point of view and investigate higher geometric quantisation on $M$ rather than on its loop space.
The input we have is a closed non-degenerate 3-form $\varpi \in \Omega_\cl^3(M, \FR)$, and we have seen that if such a 3-form has integer cycles, $2 \pi\, \iu\, \varpi$ can be realised as the curvature 3-form of a bundle gerbe with connection on $M$.
Hence, bundle gerbes provide a candidate geometric object to replace the prequantum line bundle of geometric quantisation.

From a mathematical perspective, perhaps the most interesting spaces to consider are simple, compact, and simply connected Lie groups with their canonical 3-forms.
An appropriately modified procedure of geometric quantisation is expected to bare strong relations to the representation theory of the loop group $\Omega \sfG$.
We have already seen in Section~\ref{sect:Ex:Tautological_BGrbs} that the 3-form on a compact, simple, and simply connected Lie group $\sfG$ can be obtained as the field strength of a bundle gerbe.
Central extensions $\widehat{\Omega}_k \sfG \to\Omega \sfG$ of level $k$ of the loop group $\Omega \sfG$ can be described as tautological bundle gerbes on $\sfG$, denoted $\CG_k$.
This is evident from diagram~\eqref{eq:tautological_BGrb_diagramm} and the multiplicativity properties of the canonical 3-form~\cite{Waldorf--Multiplicative_BGrbs}.
A continuous unitary representation $\rho$ of $\widehat{\Omega}_k\sfG$ on a (not necessarily finite-dimensional) Hilbert space $\CH$ gives rise to a morphism of bundle gerbes without connection $\CG_k \to \CI$ which uses the trivial $\CH$-bundle over $P_0\Omega\sfG$.
Thus, such representations fit into the framework of bundle gerbes.
However, note that Proposition~\ref{st:determinants_of_sections_and_dual_sections} together with Theorem~\ref{st:Classification_of_BGrbs_by_Deligne_coho} immediately tell us that $\CH$ must have infinite dimension since the Dixmier-Douady class of $\CG_k$ agrees with $k\, [H]_\dR \in \rmH^3_\dR(\sfG, \FR) \cong \RZ$ and is, thus, non-torsion.
We will revisit this argument in more detail in Section~\ref{sect:higher_geometric_structures}.

Likewise it has been shown in string theory that the $H$-flux is the curvature 3-form of a gerbe, or bundle gerbe, with curving locally given by the Kalb-Ramond $B$-field~\cite{Kapustin--D-branes_in_top_nontriv_B-field}.
Above we commented that higher geometric quantisation in string theory should be related to quantisations of the background spacetime.
Combining this with the point of view that the object central to higher geometric quantisation is a bundle gerbe that geometrically realises the Kalb-Ramond $B$-field and the $H$-flux via its connection, the presence of these fields should be related to quantum structures on spacetime.
In fact, this has been proven to be true:
using techniques from string theory and conformal field theory, it has been shown that a constant $B$-field gives rise to noncommutative structures on spacetime~\cite{Seiberg-Witten--Strings_and_NCG}.
In geometric terms a constant $B$-field is a flat bundle gerbe with connection, i.e. a pair $(\CG, \nabla^\CG)$ with $\curv(\nabla^\CG) = 0$.
More recently, these computations have been extended to non-constant $B$-fields, i.e. where $H = \curv(\nabla^\CG) \neq 0$; see, for instance, \cite{BDLPR--Non-geo_fluxes_and_NAG,Blumenhagen-Plauschinn--NAG_in_string_theory}.

We thus follow the principle that the object providing a higher analogue of the prequantum line bundle should be a bundle gerbe with connection.
In order to investigate and exploit this analogy, we first have to better understand the structural analogies between line bundles and bundle gerbes.

Let us remark that it is not entirely clear what a polarisation in higher geometric quantisation should be.
This problem has been addressed in the language of local bundle gerbes in~\cite{Rogers--Thesis} on the level of higher observable algebras as well as sections, but we will not consider that issue here.

\section{Higher geometric structures}
\label{sect:higher_geometric_structures}

We relate the structures on the 2-category of bundle gerbes unveiled thus far to structures on the the category of line bundles.
This will provide the necessary intuition and understanding for finding analogues of constructions familiar from geometry and geometric quantisation.
Our central application will be higher geometric quantisation in Section~\ref{sect:2-Hspace_of_a_BGrb}.

The discussion in Section~\ref{sect:Deligne_coho_and_higher_cats} suggested already that bundle gerbes provide the next ``higher'' version of hermitean line bundles with connection.
This intuition is central to the investigation of bundle gerbes and related geometric objects~\cite{Brylinski--Loop_spaces_and_geometric_quantisation,Gajer--Geometry_of_Deligne_cohomology,BS--Higher_gauge_theory,FSS--Higher_stacky_perspective}.

Let us start with the simplest possible line bundle with connection, the trivial bundle $I_0 \to M$.
Its endomorphisms (in $\HLBdl^\nabla(M)$) canonically identify with the algebra of smooth functions on $M$, i.e. there is a canonical isomorphism
\begin{equation}
	\HLBdl^\nabla(M)(I_0,I_0) \cong C^\infty(M, \FC)\,,\quad
	f \cdot 1_{I_0} \mapsto f\,.
\end{equation}
In fact, there exists a bijection like this for any complex line bundle $L$ on $M$ (with or without connection),
\begin{equation}
\label{eq:endomps_of_lbdls_to_functions_via_multiples_of_1}
	\HLBdl^\nabla(M) \big( (L, \nabla^L), (L, \nabla^L) \big) \cong
	\HLBdl(M)(L,L) \cong C^\infty(M, \FC)\,,\quad
		f \cdot 1_L \mapsto f\,,
\end{equation}
stemming from the fact that the fibres of $L$ are complex lines.
Inspecting this closely, there exists an algebra structure on $\HLBdl^\nabla(M)((L, \nabla^L), (L, \nabla^L))$ for any $(L,\nabla^L) \in \HLBdl^\nabla(M)$, whose multiplication is given by the composition of endomorphisms.
For $I_0$, however, technically there exist two such structures given by composition and the tensor product of line bundles.
As $I_0$ is the unit object in $\HLBdl^\nabla(M)$, the resulting multiplications agree by an Eckmann-Hilton type argument.
The algebra $\HLBdl^\nabla(M)(I_0,I_0)$ carries a natural involution, induced by complex conjugation
\begin{equation}
	f \cdot 1_{I_0} \mapsto
	\bar{f} \cdot 1_{I_0}\,.
\end{equation}

Any pair $(L,\nabla^L)$, $(J,\nabla^J) \in \HLBdl^\nabla(M)$ of hermitean line bundles with connection on $M$ gives rise to a bimodule $\HLBdl^\nabla(M)((L, \nabla^L), (J, \nabla^J))$ over $(\HLBdl^\nabla(M)(I_0,I_0), \otimes\,)$.
The action is via the tensor product of line bundles.
This is an instance of how in any monoidal category any morphism set carries actions of the endomorphisms of the unit object from both sides.

There is a different way of obtaining an algebra isomorphism $\HLBdl(M)(L,L) \to C^\infty(M, \FC)$, employing the hermitean structure which we so far have not use at all.
The hermitean metric $h_L$ on $L$ allows us, among other things, to define a map
\begin{equation}
	\HLBdl(M)(L,L) \cong C^\infty(M, \FC)\,, \quad
	\psi \mapsto h_{\End(L)}(1_L, \psi)\,,
\end{equation}
which agrees with the map~\eqref{eq:endomps_of_lbdls_to_functions_via_multiples_of_1}.
Here, $h_{\End(L)}$ is the hermitean metric canonically induced by $h_L$ on the endomorphism bundle of $L$.
The algebra structure, however, is much clearer from the original point of view.
For a pair $(L,\nabla^L)$, $(J,\nabla^J) \in \HLBdl^\nabla(M)$, the metrics on $L$ and $J$ induce a morphism of $\HLBdl^\nabla(M)(I_0,I_0)$-modules
\begin{equation}
\label{eq:pairing_of_LBdl_morphs}
\begin{aligned}
	&\overline{\HLBdl^\nabla(M)(L,J)} \times \HLBdl^\nabla(M)(L,J) \to \HLBdl^\nabla(M)(I_0,I_0)\,,
	\\
	&(\psi, \phi) \mapsto h_{J \otimes L^*}(\psi, \phi) \cdot 1_{\CI_0}\,,
\end{aligned}
\end{equation}
where the bar over the first argument indicates antilinearity in this argument.
Here, $h_{J \otimes L^*}$ is the metric induced on $J \otimes L^*$ by the metrics on $h_L$ and $h_J$.
We have $h_{J \otimes L^*}(\psi, \phi) \cdot 1_L = \psi^* \circ \phi$.
It is worth noting how much~\eqref{eq:pairing_of_LBdl_morphs} and the identities
\begin{equation}
	h_{J \otimes L^*}(\psi \cdot f, \phi) = h_{J \otimes L^*}(\psi, \bar{f} \cdot \phi) = \bar{f} \otimes h_{J \otimes L^*}(\psi, \phi)
\end{equation}
resemble the relations~\eqref{eq:internal_hom_and_tensors} and~\eqref{eq:2-var_adjunction_of_[-,-]} of a two-variable adjunction.

Higher analogues of several of these structures on and within the category $\HLBdl^\nabla(M)$ have already been found on the 2-category $\BGrb^\nabla(M)$.
The tensor product of bundle gerbes, tranposes of morphisms, as well as a module action of endomorphisms of $\CI_0$ have been introduced and examined in~\cite{Waldorf--More_morphisms,Waldorf--Thesis}.
We have enlarged the 2-categories of bundle gerbes slightly by dropping the requirements that 2-morphisms should be constructed from parallel, unitary isomorphisms of hermitean vector bundles with connection (cf. Definition~\ref{def:2-categories_of_BGrbs}) and dropping the condition on the trace of the curvature of a 1-morphism (cf. Definition~\ref{def:1-morphisms_of_BGrbs} and Remark~\ref{rmk:trace_condition_on_1-morphisms}).
The additive structure on morphisms of bundle gerbes which parallels the addition of morphisms of hermitean line bundles is given by the direct sum from Theorem~\ref{st:direct_sum_structure_on_morphisms_of_BGrbs}.
Together with the symmetric monoidal structure induced by the tensor product of bundle gerbes, this turns $\BGrb^\nabla(M)(\CI_0,\CI_0)$ into a closed rig category (cf. Theorem~\ref{st:2-var_adjunction_on_BGrb}) which we view as the higher analogue of the ring $\HLBdl^\nabla(M)(I_0,I_0) \cong C^\infty(M, \FC)$.
Note that here we have not yet completely matched the structure on $\HLBdl^\nabla(M)(I_0,I_0)$, which is even a $\FC$-algebra.
We will address this point in due course.
A higher pendant of the hermitean bundle metric acting on $\HLBdl^\nabla(M)((L, \nabla^L), (J, \nabla^J))$ is now provided by the bifunctor (omitting connections)
\begin{equation}
\label{eq:Pairing to higher functions}
	[-,-] \colon \overline{\BGrb^\nabla(M)(\CG_0, \CG_1)} \times \BGrb^\nabla(M)(\CG_0, \CG_1) \to \BGrb^\nabla(M)(\CI_0,\CI_0)
\end{equation}
from Theorem~\ref{st:internal_hom_of_morphisms_in_BGrb--existence_and_naturality} and Theorem~\ref{st:2-var_adjunction_on_BGrb}.
Here, the category $\overline{\BGrb^\nabla(M)((\CG_0, \nabla^{\CG_0}), (\CG_1, \nabla^{\CG_1}))}$ is the category $\big( \BGrb^\nabla(M)((\CG_0, \nabla^{\CG_0}), (\CG_1, \nabla^{\CG_1})) \big)^\opp$ with the conjugate $\BGrb^\nabla(M)(\CI_0,\CI_0)$-module structure, where a higher function $(K,\kappa) \in \BGrb^\nabla(\CI_0,\CI_0)$ acts on a 1-morphism $(E,\alpha) \in \BGrb^\nabla(M)((\CG_0, \nabla^{\CG_0}), (\CG_1, \nabla^{\CG_1}))$ as
\begin{equation}
	\big( (E,\alpha), (K,\kappa) \big) \mapsto \Theta(K, \kappa) \otimes (E,\alpha)\,.
\end{equation}

We now consider the algebra structure of $\HLBdl^\nabla(M)(I_0,I_0)$.
For any pair of line bundles $L, J \in \HLBdl(M)$, the space $\HLBdl(M)(L,J)$ carries a natural $\FC$-module structure.
In geometric terms, this is given by multiplying a homomorphism of line bundles fibrewise by a complex number.
The same action of a complex number is obtained by first using the inclusion $\sfc \colon \FC \hookrightarrow \HLBdl(M)(I,I)$ of algebras, which maps $z \in \FC$ to the constant function with value $z$, and then using the module action of $\HLBdl(M)(I,I)$ on $\HLBdl(M)(L,J)$.
For bundle gerbes we first have to find a suitable replacement for the field of complex numbers $\FC$.
Note that we obtain $\FC$ back from the category of hermitean line bundles via the canonical isomorphism $\HLBdl(\pt)(I,I) \cong \FC$, where $\pt$ is the one-point set, viewed as a 0-dimensional manifold.
For bundle gerbes, Theorems~\ref{st:BGrb_FP_hookrightarrow_BGrb_is_equivalence} and~\ref{st:morphism_categories_and_twisted_HVBdls_for_same_sur_sub} yield the existence of an equivalence of rig categories
\begin{equation}
	\big( \BGrb^\nabla(\pt)(\CI_0,\CI_0), \otimes, 1_{\CI_0}, \oplus, 0_{\CI_0} \big) \cong \big( \Hilb, \otimes, \FC, \oplus, 0 \big)\,.
\end{equation}
Here we have set $0_{\CI_0} = (0,1_0, M,1_M) \colon \CI_0 \to \CI_0$ for the unit object of the direct sum in $\BGrb^\nabla(\pt)(\CI_0,\CI_0)$, i.e. the zero 1-morphism $\CI_0 \to \CI_0$.
Consequently, we view the rig category $(\Hilb, \otimes, \FC, \oplus, 0)$ as the higher analogue of the ground field $\FC$.
This is, in fact, a closed rig category under $\hom_r(V,W) = \hom_l(V,W) = W \otimes V^*$.
As above, there exists a canonical inclusion of higher numbers into higher functions as constant higher functions:
it is given by the functor
\begin{equation}
\label{eq:inclusion_Hilb_into_higher functions}
	\sfc \colon \Hilb \to \BGrb^\nabla_\rmpar(\pt)(\CI_0,\CI_0)\,,\quad
	 \big( V \overset{\psi}{\to} V' \big) \mapsto \big( (M {\times} V,\, \dd) \overset{x \mapsto \psi}{\longrightarrow} (M {\times} V',\, \dd\,) \big)\,,
\end{equation}
with $(M {\times} V, \dd\,)$ denoting the trivial hermitean vector bundle over $M$ with fibre $V$ and the trivial connection given by the de Rham differential.
This is an inclusion of rig categories.

\begin{remark}
Composing the inclusion $\Hilb \hookrightarrow \BGrb^\nabla(M)(\CI_0,\CI_0)$ by the rig category action of $\BGrb^\nabla(M)(\CI_0,\CI_0)$ on any morphism category $\BGrb^\nabla(M)((\CG_0,\nabla^{\CG_0}),(\CG_1,\nabla^{\CG_1}))$, every morphism category in $\BGrb^\nabla(M)$ canonically becomes a $\Hilb$-module as well (cf. Theorem~\ref{st:enrichment_in_BGrb}).
\qen
\end{remark}

The object central to geometric prequantisation is the space of sections of the prequantum line bundle.
Given a line bundle with connection $(L,\nabla^L) \in \HLBdl^\nabla(M)$, its space of smooth sections can be defined geometrically as smooth maps from the base $M$ to the total space $L$ which are right inverses to the projection onto the base.
However, there is a purely categorical way of defining sections as well: we have a canonical bijection
\begin{equation}
	\Gamma(M,L) \cong \HLBdl^\nabla(M)(I_0,L)\,, \quad
	s \mapsto \big( (x,z) \mapsto z\, s(x) \big)
\end{equation}
for $s \in \Gamma(M,L)$, $x \in M$ and $z \in \FC$.
This perspective on sections of $L$ can readily be translated to bundle gerbes.

\begin{definition}[Sections of a bundle gerbe]
\label{def:Sections_of_a_BGrb}
For a bundle gerbe with connection $(\CG, \nabla^\CG) \in \BGrb^\nabla(M)$, we define its \emph{category of sections} as
\begin{equation}
\label{eq:sections_of_a_BGrb}
	\Gamma \big( M, (\CG, \nabla^\CG) \big) \coloneqq \BGrb^\nabla_\rmpar(M) \big( \CI_0, (\CG, \nabla^\CG) \big)\,.
\end{equation}
Equivalently, the \emph{global section functor of bundle gerbes} is the representable
\begin{equation}
	\Gamma \big( M, - \big) \coloneqq \BGrb^\nabla_\rmpar(M) \big( \CI_0, - \big)\,.
\end{equation}
\end{definition}

The idea to define sections of a bundle gerbe in this way has been around as folklore and is due to Konrad Waldorf, but, to our knowledge, its first appearance in the literature has been in~\cite{BSS--HGeoQuan}.

\begin{remark}
Instead of $\BGrb^\nabla_\rmpar(M)( \CI_0, (\CG, \nabla^\CG))$, we could have used the bigger category $\BGrb^\nabla(M)( \CI_0, (\CG, \nabla^\CG))$ to define sections of $(\CG,\nabla^\CG)$.
In this sense, our definition of sections is not the most general one.
However, in Section~\ref{sect:2-Hspace_of_a_BGrb} we will make use of the more refined structure of $\BGrb^\nabla_\rmpar(M)( \CI_0, (\CG, \nabla^\CG))$ (see Theorem~\ref{st:enrichment_in_BGrb}).
\qen
\end{remark}

Denote by $\rmH_\dR^p(M,\iu\, \FR) = \iu\, \rmH_\dR^p(M,\FR)$ the $p$-th de Rham cohomology group of $M$ with coefficients in $\iu\, \FR$.
Proposition~\ref{st:determinants_of_sections_and_dual_sections} together with Theorem~\ref{st:Classification_of_BGrbs_by_Deligne_coho} combine to an important no-go statement:

\begin{proposition}
\label{st:no-go_for_sections_of_BGrbs}
A bundle gerbe with connection $(\CG, \nabla^\CG)$ with $[H]_\dR \neq 0 \in \rmH_\dR^3(M,\iu\,\FR)$ admits no non-zero sections.
\end{proposition}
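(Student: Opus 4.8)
The plan is to argue by contradiction using the determinant construction of Corollary~\ref{st:determinants_of_sections_and_dual_sections} together with the fact that a flat isomorphism of bundle gerbes preserves the curvature 3-form. Recall from Definition~\ref{def:Sections_of_a_BGrb} that a section is a 1-morphism $(E, \alpha) \colon \CI_0 \to (\CG, \nabla^\CG)$, and that by Proposition~\ref{st:classification_of_isomps_of_BGrbs} such a morphism is non-zero precisely when its underlying hermitean vector bundle $E$ has rank $n \coloneqq \rank(E) \geq 1$. So I would begin by assuming, for contradiction, that a section $(E, \alpha)$ with $n \geq 1$ exists.

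First I would feed this section into Corollary~\ref{st:determinants_of_sections_and_dual_sections}: taking the determinant of $(E, \alpha)$ and precomposing with the isomorphism of Proposition~\ref{st:determinants_of_morphisms_of_BGrbs}(1) produces a trivialisation of the $n$-th tensor power, i.e.\ a flat isomorphism
\begin{equation}
	(F, \beta) \colon (\CG, \nabla^\CG)^{\otimes n} \arisom \CI_\eta
\end{equation}
for some $\eta \in \Omega^2(M, \iu\, \FR)$. Next I would compute the two curvatures entering this isomorphism. On the one hand, the curving of $(\CG, \nabla^\CG)^{\otimes n}$ is $\sum_{i=0}^{n-1} d_i^* B$, so its curvature 3-form is $n\, \curv(\nabla^\CG) = n H$. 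On the other hand, the curvature of the trivial bundle gerbe $\CI_\eta$ is $\dd \eta$, which is exact on $M$.

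The key link is that the isomorphism $(F, \beta)$ is flat, hence satisfies the trace condition~\eqref{eq:trace_condition}; as spelled out in Remark~\ref{rmk:trace_condition_on_1-morphisms}, this forces the curvatures of its source and target to agree. Therefore $n H = \dd \eta$ is exact, so that $n\, [H]_\dR = 0$ in $\rmH_\dR^3(M, \iu\, \FR)$. Since de Rham cohomology with coefficients in $\iu\, \FR$ is a real vector space and $n \geq 1$, this yields $[H]_\dR = 0$, contradicting the hypothesis. The argument for a morphism in the opposite direction $\CI_\rho \to (\CG, \nabla^\CG)$ is identical, since Corollary~\ref{st:determinants_of_sections_and_dual_sections} covers both cases.

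The only delicate point worth checking carefully is the curvature bookkeeping: one must confirm both that the curvature of a tensor power multiplies the base curvature by $n$ and that the flatness of $(F,\beta)$ genuinely equates the curvatures of its endpoints via~\eqref{eq:trace_condition}. Both are essentially established in the material preceding the statement, so no genuinely new computation is required; the proof is a short assembly of Corollary~\ref{st:determinants_of_sections_and_dual_sections} and Remark~\ref{rmk:trace_condition_on_1-morphisms} once the contradiction hypothesis is set up.
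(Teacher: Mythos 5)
Your proof is correct, but it takes a genuinely different route from the paper in its second half. Both arguments share the same first step: feed the hypothetical non-zero section into the determinant trick of Corollary~\ref{st:determinants_of_sections_and_dual_sections} to obtain a flat isomorphism (trivialisation) $(\CG, \nabla^\CG)^{\otimes n} \arisom \CI_\eta$. From there, the paper passes through topology: it invokes the classification Theorem~\ref{st:Classification_of_BGrbs_by_Deligne_coho} to conclude that the Dixmier-Douady class $\DD(\CG) \in \rmH^3(M,\RZ)$ is torsion, then uses the facts that the image of $\DD(\CG)$ under $\rmH^3(M,\RZ) \to \rmH^3_\dR(M,\iu\,\FR)$ is $[H]_\dR$ and that the kernel of this map is exactly the torsion subgroup. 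You instead stay entirely within de Rham theory: flatness of the trivialisation (the trace condition~\eqref{eq:trace_condition}, via Remark~\ref{rmk:trace_condition_on_1-morphisms}) forces the curvatures of source and target to agree, giving $n H = \dd\eta$, and then $n\,[H]_\dR = 0$ implies $[H]_\dR = 0$ because de Rham cohomology is a real vector space. Your version is more elementary and self-contained — it needs neither Deligne cohomology, nor the classification theorem, nor any input about integral cohomology and its torsion — while the paper's route buys the stronger intermediate conclusion that $\DD(\CG)$ is torsion in $\rmH^3(M,\RZ)$, a fact the thesis uses elsewhere (e.g.\ in Sections~\ref{sect:Ex:Cup_product_lens_space_BGrbs} and~\ref{sect:ways_around_the_torsion_constraint}) and which is strictly finer than the de Rham statement. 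One small quibble: your appeal to Proposition~\ref{st:classification_of_isomps_of_BGrbs} for ``non-zero iff $\rank(E) \geq 1$'' is a miscitation — that proposition characterises invertibility (rank one); non-zero-ness of a section simply means the underlying bundle is not the rank-zero bundle, which needs no citation and does not affect your argument.
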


\begin{proof}
If there is a non-trivial section of $(\CG,\nabla^\CG)$, then the Dixmier-Douady class of $\CG$ is torsion by Proposition~\ref{st:determinants_of_sections_and_dual_sections} and Theorem~\ref{st:Classification_of_BGrbs_by_Deligne_coho}.
The statement then follows from the observation that the image of $\DD(\CG)$ in de Rham cohomology agrees with $[H]_\dR$~\cite{Waldorf--Thesis} and the fact that the kernel of the map $\rmH^k(M,\RZ) \to \rmH^k_\dR(M, \iu\, \FR)$ is precisely the torsion subgroup of $\rmH^k(M,\RZ)$.
\end{proof}

This puts a strong restriction on the class of bundle gerbes which admits non-trivial sections.
A bundle gerbe can admit non-trivial sections only if its Dixmier-Douady class is torsion.
There are several approaches to how to circumvent this constraint which we will survey in Section~\ref{sect:ways_around_the_torsion_constraint}.
We will see a non-trivial example of a bundle gerbe with torsion Dixmier-Douady class and investigate its category of sections in Section~\ref{sect:Ex:Cup_product_lens_space_BGrbs}.

\begin{remark}
Note that Proposition~\ref{st:no-go_for_sections_of_BGrbs} also implies that sections of bundle gerbes over $M$ can not generally be extended from closed subsets of $M$.
Most strikingly, since every bundle gerbe over $\pt$ is trivial, we always have $\Gamma(\pt, x^*(\CG, \nabla^\CG)) \cong \Hilb$ for any map $x \colon \pt \to M$.
However, even if $(\CG, \nabla^\CG)$ has a torsion class, only those sections over $x \in M$ with appropriate ranks may be restrictions of global sections of $(\CG, \nabla^\CG)$.
\qen
\end{remark}

We close this section by giving a definition for the bundle gerbe metric which is strongly motivated by~\eqref{eq:Pairing to higher functions}.

\begin{definition}[Bundle gerbe metric]
We call the bifunctor
\begin{equation}
	[-,-] \colon \overline{\Gamma \big( M, (\CG, \nabla^{\CG}) \big)} \times \Gamma \big( M, (\CG, \nabla^{\CG}) \big) \to \BGrb^\nabla(M)(\CI_0, \CI_0)
\end{equation}
the \emph{bundle gerbe metric of $(\CG, \nabla^{\CG})$}.
\end{definition}

\section{2-Hilbert spaces}
\label{sect:2-Hspaces}

In Section~\ref{sect:higher_geometric_structures} we have established several correspondences between notions derived from hermitean line bundles with connection and hermitean bundle gerbes with connection.
In order to proceed further along the lines of geometric quantisation in the higher setting, in the sense that we replace line bundles by bundle gerbes, we have to answer the question what the higher analogue of the pre-Hilbert space of smooth sections of the prequantum line bundle should be.
To that end, we first have to settle for a notion of a \emph{2-Hilbert space}.
There currently exist different notions of 2-vector spaces.
The most well-known are \emph{Kapranov-Voevodsky 2-vector spaces}~\cite{KV--2-Cats_and_Zam_eqns}, which are module categories over $\Vect$, the category of finite-dimensional complex vector spaces, \emph{Baez-Crans 2-vector spaces}~\cite{Baez-Crans--HdimA-Lie_2-algebras}, which are categories internal to $\Vect$, and finitely semi-simple, linear, abelian categories (see, for instance, \cite{FSS--Trace_for_bimodule_categories}).

We have seen that we should think of the rig category $\Hilb$ as the correct higher replacement of $\FC$ as the ground field, and that morphism categories of bundle gerbes naturally carry the structure of $\Hilb$-module categories.
Therefore, we follow the line of Kapranov and Voevodsky and consider $\Hilb$-module categories as the foundational structure for 2-Hilbert spaces.
Our definition of 2-Hilbert spaces will be closely related to that initiated in~\cite{Freed--Extended_structures,Crane--Clock_and_Cat} and worked out in detail in~\cite{Baez--2-Hilbert_spaces}, but it will differ in that we put more emphasis on the $\Hilb$-module structure.

\begin{definition}[2-Hilbert space]
\label{def:2Hspace}
A \emph{2-Hilbert space} a cartesian monoidal, abelian category $(\scH, \oplus, \Null_\scH)$ with the following additional data:
\begin{myenumerate}
	\item a closed rig-module structure over $(\Hilb, \otimes, \FC, \oplus, 0)$ (cf. Definition~\ref{def:closed_rig-module}), where we set $\<-,-\>_\scH \coloneqq \hom_l$ and call this bifunctor the \emph{inner product, or pairing of $\scH$}, and
	
	\item a natural isomorphism $\<\CA,\CB\>_\scH \cong \<\CB,\CA\>_\scH^*$ for $\CA, \CB \in \scH$, compatible with the rig-module structure.
\end{myenumerate}
\end{definition}

Despite the close similarity to the definition used in~\cite{Baez--2-Hilbert_spaces}, the application of the structure of 2-variable adjunctions in the definition of a 2-Hilbert space is new.
Note that the symmetry of the inner product may be achieved by a star structure as used in~\cite{Baez--2-Hilbert_spaces}.
The structure in the definition of a 2-Hilbert space is strong and, thus, has several implications:

\begin{proposition}
\label{st:structure_on_2Hspaces}
Let $\scH$ be a 2-Hilbert space with inner product $\<-,-\>_\scH$, and let $\CA, \CB \in \scH$.
The following statements hold true.
\begin{myenumerate}
	\item $\scH$ is enriched, tensored and cotensored over $\Hilb$.
	
	\item There is a natural isomorphism $\<-,-\>_\scH \cong \scH(-,-)$.
	
	\item $\<-,-\>_\scH$ is non-degenerate.
	
	\item The right internal hom is fixed up to natural isomorphism as $\hom_r \cong (-) \otimes (-)^*$.
\end{myenumerate}
\end{proposition}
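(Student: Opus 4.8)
The plan is to deduce statement~(4) from the uniqueness of adjoints, using the fact that every object of $\Hilb$ is dualizable. By the closed rig-module hypothesis (Definition~\ref{def:closed_rig-module}), the module action $\otimes \colon \scH \times \Hilb \to \scH$ is part of a two-variable adjunction (Definition~\ref{def:2-Var_adjunction_and_closed_module_cat}), so for each fixed $V \in \Hilb$ the functor $\hom_r(V,-) \colon \scH \to \scH$ is a right adjoint of $(-) \otimes V$ via the natural isomorphism
\begin{equation}
	\scH(\CA \otimes V, \CB) \cong \scH \big( \CA, \hom_r(V, \CB) \big)\,.
\end{equation}
It therefore suffices to produce, naturally in $V$, a second right adjoint of $(-) \otimes V$ of the form $(-) \otimes V^*$, and then invoke uniqueness of adjoints to identify it with $\hom_r(V,-)$.

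First I would exhibit an adjunction $(-) \otimes V \dashv (-) \otimes V^*$ directly. Since $V$ is a \emph{finite-dimensional} Hilbert space it is dualizable in $(\Hilb, \otimes, \FC)$, with evaluation $\ev_V \colon V^* \otimes V \to \FC$ and coevaluation $\mathrm{coev}_V \colon \FC \to V \otimes V^*$ obeying the two zigzag identities. Using the module associator $(\CA \otimes V) \otimes W \cong \CA \otimes (V \otimes W)$ and unitor $\CA \otimes \FC \cong \CA$ of the $\Hilb$-module structure, I would define the unit by
\begin{equation}
	\CA \cong \CA \otimes \FC \xrightarrow{\CA \otimes \mathrm{coev}_V} \CA \otimes (V \otimes V^*) \cong (\CA \otimes V) \otimes V^*
\end{equation}
and, symmetrically, the counit $(\CB \otimes V^*) \otimes V \cong \CB \otimes (V^* \otimes V) \xrightarrow{\CB \otimes \ev_V} \CB \otimes \FC \cong \CB$.

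The key step is to verify the zigzag identities for this candidate adjunction. After cancelling the module coherence isomorphisms (which is legitimate by the pentagon and triangle axioms for a module category), both identities reduce to the zigzag identities for the duality $(\ev_V, \mathrm{coev}_V)$ in $\Hilb$, and so hold. This establishes $(-) \otimes V \dashv (-) \otimes V^*$; uniqueness of right adjoints then gives a natural isomorphism $\hom_r(V, \CB) \cong \CB \otimes V^*$ in the argument $\CB$. To upgrade this to naturality in $V$, hence to an isomorphism of bifunctors $\Hilb^\opp \times \scH \to \scH$, I would use the naturality of $\ev$, $\mathrm{coev}$ and the coherence isomorphisms in $V$, together with the functoriality of $V \mapsto V^*$ on $\Hilb^\opp$: a morphism $V \to V'$ induces compatible maps between the two adjunctions and hence between their right adjoints.

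I expect the main obstacle to be precisely this last point — establishing naturality in the dualized variable $V \in \Hilb^\opp$ — since uniqueness of adjoints supplies the isomorphism only pointwise in $V$, and one must check that these pointwise isomorphisms assemble coherently. The underlying bookkeeping of the module-category coherence data in the zigzag verification is routine but tedious; the genuinely structural input is the dualizability of all objects of $\Hilb$, which is where finite-dimensionality is used in an essential way.
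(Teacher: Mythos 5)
Your proposal proves only statement (4); statements (1)--(3) are never addressed, and this is a genuine gap, since they are part of the proposition. In the paper, (1) is the standard fact that a closed module category over a closed symmetric monoidal category is automatically enriched, tensored and cotensored (the enrichment being $\hom_l = \<-,-\>_\scH$, the tensoring the action $\otimes$, the cotensoring $\hom_r$); (2) follows by specialising the two-variable adjunction at $V = \FC$,
\begin{equation}
	\scH(\CV, \CW) \cong \scH(\CV \otimes \FC, \CW) \cong \Hilb \big( \FC, \<\CV,\CW\>_\scH \big) \cong \<\CV,\CW\>_\scH\,;
\end{equation}
and (3) follows from (2) together with axiom (2) of Definition~\ref{def:2Hspace}: if $\<\CA,\CB\>_\scH \cong \scH(\CA,\CB) = 0$ for all $\CB$, then $\CA$ is initial, and the symmetry $\<\CA,\CB\>_\scH \cong \<\CB,\CA\>_\scH^*$ then also makes $\CA$ terminal, hence a zero object. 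These are short arguments, but non-degeneracy in particular genuinely needs the symmetry axiom, which your proposal never invokes; none of (1)--(3) can be extracted from what you wrote.

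Your treatment of (4), on the other hand, is correct and takes a genuinely different route from the paper's. The paper establishes a chain of isomorphisms $\scH(\CA, \CB \otimes V^*) \cong \cdots \cong \scH(\CA, \hom_r(V,\CB))$, natural in $\CA$, using the symmetry axiom together with part (2), and then concludes by the enriched Yoneda lemma. You instead exhibit $(-) \otimes V^*$ as a right adjoint of $(-) \otimes V$ directly from the duality data of the finite-dimensional Hilbert space $V$ and the module coherence isomorphisms, and identify it with $\hom_r(V,-)$ by uniqueness of right adjoints. This is a real gain in economy: your argument uses only the closed $\Hilb$-module structure -- neither the symmetry axiom nor any of (1)--(3) -- so it in fact shows that the right internal hom of an arbitrary closed $\Hilb$-module category is forced to be $(-) \otimes (-)^*$. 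The one point you leave open, naturality in $V$, is settled by the calculus of mates: for $f \in \Hilb(V,V')$ the mate of $1 \otimes f$ under the two duality adjunctions is $1 \otimes f^\sft$ (this is the defining property of the transpose), its mate under the two internal-hom adjunctions is the structure map $\hom_r(f,-)$ coming with the two-variable adjunction, and the comparison isomorphism supplied by uniqueness of adjoints commutes with passing to mates; hence the pointwise isomorphisms $\hom_r(V,\CB) \cong \CB \otimes V^*$ assemble into an isomorphism of bifunctors. With (1)--(3) supplied (for instance by the paper's arguments), your proof of (4) stands as a valid, and in this respect stronger, alternative.
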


\begin{proof}
Ad (1):
Any closed module category over a closed symmetric monoidal category is automatically enriched, tensored and cotensored over the symmetric monoidal category by the properties of the two-variable adjunction, see~\cite[Chapter 3]{Riehl--Categorical_homotopy_theory}.
In fact the properties enriched, tensored and cotensored are equivalent to being a closed module category in that set-up.

Ad (2):
The two-variable adjunction yields natural isomorphisms
\begin{equation}
	\scH( \CV, \CW) \cong \scH( \FC \otimes \CV, \CW)
	\cong \Hilb \big( \FC, \< \CV, \CW \>_\scH \big)
	\cong \< \CV, \CW \>_\scH\,.
\end{equation}

Ad (3):
If $\CA = 0$, then $\<\CA, \CB\>_\scH \cong \scH(\CA, \CB) = 0$.
In the opposite way, assume that $\<\CA, \CB\>_\scH \cong \scH(\CA,\CB) = 0$ for all $\CB \in \scH$.
That is, $\CA$ is initial in $\scH$.
By the symmetry of the pairing $\<-,-\>_\scH$ the same argument also yields that $\CA$ is terminal, and thus a zero object in $\scH$.

Ad (4):
We have natural isomorphisms
\begin{equation}
\begin{aligned}
	\scH(\CA, \CB \otimes V^*)
	&\cong \scH(\CB \otimes V^*, \CA)^*
	\cong \big( \Hilb(V^*, \scH(\CB, \CA)) \big)^*
	\cong V^* \otimes \scH(\CB, \CA)^*
	\\*
	&\cong V^* \otimes \scH(\CA, \CB)
	\cong \scH(\CA \otimes V, \CB)
	\\*
	&\cong \scH(\CA, \hom_r(V, \CB))\,.
\end{aligned}
\end{equation}
Thus, by the enriched Yoneda Lemma (cf., for instance, \cite{Riehl--Categorical_homotopy_theory}), there exists a natural isomorphism as claimed.
\end{proof}

\begin{remark}
In certain applications it might be desirable to weaken the definition to an abelian category rather than a cartesian monoidal abelian category.
The symmetry of the inner product together with the natural isomorphism $\<-,-\>_\scH \cong \scH(-,-)$ then still yield a possibly non-functorial version of sesquilinearity of the inner product.
Note also that because of the properties of a two-variable adjunction, all the involved functors are compatible with either products or coproducts in each of their arguments.
Here we assume that the coproduct in $\scH$ can be made functorial, i.e. we work in the cartesian monoidal setting.
\qen
\end{remark}

\begin{remark}
Extending the remark on the implications of the adjointness properties of the closed module structure, it might actually be enough to demand that $\scH$ be abelian and cartesian monoidal with a closed $(\Hilb, \otimes, \FC)$-module structure, i.e. without referring to any direct sums, in order to obtain a rig-module structure.
Because every $- \otimes V$ is a left adjoint, the cartesian monoidal structure, which is automatically also cocartesian in an abelian category, is preserved by any $- \otimes V$.
That is, we automatically obtain isomorphisms $(\CA \oplus \CB) \otimes V \cong \CA \otimes V \oplus \CB \otimes V$.
The same argument applies to the other argument of $- \otimes -$ since $\Hilb$ is cartesian monoidal as well.
It is unclear, however, whether these isomorphisms are automatically coherent.
\qen
\end{remark}

As a final remark, note that if we require merely a monoidal structure $\oplus$ on $\scH$ instead of a cartesian monoidal structure, we automatically obtain from the sequilinearity and Proposition~\ref{st:structure_on_2Hspaces} that
\begin{equation}
	\scH(\CC, \CA \oplus \CB) \cong \<\CC, \CA \oplus \CB\>_\scH
	\cong \<\CC, \CA \>_\scH \oplus \<\CC, \CB\>_\scH
	\cong \scH(\CC, \CA) \oplus \scH(\CC, \CB)\,.
\end{equation}
Together with the morphism $\CA \oplus \CB \to \CA$ obtained as the image of $1_\CA \oplus 0$ under the isomorphism $\scH(\CA, \CA) \oplus \scH(\CB, \CA) \cong \scH(\CA \oplus \CB, \CA)$ and analogously for $\CA \oplus \CB \to \CB$, this establishes $\CA \oplus \CB$ as the categorical product $\CA {\times} \CB$.
A similar construction for the coproduct shows that we automatically obtain that $(\scH, \oplus, \Null_\scH)$ is cartesian monoidal.

\begin{example}
\begin{myenumerate}
	\item The primordial 2-Hilbert space~\cite{Baez--2-Hilbert_spaces} is $\Hilb$ itself:
	the action functor $\otimes$ is the usual tensor product of finite-dimensional Hilbert spaces,
	and $\hom_l (V,W) = \<V,W\>_{\Hilb} = \Hilb(V,W) \cong V^* \otimes W$.
	
	\item Very similarly, the $n$-fold product $\Hilb^n$ is a 2-Hilbert space in the sense of Definition~\ref{def:2Hspace}:
	the module action reads as $(V_0, \ldots V_{n-1}) \otimes W = (V_0 \otimes W, \ldots, V_{n-1} \otimes W)$ and the inner product is
	\begin{equation}
		\big[ (V_0, \ldots V_{n-1}),\, (W_0, \ldots W_{n-1}) \big]
		= \bigoplus_{k=0}^{n-1}\ [V_k, W_k] \quad \in \Hilb\,.
	\end{equation}
	
	\item Let $\sfG$ be a compact Lie group, and denote its category of finite-dimensional unitary representations by $\URep(\sfG)$.
	We write $\CV = (W, \rho)$ for an object of $\URep(\sfG)$, where $W \in \Hilb$ and $\rho$ is a unitary representation of $\sfG$ on $W$.
	We take morphisms $(W, \rho) \to (W', \rho')$ to be linear maps $\varphi \in \Hilb(W,W')$ such that $\rho'(g) \circ \varphi \circ \rho(g^{-1}) = \varphi$ for all $g \in \sfG$.
	An action of $\Hilb$ on $\URep(\sfG)$ is given by $(W, \rho) \otimes V = (W, \rho) \otimes (V, 1) = (W \otimes V, \rho \otimes 1)$, where $1$ denotes the trivial representation of $\sfG$.
	The inner product has to be defined as
	\begin{equation}
		\big[ (W,\rho), (W', \rho') \big] = \URep(\sfG) \big( (W,\rho), (W', \rho') \big)\,.
	\end{equation}
	Its sesquilinearity can be seen as follows:
	consider the diagram of finite-dimensional Hilbert spaces
	\begin{equation}
	\begin{tikzcd}
			\URep(\sfG) \big( (W,\rho), (W', \rho') \big) \ar[r, "(-)^\sft"] \ar[dr, dashed] & \URep(\sfG) \big( (W',\rho')^*, (W, \rho)^* \big) \ar[d, "(-)^*"]
			\\
			 & \URep(\sfG) \big( (W',\rho')^*, (W, \rho)^* \big)\,.
	\end{tikzcd}
	\end{equation}
	Both solid arrows are isomorphisms, but the vertical arrow is antilinear.
	Therefore, the dashed arrow is an antilinear isomorphism, and we obtain an isomorphism
	\begin{equation}
		\URep(\sfG) \big( (W,\rho), (W', \rho') \big) \cong \URep(\sfG) \big( (W',\rho'), (W, \rho) \big)^*
	\end{equation}
	as desired.
	\qen
\end{myenumerate}
\end{example}

\begin{example}
\label{eg:Hilb_sep_is_not_a_2Hspace}
The category $\Hilb_\sep$ of possibly infinite-dimensional separable Hilbert spaces is not a 2-Hilbert space in this sense:
while an action of $\Hilb$ or $\Hilb_\sep$ on $\Hilb_\sep$ can be defined using the tensor product, an isomorphism $\<-,-\>_{\Hilb_\sep} \cong \Hilb_\sep(-,-)$ yields a contradiction.
This is because $\Hilb_\sep$ is not enriched over itself.
Instead, the morphism set $\Hilb_\sep(\CV, \CW)$ of bounded linear operators $\CV \to \CW$ carries the structure of a Banach space, while $\<\CV, \CW\>_{\Hilb_\sep}$ is supposed to be a Hilbert space isomorphic to $\Hilb_\sep(\CV, \CW)$.
Moreover, $\Hilb_\sep$ is not abelian: there exist injective continuous operators between infinite-dimensional separable Hilbert spaces whose image is not a closed subspace so that they are not the kernel of another bounded operator.
For example, the operator $T \colon \CV \to \CV$ defined on an orthonormal basis $(e_k)_{k \in \NN_0}$ by $T e_k = \frac{1}{k}\, e_k$ is injective, bounded ($T \leq 1_\CV$) and has dense image:
given $\psi = \sum_k \psi^k\, e_k \in \CV$, define $\psi_n \in \CV$ to have coefficients $(\psi_n)^k = k\, \psi^k$ for $k \leq n$ and $(\psi_n)^k = \psi^k$ for $k > n$.
Then we have $\Vert \psi - T(\psi_n) \Vert_\CV \to 0$, where $\Vert - \Vert_\CV$ is the Hilbert norm on $\CV \in \Hilb_\sep$.
Thus, there exists a sequence $(T \psi_n)_{n \in \NN_0}$ in $T \CV$ that converges to $\psi$.
However, the vector $\phi = \sum_k \frac{1}{k}\, e_k$ is not in the image of $T$.
Its preimage under $T$ would be $\phi' = \sum_k e_k$, which would have infinite norm.
Therefore, $T\CV \subset \CV$ is dense, but not closed, and we have thus found a monomorphism in $\Hilb_\sep$ which is not the kernel of any other morphism, since kernels of bounded operators are closed subspaces.
\qen
\end{example}

Note that for a complete analogy between 2-Hilbert spaces and Hilbert spaces there is one element missing entirely.
We have not accounted for the norm-completeness of a Hilbert space in Definition~\ref{def:2Hspace}.
There are several reasons for this, but probably the most striking one is the absence of a good notion of a difference of two objects.
While Baez suggests in~\cite{Baez--2-Hilbert_spaces} to understand cokernels of morphisms as differences, it is not quite clear whether this produces a useful concept.
A better, yet far more abstract way to incorporate differences into our framework might be to pass to ring completions of the rig categories $\Hilb$ and $\BGrb^\nabla(M)(\CI_0,\CI_0)$ and to construct corresponding structures on the categories $\BGrb^\nabla((\CG_0, \nabla^{\CG_0}), (\CG_1, \nabla^{\CG_1}))$.
A general formalism for such ring completions of rig categories has been worked out in~\cite{BDRR--Ring_completion_of_rig_categories}.
However, the resulting category has not been made explicit for the cases we need, and, while providing a very interesting question, doing so would go beyond the scope of this thesis.
It might also be interesting to see whether the framework of ``categories with norms'' recently proposed in~\cite{Kubis--Cats_with_norms} yields new insights.

\section{The 2-Hilbert space of a bundle gerbe}
\label{sect:2-Hspace_of_a_BGrb}

In geometric quantisation, the prequantum pre-Hilbert space is obtained as the space of smooth sections of the prequantum line bundle.
This comes endowed with a canonical inner product, obtained by inserting two sections into the bundle metric and integrating the resulting function over the base manifold $M$.
So far, we have seen that in higher geometric quantisation the prequantum line bundle should be replaced by a bundle gerbe.

\begin{definition}[2-plectic manifold, prequantum bundle gerbe~\cite{Rogers--Thesis}]
Let $M$ be a manifold.
A closed, non-degenerate 3-form $\varpi \in \Omega^3(M,\FR)$ is called a \emph{2-plectic form} on $M$.
The pair $(M, \varpi)$ is called a \emph{2-plectic manifold}.
If $\varpi$ has integer cycles, $(M,\varpi)$ is called \emph{prequantisable}.
A hermitean bundle gerbe with connection $(\CG,\nabla^\CG)$ on $M$ such that $\curv(\nabla^\CG) = 2 \pi\, \iu\, \varpi$ is called a \emph{prequantum bundle gerbe for $(M,\varpi)$}.
\end{definition}

In Section~\ref{sect:higher_geometric_structures} we have defined the category of sections of a bundle gerbe.
Thus, given a prequantisable 2-plectic manifold $(M,\varpi)$ with a choice of prequantum bundle gerbe $(\CG,\nabla^\CG)$, the natural candidate for the underlying $\Hilb$-module category of the prequantum 2-Hilbert space of this system is
\begin{equation}
	\scH_0(\CG, \nabla^\CG) \coloneqq \Gamma \big( M, (\CG, \nabla^\CG) \big)
	= \BGrb^\nabla_\rmpar(M) \big( \CI_0, (\CG, \nabla^\CG)  \big)\,.
\end{equation}
Theorem~\ref{st:enrichment_in_BGrb} implies that this is a cartesian monoidal, abelian $\Hilb$-module category.
Furthermore, Theorem~\ref{st:2-var_adjunction_on_BGrb} states that the bifunctors $[-,-]$ and $(-) \otimes \Theta(-)$ induce a two-variable adjunction and, hence, a closed $\BGrb^\nabla_\rmpar(M)(\CI_0, \CI_0)$-module structure on $\Gamma(M,(\CG,\nabla^\CG))$.
However, we need a two-variable adjunction with respect to the $\Hilb$-module action induced by the inclusion functor~\eqref{eq:inclusion_Hilb_into_higher functions}.
Part (2) of Proposition~\ref{st:structure_on_2Hspaces} fixes the inner product bifunctor up to natural isomorphism, while part (3) of the same statement fixes the right internal hom $\hom_r$ of the closed module structure.
We can use Corollary~\ref{st:Hilb-enrichment_of_MorCats_in_BGrb} to obtain an inner product bifunctor with values in $\Hilb$ in a slightly modified way:
we set
\begin{equation}
\label{eq:inner product on 2-Hspace_of BGrb}
\begin{aligned}
	\<-,-\>_{\scH_0(\CG, \nabla^\CG)} &\coloneqq \BGrb^\nabla_\rmpar(M) \big( 1_{\CI_0}, [-,-] \big)
	\\*
	&\cong \Gamma_\rmpar \big( M, \sfR[-,-] \big)
	\\*
	&\cong \BGrb^\nabla_\rmpar(M)( -,- )\,.
\end{aligned}
\end{equation}

\begin{remark}
Note that, for $\{ M_j \}_{j \in \pi_0(M)}$ denoting the connected components of $M$, we have
\begin{equation}
\begin{aligned}
	\scH_0(\CG, \nabla^\CG)
	&= \Gamma \big( M, (\CG, \nabla^\CG) \big)
	\\*
	&\cong \bigoplus_{j \in \pi_0(M)}\, \Gamma \big( M_j, (\CG, \nabla^\CG) \big)
	\\*
	&= \bigoplus_{j \in \pi_0(M)}\, \scH_0 \big( (\CG, \nabla^\CG)_{|M_j}
\end{aligned}
\end{equation}
and
\begin{equation}
\begin{aligned}
	\<-,-\>_{\scH_0(\CG, \nabla^\CG)}
	&\cong \Gamma_\rmpar \big( M, \sfR[-,-] \big)
	\\*
	&\cong \bigoplus_{j \in \pi_0(M)}\, \Gamma_\rmpar \big( M_j, \sfR[-,-] \big)
	\\*
	&\cong \bigoplus_{j \in \pi_0(M)}\, \<-,-\>_{\scH_0((\CG, \nabla^\CG)_{|M_j})}\,,
\end{aligned}
\end{equation}
in analogy with the structure of spaces of sections of hermitean line bundles.
The direct sum $\bigoplus_{j = 0}^n\, \scH_j$ of 2-Hilbert spaces $\scH_j$, $j = 0, \ldots, n$, is the product of the underlying categories endowed with the $\Hilb$-action
\begin{equation}
	(\CA_0, \ldots, \CA_n) \otimes V
	= \big( \CA_0 \otimes V, \ldots, \CA_n \otimes V \big)
\end{equation}
for $\CA_j \in \scH_j$.
The action of $\hom_r$ is given analogously, while, for $\CA'_j \in \scH_j$, the action of $\hom_l$ reads as
\begin{equation}
\begin{aligned}
	 \big[ (\CA_0, \ldots, \CA_n),\, (\CA'_0, \ldots, \CA'_n) \big]
	 &\cong \bigoplus_{j,k = 0}^n\, \Big( \bigoplus_{l = 0}^n\, \scH_l \Big) (\CA_j, \CA'_k)
	 \\*
	 &\cong \bigoplus_{j = 0}^n\, \scH_j (\CA_j, \CA'_j)
	 \\*
	 &= \bigoplus_{j = 0}^n\, [\CA_i, \CA'_i]\,,
\end{aligned}
\end{equation}
like we obtained above from geometric considerations.
\qen
\end{remark}

Let $V \in \Hilb$ and $(E, \alpha)$, $(F, \beta) \in \Gamma(M, (\CG, \nabla^\CG)$.
Recall the definition of the inclusion functor $\sfc \colon \Hilb \hookrightarrow \BGrb^\nabla(M)(\CI_0, \CI_0)$ from ~\eqref{eq:inclusion_Hilb_into_higher functions}.
Since parallel sections of the bundle $\sfc V = (M {\times} V, \dd)$ are locally constant, and, therefore, locally given by elements of $V$, the properties of $[-,-]$ yield natural isomorphisms
\addtocounter{equation}{1}
\begin{align*}
	\< (E, \alpha) \otimes \sfc V,\, (F, \beta) \>_{\scH_0(\CG, \nabla^\CG)}
	&\cong \BGrb^\nabla_\rmpar(M) \big( (E, \alpha) \otimes \sfc V,\, (F, \beta) \big)
	\\*[0.2cm]
	&\cong \BGrb^\nabla_\rmpar(M) \big( 1_{\CI_0},\, \big[ (E, \alpha)\otimes \sfc V, (F, \beta) \big] \big)
	\\[0.2cm]
	&\cong \BGrb^\nabla_\rmpar(M) \big( 1_{\CI_0},\, \big[ (E, \alpha), (F, \beta) \big] \otimes \sfc V^* \big)
	\\[0.2cm]
	&\cong \Gamma_\rmpar \big( M,\, \sfR \big[ (E, \alpha), (F, \beta) \big] \otimes \sfc V^* \big) \theeq
	\\[0.2cm]
	&\cong \bigoplus_{j \in \pi_0(M)}\, \Big( \Gamma_\rmpar \big( M_j,\, \sfR \big[ (E, \alpha), (F, \beta) \big] \big) \otimes \sfc V^* \Big)
	\\[0.2cm]
	&\cong \Big( \bigoplus_{j \in \pi_0(M)}\, \Gamma_\rmpar \big( M_j,\, \sfR \big[ (E, \alpha), (F, \beta) \big] \big) \Big) \otimes \sfc V^*
	\\*[0.2cm]
	&\cong \< (E, \alpha),\, (F, \beta) \>_{\scH_0(\CG, \nabla^\CG)} \otimes \sfc V^*\,.
\end{align*}
The linearity of $\<-,-\>_{\scH_0(\CG, \nabla^\CG)}$ in the other argument can be checked analogously.
Hence, the choice~\eqref{eq:inner product on 2-Hspace_of BGrb} provides a two-variable adjunction suitable for a 2-Hilbert space structure on $\Gamma(M,(\CG, \nabla^\CG))$.

One can check the symmetry of $\<-,-\>_{\scH_0(\CG, \nabla^\CG)}$ in the sense of axiom (2) of Definition~\ref{def:2Hspace}.
This is due to the natural isomorphisms found in Section~\ref{sect:Pairings_and_inner_hom_of_morphisms_in_BGrb}:
\addtocounter{equation}{1}
\begin{align*}
	\<(E,\alpha), (F, \beta) \>_{\scH_0(\CG, \nabla^\CG)}
	&= \BGrb^\nabla_\rmpar(M) \big( 1_{\CI_0},\, [ (E,\alpha), (F, \beta)] \big)
	\\*
	&\cong \Gamma_\rmpar \big( M, \sfR[(E,\alpha), (F, \beta)] \big)
	\\
	&\cong \Gamma_\rmpar \big( M, \sfR \circ \Theta [(F, \beta), (E,\alpha)] \big)
	\\
	&\cong \Gamma_\rmpar \big( M, \big( \sfR [(F, \beta), (E,\alpha)] \big)^* \big) \theeq
	\\
	&\cong \Big( \Gamma_\rmpar \big( M, \sfR [(F, \beta), (E,\alpha)] \big) \Big)^*
	\\
	&\cong \BGrb^\nabla_\rmpar(M) \big( 1_{\CI_0},\, [ (F, \beta), (E,\alpha)] \big) \Big)^*
	\\*
	&= \big( \<(F, \beta), (E,\alpha) \>_{\scH_0(\CG, \nabla^\CG)} \big)^*\,.
\end{align*}
Finally, consider the self-pairing
\begin{equation}
\begin{aligned}
	\<(E,\alpha), (E,\alpha) \>_{\scH(\CG, \nabla^\CG)}
	&= \BGrb^\nabla_\rmpar(M) \big( 1_{\CI_0},\, [ (E,\alpha), (E,\alpha)] \big)
	\\*
	&\cong \BGrb^\nabla_\rmpar(M) \big( (E,\alpha), (E,\alpha) \big)\,.
\end{aligned}
\end{equation}
This contains at least all multiples of $1_{(E,\alpha)}$, i.e. it is a finite-dimensional Hilbert space of dimension at least one, unless $(E,\alpha) = 0$.
That is, the pairing $\<-,-\>_{\scH_0(\CG, \nabla^\CG)}$ is non-degenerate.
We summarise the above results in

\begin{theorem}
\label{st:2-Hspace_of_a_Bgrb}
Given a bundle gerbe $(\CG,\nabla^\CG) \in \BGrb^\nabla(M)$, the pair
\begin{equation}
	\big( \scH_0(\CG, \nabla^\CG), \<-,-\>_{\scH_0(\CG, \nabla^\CG)} \big)
	= \Big( \Gamma \big( M, (\CG, \nabla^\CG) \big) ,\, \BGrb^\nabla_\rmpar(M) \big( 1_{\CI_0}, [-,-] \big) \Big)
\end{equation}
together with the $\Hilb$-module action on $\Gamma \big( M, (\CG, \nabla^\CG) \big)$ induced by the inclusion~\eqref{eq:inclusion_Hilb_into_higher functions} and $\hom_r = (-) \otimes \Theta(-)$ defines a 2-Hilbert space in the sense of Definition~\ref{def:2Hspace}.
\end{theorem}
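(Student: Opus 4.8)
The plan is to verify the three ingredients demanded by Definition~\ref{def:2Hspace} in turn, drawing on the structural results already established for the morphism categories of $\BGrb^\nabla_\rmpar(M)$. First I would record that the ambient category is of the required type: by Definition~\ref{def:Sections_of_a_BGrb} we have $\scH_0(\CG,\nabla^\CG) = \BGrb^\nabla_\rmpar(M)(\CI_0, (\CG,\nabla^\CG))$, and Theorem~\ref{st:enrichment_in_BGrb}(2) asserts precisely that every such morphism category is cartesian monoidal and (semisimple) abelian. This settles the underlying structure $(\scH_0, \oplus, \Null_{\scH_0})$ with no further work.

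Second, I would produce the closed $\Hilb$-rig-module structure. Theorem~\ref{st:2-var_adjunction_on_BGrb}, restricted to parallel $2$-morphisms via Remark~\ref{rmk:2-var_adjunction_restricts_to_parallel_subcat}, already makes $\scH_0(\CG,\nabla^\CG)$ a closed module category over the rig-category $\BGrb^\nabla_\rmpar(M)(\CI_0,\CI_0)$, with $\hom_l = [-,-]$ and $\hom_r = (-)\otimes\Theta(-)$. To descend the coefficients to $\Hilb$, the plan is to precompose the module action with the rig-category inclusion $\sfc \colon \Hilb \hookrightarrow \BGrb^\nabla_\rmpar(M)(\CI_0,\CI_0)$ of~\eqref{eq:inclusion_Hilb_into_higher functions}, and to take as inner product the composite $\<-,-\>_{\scH_0(\CG,\nabla^\CG)} \coloneqq \BGrb^\nabla_\rmpar(M)(1_{\CI_0}, [-,-])$, which lands in $\Hilb$ by Corollary~\ref{st:Hilb-enrichment_of_MorCats_in_BGrb}. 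The verification that this is genuinely the left internal hom of a two-variable adjunction for the $\Hilb$-action is then the chain of natural isomorphisms: combining Corollary~\ref{st:sections_of_reduced_pairing_agree_with_2-homs} (which gives $\BGrb^\nabla_\rmpar(M)(1_{\CI_0}, [-,-]) \cong \Gamma_\rmpar(M, \sfR[-,-])$) with the tensor identities~\eqref{eq:internal_hom_and_tensors} and the adjunction~\eqref{eq:2-var_adjunction_of_[-,-]}. The one genuinely geometric input, on which the whole step turns, is that a parallel section of $\sfc V = (M\times V, \dd)$ is locally constant and hence locally an element of $V$; this is what lets the factor $V$ (resp.\ $V^*$) slip through $\Gamma_\rmpar$ and delivers the required $\Hilb$-linearity in the module variable.

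Third, I would check the symmetry axiom, namely the natural isomorphism $\<\CA,\CB\>_{\scH_0} \cong \<\CB,\CA\>_{\scH_0}^*$. The plan is to use the canonical $2$-isomorphisms~\eqref{eq:Riesz_dual_and_internal_hom} to rewrite $[(E,\alpha),(F,\beta)]$ as $\Theta[(F,\beta),(E,\alpha)]$, then pass to reduced form with $\sfR$ and observe that $\sfR\circ\Theta$ produces the dual (complex-conjugate) hermitean bundle, so that $\Gamma_\rmpar(M, \sfR\circ\Theta[-,-]) \cong \Gamma_\rmpar(M, (\sfR[-,-])^*) \cong (\Gamma_\rmpar(M, \sfR[-,-]))^*$; reading this back through~\eqref{eq:inner product on 2-Hspace_of BGrb} gives the claimed antilinear symmetry. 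Non-degeneracy, though already implied by Proposition~\ref{st:structure_on_2Hspaces}(3), I would also note directly: the self-pairing $\<(E,\alpha),(E,\alpha)\>_{\scH_0} \cong \BGrb^\nabla_\rmpar(M)((E,\alpha),(E,\alpha))$ always contains the span of $1_{(E,\alpha)}$ and so is nonzero unless $(E,\alpha)=0$.

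The main obstacle I expect is the second step — specifically, confirming that the $\Hilb$-valued bifunctor $\BGrb^\nabla_\rmpar(M)(1_{\CI_0}, [-,-])$ really is the $\hom_l$ of a bona fide closed $\Hilb$-module structure, and not merely an enriched hom-set that happens to be valued in $\Hilb$, together with the coherence of all the intervening isomorphisms. Everything else is bookkeeping layered on top of Theorems~\ref{st:enrichment_in_BGrb} and~\ref{st:2-var_adjunction_on_BGrb}, but this compatibility between the restricted module action, the inclusion $\sfc$, and the pairing is where the argument could most easily go wrong.
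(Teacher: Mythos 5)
Your proposal is correct and follows essentially the same route as the paper's own argument: the cartesian monoidal abelian structure from Theorem~\ref{st:enrichment_in_BGrb}, the closed module structure from Theorem~\ref{st:2-var_adjunction_on_BGrb} descended along $\sfc$ with $\<-,-\> = \BGrb^\nabla_\rmpar(M)(1_{\CI_0},[-,-])$ and linearity verified through Corollaries~\ref{st:sections_of_reduced_pairing_agree_with_2-homs} and~\ref{st:Hilb-enrichment_of_MorCats_in_BGrb} plus the locally-constant-parallel-sections observation, and symmetry plus non-degeneracy via~\eqref{eq:Riesz_dual_and_internal_hom} and the self-pairing containing $1_{(E,\alpha)}$. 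You also correctly identify the delicate point — that the $\Hilb$-valued pairing genuinely serves as $\hom_l$ of a closed $\Hilb$-module structure — which is precisely what the paper's displayed chain of natural isomorphisms is devoted to establishing.
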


\begin{definition}[Prequantum 2-Hilbert space]
For a prequantisable 2-plectic manifold $(M, \varpi)$ and a choice of a prequantum bundle gerbe $(\CG, \nabla^\CG) \in \BGrb^\nabla(M)$ for $(M,\varpi)$, we call the 2-Hilbert space $\big( \scH_0(\CG, \nabla^\CG), \<-,-\>_{\scH_0(\CG, \nabla^\CG)} \big)$ the \emph{prequantum 2-Hilbert space of $(M,\varpi)$}.
\end{definition}

Let us comment on the motivation for using the space of sections of $\sfR[(E,\alpha), (F, \beta)]$ as the 2-Hilbert space inner product of $(E,\alpha)$ and $(F, \beta)$ (at least up to isomorphism).
In Section~\ref{sect:higher_geometric_structures} we have argued that we can view $[-,-]$ as a higher version of the hermitean bundle metric.
We had also seen that the reduction functor $\sfR$ yields an equivalence between higher functions and hermitean vector bundles with connection.
Thus, in order to mimic the pairing of sections of a hermitean line bundle, we would first insert $(E,\alpha)$ and $(F, \beta)$ into the higher bundle metric.
Applying the equivalence $\sfR$, this produces a hermitean vector bundle with connection $\sfR[(E,\alpha), (F, \beta)] \in \HVBdl^\nabla(M)$.
For the pairing of sections $\psi, \phi$ of a line bundle $(L, \nabla^L)$ with bundle metric $h_L$, we would integrate the function $h_L(\psi, \phi)$ over $M$.
Viewing $\sfR[(E,\alpha), (F, \beta)]$ as a function valued in $\Hilb$, this would, intuitively, amount to taking a weighted direct sum of all fibres of $\sfR[(E,\alpha), (F, \beta)]$.
This procedure has been made precise in harmonic analysis, and is referred to as the \emph{direct integral} of a family of Hilbert spaces on $M$, see for instance~\cite{Folland--Harmonic_analysis}.
Heuristically, the direct integral of a family of Hilbert spaces amounts to forming the (infinite-dimensional) Hilbert space of square integrable sections of that family.
We, in contrast, would like to stay in the finite-dimensional setting here, which is why we take a reduced version of this general direct integral by forming the space of parallel sections only.

\begin{remark}
The hermitean structure of the bundle gerbe $(\CG, \nabla^\CG)$ only enters in the pairing of 2-morphisms.
That is, perhaps somewhat surprisingly, we can still define the functor $[-,-]$ for bundle gerbes without hermitean bundle metric on their defining line bundle.
This is similar to how we can still define a functorial pairing $(V, W) \mapsto W \otimes V^*$ on $\Vect$, the category of finite-dimensional $\FC$-vector spaces, which acts on morphisms as $(\psi \colon W \to W', \phi \colon V' \to V) \mapsto \phi \otimes \psi^\sft \colon W \otimes V^* \to W' \otimes V^{\prime\, *}$; the inner product is never used here.
Only when we desire to define an inner product of morphisms of vector spaces do we need inner products on the vector spaces here.
A possible interpretation in terms of higher vector spaces is the following.
Note, first, that we use the dual vector space as a higher replacement for the complex conjugate of a complex number.
For any 1-dimensional complex vector space $V \in \Vect$, there exist isomorphisms of vector spaces $V \otimes V^* \cong \Vect(V,V) \cong \FC$.
Thus, at the same time as being the conjugate of $V$, the vector space $V^*$ provides an inverse of $V$ with respect to the monoidal structure $\otimes$ on $\Vect$.
In this sense, the conjugate and the inverse of $V$ in $\Vect$ agree, which means that we should view any 1-dimensional vector space as a \emph{higher unitary number}, or a higher unitary object in the rig category $\Vect$.
This indicates that many of our results which do not use the inner product on any vector space should generalise to the weaker setting of bundle gerbes without a hermitean structure.
Their higher transition functions, i.e. their defining line bundles, are still valued in unitary objects of $\Vect$.
From this perspective, bundle gerbes without hermitean structure would still be hermitean 2-line bundles, while hermitean bundle gerbes inherit extra structure on the level of 2-morphisms.
\qen
\end{remark}

\section{Example: Bundle gerbes on $\FR^3$}
\label{sect:Ex:BGrbs_on_R3}

As a first example of the abstract theory developed in this and the preceding chapters, we consider the bundle gerbe $\CI_\rho$ on $M \in \Mfd$ for some 2-form $\rho \in \Omega^2(M, \iu\, \FR)$.
In this case, there exists an equivalence of symmetric monoidal categories
\begin{equation}
	\big( \Gamma( M, \CI_\rho), \oplus\, \big) \cong \big( \HVBdl_\rmpar^\nabla(M), \oplus\, \big)\,.
\end{equation}
The target category naturally carries an action of the rig-category
\begin{equation}
	\big( \Gamma( M, \CI_0), \otimes, \oplus\, \big) \cong \big( \HVBdl_\rmpar^\nabla(M), \otimes, \oplus\, \big)\,,
\end{equation}
corresponding to the rig-module structure of $\Gamma(M, \CI_\rho)$ over $\BGrb^\nabla(M)(\CI_0, \CI_0) \cong \Gamma(M,\CI_0)$.
This is analogous to the identity
\begin{equation}
	\Gamma(M, I_A) = \Gamma(M,I_0)
\end{equation}
of sections of the trivial line bundle; the choice of connection has no effect on the space of sections.
The rig-module structure on $\HVBdl_\rmpar^\nabla(M) \cong \Gamma(M,\CI_\rho)$ is given by the tensor product of vector bundles with connections, $\Theta$ acts by taking dual bundles and transpose morphisms, and the bundle gerbe metric acts via
\begin{equation}
	\big( (E, \nabla^E), (F, \nabla^F) \big) \mapsto (F, \nabla^F) \otimes (E, \nabla^E)^*\,, \quad \text{i.e.} \quad
	[-,-] = (-) \otimes (-)^* \circ \sw^\otimes\,,
\end{equation}
with $\sw^\otimes$ denoting the symmetry of the tensor product on $\HVBdl^\nabla_\rmpar(M)$.

Thus, we have an equivalence
\begin{equation}
	\big( \scH_0(\CI_\rho),\, \<-,-\>_{\scH_0(\CI_\rho)} \big)
	\cong \big( \HVBdl^\nabla_\rmpar(M),\, \Gamma_\rmpar(M,-) \circ \big( (-) \otimes (-)^* \big) \circ \sw^\otimes \big)
\end{equation}
as 2-Hilbert spaces.

The situation simplifies further if we assume a contractible base manifold, such as $M = \FR^3$.
As a curving on $\CI_\rho$ we can, for example, choose $\rho = \frac{\iu\, \pi}{3}\, \varepsilon_{jkl}\, x^j\, \dd x^k \wedge \dd x^l$, with $j,k,l \in \{0,1,2\}$.
Its differential reads $\curv(\CI_\rho) = \dd \rho = 2 \pi\, \iu\, \dd x^0 \wedge \dd x^1 \wedge \dd x^2 = 2\pi\, \iu\, \varpi_{\FR^3}$, for the canonical 2-plectic form $\varpi_{\FR^3} = \dd x^0 \wedge \dd x^1 \wedge \dd x^2$ on $\FR^3$.

On a contractible base manifold $M$, every hermitean vector bundle is trivialisable as a vector bundle without connection.
If $M$ is connected, there exist isomorphisms
\begin{equation}
	(E, \nabla^E) \cong \big( M {\times} \FC^{\rank(E)},\, \dd + A \big)
\end{equation}
for some $A \in \Omega^1(M, \fru(\rank(E)))$, where $\fru(n)$ denotes the Lie algebra of the unitary group $\sfU(n)$ for $n \in \NN_0$.
Thus, there exists a further equivalence
\begin{equation}
	\HVBdl^\nabla_\rmpar(M) \cong \Omega^1(M,\fru)\,,
\end{equation}
where $\Omega^1(M, \fru)$ is the following category:
its objects are Lie algebra-valued 1-forms $A \in \Omega^1(M, \fru(n))$ for some $n \in \NN_0$ and its morphism sets read as
\begin{equation}
\begin{aligned}
	\Omega^1(M,\fru) (A, A')
	&= \HVBdl^\nabla_\rmpar(M) \big( (M {\times} \FC^n, \dd + A),\, (M {\times} \FC^{n'}, \dd + A') \big)
	\\
	&= \big\{ f \in \Mfd \big(M, \Mat(n {\times} n', \FC) \big)\, \big|\, A'\,f = f\, A + \dd f \big\}\,,
\end{aligned}
\end{equation}
with $\Mat(n {\times} n', \FC)$ denoting the space of $(n {\times} n')$-matrices with complex coefficients.
Composition is given by multiplication of matrix-valued functions.
The category $\Omega^1(M, \fru)$ is endowed with the structure of a rig-category under the direct sum given on objects by $(A, A') \mapsto A \oplus A' \in \Omega^1(M, \fru(n + n'))$ and the tensor product $(A, A') \mapsto A \otimes \One + \One \otimes A' \in \Omega^1(M, \fru(n \cdot n'))$.
The dual acts via
\begin{equation}
	\big( A \overset{f}{\longrightarrow} A' \big) \mapsto \big( (-A^{\prime\, \sft}) \overset{f^\sft}{\longrightarrow} (-A^\sft) \big)\,.
\end{equation}
This is, indeed, a morphism in $\Omega^1(M, \fru)$: we have
\begin{equation}
	(-A^\sft)\, f^\sft
	= - (f\, A)^\sft
	= - ( A'\,f - \dd f )^\sft
	= f^\sft\, (-A^{\prime\, \sft}) + \dd f^\sft\,.
\end{equation}
For the inner product, we obtain
\begin{equation}
	\<A, A'\>_{\Omega^1(M, \fru)} = \big\{ f \in \Mfd \big(M, \Mat(n {\times} n', \FC) \big)\, \big|\, A'\,f = f\, A + \dd f \big\}\,.
\end{equation}
On morphisms it is given by pre- and postcomposition in the first and second slot, respectively.
We are left to check that $\<A, A'\>_{\Omega^1(M, \fru)} \in \Hilb$.
Consider $f,g \in \<A, A'\>_{\Omega^1(M, \fru)}$.
We have a non-degenerate, sesquilinear pairing on $\<A, A'\>_{\Omega^1(M, \fru)}$ given by
\begin{equation}
	(f,g) \mapsto \tr( f^*\, g)\,.
\end{equation}
To see that this is constant, first observe that the adjoint $f^*$ is a morphism $-A = A^* \to A^{\prime\, *} = -A'$.
We compute
\begin{equation}
	\dd\, \tr( f^*\, g)
	= \tr \big( \dd (f^*\, g) + (f^*\, g)\, A - A\, (f^*\, g) \big)
	= 0\,,
\end{equation}
where we have used the cyclicity of the trace and that $f^*\, g$ is an endomorphism of $A$.
Thus, we can describe the 2-Hilbert space of sections of $\CI_\rho$ on a contractible manifold and, in particular, the 2-Hilbert space of sections of the prequantum bundle gerbe $\CI_\rho$, as
\begin{equation}
	\big( \Gamma(M, \CI_\rho),\, \Gamma_\rmpar(M, \sfR [-,-]) \big)
	\cong \big( \Omega^1(M, \fru),\, \<-,-\>_{\Omega^1(M, \fru)} \big)\,.
\end{equation}

Denote by $0_n \in \Omega^1(M, \fru(n))$ the zero element of the vector space $\Omega^1(M, \fru(n))$.
The rig-category $(\Omega^1(M, \fru), \otimes, 0_1, \oplus, 0_0)$ resembles the Lie 2-algebra of observables constructed in~\cite{Rogers--Thesis,Rogers--L_infty_algebras}.
It would be interesting to see whether the rig-category $\Omega^1(M, \fru)$ does, in fact, contain this Lie 2-algebra and, thus, extends the observables considered in the above references.

\section{Example: A decomposable bundle gerbe}
\label{sect:Ex:Cup_product_lens_space_BGrbs}

In this section we provide an explicit example of a bundle gerbe whose Dixmier-Douady class is decomposable in the sense that it is a realisation of a cup product cohomology class~\cite{MMS--Index_of_projective_families-Decomposable_case}.
It is a special case of a decomposable bundle gerbe as defined in~\cite[Section 3.5]{Johnson--Constructions_with_bundle_gerbes}.
This example can also be found in~\cite{Bunk-Szabo--Fluxes_brbs_2Hspaces}.
We describe the category of sections and the resulting 2-Hilbert space of sections explicitly in terms of more familiar differential geometry.

\subsection{Construction of the bundle gerbe}

The three-sphere $S^3$ carries a free and transitive $\sfU(1)$-action; the principal $\sfU(1)$-fibration induced by it is the Hopf fibration $\sfU(1) \to S^3 \to S^2$.
Via the embedding $\RZ_p \hookrightarrow \sfU(1)$, which sends elements of $\RZ_p$ to $p$-th roots of unity, this yields a free action of $\RZ_p$ on $S^3$.
The quotient lens space $\bbL_p \coloneqq S^3/\RZ_p$ inherits a canonical projection $q \colon \bbL_p \to S^2$.
It is a connected, compact, orientable, 3-dimensional manifold, and one can show that~\cite[Example 2.43]{Hatcher--AT}
\begin{equation}
	\rmH_0(\bbL_p, \RZ) \cong \RZ \cong \rmH_3(\bbL_p, \RZ)\,, \quad
	\rmH_1(\bbL_p, \RZ) \cong \RZ_p\,, \quad
	\rmH_2(\bbL_p, \RZ) \cong 0\,.
\end{equation}
Hence, from the Universal Coefficient Theorem we deduce
\begin{equation}
	\rmH^0(\bbL_p, \RZ) \cong \RZ \cong \rmH^3(\bbL_p, \RZ)\,, \quad
	\rmH^1(\bbL_p, \RZ) \cong 0\,, \quad
	\rmH^2(\bbL_p, \RZ) \cong \RZ_p\,.
\end{equation}
If $K \to S^2$ is the hermitean line bundle associated to the Hopf fibration, we set $J \coloneqq q^*K \to \bbL_p$.
One can then show~\cite{Karoubi--K-theory-an_introduction} that its first Chern class $c_1(J)$ generates $\rmH^2(\bbL_p,\RZ) \cong \RZ_p$.

So far, we have found a manifold $\bbL_p$ which has torsion in its second cohomology.
In order to be able to construct a non-trivial bundle gerbe that admits non-trivial sections (cf. Proposition~\ref{st:no-go_for_sections_of_BGrbs}), we need torsion in the third cohomology group.
Therefore, we consider the space $M_p \coloneqq \bbL_p {\times} S^1$.
From the K\"unneth Theorem we derive
\begin{equation}
	\rmH^3(M_p, \RZ) \cong \RZ \oplus \RZ_p\,.
\end{equation}
By construction, the cup product $c_1(J) \smile [1_{S^1}]$ is a generator for the torsion subgroup $\Tor(\rmH^3(M_p, \RZ)) \cong \RZ_p$.%
\footnote{We will omit pullbacks to $M_p$ where they are clear from context.}
If we write $\omega \in \Omega^2(S^2, \FR)$ for the volume form on $S^2$, we can chose a connection $\nabla^J$ on $J \to M_p$ such that $\curv(\nabla^J) = 2\pi\, \iu\, q^*\omega$.
Such a connection can in fact be pulled back from $J \to S^2$.
The fact that the hermitean line bundle $J \to \bbL_p$ is torsion implies $[\curv(\nabla^J)]_\dR = 0 \in \rmH^2_\dR(M_p, \iu\, \FR)$.

In order to construct a bundle gerbe over the product space $M_p$, we consider the covering projection $\pi_{S^1} \colon \FR \to S^1$ and set $(Y,\pi) = (\bbL_p {\times} \FR, 1_{\bbL_p} {\times} \pi_{S^1})$.
This defines a surjective submersion onto $M_p$.
We have the following diagram:
\begin{equation}
\label{eq:cup_product_BGrb_diagram}
\begin{tikzcd}
	(J, \nabla^J)^{\otimes \RZ} \ar[d, "\FC"] & 
	\\
	\bbL_p {\times} \FR {\times} \RZ \ar[r, shift left=0.1cm] \ar[r, shift left=-0.1cm] & \bbL_p {\times} \FR \ar[d, "\pi", "\RZ"']
	\\
	 & M_p
\end{tikzcd}
\end{equation}
Here we have made use of the isomorphism $\bbL_p {\times} \FR {\times} \RZ \cong (\bbL_p {\times} \FR)^{[2]}$ which is given by mapping $((x, r),(x,s)) \in (\bbL_p {\times} \FR)^{[2]}$ to $(x,r, s-r) \in \bbL_p {\times} \FR {\times} \RZ$.
Accordingly, the degeneracies act as $d_0(x, r, n) = (x, r + n)$ and $d_1(x, r, n) = (x,r)$.
More generally, we have $\bbL_p {\times} \FR {\times} \RZ^k \cong (\bbL_p {\times} \FR)^{[k+1]}$ with
\begin{equation}
	d_i(x, r, n_0, \ldots n_{k-1}) =
	\begin{cases}
		(x, r+n_0, n_1, \ldots, n_{k-1})\,, & i = 0\,,
		\\
		(x, r, n_0, \ldots, n_{i-2}, n_{i-1} + n_i, n_{i+1}, \ldots, n_{k-1})\,, & 0<i<k\,,
		\\
		(x, r, n_0, \ldots, n_{k-2})\,, & i = k\,.
	\end{cases}
\end{equation}
The hermitean line bundle with connection $(J, \nabla^J)^{\otimes \RZ}$ on $\bbL_p {\times} \FR {\times} \RZ$ is defined via
\begin{equation}
	(J, \nabla^J)^{\otimes \RZ}_{|\bbL_p {\times} \FR {\times} \{n\}} \coloneqq \pi^*(J, \nabla^J)^{\otimes n}\,,
\end{equation}
where we understand $(J, \nabla^J)^{\otimes n} = ((J, \nabla^J)^*)^{\otimes (-n)}$ for $n < 0$.

A bundle gerbe product is given via the canonical isomorphism
\begin{equation}
	\mu_{|(x, r, n, m)} \colon (J,\nabla^J)^{\otimes \RZ}_{|(x,r,n)} \otimes (J,\nabla^J)^{\otimes \RZ}_{|(x,r+n,m)} \to (J,\nabla^J)^{\otimes \RZ}_{|(x,r, n+m)}
\end{equation}
which maps
\begin{equation}
	(J,\nabla^J)^{\otimes n}_{|x} \otimes (J,\nabla^J)^{\otimes m}_{|x} \to (J,\nabla^J)^{\otimes n+m}_{|x}\,.
\end{equation}
Note that this is already compatible with any connection on $J \to \bbL_p$.
This yields a bundle gerbe with Dixmier-Douady class the cup product $c_1(J) \smile [1_{S^1}]$~\cite{Johnson--Constructions_with_bundle_gerbes,Brylinski--Loop_spaces_and_geometric_quantisation}.
One can see that this bundle gerbe is torsion of order $p$ by observing that $J^p$ is trivial and using part (1) of Proposition~\ref{st:determinants_of_morphisms_of_BGrbs}.
The next step is to find a connection on $\CG = (J^{\otimes \RZ}, \mu, \bbL_p {\times} \FR, \pi)$.
We have already found a connection $(\nabla^J)^{\otimes \RZ}$ on $J^{\otimes \RZ}$ which is compatible with the bundle gerbe multiplication.
Its curvature reads as
\begin{equation}
\label{eq:cup_product_BGrb--curving}
\begin{aligned}
	\curv \big( (\nabla^J)^{\otimes \RZ} \big)_{|(x,r,n)}
	&= n\, \curv(\nabla^J)_{|x}
	\\*
	&= 2 \pi n\, \iu\ \omega_{|x}\,.
\end{aligned}
\end{equation}
Thus, a choice of curving is given by $B \in \Omega^2(\bbL_p {\times} \FR,\, \iu\, \FR)$ with
\begin{equation}
\begin{aligned}
	B_{|(x,r)} &= r\, \curv(\nabla^J)_{|x}
	\\*
	&= 2 \pi r\, \iu\ \omega_{|x}\,.
\end{aligned}
\end{equation}
To summarise, we have found a hermitean bundle gerbe with connection
\begin{equation}
	(\CG, \nabla^\CG) = \big( (J, \nabla^J)^{\otimes \RZ},\, \mu,\, B,\, \bbL_p {\times} \FR,\, \pi \big) \quad \in \BGrb^\nabla(M_p)
\end{equation}
that has torsion Dixmier-Douady class of order $p$.
Its curvature 3-form can be computed as
\begin{equation}
	\curv(\nabla^\CG) = \curv(\nabla^J) \wedge \dd r = 2\pi\, \iu\, \omega \wedge \dd r\,.
\end{equation}
Consequently, $(M_p, \omega \wedge \dd r)$ is a prequantisable 2-plectic manfiold, and $(\CG, \nabla^\CG)$ provides a prequantum bundle gerbe.

\subsection{The 2-Hilbert space of sections}

We proceed to investigate the category of sections of $(\CG, \nabla^\CG)$.
Recall from Definition~\ref{def:Sections_of_a_BGrb} that this category is the category of morphisms $\CI_0 \to (\CG, \nabla^\CG)$.
We will only consider an equivalent category here, namely the category of such morphisms over the minimal surjective submersion $\BGrb^\nabla_\FP(M_p)(\CI_0, (\CG, \nabla^\CG))$ (cf. Theorem~\ref{st:BGrb_FP_hookrightarrow_BGrb_is_equivalence}).
An object in this category consists of of a hermitean vector bundle with connection $(E,\nabla^E) \in \HVBdl^\nabla(\bbL_p {\times} \FR)$ defined over the fibre product $M_p {\times}_{M_p} (\bbL_p {\times} \FR) \cong \bbL_p {\times} \FR$, together with a unitary, parallel isomorphism $\alpha \colon d_0^*E \to d_1^*E \otimes J^{\otimes \RZ}$ which is compatible with $\mu$ in the sense of~\eqref{eq:1-morphisms_compatibility_with_BGrb_multiplications}.
For two such 1-morphisms, 2-morphisms $(E, \alpha) \to (F, \beta)$ are given by parallel morphisms of hermitean vector bundles $\psi \colon E \to F$ which are compatible with $\alpha$ and $\beta$.
Over a point $(x,r,n) \in \bbL_p {\times} \FR {\times} \RZ$ this reads as
\begin{equation}
	\alpha_{|(x,r,n)} \colon E_{|(x,r+n)} \to E_{|(x,r)} \otimes J^{\otimes n}_{|x}\,,
\end{equation}
which can be understood as a twisted version of $\RZ$-equivariance of $(E, \nabla^E)$, with the twist introduced by the line bundle $(J, \nabla^J)$.

We can undo this twist by passing to the covering $S^3 {\times} \FR \to \bbL_p {\times} \FR$ with fibre $\RZ_p$.
First, the line bundle $K \to S^2$ associated to the Hopf fibration $S^3 \to S^2$ is the descent of the trivial line bundle $I \to S^3$ with the $\sfU(1)$-action $(y, z) \cdot \lambda = (y \cdot \lambda, \lambda^{-1}\, z)$ for $y \in S^3$, $z \in \FC$ and $\lambda \in \sfU(1)$.
Restricting this to the action of $\RZ_p \subset \sfU(1)$ defines a $\RZ_p$-equivariant line bundle $I^{(p)} \to S^3$, which, in turn, provides descent data for the hermitean line bundle $J \to \bbL_p$.
The descent data for $\nabla^J$ is given by some $\RZ_p$-invariant $\kappa \in \Omega^1(S^3 {\times} S^1, \iu\, \FR)$, where $\RZ_p$ acts trivially on $S^1$.
An object $(E, \alpha)$ as above lifts to descent data $(\hat{E}, \phi^{\hat{E}})$ of a hermitean vector bundle for the projection $S^3 {\times} \FR \to \bbL_p {\times} \FR$ together with a descent isomorphism $\hat{\alpha}$.
Here,
\begin{equation}
\begin{aligned}
	\phi^{\hat{E}}_{(y, \zeta ,r)} &\colon \hat{E}_{|(y\, \zeta,r)} \to \hat{E}_{|(y,r)}\,,
	\\
	\hat{\alpha}_{|(y,r,n)} &\colon \hat{E}_{|(y,r+n)} \to \hat{E}_{|(y,r)} \otimes I^{(p)\, \otimes n}_{|y}
\end{aligned}
\end{equation}
for $y \in S^3$, $r \in \FR$, $n \in \RZ$ and $\zeta \in \RZ_p$.
The fact that $\hat{\alpha}$ is a descent isomorphism is equivalent to saying that $\hat{\alpha}$ intertwines the $\RZ_p$-actions in its source and target.
Consider the morphism
\begin{equation}
	m^{\hat{E}}_{|(y,r)} \colon \hat{E}_{|(y,r)} \otimes I^{(p)\, \otimes n}_{|y} \to \hat{E}_{|(y,r)}\,, \quad
	\big( e, (y,z) \big) \mapsto z \cdot e\,.
\end{equation}
This is an isomorphism of hermitean vector bundles over $S^3 {\times} \FR$, but it fails to be equivariant:
over $ (y, \zeta, r, n) \in S^3 {\times} \RZ_p \times \FR {\times} \RZ$ we have (omitting pullbacks)
\begin{equation}
\begin{aligned}
	m^{\hat{E}}_{|(y,r)} \big( \phi^{\hat{E}}_{(y, \zeta ,r)}(e) \otimes (y\, \zeta,\, \zeta^{-n}\, z) \big)
	&= \zeta^{-n}\, z\, \phi^{\hat{E}}_{(y, \zeta ,r)}(e)
	\\*
	&= \zeta^{-n}\, \phi^{\hat{E}}_{|(y, \zeta, r)} \big( m^{\hat{E}}_{|(y\, \zeta,r)} (e \otimes (y,z)) \big)\,.
\end{aligned}
\end{equation}
The geometric twist given by the line bundle $I^{(p)}$ can now be absorbed into the descent isomorphism by setting $\gamma^{\hat{E}} \coloneqq m^{\hat{E}} \circ \hat{\alpha} \colon d_0^*\hat{E} \to d_1^*\hat{E}$, or, more explicitly,
\begin{equation}
	\gamma^{\hat{E}}_{|(y, r, n)} \coloneqq m_{|(y,r)}^{\hat{E}} \circ \hat{\alpha}_{|(y,r,n)} \colon \hat{E}_{|(y, r+n)} \to \hat{E}_{|(y,r)}\,.
\end{equation}
Here we have to be careful because $m^{\hat{E}}$ is not parallel.
Instead, observing that the connection on $\hat{E} \otimes I^{(p)\, \otimes n}$ is given by $\nabla^E \otimes \One + \One \otimes (\dd + n \kappa)$, for which we will write $\nabla^{\hat{E}} + n\, \kappa$, we see that
\begin{equation}
	\big( \nabla^{\hat{E}} \circ m^{\hat{E}} - m^{\hat{E}} \circ \big( \nabla^{\hat{E}} + n\, \kappa \big) \big)_{|(y,r)}
	= - n\, \kappa_{|y} \otimes m^{\hat{E}}_{|(y,r)}\,.
\end{equation}
However, if we endow $\hat{E}$ with the connection $\nabla^{\hat{E}} + a$ with $a_{|(y,r)} = r\, \kappa_{|y}$, the composition $\gamma^{\hat{E}}$ becomes parallel:
\begin{equation}
\begin{aligned}
	\big( (\nabla^{\hat{E}} + a) \circ  \gamma^{\hat{E}} - \gamma^{\hat{E}} \circ (\nabla^{\hat{E}} + a) \big)_{|(y,r,n)}
	&= (a_{|(y,r)} - a_{|(y, r+n)} + n\, \kappa_{|y})\, \gamma^{\hat{E}}_{|(y,r,n)}
	\\*
	&= (n\, \kappa_{|y} - n\, \kappa_{|y})\, \gamma^{\hat{E}}_{|(y,r,n)}
	\\*
	&= 0\,.
\end{aligned}
\end{equation}
Note that the isomorphism $\phi^{\hat{E}}$ is still parallel with respect to the modified connection $\nabla^{\hat{E}} + a$ since $\phi^{\hat{E}}$ leaves the argument $r \in \FR$ unchanged.
Thus, the bundle $(\hat{E}, \nabla^{\hat{E}} + a) \in \HVBdl^\nabla(S^3 {\times} \FR)$ is endowed with two equivariant structures.
The isomorphisms $\phi^{\hat{E}}_{(y,\zeta, r)} \colon \hat{E}_{|(y\, \zeta, r)} \to \hat{E}_{|(y, r)}$ make $(\hat{E}, \nabla^{\hat{E}} + a)$ equivariant with respect to the $\RZ_p$-action on $S^3$, while $\gamma^{\hat{E}}_{(y, r, n)} \colon \hat{E}_{|(y, r+n)} \to \hat{E}_{|(y, r)}$ make $(\hat{E}, \nabla^{\hat{E}} + a)$ equivariant with respect to the $\RZ$-action on $\FR$.
Consequently, $(\hat{E}, \nabla^{\hat{E}} + a)$ could descend along both projections $S^3 \to \bbL_p$ and $\FR \to S^1$.
This is true for either of the projections separately, but not simultaneously;
the two equivariant structures do not commute and, hence, do not form a $\RZ_p {\times} \RZ$-equivariant structure:
instead, we have
\begin{equation}
\label{eq:plus-twisted_equivariance}
	\phi^{\hat{E}}_{(y,\zeta, r)} \circ \gamma^{\hat{E}}_{(y\, \zeta, r, n)}
	= \zeta^n \cdot \gamma^{\hat{E}}_{(y, r, n)} \circ \phi^{\hat{E}}_{(y,\zeta, r+n)}\,.
\end{equation}
Let us call a hermitean vector bundle with connection on a space with both a $\RZ_p {\times} \RZ$-action which has an equivariant structure for each of the separate group actions satisfying a twisted commutation relation as in~\eqref{eq:plus-twisted_equivariance} a \emph{$(\RZ_p{-}\RZ)_+$-twisted equivariant hermitean vector bundle with connection}.
We write $\HVBdl^\nabla_{\rmpar, (\RZ_p{-}\RZ)_+}(S^3 {\times} \FR)$ for their category.
Here we take as morphism of such bundles parallel morphisms of hermitean vector bundles with connection which commute with both equivariant structures.
Note that altering the connection by adding $a$ has no effect on the parallel morphisms $(E, \nabla^E) \to (F, \nabla^F)$.
We have, thus, derived an equivalence of categories
\begin{equation}
	\Gamma \big( M_p,\, (\CG, \nabla^\CG) \big)
	\cong \HVBdl^\nabla_{\rmpar, (\RZ_p{-}\RZ)_+}(S^3 {\times} \FR)\,.
\end{equation}
The category on the right-hand side naturally comes endowed with a direct sum, which the equivalence (which is merely descent for the relative sheaf of categories $\HVBdl^\nabla$, cf. Section~\ref{sect:Coverings_and_sheaves_of_cats}) maps to the direct sum in $\BGrb^\nabla_\FP(M)(\CI_0, (\CG, \nabla^\CG))$.
Further, note that there exists a natural action of $\HVBdl^\nabla(M_p) \cong \BGrb^\nabla(M_p)(\CI_0, \CI_0)$ on $\HVBdl^\nabla_{\rmpar, \RZ_p{-}\RZ}(S^3 {\times} \FR)$.
It is given by pulling back a hermitean vector bundle with connection along the fibration $S^3 {\times} \FR \to M_p$ and then taking the tensor product.
The functor $\Theta$ acts on $(\hat{E}, \nabla^{\hat{E}} + a, \gamma^{\hat{E}}, \phi^{\hat{E}})$ via taking the dual bundle and the inverse tranpose of the action morphisms.
Note that this affects the relation~\eqref{eq:plus-twisted_equivariance} by adding an inverse to $\zeta$, i.e. we have
\begin{equation}
\label{eq:minus-twisted_equivariance}
	\phi^{\hat{E}\, -\sft}_{(y,\zeta, r)} \circ \gamma^{\hat{E}\, -\sft}_{(y\, \zeta, r, n)}
	= \zeta^{-n} \cdot \gamma^{\hat{E}\, -\sft}_{(y, r, n)} \circ \phi^{\hat{E}\, -\sft}_{(y,\zeta, r+n)}\,.
\end{equation}
We call a hermitean vector bundle with a $\RZ_p$-equivariant structure and a $\RZ$-equivariant structure satisfying~\eqref{eq:minus-twisted_equivariance} a \emph{$(\RZ_p{-}\RZ)_-$-twisted equivariant hermitean vector bundle with connection} and write $\HVBdl^\nabla_{\rmpar, (\RZ_p{-}\RZ)_-}(S^3 {\times} \FR)$ for their category.

The tensor product of vector bundles induces a pairing
\begin{equation}
\begin{aligned}
	&\HVBdl^\nabla_{\rmpar, (\RZ_p{-}\RZ)_-}(S^3 {\times} \FR) \times \HVBdl^\nabla_{\rmpar, (\RZ_p{-}\RZ)_+}(S^3 {\times} \FR)
	\\*
	&\quad \to \HVBdl^\nabla_{\rmpar, \RZ_p{\times}\RZ}(S^3 {\times} \FR) \cong \HVBdl^\nabla_{\rmpar}(M_p)\,,
\end{aligned}
\end{equation}
where $\HVBdl^\nabla_{\rmpar, \RZ_p{\times}\RZ}(S^3 {\times} \FR)$ denotes the category of $(\RZ_p {\times} \RZ)$-equivariant vector bundles on $S^3 {\times} \FR$, and the equivalence on the target side is descent.
We thus obtain the inner product of sections as
\begin{equation}
\begin{aligned}
	&\<-,-\> = \Desc \circ \big( \Theta(-) \otimes (-) \big) \colon
	\\*
	&\quad \overline{ \HVBdl^\nabla_{\rmpar, (\RZ_p{-}\RZ)_+}(S^3 {\times} \FR) } \times \HVBdl^\nabla_{\rmpar, (\RZ_p{-}\RZ)_+}(S^3 {\times} \FR) \to \HVBdl^\nabla_{\rmpar}(M_p)\,.
\end{aligned}
\end{equation}

\section{Remarks on the torsion constraint}
\label{sect:ways_around_the_torsion_constraint}

In this section we revisit the torsion constraint (cf. Proposition~\ref{st:no-go_for_sections_of_BGrbs}) on the existence of non-zero sections of a bundle gerbe.
As pointed out in Section~\ref{sect:higher_geometric_structures}, this puts a strong restriction on the theory of bundle gerbes and their morphisms as opposed to line bundles, where every pair of line bundles admits a non-zero morphism between them by a partition of unity argument.
The absence of a partition of unity for higher line bundles, i.e. bundle gerbes, prohibits carrying over many constructive existence proofs from ordinary differential geometry.

However, at the heart of the no-go statement given by Proposition~\ref{st:no-go_for_sections_of_BGrbs} lies the determinant trick used in the proof of Proposition~\ref{st:determinants_of_morphisms_of_BGrbs}.
Consequently, in order to circumvent the no-go statement, this argument has to be bypassed.
To that end, we would have to weaken the notion of a 1-morphism of bundle gerbes in a way that would make it impossible to form the determinant bundle of the underlying bundle or similar object.
There are two straightforward candidate solutions:
allow for bundles of infinite rank in the definition of 1-morphisms, or weaken the notion of bundle to a family of Hilbert spaces whose rank may vary from point to point.

\subsection{Hilbert-bundle morphisms}

Let us address the first attempt to circumvent Proposition~\ref{st:no-go_for_sections_of_BGrbs}, i.e. let us, for a moment, alter Definition~\ref{def:1-morphisms_of_BGrbs} by allowing the bundle $E \to Z$ in a 1-morphism of bundle gerbes to be a bundle of Hilbert spaces of possibly infinite rank.
Such bundles still satisfy descent and form a relative sheaf of categories on $\Mfd$ in the sense of Definition~\ref{def:relative_sheaf_of_Cats}, where the subsheaf of groupoids used for descent data contains all objects but has only fibrewise unitary isomorphisms as morphisms.
Hence, we can still construct a 2-category of bundle gerbes based on these bundles in the same way as we constructed $\BGrb^\nabla(M)$.
Let us work in the setting without connections for now.
In particular, we would still have an equivalence of categories $\BGrb_\FP(M)(\CG_0, \CG_1) \cong \BGrb(M)(\CG_0, \CG_1)$.
However, while this enlargement of the category of bundle gerbes introduces morphisms between every two bundle gerbes, the theory of the resulting 2-category is not particularly rich:

\begin{proposition}
\label{st:uniqueness_of_Hilbert-bundle_1-morphisms}
For any pair $\CG_0, \CG_1 \in \BGrb(M)$ there exists an infinite-rank morphism $\CG_0 \to \CG_1$, and it is unique up to 2-isomorphism.%
\footnote{This statement can be found in~\cite[Proposition 7.1]{BCMMS}.
However, it has not been proven in that article; here we provide a full proof.}
\end{proposition}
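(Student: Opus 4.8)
The plan is to prove existence and uniqueness separately, both by exploiting the fact that bundles of Hilbert spaces of countably-infinite rank are far more flexible than finite-rank bundles. Throughout I would work without connections, so that by Proposition~\ref{st:classification_of_isomps_of_BGrbs} and the equivalence $\BGrb_\FP(M) \cong \BGrb(M)$ I may replace arbitrary morphisms by ones over the minimal surjective submersion $Y_{01} = Y_0 \times_M Y_1$, i.e.\ by (infinite-rank) twisted vector bundles $(E, \alpha)$ where $E \to Y_{01}$ carries a unitary isomorphism $\alpha \colon \zeta_{Y_0}^{[2]*} L_0 \otimes d_0^*E \to d_1^*E \otimes \zeta_{Y_1}^{[2]*} L_1$ satisfying~\eqref{eq:1-morphisms_compatibility_with_BGrb_multiplications}.

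For existence, the key idea is that the determinant obstruction from Proposition~\ref{st:determinants_of_morphisms_of_BGrbs} simply cannot be formed for infinite-rank bundles, so I expect a direct construction to succeed. First I would observe that a morphism $\CG_0 \to \CG_1$ is the same datum as a morphism $\CI_0 \to \CG_0^* \otimes \CG_1$ (using the dual and tensor product of bundle gerbes, together with $\delta_{\CG_0}$), so it suffices to produce an infinite-rank section of an arbitrary bundle gerbe $\CG = \CG_0^* \otimes \CG_1$. I would construct this from a local description: choosing a good open cover and the associated local bundle gerbe data $(\{I_{A_{ab}}\}, \{g_{abc}\}, \ldots)$ as in Section~\ref{sect:Ex:Local_BGrbs}, a (twisted) infinite-rank bundle amounts to Hilbert spaces $E_a$ over patches with transition isomorphisms twisted by the $g_{abc}$. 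The decisive point is that one can absorb the $\sfU(1)$-valued cocycle $g_{abc}$ into a unitary operator on a fixed separable Hilbert space $\CH$: since $\sfU(1)$ acts unitarily on $\CH$ (e.g.\ by scalar multiplication, or by a regular-type representation), the scalar twist becomes an inner automorphism, and a standard argument shows such twisted bundles always admit global sections valued in $\CH$. This is where I would lean on the machinery around twisted vector bundles and the relative-sheaf descent of $\HVBdl$.

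For uniqueness up to $2$-isomorphism, the essential input is that any two infinite-rank twisted bundles $(E,\alpha)$ and $(E',\alpha')$ with the same twist are unitarily isomorphic as twisted bundles, because a separable infinite-dimensional Hilbert space absorbs everything: fibrewise, $\CH \cong \CH \oplus (\text{anything})$, and one can stabilise both bundles to a standard ``trivial'' infinite-rank twisted bundle on the nose. Concretely I would show each $(E,\alpha)$ is $2$-isomorphic to a fixed model twisted bundle $(\CH_{\mathrm{triv}}, \alpha_{\mathrm{triv}})$ built from the regular representation, invoking an Eilenberg-swindle / Kuiper-type argument: the unitary group of an infinite-dimensional Hilbert space is contractible, so there is no $K$-theoretic obstruction to trivialisation, and the twist datum matches on both sides by construction. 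Transitivity of $2$-isomorphism then gives $(E,\alpha) \cong (E',\alpha')$.

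The main obstacle I anticipate is making the stabilisation/absorption argument compatible with the \emph{twist} $\alpha$ globally rather than just fibrewise: one must check that the unitary isomorphisms produced patchwise intertwine the bundle gerbe multiplications $\mu_i$ and can be glued, i.e.\ that the contractibility of the unitary group is used coherently across overlaps so that no secondary cohomological obstruction (valued in the twist) survives. I would handle this by phrasing the trivialisation as a section of an associated bundle with contractible fibre $\mathsf{U}(\CH)$ over the relevant fibre products, where a partition-of-unity / obstruction-theory argument applies precisely because the fibre is contractible—this is exactly the feature that finite-rank bundles lack and that defeats the determinant trick. The remaining verifications that the resulting $\psi$ is a genuine $2$-morphism (compatibility~\eqref{eq:2-morphism_compatibility_with_alphas}) are routine and follow from the compatibility of $\alpha$, $\alpha'$ with $\mu_0, \mu_1$.
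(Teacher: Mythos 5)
Your uniqueness argument, once stripped of the detour through a ``standard model'', is essentially the paper's proof: given two infinite-rank morphisms $(E,\alpha)$ and $(F,\beta)$ over the minimal surjective submersion, the paper forms the bundle $\sfU(E,F)\to Y_{01}$ of fibrewise unitaries, observes that the structural isomorphisms $\alpha,\beta$ and $\delta_{L_i}$ make it descend to $M$, identifies unitary 2-isomorphisms $(E,\alpha)\to(F,\beta)$ with sections of the descended bundle, and produces a section because the fibre $\sfU(\CH)$ is contractible by Kuiper's theorem (fibre bundle $=$ Serre fibration, trivial by the long exact homotopy sequence, and $M$ is a CW complex, so the identity lifts). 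That is exactly your ``section of an associated bundle with contractible fibre, obstruction theory applies'' step, and it simultaneously handles the coherence worry you raise: compatibility with $\mu_0,\mu_1$ is not checked patchwise afterwards but is built in from the start, because the obstruction theory runs over $M$ \emph{after} descent, where sections automatically satisfy~\eqref{eq:2-morphism_compatibility_with_alphas}. Your Eilenberg-swindle/Kasparov-absorption variant (``$\CH$ absorbs everything on the nose'') is both unnecessary and delicate for twisted bundles; routing both bundles through a model adds nothing over applying the contractibility argument directly to the pair.

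The genuine gap is in your existence argument. The step ``since $\sfU(1)$ acts unitarily on $\CH$ (e.g.\ by scalar multiplication), the scalar twist becomes an inner automorphism'' cannot work as stated: scalars are \emph{central} in $\sfU(\CH)$, so conjugation by them is trivial, and no representation of $\sfU(1)$ on $\CH$ turns the \v{C}ech $2$-cocycle $g_{abc}$ into something you can cancel by conjugation. What you actually need is to write $g_{abc}$ as a coboundary of $\sfU(\CH)$-valued functions, i.e.\ to find unitaries $u_{ab}\colon U_{ab}\to\sfU(\CH)$ with $u_{bc}\,u_{ab}=g_{abc}\,u_{ac}$; equivalently, to realise the class of $\CG_0^*\otimes\CG_1$ in $\rmH^3(M,\RZ)$ as the lifting obstruction of a $\PP\sfU(\CH)$-principal bundle. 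This is the Dixmier--Douady theorem, a genuinely infinite-dimensional input that does not follow from descent machinery for $\HVBdl$ or from the unitarity of scalar multiplication. The paper supplies precisely this input: it quotes (from~\cite{BCMMS}) that every bundle gerbe is stably isomorphic to the lifting bundle gerbe $\CG_P$ of some $\PP\sfU(\CH)$-bundle $P\to M$, writes down the tautological infinite-rank morphism $\CI\to\CG_P$ with underlying bundle $P\times\CH$, dually a morphism $\CG_P\to\CI$, and obtains $\CG_0\to\CI\to\CG_1$ by composition. If you replace your ``inner automorphism'' step by this realisability statement (or by the standard partition-of-unity construction of a projective-unitary cocycle lifting $[g_{abc}]$), your reduction to sections of $\CG_0^*\otimes\CG_1$ goes through and the two existence proofs become essentially the same.
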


\begin{proof}
Any bundle gerbe on $M$ is (stably) isomorphic to a lifting bundle gerbe of a $\PP\sfU(\CH)$ principal bundle on $M$ (see, for instance, \cite{BCMMS}), where $\PP\sfU(\CH)$ denotes the projective unitary group of an infinite-dimensional, separable Hilbert space $\CH$~\cite{BCMMS}.
Therefore, it suffices to investigate this type of bundle gerbe.
Let $\pi \colon P \to M$ be a $\PP\sfU(\CH)$-bundle and denote by $\CG_P$ the associated lifting bundle gerbe.
There is a morphism $\CI \to \CG_P$ given by $(E_P = P {\times} \CH, \alpha, P, \pi)$, where
\begin{equation}
	\alpha_{|(R_{[u]}\, p,\, p)} \colon E_{P|R_{[u]} p} \to E_{P|p} \otimes L_{|[u]}\,, \quad
	\psi \mapsto u\, \psi \otimes [u, 1]
\end{equation}
for $[u] \in \PP\sfU(\CH)$, $\psi \in \CH$, and $L \to \PP\sfU(\CH)$ the line bundle associated to the $\sfU(1)$-bundle $\sfU(\CH) \to \PP\sfU(\CH)$.
Similarly, one can construct a morphism $\CG_P \to \CI$.
Therefore, every bundle gerbe admits morphisms $\CI \to \CG \to \CI$ whose underlying vector bundles are infinite-rank Hilbert bundles.
We can take now compositions of the morphisms $\CI \to \CG_1$ and $\CG_0 \to \CI$ in order to obtain an infinite-rank morphism $\CG_0 \to \CG_1$.
This proves the existence part of the statement.

Now assume that there exist two infinite-rank morphisms $(E,\alpha)$ and $(F, \beta)$ from $\CG_0$ to $\CG_1$, both defined over the minimal surjective submersion $Z = Y_{01} = Y_0 {\times}_M Y_1$, where $Y_i \to M$ is the surjective submersions of $\CG_i$.
We can assume this without restriction because every 1-morphism is 2-isomorphic to a 1-morphism defined over this submersion by the equivalence $\BGrb_\FP(M)(\CG_0, \CG_1) \cong \BGrb(M)(\CG_0, \CG_1)$.
We employ a construction very similar to the definition of the bifunctor $[-,-]$ in Section~\ref{sect:Pairings_and_inner_hom_of_morphisms_in_BGrb}.
Consider the bundle $\sfU(E,F) \to Y_{01}$, whose fibre over $(y_0,y_1) \in Y_{01}$ is the space of unitary isomorphisms from $E_{|(y_0,y_1)}$ to $F_{|(y_0,y_1)}$.
This bundle gives rise to a descent datum via
\begin{equation}
	\delta_{L_1} \circ \beta \circ (-) \circ \alpha^{-1} \circ \delta_{L_0}^{-1} \colon
	\sfU(E,F)_{|(y'_0, y'_1)} \to \sfU(E,F)_{|(y_0, y_1)}\,,
\end{equation}
where $L_i$ denotes the hermitean line bundle of the bundle gerbe $\CG_i$ and $(y_i, y'_i) \in Y_i^{[2]}$ for $i = 0,1$.
Analogously to Corollary~\ref{st:sections_of_reduced_pairing_agree_with_2-homs}, there is a bijection between unitary 2-isomorphisms $(E,\alpha) \to (F, \beta)$ and sections of the descent bundle $\sfR(\sfU(E,F))$ of the descent data found above.
Thus, if we can show that there exists a section of the descent bundle $\sfR(\sfU(E,F))$, that implies that $(E,\alpha) \cong (F, \beta)$.

The bundle $\sfR(\sfU(E,F))$ is a fibre bundle with fibre $\sfU(\CH)$, but not a principal $\sfU(\CH)$-bundle.
Nevertheless, by Kuiper's Theorem, $\sfU(\CH)$ is contractible.
The existence of a global continuous section can be seen by a very compact argument as follows:
a fibre bundle is a Serre fibration, i.e. a fibration in the model category of topological spaces (see e.g.~\cite{Hovey--Model_categories}).
The long exact sequence in homotopy then shows that $\sfR(\sfU(E,F)) \to M$ is a trivial fibration.
Any manifold can be endowed with the structure of a CW-complex (it admits a triangulation) and is thus (homeomorphic to) a cofibrant object in the category of topological spaces.
As the identity $1_M$ is a cofibration, this map admits lifts to the total space of $\sfR(\sfU(E,F))$, which is the same as a section of $\sfR(\sfU(E,F))$.
More generally, one can use arguments from e.g.~\cite{Dold--Partitions_of_unity_and_fibrations,Steenrod--Fibre_bundles} to show that smooth fibre bundles with contractible fibres admit smooth global sections.
\end{proof}

Consequently, allowing for infinite-rank morphisms yields exactly one additional 2-isomorphism class of 1-morphisms between every two bundle gerbes.
The reason is, as expected, Kuiper's Theorem, which implies this uniqueness very directly for the case of $\CG_0 = \CG_1 = \CI$.
In this case the statement is precisely the uniqueness up to isomorphism of infinite-rank Hilbert bundles on any manifold $M$.

Any interesting enlargement of $\BGrb^\nabla(M)$ would, consequently, result from the connections on infinite-rank 1-morphisms only.
However, infinite-rank sections, even with connections, are impracticable from the point of view of 2-Hilbert spaces:
as pointed out in Example~\ref{eg:Hilb_sep_is_not_a_2Hspace}, the category $\Hilb_\sep$ of possibly infinite-dimensional Hilbert spaces does not fit into our framework of 2-Hilbert spaces from Definition~\ref{def:2Hspace}, the contradiction stemming from the fact that taking morphism spaces is not $\Hilb_\sep$-valued and, hence, does not yield a 2-Hilbert space inner product.
We could weaken our definition of 2-Hilbert spaces to allow $\scH(-,-)$ and $\<-,-\>_\scH$ to possibly be different functors.
The natural choice for an inner product on $\Hilb_\sep$ would be the space of Hilbert-Schmidt operators between two Hilbert spaces.
In order to construct the pairing on $\scH_0(\CG, \nabla^\CG)$ in that setting, we would have to use bundles of Hilbert-Schmidt operators as well in order for the pairing $[-,-]$ to produce bundles of Hilbert spaces rather than Banach bundles.
However, if there is any non-trivial parallel Hilbert-Schmidt 2-morphism between two infinite-rank 1-morphisms $\CI_0 \to (\CG, \nabla^\CG)$, this defines a finite-rank 1-morphism for each of its non-zero eigenvalues.
This is contradicted, once again, by Proposition~\ref{st:no-go_for_sections_of_BGrbs} if $\CG$ has non-torsion Dixmier-Douady class.
Thus, the inner product in $\scH_0(\CG, \nabla^\CG)$ would be degenerate whenever $\DD(\CG)$ is non-torsion.%
\footnote{Note that this argument still applies to the situation of infinite-rank 1-morphisms with reduced structure groups like those used to define twisted K-theory~\cite{BCMMS}.}
To summarise, allowing for infinite-rank 1-morphisms does not resolve the obstructions in constructing a sensible 2-Hilbert space of sections for a non-torsion bundle gerbe.

\subsection{Continuous families of Hilbert spaces}

The second way around the determinant argument would be to allow for varying rank of 1-morphisms.
More succinctly, one could weaken the notion of bundles to continuous, or even measurable, families of Hilbert spaces as made precise, for instance, in~\cite[Chapter 7]{Folland--Harmonic_analysis}.
This would presumably also resemble the use of square integrable sections of the prequantum line bundle in the prequantum Hilbert space, rather than smooth sections, more closely.
In geometric constructions, bundles are far more familiar, but, to the end of constructing a 2-Hilbert space of sections, using such families of possibly infinite-dimensional Hilbert spaces might be the natural next step to take from a conceptual point of view.
The global section functor in the construction of the $\Hilb$-valued inner product bifunctor in Section~\ref{sect:2-Hspace_of_a_BGrb} should then be given by the direct integral, as already indicated in that section.
To work out the necessary details for for this framework would, however, go beyond the scope of this thesis.

\chapter{Transgression and dimensional reduction}
\label{ch:Transgression_and_reduction}

\section{Diffeological spaces and bundles}
\label{sect:DfgSp_and_diffeological_bundles}

For the reader's convenience, we briefly recall the definition of a diffeolgical space.
This notion goes back to Souriau~\cite{Souriau:Groupes_Differentiels}; a standard reference is~\cite{Iglesias-Zemmour--Diffeology}, while a concise introduction is given in~\cite{Baez-Hoffnung:Convenient_Spaces}.
Recall that we denote by $\Set$ the category of sets and set $\NN_0 = \NN \cup \{0\}$.
The idea behind diffeological spaces is that a geometry can be probed by the collection of smooth maps into it rather than considering local charts as the fundamental devices that detect geometry.

\begin{definition}[Diffeological space]
\label{def:Dfg_Space}
A \emph{diffeological space} is a set $X$ together with a set $\Plot(X)$ of \emph{plots of $X$}, i.e. maps $\varphi \in \Set(U,X)$, where $U \subset \FR^n$ is an open subset for some $n \in \NN_0$, with the following properties:
\begin{myenumerate}
	\item For a plot $\varphi \in \Set(U,X)$ of $X$ and a smooth map $f \in \Mfd(V,U) \subset \Set(V,U)$, $\varphi \circ f$ is a plot of $X$ as well.%
	\footnote{Note that $U \subset \FR^n$, $V \subset \FR^m$ for $n \neq m$ is allowed here.}
	In other words, $\Plot(X)$ is closed under precomposition by smooth maps.
	
	\item For any open covering $(\iota_a \colon U_a \hookrightarrow U)_{a \in \Lambda}$ of $U$ indexed by $\Lambda \in \Set$ and a map $\varphi \in \Set(U,X)$ such that $\varphi \circ \iota_a \in \Plot(X)$ for all $a \in \Lambda$, $\varphi$ is a plot of $X$ as well.
	
	\item The set $\Plot(X)$ contains all constant maps: $\Set(\FR^0,X) \subset \Plot(X)$.
\end{myenumerate}
The choice of a set $\Plot(X)$ of plots of $X$ is also called a \emph{diffeology on $X$}.
\\
A \emph{morphism of diffeological spaces}, or \emph{diffeological map}, $(X,\Plot(X)) \to (Y,\Plot(Y))$ is a map $f \in \Set(X,Y)$ such that $f \circ \varphi \in \Plot(Y)$ for every plot $\varphi \in \Plot(X)$.
This yields a category of diffeological spaces which we will denote $\DfgSp$.
\end{definition}

Diffeology is a significantly weakened notion of smoothness, but it still allows one to carry over many concepts known from differential geometry such as differential forms, bundles, or parallel transport.

\begin{example}
\label{eg:diffeological_spaces_and_constructions}
We list several basic diffeological structures:
\begin{myenumerate}
	\item On every set $X$ there exists a discrete diffeology whose plots are all constant maps $U \to X$ for any $U \subset \FR^n$, $n \in \NN_0$.
	
	\item Every manifold $M \in \Mfd$ is a diffeological space by setting $\Plot(M)$ to be the set of all smooth maps from open subsets $U \subset \FR^n$ to $M$ for any $n \in \NN_0$.
	
	\item Given two manifolds $M, N \in \Mfd$, the mapping space $\Mfd(N,M)$ can be endowed with a diffeological structure:
	a map $\varphi \colon U \to \Mfd(N,M)$ is a plot of $\Mfd(N,M)$ if the composition
	\begin{equation}
	\begin{tikzcd}
		U {\times} N \ar[r, "\varphi {\times} 1_N"] & \Mfd(N,M) {\times} N \ar[r, "\ev"] & M
	\end{tikzcd}
	\end{equation}
	is smooth, where $\ev \colon \Mfd(N,M) {\times} N \to M$ denotes the evaluation map.
	The most important example for us will be the loop space of a manifold $M \in \Mfd$, of which we define two different versions:
	\begin{equation}
	\begin{aligned}
		LM &\coloneqq \Mfd(S^1,M)\,,
		\\
		L_\rmsi M &\coloneqq \big\{ \gamma \in \Mfd(S^1, M)\, \big|\, \gamma\ \text{has a sitting instant at}\ 1 \in S^1 \big\}\,.
	\end{aligned}
	\end{equation}
	Here the subscript $\rmsi$ is to indicate that $L_\rmsi M$ only contains loops with sitting instant at $1 \in S^1$.
	In general we say that a smooth map $f \in \Mfd(N,M)$ \emph{has a sitting instant at $y \in N$} if there exists an open neighbourhood $V \subset N$ around $y$ such that $f_{|V}$ is constant.%
	\footnote{For our purposes here it is generally sufficient to work with generic smooth loops in $M$.
	However, in order to relate to the results in~\cite{Waldorf--Transgression_I,Waldorf--Transgression_II}, especially to enable us to apply regression techniques if desired, we also consider loops with sitting instants.}
	Note that while we might have described $LM$ as a Frech\'et manifold, this is not possible for $L_\rmsi M$.
	
	\item Given a diffeological space $X \in \DfgSp$ and a set $Y \in \Set$, any map $p \in \Set(X,Y)$ induces a diffeology on $Y$:
	a map $\varphi \in \Set(U,Y)$ is a plot if for every point $y \in U$ there exists an open neighbourhood $V_y \subset U$ such that either $\varphi_{|V_y}$ is constant, or there exists a plot $\psi \in \Set(V_y, X)$ of $X$ such that $p \circ \psi = \varphi_{|V_y}$.
	This diffeology on $Y$ is called the \emph{pushforward diffeology}~\cite{Iglesias-Zemmour--Diffeology}.
	Here our main examples are
	\begin{equation}
	\begin{alignedat}{2}
			\CL_+ M &\coloneqq LM/\Diff_0(S^1)\,, \qquad \CL M &&\coloneqq LM/\Diff(S^1)\,,
			\\
			\CL_{\rmsi +} M &\coloneqq L_\rmsi M/\Diff_0(S^1)\,, \quad \CL_\rmsi M &&\coloneqq L_\rmsi M/\Diff(S^1)\,,
	\end{alignedat}
	\end{equation}
	where $\Diff_0(S^1)$ is the space of orientation-preserving diffeomorphisms from $S^1$ to itself, and $\Diff(S^1)$ is the space of all diffeomorphisms $S^1 \to S^1$.
	Both act on loop spaces by precomposition.
	The diffeology on each of these quotients is the pushforward diffeology of the mapping space diffeology along the quotient map.

	\item Given a family of diffeological spaces $(X_i, \Plot(X_i))_{i \in \Lambda}$, the product $\prod_{i \in \Lambda} X_i$ is naturally endowed with a diffeology whose plots are those maps $\varphi \colon U \to \prod_{i \in \Lambda} X_i$ such that $\pr_{X_i} \circ \varphi \in \Plot(X_i)$ for all $i \in \Lambda$.
	This diffeology is called the \emph{product diffeology on $\prod_{i \in \Lambda} X_i$}.
	In particular, if the indexing set $\Lambda$ is finite, we have~\cite[Article 1.55]{Iglesias-Zemmour--Diffeology}
	\begin{equation}
		\Plot \Big( \prod_{i \in \Lambda} X_i \Big) = \prod_{i \in \Lambda} \Plot(X_i)\,,
	\end{equation}
	which makes the product diffeology easy to describe.
	\qen
\end{myenumerate}
\end{example}

\begin{definition}[Diffeological fibre bundle~{\cite[Article 8.9]{Iglesias-Zemmour--Diffeology}}]
A morphism $\pi \in \DfgSp(E,B)$ of diffeological spaces is called a \emph{(diffeological) fibre bundle with typical fibre $F$} if there exists a diffeological space $F \in \DfgSp$ such that for every $\varphi \in \Plot(B)$ the pullback $\varphi^* \pi \in \DfgSp(\varphi^*E,U)$ is locally trivial with fibre $F$.
That is, around every $x \in U$ there exists an open neighbourhood $V \subset U$ together with an isomorphism of diffeological spaces $(\varphi^*E)_{|V} \arisom V {\times} F$.
\\
In that case, $E$ and $B$ are referred to as the \emph{total space} and the \emph{base space} of the fibre bundle, respectively.
\\
A \emph{diffeological vector bundle} is a diffeological fibre bundle with a vector space structure on each of its fibres such that all local trivialisations restrict to isomorphisms of vector spaces $(\varphi^*E)_{|x} \arisom \{x\} {\times} F$, where the typical fibre $F$ is a vector space.
\end{definition}

\section{The transgression functor}
\label{sect:transgression_functor}

We begin by recalling the construction of the transgression line bundle of a bundle gerbe from~\cite{Waldorf--Transgression_II}.
However, our presentation differs slightly from that reference in that we focus on the hermitean line bundles which are associated to the principal $\sfU(1)$-bundles constructed there.
This is necessary, as it is those hermitean line bundles which will be related to the morphisms of bundle gerbes and the additional categorical structures constructed thereon in Chapter~\ref{ch:bundle_gerbes}.
We then proceed to construct a transgression functor defined on the 2-category $\BGrb^\nabla(M)$, thus extending previously known constructions of transgression.
These were defined on $\BGrb^\nabla_{\rmflat \sim}(M)$ in the case of~\cite{Waldorf--Transgression_II} and on certain subsets of sections of bundle gerbes in the framework of~\cite{CJM--Holonomy_on_D-branes}.
Combining those two approaches allows us to treat transgression on the entire 2-category $\BGrb^\nabla(M)$.

Let $(\CG, \nabla^\CG)$ be a bundle gerbe on $N$ for $N \in \Mfd$.
In order to ease up notation we define the \emph{groupoid of unitary frames of $(\CG,\nabla^\CG)$}:
\begin{equation}
	\frUF(N)(\CG,\nabla^\CG) \coloneqq \bigsqcup_{\rho \in \Omega^2(N,\iu\, \FR)}\, \BGrb^\nabla_{\rmflat \sim}(N) \big( \CI_\rho, (\CG, \nabla^\CG) \big)\,.
\end{equation}
This category has as objects the 1-isomorphisms $\CI_\rho \to (\CG, \nabla^\CG)$ for all $\rho \in \Omega^2(N, \iu\, \FR)$ which satisfy the trace condition~\eqref{eq:trace_condition}.
That is, its objects are 1-isomorphisms $(S, \nabla^S, \beta, X, \xi) \in \BGrb^\nabla(M)(\CI_\rho, (\CG, \nabla^\CG))$ for some $\rho \in \Omega^2(N, \iu\, \FR)$ such that $\curv(\nabla^S) = \xi_Y^*B - \xi_M^*\rho$.
Morphisms in $\frUF(N)(\CG, \nabla^\CG)$ are the unitary parallel 2-isomorphisms between these 1-isomorphisms.
Note that for a fixed $\rho$, the category $\BGrb^\nabla_{\rmflat \sim}(N) \big( \CI_\rho, (\CG, \nabla^\CG) \big)$ may be empty.

Any pair of unitary frames of $(\CG, \nabla^\CG)$, say $(S, \beta)$ and $(S', \beta')$, gives rise to a 1-isomorphism $(S', \beta')^{-1} \circ (S, \beta) \in \BGrb^\nabla_{\rmflat \sim}(N)(\CI_\rho, \CI_{\rho'})$, and, after reduction, to a hermitean line bundle with connection
\begin{equation}
\sfR((S', \beta') \circ (S, \beta)^{-1}) \in \HLBdl^\nabla_{\rho' - \rho,\rmuni}(N)\,.
\end{equation}
For $\omega \in \Omega^2_\cl(N, \iu\, \FR)$, the category $\HLBdl^\nabla_{\omega,\rmuni}(N)$ is the groupoid of hermitean line bundles with connection whose curvature equals $\omega$, together with parallel, unitary isomorphisms.
Conversely, $\HLBdl^\nabla_{\rmuni}(N)$ acts on the groupoid $\frUF(N)(\CG, \nabla^\CG)$ via the tensor product in the 2-category $\BGrb^\nabla(N)$, making $\frUF(N)(\CG, \nabla^\CG)$ into a module category over $\HLBdl^\nabla_\rmuni(N)$.
A module category $\scD$ over a monoidal category $(\scC, \otimes, \One_\scC)$ is called a \emph{torsor category} over $(\scC, \otimes, \One_\scC)$ if the composition
\begin{equation}
\begin{tikzcd}[column sep=1.5cm]
	\scD {\times} \scC \ar[r, "\diag {\times} 1_\scC"] & \scD {\times} \scD {\times} \scC \ar[r, "1_\scD {\times} \otimes"] & \scD {\times} \scD,
\end{tikzcd}
\end{equation}
with $\diag \colon \scD \to \scD {\times} \scD$ denoting the diagonal functor, is an equivalence of categories.

\begin{proposition}[{\cite[Theorem 2.5.4]{Waldorf--Thesis}}]
For any pair $(\CG_i, \nabla^{\CG_i}) \in \BGrb^\nabla(N)$ for $i=0,1$, the groupoid $\BGrb^\nabla_{\rmflat \sim}(N)((\CG_0, \nabla^{\CG_0}), (\CG_1, \nabla^{\CG_1}))$ is a torsor category over the symmetric monoidal groupoid $\HLBdl^\nabla_{0,\rmuni}(N)$.
\end{proposition}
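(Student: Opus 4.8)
The plan is to establish the three defining properties of a torsor category for $\scD = \BGrb^\nabla_{\rmflat \sim}(N)((\CG_0, \nabla^{\CG_0}), (\CG_1, \nabla^{\CG_1}))$ over $\scC = \HLBdl^\nabla_{0,\rmuni}(N)$: that the tensor action is well-defined as a module structure, that $\scD$ is nonempty, and that the action is free and transitive in the categorified sense expressed by the equivalence condition. First I would make the module action explicit. Given a flat isomorphism $(E,\alpha) \colon (\CG_0, \nabla^{\CG_0}) \to (\CG_1, \nabla^{\CG_1})$ and a flat hermitean line bundle $(L, \nabla^L) \in \HLBdl^\nabla_{0,\rmuni}(N)$, the action is $(E,\alpha) \otimes (L,\nabla^L) := (E,\alpha) \circ \sfS(L,\nabla^L)$, where $\sfS(L,\nabla^L)$ is the 1-isomorphism $(\CG_0, \nabla^{\CG_0}) \to (\CG_0, \nabla^{\CG_0})$ obtained by tensoring the identity with the image of $(L, \nabla^L)$ under the inclusion $\HLBdl^\nabla(N)(I_0,I_0) \hookrightarrow \BGrb^\nabla(N)(\CI_0,\CI_0)$ and using the module structure of Corollary~\ref{st:module_structures_on_morphisms_in_BGrb}; since $\curv(\nabla^L) = 0$, this preserves the flatness (trace) condition~\eqref{eq:trace_condition}, so the action stays within $\scD$.

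Next I would verify nonemptiness, which is where Proposition~\ref{st:DD_triviality_implies_trivialisability} and Theorem~\ref{st:Classification_of_BGrbs_by_Deligne_coho} do the real work. Because $\scD$ is assumed to be the full subcategory of isomorphisms, its objects exist precisely when $(\CG_0, \nabla^{\CG_0})$ and $(\CG_1, \nabla^{\CG_1})$ have the same Deligne class in $\hat{\rmH}^3(N,\RZ)$. In the present setting we are constructing $\scD$ as a torsor, so I would phrase the statement as: whenever $\scD$ is nonempty, it is a torsor; the nonemptiness is a hypothesis tied to the equality of Deligne classes, which is automatic in the frame groupoid application. The core of the argument is the freeness-and-transitivity step: I would show that the composite functor $\scD {\times} \scC \to \scD {\times} \scD$, $((E,\alpha), (L,\nabla^L)) \mapsto ((E,\alpha), (E,\alpha) \otimes (L,\nabla^L))$, is an equivalence. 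For this, fix two objects $(E,\alpha), (E',\alpha') \in \scD$ and observe that $(E',\alpha') \circ (E,\alpha)^{-1}$ is a flat 1-automorphism of $(\CG_1, \nabla^{\CG_1})$, hence via the equivalence $\sfR$ and Corollary~\ref{st:sections_of_reduced_pairing_agree_with_2-homs} corresponds to a unique flat hermitean line bundle $(L,\nabla^L) \in \HLBdl^\nabla_{0,\rmuni}(N)$ with $(E',\alpha') \cong (E,\alpha) \otimes (L,\nabla^L)$. This gives essential surjectivity of the second-factor map and, tracking 2-morphisms, full faithfulness.

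The main obstacle will be the full-faithfulness bookkeeping at the level of 2-morphisms: I must check that a morphism $(E,\alpha) \otimes (L,\nabla^L) \to (E,\alpha) \otimes (L',\nabla^{L'})$ in $\scD$ corresponds bijectively and naturally to a unitary parallel isomorphism $(L,\nabla^L) \to (L',\nabla^{L'})$ in $\HLBdl^\nabla_{0,\rmuni}(N)$, uniformly in the chosen base object $(E,\alpha)$. The cleanest route is to invoke the internal-hom adjunction of Theorem~\ref{st:2-var_adjunction_on_BGrb} together with the isomorphisms~\eqref{eq:internal_hom_and_tensors}, which identify $[(E,\alpha) \otimes (L,\nabla^L),\, (E,\alpha) \otimes (L',\nabla^{L'})]$ with $\Theta(L,\nabla^L) \otimes (L',\nabla^{L'})$ after cancelling the $(E,\alpha)$-twist against its Riesz dual; since $(L,\nabla^L)$ is flat and of rank one, $\Theta(L,\nabla^L) \otimes (L',\nabla^{L'}) \cong [(L,\nabla^L),(L',\nabla^{L'})]$, and its flat invertible global 2-morphisms are exactly the parallel unitary isomorphisms $(L,\nabla^L) \to (L',\nabla^{L'})$. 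Assembling these natural isomorphisms yields the required equivalence of categories, and the compatibility with the monoidal structure on $\HLBdl^\nabla_{0,\rmuni}(N)$ follows from the functoriality established in Theorem~\ref{st:Riesz_dual_is_functorial}.
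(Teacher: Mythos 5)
The paper itself gives no proof of this proposition---it is quoted from \cite[Theorem 2.5.4]{Waldorf--Thesis}---so your argument must stand on its own. Its architecture is the right one and is essentially Waldorf's: act by tensoring with the pulled-back flat line bundle (your composition-based definition is 2-isomorphic to the tensor action of Corollary~\ref{st:module_structures_on_morphisms_in_BGrb}, and flatness is preserved since $\curv(\nabla^L)=0$), prove full faithfulness by the two-variable adjunction, and prove essential surjectivity by producing a ``difference'' line bundle. The full-faithfulness paragraph is sound: since the morphism categories are groupoids one reduces to a fixed base object, and the chain $[(E,\alpha)\otimes L,\,(E,\alpha)\otimes L'] \cong \Theta(L)\otimes[(E,\alpha),(E,\alpha)]\otimes L' \cong \Theta(L)\otimes L'$ from~\eqref{eq:internal_hom_and_tensors} does the job, using that $[(E,\alpha),(E,\alpha)]$ is canonically trivialised by the identity when $E$ has rank one.

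The genuine gap is in the transitivity step. You assert that the flat automorphism $(E',\alpha')\circ(E,\alpha)^{-1}$ of $(\CG_1,\nabla^{\CG_1})$ corresponds, ``via the equivalence $\sfR$ and Corollary~\ref{st:sections_of_reduced_pairing_agree_with_2-homs}'', to a flat hermitean line bundle $(L,\nabla^L)$ on $N$ with $(E',\alpha')\cong(E,\alpha)\otimes(L,\nabla^L)$. Neither cited tool delivers this: $\sfR$ only re-expresses the automorphism over the minimal surjective submersion, so its output is an $L_1$-twisted line bundle over $Y_1^{[2]}$, not a bundle on $N$; and Corollary~\ref{st:sections_of_reduced_pairing_agree_with_2-homs} concerns 2-morphisms, not 1-automorphisms. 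Worse, the fact you need---that every flat 1-automorphism of a bundle gerbe is, up to 2-isomorphism, the identity twisted by a flat line bundle on $N$---is exactly the torsor statement for $\CG_0=\CG_1$ based at the identity, so invoking it unproved is circular. The repair stays within the machinery you already use for full faithfulness: set $(L,\nabla^L)\coloneqq\sfR\,[(E,\alpha),(E',\alpha')]$, which by~\eqref{eq:[-,-]_explicit} descends from $\Hom(\pr^*E,\pr^*E')$, is a line bundle because $E$ and $E'$ have rank one (Proposition~\ref{st:classification_of_isomps_of_BGrbs}), and is flat because the two trace conditions~\eqref{eq:trace_condition} cancel; then observe that the counit of the adjunction of Theorem~\ref{st:internal_hom_of_morphisms_in_BGrb--existence_and_naturality}, i.e.\ the evaluation 2-morphism $(E,\alpha)\otimes[(E,\alpha),(E',\alpha')]\to(E',\alpha')$, has underlying bundle map the evaluation $E\otimes\Hom(E,E')\to E'$, which is an isomorphism of line bundles, hence is an invertible unitary parallel 2-morphism. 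This invertibility of the counit is the crux of transitivity and never appears in your proposal. Two minor points: your hedge about nonemptiness is unnecessary, since for empty $\scD$ the functor $\scD\times\scC\to\scD\times\scD$ is an equivalence of empty categories and the proposition holds vacuously for any pair of gerbes; and the symbol $\sfS$ is already reserved in the paper for the inclusion $\BGrb^\nabla_\FP(M)\hookrightarrow\BGrb^\nabla(M)$, so your action functor should be named differently.
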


\begin{corollary}
The set $\pi_0 \big(\BGrb^\nabla_{\rmflat \sim}(N)((\CG_0, \nabla^{\CG_1}), (\CG_1, \nabla^{\CG_1})) \big)$ of flat 2-isomorphism classes of flat 1-isomorphisms between any pair of bundle gerbes with connection on $N$ is a torsor over the abelian group of isomorphism classes of flat hermitean line bundles $\pi_0(\HLBdl^\nabla_{0,\rmuni}(N))$ on $N$.
\end{corollary}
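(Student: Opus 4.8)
The plan is to deduce this corollary directly from the torsor-category statement that immediately precedes it, by applying the functor $\pi_0$ and checking that it carries the categorical torsor structure to a torsor structure on sets. First I would recall that, by the preceding proposition, the groupoid $\BGrb^\nabla_{\rmflat \sim}(N)\big((\CG_0, \nabla^{\CG_0}), (\CG_1, \nabla^{\CG_1})\big)$ is a torsor category over the symmetric monoidal groupoid $\HLBdl^\nabla_{0,\rmuni}(N)$, meaning that the composite functor
\begin{equation}
\begin{tikzcd}[column sep=1.5cm]
	\scD {\times} \scC \ar[r, "\diag {\times} 1_\scC"] & \scD {\times} \scD {\times} \scC \ar[r, "1_\scD {\times} \otimes"] & \scD {\times} \scD
\end{tikzcd}
\end{equation}
is an equivalence of categories, where $\scD = \BGrb^\nabla_{\rmflat \sim}(N)\big((\CG_0, \nabla^{\CG_0}), (\CG_1, \nabla^{\CG_1})\big)$ and $\scC = \HLBdl^\nabla_{0,\rmuni}(N)$.

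The key observation is that $\pi_0$ is a monoidal functor from symmetric monoidal groupoids to abelian groups, and that it sends module categories to modules and equivalences of categories to bijections. Concretely, I would first note that the monoidal structure $(\otimes, \One)$ on $\scC$ induces, via $\pi_0$, the abelian group structure on $\pi_0(\HLBdl^\nabla_{0,\rmuni}(N))$; this is standard, the group operation being the tensor product of flat line bundles and invertibility following from the existence of duals. Next I would apply $\pi_0$ to the module action $\scD {\times} \scC \to \scD$, obtaining an action of the abelian group $\pi_0(\scC)$ on the set $\pi_0(\scD)$; functoriality of $\pi_0$ guarantees that this is a genuine group action. Finally, since $\pi_0$ preserves products (as $\pi_0(\scA {\times} \scB) \cong \pi_0(\scA) {\times} \pi_0(\scB)$ for groupoids) and sends equivalences of categories to bijections, applying $\pi_0$ to the torsor equivalence above yields a bijection
\begin{equation}
	\pi_0(\scD) {\times} \pi_0(\scC) \arisom \pi_0(\scD) {\times} \pi_0(\scD)\,,
\end{equation}
which is exactly the torsor condition: the map $(d, c) \mapsto (d, d \cdot c)$ is a bijection, so the action of $\pi_0(\scC)$ on $\pi_0(\scD)$ is free and transitive.

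The one genuine subtlety, which I expect to be the main obstacle, is verifying that $\pi_0$ interacts correctly with products and with the specific shape of the composite defining the torsor condition. I would need to confirm that $\pi_0$ applied to the diagonal functor $\diag \colon \scD \to \scD {\times} \scD$ gives the set-theoretic diagonal on $\pi_0(\scD)$, and that $\pi_0$ applied to $1_\scD {\times} \otimes$ gives the action map on $\pi_0$; both follow from functoriality together with the product-preservation of $\pi_0$ on groupoids, but this product-preservation is the point requiring care, since in general $\pi_0$ need not preserve arbitrary limits. For groupoids, however, $\pi_0$ does preserve finite products because path-components of a product groupoid are products of path-components, so the argument goes through. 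One should also remark that transitivity and freeness of the resulting action are precisely the bijectivity of the two composite maps $\pi_0(\scD){\times}\pi_0(\scC) \to \pi_0(\scD){\times}\pi_0(\scD)$ obtained from the equivalence, which is immediate once the bijection is in hand. No further computation is needed beyond unwinding these identifications.
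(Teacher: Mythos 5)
Your proposal is correct and coincides with the paper's own (implicit) reasoning: the paper offers no proof at all, treating the corollary as an immediate consequence of the preceding proposition, and your argument — applying $\pi_0$, using that it preserves finite products of groupoids and sends equivalences to bijections — is precisely the unwinding that makes this immediacy rigorous. The only point worth flagging is that, just as with the paper's definition of torsor category, the resulting set-level ``torsor'' may be empty when the two bundle gerbes are not flat-isomorphic, so both statements implicitly use the possibly-empty (pseudo-torsor) convention.
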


In general we have an isomorphism of abelian groups
\begin{equation}
	\pi_0 \big( \HLBdl^\nabla_{0, \rmuni}(N),\, \otimes \big) \cong
	\Ab \big( \pi_1(N), \sfU(1) \big)\,.
\end{equation}
Consider for a moment the situation $N = S^1$.
In this case there is an isomorphism of abelian groups
\begin{equation}
	\hol \colon \pi_0 \big( \HLBdl^\nabla_{0,\rmuni}(S^1) \big) \arisom \sfU(1)
\end{equation}
given by mapping a flat line bundle to its holonomy around $S^1$.
This endows the abelian group $\pi_0(\HLBdl^\nabla_{0,\rmuni}(S^1))$ with the structure of a Lie group.
Further, there is an equality
\begin{equation}
	\frUF(S^1)(\CG, \nabla^\CG) = \BGrb^\nabla(S^1) \big( \CI_0, (\CG, \nabla^\CG) \big)
\end{equation}
because $\Omega^2(S^1, \iu\, \FR) = \{0\}$ for dimensional reasons.
Thus, $\frUF(S^1)(\CG, \nabla^\CG)$ is a torsor category over $\HLBdl^\nabla_\rmuni(S^1)$.

For a loop $\gamma \in LM = \Mfd(S^1,M)$, define
\begin{equation}
	\CT(\CG,\nabla^\CG)_{|\gamma} \coloneqq \big\{ \big[ [S,\beta], z \big]\, \big|\, [S,\beta] \in \pi_0 \big( \frUF(S^1) (\gamma^*(\CG, \nabla^\CG)) \big),\, z \in \FC \big\}\,.
\end{equation}
Here we say that two pairs $([S, \beta], z)$ and $([S', \beta'], z')$ for $[S,\beta], [S',\beta'] \in \frUF(S^1)(\gamma^*(\CG, \nabla^\CG))$ and $z,z' \in \FC$ are equivalent if
\begin{equation}
	z'
	= \hol \big( \sfR((S', \beta')^{-1} \circ (S, \beta)) \big)\, z\,.
\end{equation}
In that case, $\CT(\CG, \nabla^\CG)_{|\gamma}$ resembles the fibre of the associated line bundle of a $\sfU(1)$-principal bundle $\CT^{\sfU(1)}(\CG, \nabla^\CG)$ on $LM$ with fibre over $\gamma$ given by $\pi_0(\frUF(S^1)(\gamma^*(\CG, \nabla^\CG)) \cong \sfU(1)$:
\addtocounter{equation}{1}
\begin{align*}
	\big[ [S, \beta], z \big]
	&= \big[ \big[ (S', \beta') \circ (S', \beta')^{-1} \circ (S, \beta) \big], z \big]
	\\*\theeq
	&= \big[ R_{[(S', \beta')^{-1} \circ (S, \beta)]}\, [S', \beta'], z \big]
	\\
	&= \big[ [S', \beta'],\, \hol \big( \sfR((S', \beta')^{-1} \circ (S, \beta)) \big)\, z \big]
	\\*
	&= \big[ [S', \beta'], z' \big]\,.
\end{align*}
The definitions work verbatim over $L_\rmsi M$ instead of $LM$.
As we endowed $LM$ with the structure of a diffeological space, the line bundle over $LM$ alluded to above cannot be a smooth line bundle, but, if anything, has to be a diffeological line bundle.
The total space of this line bundle is
\begin{equation}
	\CT(\CG, \nabla^\CG) \coloneqq \bigsqcup_{\gamma \in LM}\, \CT(\CG, \nabla^\CG)_{|\gamma} \quad \in \Set\,.
\end{equation}
It comes with a natural projection map $p \colon \CT(\CG, \nabla^\CG) \to LM$ of sets.
Plots of $\CT(\CG, \nabla^\CG)$ should be given locally by unitary frames of $(\CG, \nabla^\CG)$, as these are our only way of probing and controlling the smooth geometry of $(\CG,\nabla^\CG)$.
To that end, define a set $\Plot(\CT(\CG, \nabla^\CG))$ whose elements are maps $\varphi \in \Set(U, \CT(\CG, \nabla^\CG))$ such that around every $y \in U$ there exists an open neighbourhood $\iota_{V_y} \colon V_y \hookrightarrow U$, a smooth function $f_y \in C^\infty(V_y, \FC)$, and a unitary frame $(S_y, \beta_y)$ of the pullback of $(\CG, \nabla^\CG)$ along the composition
\begin{equation}
\begin{tikzcd}[column sep=1.25cm]
	V_y {\times} S^1 \ar[r, "\varphi_{|V_y} {\times} 1_{S^1}"] & \CT(\CG, \nabla^\CG) \times S^1 \ar[r, "p {\times} 1_{S^1}"] & LM {\times} S^1 \ar[r, "\ev"] & M
\end{tikzcd}
\end{equation}
such that
\begin{equation}
	\varphi(y') = \big[ [\iota_{y'}^*(S_y, \beta_y)],\, f(y') \big] \quad \forall\, y' \in V_y\,,
\end{equation}
where $\iota_{y'} \colon S^1 \hookrightarrow V_y {\times} S^1$, $\tau \mapsto (y', \tau)$ is the inclusion of the $S^1$-factor at $y' \in V_y$.
Such a unitary frame exists at least whenever $V_y$ is contractible.
This follows from the classification Theorem~\ref{st:Classification_of_BGrbs_by_Deligne_coho} together with Proposition~\ref{st:DD_triviality_implies_trivialisability}.

\begin{theorem}
\label{st:transgression_line_bundle}
The pair $\big( \CT(\CG, \nabla^\CG), \Plot(\CT(\CG, \nabla^\CG)) \big)$ defines a diffeological hermitean line bundle on $LM$.
\end{theorem}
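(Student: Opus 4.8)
The plan is to verify the three defining axioms of a diffeological space from Definition~\ref{def:Dfg_Space} for the pair $\big( \CT(\CG, \nabla^\CG), \Plot(\CT(\CG, \nabla^\CG)) \big)$, and then to exhibit local triviality over the base $LM$ so as to upgrade the verified diffeology to that of a diffeological hermitean line bundle. First I would check closure under precomposition (axiom (1)): given a plot $\varphi$ with its local data $(V_y, f_y, (S_y, \beta_y))$ and a smooth map $g \in \Mfd(V', V_y)$, one pulls back the local frame along $g {\times} 1_{S^1}$ and the function $f_y$ along $g$; since pullback of unitary frames along smooth maps produces unitary frames, and $f_y \circ g$ is smooth, the composite $\varphi \circ g$ again has the required local form. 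Axiom (2), the sheaf/locality condition, is essentially immediate because the defining property of $\Plot(\CT(\CG, \nabla^\CG))$ is already phrased locally around each point $y \in U$: if $\varphi \circ \iota_a$ is a plot for every patch of an open cover, the local frames and functions on each patch assemble to provide the required local data near any point of $U$. Axiom (3) is trivial, since a constant map into $\CT(\CG, \nabla^\CG)$ lands in a single fibre and can be presented using any fixed unitary frame together with a constant function $f_y$.

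Next I would establish that $p \colon \CT(\CG, \nabla^\CG) \to LM$ is a diffeological fibre bundle with typical fibre $\FC$, carrying a fibrewise $\FC$-vector space structure compatible with local trivialisations. The key geometric input is that over a plot $\varphi \in \Plot(LM)$ probing a contractible region, a unitary frame $(S, \beta)$ of the pullback of $(\CG, \nabla^\CG)$ along $\ev \circ (\varphi {\times} 1_{S^1})$ exists; this existence is exactly where I would invoke Theorem~\ref{st:Classification_of_BGrbs_by_Deligne_coho} together with Proposition~\ref{st:DD_triviality_implies_trivialisability}, which guarantee that the pulled-back bundle gerbe is flatly trivialisable over contractible domains. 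A choice of such a frame yields a trivialisation $(\varphi^*\CT(\CG, \nabla^\CG))_{|V} \arisom V {\times} \FC$ sending $\big[ [\iota_{y'}^*(S, \beta)], z \big] \mapsto (y', z)$, and one checks this is an isomorphism of diffeological spaces using the explicit equivalence relation defining the fibres. The transition between two such trivialisations, induced by frames $(S, \beta)$ and $(S', \beta')$, is multiplication by the smooth $\sfU(1)$-valued function $\hol\big( \sfR((S', \beta')^{-1} \circ (S, \beta)) \big)$; the fibrewise linear structure is manifestly preserved, and the hermitean metric descends from the canonical metric on $\FC$ since the transition functions are unitary.

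The fibre identification $\pi_0\big( \frUF(S^1)(\gamma^*(\CG, \nabla^\CG)) \big) \cong \sfU(1)$, which underlies the whole construction, rests on the torsor-category statement (the corollary following the torsor-category proposition) and on the holonomy isomorphism $\hol$, so I would record this explicitly at the start. The main obstacle I expect is not the algebraic bookkeeping of the equivalence relation but the \emph{smoothness of the transition functions} in the diffeological sense: one must confirm that as the probe $\varphi$ varies, the holonomy of the reduced difference line bundle $\sfR((S', \beta')^{-1} \circ (S, \beta))$ depends smoothly (diffeologically) on the point of $LM$. This requires knowing that the assignment of holonomy to a smoothly varying family of flat line bundles over $S^1$ is itself smooth, which is ultimately a transgression-of-forms computation; I would either cite the relevant transgression formulae (as referenced for the tautological construction via~\cite[Appendix C]{BSS--HGeoQuan}) or reduce it to the smoothness of holonomy for the universal family, so that the compatibility of the locally defined trivialisations glues into a genuine diffeological line bundle structure rather than merely a fibrewise one.
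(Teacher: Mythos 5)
Your proposal is correct, but it takes a genuinely different route from the paper. The paper's proof is essentially two lines: it invokes the result of Waldorf (\cite{Waldorf--Transgression_II}) that the principal $\sfU(1)$-bundle $\CT^{\sfU(1)}(\CG, \nabla^\CG)$ with fibres $\pi_0 \big( \frUF(S^1)(\gamma^*(\CG, \nabla^\CG)) \big)$ already carries a diffeological structure, and then observes that the plots $\Plot(\CT(\CG, \nabla^\CG))$ defined in the text are precisely the pushforward of the product diffeology on $\CT^{\sfU(1)}(\CG, \nabla^\CG) \times \FC$ along the quotient map to $\sfU(1)$-orbits, i.e.\ the associated-line-bundle diffeology (cf.\ Example~\ref{eg:diffeological_spaces_and_constructions}). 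You instead verify everything from scratch: the three axioms of Definition~\ref{def:Dfg_Space}, local triviality of $p$ over plots of $LM$ via existence of unitary frames over contractible probes (correctly traced to Theorem~\ref{st:Classification_of_BGrbs_by_Deligne_coho} and Proposition~\ref{st:DD_triviality_implies_trivialisability}, exactly as the paper does just before the theorem statement), and $\sfU(1)$-valued transition functions given by $\hol \circ \sfR$ of difference frames. The technical heart you isolate — smoothness, in the diffeological sense, of the holonomy of $\sfR((S', \beta')^{-1} \circ (S, \beta))$ as the basepoint varies — is exactly the content that the paper outsources to Waldorf's diffeological structure on $\CT^{\sfU(1)}(\CG, \nabla^\CG)$; it can be settled elementarily (over a contractible $V$ one trivialises the line bundle on $V \times S^1$ and writes the holonomy as $\exp$ of a fibre integral of the connection form, manifestly smooth in $V$), so your plan closes. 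What each approach buys: the paper's argument is short, stays consistent with the framework used later for regression and for the parallel transport~\eqref{eq:PT_on_transgression_line_bundle}, and hides the analytic point in a citation; yours is self-contained, makes the local trivialisations, hermitean structure and transition cocycle explicit, but duplicates work already done in the literature and must supply the holonomy-smoothness lemma (or a citation such as the transgression formulae you mention) to be complete.
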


\begin{proof}
Let $\CT^{\sfU(1)}(\CG, \nabla^\CG)$ denote the $\sfU(1)$-bundle with fibre
\begin{equation}
	\big( \CT^{\sfU(1)}(\CG, \nabla^\CG) \big)_{|\gamma}
	= \pi_0 \big( \frUF(S^1) \big( \gamma^*(\CG, \nabla^\CG) \big) \big)
\end{equation}
constructed from $(\CG, \nabla^\CG)$.
In~\cite{Waldorf--Transgression_II} this bundle has been shown to carry a diffeological structure.%
\footnote{We should point out that in~\cite{Waldorf--Transgression_II} isomorphisms $\gamma^*(\CG, \nabla^\CG) \to \CI_\rho$ have been used, i.e. in the opposite direction than we are using.
However, the results carry over directly; for instance, one can simply consider the dual bundle gerbe and apply the transpose functor.}
The set of plots which we have constructed above is precisely the pushforward diffeology of the product diffeology on $\CT^{\sfU(1)}(\CG, \nabla^\CG) \times \FC$ along the quotient map to the space of $\sfU(1)$-orbits, i.e. to $(\CT^{\sfU(1)}(\CG, \nabla^\CG) \times \FC)/\sfU(1) = \CT(\CG, \nabla^\CG)$ (cf. Example~\ref{eg:diffeological_spaces_and_constructions}).
\end{proof}

\begin{definition}
Let $(\CG, \nabla^\CG) \in \BGrb^\nabla(M)$.
The line bundle $\CT(\CG, \nabla^\CG) \to LM$ is called the \emph{transgression line bundle of $(\CG, \nabla^\CG)$}.
\end{definition}

We now go on to investigate the structure of this line bundle and its dependence on the bundle gerbe with connection $(\CG, \nabla^\CG)$.
To begin with, let $i = 0, 1$ and consider a 1-morphism $(E,\alpha) \in \BGrb^\nabla(M)((\CG_0, \nabla^{\CG_0}), (\CG_1, \nabla^{\CG_1}))$ between two bundle gerbes $(\CG_i, \nabla^{\CG_i}) \in \BGrb^\nabla(M)$.
Let $(S_i, \beta_i)$ be unitary frames of $(\CG_i, \nabla^{\CG_i})$ over $\gamma$, i.e. $(S_i, \beta_i) \in \frUF(S^1)(\gamma^*(\CG_i, \nabla^{\CG_i}))$.
Define
\begin{equation}
\label{eq:transgression_of_morphisms}
\begin{aligned}
	&\CT(E,\alpha)_{|\gamma} \colon \CT(\CG_0, \nabla^{\CG_0})_{|\gamma} \to \CT(\CG_1, \nabla^{\CG_1})_{|\gamma}\,,
	\\
	&\big[ [S_0, \beta_0], z \big] \mapsto \big[ [S_1, \beta_1], \big( \tr \circ \hol \circ \sfR \big( (S_1, \beta_1)^{-1} \circ \gamma^*(E,\alpha) \circ (S_0, \beta_0) \big) \big)\, z \big]\,.
\end{aligned}
\end{equation}
This is well-defined, for if $(S'_i, \beta'_i) \in \frUF(S^1)(\gamma^*(\CG_i, \nabla^{\CG_i}))$, we have
\addtocounter{equation}{1}
\begin{align*}
	&\CT(E,\alpha)_{|\gamma} \big( \big[ [S_0, \beta_0], z \big] \big)
	\\*[0.2cm]
	&= \CT(E,\alpha)_{|\gamma} \Big( \big[ [S_0', \beta_0'],\, \big( \hol \circ \sfR ((S'_0, \beta'_0)^{-1} \circ (S_0, \beta_0)) \big)\, z \big] \Big)
	\\[0.2cm]
	&= \big[ [S_1', \beta_1'],\, \big( \tr \circ \hol \circ \sfR \big( (S'_1, \beta'_1)^{-1} \circ \gamma^*(E,\alpha) \circ (S'_0, \beta'_0) \big) \big)\,
	\\*
	&\qquad \big( \hol \circ \sfR ((S'_0, \beta'_0)^{-1} \circ (S_0, \beta_0)) \big)\, z \big] \theeq
	\\[0.2cm]
	&=\big[ [S_1', \beta_1'],\, \big( \tr \circ \hol \circ \sfR \big( (S'_1, \beta'_1)^{-1} \circ (S_1, \beta_1) \circ (S_1, \beta_1)^{-1} \circ \gamma^*(E,\alpha) \circ (S_0, \beta_0) \big) \big)\, z \big]
	\\*[0.2cm]
	&= \CT(E,\alpha) \big( \big[ [S_1, \beta_1], z \big] \big)\,.
\end{align*}
Here we have used that in the subcategory $\HVBdl^\nabla(M) \subset \BGrb^\nabla(M)(\CI_0, \CI_0)$ the composition induced from the 2-category coincides (here even strictly, not just up to isomorphism) with the tensor product of vector bundles, and that the trace of the holonomy of a tensor product of vector bundles is the product of the traces of the individual holonomies.
Similarly, one can see that for morphisms $(E,\alpha) \in \BGrb^\nabla(M)((\CG_0, \nabla^{\CG_0}), (\CG_1, \nabla^{\CG_1}))$ and $(E',\alpha') \in \BGrb^\nabla(M)((\CG_1, \nabla^{\CG_1}), (\CG_2, \nabla^{\CG_2}))$ we have
\begin{equation}
	\CT \big( (E', \alpha') \circ (E,\alpha) \big) = \CT(E',\alpha') \circ \CT(E,\alpha)\,.
\end{equation}
Moreover, setting $(E,\alpha) = 1_{(\CG, \nabla^\CG)}$ we readily see from~\eqref{eq:transgression_of_morphisms} that $\CT(1_{(\CG, \nabla^\CG)}) = 1_{\CT(\CG, \nabla^\CG)}$.
Since only the trace of the holonomy of $(S_1, \beta_1)^{-1} \circ \gamma^*(E,\alpha) \circ (S_0, \beta_0)$ enters in~\eqref{eq:transgression_of_morphisms}, we see that if there exists a parallel 2-isomorphism $(E,\alpha) \to (F, \beta)$, we obtain a parallel isomorphism of vector bundles with connections from $\sfR((S_1, \beta_1)^{-1} \circ \gamma^*(E,\alpha) \circ (S_0, \beta_0))$ to $\sfR((S_1, \beta_1)^{-1} \circ \gamma^*(F,\beta) \circ (S_0, \beta_0))$, implying $\CT(E,\alpha) = \CT(F, \beta)$.
Hence, transgression only sees 2-isomorphism classes of 1-morphisms of bundle gerbes with connection.

Now consider a diffeological path in $LM$ with sitting instants.
This is a map $f \colon [0,1] \to LM$ such that there exists an open neighbourhood $V \subset [0,1]$ of $\{0,1\}$ with $f_{|V}$ constant and such that the composition
\begin{equation}
\label{eq:mapping_space_comp}
\begin{tikzcd}[column sep=1.25cm]
	\widetilde{f} \colon [0,1] {\times} S^1 \ar[r, "f {\times} 1_{S^1}"] & LM {\times} S^1 \ar[r, "\ev"] & M
\end{tikzcd}
\end{equation}
is smooth.
Denote by $\iota_\sigma \colon S^1 \hookrightarrow [0,1] {\times} S^1$ the inclusion of the $S^1$ factor at parameter value $\sigma \in [0,1]$.
Note that because $[0,1] {\times} S^1$ is 2-dimensional, and because of Proposition~\ref{st:DD_triviality_implies_trivialisability}, there exists a unitary frame $(S, \beta) \colon \CI_\rho \to \widetilde{f}^*(\CG, \nabla^\CG)$ for some $\rho \in \Omega^2([0,1] {\times} S^1,\, \iu\, \FR)$.
Define
\begin{equation}
\label{eq:PT_on_transgression_line_bundle}
	P^{\CT(\CG, \nabla^\CG)}_f \big[ [\iota_0^*(S, \beta)], z \big]
	\coloneqq \big[ [\iota_1^*(S, \beta)],\, \exp \big( \textint_{[0,1] {\times} S^1}\, \rho \big)\, z \big]\,.
\end{equation}
If $(S', \beta') \colon \CI_{\rho'} \to \widetilde{f}^*(\CG, \nabla^\CG)$ is a different unitary frame, we have
\addtocounter{equation}{1}
\begin{align*}
	&P^{\CT(\CG, \nabla^\CG)}_f \big[ [\iota_0^*(S', \beta')],\, \big( \hol \circ \iota_0^* \sfR ((S', \beta')^{-1} \circ (S, \beta)) \big)\, z \big]
	\\*[0.2cm]
	&= \big[ [\iota_1^*(S', \beta')],\, \exp \big( \textint_{[0,1] {\times} S^1}\, \rho' \big)\, \big( \hol \circ \iota_0^* \sfR ((S', \beta')^{-1} \circ (S, \beta)) \big)\, z \big]
	\\[0.2cm]
	&= \big[ [\iota_1^*(S', \beta')],\, \exp \big( \textint_{[0,1] {\times} S^1}\, \rho' \big)\, \exp \big( - \textint_{[0,1] {\times} S^1}\, \curv \big( \sfR ((S', \beta')^{-1} \circ (S, \beta)) \big) \big)
	\\
	&\qquad \big( \hol \circ \iota_1^* \sfR ((S', \beta')^{-1} \circ (S, \beta)) \big)\, z \big] \theeq
	\\[0.2cm]
	&= \big[ [\iota_1^*(S', \beta')],\, \exp \big( \textint_{[0,1] {\times} S^1}\, \rho \big)\, \big( \hol \circ \iota_1^* \sfR ((S', \beta')^{-1} \circ (S, \beta)) \big)\, z \big]
	\\*[0.2cm]
	&= P^{\CT(\CG, \nabla^\CG)}_f \big[ [\iota_0^*(S, \beta)], z \big]\,.
\end{align*}

Thus, the linear unitary isomorphism
\begin{equation}
	P^{\CT(\CG, \nabla^\CG)}_f \colon \CT(\CG, \nabla^\CG)_{|f(0)} \to \CT(\CG, \nabla^\CG)_{|f(1)}
\end{equation}
is well-defined.
If $g \colon [0,1] \to LM$ is a second path in $LM$ such that $g(0) = f(1)$, we can form the concatenation $g*f \colon [0,1] \to LM$, where
\begin{equation}
	g*f(\tau) =
	\begin{cases}
		f(2\tau)\,, & \tau \in [0,\frac{1}{2}]\,,
		\\
		g(2\tau - 1)\,, & \tau \in [\frac{1}{2}, 1]\,.
	\end{cases}
\end{equation}
Note that if $g$ and $f$ have sitting instants at $\tau = 0$ and $\tau = 1$, this is again a smooth path with sitting instants.
Consider a unitary frame $(S, \beta) \colon \CI_\rho \to (\widetilde{g*f})^*(\CG, \nabla^\CG)$ with $\widetilde{g*f}$ defined as in~\eqref{eq:mapping_space_comp}.
Set $j_0 \colon [0,1] {\times} S^1 \to [0,1] {\times} S^1$, $(\sigma, \tau) \mapsto (\frac{1}{2}\, \sigma, \tau)$ and $j_1 \colon [0,1] {\times} S^1 \to [0,1] {\times} S^1$, $(\sigma, \tau) \mapsto (\frac{1}{2}(\sigma + 1), \tau)$.
Note that $\iota_i = j_i \circ \iota_i$ for $i = 0,1$, and $j_0 \circ \iota_1 = j_1 \circ \iota_0$ as maps $S^1 \to [0,1] {\times} S^1$.
Moreover, observe that $j_0^*(S, \beta) \colon \CI_{j_0^*\rho} \to \widetilde{f}^*(\CG, \nabla^\CG)$ as well as $j_1^*(S, \beta) \colon \CI_{j_1^*\rho} \to \widetilde{g}^*(\CG, \nabla^\CG)$ are unitary frames.
We can compute
\addtocounter{equation}{1}
\begin{align*}
	&P^{\CT(\CG, \nabla^\CG)}_{g*f} \big[ [\iota_0^*(S, \beta)], z \big]
	\\*[0.2cm]
	&= \big[ [\iota_1^*(S, \beta)],\, \exp \big( \textint_{[0,1] {\times} S^1}\, \rho \big)\, z \big]
	\\[0.2cm]
	&= \big[ [\iota_1^*(S, \beta)],\, \exp \big( \textint_{[\frac{1}{2},1] {\times} S^1}\, \rho \big)\, \exp \big( \textint_{[0,\frac{1}{2}] {\times} S^1}\, \rho \big)\, z \big] \theeq
	\\[0.2cm]
	&= \big[ [\iota_1^*j_1^*(S, \beta)],\, \exp \big( \textint_{[0,1] {\times} S^1}\, j_1^*\rho \big)\, \exp \big( \textint_{[0,1] {\times} S^1}\, j_0^*\rho \big)\, z \big]
	\\[0.2cm]
	&= P^{\CT(\CG, \nabla^\CG)}_g \big[ [\iota_0^*j_1^*(S, \beta)],\, \exp \big( \textint_{[0,1] {\times} S^1}\, j_0^*\rho \big)\, z \big]
	\\[0.2cm]
	&= P^{\CT(\CG, \nabla^\CG)}_g \big[ [\iota_1^*j_0^*(S, \beta)],\, \exp \big( \textint_{[0,1] {\times} S^1}\, j_0^*\rho \big)\, z \big]
	\\*[0.2cm]
	&= P^{\CT(\CG, \nabla^\CG)}_g \circ P^{\CT(\CG, \nabla^\CG)}_f \big[ [\iota_0^*(S, \beta)],\, z \big]\,.
\end{align*}
Moreover, for $f$ a constant path we can use a constant unitary frame to see that in this case $P^{\CT(\CG, \nabla^\CG)}_f$ is the identity.
Thus, $P^{\CT(\CG, \nabla^\CG)}$ induces a diffeological parallel transport on the line bundle $\CT(\CG, \nabla^\CG)$, making it into a diffeological hermitean line bundle with connection on $LM$ (cf.~\cite{Waldorf--Transgression_II}).
For a 2-category $\scC$ we denote by $h_1 \scC$ the category obtained by identifying 2-isomorphic 1-morphisms in $\scC$.
By slight abuse of notation, we understand $\HLBdl^\nabla(LM)$ to be the category of diffeological line bundles with connection (in the sense of a diffeological parallel transport) on $LM$.
We have proven the following statement:

\begin{theorem}
There exists a symmetric monoidal functor
\begin{equation}
	\CT \colon h_1 \big( \BGrb^\nabla_\rmpar(M), \otimes\, \big) \to \big( \HLBdl^\nabla(LM), \otimes\, \big)\,,
\end{equation}
acting on objects as defined in Theorem~\ref{st:transgression_line_bundle} and on morphisms as in~\eqref{eq:transgression_of_morphisms}, with parallel transport on $\CT(\CG, \nabla^\CG)$ defined in~\eqref{eq:PT_on_transgression_line_bundle}.
\end{theorem}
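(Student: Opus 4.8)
The plan is to assemble the claimed functor from the pieces already established in the preceding discussion, and then verify the three defining properties of a symmetric monoidal functor: compatibility with composition and identities, with the tensor product, and the coherence of the symmetry. Most of these have in fact been checked verbatim in the displayed computations above, so the proof is largely a matter of organising those verifications and attending to the diffeological technicalities.

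First I would recall that on objects $\CT$ is defined by Theorem~\ref{st:transgression_line_bundle}, which already guarantees that $\CT(\CG, \nabla^\CG)$ is a diffeological hermitean line bundle on $LM$, and that the parallel transport $P^{\CT(\CG, \nabla^\CG)}$ constructed in~\eqref{eq:PT_on_transgression_line_bundle} endows it with a connection (in the diffeological sense); the functoriality and concatenation compatibility of this parallel transport were verified in the long displayed calculation ending with $P^{\CT(\CG, \nabla^\CG)}_g \circ P^{\CT(\CG, \nabla^\CG)}_f [ [\iota_0^*(S, \beta)], z ]$, together with the observation that constant paths give the identity. On morphisms $\CT$ is given by~\eqref{eq:transgression_of_morphisms}, whose well-definedness (independence of the chosen unitary frames $(S_i, \beta_i)$) was checked directly, using that the holonomy of a tensor product is the product of holonomies. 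The two key algebraic properties, namely $\CT((E',\alpha') \circ (E,\alpha)) = \CT(E',\alpha') \circ \CT(E,\alpha)$ and $\CT(1_{(\CG,\nabla^\CG)}) = 1_{\CT(\CG,\nabla^\CG)}$, were also recorded above, as was the crucial fact that $\CT$ factors through 2-isomorphism classes: a parallel 2-isomorphism $(E,\alpha) \to (F,\beta)$ yields $\CT(E,\alpha) = \CT(F,\beta)$. This last point is precisely what makes $\CT$ descend to a functor on $h_1(\BGrb^\nabla_\rmpar(M), \otimes)$ rather than on the 2-category itself, so I would state it explicitly as the reason the source category is $h_1(\BGrb^\nabla_\rmpar(M))$.

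It then remains to exhibit the symmetric monoidal structure. I would construct, for a pair $(\CG_0, \nabla^{\CG_0}), (\CG_1, \nabla^{\CG_1})$, a natural isomorphism
\begin{equation}
	\CT(\CG_0, \nabla^{\CG_0}) \otimes \CT(\CG_1, \nabla^{\CG_1}) \arisom \CT\big( (\CG_0, \nabla^{\CG_0}) \otimes (\CG_1, \nabla^{\CG_1}) \big)
\end{equation}
over $LM$, defined fibrewise over $\gamma$ by sending a pair of classes $[[S_0, \beta_0], z_0] \otimes [[S_1, \beta_1], z_1]$ to $[[S_0, \beta_0] \otimes [S_1, \beta_1],\, z_0 z_1]$, where the tensor product of unitary frames is taken in $\BGrb^\nabla(S^1)$. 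I would check that this is well-defined using the multiplicativity of holonomy under $\otimes$, that it respects the equivalence relations defining the fibres, that it intertwines the parallel transports~\eqref{eq:PT_on_transgression_line_bundle} (which follows because the curving $\rho$ of a tensor product of frames is the sum of the individual curvings, so the exponential integrals multiply), and that the coherence hexagon for the symmetry reduces to the symmetry of $\otimes$ on unitary frames together with the commutativity of multiplication of the scalars $z_0 z_1$. The monoidal unit is carried to the transgression of $\CI_0$, which is canonically trivial over $LM$.

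The main obstacle I expect is not any single algebraic identity but rather the diffeological bookkeeping: one must confirm that the fibrewise maps assembled above --- the morphism map $\CT(E,\alpha)$, the parallel transport, and the monoidal isomorphism --- are all \emph{smooth} in the diffeological sense, i.e. that they send plots to plots. This requires, in each case, passing to a local unitary frame over a contractible neighbourhood (whose existence is guaranteed by Theorem~\ref{st:Classification_of_BGrbs_by_Deligne_coho} together with Proposition~\ref{st:DD_triviality_implies_trivialisability}) and expressing the map in terms of the smooth datum $f_y \in C^\infty(V_y, \FC)$ appearing in the definition of $\Plot(\CT(\CG, \nabla^\CG))$; the smoothness of $\CT(E,\alpha)$ then follows from the smoothness of $\tr \circ \hol \circ \sfR$ applied to the pulled-back morphism, which is where the transgression formulae of~\cite{Waldorf--Transgression_II} and the reduction functor $\sfR$ do the real work. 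I would treat this smoothness check once, carefully, for the morphism map and then remark that the parallel transport and monoidal structure are handled by the same local-frame argument.
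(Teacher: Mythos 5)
Your proposal is correct and follows essentially the same route as the paper: the theorem there is stated as a summary of the preceding fibrewise constructions (well-definedness of $\CT(E,\alpha)$ via multiplicativity of $\tr \circ \hol$, functoriality, invariance under parallel 2-isomorphisms explaining the passage to $h_1$, and the parallel-transport checks), and the monoidal isomorphism you construct is exactly the paper's $\phi_{\CG_0, \CG_1}$, verified by the same holonomy argument in the immediately following section on transgression of categorical structures. Your additional attention to diffeological smoothness of the morphism-level maps is handled in the paper only implicitly, through the frame-based definition of the plots, but it is the same local-frame argument rather than a different approach.
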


\begin{definition}[Transgression functor]
\label{def:transgression_functor}
The functor $\CT$ is called the \emph{transgression functor}.
The line bundle with connection $\CT(\CG, \nabla^\CG)$ is called the \emph{transgression line bundle of $(\CG, \nabla^\CG)$}.
\end{definition}

\begin{remark}
While we are employing the diffeological and 2-categorical techniques developed in~\cite{Waldorf--Thesis} and~\cite{Waldorf--Transgression_II,Waldorf--Transgression_I}, the formula for the transgression of morphisms is inspired also by the work~\cite{CJM--Holonomy_on_D-branes}.
Our definition of $\CT$ combines those two approaches in order to obtain a transgression functor defined on the entire category $h_1(\BGrb^\nabla_\rmpar(M))$.
\qen
\end{remark}

From the construction of the parallel transport in~\eqref{eq:PT_on_transgression_line_bundle} we see that the curvature of $\CT(\CG, \nabla^\CG)$ is represented as follows.
Let $f \in \DfgSp(D^2, LM)$ be a diffeological map from the two-dimensional disc to the loop space of $M$.
By Proposition~\ref{st:DD_triviality_implies_trivialisability} there exists a unitary frame $(S, \beta) \colon \CI_\rho \to \widetilde{f}^*(\CG, \nabla^\CG)$ for some $\rho \in \Omega^1(D^2 {\times} S^1, \iu\, \FR)$.
We can compute the holonomy of $\CT(\CG, \nabla^\CG)$ around $\partial f \coloneqq f_{|\partial D^2}$:
From~\eqref{eq:PT_on_transgression_line_bundle} we obtain
\addtocounter{equation}{1}
\begin{align*}
\label{eq:field_strength_og_CTCG_from_hol}
	\hol \big( (\partial f)^* \CT(\CG, \nabla^\CG) \big)
	&= \exp \Big( \int_{\partial D^2 {\times} S^1} \rho \Big)
	\\*
	&= \exp \Big( \int_{D^2 {\times} S^1} \dd \rho \Big) \theeq
	\\
	&= \exp \Big( \int_{D^2 {\times} S^1} \widetilde{f}^* \curv(\nabla^\CG) \Big)\,.
\end{align*}

\begin{definition}[Transgression of differential forms]
\label{def:transgression_of_differential_forms}
For manifolds $M, N$ with $\dim(N) = n$ and $\nu \in \Omega^k(M,\iu\, \FR)$, define the \emph{transgression of $\nu$ to $\Mfd(N,M)$} to be the diffeological $(k-n)$-form (cf.~\cite{Waldorf--Transgression_I}) given pointwise by
\begin{equation}
	(\CT_N \nu)_{|f} (X_0, \ldots, X_{k-n-1})
	= \int_{N} f^* \big( \iota_{X_0 \wedge \ldots \wedge X_{n-k-1}}\, \nu \big)
\end{equation}
for tangent vectors $X_0, \ldots, X_{k-n-1} \in T_f \Mfd(N,M)$, or globaly by
\begin{equation}
	\CT_N \nu = \int_N \ev^* \nu\,,
\end{equation}
where $\ev \colon \Mfd(N,M) {\times} N \to M$ is the evaluation map.
\end{definition}

Equation~\eqref{eq:field_strength_og_CTCG_from_hol} and Definition~\ref{def:transgression_functor} then imply the following proposition:

\begin{proposition}
The field strength of $\CT(\CG, \nabla^\CG)$ is
\begin{equation}
\label{eq:curvature_of_transgerssion_line_bundle}
	\curv \big( \CT(\CG, \nabla^\CG) \big) = \CT_{S^1} \big(\curv(\nabla^\CG) \big)\,.
\end{equation}
\end{proposition}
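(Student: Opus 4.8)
The plan is to extract the field strength from the holonomy formula already established in~\eqref{eq:field_strength_og_CTCG_from_hol}. Recall that the curvature of a hermitean line bundle with connection is the unique closed 2-form whose integral over any 2-cycle computes the logarithm (divided by appropriate factors) of the holonomy around the boundary; more precisely, for a diffeological map $f \in \DfgSp(D^2, LM)$, the holonomy of $\CT(\CG, \nabla^\CG)$ around $\partial f$ equals $\exp \big( \int_{D^2} f^* \curv(\CT(\CG, \nabla^\CG)) \big)$. This is the defining relation between parallel transport and curvature, valid in the diffeological setting as well (cf.~\cite{Waldorf--Transgression_II}).

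First I would take the chain of equalities in~\eqref{eq:field_strength_og_CTCG_from_hol}, which gives
\begin{equation}
	\hol \big( (\partial f)^* \CT(\CG, \nabla^\CG) \big)
	= \exp \Big( \int_{D^2 {\times} S^1} \widetilde{f}^* \curv(\nabla^\CG) \Big)\,,
\end{equation}
and rewrite the right-hand side using the definition of the transgression of differential forms (Definition~\ref{def:transgression_of_differential_forms}). Since $\widetilde{f} = \ev \circ (f {\times} 1_{S^1})$, a Fubini-type argument for the fibre integration over $S^1$ shows that integrating $\widetilde{f}^* \curv(\nabla^\CG)$ over $D^2 {\times} S^1$ equals integrating the fibre-integrated form $\CT_{S^1}(\curv(\nabla^\CG)) = \int_{S^1} \ev^* \curv(\nabla^\CG)$, pulled back along $f$, over $D^2$. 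Concretely,
\begin{equation}
	\int_{D^2 {\times} S^1} \widetilde{f}^* \curv(\nabla^\CG)
	= \int_{D^2} f^* \Big( \int_{S^1} \ev^* \curv(\nabla^\CG) \Big)
	= \int_{D^2} f^* \CT_{S^1} \big( \curv(\nabla^\CG) \big)\,.
\end{equation}

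Combining the two displays, I obtain that for every diffeological disc $f$,
\begin{equation}
	\exp \Big( \int_{D^2} f^* \curv(\CT(\CG, \nabla^\CG)) \Big)
	= \exp \Big( \int_{D^2} f^* \CT_{S^1} \big( \curv(\nabla^\CG) \big) \Big)\,.
\end{equation}
Since this holds for all such $f$ and for arbitrarily small discs, the two 2-forms $\curv(\CT(\CG, \nabla^\CG))$ and $\CT_{S^1}(\curv(\nabla^\CG))$ agree when pulled back along any plot, and hence are equal as diffeological 2-forms on $LM$; this yields~\eqref{eq:curvature_of_transgerssion_line_bundle}. The main obstacle I anticipate is making the fibre-integration step and the ``small disc'' localisation rigorous in the diffeological category: one must check that fibre integration over $S^1$ commutes with pullback along $f \times 1_{S^1}$ for diffeological forms, and that curvature is determined by holonomy around boundaries of arbitrarily small diffeological discs. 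Both facts are available from the framework of~\cite{Waldorf--Transgression_I,Waldorf--Transgression_II}, so the argument reduces to citing the appropriate transgression-of-forms machinery rather than reproving it.
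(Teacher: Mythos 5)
Your proposal is correct and follows essentially the same route as the paper, which derives the result directly from the holonomy formula~\eqref{eq:field_strength_og_CTCG_from_hol} together with Definition~\ref{def:transgression_of_differential_forms}. You merely make explicit the steps the paper leaves implicit — the holonomy-curvature relation for diffeological line bundles, the Fubini/pullback-commutation argument identifying $\int_{D^2 {\times} S^1} \widetilde{f}^*\curv(\nabla^\CG)$ with $\int_{D^2} f^*\CT_{S^1}(\curv(\nabla^\CG))$, and the localisation over plots — all of which are sound.
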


\begin{remark}
One can reconstruct the bundle gerbe up to 1-isomorphism from the transgression line bundle.
This construction, called \emph{regression}, has been worked out in~\cite{Waldorf--Transgression_II}.
The underlying surjective submersion of the bundle gerbe obtained from this procedure is the based path fibration $\CP M \to M$ (see Diagram~\eqref{eq:tautological_BGrb_diagramm}).%
\footnote{To be precise, we should be referring to the path fibration as a subduction, the diffeological generalisation of surjective submersion.}
A choice for a curving of the bundle gerbe is $\CT_{[0,1]} \curv(\nabla^\CG)$.
The curving of the regression bundle gerbe then agrees with the curving of the original bundle gerbe.

Our sign of the curvature of the transgression line bundle $\CT(\CG, \nabla^\CG)$ is the opposite of the one in~\cite{Waldorf--Transgression_II}.
We have taken different conventions in three places here as compared to those in that reference.
First, we use a different sign convention in the trace condition~\eqref{eq:trace_condition}.
Second, we use unitary frames $\CI_0 \to (\CG, \nabla^\CG)$ here, rather than trivialisations $(\CG, \nabla^\CG) \to \CI_0$.
Trivialisations are related to the dual bundle gerbe, yielding another sign.
Finally, we say that a pair $(\gamma_0, \gamma_1)$ of based paths in $M$ with common end point gives rise to the loop $\overline{\gamma_0} * \gamma_1$, which is the convention motivated by topological field theory, but the opposite of the convention in~\cite{Waldorf--Transgression_II}.
\qen
\end{remark}

We conclude this section by proving a claim made in Remark~\ref{rmk:trace_condition_on_1-morphisms}.

\begin{proposition}
Let $(E,\alpha) \colon (\CG_0, \nabla^{\CG_0}) \to (\CG_1, \nabla^{\CG_1})$ be a 1-morphism which satisfies the \emph{fake curvature condition}
\begin{equation}
	\curv(\nabla^E) - \big( \zeta_{Y_1}^*B_1 - \zeta_{Y_0}^*B_0 \big) \otimes 1_E = 0\,.
\end{equation}
Then $\CT(E,\alpha)$ is parallel.
\end{proposition}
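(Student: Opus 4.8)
The plan is to show that $\CT(E,\alpha)$ intertwines the diffeological parallel transports $P^{\CT(\CG_0,\nabla^{\CG_0})}$ and $P^{\CT(\CG_1,\nabla^{\CG_1})}$ along every path in $LM$; since the connections on the transgression line bundles are encoded precisely by these transports (see~\eqref{eq:PT_on_transgression_line_bundle}), this is equivalent to $\CT(E,\alpha)$ being parallel. So I would fix a smooth path $f \colon [0,1] \to LM$ with sitting instants, let $\widetilde{f} \colon [0,1] \times S^1 \to M$ be its adjoint, and — invoking Proposition~\ref{st:DD_triviality_implies_trivialisability} on the $2$-dimensional cylinder — choose unitary frames $(S_i, \beta_i) \colon \CI_{\rho_i} \to \widetilde{f}^*(\CG_i, \nabla^{\CG_i})$ for $i = 0,1$. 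Writing $\iota_\sigma \colon S^1 \hookrightarrow [0,1] \times S^1$ for the inclusion at parameter $\sigma$, I set $W \coloneqq \sfR\big( (S_1, \beta_1)^{-1} \circ \widetilde{f}^*(E,\alpha) \circ (S_0, \beta_0) \big) \in \HVBdl^\nabla([0,1] \times S^1)$, so that $\iota_\sigma^*W$ is exactly the reduced bundle entering the definition~\eqref{eq:transgression_of_morphisms} of $\CT(E,\alpha)$ over the loop $f(\sigma)$.

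Next I would compute the two composites $P^{\CT(\CG_1)}_f \circ \CT(E,\alpha)_{|f(0)}$ and $\CT(E,\alpha)_{|f(1)} \circ P^{\CT(\CG_0)}_f$ on a class $[[\iota_0^*(S_0, \beta_0)], z]$, using the frames $(S_i,\beta_i)$ throughout. Unwinding~\eqref{eq:transgression_of_morphisms} and~\eqref{eq:PT_on_transgression_line_bundle}, both results land in the fibre over $f(1)$ expressed in the frame $\iota_1^*(S_1,\beta_1)$, and equality of the two reduces to the single scalar identity
\begin{equation*}
\frac{\tr \hol(\iota_1^*W)}{\tr \hol(\iota_0^*W)} = \exp\Big( \int_{[0,1] \times S^1} (\rho_1 - \rho_0) \Big)\,.
\end{equation*}

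The heart of the argument is to establish this identity, and this is where the fake-curvature hypothesis enters. Since the frames are flat, the line bundles $S_i$ satisfy $\curv(\nabla^{S_i}) = \widetilde{f}^*B_i - \rho_i$ (being isomorphisms, they are of rank one by Proposition~\ref{st:classification_of_isomps_of_BGrbs}); combining this with the tensor-product formula for the curvature of $W \cong S_0 \otimes \widetilde{f}^*E \otimes S_1^*$ and substituting $\curv(\nabla^E) = (\zeta_{Y_1}^*B_1 - \zeta_{Y_0}^*B_0) \otimes 1_E$, every $\widetilde{f}^*B_i$ term cancels and one finds $\curv(\nabla^W) = (\rho_1 - \rho_0) \cdot 1_W$ — a scalar $2$-form times the identity. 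Because $\rmH^2([0,1] \times S^1) = 0$, I can write $\rho_1 - \rho_0 = \dd \theta$ and split $\nabla^W = \nabla^W_{\mathrm{flat}} + \theta \cdot 1_W$ with $\nabla^W_{\mathrm{flat}}$ flat. As the perturbation $\theta \cdot 1_W$ is central, parallel transport factorises, giving $\hol(\iota_\sigma^*W) = \exp\big( \oint_{\{\sigma\} \times S^1} \theta \big)\, \hol_{\mathrm{flat}}(\iota_\sigma^*W)$; the two boundary circles are freely homotopic in the cylinder, so $\hol_{\mathrm{flat}}(\iota_0^*W)$ and $\hol_{\mathrm{flat}}(\iota_1^*W)$ are conjugate and hence have equal trace, while Stokes' theorem yields $\oint_{\{1\} \times S^1} \theta - \oint_{\{0\} \times S^1} \theta = \int_{[0,1] \times S^1} (\rho_1 - \rho_0)$. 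Assembling these produces exactly the required identity.

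I expect the main obstacle to be the central-perturbation holonomy step: justifying that the scalar part of $\curv(\nabla^W)$ exponentiates cleanly out of $\tr \hol$ and that the residual flat holonomies at the two ends of the cylinder are conjugate. This is a mild non-abelian-Stokes argument, transparent once the curvature is seen to be scalar, but it must be arranged to respect the sitting-instant conventions and the common refinements of surjective submersions hidden inside the reduction functor $\sfR$; the attendant bookkeeping (existence of the frames over the non-contractible cylinder, and that $W$ is genuinely defined over $[0,1] \times S^1$) is routine via descent for $\HVBdl^\nabla$ but should be checked explicitly.
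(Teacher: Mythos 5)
Your proposal is correct, and it reaches the paper's conclusion through the same reduction but with a genuinely different argument at the technical core. Both proofs fix a path $f$ in $LM$, choose flat unitary frames $(S_i,\beta_i)\colon \CI_{\rho_i} \to \widetilde{f}^*(\CG_i,\nabla^{\CG_i})$ over the cylinder $C = [0,1]\times S^1$, observe that the fake-curvature condition forces the reduced bundle $W = \sfR\big( (S_1,\beta_1)^{-1} \circ \widetilde{f}^*(E,\alpha) \circ (S_0,\beta_0) \big)$ to have central curvature $(\rho_1 - \rho_0)\cdot 1_W$, and reduce the intertwining of the transports~\eqref{eq:PT_on_transgression_line_bundle} to the identity $\tr\hol(\iota_1^*W) = \exp\big(\textint_C (\rho_1 - \rho_0)\big)\,\tr\hol(\iota_0^*W)$. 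Where you diverge is in proving this identity: the paper conjugates the boundary holonomies to a common basepoint via the based loops $\hat{\gamma}_t = \overline{\gamma'_t} * \gamma_t * \gamma'_t$ and invokes the variation-of-holonomy formula~\eqref{eq:derivative_of_holonomy} (cited to Taubes), which, because the curvature is scalar, yields a scalar ODE in $t$ that integrates to the exponential factor after taking traces. You instead exploit that on the two-dimensional cylinder $\rho_1 - \rho_0$ is automatically closed and, since $\rmH^2_\dR(C) = 0$, exact, so $\nabla^W$ splits as a flat connection plus the central form $\theta \cdot 1_W$; the central part exponentiates out of the holonomy, the flat holonomies at the two ends have equal trace because the boundary circles are freely homotopic and flat transport is homotopy invariant, and Stokes' theorem converts the boundary integrals of $\theta$ into $\int_C (\rho_1 - \rho_0)$. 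Your route is more elementary and makes the mechanism transparent — central curvature plus trivial second cohomology of the cylinder is all that is used — at the cost of invoking homotopy invariance of flat parallel transport, which is much the same analytic input as the variation formula in disguise; the paper's ODE argument would also apply in situations where exactness of the curvature is unavailable, though on the cylinder that generality is not needed. Two cosmetic points: state the target identity multiplicatively rather than as a ratio (the trace of a unitary holonomy can vanish), and fix the holonomy-versus-curvature sign convention to agree with the one implicitly used in~\eqref{eq:PT_on_transgression_line_bundle} and the well-definedness computation following it — with that convention your factor $\exp\big({+}\oint \theta\big)$ is indeed the correct one.
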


\begin{proof}
Let $f \colon [0,1] \to LM$ be any smooth path, and let $(S_i, \beta_i) \colon \CI_{\rho_i} \to \widetilde{f}^*(\CG_i, \nabla^{\CG_i})$ be unitary frames for $i = 0,1$.
We have
\addtocounter{equation}{1}
\begin{align*}
	&P^{\CT(\CG, \nabla^\CG)}_f \circ \CT(E,\alpha)\, \big[ [\iota_0^*(S_0, \beta_0)], z \big] \theeq
	\\*
	&= \big[ [\iota_1^*(S_1, \beta_1)],\, \exp \big( \textint_{[0,1] {\times} S^1}\, \rho_1 \big)\, \big( \tr \circ \hol \circ \sfR\, \iota_0^*\big( (S_1, \beta_1)^{-1} \circ \widetilde{f}^*(E,\alpha) \circ (S_0, \beta_0) \big) \big)\, z \big]\,.
\end{align*}
Thus, we need to consider the trace of the holonomy of
\begin{equation}
	(E', \nabla^{E'}) \coloneqq \sfR \big( (S_1, \beta_1)^{-1} \circ \widetilde{f}^*(E,\alpha) \circ (S_0, \beta_0) \big) \quad \in \HVBdl^\nabla([0,1] {\times} S^1)
\end{equation}
around the bounding circles of the cylinder $C \coloneqq [0,1] {\times} S^1$.
For $t \in [0,1]$, let $\gamma'_t$ be the path $\sigma \mapsto (t\, \sigma, 1)$ in $C$, where we parameterise $S^1$ as the set of unit-length complex numbers.
Set $\gamma_t(\sigma) = (t, e^{2\pi\, \iu\, \sigma})$ for $t \in [0,1]$.
We obtain a family of piecewise smooth loops based at $(1,0) \in C$ given by
\begin{equation}
	\hat{\gamma} \coloneqq \overline{\gamma'_{(-)}} * \gamma_{(-)} * \gamma'_{(-)} \colon C \to C\,, \quad
	(t, \sigma) \mapsto \hat{\gamma}_t(\sigma) = (\overline{\gamma'_t} * \gamma_t * \gamma'_t) (\sigma)\,,
\end{equation}
where $\overline{\gamma'_t}$ denotes the reversed path, i.e. $\overline{\gamma'_t}(\sigma) = \overline{\gamma'_t}(1-\sigma)$.
We have~\cite{Taubes--Differential_geometry}
\begin{equation}
\label{eq:derivative_of_holonomy}
\begin{aligned}
	&\frac{\dd}{\dd t}_{|t_0}\, \hol \big( \hat{\gamma}_t^*(E',\nabla^{E'}) \big)
	\\
	&= \hol \big( \hat{\gamma}_{t_0}^*(E',\nabla^{E'}) \big)\,
	\Big( - \int_{S^1} P^{(E', \nabla^{E'})\, -1}_{\hat{\gamma}} \circ \iota_{t_0}^* \big( \iota_{\partial_t} \hat{\gamma}_{(-)}^* \curv(\nabla^{E'}) \big) \circ P^{(E', \nabla^{E'})}_{\hat{\gamma}} \Big)\,,
\end{aligned}
\end{equation}
where $\iota_{t_0} \colon S^1 \to C$ denotes the inclusion of $S^1$ at $t_0 \in [0,1]$.
The fake curvature condition amounts to $\curv(\nabla^{E'}) = \rho_1 - \rho_0$ so that equation~\eqref{eq:derivative_of_holonomy} becomes
\begin{equation}
	\frac{\dd}{\dd t}_{|t_0}\, \hol \big( \hat{\gamma}_t^*(E',\nabla^{E'}) \big)
	= \hol \big( \hat{\gamma}_{t_0}^*(E',\nabla^{E'}) \big)\, \Big( \int_{S^1} \iota_{\partial_t} (\rho_1 - \rho_0) \Big)\,.
\end{equation}
Observe that, by construction,
\begin{equation}
	\hol \big( \hat{\gamma}_t^*(E',\nabla^{E'}) \big)
	= \big( P^{(E', \nabla^{E'})}_{\gamma'_t} \big)^{-1} \circ \hol \big( \gamma_t^*(E',\nabla^{E'}) \big) \circ P^{(E', \nabla^{E'})}_{\gamma'_t}\,,
\end{equation}
such that under the trace we obtain
\addtocounter{equation}{1}
\begin{align*}
	\tr \Big( \hol \big( \gamma_1^*(E',\nabla^{E'}) \big) \Big)
	&= \tr \Big( \hol \big( \hat{\gamma}_1^*(E',\nabla^{E'}) \big) \Big)
	\\
	&= \tr \Big( \hol \big( \hat{\gamma}_0^*(E',\nabla^{E'}) \big) \Big)\,
	\exp \Big( \int_C (\rho_1 - \rho_0) \Big) \theeq
	\\
	&=  \tr \Big( \hol \big( \gamma_0^*(E',\nabla^{E'}) \big) \Big)\,
		\exp \Big( \int_C (\rho_1 - \rho_0) \Big)\,.
\end{align*}
Consequently,
\addtocounter{equation}{1}
\begin{align*}
	&P^{\CT(\CG, \nabla^\CG)}_f \circ \CT(E,\alpha)\, \big[ [\iota_0^*(S_0, \beta_0)], z \big]
	\\*[0.2cm]
	&= \big[ [\iota_1^*(S_1, \beta_1)],\, \exp \big( \textint_{[0,1] {\times} S^1}\, \rho_1 \big)\, \big( \tr \circ \hol \circ \sfR\, \iota_0^*\big( (S_1, \beta_1)^{-1} \circ \widetilde{f}^*(E,\alpha) \circ (S_0, \beta_0) \big) \big)\, z \big]
	\\[0.2cm]
	&=\big[ [\iota_1^*(S_1, \beta_1)],\, \exp \big( \textint_{[0,1] {\times} S^1}\, \rho_1 \big)\, \exp \big( \textint_{[0,1] {\times} S^1}\, (\rho_0 - \rho_1) \big) \theeq
	\\
	&\qquad \big( \tr \circ \hol \circ \sfR\, \iota_1^*\big( (S_1, \beta_1)^{-1} \circ \widetilde{f}^*(E,\alpha) \circ (S_0, \beta_0) \big) \big)\, z \big]
	\\*[0.2cm]
	&= \CT(E,\alpha) \circ P^{\CT(\CG, \nabla^\CG)}_f\, \big[ [\iota_0^*(S_0, \beta_0)], z \big]\,.
\end{align*}
That is, $\CT(E,\alpha)$ intertwines the parallel transports on $\CT(\CG_0, \nabla^{\CG_0})$ and $\CT(\CG_1, \nabla^{\CG_1})$.
\end{proof}

\section{Transgression of categorical structures}
\label{sect:transgression_of_additional_structures}

We proceed to investigate the compatibility of the transgression functor $\CT$ from Definition~\ref{def:transgression_functor} with the additional categorical structures found in Chapter~\ref{ch:bundle_gerbes}.

\subsection{Tensor products}

Consider $(\CG_i, \nabla^{\CG_i}) \in \BGrb^\nabla(M)$ for $i = 0,1,2,3$.
Let $\gamma \in LM$ and choose unitary frames $(S_i, \beta_i) \colon \CI_0 \to \gamma^*(\CG_i, \nabla^{\CG_i})$ over $S^1$.
There is a morphism of diffeological line bundles over $LM$ which reads as
\begin{equation}
\begin{aligned}
	(\phi_{\CG_0, \CG_1})_{|\gamma} \colon \CT(\CG_0, \nabla^{\CG_0})_{|\gamma} \otimes \CT(\CG_1, \nabla^{\CG_1})_{|\gamma} &\to \CT \big( (\CG_0, \nabla^{\CG_0}) \otimes (\CG_1, \nabla^{\CG_1}) \big)_{|\gamma}\,,
	\\
	\big[ [S_0, \beta_0], z \big] \otimes \big[ [S_1, \beta_1], z' \big] &\mapsto \big[ [(S_0, \beta_0) \otimes (S_1, \beta_1)], z\, z' \big]\,.
\end{aligned}
\end{equation}
One can see that this is a unitary isomorphism of line bundles.
It is defined with respect to the unitary frames that are used to define the plots on these line bundles, such that it is, in fact, diffeological.
Well-definedness follows from the fact that the holonomy of a tensor product of line bundles is the product of the individual holonomies.
The same fact implies that $\phi_{\CG_0, \CG_1}$ is parallel, as can be seen from~\eqref{eq:PT_on_transgression_line_bundle}.
It follows, moreover, from the associativity of the tensor product in $\BGrb^\nabla(M)$ that
\begin{equation}
	\phi_{\CG_0, \CG_1 \otimes \CG_2} \circ \big( 1_{\CT(\CG_0, \nabla^{\CG_0})} \otimes \phi_{\CG_1, \CG_2} \big)
	= \phi_{\CG_0 \otimes \CG_1, \CG_2} \circ \big( \phi_{\CG_0, \CG_1} \otimes 1_{\CT(\CG_2, \nabla^{\CG_2})} \big)\,.
\end{equation}
By using the compatibility of $\tr \circ \hol$ with tensor products on can check that $\phi$ defines a natural isomorphism
\begin{equation}
	\phi_{-,-} \colon ( - \otimes - ) \circ (\CT {\times} \CT) \to \CT \circ ( - \otimes - )\,,
\end{equation}
where the tensor product on the source side is in $\HLBdl^\nabla(LM)$, and the one on the target side is in $h_1(\BGrb^\nabla(M))$.

\subsection{Direct sum}

Consider a pair of 1-morphisms $(E,\alpha), (E', \alpha') \in \BGrb^\nabla(M)((\CG_0, \nabla^{\CG_0}), (\CG_1, \nabla^{\CG_1}))$.
In Section~\ref{sect:Additive_structures_on_morphisms_in_BGrb} we defined a functorial direct sum of 1-morphisms between a given pair of bundle gerbes and showed in Theorem~\ref{st:direct_sum_structure_on_morphisms_of_BGrbs} that it is compatible with composition and the tensor product in $\BGrb^\nabla(M)$.
Consequently, for $(S_i, \beta_i) \colon \CI_0 \to \gamma^*(\CG_i, \nabla^{\CG_i})$, for $i = 0,1$, unitary frames of $(\CG_i, \nabla^{\CG_i})$ over $\gamma \in LM$, we obtain
\addtocounter{equation}{1}
\begin{align*}
	&\CT \big( (E,\alpha) \oplus (E', \alpha') \big)_{|\gamma} \big[ [S_0, \beta_0], z \big]
	\\*[0.2cm]
	&= \big[ [S_1, \beta_1], \big( \tr \circ \hol \circ \sfR \big( (S_1, \beta_1)^{-1} \circ \gamma^*\big( (E,\alpha) \oplus (E', \alpha') \big) \circ (S_0, \beta_0) \big) \big)\, z \big]
	\\[0.2cm]
	&= \big[ [S_1, \beta_1], \big( \tr \circ \hol \big( \sfR \big( (S_1, \beta_1)^{-1} \circ \gamma^* (E,\alpha) \circ (S_0, \beta_0) \big)
	\\*
	&\qquad \oplus \sfR \big( (S_1, \beta_1)^{-1} \circ \gamma^* (E',\alpha') \circ (S_0, \beta_0) \big) \big)\, z \big] \theeq
	\\[0.2cm]
	&= \big[ [S_1, \beta_1], \big( \tr \circ \hol \circ \sfR \big( (S_1, \beta_1)^{-1} \circ \gamma^* (E,\alpha) \circ (S_0, \beta_0) \big) \big)\, z
	\\*
	&\qquad + \big( \tr \circ \hol \circ \sfR \big( (S_1, \beta_1)^{-1} \circ \gamma^* (E',\alpha') \circ (S_0, \beta_0) \big) \big)\, z \big]
	\\*[0.2cm]
	&= \big( \CT (E,\alpha)_{|\gamma} + \CT (E', \alpha')_{|\gamma} \big) \big[ [S_0, \beta_0], z \big]\,.
\end{align*}

We deduce the following proposition:

\begin{proposition}
The transgression functor $\CT \colon h_1 (\BGrb^\nabla(M)) \to \HLBdl^\nabla(LM)$ is additive, i.e. it respects the enrichment in $\Ab$ in its source and target.
\end{proposition}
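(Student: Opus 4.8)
The plan is to reduce the statement to the explicit fibrewise identity displayed immediately above the proposition, and to isolate exactly which structural inputs it requires. Additivity here amounts to two claims: first, that each hom-set $\HLBdl^\nabla(LM)\big( \CT(\CG_0, \nabla^{\CG_0}), \CT(\CG_1, \nabla^{\CG_1}) \big)$ carries a natural abelian group structure; and second, that $\CT$ sends the commutative-monoid operation $\oplus$ on the hom-sets of $h_1(\BGrb^\nabla(M))$ (the direct sum of Theorem~\ref{st:direct_sum_structure_on_morphisms_of_BGrbs}) to this addition, so that $\CT$ is a monoid homomorphism on morphisms landing in the $\Ab$-enriched target. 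The first claim is immediate: a morphism $\CT(\CG_0, \nabla^{\CG_0})_{|\gamma} \to \CT(\CG_1, \nabla^{\CG_1})_{|\gamma}$ is a linear map between one-dimensional fibres, such maps add fibrewise, and since the plots are controlled by unitary frames the fibrewise sum of two diffeological morphisms is again diffeological. Hence the hom-sets are abelian groups (in fact modules over the smooth functions on $LM$).

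For the second claim I would fix a loop $\gamma \in LM$ and unitary frames $(S_i, \beta_i) \colon \CI_0 \to \gamma^*(\CG_i, \nabla^{\CG_i})$ for $i=0,1$, and trace through the definition~\eqref{eq:transgression_of_morphisms} of $\CT$ applied to the direct sum $(E,\alpha) \oplus (E', \alpha')$. The decisive point is that the composite $(S_1, \beta_1)^{-1} \circ \gamma^*\big( (E,\alpha) \oplus (E', \alpha') \big) \circ (S_0, \beta_0)$ decomposes, up to a parallel $2$-isomorphism, as the direct sum of the two analogous composites formed from $(E,\alpha)$ and $(E',\alpha')$ separately. This uses that $\gamma^*$ commutes with $\oplus$ and, crucially, that pre- and postcomposition with the invertible $1$-morphisms $(S_i,\beta_i)^{\pm 1}$ distributes over the direct sum, which is precisely part (3) of Theorem~\ref{st:direct_sum_structure_on_morphisms_of_BGrbs} via the natural isomorphisms $\sfc_{l}$ and $\sfc_{r}$.

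Next I would invoke that the reduction functor $\sfR$ carries this direct sum of $1$-morphisms, taken over the minimal surjective submersion, to the direct sum in $\HVBdl^\nabla(S^1)$, and then apply the two elementary facts underlying the whole computation: the holonomy of a direct sum of hermitean vector bundles with connection is the block-diagonal sum of the individual holonomies, and the trace is additive over direct sums, $\tr(A \oplus B) = \tr(A) + \tr(B)$. Chaining these gives exactly $\CT\big( (E,\alpha) \oplus (E',\alpha') \big)_{|\gamma} = \CT(E,\alpha)_{|\gamma} + \CT(E',\alpha')_{|\gamma}$, which is the asserted additivity. Well-definedness on $2$-isomorphism classes and compatibility with the parallel transport~\eqref{eq:PT_on_transgression_line_bundle} have already been established in the construction of $\CT$ in Theorem~\ref{st:transgression_line_bundle}, so no further verification is needed on those points.

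I expect the only genuine subtlety, as opposed to routine bookkeeping, to be the careful matching of surjective submersions when passing from $\gamma^*\big( (E,\alpha) \oplus (E',\alpha') \big)$ to a reduced bundle on $S^1$: one must check that the distributivity isomorphisms $\sfc_{l}$, $\sfc_{r}$ intertwine correctly with the frames and with $\sfR$, so that the decomposition into a direct sum is a bona fide $2$-isomorphism rather than merely a fibrewise coincidence. Since $\CT$ depends only on $2$-isomorphism classes of $1$-morphisms, this suffices; everything past it is just the additivity of trace and holonomy over direct sums.
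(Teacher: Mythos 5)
Your proposal is correct and follows essentially the same route as the paper: the paper's proof is exactly the fibrewise computation you describe, using part (3) of Theorem~\ref{st:direct_sum_structure_on_morphisms_of_BGrbs} to pull the frames past the direct sum, the compatibility of $\sfR$ with $\oplus$, and the additivity of $\tr \circ \hol$ over direct sums. The only difference is that you also spell out the abelian group structure on the target hom-sets and the submersion bookkeeping, which the paper leaves implicit.
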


\subsection{Duals}

Recall the definition of the Riesz dual functor $\Theta$ from Definition~\ref{def:Riesz_dual_functor}.
Its functoriality readily implies

\begin{proposition}
There is a natural isomorphism
\begin{equation}
	\CT(\CG, \nabla^\CG) \to \overline{\CT\big( (\CG, \nabla^\CG)^* \big)}\,, \quad
	\big[ [S, \beta], z \big] \mapsto \big[ [ \Theta (S, \beta)], \overline{z} \big]\,.
\end{equation}
Thus, there is a natural isomorphism
\begin{equation}
	\CT\big( (\CG, \nabla^\CG)^* \big) \cong \big( \CT(\CG,\nabla^\CG) \big)^*\,.
\end{equation}
\end{proposition}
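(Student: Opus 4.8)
The plan is to verify directly that the stated fibrewise formula defines a diffeological line-bundle isomorphism, and then to obtain the second isomorphism by conjugating the first and invoking the hermitean metric.

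First I would check well-definedness on fibres. A representative $[S,\beta]$ of a point of $\CT(\CG,\nabla^\CG)_{|\gamma}$ is a flat $1$-isomorphism $(S,\beta)\colon \CI_0 \to \gamma^*(\CG,\nabla^\CG)$ over $S^1$. Since $\Theta$ is covariant on $1$-morphisms and $\CI_0^* = \CI_0$, while pullback commutes with the dual of bundle gerbes, $\Theta(S,\beta)$ is a flat frame $\CI_0 \to \gamma^*\big((\CG,\nabla^\CG)^*\big)$, so the formula lands in the correct fibre. For independence of the representative, suppose $z' = \hol\big(\sfR((S',\beta')^{-1}\circ(S,\beta))\big)\,z$. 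Functoriality of $\Theta$ gives $\Theta(S',\beta')^{-1}\circ\Theta(S,\beta) = \Theta\big((S',\beta')^{-1}\circ(S,\beta)\big)$, and since the Riesz dual acts on an endomorphism of $\CI_0$ by taking the dual bundle after reduction, one has $\sfR\big(\Theta(-)\big) \cong \big(\sfR(-)\big)^*$. As the holonomy of a dual hermitean line bundle is the inverse, equivalently the complex conjugate, of the original holonomy, I obtain $\hol\big(\sfR(\Theta(\cdots))\big) = \overline{\hol\big(\sfR(\cdots)\big)}$, whence $\overline{z'} = \overline{\hol(\cdots)}\,\overline z = \hol\big(\sfR(\Theta(\cdots))\big)\,\overline z$, which is exactly the equivalence relation defining $\CT\big((\CG,\nabla^\CG)^*\big)$.

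Next I would observe that the map is diffeological and unitary: it is expressed through the unitary frames that generate the plots of both transgression bundles (Theorem~\ref{st:transgression_line_bundle}), so it carries plots to plots, and the assignment $z \mapsto \overline z$ is antilinear, hence $\FC$-linear and norm-preserving as a map into the \emph{conjugate} bundle $\overline{\CT((\CG,\nabla^\CG)^*)}$. It is fibrewise bijective, since $\Theta$ induces a bijection on $\pi_0$ of unitary frames and $z\mapsto\overline z$ is a bijection, and is therefore an isomorphism. Naturality in $(\CG,\nabla^\CG)$ reduces to the same input applied to the transgression of a $1$-morphism~\eqref{eq:transgression_of_morphisms}: using $\sfR\circ\Theta \cong (\sfR(-))^*$, together with the transpose-invariance of $\tr$ and the fact that $\hol$ of a dual is conjugate holonomy, one finds $\tr\circ\hol\circ\sfR\big(\Theta(\cdots)\big) = \overline{\tr\circ\hol\circ\sfR(\cdots)}$, so the naturality square commutes after conjugation. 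To deduce the second statement I would then apply complex conjugation to the first isomorphism: as $\overline{\overline{L}} = L$ for any diffeological hermitean line bundle, conjugating $\CT(\CG,\nabla^\CG) \cong \overline{\CT((\CG,\nabla^\CG)^*)}$ yields $\overline{\CT(\CG,\nabla^\CG)} \cong \CT\big((\CG,\nabla^\CG)^*\big)$, and the hermitean metric supplies the canonical unitary isomorphism $\overline{\CT(\CG,\nabla^\CG)} \cong \big(\CT(\CG,\nabla^\CG)\big)^*$; composing gives $\CT\big((\CG,\nabla^\CG)^*\big) \cong \big(\CT(\CG,\nabla^\CG)\big)^*$.

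I expect the only genuine technical point to be the compatibility $\sfR\circ\Theta \cong \big(\sfR(-)\big)^*$ of the Riesz dual with the reduction functor, together with the identity $\hol(L^*) = \overline{\hol(L)}$; once these are established, every remaining step is formal, reflecting the assertion that functoriality of $\Theta$ "readily implies" the claim. The main care required is to track the three orientation and sign conventions discussed earlier (the trace condition~\eqref{eq:trace_condition}, the use of frames $\CI_0 \to (\CG,\nabla^\CG)$ rather than trivialisations, and the loop-concatenation convention) so that all conjugations align consistently with the definition of the parallel transport in~\eqref{eq:PT_on_transgression_line_bundle}.
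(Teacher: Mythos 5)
Your proposal is correct and takes exactly the route the paper intends: the paper gives no written proof beyond the remark that functoriality of $\Theta$ ``readily implies'' the claim, and your argument --- functoriality of $\Theta$, the compatibility $\sfR \circ \Theta \cong (\sfR(-))^*$, the conjugate holonomy of dual hermitean line bundles, and the metric identification $\overline{L} \cong L^*$ --- is precisely the elaboration of that remark. The one step you flag but do not carry out, compatibility with the parallel transport of~\eqref{eq:PT_on_transgression_line_bundle}, follows by the same conjugation pattern: the dual frame has curving $-\rho$, and $\overline{\exp \big( \textint\, \rho \big)} = \exp \big( -\textint\, \rho \big)$ since $\rho$ is $\iu\,\FR$-valued.
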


\subsection{Pairings and bundle metrics}

In Section~\ref{sect:higher_geometric_structures} we interpreted the bifunctor $[-,-]$ from Theorem~\ref{st:internal_hom_of_morphisms_in_BGrb--existence_and_naturality} and Theorem~\ref{st:2-var_adjunction_on_BGrb} as a higher version of a hermitean bundle metric on morphisms of bundle gerbes.
Consider bundle gerbes with connections $(\CG_i, \nabla^{\CG_i}) \in \BGrb^\nabla(M)$ for $i = 0,1$, and 1-morphisms $(E,\alpha), (E', \alpha') \in \BGrb^\nabla(M) ((\CG_0, \nabla^{\CG_0}), (\CG_1, \nabla^{\CG_1}))$.
We compute
\begin{equation}
\label{eq:transgression_after_higher_bundle_metric}
	\CT \big( \big[ (E,\alpha), (E', \alpha') \big] \big)_{|\gamma}
	= \tr \circ \hol \circ \gamma^* \sfR \big[ (E,\alpha), (E', \alpha') \big]\,.
\end{equation}
If we transgress the morphisms first and then use the bundle metric induced on the bundle of morphisms from $\CT(\CG_0, \nabla^{\CG_0})$ to $\CT(\CG_1, \nabla^{\CG_1})$, we obtain
\addtocounter{equation}{1}
\begin{align*}
\label{eq:bundle_metric_after_transgression}
	&\big( (E,\alpha), (E', \alpha') \big)
	\\*[0.2cm]
	&\mapsto\ \overline{\big( \tr \circ \hol \circ \sfR \big( (S_1, \beta_1)^{-1} \circ \gamma^*(E,\alpha) \circ (S_0, \beta_0) \big) \big)}
	\\*
	&\qquad \cdot \big( \tr \circ \hol \circ \sfR \big( (S_1, \beta_1)^{-1} \circ \gamma^*(E',\alpha') \circ (S_0, \beta_0) \big) \big)
	\\[0.2cm]
	&= \big( \tr \circ \hol \circ \Theta \circ \sfR \big( (S_1, \beta_1)^{-1} \circ \gamma^*(E,\alpha) \circ (S_0, \beta_0) \big) \big) \theeq
	\\*
	&\qquad \cdot \big( \tr \circ \hol \circ \sfR \big( (S_1, \beta_1)^{-1} \circ \gamma^*(E',\alpha') \circ (S_0, \beta_0) \big) \big)
	\\[0.2cm]
	&= \tr \circ \hol \circ \sfR \big( \big( (S_1, \beta_1)^{-1} \circ \gamma^*(E',\alpha') \circ (S_0, \beta_0) \big)
	\otimes \Theta \big( (S_1, \beta_1)^{-1} \circ \gamma^*(E,\alpha) \circ (S_0, \beta_0) \big) \big)\,.
\end{align*}
As before, $(S_i, \beta_i)$ are unitary frames of $\gamma^*(\CG_i, \nabla^{\CG_i})$ over $S^1$.
In the first identity we have made use of the fact that the dual of a bundle with connection has the inverse transpose holonomy.
If the holonomy is unitary, as is the case for the above hermitean bundles, the trace of the holonomy of the dual bundle is, consequently, the complex conjugate of the original holonomy.

\begin{lemma}
\label{st:pairing_and_isomorphisms_for_metrics}
Consider bundle gerbes $(\CG_i, \nabla^{\CG_i}), (\CG'_i, \nabla^{\CG'_i}) \in \BGrb^\nabla(M)$ for $i = 0,1$.
Let $(E,\alpha), (E', \alpha') \in \BGrb^\nabla(M) ((\CG_0, \nabla^{\CG_0}), (\CG_1, \nabla^{\CG_1}))$, and consider two 1-isomorphisms $(J_i, \nu_i) \in \BGrb^\nabla(M) ((\CG_i, \nabla^{\CG_i}), (\CG'_i, \nabla^{\CG'_i}))$.
The following statements hold true:
\begin{myenumerate}
	\item There exists a parallel, unitary 2-isomorphism
	\begin{equation}
		\big[ (J_1, \nu_1) \circ (E,\alpha) \circ (J_0, \nu_0)^{-1},\, (J_1, \nu_1) \circ (E', \alpha') \circ (J_0, \nu_0)^{-1} \big]
		\cong \big[ (E,\alpha), (E', \alpha') \big]\,.
	\end{equation}
	
	\item There exists a parallel, unitary isomorphism of hermitean vector bundles with connection
	\begin{equation}
	\begin{aligned}
		&\sfR \big( (J_1, \nu_1) \circ (E', \alpha') \circ (J_0, \nu_0)^{-1} \big) \otimes \big( \Theta \circ \sfR\, \big( (J_1, \nu_1) \circ (E,\alpha) \circ (J_0, \nu_0)^{-1} \big) \big)
		\\
		&\cong \sfR\, \big[ (E,\alpha), (E', \alpha') \big]\,.
	\end{aligned}
	\end{equation}
\end{myenumerate}
\end{lemma}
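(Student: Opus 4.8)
The plan is to prove both statements simultaneously, deriving (ii) from (i) together with the previously established naturality of the internal hom $[-,-]$ and the Riesz dual $\Theta$. The key structural fact I would exploit is that conjugating a $1$-morphism by isomorphisms on either side does not change its internal hom with another $1$-morphism conjugated the same way, because the isomorphisms $(J_i,\nu_i)$ and their inverses cancel under the pairing.

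First I would establish (i). The guiding idea is that $[-,-]$ is a reduced version of $\Theta(-)\otimes(-)$, and that the acting factors contributed by $(J_0,\nu_0)$ and $(J_1,\nu_1)$ appear in dual pairs. Concretely, since $(J_1,\nu_1)$ is a $1$-isomorphism, Proposition~\ref{st:classification_of_isomps_of_BGrbs} guarantees its underlying bundle has rank one, so $(J_1,\nu_1)\otimes\Theta(J_1,\nu_1)$ is canonically $2$-isomorphic to the identity $1_{\CI_0}$ via the evaluation morphism $\delta$. Using the natural isomorphisms~\eqref{eq:Riesz_dual_and_internal_hom} and~\eqref{eq:internal_hom_and_tensors}, I would compute
\begin{equation}
\big[ (J_1, \nu_1) \circ (E,\alpha) \circ (J_0, \nu_0)^{-1},\, (J_1, \nu_1) \circ (E', \alpha') \circ (J_0, \nu_0)^{-1} \big]
\cong \Theta\big( (J_1, \nu_1) \circ (E,\alpha) \circ (J_0, \nu_0)^{-1} \big) \otimes \big( (J_1, \nu_1) \circ (E', \alpha') \circ (J_0, \nu_0)^{-1} \big),
\end{equation}
and then distribute $\Theta$ over composition (which, by Theorem~\ref{st:Riesz_dual_is_functorial}, reverses order on $2$-morphisms but is covariant on $1$-morphisms and compatible with composition). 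The factors $\Theta(J_1,\nu_1)\otimes(J_1,\nu_1)$ and $(J_0,\nu_0)^{-1}\otimes\Theta(J_0,\nu_0)^{-1}$ then collapse to identities, leaving $\Theta(E,\alpha)\otimes(E',\alpha')\cong[(E,\alpha),(E',\alpha')]$. Here I would insist that every cancellation is implemented by a \emph{parallel, unitary} $2$-isomorphism, which holds because $\delta$ and the structural isomorphisms of $\Theta$ are built from unitary parallel morphisms of hermitean vector bundles, and because Remark~\ref{rmk:2-var_adjunction_restricts_to_parallel_subcat} ensures the whole adjunction restricts to $\BGrb^\nabla_\rmpar(M)$.

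For (ii) I would apply the reduction functor $\sfR$ to the $2$-isomorphism from (i). Because $\sfR$ is an equivalence onto $\BGrb^\nabla_\FP(M)$ and is symmetric monoidal with respect to $\otimes$, and because under $\sfR$ the internal composition in $\BGrb^\nabla(M)(\CI_0,\CI_0)$ coincides with the tensor product of hermitean vector bundles, applying $\sfR$ to $\Theta(E,\alpha)\otimes(E',\alpha')\cong[(E,\alpha),(E',\alpha')]$ and to its conjugated version yields exactly the claimed parallel unitary isomorphism
\begin{equation}
\sfR \big( (J_1, \nu_1) \circ (E', \alpha') \circ (J_0, \nu_0)^{-1} \big) \otimes \big( \Theta \circ \sfR\, \big( (J_1, \nu_1) \circ (E,\alpha) \circ (J_0, \nu_0)^{-1} \big) \big)
\cong \sfR\, \big[ (E,\alpha), (E', \alpha') \big],
\end{equation}
using that $\sfR$ intertwines $\Theta$ with the dual bundle functor on $\HVBdl^\nabla(M)$.

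The main obstacle I anticipate is bookkeeping the various surjective submersions: the conjugated $1$-morphisms live over fibre products involving the submersions of $(J_0,\nu_0)$ and $(J_1,\nu_1)$, whereas $[(E,\alpha),(E',\alpha')]$ is naturally defined over the reduced product $Y_0\times_M Y_1$. Matching these and verifying that the cancelling $L_i\otimes L_i^*$ and $J_i\otimes\Theta(J_i)$ pairs really descend to identities over the correct overlaps is the delicate point; I expect to handle it exactly as in the proof of Theorem~\ref{st:internal_hom_of_morphisms_in_BGrb--existence_and_naturality}, invoking descent for $\HVBdl^\nabla$ and the results of Appendix~\ref{app:special_morphisms_and_descent} to pass freely between a morphism and its reduction over the minimal submersion. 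Once that identification is in place, the remaining verifications—unitarity and parallelism of the structural isomorphisms—are routine.
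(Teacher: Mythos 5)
Your proof of part (1) has a genuine gap at its central step, and it begins with a type error. For 1-morphisms $(E,\alpha),(E',\alpha')\colon(\CG_0,\nabla^{\CG_0})\to(\CG_1,\nabla^{\CG_1})$, the objects $[(E,\alpha),(E',\alpha')]$ and $\Theta(E,\alpha)\otimes(E',\alpha')$ do not live in the same morphism category: the former is an endomorphism of $\CI_0$, while the latter is a 1-morphism $\CG_0^*\otimes\CG_0\to\CG_1^*\otimes\CG_1$, so they cannot be 2-isomorphic as you assert. The identity $[A,B]\cong\Theta(A)\otimes B$ in~\eqref{eq:internal_hom_and_tensors} holds only when $A,B$ are endomorphisms of $\CI_0$; for general 1-morphisms the correct comparison object is $\sfH(A,B)$, i.e.\ $\Theta(A)\otimes B$ sandwiched between $\delta$-morphisms, and the identification $[A,B]\cong\sfH(A,B)$ is itself obtained by a separate descent argument at the end of Section~\ref{sect:Pairings_and_inner_hom_of_morphisms_in_BGrb}, not by the equations you cite. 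The same problem affects your claim that $(J_1,\nu_1)\otimes\Theta(J_1,\nu_1)$ ``is canonically 2-isomorphic to $1_{\CI_0}$'': this tensor product is a 1-morphism $\CG_1\otimes\CG_1^*\to\CG'_1\otimes\CG_1^{\prime\,*}$, not an endomorphism of $\CI_0$.

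More seriously, once the $\delta$'s are inserted, the cancellation you need reads $\delta_{\CG'_1}\circ\big(\Theta(J_1,\nu_1)\otimes(J_1,\nu_1)\big)\cong\delta_{\CG_1}$ (and its analogue for $(J_0,\nu_0)^{-1}$), i.e.\ a dinaturality of the evaluation morphisms $\delta$ under 1-isomorphisms. This is precisely the non-trivial content of the lemma, and it is not a formal consequence of Theorem~\ref{st:Riesz_dual_is_functorial} or of anything else established in the paper, which never proves any naturality property of $\delta$. Having equal source and target is not enough in this setting: by the torsor property recalled in Section~\ref{sect:transgression_functor}, two parallel 1-isomorphisms between the same bundle gerbes can differ by a non-trivial flat line bundle, so the 2-isomorphism must actually be exhibited, e.g.\ by an explicit computation with $\nu_1$ and the structure of $\delta$. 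The paper's proof sidesteps all of this with a Yoneda-type argument: applying Theorem~\ref{st:2-var_adjunction_on_BGrb} twice, together with the compatibility of tensoring by endomorphisms of $\CI_0$ with composition and the fact that composition with a 1-isomorphism is an equivalence of morphism categories, it identifies $\BGrb^\nabla(M)\big([(E,\alpha),(E',\alpha')],\,[(J_1,\nu_1)\circ(E,\alpha),(J_1,\nu_1)\circ(E',\alpha')]\big)$ with the endomorphisms of $[(E,\alpha),(E',\alpha')]$ and takes the image of the identity, all bijections preserving parallel unitary 2-isomorphisms. Your derivation of (2) from (1) does match the paper's (apply $\sfR$ and use its compatibility with $\otimes$, $\Theta$ and $[-,-]$), but it only stands once (1) has genuinely been proven.
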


\begin{proof}
Ad (1):
We only consider the postcomposition by $(J_1, \nu_1)$.
The precomposition part is analogous.
There are natural isomorphisms, in $\BGrb^\nabla(M)$ as well as in $\BGrb^\nabla_\rmpar(M)$,
\addtocounter{equation}{1}
\begin{align*}
	&\BGrb^\nabla(M) \Big( \big[ (E,\alpha), (E', \alpha') \big],\, \big[ (J_1, \nu_1) \circ (E,\alpha),\, (J_1, \nu_1) \circ (E', \alpha') \big] \Big)
	\\*
	&\cong \BGrb^\nabla(M) \Big( \big( (J_1, \nu_1) \circ (E,\alpha) \big) \otimes \big[ (E,\alpha), (E', \alpha') \big],\, (J_1, \nu_1) \circ (E', \alpha') \Big)
	\\
	&\cong \BGrb^\nabla(M) \Big( (J_1, \nu_1) \circ \big( (E,\alpha)  \otimes \big[ (E,\alpha), (E', \alpha') \big] \big),\, (J_1, \nu_1) \circ (E', \alpha') \Big)
	\\
	&\cong \BGrb^\nabla(M) \Big( (E,\alpha) \otimes \big[ (E,\alpha), (E', \alpha') \big],\, (E', \alpha') \Big) \theeq
	\\*
	&\cong \BGrb^\nabla(M) \Big( \big[ (E,\alpha), (E', \alpha') \big],\, \big[ (E,\alpha), (E', \alpha') \big] \Big)\,.
\end{align*}
Here, the first isomorphism is according to Theorem~\ref{st:2-var_adjunction_on_BGrb}, the second exists since tensoring by endomorphisms of $\CI_0$ is compatible with composition, the third uses that composition by 1-isomorphisms in a 2-category yields an equivalence of morphism categories, and the last isomorphism is, again, Theorem~\ref{st:2-var_adjunction_on_BGrb}.
All of these isomorphisms preserve unitaries and 2-isomorphisms (see Section~\ref{sect:Pairings_and_inner_hom_of_morphisms_in_BGrb} and Appendix~\ref{app:Proof_of_adjunction_theorem}).
The last set of morphisms contains the identity 2-morphism on $[(E,\alpha), (E',\alpha')]$ and, thus, a parallel, unitary 2-isomorphism as required.

Ad (2):
Recall from ~\eqref{eq:internal_hom_and_tensors} that on $\HVBdl^\nabla(M) \subset \BGrb^\nabla(M)(\CI_0, \CI_0)$, the bifunctor $[-,-]$ is naturally isomorphic to $(-) \otimes \Theta(-)$.
Thus,
\addtocounter{equation}{1}
\begin{align*}
	&\sfR \big( (J_1, \nu_1) \circ (E', \alpha') \circ (J_0, \nu_0)^{-1} \big) \otimes \big( \Theta \circ \sfR\, \big( (J_1, \nu_1) \circ (E,\alpha) \circ (J_0, \nu_0)^{-1} \big) \big)
	\\*
	&\cong \big[ \sfR \big( (J_1, \nu_1) \circ (E', \alpha') \circ (J_0, \nu_0)^{-1} \big),\, \sfR\, \big( (J_1, \nu_1) \circ (E,\alpha) \circ (J_0, \nu_0)^{-1} \big) \big] \theeq
	\\
	&\cong \sfR\, \big[ (J_1, \nu_1) \circ (E', \alpha') \circ (J_0, \nu_0)^{-1},\, (J_1, \nu_1) \circ (E,\alpha) \circ (J_0, \nu_0)^{-1} \big]
	\\*
	&\cong \sfR\, \big[ (E,\alpha), (E', \alpha') \big]\,,
\end{align*}
where in the last step we have used part (1) of the statement.
Note that the compatibility of $\sfR$ and $[-,-]$ can be derived from that of $\sfR$ with the tensor product using the adjointness properties of $[-,-]$ and $\otimes$.
\end{proof}

Now we can combine equations~\eqref{eq:transgression_after_higher_bundle_metric} and~\eqref{eq:bundle_metric_after_transgression} with Lemma~\ref{st:pairing_and_isomorphisms_for_metrics} to infer

\begin{proposition}
The transgression functor sends the higher bundle metric $[-,-]$ of the bundle gerbes on the source side to the bundle metric on the target side:
\begin{equation}
	\CT \circ [-,-] = h_{\CT \CG_1 \otimes\CT\CG_0^*} \circ (\CT {\times} \CT)\,,
\end{equation}
where we have omitted connections for the sake of readability.
\end{proposition}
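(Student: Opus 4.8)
The plan is to show that the two bifunctors
\begin{equation}
	\CT \circ [-,-]
	\quad \text{and} \quad
	h_{\CT\CG_1 \otimes \CT\CG_0^*} \circ (\CT {\times} \CT)
\end{equation}
agree by directly comparing the two explicit formulae already computed in the excerpt, namely~\eqref{eq:transgression_after_higher_bundle_metric} and~\eqref{eq:bundle_metric_after_transgression}, and identifying their right-hand sides using Lemma~\ref{st:pairing_and_isomorphisms_for_metrics}. Concretely, for a fixed loop $\gamma \in LM$ and a pair $(E,\alpha), (E', \alpha') \in \BGrb^\nabla(M)((\CG_0, \nabla^{\CG_0}), (\CG_1, \nabla^{\CG_1}))$, equation~\eqref{eq:transgression_after_higher_bundle_metric} evaluates $\CT([(E,\alpha), (E', \alpha')])$ at $\gamma$ as $\tr \circ \hol \circ \gamma^* \sfR[(E,\alpha), (E', \alpha')]$. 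On the other side, the induced bundle metric on $\HLBdl^\nabla(LM)(\CT\CG_0, \CT\CG_1) \cong \CT\CG_1 \otimes \CT\CG_0^*$, evaluated on the transgressed morphisms $\CT(E,\alpha)_{|\gamma}$ and $\CT(E',\alpha')_{|\gamma}$, is precisely the expression whose final line in~\eqref{eq:bundle_metric_after_transgression} reads
\begin{equation}
	\tr \circ \hol \circ \sfR \big( (S_1, \beta_1)^{-1} \circ \gamma^*(E',\alpha') \circ (S_0, \beta_0) \otimes \Theta ( (S_1, \beta_1)^{-1} \circ \gamma^*(E,\alpha) \circ (S_0, \beta_0) ) \big)\,.
\end{equation}

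The key step is to recognise that the argument of $\tr \circ \hol$ in the last display is exactly the object produced by part (2) of Lemma~\ref{st:pairing_and_isomorphisms_for_metrics}, taking the bundle gerbes of that lemma to be $\gamma^*(\CG_i, \nabla^{\CG_i})$ over $S^1$ and the isomorphisms $(J_i, \nu_i)$ to be the unitary frames $(S_i, \beta_i)^{-1} \colon \gamma^*(\CG_i, \nabla^{\CG_i}) \to \CI_0$. Lemma~\ref{st:pairing_and_isomorphisms_for_metrics}(2) then supplies a parallel, unitary isomorphism of hermitean vector bundles with connection between this tensor product and $\sfR[\gamma^*(E,\alpha), \gamma^*(E', \alpha')] \cong \gamma^* \sfR[(E,\alpha), (E', \alpha')]$, where the latter identification uses the naturality of the internal hom $[-,-]$ and of $\sfR$ under pullback along $\gamma$. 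Since a parallel, unitary isomorphism of hermitean vector bundles with connection preserves holonomy up to conjugation, it preserves the trace of the holonomy, so applying $\tr \circ \hol$ to both sides yields equal functions on $LM$. This matches~\eqref{eq:transgression_after_higher_bundle_metric}, establishing the pointwise equality at each $\gamma$.

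To upgrade pointwise equality to an equality of diffeological line-bundle morphisms, I would check that both sides are given fibrewise by the same formula with respect to the unitary frames used to define the plots of the transgression line bundles, exactly as was done for $\phi_{\CG_0,\CG_1}$ in the discussion of tensor products; well-definedness and diffeological smoothness then follow from the frame-independence already verified in~\eqref{eq:bundle_metric_after_transgression} and the frame-independence built into the definition of $\CT$ on morphisms. Finally I would note that naturality in $(E,\alpha)$ and $(E',\alpha')$ is inherited from the functoriality of $\CT$ together with the naturality of the isomorphisms in Lemma~\ref{st:pairing_and_isomorphisms_for_metrics}.

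The main obstacle I anticipate is bookkeeping rather than conceptual: one must verify that the identification of $\sfR[(E,\alpha),(E',\alpha')]$ with the tensor-product bundle is genuinely compatible with the connections (so that the \emph{holonomies}, not merely the underlying bundles, are intertwined), and that the complex conjugation arising from $\Theta$ in~\eqref{eq:bundle_metric_after_transgression} correctly reproduces the unitary-holonomy/inverse-transpose relationship invoked there. This is the point where unitarity of the frames $(S_i, \beta_i)$ is essential, since it guarantees that $\tr \circ \hol$ of the $\Theta$-dual is the complex conjugate of $\tr \circ \hol$, which is what makes the sesquilinear bundle metric on the target side match the higher bundle metric $[-,-]$ on the source side.
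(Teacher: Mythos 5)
Your proposal is correct and takes essentially the same approach as the paper: the paper's proof is precisely the combination of~\eqref{eq:transgression_after_higher_bundle_metric} and~\eqref{eq:bundle_metric_after_transgression} with Lemma~\ref{st:pairing_and_isomorphisms_for_metrics}, and your instantiation of that lemma over $S^1$ with $(J_i, \nu_i) = (S_i, \beta_i)^{-1}$, together with the observation that the resulting parallel unitary isomorphism preserves $\tr \circ \hol$, is exactly the argument the paper leaves implicit.
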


This statement further justifies our interpretation of the bifunctor $[-,-]$ as a higher version of a bundle metric.

\section{Unparameterised loops, orientation, and Reality}
\label{sect:unparameterised_loops_and_Reality}

So far, we have investigated the transgression line bundle over parameterised loops $\gamma \colon S^1 \to M$.
Different parameterisations of the same loop are related by precomposition by diffeomorphisms of $S^1$.
We denote the group of diffeomorphisms of $S^1$ to itself by $\Diff(S^1) = \Mfd_\sim(S^1, S^1)$ and write $\Diff_0(S^1)$ for its subgroup of orientation-preserving diffeomorphisms and $\Diff_1(S^1)$ for its subset of orientation-reversing diffeomorphisms.
Note that all these sets are canonically endowed with a mapping space diffeology (cf. Example~\ref{eg:diffeological_spaces_and_constructions}).

Let, as before, $\gamma \in LM$, $(\CG, \nabla^\CG) \in \BGrb^\nabla(M)$ and $(S, \beta) \colon \CI_0 \to \gamma^*(\CG, \nabla^\CG)$ be a unitary frame.
Orientation-preserving diffeomorphisms $g \in \Diff_0(S^1)$ act on the line bundle $\CT(\CG, \nabla^\CG)$ via
\begin{equation}
	R^{LM}_g \colon \CT(\CG, \nabla^\CG)_{|\gamma} \to \CT(\CG, \nabla^\CG)_{|\gamma \circ g} \,, \quad
	R^{LM}_g \big[ [S, \beta], z \big] \coloneqq \big[ [g^*(S, \beta)], z \big]\,.
\end{equation}
The group $\Diff(S^1)$ is naturally $\RZ_2$-graded, by defining, for $g \in \Diff(S^1)$, a grading
\begin{equation}
	\sfor(g) =
	\begin{cases}
		0\,, & g \text{ is orientation-preserving}\,,
		\\
		1\,, & g \text{ is orientation-reversing}\,.
	\end{cases}
\end{equation}
For $g \in \Diff(S^1)$, let
\begin{equation}
	R^{LM}_g \colon \CT(\CG, \nabla^\CG)_{|\gamma} \to \CT \big( (\CG, \nabla^\CG) \big)^{1 - 2\, \sfor(g)}_{|\gamma \circ g} \,, \quad
	R^{LM}_g \big[ [S, \beta], z \big] \coloneqq \big[ [g^*\Theta^{\sfor(g)}(S, \beta)], z \big]\,,
\end{equation}
where $\Theta^{\sfor(g)} = \Theta$ for $\sfor(g) = 1$ and otherwise $\Theta^{\sfor(g)}$ is the identity.
In order to see that this is well-defined, consider an orientation-reversing diffeomorphism $g \colon S^1 \to S^1$, and two unitary frames $(S, \beta)$ and $(S',\beta')$ of $\gamma^*(\CG, \nabla^\CG)$.
We have
\addtocounter{equation}{1}
\begin{align*}
	&R^{LM}_g \big[ [S', \beta'],\, \hol \big( \sfR((S', \beta')^{-1} \circ (S, \beta)) \big)\, z \big]
	\\*
	&= \big[ [g^*\Theta(S', \beta')],\, \hol \big( \sfR((S', \beta')^{-1} \circ (S, \beta)) \big)\, z \big] \theeq
	\\
	&= \big[ [g^*\Theta(S, \beta)],\, \hol \big( g^* \big( \Theta \circ \sfR((S, \beta)^{-1} \circ (S', \beta')) \big) \big)\, \hol \big( \sfR((S', \beta')^{-1} \circ (S, \beta)) \big)\, z \big]
	\\
	&= \big[ [g^*\Theta(S, \beta)],\, \hol \big( \sfR((S, \beta)^{-1} \circ (S', \beta')) \big)\, \hol \big( \sfR((S', \beta')^{-1} \circ (S, \beta)) \big)\, z \big]
	\\*
	&= R^{LM}_g \big[ [S, \beta], z \big]\,.
\end{align*}
Here we have used that $\Theta$ acts on ordinary vector bundles by taking the dual vector bundle.
Since $\sfR((S, \beta)^{-1} \circ (S', \beta'))$ is a hermitean line bundle with connection, both taking its dual and pulling back along an orientation-reversing diffeomorphism $S^1 \to S^1$ invert the holonomy.

\begin{proposition}
For any bundle gerbe $(\CG, \nabla^\CG)$ on $M$, the transgression line bundle $\CT(\CG, \nabla^\CG)$ is equivariant with respect to the $\Diff_0(S^1)$-action on $LM$.
Therefore, it descends to a diffeological hermitean line bundle with connection $\CT_0(\CG, \nabla^\CG)$ on the space $\CL_+ M$ of unparameterised oriented loops in $M$.
\end{proposition}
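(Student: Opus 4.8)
The plan is to establish $\Diff_0(S^1)$-equivariance of $\CT(\CG, \nabla^\CG)$ and then invoke the standard fact that an equivariant bundle for a free group action descends to the quotient. First I would verify that the assignment $g \mapsto R^{LM}_g$ defines a genuine right action of the group $\Diff_0(S^1)$ on the total space $\CT(\CG, \nabla^\CG)$, covering the precomposition action $\gamma \mapsto \gamma \circ g$ on $LM$. Concretely, for $g, g' \in \Diff_0(S^1)$ one must check $R^{LM}_{g'} \circ R^{LM}_g = R^{LM}_{g g'}$ (or $R^{LM}_{gg'}$, depending on the convention adopted for $LM = \Mfd(S^1,M)$), which follows since $(g g')^* = g'^* \circ g^*$ on unitary frames and since orientation-preservation means no $\Theta$-twist appears. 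Because $\sfor(g) = 0$ for all $g \in \Diff_0(S^1)$, the general formula for $R^{LM}_g$ reduces to the untwisted version $[[S,\beta],z] \mapsto [[g^*(S,\beta)],z]$, so the well-definedness computation already carried out above for orientation-preserving $g$ applies verbatim.

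Next I would confirm that this action is by morphisms of diffeological hermitean line bundles: each $R^{LM}_g$ is fibrewise $\FC$-linear (the scalar $z$ is untouched), unitary (it preserves the equivalence relation defined via $\hol \circ \sfR$, which takes values in $\sfU(1)$), and diffeological (it is compatible with the pushforward diffeology on $\CT(\CG, \nabla^\CG)$ since it is expressed entirely through the unitary frames used to define the plots in Theorem~\ref{st:transgression_line_bundle}). I would also note that the action is compatible with the diffeological parallel transport $P^{\CT(\CG, \nabla^\CG)}$ from~\eqref{eq:PT_on_transgression_line_bundle}: a path $f \colon [0,1] \to LM$ composes with $g$ to give $f \circ g$ whose associated map $\widetilde{f \circ g}$ differs from $\widetilde{f}$ by a fibrewise reparameterisation, under which the curving integral $\int_{[0,1]\times S^1} \rho$ is invariant because $g$ is orientation-preserving and acts only in the $S^1$-direction. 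Hence the connection descends as well.

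Finally, since $\CL_+ M = LM/\Diff_0(S^1)$ carries the pushforward diffeology along the quotient map (Example~\ref{eg:diffeological_spaces_and_constructions}), I would assemble the descended line bundle $\CT_0(\CG, \nabla^\CG)$ by forming the quotient of the total space by the $\Diff_0(S^1)$-action, equipping it with the pushforward diffeology, and defining its plots, hermitean metric, and parallel transport to be those induced from $\CT(\CG, \nabla^\CG)$. The induced structures are well-defined precisely because of the equivariance verified in the previous steps. I expect the main obstacle to be the careful treatment of the descent in the diffeological category: unlike in the smooth-manifold setting, one must check that the pushforward diffeology on the quotient total space makes the projection a diffeological fibre bundle and that local sections (equivalently, local unitary frames, which exist over contractible opens by Theorem~\ref{st:Classification_of_BGrbs_by_Deligne_coho} and Proposition~\ref{st:DD_triviality_implies_trivialisability}) push down to local trivialisations of $\CT_0(\CG, \nabla^\CG)$. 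This requires verifying that the action of $\Diff_0(S^1)$ is suitably free and proper in the diffeological sense so that the quotient genuinely is a diffeological line bundle rather than a more singular object; the references~\cite{Iglesias-Zemmour--Diffeology,Waldorf--Transgression_I} provide the technology for this step.
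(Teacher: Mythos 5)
Your first two steps (that $g \mapsto R^{LM}_g$ is an action covering precomposition, and that each $R^{LM}_g$ is a fibrewise linear, unitary, diffeological bundle map) are fine and agree with the paper's setup. The gap is in your third step, where you check compatibility of the parallel transport only with a \emph{fixed} $g \in \Diff_0(S^1)$, i.e. with pairs of paths $f$ and $\sigma \mapsto f(\sigma) \circ g$, and then conclude ``hence the connection descends as well''. That conclusion does not follow. Two lifts to $LM$ of one and the same path in $\CL_+ M$ differ by a \emph{time-dependent} family of reparameterisations $g \colon [0,1] \to \Diff_0(S^1)$, $\sigma \mapsto g_\sigma$ (the group is connected, so such families abound), and well-definedness of the descended parallel transport requires the identity $R^{LM}_{g_1} \circ P^{\CT(\CG, \nabla^\CG)}_f = P^{\CT(\CG, \nabla^\CG)}_{R^{LM}_g f} \circ R^{LM}_{g_0}$ for such families, not just for constant ones. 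This is precisely the paper's core computation: one forms the cylinder diffeomorphism $\hat{g}(\sigma,\tau) = (\sigma, g_\sigma(\tau))$ of $C = [0,1] {\times} S^1$, observes $\widetilde{f \circ g} = \widetilde{f} \circ \hat{g}$ so that $\hat{g}^*(S,\beta)$ is a unitary frame over the reparameterised path, and uses that $\int_C \hat{g}^*\rho = \int_C \rho$ by invariance of integrals under orientation-preserving diffeomorphisms of the two-dimensional cylinder --- a strictly stronger input than invariance under the fibrewise action of a single $g$.

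The distinction is not pedantic: invariance of a connection under a connected group action does not imply that it descends to the quotient; the connection must in addition be basic, i.e. trivial along the orbit directions. For instance, the trivial line bundle over $\FR^2$ with connection $1$-form $\lambda\, \dd y$, $\lambda \neq 0$, is equivariant with invariant connection under translations in $y$, and its parallel transport commutes with each fixed translation, yet nothing descends to $\FR^2/\FR \cong \FR$: transport along a path moving purely in the orbit direction is nontrivial, so the would-be transport downstairs depends on the choice of lift. The time-dependent computation is exactly what excludes this failure mode for $\CT(\CG, \nabla^\CG)$, and it is the one ingredient your proposal is missing. A secondary remark: the ``free and proper'' route you invoke at the end is also problematic as stated, since the $\Diff_0(S^1)$-action on $LM$ is not free (constant loops and multiply covered loops have large stabilisers); the descent has to be carried out via the explicit frame-and-holonomy description of the fibres, for which the displayed parallel-transport identity is the substantive point.
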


\begin{proof}
We have to check the compatibility of the parallel transport on $\CT(\CG, \nabla^\CG)$ with the $\Diff_0(S^1)$-action.
Let $f \colon [0,1] \to LM$ be a diffeological path in $LM$ and $g \colon [0,1] \to \Diff_0(S^1)$, $\sigma \mapsto g_\sigma$.
Let $(S, \beta) \colon \CI_\rho \to \widetilde{f}^*(\CG, \nabla^\CG)$ be a unitary frame of $(\CG, \nabla^\CG)$ over the path $f$, with $\tilde{f}$ as in~\eqref{eq:mapping_space_comp}.
We compute
\begin{equation}
\begin{aligned}
	R^{LM}_{g_1} \circ P^{\CT(\CG, \nabla^\CG)}_f\, \big[ [\iota_0^*(S, \beta)], z \big]
	&= R^{LM}_{g_1} \big[ [\iota_1^*(S, \beta)],\, \exp\big( \textint_{[0,1] {\times} S^1} \rho \big)\, z \big]
	\\
	&= \big[ [g_1^*\iota_1^*(S, \beta)],\, \exp\big( \textint_{[0,1] {\times} S^1} \rho \big)\, z \big]\,.
\end{aligned}
\end{equation}
Note that $\widetilde{f \circ g}$ is the composition
\begin{equation}
\begin{tikzcd}[column sep=1.5cm]
	[0,1] {\times} S^1 \ar[r, "(f {\times} g) {\times} 1_{S^1}"] & LM {\times} \Diff_0(S^1) {\times} S^1 \ar[r, " (- \circ -) {\times} 1_{S^1}"] & LM {\times} S^1 \ar[r, "\ev"] & M\,.
\end{tikzcd}
\end{equation}
Writing $\hat{g} (\sigma, \tau) = (\sigma, g_\sigma(\tau))$, we obtain a diffeomorphism $\hat{g} \colon C \to C$, where $C \coloneqq [0,1] {\times} S^1$ is the cylinder.
We have $(\widetilde{f \circ g}) (\sigma, \tau) = f(\sigma)(g_\sigma(\tau)) = \widetilde{f} \circ \hat{g} (\sigma, \tau)$.
Consequently, $\hat{g}^*(S, \beta)$ is a unitary frame of $(\widetilde{f \circ g})^* (\CG, \nabla^\CG)$.
Observe, moreover, that
\begin{equation}
	\iota_\sigma \circ g_\sigma (\tau) = (\sigma, g_\sigma(\tau)) = \hat{g}(\sigma, \tau) = \hat{g} \circ \iota_\sigma (\tau)\,.
\end{equation}
Using this, we compute
\addtocounter{equation}{1}
\begin{align*}
	P^{\CT(\CG, \nabla^\CG)}_{R^{LM}_g f} \circ R^{LM}_{g_0}\, \big[ [\iota_0^*(S, \beta)], z \big]
	&= P^{\CT(\CG, \nabla^\CG)}_{R^{LM}_g f} \big[ [g_0^* \iota_0^*(S, \beta)],\, z \big]
	\\
	&= P^{\CT(\CG, \nabla^\CG)}_{R^{LM}_g f} \big[ [\iota_0^* \hat{g}^*(S, \beta)],\, z \big] \theeq
	\\
	&= \big[ [\iota_1^* \hat{g}^*(S, \beta)],\, \exp \big( \textint_{[0,1] {\times} S^1} \hat{g}^*\rho \big)\,  z \big]
	\\
	&= \big[ [g_1^*\iota_1^*(S, \beta)],\, \exp \big( \textint_{[0,1] {\times} S^1} \rho \big)\,  z \big]\,.
\end{align*}
In the last step we have used that integrals of differential forms are invariant under pullback along orientation-preserving diffeomorphisms.
\end{proof}

Via the action on loops in $M$, any choice of orientation-reversing diffeomorphism on $S^1$ induces an involution $\rev \colon \CL_+ M \to \CL_+ M$.
For instance, under the diffeomorphism $S^1 \cong \sfU(1)$, we can consider $\inv \in \Diff_1(S^1)$, $\inv(\lambda) = \lambda^{-1}$ for $\lambda \in \sfU(1)$.
If $[[S, \beta], z] \in \CT(\CG, \nabla^\CG)_{|\gamma}$, then, for $g \in \Diff_0(S^1)$,
\addtocounter{equation}{1}
\begin{align*}
	R^{LM}_\inv \circ R^{LM}_g\, \big[ [S, \beta], z \big]
	&= \big[ [\inv^* \Theta (g^* (S, \beta))], z \big]
	\\*
	&= \big[ [(g \circ \inv)^* \Theta (S, \beta)], z \big] \theeq
	\\
	&= \big[ [(\inv \circ g \circ \inv)^* \inv^* \Theta (S, \beta)], z \big]
	\\*
	&= R^{LM}_{(\inv \circ g \circ \inv)} \circ R^{LM}_\inv\, \big[ [S, \beta], z \big]\,.
\end{align*}
This implies that $R^{LM}_\inv \circ R^{LM}_g\, [ [S, \beta], z ]$ and $R^{LM}_\inv\, [ [S, \beta], z ]$ define the same element in the descended bundle $\CT_0(\CG, \nabla^\CG)$ on $\CL_+ M$.
Consequently, the involution $\rev = R^{LM}_\inv$ has a lift $\widehat{\rev}_\CG$ to the line bundle with connection $\CT_0(\CG, \nabla^\CG)$ to yield a commutative diagram
\begin{equation}
\begin{tikzcd}
	\CT_0(\CG, \nabla^\CG) \ar[r, "\widehat{\rev}_\CG"] \ar[d] & \CT_0(\CG, \nabla^\CG)^* \ar[d]
	\\
	\CL_+ M \ar[r, "\rev"] & \CL_+ M
\end{tikzcd}
\end{equation}
Such a lift to a vector bundle of an involution on the base space is called a~\emph{Real structure}.
Thus, we have proven

\begin{theorem}
\label{st:Real_structure_on_T_0CG}
The line bundle $\CT_0(\CG, \nabla^\CG) \to \CL_+ M$ on the space of unparameterised oriented loops in $M$ has a Real structure $\widehat{\rev}_\CG$ with respect to the $\RZ_2$-action induced by orientation-reversal on $S^1$.
\end{theorem}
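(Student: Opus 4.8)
The plan is to exhibit $\widehat{\rev}_\CG$ as the descent to $\CL_+ M$ of the map $R^{LM}_\inv$ constructed above, and then to verify that it has all the properties of a Real structure. First I would recall that $\CT_0(\CG, \nabla^\CG)$ is by construction the quotient of $\CT(\CG, \nabla^\CG)$ by the $\Diff_0(S^1)$-action $R^{LM}_{(-)}$, so that an element over an unparameterised oriented loop is a $\Diff_0(S^1)$-orbit of elements $\big[[S,\beta],z\big]$. I would then \emph{define} $\widehat{\rev}_\CG$ on such an orbit by applying $R^{LM}_\inv$ to any representative and returning to the quotient. The two points to check for well-definedness are independence of the chosen unitary frame $(S,\beta)$ and independence of the chosen parameterisation.

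Independence of the frame is exactly the first displayed computation preceding the statement, which shows that $R^{LM}_\inv$ is already well-defined on each fibre $\CT(\CG,\nabla^\CG)_{|\gamma}$; here one uses that the holonomy inversions induced by the Riesz dual $\Theta$ and by pullback along the orientation-reversing $\inv$ compose to the identity, while the two comparison bundles $\sfR((S,\beta)^{-1}\circ(S',\beta'))$ and $\sfR((S',\beta')^{-1}\circ(S,\beta))$ have mutually inverse holonomies, so that the correction factor collapses to $1$. Independence of the parameterisation is the content of the second displayed computation, $R^{LM}_\inv \circ R^{LM}_g = R^{LM}_{(\inv\circ g\circ\inv)}\circ R^{LM}_\inv$, combined with the fact that conjugation by $\inv$ preserves orientation, so that $\inv\circ g\circ\inv \in \Diff_0(S^1)$ whenever $g\in\Diff_0(S^1)$. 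Hence the images of two $\Diff_0(S^1)$-related representatives again lie in a single $\Diff_0(S^1)$-orbit and define the same element of $\CT_0(\CG,\nabla^\CG)$. This simultaneously shows that $\widehat{\rev}_\CG$ covers the involution $\rev$ on $\CL_+ M$, since $R^{LM}_\inv$ covers $\gamma\mapsto\gamma\circ\inv$.

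Next I would record that $\widehat{\rev}_\CG$ takes values in the dual bundle: since $\sfor(\inv)=1$, the map $R^{LM}_\inv$ lands in $\CT(\CG,\nabla^\CG)^{1-2\,\sfor(\inv)} = \CT(\CG,\nabla^\CG)^*$, which descends to $\CT_0(\CG,\nabla^\CG)^*$. To see that $\widehat{\rev}_\CG$ is an involution I would compute $R^{LM}_\inv\circ R^{LM}_\inv$ on a representative, using that $\Theta$ commutes with pullback, that $\inv^2 = 1_{S^1}$, and that $\Theta^2$ is naturally isomorphic to the identity by the functoriality of the Riesz dual (Theorem~\ref{st:Riesz_dual_is_functorial}); hence $\widehat{\rev}_\CG^2\cong\mathrm{id}$ under the canonical identification with the double dual. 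Finally, compatibility with the connection follows from the parallel transport formula~\eqref{eq:PT_on_transgression_line_bundle}: conjugating a path by $\inv$ reverses the orientation of the cylinder $[0,1]\times S^1$ and hence the sign of the curving integral $\textint_{[0,1]\times S^1}\rho$, and this sign change is exactly compensated by the dualisation built into $\Theta$, so that $\widehat{\rev}_\CG$ intertwines the parallel transport on $\CT_0(\CG,\nabla^\CG)$ with that on its dual, assembling into the asserted commutative square.

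I expect the main obstacle to be the bookkeeping of the three independent conventions at play — the inversion of holonomy under the Riesz dual $\Theta$, the sign reversal of the curving integral under pullback along an orientation-reversing diffeomorphism, and the reversal of direction of the comparison bundles $\sfR((S',\beta')^{-1}\circ(S,\beta))$ — and checking that they combine so that $\widehat{\rev}_\CG$ is a genuine conjugate-linear lift valued in the dual, rather than a merely set-theoretic one. Everything else reduces to the equivariance identities already established immediately before the statement.
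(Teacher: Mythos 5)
Your proposal is correct and follows essentially the same route as the paper: the paper's proof consists precisely of the pre-theorem computation $R^{LM}_\inv \circ R^{LM}_g = R^{LM}_{(\inv\circ g\circ\inv)}\circ R^{LM}_\inv$ together with the observation that conjugation by $\inv$ preserves $\Diff_0(S^1)$, so that $R^{LM}_\inv$ maps orbits to orbits and descends to the lift $\widehat{\rev}_\CG$ of $\rev$ valued in the dual bundle. Your additional verifications (involutivity via $\Theta^2\cong\mathrm{id}$ and compatibility with the parallel transport) go beyond what the paper records explicitly, but they are elaborations of the same argument rather than a different approach.
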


\begin{remark}
This Real structure has applications in the definition of the so-called unoriented surface holonomy of a bundle gerbe~\cite{SSW--Unoriented_WZW}, or the square root of the holonomy of a bundle gerbe~\cite{Gawedzki--BGrbs_for_TIs,Gawedzki--FKM_and_sewing_matrix} and is, further, interesting from the point of view of topological field theory.
Here it makes manifest the property of an oriented $(2,1)$-dimensional topological field theory to assign to an unparameterised loop with reversed orientation the dual of the vector, or Hilbert, space that it assigns to the original unparameterised loop.
\qen
\end{remark}

\section{Dimensional reduction}
\label{sect:dimensional_reduction}

The line bundle over loop space associated to a bundle gerbe has several important applications.
One of them, which has not yet been studied in the literature, is dimensional reduction (not to be confused with regression~\cite{Waldorf--Transgression_II}).
In view of higher geometric structures in general, not only as elaborated in Chapter~\ref{ch:2Hspaces_from_bundle_gerbes}, dimensional reduction is a powerful tool to relate higher geometric structures to ones which are already known and familiar.
In this way, one can test not only proposals for higher geometric structures, but also for formalisms of higher geometric quantisation.
Using reduction, structures in higher geometric quantisation can be related to their analogues in ordinary geometric quantisation.
From a string theory point of view, dimensional reduction is relevant to compactifications as well as to obtaining (super) string theory limits of M-theory.
Another, related, application of dimensional reduction might be to T-duality~\cite{Bunke-Nikolaus--T-duality_via_gerbey_geometry_and_reductions}.

\subsection{The reduction functor}

We begin by constructing the reduction functor, before working out an explicit example.
Consider a fibre bundle $K \to M$ with fibre $F \in \Mfd$.
We call $K \to M$ \emph{fibre-orientable} if there exists an open covering $\CU = \bigsqcup_{a \in \Lambda} U_a$ of $M$, for some indexing set $\Lambda$, and local trivialisations $\phi_a \colon K_{|U_a} \to U_a {\times} F$ such that the resulting transition functions $\phi_{ab} = \phi_{b|U_{ab}} \circ \phi_{a|U_{ab}}^{-1}$ are of the form $\phi_{ab} \colon U_{ab} \to \Diff_0(F)$, i.e. they consist of smooth families of orientation-preserving diffeomorphisms on $F$.
A choice of such a trivialisation and an orientation on the typical fibre $F$ induces an orientation on each fibre of the $F$-bundle $K$, and all of these orientations are consistent under the transition functions.
Such a family of consistent orientations of the fibres is called a \emph{fibre orientation}.
Note that there are two possible induced fibre-orientations on a fibre-orientable bundle.
A fibre-orientable bundle with a choice of orientation is called a \emph{fibre-oriented bundle}.

\begin{example}
For $\sfG$ a connected Lie group, every principal $\sfG$-bundle is fibre-orientable:
all local transition functions are of the form $\phi_{ab}(x)(\lambda) = \tilde{\lambda}_{ab}(x)\, \lambda$ for some function $\tilde{\lambda} \colon U_{ab} \to \sfG$.
In any Lie group $\sfG$, multiplication by elements from the connected component of the identity necessarily preserves the orientation, as the map $\sfG \to \Diff(\sfG)$, $g \mapsto \ell_g$ is continuous, where $\ell_g \in \Diff(\sfG)$ denotes the diffeomorphism $\ell_g(h) = g\, h$ for $g,h \in \sfG$.
Moreover, a choice of orientation on $\sfG$ naturally induces an orientation on the fibres via the simply transitive action of $\sfG$ on the fibres.
\qen
\end{example}

Any fibre $K_{|x} \subset K$ of a fibre-oriented $S^1$-bundle $K \to M$ can be viewed as an unparameterised oriented loop in $M$.
In order to obtain a parameterised representative, one can use a unitary frame (which is compatible with the orientation).
Therefore, any oriented $S^1$-bundle $K \to M$ has associated to it a canonical map
\begin{equation}
	\widehat{K} \colon M \to \CL_+ K\,,
\end{equation}
that sends $x \in M$ to the oriented fibre $K_{|x}$.
From our findings in Section~\ref{sect:transgression_functor} and Section~\ref{sect:unparameterised_loops_and_Reality}, we infer the following statement.

\begin{theorem}
\label{st:reduction_functor}
For any fibre-oriented $S^1$-bundle $q \colon K \to M$, there exists a functor
\begin{equation}
	q_*  \colon h_1 \big( \BGrb^\nabla(K), \otimes\, \big) \to \big( \HLBdl^\nabla(M), \otimes\, \big)\,, \quad
	q_* = \widehat{K}^* \circ \CT_0\,.
\end{equation}
It is symmetric monoidal and additive, and maps the higher bundle metric $[-,-]$ to the bundle metric of the target hermitean line bundle.
Moreover, we have
\begin{equation}
	\curv \big( q_* (\CG, \nabla^\CG) \big) = q_*\, \curv (\nabla^\CG)\,,
\end{equation}
where $q_*$ is the pushforward of differential forms, i.e. integration along the fibre.
\end{theorem}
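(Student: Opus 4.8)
The plan is to assemble $q_*$ out of the two functors from which it is built and to inherit almost all of the asserted properties, leaving only the curvature identity as a genuine computation. First I would verify that $q_* = \widehat{K}^* \circ \CT_0$ is a well-defined functor. By the proposition preceding Theorem~\ref{st:Real_structure_on_T_0CG}, the transgression functor $\CT$ descends along the $\Diff_0(S^1)$-action to a functor $\CT_0 \colon h_1(\BGrb^\nabla(K), \otimes) \to (\HLBdl^\nabla(\CL_+ K), \otimes)$ valued in diffeological hermitean line bundles with connection on $\CL_+ K$. The map $\widehat{K} \colon M \to \CL_+ K$, $x \mapsto K_{|x}$, is diffeologically smooth, and since its source $M$ is an ordinary manifold, the pullback $\widehat{K}^*$ of a diffeological line bundle with connection along $\widehat{K}$ yields a genuine smooth hermitean line bundle with connection on $M$. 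Hence $\widehat{K}^* \colon \HLBdl^\nabla(\CL_+ K) \to \HLBdl^\nabla(M)$ is a functor and so is the composite $q_*$.

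Next I would treat the three structural claims — symmetric monoidal, additive, and compatibility with the bundle metric — by observing that each holds for $\CT$ (hence for $\CT_0$) and is preserved by pullback. The natural isomorphism $\phi_{-,-}$ makes $\CT$ symmetric monoidal, the additivity proposition makes it additive, and the proposition on the higher bundle metric gives $\CT \circ [-,-] = h_{\CT\CG_1 \otimes \CT\CG_0^*} \circ (\CT {\times} \CT)$; all three structural isomorphisms are $\Diff_0(S^1)$-equivariant and therefore descend to $\CT_0$. Pullback functors are strong symmetric monoidal, additive, and commute with the formation of bundle metrics and duals, so these properties pass to $q_* = \widehat{K}^* \circ \CT_0$. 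In particular $q_* \circ [-,-] \cong h_{q_*\CG_1 \otimes q_*\CG_0^*} \circ (q_* {\times} q_*)$, as claimed.

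The only substantial step is the curvature formula. Using \eqref{eq:curvature_of_transgerssion_line_bundle} we have $\curv(\CT(\CG, \nabla^\CG)) = \CT_{S^1}(\curv(\nabla^\CG))$, which descends to the curvature of $\CT_0(\CG, \nabla^\CG)$ on $\CL_+ K$; pulling back along $\widehat{K}$ gives $\curv(q_*(\CG, \nabla^\CG)) = \widehat{K}^* \CT_{S^1}(\curv(\nabla^\CG))$. It therefore remains to prove that $\widehat{K}^* \circ \CT_{S^1} = q_*$ on forms, where the right-hand $q_*$ is integration along the fibre. I would argue locally: choose an open cover $\CU = \bigsqcup_a U_a$ of $M$ trivialising $K$ by fibrewise orientation-preserving trivialisations, so that each fibre acquires a parameterisation and the composite $e_a \colon U_a {\times} S^1 \to K_{|U_a}$, $(x,\tau) \mapsto \widehat{K}(x)(\tau)$, is a fibrewise orientation-preserving diffeomorphism onto $K_{|U_a}$. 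Then, by Definition~\ref{def:transgression_of_differential_forms},
\begin{equation}
	\big( \widehat{K}^* \CT_{S^1}(H) \big)_{|U_a}
	= \int_{S^1} e_a^* H
	= \big( q_* H \big)_{|U_a}\,,
\end{equation}
where $H = \curv(\nabla^\CG)$ and the middle expression is precisely the fibre integral computed in the chosen parameterisation. Different local parameterisations differ by smooth families of orientation-preserving diffeomorphisms of $S^1$, under which $\CT_{S^1}(H)$ is invariant — this invariance is exactly what made $\CT_{S^1}(H)$ descend to $\CL_+ K$ — so the local forms agree on overlaps and glue to the global fibre integral $q_* H$.

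The main obstacle I anticipate lies in the diffeological bookkeeping of this last step: making precise that $\widehat{K}$ is plot-preserving, that $e_a$ realises the pulled-back transgression form as an honest fibre integral, and that the orientation data fixed by the fibre-orientation matches the $\Diff_0(S^1)$-equivariance used to form $\CT_0$, so that the local integrals patch consistently. Once these identifications are in place, the remaining verifications are the routine functoriality and naturality checks described above.
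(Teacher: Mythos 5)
Your proposal is correct and takes essentially the same route as the paper: the paper's proof likewise notes that functoriality is immediate from the composition $\widehat{K}^* \circ \CT_0$, that the monoidal, additive, and metric compatibilities follow from the properties of $\CT$ established in Sections~\ref{sect:transgression_functor}--\ref{sect:unparameterised_loops_and_Reality}, and that the curvature identity follows from~\eqref{eq:curvature_of_transgerssion_line_bundle} together with Definition~\ref{def:transgression_of_differential_forms}. The only difference is that you spell out what the paper leaves implicit — the local trivialisation argument identifying $\widehat{K}^*\CT_{S^1}$ with fibre integration and the smoothness of the pulled-back diffeological bundle — which is a welcome filling-in of detail rather than a different method.
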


\begin{proof}
The functoriality of $q_*$ is immediate from its definition since it is a composition of two functors.
Its additional properties simply follow from the properties of the transgression functor $\CT$ which we have worked out in Sections~\ref{sect:transgression_functor}, \ref{sect:transgression_of_additional_structures} and~\ref{sect:unparameterised_loops_and_Reality}.
The curvature identity follows from~\eqref{eq:curvature_of_transgerssion_line_bundle} and Definition~\ref{def:transgression_of_differential_forms}.
\end{proof}

\begin{definition}[Dimensional reduction]
\label{def:reduction_functor}
Let $K \to M$ be a fibre-oriented $S^1$-bundle on $M$.
The functor $q_* \colon h_1 ( \BGrb^\nabla(K), \otimes\, ) \to ( \HLBdl^\nabla(M), \otimes\, )$ is called \emph{dimensional reduction}, or \emph{pushforward} along $q$.
\end{definition}

\subsection{Example: Revisiting the decomposable bundle gerbe}

Even though the reduction functor introduced in Theorem~\ref{st:reduction_functor} is constructed using infinite-dimensional spaces and may, at first sight, appear somewhat unwieldy, we demonstrate in this section that it is, in fact, well-behaved and practicable.

The example of a torsion bundle gerbe in Section~\ref{sect:Ex:Cup_product_lens_space_BGrbs} is of a form which allows for a dimensional reduction.
Recall that there we defined a bundle gerbe $(\CG, \nabla^\CG)$ on the manifold $M_p \coloneqq \bbL_p {\times} S^1$, where $\bbL_p = S^3/ \RZ_p$ is a lens space.
The defining diagram for the bundle gerbe was~\eqref{eq:cup_product_BGrb_diagram}.
There are two possible $S^1$-fibrations to consider here, namely the projection $M_p \to \bbL_p$ and $M_p \to S^2 {\times} S^1$ induced by the fibration $\bbL_p \to S^2$.
Both are principal fibre bundles and, hence, oriented once an orientation on $\sfU(1)$ has been fixed.
Here we focus on the projection $q = \pr_{\bbL_p} \colon M_p \to \bbL_p$.
In this section we are going to prove

\begin{proposition}
\label{st:reduction_of_cup_product_BGrb}
For the decomposable bundle gerbe $(\CG, \nabla^\CG) \in \BGrb^\nabla(M_p)$ introduced in Section~\ref{sect:Ex:Cup_product_lens_space_BGrbs}, with $q \colon M_p = \bbL_p {\times} S^1 \to \bbL_p$ the projection onto the lens space, there exists a canonical isomorphism of hermitean line bundles with connections
\begin{equation}
	q_*(\CG, \nabla^\CG) \cong (J, \nabla^J)\,.
\end{equation}
\end{proposition}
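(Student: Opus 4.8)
The plan is to compute the transgression line bundle fibrewise over $\bbL_p$ and identify it with $(J,\nabla^J)$, matching first the underlying bundle and then the connection. By Definition~\ref{def:reduction_functor} we have $q_* = \widehat{K}{}^*\circ\CT_0$, where $\widehat{K}\colon\bbL_p\to\CL_+M_p$ sends $x$ to the oriented fibre $q^{-1}(x)=\{x\}\times S^1$. I would represent this oriented loop by the parameterised loop $\gamma_x\colon S^1\to M_p$, $\tau\mapsto(x,\tau)$, and first record the formal curvature check: by Theorem~\ref{st:reduction_functor}, together with the data of Section~\ref{sect:Ex:Cup_product_lens_space_BGrbs} (where $\curv(\nabla^\CG)=\curv(\nabla^J)\wedge\dd r$) and the fibre integration $\int_{S^1}\dd r=1$, one has $\curv(q_*(\CG,\nabla^\CG))=q_*\curv(\nabla^\CG)=q_*(\curv(\nabla^J)\wedge\dd r)=\curv(\nabla^J)$. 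Hence $q_*(\CG,\nabla^\CG)$ and $(J,\nabla^J)$ agree in curvature, so they differ by a flat hermitean line bundle, and it remains to produce an explicit isomorphism and to verify that it preserves connections.

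For the underlying bundle I would pull the defining data \eqref{eq:cup_product_BGrb_diagram} back along $\gamma_x$. The covering of $\gamma_x^*(\CG,\nabla^\CG)$ is $\pi_{S^1}\colon\FR\to S^1$, its line bundle over $\FR^{[2]}\cong\FR\times\RZ$ is $(r,n)\mapsto(J_{|x})^{\otimes n}$ with flat connection (since $\gamma_x^*\curv(\nabla^J)=0$), and its curving vanishes. A unitary frame $(S,\beta)\colon\CI_0\to\gamma_x^*(\CG,\nabla^\CG)$ over the minimal submersion is then a flat hermitean line bundle $S\to\FR$ with a unitary parallel isomorphism $\beta_{|(r,n)}\colon S_{|r+n}\to S_{|r}\otimes(J_{|x})^{\otimes n}$ compatible with $\mu$. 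I would exhibit one explicitly: choosing a unit vector $u\in J_{|x}$, set $S_u=\FR\times\FC$ with trivial connection and let $\beta_u$ be the identity under the trivialisation $(J_{|x})^{\otimes n}\cong\FC$, $u^{\otimes n}\mapsto 1$. Replacing $u$ by $e^{\iu\theta}u$ changes this frame by a flat line bundle on $S^1$ whose holonomy is a fixed power of $e^{\iu\theta}$, so the assignment $u\mapsto[(S_u,\beta_u)]$ is a $\sfU(1)$-equivariant identification of the unit circle of $J_{|x}$ with $\pi_0(\frUF(S^1)(\gamma_x^*(\CG,\nabla^\CG)))$. Passing to the associated line and using the $\FC$-factor in the definition of $\CT(\CG,\nabla^\CG)_{|\gamma_x}$ yields a canonical isomorphism $\CT(\CG,\nabla^\CG)_{|\gamma_x}\cong(J,\nabla^J)_{|x}$; since $u$ may be varied in local unitary sections of $J$, these assemble into a smooth (hence diffeological) isomorphism, which is moreover orientation-compatible, so it descends to $\CL_+M_p$ and pulls back along $\widehat{K}$ to an isomorphism $q_*(\CG,\nabla^\CG)\cong J$ of hermitean line bundles.

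Having identified the two bundles, preservation of connections reduces to a holonomy comparison: since the curvatures already agree, the two connections on $J$ differ by a closed imaginary $1$-form, and the isomorphism is parallel precisely when the holonomies around a generator of $\pi_1(\bbL_p)\cong\RZ_p$ coincide. This is the step I expect to be the main obstacle. I would carry it out using the explicit parallel transport \eqref{eq:PT_on_transgression_line_bundle}: for a loop $\delta\colon[0,1]\to\bbL_p$ generating $\pi_1$, the associated path of loops has uncurried map $\widetilde{f}=\delta\times 1_{S^1}\colon[0,1]\times S^1\to M_p$ (cf.\ \eqref{eq:mapping_space_comp}), and a unitary frame $(S,\beta)\colon\CI_\rho\to\widetilde{f}^{\,*}(\CG,\nabla^\CG)$ exists because the cylinder is $2$-dimensional. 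Then $P^{\CT(\CG,\nabla^\CG)}$ multiplies by $\exp(\int_{[0,1]\times S^1}\rho)$, and I would compute $\rho$ from the frame condition $\curv(\nabla^S)=\xi_Y^*B-\xi_M^*\rho$ using the curving $B_{|(x,r)}=2\pi r\,\iu\,\omega_{|x}$ of Section~\ref{sect:Ex:Cup_product_lens_space_BGrbs}, thereby matching $\exp(\int_{[0,1]\times S^1}\rho)$ with $\hol(\delta^*(J,\nabla^J))$.

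The delicate points in this last step are the bookkeeping of the $\FR$-coordinate $r$ against the covering $\pi_{S^1}$ when restricting $B$ to the cylinder, and the sign and normalisation conventions fixed for $\dd r$; modulo these, the two holonomies are both the $p$-th root of unity determined by the torsion class $c_1(J)$, which completes the proof.
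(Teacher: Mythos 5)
Your identification of the underlying hermitean line bundles is sound, and it is essentially the paper's construction run in reverse: where you build a frame $(S_u,\beta_u)$ from a unit vector $u\in J_{|x}$, the paper extracts a unit vector from a frame via a trace-like map $j_x(S,\beta)=\tr_{\rmr,J}\big(\beta_{|(r,1)}\circ P^S_{r,1}\big)$, and one checks directly that $j_x(S_u,\beta_u)=u$, so the two maps are mutually inverse identifications of $\sfU(1)$-torsors. (You should still verify that your assignment $u\mapsto[(S_u,\beta_u)]$ is equivariant rather than anti-equivariant under $\sfU(1)$ — with the wrong sign convention it identifies $q_*(\CG,\nabla^\CG)$ with $J^*$ rather than $J$ — but that is a finite check.)

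The genuine gap is in your last step. The claim that ``the isomorphism is parallel precisely when the holonomies around a generator of $\pi_1(\bbL_p)\cong\RZ_p$ coincide'' is false. Writing $\Phi$ for your bundle isomorphism and $a=\Phi^*\nabla^J-\nabla^{q_*(\CG,\nabla^\CG)}$, equality of curvatures makes $a$ a closed imaginary $1$-form, and $\Phi$ is parallel iff $a=0$ identically; equality of holonomies only controls the periods of $a$ modulo $2\pi\,\iu\,\RZ$. Worse, on $\bbL_p$ this test is vacuous: $\rmH_\dR^1(\bbL_p,\FR)=0$ because $\rmH_1(\bbL_p,\RZ)\cong\RZ_p$ is torsion, so once the underlying bundles are identified and the curvatures agree, $a$ is automatically exact and all holonomies agree regardless of whether $a=0$. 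Your step can therefore at best establish the existence of some parallel isomorphism of the form $e^h\,\Phi$ for an unknown function $h$, which forfeits canonicity and never verifies the map you actually constructed. (In addition, your reading of~\eqref{eq:PT_on_transgression_line_bundle} for a closed loop omits a term: the transport is expressed relative to the endpoint frames, so extracting the holonomy requires comparing $[\iota_0^*(S,\beta)]$ and $[\iota_1^*(S,\beta)]$ in the fibre torsor, not just the factor $\exp(\int\rho)$.) What is needed — and what the paper does — is the pathwise intertwining $P^J_\gamma\circ j_{\gamma(0)}=j_{\gamma(1)}\circ P^{q_*(\CG,\nabla^\CG)}_\gamma$ for \emph{every} path $\gamma$ in $\bbL_p$: one computes the holonomy of the frame's bundle $(S,\nabla^S)$ around rectangles in the cylinder $[0,1]\times S^1$ and uses the frame condition $\curv(\nabla^S)=\hat{\gamma}^*B-\pi^*\rho$ together with $B_{|(x,r)}=r\,\curv(\nabla^J)_{|x}$ to show that the two parallel transports agree on the nose.
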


In order to construct this isomorphism, let $x \in \bbL_p$, and set $\iota_x \colon S^1 \hookrightarrow M_p$, $\tau \mapsto (x, \tau)$.
We investigate the fibre of $q_* (\CG, \nabla^\CG)$ over $x$.
This pushforward line bundle is associated to the $\sfU(1)$-bundle on $\bbL_p$ whose fibre over $x \in \bbL_p$ is $\pi_0(\frUF(\iota_x^*(\CG, \nabla^\CG)))$.
Thus, we have to consider unitary frames of $\iota_x^*(\CG, \nabla^\CG) \in \BGrb^\nabla(S^1)$.
Restricting diagram~\eqref{eq:cup_product_BGrb_diagram} to $x \in \bbL_p$, we are left with a bundle gerbe over $S^1$ of the form
\begin{equation}
\begin{tikzcd}
	(J, \nabla^J)^{\otimes \RZ}_{|x} \ar[d] & 
	\\
	\FR {\times} \RZ \ar[r, shift left=0.1cm] \ar[r, shift left=-0.1cm] & \FR \ar[d, "\pi_x"]
	\\
	 & S^1
\end{tikzcd}
\end{equation}
Let $(S, \beta)$ be a unitary frame of $\iota_x^*(\CG, \nabla^\CG)$.
We may assume that this is defined over the minimal surjective submersion, i.e. that the underlying hermitean line bundle $(S, \nabla^S)$ with connection is defined over $\FR$:
\begin{equation}
\begin{tikzcd}
	(J, \nabla^J)^{\otimes \RZ}_{|x} \ar[d] & S \ar[d]
	\\
	\FR {\times} \RZ \ar[r, shift left=0.1cm] \ar[r, shift left=-0.1cm] & \FR \ar[d, "\pi_x"]
	\\
	 & S^1
\end{tikzcd}
\end{equation}
The isomorphism $\beta$ acts as
\begin{equation}
	\beta_{|(r,n)} \colon S_{|r+n} \to S_{|r} \otimes J^{\otimes n}_{|x}
\end{equation}
and is compatible with the bundle gerbe multiplication.
For $r,s \in \FR$, let
\begin{equation}
	P^S_{r, s} = \colon S_{|r} \to S_{|r+s}
\end{equation}
denote the parallel transport in $(S, \nabla^S)$ along the path $\gamma \colon [0,1] \to \FR$, $\sigma \mapsto r + s\, \sigma$.
We can then define a unitary isomorphism
\begin{equation}
	h^{(S, \beta)}_{|(r,n)} \coloneqq \beta_{|(r,n)} \circ \big( P^S_{r,n} \otimes 1_{J^{\otimes n}} \big) \colon S_{|r} \to S_{|r} \otimes J^{\otimes n}_{|x}\,.
\end{equation}
Let $V,W \in \Hilb$, with $W$ one-dimensional.
Denote the inner product on any Hilbert space $V$ by $\<-,-\>_V$.
Choose an orthonormal basis $(e_i)_{i = 1, \ldots, n}$ of $V$ and a unit-length vector $w \in W$ with $w^\vee \in W^*$ denoting its dual.
There exist trace-like maps
\begin{equation}
\label{eq:trace-like_morphisms}
\begin{aligned}
	&\tr_{\rml,W} \colon \Hilb \big( W \otimes V, V \big) \to W^*\,, \quad
	\psi \mapsto \sum_{i = 1}^n\, \< e_i, \psi(w \otimes e_i) \>_V\, w^\vee\,,
	\\
	&\tr_{\rmr,W} \colon \Hilb \big( V, V \otimes W \big) \to W\,, \quad
	\psi \mapsto \sum_{i = 1}^n\, \< e_i \otimes w, \psi(e_i) \>_V\, w\,,
\end{aligned}
\end{equation}
Like the ordinary trace, the maps in~\eqref{eq:trace-like_morphisms} are defined independently of the choices of bases.
Hence, given a unitary frame $(S, \beta)$ of $\iota_x^*(\CG, \nabla^\CG)$, we obtain a canonical element
\begin{equation}
\label{eq:map_j_for_cup_prod_pushforward}
	j_x(S, \beta) \coloneqq \tr_{\rmr,J} \Big( h^{(S, \beta)}_{|(r,1)} \Big) \in J_{|x}\,.
\end{equation}
Decomposing the parallel transport, one can check that this is defined independently of the choice of $r \in \FR$.
In fact, because of the unitarity of $\beta$ and the compatibility of $\nabla^S$ with the hermitean metric, 
\begin{equation}
	j_x(S, \beta) \in \UF(J)_{|x}\,,
\end{equation}
where $\UF(J)$ denotes the bundle of orthonormal, or unitary, frames of $J$.

Consider a second unitary frame $(S', \beta') \colon \CI_0 \to \iota_x^*(\CG, \nabla^\CG)$ which is 2-isomorphic to $(S, \beta)$.
We may assume that the 2-isomorphism is of the form $[\FR, 1_\FR, \phi]$.
This induces a commutative diagram
\begin{equation}
\label{eq:phi_beta_P_diagram}
\begin{tikzcd}[column sep=1.5cm, row sep=1cm]
	S_{|r} \ar[d, "\phi_{|r}"] \ar[r, "P^S_{r,n}"] & S_{|r+n} \ar[d, "\phi_{|r+n}"] \ar[r, "\beta_{|(r,n)}"] & S_{|r} \otimes J^{\otimes n}_{|x} \ar[d, "\phi_{|r} \otimes 1"]
	\\
	S'_{|r} \ar[r, "P^{S'}_{r,n}"'] & S'_{|r+n} \ar[r, "\beta'_{|(r,n)}"'] & S'_{|r} \otimes J^{\otimes n}_{|x}
\end{tikzcd}
\end{equation}
Here, the right-hand square commutes because of the compatibility of $\phi$ with $\beta$ and $\beta'$, while the left-hand square commutes because $\phi$ is parallel.
Consequently,
\begin{equation}
	(\phi_{|r} \otimes 1) \circ h^{(S, \beta)}_{|(r,n)} = h^{(S', \beta')}_{|(r,n)} \circ \phi_{|r}\,.
\end{equation}
Choosing normalised elements $k \in J_{|x}$ and $e \in S_{|r}$, \eqref{eq:phi_beta_P_diagram} implies
\addtocounter{equation}{1}
\begin{align*}
	j_x(S, \beta)
	&= \< e \otimes k, h^{(S, \beta)}_{|(r,1)} (e)\>_{S_{|r}}\, k
	\\*[0.2cm]
	&= \< e \otimes k, (\phi_{|r} \otimes 1)^{-1} \circ h^{(S', \beta')}_{|(r,-1)} \circ \phi_{|r} (e)\>_{S_{|r}}\, k \theeq
	\\[0.2cm]
	&= \< \phi_{|r} (e) \otimes k, h^{(S', \beta')}_{|(r,1)} \big( \phi_{|r} (e) \big) \>_{S'_{|r}}\, k
	\\*[0.2cm]
	&= j_x(S', \beta')\,.
\end{align*}
In the last step we have used the unitarity of $\phi$ and the invariance of the morphisms in~\eqref{eq:trace-like_morphisms} under changes of orthonormal bases.
This shows that we have defined a map
\begin{equation}
\label{eq:torsor_map_for_cup_product_reduction}
	j_x \colon \pi_0 \big( \frUF(\iota_x^*(\CG, \nabla^\CG)) \big) \to \UF(J)_{|x}\,.
\end{equation}

We continue by investigating the action of hermitean line bundles with connection on unitary frames.
Let $(T, \nabla^T) \in \HLBdl^\nabla(S^1)$, which is necessarily flat by dimensional reasons.
Its action on $(S, \beta)$ is given by
\begin{equation}
	(S, \beta) \otimes T = (S \otimes \pi_x^*T, \beta \otimes 1_{\pi_x^*T})\,.
\end{equation}
The parallel transport in $\pi_x^*(T, \nabla^T)$ on $\FR$ along any path from $r$ to $r+1$ is just $\hol(T, \nabla^T)$, so that
\begin{equation}
	j_x \big( (S, \beta) \otimes T \big) = j_x(S, \beta)\, \hol(T, \nabla^T)\,.
\end{equation}
Hence, the map $j_x$ from~\eqref{eq:torsor_map_for_cup_product_reduction} intertwines the $\sfU(1)$-actions on $\pi_0(\frUF(\iota_x^*(\CG, \nabla^\CG)))$ and $\UF(J)_{|x}$.
That is, $j_{|x}$ is a morphism of $\sfU(1)$-torsors and, thus, an isomorphism.
This induces an isomorphism between the fibres of $J$ and the pushforward $q_*(\CG, \nabla^\CG)$.
As all operations involved are smooth, we obtain an isomorphism of hermitean line bundles $J \cong q_*(\CG, \nabla^\CG)$.
Note that, so far, this does not include the connections or parallel transports on these line bundles.

In order to address parallel transports, we consider a path $\gamma \colon [0,1] \to \bbL_p$.
It induces a map $\hat{\gamma} = \gamma {\times} 1_{S^1} \colon C = [0,1] {\times} S^1 \to \bbL_p {\times} S^1 = M_p$.
We would like to compare the parallel transport along $\gamma$ in $(J, \nabla^J)$ and in $q_*(\CG, \nabla^\CG)$.
In order to find the parallel transport along $\gamma$ in $q_*(\CG, \nabla^\CG)$, consider a unitary frame $(S, \beta) \colon \CI_\rho \to \hat{\gamma}^*(\CG, \nabla^\CG)$ on the cylinder $C$ with underlying hermitean line bundle with connection $(S, \nabla^S) \to [0,1] {\times} \FR$.
As before, we write $\iota_\sigma \colon S^1 \hookrightarrow C$, $\iota_\sigma(\tau) = (\sigma, \tau)$ for $\sigma \in [0,1]$.
From~\eqref{eq:PT_on_transgression_line_bundle} we have
\begin{equation}
	P^{q_*(\CG, \nabla^\CG)}_\gamma\, \big[ [\iota_0^*(S, \beta)], z \big]
	= \big[ [\iota_1^*(S, \beta)],\, \exp \big( \textint_C\, \rho \big)\, z \big]\,.
\end{equation}
Using the map $j$ from~\eqref{eq:map_j_for_cup_prod_pushforward}, we obtain
\begin{equation}
\begin{aligned}
	&j_{\gamma(1)} \circ P^{q_*(\CG, \nabla^\CG)}_\gamma \circ j_{\gamma(0)}^{-1} \colon J_{|\gamma(0)} \to J_{|\gamma(1)}\,,
	\\[0.2cm]
	& \qquad j_{\gamma(0)} \big[ \big( \iota_0^*(S, \beta) \big), z \big] \longmapsto j_{\gamma(1)} \big[ \big( \iota_1^*(S, \beta) \big), \exp \big( \textint_C\, \rho \big)\, z \big]\,.
\end{aligned}
\end{equation}

On the other hand, let $r \in \FR$, $e_0 \in S_{|(0,r)}$, $e_1 \in S_{|(1,r)}$, $k_0 \in J_{|\gamma(0)}$, and $k_1 \in J_{|\gamma(1)}$ be normalised elements.
We set
\begin{equation}
	h_r \gamma \colon [0,1] \to [0,1] {\times} \FR\,, \quad
	h_r \gamma (\sigma) \coloneqq \big( \sigma,\, r \big)
\end{equation}
and compute
\addtocounter{equation}{1}
\begin{align*}
\label{eq:PT_comparison_for_reduction_of_cuppr_BGrb_I}
	&P^J_\gamma \big( j_{\gamma(0)} (\iota_0^*(S, \beta)) \big)
	\\*[0.2cm]
	&= \< e_0 \otimes k_0,\, \beta_{|(0, r, 1)} \circ P^S_{0, r, 1} (e_0) \>_{S_{|(0,r)}}\ P^J_\gamma(k_0) \theeq
	\\[0.2cm]
	&= \< (P^S_{h_r \gamma})^{-1} (e_1) \otimes  (P^J_\gamma)^{-1} (k_1) ,\, \beta_{|(0, r, 1)} \circ P^S_{0, r, 1} \circ (P^S_{h_r \gamma})^{-1} (e_1) \>_{S_{|(0,r)}}\ k_1
	\\[0.2cm]
	&= \< e_1 ,\, \beta_{|(1, r, 1)} \circ P^S_{h_{r+1} \gamma} \circ P^S_{0, r, 1} \circ (P^S_{h_r \gamma})^{-1} (e_1) \>_{S_{|(1,r)}}\ k_1\,.
\end{align*}
Here we have used the independence of the morphisms~\eqref{eq:trace-like_morphisms} of the choice of orthonormal bases and the fact that $\beta$ is parallel.
Consider the following non-commutative digram:
\begin{equation}
\begin{tikzcd}[column sep=1.5cm, row sep=1.25cm]
	S_{|(1, r+1)} \ar[r, "P^S_{h_{r+1} \gamma}"]  & S_{|(0, r+1)}
	\\
	S_{|(0, r)} \ar[u, "P^S_{0, r, 1}"] & S_{|(1, r)} \ar[l, "P^{S\, -1}_{h_r \gamma}"] \ar[u, "P^S_{1, r, 1}"']
\end{tikzcd}
\end{equation}
Its failure to commute is the holonomy of $(S, \nabla^S)$ around the loop (or rectangle) with horizontal edges $h_{r+1} \gamma$ and $h_r \gamma$, and vertical edges given by the paths $\sigma \mapsto (0, r + \sigma)$ and $\sigma \mapsto (1, r + \sigma)$.
Define a map $f_r \colon [0,1]^2 \to [0,1] {\times} \FR$, $(\sigma_0,\sigma_1) \mapsto (\sigma_0, r + \sigma_1)$.
Observe that $\pi \circ f_r$ descends to define a map $\pi \circ f_r \colon [0,1]^2 \to C$, $(\sigma_0, \sigma_1) \mapsto (\sigma_0, e^{2\pi\, \iu\, (\sigma_1 + r)})$.
Then,
\addtocounter{equation}{1}
\begin{align*}
\label{eq:PT_commutation_for_cup_product_reduction}
	&P^{S\, -1}_{1, r, 1} \circ P^S_{h_{r+1} \gamma} \circ P^S_{0, r, 1} \circ P^{S\, -1}_{h_r \gamma}
	\\[0.2cm]
	&= \hol \big( (\partial f_r)^*(S, \nabla^S) \big)^{-1}
	\\[0.2cm]
	&= \exp \Big( \int_{[0,1]^2} - f_r^* \curv(\nabla^S) \Big) \theeq
	\\[0.2cm]
	&= \exp \Big( \int_{[0,1]^2} - \big( \hat{\gamma}^*B - \pi^*\rho \big) \Big)
	\\*[0.2cm]
	&= \exp \Big( \int_{C} \rho \Big)\,,
\end{align*}
where we have used that $B_{|(x,r)} = r\, \curv(\nabla^J)$ (cf. equation~\eqref{eq:cup_product_BGrb--curving}) with $x \in \bbL_p$ and $r \in \FR$, and that $\curv(\nabla^J)$ is constant along the $\FR$-direction.
Plugging this into~\eqref{eq:PT_comparison_for_reduction_of_cuppr_BGrb_I}, we obtain
\addtocounter{equation}{1}
\begin{align*}
	&P^J_\gamma \big( j_{\gamma(0)} (\iota_0^*(S, \beta)) \big)
	\\*[0.2cm]
	&= \< e_1 ,\, \beta_{|(1, r, 1)} \circ P^S_{h_{r+1} \gamma} \circ P^S_{0, r, 1} \circ P^{S\, -1}_{h_r \gamma} (e_1) \>_{S_{|(1,r)}}\ k_1 \theeq
	\\[0.2cm]
	&= \exp \Big( \int_{C} \rho \Big) \ \< e_1 ,\, \beta_{|(1, r, 1)} \circ P^S_{1, r, 1} (e_1) \>_{S_{|(1,r)}}\ k_1
	\\*[0.2cm]
	&= \exp \Big( \int_{C} \rho \Big)\ j_{\gamma(1)} \big( \iota_1^*(S, \beta) \big)\,.
\end{align*}

Consequently, we deduce that
\begin{equation}
	P^J_\gamma \circ j_{\gamma(0)} = j_{\gamma(1)} \circ P^{q_*(\CG, \nabla^\CG)}_\gamma\,,
\end{equation}
thus completing the proof of Proposition~\ref{st:reduction_of_cup_product_BGrb}.

In particular, sections of $(\CG, \nabla^\CG)$ canonically reduce to sections of $J$ over $\bbL_p$.
From the results in Section~\ref{sect:transgression_of_additional_structures} we know, moreover, that the algebraic structures on bundle gerbes and their morphisms which we introduced in Chapter~\ref{ch:bundle_gerbes} are mapped to their analogues on the category $\HLBdl^\nabla(M)$ precisely as we would expect from the analogies laid out in Section~\ref{sect:higher_geometric_structures}.

For $(E,\alpha) \in \Gamma(M_p, (\CG, \nabla^\CG))$ we can set
\begin{equation}
	\tr_{\rmr, J} \big( \alpha_{|(x,r,1)} \circ P^E_{(x,r, 1)} \big) \in J_{|x}\,.
\end{equation}
This reduction of sections has been worked out in~\cite{Bunk-Szabo--Fluxes_brbs_2Hspaces}.
It produces a twisted Wilson loop map
\begin{equation}
	\rmW \colon \pi_0 \big( \Gamma \big( M_p, (\CG, \nabla^\CG) \big)_\sim \big) \to \Gamma(\bbL_p, J)\,.
\end{equation}
Observe that for higher functions $(F, \nabla^F) \in \HVBdl^\nabla(M_p) \cong \BGrb^\nabla(M_p)(\CI_0, \CI_0)$ there is a natural reduction
\begin{equation}
\begin{aligned}
	&\rmW= \tr \circ \hol \circ \iota_{(-)}^* \colon \HVBdl^\nabla(M_p) \to C^\infty(\bbL_p)\,,
	\\
	&\big( \tr \circ \hol \circ \iota_{(-)}^* (F, \nabla^F) \big) (x) = \tr \circ \hol \big( \iota_x^*(F, \nabla^F) \big) \,.
\end{aligned}
\end{equation}
Under this reduction, the $\HVBdl^\nabla(M_p)$-module structure on $\Gamma(M_p, (\CG, \nabla^\CG))$ is mapped to the $C^\infty(\bbL_p, \FC)$-module structure on $\Gamma(\bbL_p, J)$.
That is,
\begin{equation}
	\rmW \big( F \otimes (E,\alpha) \big) = \rmW(F, \nabla^F) \cdot \rmW(E, \alpha)\,.
\end{equation}

\chapter{Conclusions}
\label{ch:conclusions}

In this thesis, we have presented a construction of a 2-Hilbert space of sections associated to a torsion bundle gerbe $(\CG, \nabla^\CG)$ on a manifold $M$.
To that end, we first had to enlarge the 2-category of bundle gerbes in Chapter~\ref{ch:bundle_gerbes} so that suitable additive structures would exist.
We examined in detail the morphism categories in the 2-categories $\BGrb^\nabla(M)$ and $\BGrb^\nabla_\rmpar(M)$ in Chapter~\ref{ch:structures_on_morphisms_of_bgrbs}, introducing an additive monoidal structure on these morphism categories and showing that they carry a rig-module category structure over the rig-category of higher functions $\BGrb^\nabla(M)(\CI_0, \CI_0)$, or $\BGrb^\nabla_\rmpar(M)(\CI_0, \CI_0)$, respectively.
Moreover, we introduced the Riesz dual functor $\Theta$ in Theorem~\ref{st:Riesz_dual_is_functorial} and the bifunctor $[-,-]$ in Theorem~\ref{st:internal_hom_of_morphisms_in_BGrb--existence_and_naturality}, and proved that these turn the rig-module actions of higher functions into closed module structures on morphism categories in $\BGrb^\nabla_{(\rmpar)}(M)$.

The structures introduced and results obtained in Chapter~\ref{ch:structures_on_morphisms_of_bgrbs} provide further progress in understanding hermitean bundle gerbes with connection as higher geometric versions of hermitean line bundles with connection.
In contrast to other approaches, we have focused on the higher vector bundle perspective rather than working with higher principal bundles, and we showed that also from this point of view, bundle gerbes, with their morphism categories from~\cite{Waldorf--More_morphisms} extended as in Chapter~\ref{ch:structures_on_morphisms_of_bgrbs}, deserve to be regarded as higher line bundles.

The biggest obstruction to a fully satisfying theory, however, is the no-go statement of Proposition~\ref{st:no-go_for_sections_of_BGrbs}.
It shows that the notion of morphism of bundle gerbes that we used here has certain limitations.
We investigated these limitations in detail in Section~\ref{sect:ways_around_the_torsion_constraint}, observing that they are, unfortunately, hard to overcome, if at all.
It would be desirable to either find a better notion of morphism of bundle gerbes that circumvents the no-go statement, or to obtain a notion of higher vector bundles in another approach to higher bundles (as in, for instance, those in~\cite{FSS--Cech_cocycles_and_diff_char_classes,FSS--Higher_stacky_perspective,NSS--Principal_infty_bundles_I,NSS--Principal_infty_bundles_II}).
Interesting progress in the latter direction has been made in~\cite{Nuiten--MSc_Thesis}.

In Chapter~\ref{ch:2Hspaces_from_bundle_gerbes} we demonstrated that the closed rig-module structure over higher functions that we constructed on sections of bundle gerbes gives rise to the structure of a 2-Hilbert space on the category of sections.
We introduced a very strong notion of 2-Hilbert space, which, we feel, encodes the various structures one would like to see on a 2-Hilbert space in a compact and practicable way.
It seems one should investigate the application of the language two-variable adjunctions, or closed module structures, to 2-Hilbert spaces more closely.
For example, we suspect that Definition~\ref{def:2Hspace} could be weakened, as some of the properties we asked for might already be implied by the use of closed module structures.

Composing the module action of higher functions with the inclusion of higher numbers as constant functions, i.e. the functor $\sfc \colon \Hilb \hookrightarrow \BGrb^\nabla_\rmpar(M)$, we obtained the structure of a 2-Hilbert space on $\Gamma(M, (\CG, \nabla^\CG))$.
We pointed out how the inner product on this 2-Hilbert space can be regarded as a complete analogue of the inner product on the space of sections $\Gamma(M, L)$ of a hermitean line bundle on $M$.

Chapter~\ref{ch:Transgression_and_reduction} contained the merging of the transgression functors from~\cite{Waldorf--Transgression_II} and~\cite{CJM--Holonomy_on_D-branes}, and their application to a geometric construction of a pushforward, or dimensional reduction functor from bundle gerbes to line bundles.
Here it would be beneficial to see an example of a decomposable torsion bundle gerbe (in the sense of Section~\ref{sect:Ex:Cup_product_lens_space_BGrbs}) on the total space of a fibre-orientable $S^1$-bundle $K \to M$ over a symplectic manifold $M$.
We almost achieved this in Sections~\ref{sect:Ex:Cup_product_lens_space_BGrbs} and~\ref{sect:dimensional_reduction}, but our base $M_p$ is three-dimensional and, thus, not symplectic.
If the base was symplectic, however, one could not only relate the mere geometric structures of a bundle gerbe on $K$ and a line bundle on $M$, but also their interpretations as (higher) prequantum line bundles.
This could, hopefully, lead to more intuition about how to conclusively set up higher geometric quantisation.
Simple examples for geometries of the desired type can be obtained, for instance, by starting from a symplectic manifold of the form $(M, \dd \eta)$ with a torsion prequantum line bundle $(L, \nabla^L)$ and then constructing the cup product bundle gerbe with connection over $M {\times} S^1$ in the fashion of Section~\ref{sect:Ex:Cup_product_lens_space_BGrbs}.
Furthermore, the pushforward functor of bundle gerbes might have interesting applications to T-duality (cf.~\cite{Bunke-Nikolaus--T-duality_via_gerbey_geometry_and_reductions}).

Finally, there are several questions unanswered about higher geometric quantisation.
For example, it is unclear how observables, i.e. higher functions, should be represented on the 2-Hilbert space of sections of $(\CG, \nabla^\CG)$.
Even more fundamentally, it is unclear what the higher version of the Poisson bracket on ordinary functions should be.
On the subcategory of higher functions which are of the form of a trivial hermitean line bundle with connection on $M$, a good candidate for such a structure has been discovered and investigated in~\cite{Rogers--L_infty_algebras}.
From our considerations, however, one has to conclude that this Lie 2-algebra structure should somehow be defined on all of $\BGrb^\nabla_{(\rmpar)}(M)$, or at least on the equivalent subcategory $\HVBdl^\nabla_{(\rmpar)}(M)(\CI_0, \CI_0)$.
Further, it is completely unclear, unfortunately, how to define a higher covariant derivative of a section of a bundle gerbe, or even ``along what'' this should be taken.
Candidates are elements of $TM$, its second exterior power $\Lambda^2 TM$, or $TM \oplus T^*M$, motivated by the Courant algebroids considered, for example, in~\cite{Rogers--2PG_Courant_algbds_and_prequan}.
Given such an operation, one could try to write down a higher Kostant-Souriau prequantisation formula.
The next, and probably biggest problem in higher geometric quantisation, which we left completely untouched here, but which has been considered in~\cite{Rogers--L_infty_algebras}, is higher polarisation.
Once more, it would be good to have a notion of higher covariant derivative at hand in order to obtain intuition about what the higher analogue of a polarisation should be.

While we have been able to categorify the construction of the prequantum Hilbert space in higher geometric quantisation, at this point there is too little background for us to be able to carry out the full higher geometric quantisation programme.
However, any progress in this direction will certainly contain leaps forward in our understanding of higher geometric structures and the structure of spacetime as modelled by string theory.


\begin{appendix}

\chapter{Special morphisms of bundle gerbes}
\label{ch:App:special_morphisms}

\section{Remarks on monoidal structures and strictness}
\label{app:monoidal_structures_and_strictness}

Several conventions adopted from~\cite{Waldorf--More_morphisms,Waldorf--Thesis} are employed in the main text to simplify our treatment of monoidal structures.
For the reader's convenience, we list here the simplifications made, so that one may keep in mind what we have chosen not to display explicitly.
These conventions greatly simplify expressions throughout the text, and especially the structural isomorphisms in the full 2-category of bundle gerbes.
We choose to work as if certain symmetric monoidal structures, outlined below, were strictly associative and unital.
By Mac Lane's coherence theorem~\cite[Section~VII]{ML--Categories_for_the_working_mathematician}, the unitors and associators can be reinserted, if desired, in a unique way.
Note though, that apart from having to add the adjective \emph{strict} at certain points in the main text (we point out below where this happens), there is no mathematical content in those isomorphisms relevant to our arguments and computations.

The most fundamental simplification we make is to work as if that the symmetric monoidal structure on $\HVBdl^\nabla$ is strictly associative and unital.
That is, we never display the isomorphisms $E \otimes I_0 \cong I_0 \cong I_0 \otimes E$, since for all our intends and purposes, these isomorphisms are irrelevant and just unnecessarily clutter explicit expressions.
Moreover, we take the same convention for the additive monoidal structure $\oplus$ on $\HVBdl^\nabla(M)$, i.e. treat it as strictly associative an unital, remembering Mac Lane's coherence theorem.

We treat analogously the category $\Mfd \ddownarrow M$ of surjective submersions over $M$.
Its objects are pairs $(Y, \pi)$ which constitute a surjective submersion $\pi \colon Y \to M$, while morphisms $(\Mfd \ddownarrow M)((Y_0, \pi_0),(Y_1,\pi_1))$ are smooth maps $f \in \Mfd(Y_0, Y_1)$ such that $\pi_1 \circ f = \pi_0$.
Like $\HVBdl^\nabla(M)$, the category $\Mfd \ddownarrow M$ is (cartesian) symmetric monoidal under the operation $(Y_0,\pi_0) \otimes (Y_1,\pi_1) \coloneqq (Y_0 {\times}_M Y_1,\, \pi_0 \circ \pr_{Y_0} = \pi_1 \circ \pr_{Y_1})$.
As in the aforementioned cases, we work as if this symmetric monoidal structure was strictly associative and unital.

Finally, we assume the convention that pullbacks of vector bundles compose strictly, i.e. that for $f \in \Mfd(M_0, M_1)$ , $g \in \Mfd(M_1, M_2)$, and $(E, \nabla^E) \in \HVBdl^\nabla(M_0)$ we have $g^* f^* (E, \nabla^E) = (g \circ f)^* (E,\nabla^E)$, rather than spelling out the isomorphism involved here (cf. Definition~\ref{def:presheaf_of_Cats}).

As a consequence of these conventions, we are forced to view the tensor product of bundle gerbes as strictly associative and unital.
Strictly speaking, however, the unitors of this tensor product on objects involve all three of the aforementioned unitors of symmetric monoidal structures.
An analogous statement holds true for the composition of 1-morphisms of bundle gerbes with connections on $M$, whose construction is very similar to the tensor product.
In order to stay consistent with our simplifying conventions, we further have to view the composition of 1-morphisms in the 2-category of bundle gerbes with connection on $M$ as strictly associative.
It is not strictly unital, though, for the non-trivial line bundles which are part of the respective bundle gerbe appear in the identity 1-morphisms.

\section{Morphisms of bundle gerbes and descent}
\label{app:special_morphisms_and_descent}

\begin{lemma}
\label{st:t_mu-lemma}
Let $(\CG,\nabla^\CG) = (L, \nabla^L, \mu, B, Y, \pi)$ be a bundle gerbe with connection on $M$.
There is a canonical isomorphism of line bundles with connection
\begin{equation}
	t_\mu \colon s_0^*L \to I_0\,,
\end{equation}
with $I_0$ denoting the trivial hermitean line bundle with connection on $Y$, given by the composition\footnote{To ease up notation, we do not display the pullback of $L$ along the inclusion $Y {\times}_Y Y \hookrightarrow Y^{[2]}$.}
\begin{equation}
	s_0^*L
	\xrightarrow{~1 \otimes \delta_{s_0^*L}^{-1}~} s_0^*L \otimes s_0^*L \otimes s_0^* L^*
	\xrightarrow{~s_0^*s_0^*\mu \otimes \One~} s_0^*L \otimes s_0^*L^*
	\xrightarrow{~\delta_{s_0^*L}^{-1}~} I_0\,.
\end{equation}
It satisfies
\begin{equation}
\label{eq:t_mu--properties}
\begin{aligned}
	d_1^*t_\mu \otimes 1_L &= s_0^* \mu \colon d_1^* s_0^* L \otimes L \to L \quad \text{over } Y^{[2]}\,,
	\\
	1_L \otimes d_0^*t_\mu &= s_1^* \mu \colon L \otimes d_0^* s_0^*L \to L \quad \text{over } Y^{[2]}\,,
	\\
	t_\mu^{-{\sft}} &= t_{\mu^{-{\sft}}} \quad \text{over } Y\,.
\end{aligned}
\end{equation}
\end{lemma}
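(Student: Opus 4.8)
The plan is to verify all three identities fibrewise, over a point $a \in Y$ (for the last identity) or $(a,b) \in Y^{[2]}$ (for the first two), using that each fibre $L_{(\cdot,\cdot)}$ is a one-dimensional Hilbert space and that $\mu$ is a fibrewise isomorphism. The key computational input is the associativity constraint for $\mu$, which fibrewise over $(y_0,y_1,y_2,y_3) \in Y^{[4]}$ reads $\mu_{(y_0,y_1,y_3)} \circ (1 \otimes \mu_{(y_1,y_2,y_3)}) = \mu_{(y_0,y_2,y_3)} \circ (\mu_{(y_0,y_1,y_2)} \otimes 1)$; I would exploit it by pulling it back along suitable degeneracy maps.

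First I would record that $s_0^*s_0^*\mu$ is, fibrewise over $a$, the map $\mu_{(a,a,a)} \colon L_{(a,a)} \otimes L_{(a,a)} \to L_{(a,a)}$, since $s_0 \circ s_0 \colon Y \to Y^{[3]}$ sends $a \mapsto (a,a,a)$. Pulling the associativity identity back along the totally degenerate inclusion $a \mapsto (a,a,a,a)$ shows $\mu_{(a,a,a)}$ is associative, so the one-dimensional space $L_{(a,a)}$ equipped with $\mu_{(a,a,a)}$ is a unital associative algebra: writing $\mu_{(a,a,a)}(e\otimes e)=\kappa\,e$ for a unit vector $e$ and $\kappa\neq 0$ (nonzero as $\mu$ is an isomorphism), the element $u_a = \kappa^{-1}e$ is the two-sided unit. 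Unfolding the definition of $t_\mu$ with the vector $e$ shows $\mu_{(a,a,a)}(\ell\otimes e) = (t_\mu)_a(\ell)\,e$, whence $(t_\mu)_a(u_a)=1$ and $\ell = (t_\mu)_a(\ell)\,u_a$ for every $\ell$; thus $(t_\mu)_a$ is precisely the canonical algebra isomorphism $L_{(a,a)} \xrightarrow{\cong} \FC$.

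For the first identity, pulling the associativity back along $(a,b) \mapsto (a,a,a,b)$ (i.e.\ along $s_0\circ s_0$) gives that $\mu_{(a,a,b)}$ is a left action of $(L_{(a,a)},\mu_{(a,a,a)})$ on $L_{(a,b)}$. Setting one slot equal to $u_a$ and using injectivity of $\mu_{(a,a,b)}(\ell'\otimes-)$ for $\ell'\neq 0$ yields $\mu_{(a,a,b)}(u_a\otimes m)=m$, and then $\mu_{(a,a,b)}(\ell\otimes m) = (t_\mu)_a(\ell)\,\mu_{(a,a,b)}(u_a\otimes m) = (t_\mu)_a(\ell)\,m$, which is exactly $d_1^*t_\mu\otimes 1_L = s_0^*\mu$ since $d_1(a,b)=a$ and $s_0(a,b)=(a,a,b)$. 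The second identity is entirely symmetric: one pulls back along $(a,b)\mapsto(a,b,b,b)$ (i.e.\ $s_2\circ s_1$) and uses that $\mu_{(a,b,b)}$ is a right action of $(L_{(b,b)},\mu_{(b,b,b)})$ with unit $u_b$, together with $d_0(a,b)=b$ and $s_1(a,b)=(a,b,b)$.

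For the third identity I would apply the inverse-transpose functor $(-)^{-\sft}$ to the defining composite of $t_\mu$. Since $(-)^{-\sft}$ is contravariant (reversing the order of composition), monoidal (compatible with $\otimes$), and satisfies $(s_0^*s_0^*\mu)^{-\sft} = s_0^*s_0^*(\mu^{-\sft})$, applying it to $t_\mu = \delta_{s_0^*L}\circ (s_0^*s_0^*\mu \otimes \One)\circ(\One\otimes \delta_{s_0^*L}^{-1})$ reassembles, after invoking the standard compatibility of $\delta$ with the dual (the identification of $\delta_{s_0^*L}^{-\sft}$ with $\delta_{s_0^*L^*}$ up to the canonical braiding and double-dual isomorphisms), into exactly the composite defining $t_{\mu^{-\sft}}$ for the dual bundle gerbe. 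I expect this third identity to be the main obstacle: its content is routine but bookkeeping-heavy, and the genuine care lies in matching $\delta_{s_0^*L}^{-\sft}$ with $\delta_{s_0^*L^*}$ and the coevaluations, i.e.\ in tracking the swap and double-dual coherences that the conventions of Appendix~\ref{app:monoidal_structures_and_strictness} allow us to suppress. For the first two identities the only subtlety is the simplicial face/degeneracy bookkeeping, which the computation above pins down completely.
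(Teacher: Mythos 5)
Your handling of the first two identities is correct, and it is a genuinely different route from the paper's: the paper does not prove them at all, but cites \cite[Lemma 2.1.3]{Waldorf--Thesis} for well-definedness and for the first two equations in~\eqref{eq:t_mu--properties}. Your argument — pulling the associativity of $\mu$ back along $a \mapsto (a,a,a,a)$ to make $\big(L_{(a,a)}, \mu_{(a,a,a)}\big)$ a one-dimensional unital algebra with unit $u_a$, identifying $(t_\mu)_a$ as the canonical algebra isomorphism to $\FC$ via $\mu_{(a,a,a)}(\ell \otimes e) = (t_\mu)_a(\ell)\, e$, and then pulling associativity back along $(a,b) \mapsto (a,a,a,b)$ and $(a,b) \mapsto (a,b,b,b)$ to exhibit $\mu_{(a,a,b)}$ and $\mu_{(a,b,b)}$ as unital left/right module structures — is a complete, self-contained substitute, and the cancellation trick (injectivity of $\mu(u_a \otimes -)$ applied to $L(m) = L(L(m))$) correctly produces unitality without assuming it. One sentence you should add: the claim that $t_\mu$ is an isomorphism of line bundles \emph{with connection} is not addressed by fibrewise computations; it is automatic because $t_\mu$ is by definition a composite of the parallel, unitary morphisms $\delta_{s_0^*L}^{\pm 1}$ and $s_0^*s_0^*\mu$.

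For the third identity you take the same route as the paper, but with one genuine slip: $(-)^{-\sft}$ is \emph{covariant}, not contravariant. Since both $(-)^\sft$ and $(-)^{-1}$ reverse the order of composition, their composite preserves it, i.e. $(f \circ g)^{-\sft} = f^{-\sft} \circ g^{-\sft}$ (the paper itself makes this point in the proof of Theorem~\ref{st:Riesz_dual_is_functorial}). Your claimed reassembly of $\delta_{s_0^*L} \circ (s_0^*s_0^*\mu \otimes \One) \circ (\One \otimes \delta_{s_0^*L}^{-1})$ into the composite defining $t_{\mu^{-\sft}}$ only works \emph{because} of this covariance; if the operation actually reversed composition as you state, the image composite would come out in the wrong order and would not match $t_{\mu^{-\sft}}$ at all. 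Separately, the ``standard compatibility'' $\delta_{L}^{-\sft} = \delta_{L^*}$ that you invoke is not background material to be waved at — it is the entire content of the paper's proof of this identity, verified there by the one-line evaluation $\big( \delta_L^{-\sft} (z \cdot 1_{L^*}) \big) \big( \delta_L (z' \cdot 1_L) \big) = z\, z' = \big( \delta_{L^*} (z \cdot 1_{L^*}) \big) \big( \delta_L (z' \cdot 1_L) \big)$. You correctly isolate it as the crux, and it can be checked fibrewise in exactly the style of your other computations (using $\delta_L(\ell \otimes \phi) = \phi(\ell)$, both sides send $e^\vee \otimes e$ to $1$), but as written it is an unproved invocation. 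With the variance corrected and that short check supplied, your proof of the third identity coincides with the paper's.
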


\begin{proof}
Well-definedness of $t_\mu$ as well as the first two identities in~\eqref{eq:t_mu--properties} are proven in \cite[Lemma 2.1.3]{Waldorf--Thesis}.
The last identity follows from
\begin{equation}
	\big( \delta_L^{-\sft} (z \cdot 1_{L^*}) \big) \big( \delta_L (z' \cdot 1_L) \big)
	= z\, z'
	= \big( \delta_{L^*} (z \cdot 1_{L^*}) \big) \big( \delta_L (z' \cdot 1_L) \big)\,,
\end{equation}
showing that $\delta_L^{-\sft} = \delta_{L^*}$ as required.
\end{proof}

For several constructions regarding morphisms of bundle gerbes we will need descent expressions for morphisms.
Given a surjective submersion $\pi \colon Y \to M$, there is a functor $\Asc_\pi \colon \HVBdl^\nabla(M) \to \Desc(\HVBdl^\nabla,\pi)$ to the descent category of the relative sheaf of categories $\HVBdl^\nabla$ as described in Section~\ref{sect:Coverings_and_sheaves_of_cats}.
It acts as $(E,\nabla^E) \mapsto (\pi^*(E,\nabla^E), \alpha_E)$, where $\alpha_E \colon d_0^*\pi^*E \to d_1^*\pi^*E$ is the canonical isomorphism over $Y^{[2]}$ which identifies fibres of the pullback over points in $Y$ with the same image in $M$.%
\footnote{Recall from Section~\ref{sect:Coverings_and_sheaves_of_cats} that, strictly speaking, we should write $\Desc(\HVBdl^\nabla, \HVBdl^\nabla_{\rmuni \sim},\pi)$ here, since $\HVBdl^\nabla$ is a relative sheaf of categories with respect to the sheaf of groupoids $\HVBdl^\nabla_{\rmuni \sim}$.}
The functor $\Asc_\pi$ is a weak inverse to the descent functor $\Desc_\pi \colon \Desc(\HVBdl^\nabla,\pi) \to \HVBdl^\nabla(M)$, establishing the fact that $\HVBdl^\nabla$ is a relative sheaf of categories on the Grothendieck site $(M, \tau_\ssub)$.
In the following, we approach the statement that for any pair of bundle gerbes, the category $\BGrb^\nabla(M)((\CG_0,\nabla^{\CG_0}), (\CG_1,\nabla^{\CG_1}))$ is a sheaf of categories over $Y_{01} = Y_0 {\times}_M Y_1$.

\begin{lemma}
\label{st:dd-def_and_properties}
Let $(E, \nabla^E, \alpha, Z, \zeta) \in \BGrb^\nabla(M)((\CG_0,\nabla^{\CG_0}),(\CG,\nabla^{\CG_1}))$.
There is a unitary parallel isomorphism of hermitean vector bundles with connection $\dd_{(E,\alpha)} \colon p_1^*E \to p_0^*E$ over $Z {\times}_{Y_{01}} Z$ defined by 
\begin{equation}
\label{eq:def_dd-morphism_for_1morph_descent}
	\dd_{(E,\alpha)} = \big( \zeta_{Y_1}^*t_{\mu_1} \otimes 1_E \big) \circ \big(\alpha_{\vert Z{\times_{Y_{01}}}Z} \big) \circ \big( 1_E \otimes \zeta_{Y_0}^*t_{\mu_0} \big)^{-1} 
\end{equation} 
with the following properties:	
\begin{myenumerate}
	\item Over $(z_0,z_1,z_2)\in Z {\times_{Y_{01}}} Z {\times_{Y_{01}}} Z$, the isomorphism $\dd_{(E,\alpha)}$ satisfies a cocycle relation
	\begin{equation}
	\label{eq:dd-cocycle_condition}
		\dd_{(E,\alpha)|(z_0,z_1)} \circ \dd_{(E,\alpha)|(z_1,z_2)}
		=\dd_{(E,\alpha)|(z_0,z_2)}\,.
	\end{equation}

	\item Over $(Z {\times}_M Z) {\times}_{Y_{01}^{[2]}} (Z {\times}_M Z)$, there is a commutative diagram
	\begin{equation}
	\label{eq:comp_alpha_dd}
	\xymatrixcolsep{2cm}
	\xymatrixrowsep{1.5cm}
	\xymatrix{
		\pr_{23}^* \zeta_{Y_0}^{[2]*}L_0 \otimes \pr_{13}^*p_1^*E \ar@{->}[r]^-{\pr_{23}^*\alpha} \ar@{->}[d]_-{1 \otimes \pr_{13}^*\dd_{(E,\alpha)}} & \pr_{02}^*p_1^*E \otimes \pr_{23}^* \zeta_{Y_1}^{[2]*}L_1 \ar@{->}[d]^-{\pr_{02}^*\dd_{(E,\alpha)} \otimes 1}
		\\
		\pr_{01}^* \zeta_{Y_0}^{[2]*}L_0 \otimes \pr_{13}^*p_0^*E \ar@{->}[r]_-{\pr_{01}^*\alpha} & \pr_{02}^*p_0^*E \otimes \pr_{01}^* \zeta_{Y_1}^{[2]*}L_1
	}
	\end{equation}
	In other words, $\alpha$ is a descent morphism
	\begin{equation}
	\begin{aligned}
		\alpha \in \Desc&\big(\HVBdl^\nabla, \zeta {\times}_M \zeta\big)
		\\
		&\Big( \big( \zeta_{Y_0}^{[2]*} L_0 \otimes \pr_1^*E,\, 1 \otimes \pr_{13}^*\dd_{(E,\alpha)} \big),\, \big( \pr_0^*E \otimes \zeta_{Y_1}^{[2]*} L_1,\, \pr_{02}^* \dd_{(E,\alpha)} \big) \Big)\,.
	\end{aligned}
	\end{equation}

	\item The isomorphism $\dd_{(E,\alpha)}$ satisfies the identity
	\begin{equation}
		\dd_{(E,\alpha)}^{-{\sft}} = \dd_{(E^*,\alpha^{-{\sft}})}\,.
	\end{equation}
\end{myenumerate}
\end{lemma}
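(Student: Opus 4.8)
The plan is to prove the three properties of the isomorphism $\dd_{(E,\alpha)}$ by systematically unwinding its definition~\eqref{eq:def_dd-morphism_for_1morph_descent} in terms of the bundle gerbe multiplications $\mu_0$, $\mu_1$, via the morphisms $t_{\mu_0}$, $t_{\mu_1}$, and reducing everything to the defining compatibility~\eqref{eq:1-morphisms_compatibility_with_BGrb_multiplications} of $\alpha$ together with the identities~\eqref{eq:t_mu--properties} from Lemma~\ref{st:t_mu-lemma}. Before anything else, I would record that $\dd_{(E,\alpha)}$ is unitary and parallel: both assertions follow immediately because it is a composite of the unitary parallel $\alpha$ (which is assumed so in Definition~\ref{def:1-morphisms_of_BGrbs}) with pullbacks of the $t_{\mu_i}$, and the $t_{\mu_i}$ are unitary parallel since they are built from the unitary parallel $\mu_i$ and the canonical unitary parallel $\delta$-morphisms.

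For property (1), the cocycle relation, I would work over the triple fibre product $Z {\times}_{Y_{01}} Z {\times}_{Y_{01}} Z$ and expand $\dd_{(E,\alpha)|(z_0,z_1)} \circ \dd_{(E,\alpha)|(z_1,z_2)}$ using~\eqref{eq:def_dd-morphism_for_1morph_descent}. The middle $t_{\mu}$-factors at the index $z_1$ cancel against their inverses. The remaining composite of two copies of $\alpha$ (over the faces $(z_0,z_1)$ and $(z_1,z_2)$) must be shown to equal a single $\alpha$ over $(z_0,z_2)$ after the boundary $t_\mu$-terms are absorbed; this is exactly the content of the associativity condition~\eqref{eq:1-morphisms_compatibility_with_BGrb_multiplications}, once one translates the $\zeta_{Y_i}^{[3]*}\mu_i$ appearing there into the $t_{\mu_i}$-factors via the first two relations in~\eqref{eq:t_mu--properties}. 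The bookkeeping of which face the various pullbacks $d_0,d_1,d_2$ land on is where care is needed, but no new idea is required.

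For property (2), I would check commutativity of the diagram~\eqref{eq:comp_alpha_dd} over $(Z {\times}_M Z) {\times}_{Y_{01}^{[2]}} (Z {\times}_M Z)$ by again substituting the definition of $\dd_{(E,\alpha)}$ into both vertical arrows and reducing the statement to a compatibility between two copies of $\alpha$ (the horizontal arrows) and the gerbe multiplications. This is once more~\eqref{eq:1-morphisms_compatibility_with_BGrb_multiplications} repackaged: the square expresses precisely that $\alpha$ respects the descent isomorphisms $\dd_{(E,\alpha)}$ on source and target, which is equivalent to the original cocycle property of $\alpha$ relative to the $\mu_i$. Property (3) is the cleanest: applying $(-)^{-\sft}$ to~\eqref{eq:def_dd-morphism_for_1morph_descent}, using that inverse-transpose is contravariant and reverses composition, and invoking the third identity $t_\mu^{-\sft} = t_{\mu^{-\sft}}$ of~\eqref{eq:t_mu--properties} together with $\delta_L^{-\sft} = \delta_{L^*}$, one reads off directly that $\dd_{(E,\alpha)}^{-\sft}$ coincides with the analogous composite built from $E^*$, $\alpha^{-\sft}$, and the dual gerbe data, i.e. $\dd_{(E^*,\alpha^{-\sft})}$.

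The main obstacle I anticipate is purely organisational rather than conceptual: keeping track of the many pullbacks $p_i$, $p_{ij}$, $d_i$, $\pr_{ij}$ along the various fibre products and verifying that the $t_{\mu_i}$-terms pair up and cancel correctly when translated back to the associativity diagram~\eqref{eq:1-morphisms_compatibility_with_BGrb_multiplications}. I would therefore carry out (3) first as a warm-up, then (1), and finally (2), reusing in (2) the algebraic identities established while proving (1).
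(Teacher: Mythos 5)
Your proposal follows essentially the same route as the paper: the paper disposes of (1) and (2) by citing \cite[Lemma 2.1.4]{Waldorf--Thesis}, whose argument is precisely the unwinding you describe (restrict the compatibility condition~\eqref{eq:1-morphisms_compatibility_with_BGrb_multiplications} of $\alpha$ to the relevant fibre products and absorb the gerbe multiplications into the $t_{\mu_i}$ via the first two identities of~\eqref{eq:t_mu--properties}), and it derives (3) from Lemma~\ref{st:t_mu-lemma}, exactly as you do. So you are not taking a different route; you are writing out the details that the paper delegates to the citation.

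One slip in your treatment of (3) needs correcting. You say that $(-)^{-\sft}$ ``is contravariant and reverses composition''. It does not: transpose reverses composition and so does inversion, hence their composite \emph{preserves} it, $(f \circ g)^{-\sft} = f^{-\sft} \circ g^{-\sft}$ (the paper uses exactly this observation when checking that the Riesz dual $\Theta$ is well defined on 1-morphisms). This covariance is what makes your computation work: applying $(-)^{-\sft}$ to~\eqref{eq:def_dd-morphism_for_1morph_descent} yields
\begin{equation}
	\dd_{(E,\alpha)}^{-\sft}
	= \big( \zeta_{Y_1}^* t_{\mu_1}^{-\sft} \otimes 1_{E^*} \big) \circ \big( \alpha^{-\sft}_{|Z {\times}_{Y_{01}} Z} \big) \circ \big( 1_{E^*} \otimes \zeta_{Y_0}^* t_{\mu_0}^{-\sft} \big)^{-1}\,,
\end{equation}
with the factors appearing in the \emph{same} order as in the definition of $\dd_{(E^*,\alpha^{-\sft})}$; the third identity of~\eqref{eq:t_mu--properties}, $t_{\mu_i}^{-\sft} = t_{\mu_i^{-\sft}}$, then identifies this composite with $\dd_{(E^*,\alpha^{-\sft})}$, since the dual gerbes carry the multiplications $\mu_i^{-\sft}$. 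If $(-)^{-\sft}$ genuinely reversed composition, the factors would come out in the wrong order and the identification would fail (indeed the composite would not even typecheck). With that correction, your plan is sound.
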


\begin{proof}
Statements (1) and (2) can be found as \cite[Lemma 2.1.4]{Waldorf--Thesis}.
	Item (3) follows from Lemma~\ref{st:t_mu-lemma}.
\end{proof}

\begin{lemma}
	\label{st:dd_and_2-morphisms}
	For every 2-morphism $[W,\omega,\psi] \colon (E, \nabla^E, \alpha, Z, \zeta) \to (E', \nabla^{E'}, \alpha', Z', \zeta')$, there is a commuting diagram over $W {\times}_{Y_{01}} W$:
	\begin{equation}
	\xymatrixrowsep{1.5cm}
	\xymatrixcolsep{3.5cm}
	\xymatrix{
		\pr_1^* \omega_Z^*E \ar@{->}[r]^-{(\omega_Z {\times}_{Y_{01}} \omega_Z)^*\dd_{(E,\alpha)}} \ar@{->}[d]_-{\pr_1^*\psi}	& \pr_0^*\omega_Z^* E \ar@{->}[d]^-{\pr_0^*\psi}
		\\
		\pr_1^* \omega_{Z'}^* E' \ar@{->}[r]_-{(\omega_{Z'} {\times}_{Y_{01}} \omega_{Z'})^* \dd_{(E',\alpha')}}
		& \pr_0^* \omega_{Z'}^*E'
	}
	\end{equation}
	That is, under the isomorphism $Y_{01} \cong Y_{01} {\times}_{Y_{01}} Y_{01}$,
	\begin{equation}
\begin{aligned}
		\psi \in\Desc&\big( \HVBdl^\nabla, (\zeta' {\times}_{Y_{01}} \zeta) \circ \omega \big)
		\\
		&\big( (\omega_Z^*E, (\omega_Z {\times}_{Y_{01}} \omega_Z)^*\dd_{(E,\alpha)}),\, (\omega_{Z'}^*E', (\omega_{Z'}  {\times}_{Y_{01}} \omega_{Z'})^*\dd_{(E',\alpha')}) \big)\,.
\end{aligned}
	\end{equation}
\end{lemma}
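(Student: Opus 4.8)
The plan is to obtain the desired square as the restriction of the 2-morphism compatibility condition~\eqref{eq:2-morphism_compatibility_with_alphas} from $W^{[2]} = W {\times}_M W$ to the sub-fibre-product $W {\times}_{Y_{01}} W$, followed by trivialising the line-bundle twists via Lemma~\ref{st:t_mu-lemma}. First I would note that $W {\times}_{Y_{01}} W$ embeds into $W^{[2]}$, since two points of $W$ with a common image in $Y_{01}$ automatically have a common image in $M$, and that under this embedding the face maps $d_0, d_1 \colon W^{[2]} \to W$ restrict to the projections $\pr_1, \pr_0 \colon W {\times}_{Y_{01}} W \to W$. The crucial point is that on $W {\times}_{Y_{01}} W$ the composites $\omega_{Y_i} \circ \pr_0$ and $\omega_{Y_i} \circ \pr_1$ agree for $i = 0,1$, so the classifying map $\omega_{Y_i}^{[2]}$ factors through the degeneracy $s_0 \colon Y_i \to Y_i^{[2]}$. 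Consequently $\omega_{Y_i}^{[2]*} L_i$ restricts on $W {\times}_{Y_{01}} W$ to $\widetilde{\omega}_{Y_i}^* s_0^* L_i$, where $\widetilde{\omega}_{Y_i}$ is the common map to $Y_i$, and Lemma~\ref{st:t_mu-lemma} supplies canonical trivialisations $\widetilde{\omega}_{Y_i}^* t_{\mu_i}$ of these bundles.

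Next I would conjugate the restricted square by these trivialisations. The content of~\eqref{eq:def_dd-morphism_for_1morph_descent} is exactly that $\dd_{(E,\alpha)}$ arises from $\alpha|_{Z {\times}_{Y_{01}} Z}$ by post-composing with $\zeta_{Y_1}^* t_{\mu_1} \otimes 1_E$ and pre-composing with $(1_E \otimes \zeta_{Y_0}^* t_{\mu_0})^{-1}$; using $\zeta_{Y_i} \circ \omega_Z = \omega_{Y_i}$, the pullback $(\omega_Z {\times}_{Y_{01}} \omega_Z)^* \dd_{(E,\alpha)}$ is therefore the $t_\mu$-conjugate of the restriction of $\omega_Z^{[2]*}\alpha$ to $W {\times}_{Y_{01}} W$, and likewise for $\dd_{(E',\alpha')}$ and $\alpha'$. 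Thus the conjugated top and bottom arrows of the restricted compatibility square are precisely the two horizontal arrows of the claimed diagram. Since the morphism $\psi$ acts only on the $E$/$E'$ tensor factors while the trivialisations $t_{\mu_i}$ act only on the $L_i$ factors, these operate on complementary tensor slots and hence commute with the vertical arrows $\pr_0^*\psi$ and $\pr_1^*\psi$; therefore conjugating the commuting square~\eqref{eq:2-morphism_compatibility_with_alphas} preserves commutativity and yields exactly the asserted diagram over $W {\times}_{Y_{01}} W$.

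Finally, the reformulation of the statement as $\psi$ being a morphism in the relative descent category follows by unwinding definitions: the commuting square says that $\pr_0^*\psi$ and $\pr_1^*\psi$ intertwine the gluing isomorphisms $(\omega_Z {\times}_{Y_{01}} \omega_Z)^*\dd_{(E,\alpha)}$ and $(\omega_{Z'} {\times}_{Y_{01}} \omega_{Z'})^*\dd_{(E',\alpha')}$, which are legitimate descent data by the cocycle identity of Lemma~\ref{st:dd-def_and_properties}. I expect the only genuine subtlety to lie in the bookkeeping of the first step — verifying that the restriction of $\omega_{Y_i}^{[2]*} L_i$ really is $\widetilde{\omega}_{Y_i}^* s_0^* L_i$ and that the trivialisation used there matches, under $\zeta_{Y_i} \circ \omega_Z = \omega_{Y_i}$, the $t_{\mu_i}$ appearing in the definition of $\dd$. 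Once these identifications are pinned down, the remainder is a formal consequence of the naturality of pullbacks together with the observation that $\psi$ and the $t_{\mu_i}$ act on disjoint tensor factors.
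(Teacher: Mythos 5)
Your proposal is correct and takes essentially the same approach as the paper's proof: both unwind $\dd_{(E,\alpha)}$ via its definition~\eqref{eq:def_dd-morphism_for_1morph_descent} as the $t_\mu$-conjugate of $\alpha$ restricted to $Z {\times}_{Y_{01}} Z$, invoke the compatibility condition~\eqref{eq:2-morphism_compatibility_with_alphas}, and use that $\psi$ commutes with the trivialisations $t_{\mu_i}$ because they act on complementary tensor factors. The only difference is presentational --- you conjugate the commuting square globally, whereas the paper carries out the identical manipulation as a fibrewise chain of equalities at a point $(w_0,w_1) \in W {\times}_{Y_{01}} W$.
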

\begin{proof}
Let $(w_0,w_1) \in W {\times}_{Y_{01}} W$.
Without explicitly displaying the pullbacks, we have
\begin{equation}
\begin{aligned}
	&\psi_{|w_0} \circ \dd_{(E,\alpha)|(w_0,w_1)}
	\\*
	&= \psi_{|w_0} \circ \big( \pr_{Y_1}^*t_{\mu_1} \otimes 1_E \big)_{|w_1} \circ \big(\alpha_{\vert Z{\times_{Y_{01}}}Z}\big)_{|(w_0,w_1)} \circ \big( 1_E \otimes \pr_{Y_0}^*t_{\mu_0} \big)^{-1}_{|w_0}
	\\
	&= \big( \pr_{Y_1}^*t_{\mu_1} \otimes 1_{E'} \big)_{|w_1} \circ (1 \otimes\psi_{|w_0}) \circ \big(\alpha_{\vert Z{\times_{Y_{01}}}Z}\big)_{|(w_0,w_1)} \circ \big( 1_E \otimes \pr_{Y_0}^*t_{\mu_0} \big)^{-1}_{|w_0}
	\\
	&= \big( \pr_{Y_1}^*t_{\mu_1} \otimes 1_{E'} \big)_{|w_1} \circ \big(\alpha'_{\vert Z' {\times_{Y_{01}}} Z'}\big)_{|(w_0,w_1)} \circ (\psi_{|w_1}\otimes 1) \circ \big( 1_E \otimes \pr_{Y_0}^*t_{\mu_0} \big)^{-1}_{|w_0}
	\\*
	&= \dd_{(E',\alpha')|(w_0,w_1)}\circ \psi_{|w_1}\,,
\end{aligned}
\end{equation}
where we used that the 2-morphisms intertwine $\alpha$ and $\alpha'$, i.e. equation~\eqref{eq:2-morphism_compatibility_with_alphas}.
\end{proof}

The following two statements are refinements of~\cite[Theorem 2.4.1]{Waldorf--Thesis}.

\begin{proposition}
\label{st:2-morphisms_have_simple_representatives}
Every 2-morphism $[W,\omega,\psi] \colon (E, \nabla^E, \alpha, Z, \zeta) \to (E', \nabla^{E'}, \alpha', Z', \zeta')$ has a unique representative of the form $(Z {\times}_{Y_{01}} Z', 1, \phi)$.
\end{proposition}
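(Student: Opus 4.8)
The plan is to exhibit a canonical representative over the minimal surjective submersion $Z {\times}_{Y_{01}} Z'$ and then show that every representative reduces to it uniquely. Recall that a $2$-morphism $[W,\omega,\psi]$ consists of a surjective submersion $\omega\colon W \to Z {\times}_{Y_{01}} Z'$ together with $\psi \in \HVBdl^\nabla(W)(\omega_Z^*E, \omega_{Z'}^*E')$ satisfying the compatibility~\eqref{eq:2-morphism_compatibility_with_alphas}. The key observation, which is Lemma~\ref{st:dd_and_2-morphisms}, is that $\psi$ is exactly a descent morphism between the descent data $(\omega_Z^*E, (\omega_Z {\times}_{Y_{01}} \omega_Z)^*\dd_{(E,\alpha)})$ and $(\omega_{Z'}^*E', (\omega_{Z'} {\times}_{Y_{01}} \omega_{Z'})^*\dd_{(E',\alpha')})$ with respect to the surjective submersion $(\zeta' {\times}_{Y_{01}} \zeta) \circ \omega \colon W \to Y_{01}$. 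Here the cocycle property~\eqref{eq:dd-cocycle_condition} of $\dd_{(E,\alpha)}$ from Lemma~\ref{st:dd-def_and_properties} guarantees these are genuine descent objects.

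First I would note that $\dd_{(E,\alpha)}$ equips $E$ (pulled back to $Z {\times}_{Y_{01}} Z$) with descent data for the relative sheaf of categories $\HVBdl^\nabla$ over the surjective submersion $\zeta_{Y_{01}}\colon Z \to Y_{01}$, and likewise for $E'$ over $\zeta'_{Y_{01}}\colon Z' \to Y_{01}$. Descending these yields hermitean vector bundles with connection on $Y_{01}$, say $\sfR(E,\alpha)$ and $\sfR(E',\alpha')$; this uses that $\HVBdl^\nabla$ is a relative sheaf of categories, as recorded in Section~\ref{sect:Coverings_and_sheaves_of_cats}. Now $W \to Y_{01}$, $Z \to Y_{01}$, $Z' \to Y_{01}$ and $Z {\times}_{Y_{01}} Z' \to Y_{01}$ are all surjective submersions refining one another, and the sheaf property says that the categories of descent morphisms relative to any of these refinements are canonically equivalent. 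In particular, the morphism $\psi$, viewed as a descent morphism over $W$, corresponds to a unique descent morphism $\phi$ over $Z {\times}_{Y_{01}} Z'$, i.e. $\phi \in \HVBdl^\nabla(Z {\times}_{Y_{01}} Z')(\pr_Z^*E, \pr_{Z'}^*E')$ satisfying the analogue of~\eqref{eq:2-morphism_compatibility_with_alphas} over $(Z {\times}_{Y_{01}} Z')^{[2]}$. This $\phi$ defines the triple $(Z {\times}_{Y_{01}} Z', 1, \phi)$, and I would check that it is equivalent to $(W,\omega,\psi)$ by using $\omega$ itself (together with the projection $W \to Z {\times}_{Y_{01}} Z'$ which is $\omega$) to produce the witnessing common refinement in Definition~\ref{def:2-morphisms_of_BGrbs}.

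For uniqueness, suppose $(Z {\times}_{Y_{01}} Z', 1, \phi)$ and $(Z {\times}_{Y_{01}} Z', 1, \phi')$ represent the same $2$-morphism. By the equivalence relation there is a manifold $X$ with surjective submersions $\chi, \chi'\colon X \to Z {\times}_{Y_{01}} Z'$, both composing with the identity to give the same map, hence $\chi = \chi'$, and with $\chi^*\phi = \chi'^*\phi' = \chi^*\phi'$. Since $\chi$ is a surjective submersion, pullback along $\chi$ is faithful on morphisms of $\HVBdl^\nabla$ (a morphism of bundles is determined by its pullback along a surjective submersion), which forces $\phi = \phi'$.

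The main obstacle I expect is the careful bookkeeping in the sheaf-theoretic descent step: one must verify that the descent data assembled via $\dd_{(E,\alpha)}$ and $\dd_{(E',\alpha')}$ are compatible across the various refinements $W$, $Z$, $Z'$ and $Z {\times}_{Y_{01}} Z'$ of $Y_{01}$, and that the equivalence of descent-morphism categories is natural enough to transport $\psi$ canonically to $\phi$. This is precisely where Lemma~\ref{st:dd_and_2-morphisms} and the relative sheaf property do the heavy lifting, so the argument is conceptually clean but requires keeping the several fibre products and their projections straight; the compatibility condition~\eqref{eq:2-morphism_compatibility_with_alphas} must be seen to correspond exactly to the descent-morphism condition, with no residual choices.
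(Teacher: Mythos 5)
Your strategy is essentially the paper's own: interpret $\psi$ as a descent morphism for the relative sheaf of categories $\HVBdl^\nabla$ via Lemma~\ref{st:dd_and_2-morphisms}, transport it to a morphism $\phi$ over $\hat{Z} = Z \times_{Y_{01}} Z'$, witness the equivalence $(W,\omega,\psi) \sim (\hat{Z},1,\phi)$ by taking $X = W$ with the two maps $1_W$ and $\omega$, and deduce uniqueness from the fact that pullback along a surjective submersion is faithful (your observation that the commuting triangle forces $\chi_W = \chi_{W'}$ when both structure maps are identities is exactly right). The one routing difference is that you descend all the way to $Y_{01}$, forming $\sfR(E,\alpha)$ and $\sfR(E',\alpha')$ and comparing descent-morphism categories over the refinements $W$ and $\hat{Z}$, whereas the paper performs a single descent along $\omega \colon W \to \hat{Z}$ itself: it first notes that $\dd_{(E,\alpha)}$ restricts to the identity over the diagonal $Z \times_Z Z$ (a consequence of the cocycle relation~\eqref{eq:dd-cocycle_condition}), so that the descent datum over $W$ is literally $\Asc_\omega(\pr_Z^*(E,\nabla^E))$, and then the equivalence $\Asc_\omega$ hands over a unique $\phi$ with $\omega^*\phi = \psi$. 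Your down-and-up through $Y_{01}$ is more machinery for the same effect, but it is sound.

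There is, however, one step where your plan as written would fail: you expect the compatibility condition~\eqref{eq:2-morphism_compatibility_with_alphas} for $\phi$ to ``correspond exactly to the descent-morphism condition''. It does not. The descent-morphism condition lives over $\hat{Z} \times_{Y_{01}} \hat{Z}$ and only encodes $\alpha$ and $\alpha'$ restricted to the fibre products over $Y_{01}$, conjugated by the trivialisations $t_{\mu_i}$ entering $\dd_{(E,\alpha)}$ and $\dd_{(E',\alpha')}$, whereas~\eqref{eq:2-morphism_compatibility_with_alphas} is a condition over the strictly larger space $\hat{Z} \times_M \hat{Z}$ and involves the full isomorphisms $\alpha$, $\alpha'$. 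So being a descent morphism is strictly weaker than being a 2-morphism, and Lemma~\ref{st:dd_and_2-morphisms} has no converse in the paper. The repair is the paper's surjectivity argument: once you know $\omega^*\phi = \psi$, the compatibility relation satisfied by $\psi$ over $W \times_M W$ is precisely the pullback along the surjective submersion $\omega^{[2]} \colon W \times_M W \to \hat{Z} \times_M \hat{Z}$ of the desired relation for $\phi$, and since pullback along a surjection is injective on morphisms, the relation for $\phi$ follows. With that step inserted, your argument is complete.
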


\begin{proof}
Over the diagonal $Z {\times}_Z Z \subset Z^{[2]}$, we have $\dd_{(E,\alpha)|Z {\times}_Z Z} = 1_{E|Z {\times}_Z Z}$.
This follows from the cocycle relation $\dd_{(E,\alpha)|(z,z)} \circ \dd_{(E,\alpha)|(z,z)} = \dd_{(E,\alpha)|(z,z)}$, composed by $\dd_{(E,\alpha)|(z,z)}^{-1}$.
We set $\hat{Z} = Z {\times}_{Y_{01}} Z'$, so that we obtain restrictions $\hat{\omega}_Z \coloneqq \omega^{[2]}_{Z|W {\times}_{\hat{Z}} W} \colon W {\times}_{\hat{Z}} W \to Z {\times}_Z Z \cong Z$ and $\hat{\omega}_{Z'} \coloneqq \omega^{[2]}_{Z'|W {\times}_{\hat{Z}} W} \colon W {\times}_{\hat{Z}} W \to Z' {\times}_{Z'} Z' \cong Z'$.
By the above observation, we have that
\begin{equation}
\begin{aligned}
	&\big( \omega_Z^*(E, \nabla^E),\, (\omega_Z \times_Z \omega_Z)^* \dd_{(E,\alpha)} \big)
	= \Asc_{\omega_Z}( E, \nabla^E)
	= \Asc_{\omega}(\pr_Z^*(E,\nabla^E))\,,
	\\
	&\big( \omega_{Z'}^*(E', \nabla^{E'}),\, (\omega_{Z'} \times_{Z'} \omega_{Z'})^* \dd_{(E',\alpha')} \big)
	= \Asc_{\omega_{Z'}}( E', \nabla^{E'})
	= \Asc_\omega( \pr_{Z'}^*(E', \nabla^{E'}))\,.
\end{aligned}
\end{equation}
Thus, Lemma~\ref{st:dd_and_2-morphisms} implies that $\psi$ is a descent morphism
\begin{equation}
	\psi \in \Desc \big( \HVBdl^\nabla, \omega \big) \big( \Asc_{\omega}(\pr_Z^*(E,\nabla^E)), \Asc_\omega( \pr_{Z'}^*(E', \nabla^{E'})) \big)\,.
\end{equation}
As $\Asc_\omega$ is an equivalence of categories, there exists a unique morphism of vector bundles $\phi \in \HVBdl^\nabla(Z {\times}_M Z') (\pr_Z^*(E,\nabla^E), \pr_{Z'}^*(E', \nabla^{E'}))$ such that $\omega^*\phi = \Asc_\omega \phi = \psi$.
This morphism is, in fact, a 2-morphism of bundle gerbes:
substituting $\psi = \omega^*\phi$ into the compatibility condition~\eqref{eq:2-morphism_compatibility_with_alphas} of $\psi$ with $\alpha$ and $\alpha'$, this relation becomes the pullback along $\omega^{[2]}$ of the compatibility condition of $\phi$.
Now, since $\omega$ is surjective, the relation for $\phi$ follows.
Moreover, $\phi$ represents the same 2-morphism as $(W,\omega,\psi)$.
To see this, choose $(X,\chi)$ with $X = W$ and surjective submersions to $Z {\times}_{Y_{01}} Z'$ and $W$ to be $\omega$ and $1_W$, respectively.
Finally, let $(W',\omega',\psi')$ be another representative of $[W,\omega,\psi]$.
By the above construction, it gives rise to a morphism of hermitean vector bundles with connection $\phi'$ over $Z {\times}_{Y_{01}} Z'$, intertwining $\alpha$ and $\alpha'$, and such that $\omega^{\prime*}\phi' = \psi'$ over $W'$.
Let $(X,\chi_W, \chi_{W'})$ establish the equivalence $(W,\omega,\psi) \sim (W',\omega',\psi')$.
Then, $\chi_{W'}^* \omega^{\prime*}\phi' = \chi_{W'}^* \psi' = \chi_W^*\psi = \chi_W^* \omega^* \phi$, whence, since all maps $\chi_W$, $\chi_{W'}$, $\omega$ and $\omega'$ are surjective (or since $\Asc_\chi$ is an equivalence), we have $\phi' = \phi$.
\end{proof}

\begin{theorem}
\label{st:sheaf of categories_of_BGrb_morphisms}
For any pair $(\CG_i, \nabla^{\CG_i}) = (L_i, \nabla^{L_i}, \mu_i, B_i, Y_i, \pi_i)$, with $i = 0,1$, of bundle gerbes on $M$, there exists a relative sheaf of categories $\BGrb^\nabla_\sh(M)((\CG_0, \nabla^{\CG_0}),(\CG_1, \nabla^{\CG_1}))$ on $Y_{01}$, which assigns to a surjective submersion $\zeta \colon Z \to Y_{01}$ the bundle gerbe morphisms from $(\CG_0, \nabla^{\CG_0})$ to $(\CG_1, \nabla^{\CG_1})$ with surjective submersion $\zeta$, as well as the 2-morphisms between these.
\end{theorem}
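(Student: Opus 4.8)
The plan is to realise the claim as an instance of Definition~\ref{def:relative_sheaf_of_Cats}, working over the slice site whose objects are surjective submersions $\zeta \colon Z \to Y_{01}$, whose morphisms are smooth maps over $Y_{01}$, and where a covering of $(Z,\zeta)$ is a single surjective submersion $\omega \colon W \to Z$. First I would fix the underlying presheaf of categories: to $(Z,\zeta)$ assign the category $\CF(Z)$ of $1$-morphisms $(E,\nabla^E,\alpha,Z,\zeta) \colon (\CG_0,\nabla^{\CG_0}) \to (\CG_1,\nabla^{\CG_1})$ with this fixed surjective submersion, together with the $2$-morphisms between them, represented uniquely over the minimal surjective submersion by Proposition~\ref{st:2-morphisms_have_simple_representatives}. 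Restriction along a map $f \colon Z' \to Z$ over $Y_{01}$ is pullback of the data, $(E,\nabla^E,\alpha) \mapsto (f^*(E,\nabla^E), f^{[2]*}\alpha)$ on objects and $\phi \mapsto f^{[2]*}\phi$ on $2$-morphisms; functoriality and the coherence isomorphisms $\CF(f,g)$ are immediate, and by the strictness conventions of Appendix~\ref{app:monoidal_structures_and_strictness} I may treat the latter as identities. For the relative structure I would take $\CF_\Desc(Z) \subset \CF(Z)_\sim$ to be the wide subgroupoid whose morphisms are the unitary parallel $2$-isomorphisms, modelled on the fact that $\HVBdl^\nabla$ is a relative sheaf of categories with distinguished subsheaf $\HVBdl^\nabla_\rmuni$.

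The heart of the proof is to verify that for each covering $\omega \colon W \to Z$ the ascent functor $\Asc_\omega$ is valued in the relative descent category $\Desc(\CF,\CF_\Desc,\omega)$ and is an equivalence. Landing in the relative descent category is straightforward: the canonical gluing isomorphism attached to a pulled-back $1$-morphism is, fibrewise, the identification of coinciding fibres, hence unitary and parallel. Essential surjectivity is the main step. Given a descent datum---a $1$-morphism $(E_W,\nabla^{E_W},\alpha_W)$ over $W \to Y_{01}$ together with a unitary parallel isomorphism $g \colon \pr_1^* E_W \to \pr_0^* E_W$ over $W {\times}_Z W$ satisfying the cocycle relation and compatible with $\alpha_W$---I would first descend the hermitean vector bundle with connection $E_W$ along $\omega$ to a bundle $(E,\nabla^E)$ on $Z$, using that $\HVBdl^\nabla$ is a relative sheaf of categories. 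The isomorphism $\alpha_W$ lives over $W^{[2]}$, and $g$ supplies, by pullback along the two projections $W^{[2]} \to W {\times}_Z W$, the gluing data identifying $\alpha_W$ with a descent morphism for $\HVBdl^\nabla$ over the covering $\omega^{[2]} \colon W^{[2]} \to Z^{[2]}$; the compatibility of $g$ with $\alpha_W$ is exactly the descent-morphism condition. Applying the sheaf property of $\HVBdl^\nabla$ on $Z^{[2]}$ then produces a unitary parallel $\alpha$ over $Z^{[2]}$ with $\omega^{[2]*}\alpha = \alpha_W$. Finally I would check that $\alpha$ inherits the compatibility~\eqref{eq:1-morphisms_compatibility_with_BGrb_multiplications} with $\mu_0$ and $\mu_1$: the identity holds for $\alpha_W$ over $W^{[3]}$, both sides are pullbacks along the surjection $\omega^{[3]} \colon W^{[3]} \to Z^{[3]}$ of the corresponding expressions in $\alpha$, and surjectivity lets the relation descend to $Z^{[3]}$.

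For full faithfulness I would invoke the descent of $2$-morphisms. By Proposition~\ref{st:2-morphisms_have_simple_representatives} a $2$-morphism over $Z$ is a single morphism of hermitean vector bundles over the minimal surjective submersion compatible with the $\alpha$'s, and Lemma~\ref{st:dd_and_2-morphisms} recasts this compatibility as the statement that such a morphism is a descent morphism for $\HVBdl^\nabla$ relative to the $\dd_{(-)}$-gluing data of Lemma~\ref{st:dd-def_and_properties}. Matching $2$-morphisms on the cover that agree over double overlaps therefore glue uniquely, again by the sheaf property of $\HVBdl^\nabla$; this gives the required bijection on $2$-morphism sets. Assembling these three verifications yields that $\Asc_\omega$ is an equivalence into the relative descent category, so $\BGrb^\nabla_\sh(M)((\CG_0,\nabla^{\CG_0}),(\CG_1,\nabla^{\CG_1}))$ is a relative sheaf of categories on $Y_{01}$.

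The step I expect to be the main obstacle is the descent of $\alpha$ in the essential-surjectivity argument: it requires juggling the several fibre products $W {\times}_Z W$, $\omega^{[2]} \colon W^{[2]} \to Z^{[2]}$ and $\omega^{[3]} \colon W^{[3]} \to Z^{[3]}$ simultaneously, and checking that the descent-morphism condition for $\HVBdl^\nabla$ on $Z^{[2]}$ is precisely the compatibility between the gluing $g$ and $\alpha_W$, while the bundle gerbe multiplications relate the data on $Z^{[2]}$ and $Z^{[3]}$ coherently. The bookkeeping of pullbacks across these different bases---rather than any deep new idea---is where the care is needed.
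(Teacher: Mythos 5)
Your proposal is correct in substance, but it is organised quite differently from the paper's proof. You verify the relative sheaf condition covering by covering: for each covering $\omega \colon W \to Z$ you descend the vector bundle along $\omega$, descend $\alpha$ along $\omega^{[2]} \colon W^{[2]} \to Z^{[2]}$, recover the compatibility with $\mu_0, \mu_1$ over $Z^{[3]}$ from the surjectivity of $\omega^{[3]}$, and glue 2-morphisms by the same principle. The paper never touches the covering $W \to Z$ at all: it constructs, for every object $\zeta \colon Z \to Y_{01}$ of the site, a descent functor $\Desc_\zeta \colon \CF(Z) \to \CF(Y_{01})$ pushing every 1-morphism and 2-morphism all the way down to $Y_{01}$, with weak inverse the pullback $\zeta^*$; the sheaf condition then follows formally, since a presheaf of categories all of whose restriction functors are coherently equivalences satisfies descent. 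The observation enabling that route is that the gluing data needed for this global descent is canonically contained in every 1-morphism itself, namely $\dd_{(E,\alpha)}$ of Lemma~\ref{st:dd-def_and_properties}, so no auxiliary descent datum is ever chosen — whereas you use the externally given datum $g$ to descend one step only. Both arguments rest on the same ingredients (Lemmas~\ref{st:dd-def_and_properties} and~\ref{st:dd_and_2-morphisms}, Proposition~\ref{st:2-morphisms_have_simple_representatives}, and the sheaf property of $\HVBdl^\nabla$); the paper's version buys brevity and produces the equivalence with $\BGrb^\nabla_\FP(M)$ (Corollary~\ref{st:BGrb_FP_hookrightarrow_BGrb_is_equivalence}) as part of the same construction, while yours is a literal verification of Definition~\ref{def:relative_sheaf_of_Cats}, at the cost of exactly the fibre-product bookkeeping you flag.

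One point you should tighten: in the relative descent category the gluing datum $g$ is, by Definition~\ref{def:2-morphisms_of_BGrbs}, a unitary parallel 2-isomorphism of bundle gerbe morphisms, whose unique simple representative lives over $(W \times_Z W) \times_{Y_{01}} (W \times_Z W)$, not a bundle map over $W \times_Z W$ as you write. The two descriptions are interchangeable — restriction to the diagonal in one direction, reconstruction via the $\dd$-morphisms in the other, using that $\dd$ restricts to the identity on diagonals as in the proof of Proposition~\ref{st:2-morphisms_have_simple_representatives} — but this translation, together with the check that it carries the 2-categorical cocycle condition to the bundle-level cocycle over $W \times_Z W \times_Z W$, is a step your argument needs to spell out before the sheaf property of $\HVBdl^\nabla$ can be invoked in the essential-surjectivity part.
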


\begin{proof}
From Lemma~\ref{st:dd-def_and_properties} (1) we see that any bundle gerbe morphism $(E, \nabla^E, \alpha, Z, \zeta)$ gives rise to a descent vector bundle $\Desc_\zeta(E,\nabla^E) \in \HVBdl^\nabla(Y_{01})$, stemming from the descent data defined using $\dd_{(E,\alpha)}$.
Statement (2) of the same lemma implies that there is a morphism $\Desc_\zeta(\alpha) \colon \pr_{Y_0}^{[2]*}L_0 \otimes d_1^*\Desc_\zeta(E,\nabla^E) \to d_0^*\Desc_\zeta(E,\nabla^E) \otimes \pr_{Y_0}^{[2]*} L_1$ over $Y_{01}$.
The compatibility of $\Desc_\zeta$ with the $\mu_i$ follows from the functoriality of $\Desc_\zeta$, which transforms the required identity into $\Desc_\zeta$ applied to the compatibility relation for $\alpha$ itself, together with the fact that $\Desc_\zeta$ is fully faithful.
Proposition~\ref{st:2-morphisms_have_simple_representatives} first implies that any 2-morphism $(E,\alpha) \to (E', \alpha')$ of morphisms defined over $\omega \colon W \to \hat{Z} = Z {\times}_{Y_{01}} Z'$ has a representative $\phi$ over $\hat{Z}$, and Lemma~\ref{st:dd_and_2-morphisms} together with the full faithfulness of $\Desc_\zeta$ then shows that $[\hat{Z}, 1, \Desc_\zeta \phi]$ still provides a 2-morphism of bundle gerbes.
The functor that sends $(F, \nabla^{F}, \beta, Y_{01}, 1_{Y_{01}})$ to $(\zeta^*(F,\nabla^{F}), \zeta^{[2]*}\beta, \zeta, Z)$ provides a weak inverse for the descent functor described above.
\end{proof}

We denote by $\BGrb_\FP^\nabla(M)((\CG_0,\nabla^{\CG_0}), (\CG_1,\nabla^{\CG_1})) \subset \BGrb^\nabla(M)((\CG_0,\nabla^{\CG_0}), (\CG_1,\nabla^{\CG_1}))$ the full subcategory which has 1-morphisms defined only over $Y_{01}$.
In~\cite{Waldorf--Thesis}, this is also required to have only those 2-morphisms which are defined over $Y_{01}^{[2]}$, but we know already from Proposition~\ref{st:2-morphisms_have_simple_representatives} that this is the case for all 2-morphisms.

\begin{corollary}[{\cite[Theorem 2.4.1]{Waldorf--Thesis}}]
\label{st:BGrb_FP_hookrightarrow_BGrb_is_equivalence}
The inclusion
\begin{equation}
	\sfS: \BGrb_\FP^\nabla(M) \big( (\CG_0,\nabla^{\CG_0}), (\CG_1,\nabla^{\CG_1}) \big) \overset{\cong}{\hookrightarrow} \BGrb^\nabla(M) \big( (\CG_0,\nabla^{\CG_0}), (\CG_1,\nabla^{\CG_1}) \big)
\end{equation}
is an equivalence of categories.
It has a canonical inverse functor $\sfR$ induced by the descent functor of $\HVBdl^\nabla$ as in the proof of Theorem~\ref{st:sheaf of categories_of_BGrb_morphisms}.
\end{corollary}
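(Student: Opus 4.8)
The statement to prove is Corollary~\ref{st:BGrb_FP_hookrightarrow_BGrb_is_equivalence}: that the inclusion $\sfS \colon \BGrb_\FP^\nabla(M)((\CG_0,\nabla^{\CG_0}),(\CG_1,\nabla^{\CG_1})) \hookrightarrow \BGrb^\nabla(M)((\CG_0,\nabla^{\CG_0}),(\CG_1,\nabla^{\CG_1}))$ is an equivalence of categories, with weak inverse $\sfR$ induced by descent.

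The plan is to deduce the corollary directly from Theorem~\ref{st:sheaf of categories_of_BGrb_morphisms}, which established that the assignment sending a surjective submersion $\zeta \colon Z \to Y_{01}$ to the bundle gerbe morphisms over $\zeta$ forms a relative sheaf of categories on $Y_{01}$. The essential point is that $\BGrb_\FP^\nabla(M)$ is precisely the value of this sheaf of categories on the \emph{trivial} covering $1_{Y_{01}} \colon Y_{01} \to Y_{01}$, while a general $1$-morphism in $\BGrb^\nabla(M)$ lives over an arbitrary $\zeta \colon Z \to Y_{01}$.

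First I would define the candidate inverse functor $\sfR$ explicitly. On objects, $\sfR$ sends a $1$-morphism $(E,\nabla^E,\alpha,Z,\zeta)$ to $(\Desc_\zeta(E,\nabla^E), \Desc_\zeta(\alpha), Y_{01}, 1_{Y_{01}})$, where $\Desc_\zeta(E,\nabla^E) \in \HVBdl^\nabla(Y_{01})$ is the descent bundle built from the descent datum $\dd_{(E,\alpha)}$ of Lemma~\ref{st:dd-def_and_properties}(1), and $\Desc_\zeta(\alpha)$ is the descended structural isomorphism guaranteed by part~(2) of the same lemma. On $2$-morphisms, using Proposition~\ref{st:2-morphisms_have_simple_representatives} I would first replace an arbitrary representative $[W,\omega,\psi]$ by its unique representative $[Z {\times}_{Y_{01}} Z', 1, \phi]$ over the minimal surjective submersion, and then set $\sfR[Z {\times}_{Y_{01}} Z', 1, \phi] = [Y_{01}, 1_{Y_{01}}, \Desc_\zeta \phi]$; Lemma~\ref{st:dd_and_2-morphisms} guarantees that $\phi$ really is a descent morphism, so $\Desc_\zeta \phi$ is well-defined, and full faithfulness of $\Desc_\zeta$ ensures it satisfies the compatibility condition~\eqref{eq:2-morphism_compatibility_with_alphas} with $\Desc_\zeta(\alpha)$ and $\Desc_\zeta(\alpha')$.

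Next I would verify the two composites. Since $\sfS$ is the inclusion, $\sfR \circ \sfS$ acts by descending a morphism that is already defined over $1_{Y_{01}}$; because $\dd_{(E,\alpha)}$ restricts to the identity on the diagonal (as used in the proof of Proposition~\ref{st:2-morphisms_have_simple_representatives}), the descent of an object over $1_{Y_{01}}$ returns that object up to canonical isomorphism, so $\sfR \circ \sfS \cong \mathrm{id}$. For $\sfS \circ \sfR$, the natural isomorphism to the identity is exactly the ascent-descent comparison: the functor sending $(F,\nabla^F,\beta,Y_{01},1_{Y_{01}})$ to $(\zeta^*(F,\nabla^F),\zeta^{[2]*}\beta,Z,\zeta)$ provides a weak inverse to $\Desc_\zeta$, as spelled out at the end of the proof of Theorem~\ref{st:sheaf of categories_of_BGrb_morphisms}. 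I expect the main obstacle to be purely bookkeeping: one must check that the comparison $2$-isomorphisms are natural and compatible with the structural data $\alpha$ and $\beta$, which amounts to chasing $\dd_{(E,\alpha)}$ through the equivalence and invoking the full faithfulness of $\Desc_\zeta$ repeatedly. No deep argument is needed beyond the fact, established in Theorem~\ref{st:sheaf of categories_of_BGrb_morphisms}, that $\HVBdl^\nabla$ is a relative sheaf of categories; the corollary is essentially the restriction of that sheaf property to the terminal covering of $Y_{01}$.
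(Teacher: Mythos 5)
Your proposal is correct and follows essentially the same route as the paper: the paper states this corollary as an immediate consequence of Theorem~\ref{st:sheaf of categories_of_BGrb_morphisms}, whose proof constructs exactly the descent functor $\sfR$ (via Lemma~\ref{st:dd-def_and_properties}, Lemma~\ref{st:dd_and_2-morphisms} and Proposition~\ref{st:2-morphisms_have_simple_representatives}) and exhibits the pullback functor as its weak inverse. Your write-up merely makes explicit the two composite isomorphisms $\sfR \circ \sfS \cong \mathrm{id}$ and $\sfS \circ \sfR \cong \mathrm{id}$ that the paper leaves implicit, which is a faithful unpacking rather than a different argument.
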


\section{Mutual surjective submersions}
\label{app:BGrbs_with_mutual_surjective_submersions}

Consider two bundle gerbes $(\CG_i, \nabla^{\CG_i}) = (L_i, \nabla^{L_i}, \mu_i, B_i, Y_i, \pi_i) \in \BGrb^\nabla(M)$, for $i = 0,1$, in the special case where $(Y_0, \pi_0) = (Y_1, \pi_1) = (Y,\pi)$, i.e. where the source and target bundle gerbes are defined over the same surjective submersion.
For a 1-morphism of the form $(E, \nabla^E, \alpha, Y^{[2]}, 1_{Y^{[2]}}) \in \BGrb^\nabla(M)((\CG_0, \nabla^{\CG_0}), (\CG_1, \nabla^{\CG_1}))$ we set
\begin{equation}
	\sfF (E, \nabla^E, \alpha, Y^{[2]}, 1_{Y^{[2]}})
	= \big( s_0^*(E, \nabla^E),\, s_0^{[2]*} \alpha \big)\,,
\end{equation}
where $s_0 \colon Y \to Y^{[2]}$ is the diagonal map.
This comes with a unitary, parallel isomorphism of hermitean vector bundles with connection
\begin{equation}
	s_0^{[2]*} \alpha \colon L_0 \otimes d_0^*(s_0^*E) \to d_1^*(s_0^*E) \otimes L_1
\end{equation}
over $Y^{[2]}$, which satisfies the relation
\begin{equation}
\label{eq:reduced_1-morphism_condition}
	(1 \otimes \mu_1) \circ \big( d_2^*(s_0^{[2]*} \alpha) \otimes 1 \big) \circ \big( 1 \otimes d_0^* (s_0^{[2]*} \alpha) \big)
	= d_1^*(s_0^{[2]*} \alpha) \circ (\mu_0 \otimes 1)
\end{equation}
over $Y^{[3]}$.
This is the pullback along $s_0^{[3]} \colon Y^{[3]} \to (Y^{[2]})^{[3]} \cong Y^{[6]}$ of the compatibility relation~\eqref{eq:1-morphisms_compatibility_with_BGrb_multiplications}.
Similarly, for $(E', \nabla^{E'}, \alpha', Y^{[2]}, 1_{Y^{[2]}}) \in \BGrb^\nabla(M)((\CG_0, \nabla^{\CG_0}), (\CG_1, \nabla^{\CG_1}))$ and a 2-morphism $[Y^{[2]}, 1_{Y^{[2]}}, \psi] \colon (E, \alpha) \to (E', \alpha')$, we set
\begin{equation}
	\sfF[Y^{[2]}, 1_{Y^{[2]}}, \psi] = s_0^*\psi \colon s_0^*(E,\nabla^E) \to s_0^*(E', \nabla^{E'})\,.
\end{equation}
This satisfies
\begin{equation}
\label{eq:reduced_2-morphism_condition}
	(d_1^*\psi \otimes 1) \circ (s_0^{[2]*} \alpha)
	= (s_0^{[2]*} \alpha') \circ (1 \otimes d_0^*\psi)
\end{equation}
over $Y^{[2]}$, which is the pullback of~\eqref{eq:2-morphism_compatibility_with_alphas} along $s_0^{[2]} \colon Y^{[2]} \to Y^{[4]}$.

\begin{definition}[Twisted vector bundles]
Let $(\CG_i, \nabla^{\CG_i}) = (L_i, \nabla^{L_i}, \mu_i, B_i, Y_i, \pi_i) \in \BGrb^\nabla(M)$, for $i = 0,1$, with $(Y_0, \pi_0) = (Y_1, \pi_1) = (Y,\pi)$.
We define a category with objects being triples $(F, \nabla^F, \beta)$, where $(F,\nabla^F) \in \HVBdl^\nabla(Y)$ and $\beta$ is a parallel unitary isomorphism $L_0 \otimes d_0^*F \to d_1^*F \otimes L_1$, satisfying~\eqref{eq:reduced_1-morphism_condition}, and morphisms $(F, \nabla^F, \beta) \to (F', \nabla^{F'}, \beta')$ given by morphisms of hermitean vector bundles $\phi \colon F \to F'$ satisfying~\eqref{eq:reduced_2-morphism_condition}.
This category is denoted $\HVBdl^\nabla((\CG_0, \nabla^{\CG_0}), (\CG_1, \nabla^{\CG_1}))$ and we refer to it as the category of \emph{$((\CG_0, \nabla^{\CG_0}){-}(\CG_1, \nabla^{\CG_1}))$-twisted hermitean vector bundles with connections}.
\end{definition}

The nomenclature is in accordance with previous literature on twisted K-theory (see, for instance, \cite{Karoubi--Twisted_bundles_and_twisted_K-theory}).
We have constructed a functor
\begin{equation}
	\sfF \colon \BGrb^\nabla_\FP(M)((\CG_0, \nabla^{\CG_0}), (\CG_1, \nabla^{\CG_1})) \to \HVBdl^\nabla((\CG_0, \nabla^{\CG_0}), (\CG_1, \nabla^{\CG_1}))\,,
\end{equation}
and we obtain a functor $\sfG$ in the opposite direction by setting
\begin{equation}
	\sfG (F, \nabla^F, \beta) = \big( (L_0, \nabla^{L_0}) \otimes d_0^*(F, \nabla^F), \sfG\beta, Y^{[2]}, 1_{Y^{[2]}} \big)\,,
	\quad
	\sfG \phi = 1 \otimes d_0^*\phi\,,
\end{equation}
where $\sfG \beta$ is the composition
\begin{equation}
	\begin{tikzcd}[column sep=2cm, row sep=1.5cm]
		L_{0|(y_0, y_2)} \otimes (L_{0|(y_2, y_3)} \otimes F_{|y_3}) \ar[r, "\mu_{0|(y_0,y_2,y_3)} \otimes 1"] \ar[d, dashed, "{\sfG \beta}"']& L_{0|(y_0,y_3)} \otimes F_{|y_3} \ar[d, "\mu_{0|(y_0,y_1,y_3)}^{-1} \otimes 1"]
		\\
		(L_{0|(y_0,y_1)} \otimes F_{|y_1}) \otimes L_{1|(y_1,y_3)} & L_{0|(y_0, y_1)} \otimes L_{0|(y_1, y_3)} \otimes F_{|y_3} \ar[l, "1 \otimes \beta_{|(y_1,y_3)}"]
	\end{tikzcd}
\end{equation}
for $y_0,\ldots,y_3 \in Y^{[4]}$.
The isomorphism $t_{\mu_0}$ induces a natural isomorphism $\sfF \circ \sfG \to 1$, while a natural isomorphism $\eta \colon \sfG \circ \sfF \to 1$ is provided by $\eta_{(E,\alpha)} = (1 \otimes t_{\mu_1}^{-1}) \circ \alpha$ (with pullbacks from $Y {\times}_M (Y {\times}_Y Y {\times}_Y Y) \subset Y^{[4]}$ omitted).
Thus, we conclude the following simpler description of the morphism categories of bundle gerbes over a mutual surjective submersion in terms of twisted vector bundles:

\begin{proposition}
\label{st:morphism_categories_and_twisted_HVBdls_for_same_sur_sub}
For two bundle gerbes with connections $(\CG_i, \nabla^{\CG_i})$ on $M$ which are defined with respect to the same surjective submersions, there are equivalences of categories
\begin{equation}
	\BGrb^\nabla(M)(\CG_0, \CG_1)
	\cong \BGrb^\nabla_\FP(M)(\CG_0, \CG_1)
	\cong \HVBdl^\nabla(\CG_0, \CG_1)
\end{equation}
(where we have refrained from displaying the connections for spatial reasons).
In particular,
\begin{equation}
	\BGrb^\nabla(M)(\CI_0, \CI_0) \cong \HVBdl^\nabla(M)\,.
\end{equation}
\end{proposition}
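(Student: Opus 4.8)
The plan is to assemble the two claimed equivalences from material already in place, so that almost nothing new needs to be proved. The first equivalence, $\BGrb^\nabla(M)(\CG_0,\CG_1) \cong \BGrb^\nabla_\FP(M)(\CG_0,\CG_1)$, is exactly Corollary~\ref{st:BGrb_FP_hookrightarrow_BGrb_is_equivalence} (the inclusion $\sfS$ with canonical inverse $\sfR$), and it holds for arbitrary pairs of bundle gerbes. Hence the entire content of the common-submersion hypothesis $(Y_0,\pi_0)=(Y_1,\pi_1)=(Y,\pi)$ is carried by the second equivalence $\BGrb^\nabla_\FP(M)(\CG_0,\CG_1) \cong \HVBdl^\nabla(\CG_0,\CG_1)$, which I would establish via the explicit functors $\sfF$ and $\sfG$ constructed above, together with the two natural isomorphisms already named.

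First I would confirm that $\sfF$ and $\sfG$ are genuinely well-defined functors. Since $s_0^*$ preserves hermitean metrics and connections, $\sfF$ lands in hermitean vector bundles with connection; the structural condition is that $s_0^{[2]*}\alpha$ satisfies \eqref{eq:reduced_1-morphism_condition} and $s_0^*\psi$ satisfies \eqref{eq:reduced_2-morphism_condition}, and both follow by pulling back the defining relations \eqref{eq:1-morphisms_compatibility_with_BGrb_multiplications} and \eqref{eq:2-morphism_compatibility_with_alphas} along the diagonals $s_0^{[3]}\colon Y^{[3]} \to Y^{[6]}$ and $s_0^{[2]}\colon Y^{[2]} \to Y^{[4]}$, as indicated in the construction. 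For $\sfG$ the point is to verify that the composite $\sfG\beta$, defined by the displayed square, satisfies the full condition \eqref{eq:1-morphisms_compatibility_with_BGrb_multiplications} over $(Y^{[2]})^{[3]}$; this is a diagram chase using the associativity of $\mu_0$ (the bundle gerbe axiom) together with the cocycle relation \eqref{eq:reduced_1-morphism_condition} for $\beta$, while functoriality on morphisms is immediate from $\sfG\phi = 1 \otimes d_0^*\phi$.

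Next I would check that the two candidate natural transformations are isomorphisms. For $\sfF\circ\sfG \to 1$ I would apply $t_{\mu_0}\colon s_0^*L_0 \to I_0$ from Lemma~\ref{st:t_mu-lemma} fibrewise: unwinding $\sfF(\sfG(F,\nabla^F,\beta))$ and inserting $t_{\mu_0}$ identifies it with $(F,\nabla^F,\beta)$, with naturality following since $t_{\mu_0}$ is a fixed structural isomorphism. For $\sfG\circ\sfF \to 1$ the stated $\eta_{(E,\alpha)} = (1 \otimes t_{\mu_1}^{-1}) \circ \alpha$ is a composite of unitary parallel isomorphisms, hence invertible; the substantive check is its compatibility with $\alpha$ and $\sfG(s_0^{[2]*}\alpha)$ in the sense of \eqref{eq:2-morphism_compatibility_with_alphas}, which reduces to the identities \eqref{eq:t_mu--properties} for $t_{\mu_0},t_{\mu_1}$ and the associativity of the $\mu_i$. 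The hard part will be precisely this last verification: the bookkeeping of pullbacks along the various face maps of $Y^{[4]}$ is genuinely fiddly, though no idea beyond Lemma~\ref{st:t_mu-lemma} is required.

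Finally, for the special case I would set $\CG_0 = \CG_1 = \CI_0$, so that $(Y,\pi) = (M,1_M)$ and every fibre product $Y^{[n]}$ collapses to $M$ with all face maps equal to the identity. A $(\CI_0\text{--}\CI_0)$-twisted vector bundle $(F,\nabla^F,\beta)$ then has $L_0 = L_1 = I_0$ and $\mu_0 = \mu_1 = m$, so under the canonical identifications $\beta$ is an automorphism of $F$ and \eqref{eq:reduced_1-morphism_condition} collapses (using $t_m = 1$) to $\beta\circ\beta = \beta$, forcing $\beta = 1_F$. Thus $\HVBdl^\nabla(\CI_0,\CI_0) \cong \HVBdl^\nabla(M)$, and composing with the two equivalences above yields $\BGrb^\nabla(M)(\CI_0,\CI_0) \cong \HVBdl^\nabla(M)$, as claimed.
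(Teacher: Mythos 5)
Your proposal is correct and takes essentially the same route as the paper: the first equivalence is exactly Corollary~\ref{st:BGrb_FP_hookrightarrow_BGrb_is_equivalence}, and the second is carried by the functors $\sfF$ and $\sfG$ together with the natural isomorphisms induced by $t_{\mu_0}$ and $\eta_{(E,\alpha)} = (1 \otimes t_{\mu_1}^{-1}) \circ \alpha$, which is precisely how the paper concludes. Your explicit collapse of the twisted-bundle condition for $\CG_0 = \CG_1 = \CI_0$ to $\beta \circ \beta = \beta$, hence $\beta = 1_F$, correctly supplies the ``in particular'' step that the paper leaves implicit.
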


Note that the equivalences in Proposition~\ref{st:morphism_categories_and_twisted_HVBdls_for_same_sur_sub} restrict to $\BGrb^\nabla_{\rmflat}(M)$, $\BGrb^\nabla_{\rmflat \sim}(M)$, or $\BGrb(M)$, with the respective modifications of the target category of twisted vector bundles.

Consider a family $(\CG_i, \nabla^{\CG_i})$, $i = 0, \ldots, m$, of bundle gerbes over $M$, all defined with respect to a mutual surjective submersion $\pi \colon Y \to M$.
Let $\Lambda \in \Set$, and write $Y^{[\Lambda]}$ for the $\Lambda$-indexed fibre product of $Y$ over $M$.
We assume there are unitary, parallel isomorphisms $\gamma_{ij} \colon L_j \to L_i$ in $\HLBdl^\nabla(Y^{[2]})$ which satisfy a cocycle condition and intertwine the bundle gerbe multiplications, i.e. we have
\begin{equation}
\label{eq:compatibility relations for naive morphisms}
	\gamma_{ij} \circ \gamma_{jk} = \gamma_{ik}\,, \qquad 
	\mu_i \circ (p_{01}^*\gamma_{ij} \otimes p_{12}^*\gamma_{ij}) = p_{02}^*\gamma_{ij} \circ \mu_j \quad \forall\, i,j,k \in \{0, \ldots, m\}\,.
\end{equation}
We define an auxiliary 2-category $\scC_\Lambda$ as follows.
First, we set $\obj(\scC_\Lambda) = \Lambda$.
For $a,b \in \Lambda$ and $n \in \NN_0$, a 1-morphism in $\scC_\Lambda(a,b)$ is a chain
\begin{equation}
	p_{a,c_0}^*L_{i_0} \otimes p_{c_0, c_1}^*L_{i_1} \otimes \cdots \otimes p_{c_{n-1}, c_n}^*L_{i_n} \otimes p_{c_n, b}^*L_{i_{n+1}} \in \HLBdl^\nabla(Y^{[\Lambda]})\,,
\end{equation}
for $i_0, \ldots, i_{n+1} \in \{0, \ldots, m \}$ and $c_0, \ldots, c_{n} \in \Lambda$, or a one-element chain $p_{ab}^*L_i$.%
\footnote{If we were adhering to the weak associativity of the tensor product here, we should allow for different bracketings in the 1-morphisms, and associators in the 2-morphisms.}
The identity 1-morphism on $a$ is $p_{aa}^*L_0$.
As 2-morphisms between these chains we allow any morphisms of hermitean line bundles over $Y^{[\Lambda]}$ constructed from tensor products and compositions of pullbacks of the $\gamma_{ij}$ as well as the $\mu_i$ and identity morphisms of hermitean line bundles.
Composition of 1-morphisms and horizontal composition of 2-morphisms is given by concatenation of chains, vertical composition is induced from that in $\HLBdl^\nabla(Y^{[\Lambda]})$.
In particular, $\scC_\Lambda$ is a 2-groupoid, i.e. has only invertible 1-morphisms and 2-morphisms.
The crucial observation here is that because of the associativity of the bundle gerbe multiplications and their compatibility with the $\gamma_{ij}$, for every 1-morphism $a \to b$ there is a unique 2-isomorphism to $p_{ab}^*L_0$.
Consequently, any two 1-morphisms in $\scC_\Lambda$ are 2-isomorphic by a unique 2-isomorphism, which is any composition of the $\gamma_{ij}$ and $\mu_i$ with the correct domain and codomain 1-morphisms.

Define a strict auxiliary 2-category as follows.
Let $\scC^\rmd_\Lambda$ denote the discrete 2-category over the free category on $\Lambda$.
That is, $\scC^\rmd_\Lambda$ has objects $a \in \Lambda$.
A 1-morphism $a \to b$ in $\scC^\rmd_\Lambda$ is a finite word $(a, a'_0, \ldots, a'_k, b) = (a, \vec{a}', b)$ for $k \in \NN$, or the word $(a,b)$.
The identity 1-morphism on $a$ is the word $(a,a)$, and composition is given by $(b, \vec{b}', c) \circ (a, \vec{a}', b) = (a, \vec{a}', \vec{b}', c)$.
All 2-morphisms are identities.
We then have a 2-functor
\begin{equation}
	\widehat{L} \colon \scC^\rmd_\Lambda \to \scC_\Lambda\,, \quad
	a \mapsto a\,, \quad
	(a, a'_0, \ldots, a'_k, b) \mapsto p_{a, a'_0}^*L_0 \otimes \ldots \otimes p_{a'_k, b}^*L_0\,,
\end{equation}
and sending identity 2-morphisms to identities 2-morphisms.
This is not a strong 2-functor: there are natural 2-isomorphisms
\begin{equation}
	\widehat{L} \big( (b, \vec{b}', c) \circ (a, \vec{a}', b) \big) \to \widehat{L}(b, \vec{b}', c) \circ \widehat{L}(a, \vec{a}', b)
\end{equation}
given by the unique 2-morphisms between these 1-morphisms.
Because of the coherence conditions on the $\gamma_{ij}$ and the $\mu_i$, $\widehat{L}$ is fully faithful and essentially surjective in the 2-categorical sense, and, thus, a 2-equivalence~\cite{Leinster--Basic_bicategories}.
This is very similar to the statement of Mac Lane's famous coherence theorem~\cite[Section VII.2, Theorem 2]{ML--Categories_for_the_working_mathematician}.

With the above constructions at hand, we turn to \emph{naive isomorphisms}.
These consist of isomorphisms of the line bundles over $Y^{[2]}$ underlying the bundle gerbes that are compatible with the bundle gerbe multiplications in the sense of~\eqref{eq:compatibility relations for naive morphisms}.

\begin{proposition}
\label{st:From_naive_isomorphisms_to_1-isomorphisms--same_sursub}
Let $(\CG_i, \nabla^{\CG_i}) \in \BGrb^\nabla(M)$, $i = 0,1,2,3$ be defined over a mutual surjective submersion $(Y_i, \pi_i) = (Y, \pi)$.
\begin{myenumerate}
	\item If $\gamma \colon L_0 \to L_1$ is an isomorphism in $\HLBdl^\nabla_\rmuni(Y^{[2]})$ which satisfies $\mu_1 \circ (p_{01}^*\gamma \otimes p_{12}^*\gamma) = p_{02}^*\gamma \circ \mu_0$ over $Y^{[3]}$, there exists a 1-isomorphism canonically associated to $\gamma$, given by
	\begin{equation}
		\sfJ(\gamma) \coloneqq \big( L_0, \nabla^{L_0}, (1 \otimes p_{13}^*\gamma) \circ p_{013}^*\mu_0^{-1} \circ p_{023}^*\mu_0, Y^{[2]}, \pi_M \big)\,.
	\end{equation}
	Note that $\sfJ(1_{L_i}) = 1_{\CG_i}$.
	
	\item If $\epsilon \colon L_1 \to L_2$ is a second such isomorphism, there exists a canonical 2-isomorphism
		\begin{equation}
			j_{\epsilon, \gamma} \colon \sfJ(\epsilon) \circ \sfJ(\gamma) \to \sfJ(\epsilon \circ \gamma)\,, \quad 
			j_{\epsilon, \gamma} = \big( Y^{[3]}, 1_{Y^{[3]}}, \mu_0 \circ (p_{12}^*\gamma^{-1} \otimes 1) \big)\,.
		\end{equation}
	For $\rho \colon L_2 \to L_3$ another isomorphism as above, this satisfies
	\begin{equation}
		j_{\rho, \epsilon \circ \gamma} \circ_2 (1_{J(\rho)} \circ_1 j_{\epsilon, \gamma}) = j_{\rho \circ \epsilon, \gamma} \circ_2 (j_{\rho, \epsilon} \circ_1 1_{J(\gamma)})\,.
	\end{equation}
	
	\item Moreover, with $\lambda$ and $\rho$ denoting the left and right unitors in $\BGrb^\nabla(M)$, respectively (see Example~\ref{eg:2-morphisms--new_from_old_identity_unitors}), we have
	\begin{equation}
			j_{\gamma, 1_{L_0}} = \lambda_{J(\gamma)}\,, \qquad
			j_{1_{L_1}, \gamma} = \rho_{J(\gamma)}\,.
	\end{equation}
\end{myenumerate}
\end{proposition}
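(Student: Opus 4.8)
The plan is to treat the three parts in turn, exploiting throughout the fact that the source and target bundle gerbes share the surjective submersion $(Y,\pi)$, so that every relevant fibre product is an iterated fibre product $Y^{[k]}$ and all index bookkeeping reduces to the simplicial face maps $p_{i_0\ldots i_k}$.

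For part (1) I would first observe that $\sfJ(\gamma)$ is precisely the image under the functor $\sfG$ of Appendix~\ref{app:BGrbs_with_mutual_surjective_submersions} of the $((\CG_0,\nabla^{\CG_0})$--$(\CG_1,\nabla^{\CG_1}))$-twisted line bundle $(I_0,\gamma)$. Indeed, taking $F=I_0$ in the definition of $\sfG$ collapses $L_0\otimes d_0^*F$ to $L_0$ and turns the compatibility condition~\eqref{eq:reduced_1-morphism_condition} for the twisted bundle into exactly the intertwining relation~\eqref{eq:compatibility relations for naive morphisms} imposed on $\gamma$, while the composite defining $\sfG\beta$ reproduces the structure isomorphism $(1\otimes p_{13}^*\gamma)\circ p_{013}^*\mu_0^{-1}\circ p_{023}^*\mu_0$. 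Since $\sfG$ lands in $\BGrb^\nabla_\FP(M)$ by Proposition~\ref{st:morphism_categories_and_twisted_HVBdls_for_same_sur_sub}, this immediately certifies that $\sfJ(\gamma)$ is a genuine $1$-morphism, invertible because its underlying bundle $L_0$ has rank one. The identity $\sfJ(1_{L_i})=1_{(\CG_i,\nabla^{\CG_i})}$ is then read off by setting $\gamma=1_{L_i}$ and comparing the resulting structure isomorphism with the explicit identity $1$-morphism~\eqref{eq:identity_1-morphism_of_BGrb}. Alternatively one can verify~\eqref{eq:1-morphisms_compatibility_with_BGrb_multiplications} for $\sfJ(\gamma)$ directly: unitarity and parallelity are inherited from $\mu_0$ and $\gamma$, and the compatibility over $Y^{[4]}$ follows from the associativity of $\mu_0$ together with the intertwining property of $\gamma$.

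For part (2) I would spell out the composite $\sfJ(\epsilon)\circ\sfJ(\gamma)$ via the composition rule for $1$-morphisms: over $Y^{[2]}\times_Y Y^{[2]}\cong Y^{[3]}$ its underlying bundle is $p_{01}^*L_0\otimes p_{12}^*L_1$, whereas $\sfJ(\epsilon\circ\gamma)$ has underlying bundle $L_0$ over $Y^{[2]}$, pulling back to $p_{02}^*L_0$ along $Y^{[3]}\to Y^{[2]}$. The candidate $\psi=\mu_0\circ(p_{12}^*\gamma^{-1}\otimes 1)$ is manifestly a unitary, parallel isomorphism $p_{01}^*L_0\otimes p_{12}^*L_1\to p_{02}^*L_0$ of the correct type, so $j_{\epsilon,\gamma}=[Y^{[3]},1,\psi]$ is a $2$-isomorphism once one checks the defining compatibility~\eqref{eq:2-morphism_compatibility_with_alphas} with the structure isomorphisms of source and target. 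This check is the first substantive computation: after pulling everything back to the relevant $Y^{[k]}$ it becomes an identity among pullbacks of $\mu_0$, $\mu_1$, $\gamma$ and $\epsilon$, forced by the associativity axiom for $\mu_0$ combined with the two intertwining relations $\mu_1\circ(p_{01}^*\gamma\otimes p_{12}^*\gamma)=p_{02}^*\gamma\circ\mu_0$ and its analogue for $\epsilon$.

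The coherence relation in (2) is where I expect the main obstacle to lie. Here I would first replace all four $2$-morphisms by their unique simple representatives over the minimal fibre product (Proposition~\ref{st:2-morphisms_have_simple_representatives}), then evaluate the two sides using the vertical composition~\eqref{eq:vertical_composition_of_2-morphisms} and horizontal composition~\eqref{eq:horizontal_composition_of_2-morphisms} formulas; the latter involves the $\dd$-morphisms of Lemma~\ref{st:dd-def_and_properties}, which is where the bookkeeping becomes delicate. Both sides should collapse to a single morphism of line bundles over $Y^{[4]}$, and their equality is again a consequence of the associativity of $\mu_0$ and the intertwining of $\gamma,\epsilon,\rho$ with the multiplications. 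Finally, for part (3) I would use $\sfJ(1_{L_0})=1_{(\CG_0,\nabla^{\CG_0})}$ and $\sfJ(1_{L_1})=1_{(\CG_1,\nabla^{\CG_1})}$ from (1), so that $j_{\gamma,1_{L_0}}$ and $j_{1_{L_1},\gamma}$ have exactly the source and target of the unitors $\lambda_{\sfJ(\gamma)}$ and $\rho_{\sfJ(\gamma)}$ of Example~\ref{eg:2-morphisms--new_from_old_identity_unitors}; substituting $\gamma=1$, respectively $\epsilon=1$, into the formula for $\psi$ and comparing with the explicit unitor expressions of~\cite{Waldorf--Thesis} reduces the claim to the identities $d_1^*t_{\mu_0}\otimes 1=s_0^*\mu_0$ and $1\otimes d_0^*t_{\mu_0}=s_1^*\mu_0$ of Lemma~\ref{st:t_mu-lemma}.
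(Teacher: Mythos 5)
Your proposal is correct, but it takes a genuinely different route from the paper. The paper's proof is essentially a one-liner resting on the abstract coherence machinery set up immediately before the proposition: the auxiliary 2-categories $\scC_\Lambda$ and $\scC^\rmd_\Lambda$ and the Mac Lane-style fact that any two parallel 1-morphisms in $\scC_\Lambda$ are related by a \emph{unique} 2-isomorphism built from pullbacks of the $\mu_i$ and $\gamma_{ij}$; since every morphism appearing in (1)--(3) --- the composites, the $j$'s, and (the paper asserts) the unitors --- is of this form, all the required identities hold automatically, with no unwinding of composition formulas. You instead verify things by hand. Your part (1) is actually a nice sharpening of the paper's treatment: identifying $\sfJ(\gamma)$ as $\sfG(I_0,\gamma)$, where $(I_0,\gamma)$ is a twisted line bundle whose twisted-bundle condition~\eqref{eq:reduced_1-morphism_condition} collapses precisely to the intertwining relation, makes well-definedness immediate from Proposition~\ref{st:morphism_categories_and_twisted_HVBdls_for_same_sur_sub} and invertibility immediate from Proposition~\ref{st:classification_of_isomps_of_BGrbs}; the paper never makes this identification explicit even though $\sfG$ lives in the same appendix. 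For (2) and (3), however, you are signing up for exactly the bookkeeping the paper's coherence argument was built to avoid: the compatibility check for $j_{\epsilon,\gamma}$, the coherence identity (which via~\eqref{eq:horizontal_composition_of_2-morphisms} drags in the $\dd$-morphisms of Lemma~\ref{st:dd-def_and_properties}), and the unitor comparison in (3), which additionally requires the explicit unitor formulas from~\cite{Waldorf--Thesis} that the paper does not reproduce. These computations do go through --- associativity of $\mu_0$ and the intertwining relations are indeed the only inputs, exactly as you say --- but the $\scC_\Lambda$-uniqueness argument disposes of all of them, and of any similar identity among morphisms built from the generators, uniformly and without calculation; that scalability is precisely what the paper's construction buys, at the price of the abstract setup you bypass.
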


\begin{proof}
The proof is a consequence of the discussion preceding the proposition.
All consistencies which have to be checked amount to comparing morphisms constructed as compositions of tensor products of $\gamma_{ij}$ and $\mu_i$.
That is, we are only comparing morphisms in categories $\scC_\Lambda$, where $\Lambda$ has to be taken as an appropriate finite subset of $\NN$.
Regarding (2) and (3) it is important to note that the 1-morphisms involved are still of the form $(E,\alpha)$ with $E$ a tensor product of pullbacks of line bundles $L_i$ and $\alpha$ constructed as a composition of tensor products of $\mu_i$ and their inverses.
Hence, they still lie within the scope of the above abstract argument and are, hence, unique by the properties of $\scC_\Lambda$.
\end{proof}

If we denote by $\BGrb^\nabla(Y \overset{\pi}{\to} M)$ the symmetric monoidal groupoid of bundle gerbes on $M$ which have surjective submersion $(Y,\pi)$ and morphisms given by naive isomorphisms, we have thus proven the following theorem.

\begin{theorem}
The assignments $\sfJ$ and $j$ combine to form a 2-functor
\begin{equation}
	\BGrb^\nabla(Y \overset{\pi}{\to} M) \hookrightarrow \BGrb^\nabla(M)
\end{equation}
which includes the category of bundle gerbes defined over $(Y, \pi)$ with naive morphisms (after adding identity 2-morphisms) into the 2-category $\BGrb^\nabla(M)$.
\end{theorem}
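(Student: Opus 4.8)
The plan is to read off the entire structure of the claimed $2$-functor from Proposition~\ref{st:From_naive_isomorphisms_to_1-isomorphisms--same_sursub}, which already supplies its data together with the requisite coherence identities; the theorem is then a matter of matching those identities against the axioms of a weak $2$-functor. First I would specify the data. On objects the assignment is the literal inclusion, sending a bundle gerbe $(\CG, \nabla^\CG)$ defined over $(Y,\pi)$ to itself, viewed as an object of $\BGrb^\nabla(M)$. On $1$-morphisms, a naive isomorphism $\gamma \colon L_0 \to L_1$ is sent to the $1$-isomorphism $\sfJ(\gamma)$ of part~(1), and on the adjoined $2$-morphisms the identity $1_\gamma$ is sent to $1_{\sfJ(\gamma)}$. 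The compositor is taken to be the family $j_{\epsilon, \gamma} \colon \sfJ(\epsilon) \circ \sfJ(\gamma) \to \sfJ(\epsilon \circ \gamma)$ of part~(2), and since $\sfJ(1_{L_i}) = 1_{\CG_i}$ holds strictly by part~(1), the unit comparison $2$-isomorphisms may be taken to be identities.

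Next I would check that these assignments define functors on hom-categories. Because $\BGrb^\nabla(Y \overset{\pi}{\to} M)$ carries only identity $2$-morphisms, each hom-category of the source is discrete, so $\gamma \mapsto \sfJ(\gamma)$ automatically extends to a functor into $\BGrb^\nabla(M)((\CG_0, \nabla^{\CG_0}), (\CG_1, \nabla^{\CG_1}))$ once $\sfJ$ is well defined on $1$-cells, which is part~(1). For the same reason the naturality of the compositor $j$ is vacuous: there are no non-identity $2$-cells in the source against which to test it.

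The substance is the coherence of the compositor and unitor. Here I would use that composition of $1$-morphisms is strictly associative, both in the target $\BGrb^\nabla(M)$ (by Theorem~\ref{st:2-categories_of_bundle_gerbes} and the conventions of Appendix~\ref{app:monoidal_structures_and_strictness}) and in the source, where composition of naive isomorphisms is strict composition of line-bundle isomorphisms. With trivial associators on both sides, the associativity hexagon of a weak $2$-functor collapses to exactly the square identity $j_{\rho, \epsilon \circ \gamma} \circ_2 (1_{\sfJ(\rho)} \circ_1 j_{\epsilon, \gamma}) = j_{\rho \circ \epsilon, \gamma} \circ_2 (j_{\rho, \epsilon} \circ_1 1_{\sfJ(\gamma)})$ recorded in part~(2). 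The two unit triangles, which involve the left and right unitors $\lambda$ and $\rho$ of $\BGrb^\nabla(M)$, likewise reduce---using that the source unitors and the unit comparison are trivial---to the identities $j_{\gamma, 1_{L_0}} = \lambda_{\sfJ(\gamma)}$ and $j_{1_{L_1}, \gamma} = \rho_{\sfJ(\gamma)}$ of part~(3). Finally I would note that the resulting $2$-functor is an inclusion: it is the identity on objects, and $\sfJ$ is injective on $1$-cells since $\gamma$ is recoverable from the structural isomorphism of $\sfJ(\gamma)$.

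The main obstacle is essentially bookkeeping rather than mathematics: one must pin down a precise convention for the axioms of a weak $2$-functor and then verify, with care about the direction of the unitors $\lambda$ and $\rho$ as fixed in Example~\ref{eg:2-morphisms--new_from_old_identity_unitors}, that the hexagon and triangle axioms really do degenerate to the three identities provided by Proposition~\ref{st:From_naive_isomorphisms_to_1-isomorphisms--same_sursub}. The one genuinely structural input is the strict associativity of composition in $\BGrb^\nabla(M)$, without which the hexagon would not simplify to the stated square.
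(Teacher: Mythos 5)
Your proposal is correct and follows essentially the same route as the paper: the paper offers no separate argument for this theorem, simply declaring it "thus proven" on the strength of Proposition~\ref{st:From_naive_isomorphisms_to_1-isomorphisms--same_sursub}, whose parts (1)--(3) are exactly the data (assignment $\sfJ$, compositor $j$, strict unit comparison) and the degenerate coherence laws you identify. Your explicit matching of the associativity hexagon and unit triangles against the square identity of part (2) and the unitor identities of part (3) — using strict associativity in $\BGrb^\nabla(M)$ and discreteness of the source hom-categories — is precisely the bookkeeping the paper leaves implicit.
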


\chapter{Proofs}
\label{ch:App:proofs}

\section{Proof of Theorem~\ref{st:direct_sum_structure_on_morphisms_of_BGrbs}}
\label{app:proof_of_direct_sum_theorem}

In this appendix we prove Theorem~\ref{st:direct_sum_structure_on_morphisms_of_BGrbs}.
The first  step is to check that~\eqref{eq:direct_sum_on_1-morphisms} defines a 1-morphisms in $\BGrb^\nabla(M)$.
The only non-trivial thing to do here is to check that $\alpha^\oplus \coloneqq \sfd_r^{-1} \circ ( \pr_{Z'}^{[2]*} \alpha' \oplus \pr_Z^{[2]*} \alpha) \circ \sfd_l$ satisfies the compatibility condition with the bundle gerbe multiplications from Definition~\ref{def:1-morphisms_of_BGrbs}.
That is, we need to show that
\begin{equation}
	\big( 1 \otimes \hat{\zeta}_{Y_1}^{[3]*} \mu_1 \big) \circ (d_2^*\alpha^\oplus \otimes 1) \circ (1 \otimes d_0^* \alpha^\oplus)
	= d_1^*\alpha^\oplus \circ \big( \hat{\zeta}_{Y_0}^{[3]*} \mu_0 \otimes 1 \big)\,.
\end{equation}
An explicit manipulation yields\\
\addtocounter{equation}{1}
\begin{align*}
\label{eq:direct_sum_of_1-morphisms_is_1-morphism}
	&\big( 1 \otimes \hat{\zeta}_{Y_1}^{[3]*} \mu_1 \big) \circ (d_2^*\alpha^\oplus \otimes 1) \circ (1 \otimes d_0^* \alpha^\oplus)
	\\*[0.2cm]
	&= 	\big( 1 \otimes \hat{\zeta}_{Y_1}^{[3]*} \mu_1 \big) \circ \Big( d_2^* \big( \sfd_r^{-1} \circ ( \pr_{Z'}^{[2]*} \alpha' \oplus \pr_Z^{[2]*} \alpha) \circ \sfd_l \big) \otimes 1 \Big)
	\\*
	&\qquad \circ \Big( 1 \otimes d_0^*\big( \sfd_r^{-1} \circ ( \pr_{Z'}^{[2]*} \alpha' \oplus \pr_Z^{[2]*} \alpha) \circ \sfd_l \big) \Big)
	\\[0.2cm]
	&= \big( 1 \otimes \hat{\zeta}_{Y_1}^{[3]*} \mu_1 \big) \circ (\sfd_r^{-1} \otimes 1) \circ \sfd_r^{-1} \circ \Big( \big( d_2^*\pr_{Z'}^{[2]*} \alpha' \otimes 1 \big) \oplus \big( d_2^*\pr_Z^{[2]*} \alpha \otimes 1 \big) \Big)
	\\*
	&\qquad \circ \Big( \big( 1 \otimes d_0^*\pr_{Z'}^{[2]*} \alpha' \big) \oplus \big( 1 \otimes d_0^*\pr_Z^{[2]*} \alpha \big) \Big) \circ \sfd_l \circ (1 \otimes \sfd_l)
	\\[0.2cm]
	&= \sfd_r^{-1} \circ \Big( \pr_{Z'}^{[3]*} \Big( \big( 1 \otimes \zeta_{Y_1}^{\prime[3]*} \mu_1 \big) \circ \big( d_2^*\alpha' \otimes 1 \big) \circ \big( 1 \otimes d_0^*\alpha' \big) \Big) \theeq
	\\*
	&\qquad \oplus \pr_Z^{[3]*} \Big( \big( 1 \otimes \zeta_{Y_1}^{[3]*} \mu_1 \big) \circ \big( d_2^* \alpha \otimes 1 \big) \circ \big( 1 \otimes d_0^* \alpha \big) \Big) \Big) \circ \sfd_l \circ (1 \otimes \sfd_l)
	\\[0.2cm]
	&= \sfd_r^{-1} \circ \Big( \pr_{Z'}^{[3]*} \Big( d_1^* \alpha' \circ \big( \zeta_{Y_0}^{\prime[3]*} \mu_0 \otimes 1 \big) \Big) \oplus \pr_Z^{[3]*} \Big( d_1^*\alpha \circ \big( \zeta_{Y_0}^{[3]*} \mu_0 \otimes 1 \big) \Big) \Big) \circ \sfd_l \circ (1 \otimes \sfd_l)
	\\[0.2cm]
	&= \sfd_r^{-1} \circ \big( d_1^*\pr_{Z'}^{[2]*} \alpha' \oplus d_1^*\pr_Z^{[2]*} \alpha \big) \circ \sfd_l \circ \big( \hat{\zeta}_{Y_0}^{\prime[3]*} \mu_0 \otimes 1 \big)
	\\*
	&= d_1^*\alpha^\oplus \circ \big( \hat{\zeta}_{Y_0}^{[3]*} \mu_0 \otimes 1 \big)\,,
\end{align*}
as required.

Next, we have to check that the sum of two 2-morphisms
\begin{equation}
\begin{aligned}
	&[Z {\times}_{Y_{01}} X,\, 1,\, \psi] \colon (E, \nabla^E, \alpha, Z, \zeta) \to (F, \nabla^F, \beta, X, \chi)\,,
	\\
	&[Z' {\times}_{Y_{01}} X',\, 1,\, \psi'] \colon (E', \nabla^{E'}, \alpha', Z', \zeta') \to (F', \nabla^{F'}, \beta', X', \chi')
\end{aligned}
\end{equation}
as defined in~\eqref{eq:direct_sum_on_2-morphisms} is in fact a 2-morphism in $\BGrb^\nabla(M)$.
Note that equivalences of representatives of the original morphisms induce equivalences of the direct sum from~\eqref{eq:direct_sum_on_2-morphisms} via their fibre product over $Y_{01}$, so that it is sufficient to use representatives of 2-morphisms over the minimal surjective submersion (where we appeal to Proposition~\ref{st:2-morphisms_have_simple_representatives}).
The direct sum then reads
\addtocounter{equation}{1}
\begin{align*}
	&[Z' {\times}_{Y_{01}} X',\, 1,\, \psi'] \oplus [Z {\times}_{Y_{01}} X,\, 1,\, \psi] \theeq
	\\
	&= \big[ Z' {\times}_{Y_{01}} X' {\times}_{Y_{01}} Z {\times}_{Y_{01}} X,\, 1,\, \pr_{(Z' {\times}_{Y_{01}} X')}^* \psi' \oplus \pr_{(Z {\times}_{Y_{01}} X)}^* \psi \big]\,,
\end{align*}
and the compatibility with the structural morphisms of the source and target 1-morphisms in $\BGrb^\nabla(M)$ is checked analogously to the computation in~\eqref{eq:direct_sum_of_1-morphisms_is_1-morphism}, by distributing the structural morphisms onto the pullbacks of $\psi$ and $\psi'$, then using the compatibilities of the original 2-morphisms, and finally distributing back to obtain the desired expression.
We have thus shown that the direct sums of 1-morphisms and 2-morphisms in $\BGrb^\nabla(M)$ exist as claimed in Theorem~\ref{st:direct_sum_structure_on_morphisms_of_BGrbs}.

The second step is to show that $\oplus$ defines a functor with respect to vertical composition of 2-morphisms.
Consider another pair of 2-morphisms
\begin{equation}
\begin{aligned}
	&[X {\times}_{Y_{01}} U,\, 1,\, \phi] \colon (F, \nabla^F, \beta, X, \chi) \to (G, \nabla^G, \gamma, U, \xi)\,,
	\\
	&[X' {\times}_{Y_{01}} U',\, 1,\, \phi'] \colon (F', \nabla^{F'}, \beta', X', \chi') \to (G', \nabla^{G'}, \gamma', U', \xi')
\end{aligned}
\end{equation}
between the same bundle gerbes.
We have
\begin{equation}
\begin{aligned}
	&[X {\times}_{Y_{01}} U,\, 1,\, \phi] \circ_2 [Z {\times}_{Y_{01}} X,\, 1,\, \psi]
	\\
	&= \big[ Z {\times}_{Y_{01}} X {\times}_{Y_{01}} U,\, 1,\, \pr_{(X {\times}_{Y_{01}} U)}^*\phi \circ \pr_{(Z {\times}_{Y_{01}} X)}^*\psi \big]\,,
\end{aligned}
\end{equation}
so that
\addtocounter{equation}{1}
\begin{align*}
	&\Big( [X' {\times}_{Y_{01}} U',\, 1,\, \phi'] \circ_2 [Z' {\times}_{Y_{01}} X',\, 1,\, \psi] \Big)
	\oplus \Big( [X {\times}_{Y_{01}} U,\, 1,\, \phi] \circ_2 [Z {\times}_{Y_{01}} X,\, 1,\, \psi] \Big)
	\\
	&= \big[ Z' {\times}_{Y_{01}} X' {\times}_{Y_{01}} U' {\times}_{Y_{01}} Z {\times}_{Y_{01}} X {\times}_{Y_{01}} U,\, 1, \theeq
	\\*
	&\qquad \pr_{(X' {\times}_{Y_{01}} U')}^*\phi' \circ \pr_{(Z' {\times}_{Y_{01}} X')}^*\psi' \oplus \pr_{(X {\times}_{Y_{01}} U)}^*\phi \circ \pr_{(Z {\times}_{Y_{01}} X)}^*\psi \big]
	\\
	&= \Big( [X' {\times}_{Y_{01}} U',\, 1,\, \phi'] \oplus [X {\times}_{Y_{01}} U,\, 1,\, \phi] \Big)
	\circ_2 \Big( [Z' {\times}_{Y_{01}} X',\, 1,\, \psi] \oplus [Z {\times}_{Y_{01}} X,\, 1,\, \psi] \Big)\,.
\end{align*}
Note that the composition of the direct sums takes place over $X' {\times}_{Y_{01}} X$ and we have to use an equivalence of representatives of 2-morphisms which reorders the factors in the six-fold fibre product over $Y_{01}$.
This hows that the direct sum is compatible with vertical composition.

Regarding identity 2-morphisms, note that the morphisms $\dd_{(E,\alpha)}$ defined in Lemma~\ref{st:dd-def_and_properties} satisfy
\begin{equation}
	\dd_{(E', \alpha') \oplus (E,\alpha)} = \pr_{(Z' {\times}_{Y_{01}} Z')}^* \dd_{(E',\alpha')} \oplus \pr_{(Z {\times}_{Y_{01}} Z)}^* \dd_{(E,\alpha)}\,.
\end{equation}
This readily implies that $1_{(E',\alpha')} \oplus 1_{(E,\alpha)} = 1_{(E',\alpha') \oplus (E, \alpha)}$, and, consequently, that $\oplus$ is a functor.

The functor $\oplus$ is built from fibre products and the direct sum of vector bundles, whence it is associative and unital with respect to the unit element given by the zero 1-morphism $0 = (0, 1_0, Y_{01}, 1_{Y_{01}} ) \in \BGrb^\nabla(M)((\CG_0, \nabla^{\CG_0}), (\CG_1, \nabla^{\CG_1}))$.
Here, $1_0$ is the identity morphism on the rank-zero hermitean vector bundle with connection.
The symmetry of $\oplus$ is implemented by
\begin{equation}
\begin{aligned}
	\sigma_{(E',\alpha'),(E,\alpha)} &= \big[ Z' {\times}_{Y_{01}} Z {\times}_{Y_{01}} Z {\times}_{Y_{01}} Z',\, 1,
	\\
	&\qquad \big( \pr_{(Z {\times}_{Y_{01}} Z)}^* \dd_{(E,\alpha)} \oplus \pr_{(Z' {\times}_{Y_{01}} Z')}^* \dd_{(E',\alpha')} \big) \circ \sw^\oplus \big]\,,
\end{aligned}
\end{equation}
with $\sw^\oplus$ denoting the swap of the summands in a direct sum in $\HVBdl^\nabla$.
This makes $\oplus$ into a symmetric monoidal structure on $\BGrb^\nabla(M)((\CG_0, \nabla^{\CG_0}), (\CG_1, \nabla^{\CG_1}))$, thus completing the proof of (1).

We now proceed to part (2).
The additivity of $\oplus$ with respect to the sum from Proposition~\ref{st:additive_structure_on_2-morphisms_of_BGrbs} is straightforward.
Regarding distributivity, let $(E,\alpha), (E', \alpha) \colon (\CG_0, \nabla^{\CG_0}) \to (\CG_1, \nabla^{\CG_1})$ and $(F, \nabla^F, \beta, X, \xi) \colon (\CG_2, \nabla^{\CG_2}) \to (\CG_3, \nabla^{\CG_3})$ be 1-morphisms in $\BGrb^\nabla(M)$.
First, note that $(F, \beta) \otimes \big( (E',\alpha') \oplus (E, \alpha) \big)$ is defined over $X {\times}_M (Z' {\times}_{Y_{01}} Z)$, while $\big( (F, \beta) \otimes (E', \alpha') \big) \oplus \big( (F, \beta) \otimes (E, \alpha) \big)$ is defined over $(X {\times}_M Z') {\times}_{Y_{0123}} (X {\times}_M Z)$.
We can, thus, define a 2-morphism $\delta_{l, (F,E',E)}$ over
\begin{equation}
	W = \big( X {\times}_M (Z' {\times}_{Y_{01}} Z) \big) {\times}_{Y_{0123}} \big( (X {\times}_M Z') {\times}_{Y_{0123}} (X {\times}_M Z) \big)\,,
\end{equation}
using the identity $1_W$ as a surjective submersion and using the unitary, parallel isomorphism of hermitean vector bundles with connections given by (omitting the pullbacks)
\begin{equation}
	\widehat{\delta}_{l,(F,E',E)} = \Big( \big( \dd_{(F, \beta)} \otimes \dd_{(E',\alpha')} \big) \oplus \big( \dd_{(F, \beta)} \otimes \dd_{(E,\alpha)} \big) \Big) \circ \sfd_l\,.
\end{equation}
Here, $\sfd_l$ distributes the pullback of $F$ over the sum of the pullbacks of $E'$ and $E$ from the left.
Using the commutation relations of morphisms of vector bundles with the distributivity isomorphisms $\sfd_l$, together with part (2) of Proposition~\ref{st:dd-def_and_properties} then yields the necessary compatibility of $\widehat{\delta}_{l,(F,E',E)} $ with $\beta$, $\alpha'$ and $\alpha$, thus proving that $\delta_{l, (F,E',E)}= [W, 1, \widehat{\delta}_{l,(F,E',E)}]$ is a 2-isomorphism as desired.
The naturality of $\delta_l = \delta_{l, (-,-,-)}$ follows by analogous arguments from Lemma~\ref{st:dd_and_2-morphisms}.
This completes the proof of the second claim in (2).
The third claim, i.e. the distributivity from the right, is analogous.
Finally, one can check that $\delta_{l, (F,-,-)}$ turns $(F,\beta) \otimes (-)$ into a monoidal functor.

To see (3), let us not consider $(E,\alpha), (E', \alpha) \colon (\CG_0, \nabla^{\CG_0}) \to (\CG_1, \nabla^{\CG_1})$ as before, and $(F, \nabla^F, \beta, X, \xi) \colon (\CG_1, \nabla^{\CG_1}) \to (\CG_2, \nabla^{\CG_2})$.
Now, $(F, \beta) \circ \big( (E',\alpha') \oplus (E,\alpha) \big)$ is defined over $(Z' {\times}_{Y_{01}} Z) {\times}_{Y_1} X$, while the morphism $\big( (F, \beta) \circ (E',\alpha') \big) \oplus \big( (F, \beta) \circ (E,\alpha) \big)$ is defined over $(Z' {\times}_{Y_1} X) {\times}_{Y_{012}} (Z {\times}_{Y_1} X)$.
We thus set
\begin{equation}
	W' = \big( (Z' {\times}_{Y_1} X) {\times}_{Y_{012}} (Z {\times}_{Y_1} X) \big) {\times}_{Y_{012}} \big( (Z' {\times}_{Y_{01}} Z) {\times}_{Y_1} X \big)\,.
\end{equation}
Over this space, we use a very similar expression as for the distributivity of the tensor product:
\begin{equation}
	\widehat{\sfc}_{l,(F,E',E)} = \big( \dd_{(F, \beta) \circ (E',\alpha')} \oplus \dd_{(F, \beta) \circ (E,\alpha)} \big) \circ \sfd_r\,,
\end{equation}
where $\sfd_r$ distributes the pullback of $F$ over the sum of the pullbacks of $E'$ and $E$.
This is, indeed, compatible with the structural isomorphisms:
we have, omitting pullbacks,
\addtocounter{equation}{1}
\begin{align*}
	&(\widehat{\sfc}_{l,(F,E',E)} \otimes 1_{L_2}) \circ \big( (1 \otimes \beta) \circ (\alpha^\oplus \otimes 1) \big)
	\\*
	&= \Big( \big( ( \dd_{(F, \beta) \circ (E',\alpha')} \oplus \dd_{(F, \beta) \circ (E,\alpha)}) \circ \sfd_{r,F} \big) \otimes 1_{L_2} \Big)
	\\
	&\qquad \circ \Big( (1_{E' \oplus E} \otimes \beta) \circ \big( (\sfd_{r,L_1})^{-1} \circ (\alpha' \oplus \alpha) \circ \sfd_{l,L_0} \big) \otimes 1_F \big) \Big)
	\\
	&= \Big( \big( \dd_{(F, \beta) \circ (E',\alpha')} \oplus \dd_{(F, \beta) \circ (E,\alpha)} \big) \circ \sfd_{r, F} \otimes 1_{L_2} \Big) \theeq
	\\
	&\qquad \circ \big( (1_{E' \oplus E} \otimes \beta) \circ (\sfd_{r,L_1 \otimes F})^{-1} \circ (\alpha' \otimes 1_F \oplus \alpha \otimes 1_F) \circ \sfd_{l,L_0} \big) \circ (1_{L_0} \otimes \sfd_{r,F})
	\\
	&= (\sfd_{r, L_2})^{-1} \circ \big( (\dd_{(F, \beta) \circ (E',\alpha')} \otimes 1_{L_2}) \oplus (\dd_{(F, \beta) \circ (E,\alpha)} \otimes 1_{L_2}) \big) \circ \sfd_{r, L_2}
	\\
	&\qquad \circ (\sfd_{r,L_2})^{-1} \circ \Big( \big( (1_{E'} \otimes \beta) \circ (\alpha' \otimes 1_F) \big) \oplus \big( (1_E \otimes \beta) \circ (\alpha \otimes 1_F)\big) \Big) \circ \sfd_{l,L_0} \circ (1_{L_0} \otimes \sfd_{r,F})
	\\
	&= (\sfd_{r, L_2})^{-1} \circ \Big( \big( (1_{E'} \otimes \beta) \circ (\alpha' \otimes 1_F) \big) \oplus \big( (1_E \otimes \beta) \circ (\alpha \otimes 1_F)\big) \Big)
	\\
	&\qquad \circ \big( (1_{L_0} \otimes \dd_{(F, \beta) \circ (E',\alpha')}) \oplus ( 1_{L_0} \otimes \dd_{(F, \beta) \circ (E,\alpha)}) \big) \circ \sfd_{l,L_0} \circ (1_{L_0} \otimes \sfd_{r,F})
	\\
	&= (\sfd_{r, L_2})^{-1} \circ \Big( \big( (1_{E'} \otimes \beta) \circ (\alpha' \otimes 1_F) \big) \oplus \big( (1_E \otimes \beta) \circ (\alpha \otimes 1_F)\big) \Big) \circ \sfd_{l,L_0}
	\\
	&\qquad \circ \Big( 1_{L_0} \otimes \big( \dd_{(F, \beta) \circ (E',\alpha')} \oplus \dd_{(F, \beta) \circ (E,\alpha)} \big) \circ \sfd_{r,F} \Big)
	\\
	&= (\sfd_{r, L_2})^{-1} \circ \Big( \big( (1_{E'} \otimes \beta) \circ (\alpha' \otimes 1_F) \big) \oplus \big( (1_E \otimes \beta) \circ (\alpha \otimes 1_F)\big) \Big) \circ \sfd_{l,L_0}
	\\
	&\qquad \circ (1_{L_0} \otimes \widehat{\sfc}_{l,(F,E',E)})\,,
\end{align*}
which is the desired compatibility condition (compare~\eqref{eq:2-morphism_compatibility_with_alphas}).
The crucial ingredient in this computation is the commutative diagram in part (2) of Lemma~\ref{st:dd-def_and_properties}, which makes the fourth equality possible.
Naturality again follows from Lemma~\ref{st:dd_and_2-morphisms}.
This completes the proof of the first half of (3); the second half is analogous.
We have, thus, completed the proof of Theorem~\ref{st:direct_sum_structure_on_morphisms_of_BGrbs}.\qed

\section{Proof of Theorem~\ref{st:internal_hom_of_morphisms_in_BGrb--existence_and_naturality}}
\label{app:Proof_of_adjunction_theorem}

Here we prove Theorem~\ref{st:internal_hom_of_morphisms_in_BGrb--existence_and_naturality}.
We use the conventions and nomenclature introduced there.

The right adjoint acts on a 1-morphism $(G, \nabla^G, \gamma, U, \chi) \colon (\CG_0, \nabla^{\CG_0}) \otimes (\CG_2, \nabla^{\CG_2}) \to (\CG_1, \nabla^{\CG_1}) \otimes (\CG_3, \nabla^{\CG_3})$ as (cf.~\eqref{eq:internal_hom_on_1-morphisms})
\begin{align}
	\big[ (F, \beta), (G,\gamma) \big]
	&= \Big( \pr_U^* (G, \nabla^G) \otimes \pr_X^* (F, \nabla^F)^*,
	\\*
	&\qquad (1 \otimes \pr_{Y_3}^{[2]*}\delta_{L_3}) \circ ( \pr_U^{[2]*} \gamma \otimes \pr_X^{[2]*} \beta^{-\sft}) \circ (1 \otimes \pr_{Y_1}^{[2]*}\delta_{L_1}^{-1}),\,
	\notag\\*
	&\qquad X {\times}_{Y_{13}} U,\, \chi_{Y_{02}} \circ \pr_U \Big)\,, \notag
\end{align}
The compatibility with $\mu_0$ and $\mu_2$ follows from the identity $\delta_{L_i} \circ (\mu_i^{-\sft} \otimes \mu_i) \circ (\delta_{L_i} \otimes \delta_{L_i}) = 1_{I_0}$.
Hence, $[(F, \beta), (G,\gamma)]$ is a 1-morphism as claimed.

Defining the action on 2-morphisms is complicated by the fact that $(E,\alpha)$, $(F, \beta)$ and $(G, \gamma)$ are all defined with respect to different surjective submersions.
In order to circumvent these complications, we employ the inclusion
\begin{equation}
	\sfS \colon \BGrb^\nabla_\FP(M) \big( (\CG_i, \nabla^{\CG_i}), (\CG_j, \nabla^{\CG_j}) \big) \overset{\cong}{\hookrightarrow}
	\BGrb^\nabla(M) \big( (\CG_i, \nabla^{\CG_i}), (\CG_j, \nabla^{\CG_j}) \big)
\end{equation}
from Theorem~\ref{st:BGrb_FP_hookrightarrow_BGrb_is_equivalence}, with its canonical inverse given by the functor $\sfR$.
Let $\eta \colon 1 \to \sfR \circ \sfS$ and $\epsilon \colon \sfS \circ \sfR \to 1$ be the unit and counit of the equivalence $(\sfS, \sfR)$, respectively.
Consider a second 1-morphism $(G', \nabla^{G'}, \gamma', U', \chi') \colon (\CG_0, \nabla^{\CG_0}) \otimes (\CG_2, \nabla^{\CG_2}) \to (\CG_1, \nabla^{\CG_1}) \otimes (\CG_3, \nabla^{\CG_3})$ and a 2-morphism $\widehat{\psi} = [U {\times}_{Y_{0123}} U',\, 1,\, \psi] \colon (G, \gamma) \to (G', \gamma')$.
This gives rise to a 2-morphism
\begin{equation}
	\sfR \widehat{\psi} = [Y_{0123}, 1, \sfR\psi] \colon \sfR(G, \gamma) \to \sfR(G', \gamma')
\end{equation}
living in $\BGrb^\nabla_\FP(M)(\sfR(G, \gamma), \sfR(G', \gamma'))$.
Observe that
\begin{equation}
	\big[ Y_{0123}, \pr_{Y_{02}}, \sfR \psi \otimes 1_{{\pr_{Y_{13}}^*}F^*} \big] \colon \big[ \sfR(F, \beta), \sfR(G, \gamma) \big] \to \big[ \sfR(F, \beta), \sfR(G', \gamma') \big]
\end{equation}
is a 2-morphism in $\BGrb^\nabla(M)$.
This can either be checked explicitly, or it can be observed from $1_{{\pr_{Y_{13}}^*}F^*} = \pr_{Y_{13}}^* \dd_{\sfR \Theta(F, \beta)}$ and the compatibility of $\dd_{\sfR \Theta(F, \beta)}$ with $\beta^{-\sft}$ (cf. Lemma~\ref{st:dd-def_and_properties}).
Therefore, we know that this 2-morphism induces a 2-morphism
\begin{equation}
	[ Y_{02}, 1, \Desc(\sfR \psi \otimes 1_{F^*})] \colon \sfR \big[ \sfR(F, \beta), \sfR(G, \gamma) \big] \to \sfR \big[ \sfR(F, \beta), \sfR(G', \gamma') \big]\,.
\end{equation}
Here and in the remainder of this proof we will suppress indication of the surjective submersion whose descent or reduction functor is used, as it is evident from context.
Observe that, since descent of vector bundles is compatible with duals and tensor products, and since $\sfR$ is an equivalence, there are canonical 2-isomorphisms
\begin{equation}
\begin{aligned}
	&\sfR \big[ \sfR(F, \beta), \sfR(G, \gamma) \big] \cong \sfR[(F, \beta), (G, \gamma)]\,,
	\\
	&\sfR \big[ \sfR(F, \beta), \sfR(G', \gamma') \big] \cong \sfR[(F, \beta), (G', \gamma')]\,.
\end{aligned}
\end{equation}
Adjoining these to $[ Y_{02}, 1, \Desc(\sfR \psi \otimes 1_{F^*})]$, we obtain a 2-isomorphism
\begin{equation}
	\sfF_{[F,-]} \widehat{\psi} \colon \sfR[(F, \beta), (G, \gamma)] \to \sfR[(F, \beta), (G', \gamma')]\,.
\end{equation}
Moreover, the assignment
\begin{equation}
\begin{aligned}
	&\sfF_{[F,-]} \colon \BGrb^\nabla(M)( \CG_0 \otimes \CG_1,\, \CG_2 \otimes \CG_3)
	\to \BGrb^\nabla_\FP(M)( \CG_0, \CG_2),\,
	\\
	&\Big( (G, \gamma) \overset{\widehat{\psi}}{\longrightarrow} (G', \gamma') \Big)
	\longmapsto
	\Big( \sfR[(F, \beta), (G, \gamma)] \overset{\sfF_{[F,-]} \widehat{\psi}}{\longrightarrow} \sfR[(F, \beta), (G', \gamma')]  \Big)
\end{aligned}
\end{equation}
is functorial since $\sfR$ and $\Desc$ are functorial.
We define $[(F, \beta), \widehat{\psi}]$ to be the dashed arrow in
\begin{equation}
\begin{tikzcd}[column sep=2cm, row sep=1.25cm]
	{[(F, \beta), (G, \gamma)]} \ar[r, dashed, "{[(F, \beta), \widehat{\psi}]}"] \ar[d, "\epsilon^{-1}_{[F,G]}"'] & {[(F, \beta), (G', \gamma')]}
	\\
	{\sfS\sfR [(F, \beta), (G, \gamma)]} \ar[r, "\sfF_{[F,-]} \widehat{\psi}"'] & {\sfS \sfR [(F, \beta), (G', \gamma')]} \ar[u, "\epsilon_{[F,G']}"']
\end{tikzcd}
\end{equation}

To see that $(-) \otimes (F, \beta) \dashv [(F, \beta), - ]$, we define a bijection
\begin{equation}
	\tau^F_{F,G} \colon \BGrb^\nabla(M) \big( (E, \alpha) \otimes (F, \beta) ,\, (G, \gamma) \big) \longrightarrow
	\BGrb^\nabla(M) \big( (E, \alpha),\, [(F, \beta), (G, \gamma)] \big)
\end{equation}
as follows.
Let $\widehat{\phi} = [(Z {\times}_M X) {\times}_{Y_{0123}} U, 1, \phi] \colon (E, \alpha) \otimes (F, \beta) \to (G, \gamma)$ be a 2-morphism.
It gives rise to a 2-morphism
\begin{equation}
	\sfR \widehat{\phi} = [Y_{0123}, 1, \sfR\phi] \colon \sfR(E, \alpha) \otimes \sfR(F, \beta) \to \sfR(G, \gamma)\,,
\end{equation}
defined over the identity surjective submersion on $Y_{0123}$.
In particular, $\sfR \phi \colon \pr_{Y_{02}}^*\sfR E \otimes \pr_{Y_{13}}^*\sfR F \to \sfR G$.
Define a morphism of hermitean vector bundles with connection via
\begin{equation}
	\tau(\sfR \phi) \colon \pr_{Y_{02}}^*\sfR E \to \pr_{Y_{13}}^*\sfR F^* \otimes \sfR G\,,
	\quad
	\tau(\sfR \phi) (e)\, (f) \coloneqq \sfR \phi (e \otimes f)\,,
\end{equation}
where $e \otimes f \in \pr_{Y_{02}}^*\sfR E \otimes \pr_{Y_{13}}^*\sfR F$.
This yields a 2-morphism
\begin{equation}
	\big[ Y_{0123}, \pr_{Y_{02}}, \tau(\sfR \phi) \big] \colon \sfR(E,\alpha) \to [ \sfR(F, \beta), \sfR(G, \gamma)]\,.
\end{equation}
Applying $\sfR$ and composing by the 2-isomorphism $[ \sfR(F, \beta), \sfR(G, \gamma)] \cong \sfR [(F, \beta), (G, \gamma)]$, we obtain a 2-morphism
$\check{\tau}^F_{E,G} (\widehat{\phi}) \colon \sfR(E,\alpha) \to \sfR [(F, \beta), (G, \gamma)]$, and we can define $\tau^F_{E,G}$ as the dashed arrow in
\begin{equation}
\label{eq:def of tauFEG}
\begin{tikzcd}[column sep=1.75cm, row sep=1.25cm]
	(E,\alpha) \ar[r, dashed, "{\tau^F_{E,G} \widehat{\phi}}"] \ar[d, "{\epsilon^{-1}_E}"'] & {[(F, \beta),(G, \gamma)]}
	\\
	\sfS \sfR(E, \alpha) \ar[r, "{\check{\tau}^F_{E,G}(\widehat{\phi})}"'] & \sfS \sfR {[(F, \beta), (G, \gamma)]} \ar[u, "\epsilon_{[F,G]}"']
\end{tikzcd}
\end{equation}
This defines a bijection since the functors $\sfS$ and $\sfR$ are equivalences of categories, and the operation $\sfR \phi \mapsto \tau(\sfR\phi)$ is a bijection.

In fact, $\tau$ establishes the internal hom adjunction in $\HVBdl^\nabla$, and is, therefore, natural with respect to composition.
Explicitly, for $\psi_j$, $j = 0, \ldots, 3$ morphisms of hermitean vector bundles with connections such that $\psi_3 \circ \psi_2 \circ (\psi_1 \otimes \psi_0)$ makes sense, we have
\begin{equation}
	\tau \big( \psi_3 \circ \psi_2 \circ (\psi_1 \otimes \psi_0) \big) =
	(\psi_3 \otimes \psi_0^\sft) \circ \tau(\psi_2) \circ \psi_1\,.
\end{equation}
Moreover, note that, by the functoriality of $\sfR$,
\begin{equation}
	\sfR (1_{(F, \beta)})
	= \sfR [X {\times}_{Y_{13}} X, 1, \dd_{(F, \beta)}]
	= 1_{\sfR(F, \beta)}
	= [ Y_{13}, 1, 1_F ]\,.
\end{equation}
Let $(E', \nabla^{E'}, \alpha', Z', \zeta') \colon (\CG_0, \nabla^{\CG_0}) \to (\CG_2, \nabla^{\CG_2})$ and let $[Z {\times}_{Y_{02}} Z, 1, \nu] \colon (E',\alpha') \to (E', \alpha')$ be a 2-morphism.
We obtain a commutative diagram over $Y_{0123}$, which reads as
\begin{equation}
\label{eq:[-,-]_naturality_diagram_for_tau}
\begin{tikzcd}[column sep=5cm, row sep=1.5cm]
	\pr_{Y_{02}}^*\sfR E' \ar[r, "\tau \big( \sfR \psi \circ \sfR \phi \circ (\sfR \nu \otimes 1_{\sfR F}) \big)"] \ar[d, "\sfR \nu"'] & \pr_{Y_{13}}^* \sfR F^* \otimes \sfR G'
	\\
	\pr_{Y_{02}}^*\sfR E \ar[r, "\tau(\sfR \phi)"'] & \pr_{Y_{13}}^* \sfR F^* \otimes \sfR G' \ar[u, "1_{\sfR F^*} \otimes \sfR \psi"']
\end{tikzcd}
\end{equation}
Observing that $[ Y_{0123}, \pr_{Y_{02}}, \sfR \psi \circ \sfR \phi \circ (\sfR \nu \otimes 1_{\sfR F})]$ is a descent to $Y_{0123}$ of $\widehat{\psi} \circ \widehat{\phi} \circ (\widehat{\nu} \otimes 1_{(F, \beta)})$ and applying descent along the projection $\pr_{Y_{02}} \colon Y_{0123} \to Y_{02}$ to~\eqref{eq:[-,-]_naturality_diagram_for_tau}, we obtain the diagram
\begin{equation}
\label{eq:naturality of check tau}
\begin{tikzcd}[column sep=5cm, row sep=1.5cm]
	\sfS \sfR (E', \alpha') \ar[r, "\check{\tau}^F_{E',G'} \big( \widehat{\psi} \circ \widehat{\phi} \circ (\widehat{\nu} \otimes 1_{(F, \beta)}) \big)"] \ar[d, "\sfR \widehat{\nu}"'] & \sfR {[(F, \beta), (G', \gamma')]}
	\\
	\sfS \sfR (E, \alpha) \ar[r, "\check{\tau}^F_{E,G} \widehat{\phi}"'] & \sfR {[(F, \beta), (G, \gamma)]} \ar[u, "{[(F, \beta), \widehat{\psi}]}"']
\end{tikzcd}
\end{equation}
This implies that the following diagram commutes:
\begin{equation}
\begin{scriptsize}
\begin{tikzcd}[column sep=1.25cm, row sep=1cm]
	(E', \alpha') \ar[rrrr, "\tau^F_{E',G'} \big( \widehat{\psi} \circ \widehat{\phi} \circ (\widehat{\nu} \otimes 1_F) \big)"] \ar[dddd, "\epsilon_{E'}^{-1}"'] \ar[dr, "\widehat{\nu}"'] & & & & {[(F, \beta), (G', \gamma')]}
	\\
	& (E,\alpha) \ar[rr, "\tau^F_{E,G} (\widehat{\phi})"] \ar[dd, "\epsilon_E^{-1}"'] & & {[(F, \beta), (G, \gamma)]}  \ar[ur, "{[(F, \beta), \widehat{\psi}]}"'] &
	\\
	& & & &
	\\
	& \sfS \sfR (E,\alpha) \ar[rr, "\check{\tau}^F_{E,G} (\widehat{\phi})"'] & & \sfS \sfR {[(F, \beta), (G, \gamma)]} \ar[uu, "\epsilon_{[F,G]}"'] \ar[dr, "{\sfS \sfR [(F, \beta), \widehat{\psi}]}"] &
	\\
	\sfS \sfR (E', \alpha') \ar[ur, "\sfS \sfR \widehat{\nu}"] \ar[rrrr, "\check{\tau}^F_{E',G'} \big( \widehat{\psi} \circ \widehat{\phi} \circ (\widehat{\nu} \otimes 1_F) \big)"'] & & & & \sfS \sfR {[(F, \beta), (G', \gamma')]} \ar[uuuu, "\epsilon_{[F,G']}"']
\end{tikzcd}
\end{scriptsize}
\end{equation}
Here, the left and right quadrangles commute by the naturality of $\epsilon$ and~\eqref{eq:def of tauFEG}, the outer and centre squares commutes by the definition~\eqref{eq:def of tauFEG} of $\tau^F_{E,G} (\widehat{\phi})$, while the commutativity of the bottom quadrangle is diagram~\eqref{eq:naturality of check tau}.
This completes the proof of the naturality of $\tau^F_{E,G}$.
\qed

\end{appendix}

\phantomsection
\addcontentsline{toc}{chapter}{References}
\bibliographystyle{amsplainurl}
\bibliography{Thesis_Bib}

\end{document}